\newcommand{\href}[1]{#1}
\newcommand{\inner}[2]{\langle #1 , #2\rangle}
\newcommand{\Inner}[2]{\left\langle #1 , #2\right\rangle}
\newcommand{\defeq}{\stackrel{\smash{\textnormal{\tiny def}}}{=}}
\newcommand{\trans}{{\scriptstyle\mathsf{T}}}
\newcommand{\cls}[1]{\mathrm{#1}}
\DeclareMathOperator{\spn}{span}
\DeclareMathOperator{\fid}{F}
\newcommand{\kprod}[3]{#1_{#2\dots#3}}
\newtheorem{thm}{Theorem}[chapter]
\newenvironment{theorem}{\begin{thm}}{\end{thm}}
\newenvironment{numberedtheorem}[1]
{
  
  \begin{thm}
}{
  \end{thm}
  \addtocounter{thm}{-1}
}
\newtheorem{lemma}[thm]{Lemma}
\newtheorem{proposition}[thm]{Proposition}
\newtheorem{corollary}{Corollary}[thm]
\newtheorem{fact}[thm]{Fact}
\theoremstyle{definition}
\newtheorem{defn}[thm]{Definition}
\newenvironment{definition}{\begin{defn}}{\qed\end{defn}}
\newtheorem{exampl}[thm]{Example}
\newenvironment{example}{\begin{exampl}}{\qed\end{exampl}}
\newtheorem{remark}{Remark}[thm]
\newenvironment{enumerateroman}
{
  
  \begin{enumerate}
}{
  \end{enumerate}
  
}
\newcommand{\pa}[1]{(#1)}
\newcommand{\Pa}[1]{\left(#1\right)}
\newcommand{\br}[1]{\pa{#1}}
\newcommand{\Br}[1]{\Pa{#1}}
\newcommand{\set}[1]{\{#1\}}
\newcommand{\Set}[1]{\left\{#1\right\}}
\def\Jamiolkowski{J}
\newcommand{\jam}[1]{\Jamiolkowski\pa{#1}}
\DeclareMathOperator{\vectorize}{vec}
\newcommand{\col}[1]{\vectorize\pa{#1}}
\newcommand{\row}[1]{\vectorize\pa{#1}^*}
\newcommand{\Col}[1]{\vectorize\Pa{#1}}
\newcommand{\Row}[1]{\vectorize\Pa{#1}^*}
\DeclareMathOperator{\trace}{Tr}
\newcommand{\ptr}[2]{\trace_{#1}\pa{#2}}
\newcommand{\Ptr}[2]{\trace_{#1}\Pa{#2}}
\newcommand{\tr}[1]{\ptr{}{#1}}
\newcommand{\Tr}[1]{\Ptr{}{#1}}
\DeclareMathOperator{\contract}{contract}
\newcommand{\con}[2]{\contract[#1]\pa{#2}}
\newcommand{\Con}[2]{\contract[#1]\Pa{#2}}
\DeclareMathOperator{\rank}{rank}
\newcommand{\tinyspace}{\mspace{1mu}}
\newcommand{\abs}[1]{|\tinyspace#1\tinyspace|}
\newcommand{\Abs}[1]{\left|\tinyspace#1\tinyspace\right|}
\newcommand{\norm}[1]{\lVert\tinyspace#1\tinyspace\rVert}
\newcommand{\Norm}[1]{\left\lVert\tinyspace#1\tinyspace\right\rVert}
\newcommand{\fnorm}[1]{\norm{#1}_{\mathrm{F}}}
\newcommand{\Fnorm}[1]{\Norm{#1}_{\mathrm{F}}}
\newcommand{\tnorm}[1]{\norm{#1}_{\trace}}
\newcommand{\Tnorm}[1]{\Norm{#1}_{\trace}}
\newcommand{\dnorm}[1]{\norm{#1}_{\diamond}}
\newcommand{\Dnorm}[1]{\Norm{#1}_{\diamond}}
\newcommand{\snorm}[2]{\norm{#1}_{\diamond{#2}}}
\newcommand{\Snorm}[2]{\Norm{#1}_{\diamond{#2}}}
\newcommand{\fontmapset}{\mathbf} % sets of mappings
\newcommand{\mset}[2]{\fontmapset{#1}\pa{#2}}
\newcommand{\lin}[1]{\mset{L}{#1}}
\newcommand{\her}[1]{\mset{H}{#1}}
\newcommand{\pos}[1]{\mset{H^+}{#1}}
\newcommand{\identity}{\mathbbm{1}}
\newcommand{\idsup}[1]{\identity_{#1}}
\newcommand{\Real}{\mathbb{R}}
\newcommand{\Complex}{\mathbb{C}}
\def\noisy{\Delta}
\def\ot{\otimes}
\newcommand{\sub}[1]{{\downarrow}{#1}}
\newcommand{\strategy}[1]{\mathbf{#1}}
\newcommand{\costrategy}[1]{\mathrm{co}\textrm{-}{\strategy{#1}}}
\newcommand{\st}{\strategy{S}}
\newcommand{\cst}{\costrategy{S}}
\newcommand{\subst}{\sub{\st}}
\newcommand{\csubst}{\sub{\cst}}
\def\cA{\mathcal{A}}
\def\cB{\mathcal{B}}
\def\cC{\mathcal{C}}
\def\cD{\mathcal{D}}
\def\cE{\mathcal{E}}
\def\cF{\mathcal{F}}
\def\cH{\mathcal{H}}
\def\cV{\mathcal{V}}
\def\cW{\mathcal{W}}
\def\cX{\mathcal{X}}
\def\cY{\mathcal{Y}}
\def\cZ{\mathcal{Z}}
\def\bA{\mathbf{A}}
\def\bB{\mathbf{B}}
\def\bC{\mathbf{C}}
\def\bD{\mathbf{D}}
\def\bK{\mathbf{K}}
\def\bQ{\mathbf{Q}}
\def\bS{\mathbf{S}}
\def\bT{\mathbf{T}}
\def\rL{\mathrm{L}}
\def\rR{\mathrm{R}}
\def\thmchar{Characterization of strategies}
\def\theoreminnerproduct{Interaction output probabilities}
\let\origdoublepage\cleardoublepage
\newcommand{\clearemptydoublepage}{%
  \clearpage{\pagestyle{empty}\origdoublepage}}
\let\cleardoublepage\clearemptydoublepage
\begin{document}

% For a large document, it might be a good idea to divide your thesis
% into several files each one containing, for example, one chapter.
% To illustrate this idea, the "front pages" (i.e., title page,
% declaration, borrowers' page, abstract, acknowledgements,
% dedication, table of contents, list of tables, list of figures,
% nomenclature) are contained within the file "uw-ethesis-frontpgs.tex" which is
% included into the document by the following statement.
%----------------------------------------------------------------------
% FRONT MATERIAL
%----------------------------------------------------------------------
% T I T L E   P A G E
% -------------------
% The title page is counted as page `i' but we need to suppress the
% page number.  We also don't want any headers or footers.
\pagestyle{empty}
\pagenumbering{roman}

% The contents of the title page are specified in the "titlepage"
% environment.
\begin{titlepage}
        \begin{center}
        \vspace*{1.0cm}

        \Huge
        {\bf Quantum Strategies and Local Operations}

        \vspace*{1.0cm}

        \normalsize
        by \\

        \vspace*{1.0cm}

        \Large
        Gustav Gutoski \\

        \vspace*{3.0cm}

        \normalsize
        A thesis \\
        presented to the University of Waterloo \\ 
        in fulfillment of the \\
        thesis requirement for the degree of \\
        Doctor of Philosophy \\
        in \\
        Computer Science \\

        \vspace*{2.0cm}

        Waterloo, Ontario, Canada, 2009 \\

        \vspace*{1.0cm}

        \copyright\ Gustav Gutoski 2009 \\
        \end{center}
\end{titlepage}

% The rest of the front pages should contain no headers and be
% numbered using roman numerals starting with `ii'
\pagestyle{plain}
\setcounter{page}{2}

% D E C L A R A T I O N   P A G E
% -------------------------------
\ifthenelse{\boolean{ElectronicVersion}}{
  % The following is the sample Delaration Page as provided by the GSO
  % December 13th, 2006.  It is designed for an electronic thesis.
  \noindent
  I hereby declare that I am the sole author of this thesis.  This is a true copy of the thesis, including any required final revisions, as accepted by my examiners.

  \bigskip
  
  \noindent
  I understand that my thesis may be made electronically available to the public.}{
 % The following text was what was required back when the GSO acceped
 % printed versions and you may want to continue to use it for your
 % printed version.
 \noindent
 I hereby declare that I am the sole author of this thesis.

 \smallskip

 \noindent
 I authorize the University of Waterloo to lend this thesis to other
 institutions or individuals for the purpose of scholarly research.

 \bigskip

 \noindent
 I further authorize the University of Waterloo to reproduce this
 thesis by photocopying or by other means, in total or in part,
 at the request of other institutions or individuals for the purpose
 of scholarly research.
}
\newpage

% A B S T R A C T
% ---------------

\begin{center}\textbf{Abstract}\end{center}

This thesis is divided into two parts.
In Part \ref{part:strategies} we introduce a new formalism for quantum strategies, which specify the actions of one party in any multi-party interaction involving the exchange of multiple quantum messages among the parties.
This formalism associates with each strategy a single positive semidefinite operator acting only upon the tensor product of the input and output message spaces for the strategy.
We establish three fundamental properties of this new representation for quantum strategies and we list several applications, including a quantum version of von Neumann's celebrated 1928 Min-Max Theorem for zero-sum games and an efficient algorithm for computing the value of such a game.

In Part \ref{part:LOSE} we establish several properties of a class of quantum operations that can be implemented locally with shared quantum entanglement or classical randomness.
In particular, we establish the existence of a ball of local operations with shared randomness lying within the space spanned by the no-signaling operations and centred at the completely noisy channel.
The existence of this ball is employed to prove that the weak membership problem for local operations with shared entanglement is strongly $\cls{NP}$-hard.
We also provide characterizations of local operations in terms of linear functionals that are positive and ``completely'' positive on a certain cone of Hermitian operators, under a natural notion of complete positivity appropriate to that cone.
We end the thesis with a discussion of the properties of no-signaling quantum operations.

%\begin{itemize}
%
%\item The set of all strategies is characterized as a set of positive semidefinite operators obeying a certain collection of linear equality conditions that generalizes the Choi-Jamio\l kowski representation for quantum operations.
%
%\item If a quantum strategy calls for one or more measurements then its representation consists of one operator for each possible combination of measurement outcomes.
%The probability with which a measuring strategy yields a given sequence of measurement outcomes in a given interaction is expressed as a simple inner product, offering a faithful generalization of the inner product relationship between ordinary quantum states and measurements.
%
%\item The maximum probability with which a given strategy can be forced to produce a particular measurement outcome equals the \emph{minimum} value $p$ for which that outcome's operator subtracted from a valid strategy operator multiplied by $p$ is still positive semidefinite.
%This duality relationship is also a faithful generalization of a corresponding property of ordinary quantum states and measurements.
%
%\end{itemize}
%We provide several applications of this new formalism:

\newpage

% A C K N O W L E D G E M E N T S
% -------------------------------

\begin{center}\textbf{Acknowledgements}\end{center}

First and foremost, I thank my supervisor John Watrous for his support.
He is an outstanding supervisor and I am very lucky to have been given the opportunity to benefit from his guidance for the past six years.
%He was a helping hand when I needed it and he left me well enough alone when I didn't.
I also thank the rest of my committee---Andrew Childs, Richard Cleve, Alex Russell, and Levent Tun\c{c}el---for taking the time to read this thesis and for their helpful suggestions.
Finally, I thank my parents for a lifetime of encouragement and support.
\newpage

% D E D I C A T I O N
% -------------------

%\begin{center}\textbf{Dedication}\end{center}
%This is dedicated to the one I love.
%\newpage

% T A B L E   O F   C O N T E N T S
% ---------------------------------
\tableofcontents
\newpage

% L I S T   O F   T A B L E S
% ---------------------------
%\listoftables
%\addcontentsline{toc}{chapter}{List of Tables}
%\newpage

% L I S T   O F   F I G U R E S
% -----------------------------
%\listoffigures
%\addcontentsline{toc}{chapter}{List of Figures}
%\newpage

% L I S T   O F   S Y M B O L S
% -----------------------------
% \renewcommand{\nomname}{Nomenclature}
% \addcontentsline{toc}{chapter}{\textbf{Nomenclature}}
% \printglossary
% \newpage

% Change page numbering back to Arabic numerals
\pagenumbering{arabic}

%----------------------------------------------------------------------
% MAIN BODY
%----------------------------------------------------------------------

\addcontentsline{toc}{chapter}{\textbf{Errata}}

\section*{Errata} \label{sec:errata}

\emph{(March 14, 2012)}

The proofs of Lemma \ref{lemma:unit-ball-technical} and Theorem \ref{thm:dnorm-dual-herm} in Chapter \ref{ch:norms} assume that if $\Pi$ is a projection and $P$ is positive semidefinite then $\Pi P\Pi\preceq P$.
This assumption is false and hence the proofs that employ it are invalid.

Nevertheless, the important claims of Chapter \ref{ch:norms} such as
Proposition \ref{prop:unit-ball} (Unit ball of the strategy $r$-norms) and Theorem \ref{thm:sep} (Distinguishability of convex sets of strategies) are true.
Corrected proofs of these claims can be found in a recent publication of the author \cite{Gutoski12}.

\chapter{Introduction} \label{ch:intro}

This thesis investigates two distinct topics of interest within the discipline of quantum information theory: quantum strategies and local operations with shared entanglement.
The discussion on quantum strategies is contained in Part \ref{part:strategies}, while the discussion on local operations with shared entanglement is given in Part \ref{part:LOSE}.

This introductory chapter provides a broad overview of the results of the thesis in Section \ref{sec:overview}.
A review of relevant background material from linear algebra, convex analysis, and quantum information is provided in Section \ref{sec:prelim}.

%=============================================================================%
\section{Overview} \label{sec:overview}
%=============================================================================%

In this section we provide a summary of the main contributions of the present thesis.
Mathematics and quantum formalism are invoked only informally so as to facilitate a broad description of results without getting bogged down in detail.

Except where otherwise noted, the content of this thesis is drawn from existing literature as follows:
\begin{itemize}

\item
Part \ref{part:strategies}, excluding Chapter \ref{ch:norms}, first appeared in preliminary form in Ref.~\cite{GutoskiW07}.

\item
Chapter \ref{ch:norms} is otherwise unpublished and due solely to the present author.

\item
Part \ref{part:LOSE} first appeared in Ref.~\cite{Gutoski09}.

\end{itemize}

%=============================================================================%
\subsection{Quantum strategies} \label{subsec:intro:strategies}
%=============================================================================%

In Part \ref{part:strategies} of this thesis we propose a new mathematical formalism for quantum strategies, prove several fundamental properties of this formalism, and provide several applications.

Informally, a \emph{quantum strategy} is a complete specification of the actions of one party in any multi-party interaction involving the exchange of one or more quantum messages among the parties.
Due to the generality of this notion, the potential for application of this formalism is very broad.
Indeed, our formalism should in principle apply to any framework that incorporates the exchange of quantum information among multiple entities, such as quantum cryptography, computational complexity, communication complexity, and distributed computation.
%It is hoped that the applications of quantum strategies presented in this theses might serve as a ``spring board'' from which others can draw inspiration to find new applications.

In this introductory section concerning quantum strategies, it is convenient to avoid cluttering discussion with historical background and citations.
Instead, the necessary background and references for each topic are covered in detail as they appear in the main body of the thesis.

\subsubsection{Three properties of the new formalism}
%=============================================================================%

Chapter \ref{ch:strategies} is devoted to formal definitions of quantum strategies and a discussion thereof.
Under our new formalism, such a strategy is represented by a positive semidefinite operator $S$, the dimensions of which depend only upon the size of the messages exchanged in the interaction and \emph{not} upon the size of any memory workspace maintained by the strategy between messages.
(This distinction is important, as it permits us to consider strategies that call for an arbitrarily large memory workspace.)
We prove in Chapter \ref{ch:properties} that the set of all positive semidefinite operators which are valid representations of quantum strategies is characterized by a simple and efficiently-verifiable collection of linear equality conditions.

In order to extract useful classical information from such an interaction, a strategy will often call for one or more quantum measurements throughout the interaction.
In this case, the strategy is instead represented by a set $\set{S_a}$ of positive semidefinite operators indexed by all the possible combinations of outcomes of the measurements.
These strategies are called \emph{measuring strategies} and satisfy $\sum_a S_a = S$ for some ordinary (non-measuring) strategy $S$.
(By comparison, an ordinary POVM-type quantum measurement $\set{P_a}$ satisfies $\sum_a P_a = I$.)

We also prove in Chapter \ref{ch:properties} that the relationship between measuring strategies and other strategies is analogous to that between ordinary quantum measurements and quantum states.
In particular, basic quantum formalism tells us that for any ordinary quantum measurement $\set{P_a}$ with outcomes indexed by $a$ and any quantum state $\rho$ it holds that the probability with which the measurement $\set{P_a}$ yields a particular outcome $a$ when applied to a quantum system in state $\rho$ is given by the inner product \[ \Pr[\textrm{$\set{P_a}$ yields outcome $a$ on $\rho$}] = \inner{P_a}{\rho} = \ptr{}{P_a\rho}.\]
Similarly, we show that the probability with which a measuring strategy $\set{S_a}$ yields outcome $a$ after an interaction with a compatible quantum strategy $T$ is given by \[ \Pr[\textrm{$\set{S_a}$ yields outcome $a$ when interacting with $T$}] = \inner{S_a}{T} = \ptr{}{S_aT}.\]

Finally, we establish a convenient formula for computing the maximum probability with which a given measuring strategy $\set{S_a}$ can be forced to produce a given outcome $a$.
This probability is given by the \emph{minimum} real number $\lambda$ for which there exists an ordinary (non-measuring) strategy $Q$ with the property that the operator $\lambda Q - S_a$ is still positive semidefinite.

\subsubsection{Applications}
%=============================================================================%

These three properties of quantum strategies---their linear characterization, inner product relationship, and formula for maximum output probability---open the door to a variety of new applications, several of which are presented in Chapter \ref{ch:applications}.

First and foremost, these properties pave the way for the first fully general theory of two-player zero-sum quantum games.
In particular, we prove a quantum analogue of von Neumann's famous 1928 Min-Max Theorem for zero-sum games.
We also show that the value of such a game can be expressed as the value of a semidefinite optimization problem and can therefore be efficiently approximated to arbitrary precision by standard algorithms for semidefinite optimization.

We then apply this newfound algorithm to computational complexity theory, establishing that the fundamental class $\cls{EXP}$ of decision problems that admit deterministic exponential-time classical solutions coincides with the exotic class $\cls{QRG}$ of decision problems that admit a quantum interactive proof with two competing provers.
That is, \[ \cls{QRG} = \cls{EXP}. \]
This equivalence is a rare characterization of a fundamental classical complexity class by a purely quantum complexity class.
As problems in $\cls{EXP}$ also admit \emph{classical} interactive proofs with competing provers, we obtain as a corollary the fact that quantum interactive proofs with competing provers provably contain no additional expressive power beyond that of classical interactive proofs with competing provers.
By contrast, it is widely believed, but not proven, that polynomial-time quantum computers are strictly more powerful than polynomial-time classical computers.

Elsewhere within the domain of complexity theory, we employ our new formalism to prove that many-message quantum interactive proofs with one prover or with two competing provers may be repeated multiple times in parallel so as to decrease the probability of error without increasing the number of messages exchanged among the parties in the interaction.

Finally, the aforementioned properties of quantum strategies are applied to yield an alternate and simplified proof of Kitaev's bound for strong quantum coin-flipping protocols.
Coin-flipping is a fundamental primitive arising in the study of cryptography in the context of secure two-party computation.
A \emph{coin-flipping} protocol is an interaction between to mutually untrusting parties who wish to agree on a random bit (a coin flip) via remote communication.
A \emph{strong coin-flipping protocol} with bias $\varepsilon$ has the property that two honest parties always produce a perfectly random coin toss, yet a dishonest party who attempts to force a given outcome upon an honest party can succeed with probability no more than $1/2 + \varepsilon$.
Kitaev showed that any strong coin-flipping protocol in which the parties exchange and process quantum information must have bias at least $1/\sqrt{2} - 1/2\approx 0.207$.
%That is, a cheating party can always force a given coin toss result upon an honest party with probability at least $1/\sqrt{2}$.
Kitaev's original proof of this fact relied upon the powerful machinery of semidefinite optimization duality.
In our proof, the complication of semidefinite optimization duality is successfully encapsulated in the properties of quantum strategies; what remains is a simple calculation that fits easily into half of a page.

It is noteworthy that a fourth application of the formalism of quantum strategies---beyond quantum game theory, complexity theory, and coin-flipping---appears in Chapter \ref{ch:NPhard} of Part \ref{part:LOSE} of this thesis.
In particular, the inner product relationship for quantum measuring strategies is employed to establish the $\cls{NP}$-hardness of weak membership testing for local operations with shared entanglement.

In addition to our results, other authors have used the formalism of quantum strategies in other areas, as we now describe.

\subsubsection{Independent development of the new formalism}
%=============================================================================%

Our formalism and some of its properties were independently re-discovered by Chiribella, D'Ariano, and Perinotti \cite{ChiribellaD+08a, ChiribellaD+09a}.
What we call a ``quantum strategy,'' they call a ``quantum comb.''
In their initial publication on the subject, these authors prove an analogue of our Theorem \ref{thm:char} (\thmchar).
Moreover, our Theorem \ref{theorem:inner-product} (\theoreminnerproduct) is established in Refs.~\cite{ChiribellaD+08b, ChiribellaD+08d}.

These and other authors have provided several additional applications of quantum strategies to such problems as optimization of quantum circuits architecture \cite{ChiribellaD+08a}, cloning and learning of unitary operations \cite{ChiribellaD+08c, BisioC+09a}, and an impossibility proof for quantum bit commitment \cite{ChiribellaD+09b}, among others \cite{ChiribellaD+08e, BisioC+09b}.

\subsubsection{Distance measures}
%=============================================================================%

After establishing useful properties of quantum strategies in Chapter \ref{ch:properties} and then applying those properties in Chapter \ref{ch:applications}, we return in Chapter \ref{ch:norms} to basic formalism for quantum strategies.
We define a new norm that captures the distinguishability of quantum strategies in the same sense that the trace norm for operators captures the distinguishability of quantum states or the diamond norm for super-operators captures the distinguishability of quantum operations.
Whereas the trace norm $\tnorm{\rho-\sigma}$ for quantum states $\rho,\sigma$ is given by \[ \tnorm{\rho-\sigma} = \max \Set{\inner{P_0-P_1}{\rho-\sigma} : \set{P_0,P_1} \textrm{ is a quantum measurement} },\]
we define the \emph{strategy $r$-norm} $\snorm{Q-R}{r}$ for quantum strategies $Q,R$ by \[ \snorm{Q-R}{r} = \max \Set{\inner{S_0-S_1}{Q-R} : \set{S_0,S_1} \textrm{ is a measuring strategy} }.\]
Here the subscript $r$ denotes the number of rounds of messages in the protocol for which $Q,R$ are strategies.
In particular, each positive integer $r$ induces a different strategy norm.
Our choice of notation is inspired by the fact that this norm is shown to coincide with the diamond norm for the case $r=1$.

Our primary application of the strategy norm is a generalization of a result of Ref.~\cite{GutoskiW05}, which states that for any two convex sets $\bA_0,\bA_1$ of quantum states there exists a \emph{fixed} quantum measurement that distinguishes \emph{any} two states chosen from these sets with probability that varies according to the minimal trace norm distance between the sets $\bA_0$ and $\bA_1$.
In other words, this measurement can be used to distinguish \emph{any} choices of states from $\bA_0,\bA_1$ at least as well as any measurement could distinguish the two \emph{closest} states from those sets.

Accordingly, our new result states that for any two convex sets $\bS_0,\bS_1$ of $r$-round quantum strategies, there exists a fixed measuring strategy that distinguishes any choices of strategies from those sets with probability according to the minimal distance between the sets $\bS_0$ and $\bS_1$ as measured by the new strategy $r$-norm.

We conclude Chapter \ref{ch:norms} with a discussion of the dual of the diamond norm for super-operators and its relation to the strategy norms.
The dual of the diamond norm $\dnorm{\Phi}^*$ of a super-operator $\Phi$ is defined by \[ \dnorm{\Phi}^* = \max_{\snorm{\Psi}=1} \abs{\inner{\Psi}{\Phi}} \] for some appropriate notion of inner product between super-operators.

While the diamond norm plays a fundamental role in the theory of quantum information, its dual has never been studied.
Hence, we establish several basic facts about this norm.
For example, the maximum in the definition of $\dnorm{\Phi}^*$ is achieved by a Hermitian-preserving super-operator whenever $\Phi$ is Hermitian-preserving, and by a completely positive super-operator whenever $\Phi$ is completely positive.
These facts are employed to show that a variant of the strategy 1-norm coincides with the dual of the diamond norm.
Thus, the strategy $r$-norms are shown to generalize both the diamond norm \emph{and} its dual.

%=============================================================================%
\subsection{Local operations with shared entanglement}
%=============================================================================%

In Part \ref{part:LOSE} we prove several properties of local operations with shared entanglement or randomness.
Informally, a \emph{local operation} is a quantum operation that can be implemented by distinct parties, each acting only upon his or her own portion of an overall quantum system.

In a \emph{local operation with shared randomness (LOSR)}, the parties are permitted to share random bits---say, common knowledge of an integer sampled at random according to some fixed probability distribution.
The parties may use their knowledge of this shared randomness to correlate their distinct operations.

In a \emph{local operation with shared entanglement (LOSE)}, rather than randomness, the parties are permitted to share among them distinct portions of some quantum system, the overall state of which may be entangled across the different parties.
The existence of such a shared state permits the parties to achieve correlations among their local operations that could not otherwise be achieved with randomness alone.

By contrast with Section \ref{subsec:intro:strategies}, it is expedient to include a discussion of historical background and citations in this introductory section concerning local operations with shared entanglement.
%Whereas with strategies we are introducing a new formalism, it is only the applications that require a historical background.  But with LOSE operations we are chipping in on an established area, which requires formalism.

\subsubsection{Background}
%=============================================================================%

LOSE operations are of particular interest in the quantum information community in part because any physical operation jointly implemented by spatially separated parties who obey both quantum mechanics and relativistic causality must necessarily be of this form.
Moreover, it is notoriously difficult to say anything meaningful about this class of operations, despite its simple definition.

One source of such difficulty stems from the fact that there exist two-party LOSE operations with the fascinating property that they cannot be implemented with any finite amount of shared entanglement \cite{LeungT+08}.
This difficulty has manifested itself quite prominently in the study of two-player co-operative games: classical games characterize $\cls{NP}$,\footnote{
As noted in Ref.~\cite{KempeK+07}, this characterization follows from the
PCP Theorem \cite{AroraL+98,AroraS98}.
}
%
% John suggests this citation:
%
% M. Bellare, O. Goldreich and M. Sudan.
% Free bits, PCPs and non-approximability.
% SIAM J. on Computing, Vol. 27, No. 3, 1998, pp. 804-915.
%
% But I don't like it -- the application to games was no more obvious here
% than anywhere else.
%
whereas quantum games with shared entanglement are not even known to be computable. %\cite{DohertyL+08,NavascuesP+08,WernerS08}.
Within the context of these games, interest has focused largely on special cases of LOSE operations \cite{KobayashiM03,CleveH+04,Wehner06,CleveS+08,CleveGJ07,KempeR+07}, but progress has been made recently in the general case \cite{KempeK+07,KempeK+08,LeungT+08,DohertyL+08,NavascuesP+08,ItoK+09}.
In the physics literature, LOSE operations are often discussed in the context of \emph{no-signaling} operations \cite{BeckmanG+01,EggelingSW02,PianiH+06}.

In the present thesis some light is shed on the general class of multi-party LOSE operations, as well as the sub-class of LOSR operations.
Several distinct results are established, many of which mirror some existing result pertaining to separable quantum states.
What follows is a brief description of each result together with its analogue from the literature on separable states where appropriate.

\subsubsection{Ball around the identity}
%=============================================================================%

\begin{description}

\item \emph{Prior work on separable quantum states.}
If $A$ is a Hermitian operator acting on a $d$-dimensional bipartite space and whose Frobenius norm at most 1 then the perturbation $I\pm A$ of the identity represents an (unnormalized) bipartite separable quantum state.
In other words, there is a ball of (normalized) bipartite separable states with radius $\frac{1}{d}$ in Frobenius norm centred at the completely mixed state $\frac{1}{d}I$ \cite{Gurvits02,GurvitsB02}.

A similar ball exists in the multipartite case, but with smaller radius.
In particular, there is a ball of $m$-partite separable $d$-dimensional states with radius $\Omega\Pa{2^{-m/2}d^{-1}}$ in Frobenius norm centred at the completely mixed state \cite{GurvitsB03}.
Subsequent results offer some improvements on this radius \cite{Szarek05,GurvitsB05,Hildebrand05}.

\item \emph{Present work on local quantum operations.}
An analogous result is proven in Chapter \ref{ch:ball} for multi-party LOSE and LOSR operations.
Specifically, if $A$ is a Hermitian operator acting on an $n$-dimensional $m$-partite space and whose Frobenius norm scales as $O\Pa{2^{-m}n^{-3/2}}$ then $I\pm A$ is the Choi-Jamio\l kowski representation of an (unnormalized) $m$-party LOSR operation.
As the unnormalized completely noisy channel \[ \noisy:X\mapsto\tr{X}I \] is the unique quantum operation whose Choi-Jamio\l kowski representation equals the identity, it follows that there is a ball of $m$-party LOSR operations (and hence also of LOSE operations) with radius $\Omega\Pa{2^{-m}n^{-3/2}d^{-1}}$ in Frobenius norm centred at the completely noisy channel $\frac{1}{d}\noisy$.
(Here the normalization factor $d$ is the dimension of the output system.)

The perturbation $A$ must lie in the space spanned by Choi-Jamio\l kowski representations of the no-signaling operations.
Conceptual implications of this technicality are discussed in the concluding remarks of Chapter \ref{ch:ball}.
No-signaling operations are discussed in Chapter \ref{ch:no-sig} of the present thesis, as summarized below.

\item \emph{Comparison of proof techniques.}
Existence of this ball of LOSR operations is established via elementary linear algebra.
By contrast, existence of the ball of separable states was originally established via a delicate combination of
\begin{enumerate}
\item[(i)]
the fundamental characterizations of separable states in terms of positive super-operators (described in more detail below), together with
\item[(ii)]
nontrivial norm inequalities for these super-operators.
\end{enumerate}

Moreover, the techniques presented herein for LOSR operations are of sufficient generality to immediately imply a ball of separable states without the need for the aforementioned characterizations or their accompanying norm inequalities.
This simplification comes in spite of the inherently more complicated nature of LOSR operations as compared to separable states.
It should be noted, however, that the ball of separable states implied by the present work is smaller than the ball established in prior work by a factor of $2^{-m/2}d^{-3/2}$.

%\item \emph{Prior work on stochastic quantum operations.}
%cite John's paper?

\end{description}

\subsubsection{Weak membership problems are $\cls{NP}$-hard}
%=============================================================================%

\begin{description}

\item \emph{Prior work on separable quantum states.}
The weak membership problem for separable quantum states asks,
\begin{quote}
  ``Given a description of a quantum state $\rho$
  and an accuracy parameter $\varepsilon$,
  is $\rho$ within distance $\varepsilon$ of a separable state?''
\end{quote}
This problem was proven strongly $\cls{NP}$-complete under oracle (Cook) reductions by Gharibian \cite{Gharibian08}, who built upon the work of Gurvits \cite{Gurvits02} and Liu \cite{Liu07}.
In this context, ``strongly $\cls{NP}$-complete'' means that the problem remains $\cls{NP}$-complete even when the accuracy parameter $\varepsilon=1/s$ is given in unary as $1^s$.

$\cls{NP}$-completeness of the weak membership problem was originally established by Gurvits \cite{Gurvits02}.
The proof consists of an $\cls{NP}$-completeness result for the weak \emph{validity} problem---a decision version of linear optimization over separable states---followed by an application of the 
Yudin-Nemirovski\u\i{} Theorem~\cite{YudinN76, GrotschelL+88}, which provides an oracle-polynomial-time reduction from weak validity to weak membership for general convex sets.

As a precondition of the Yudin-Nemirovski\u\i{} Theorem, the convex set in question (in our case, the set of separable quantum states) must contain a sufficiently large ball.
In particular, this $\cls{NP}$-completeness result relies crucially upon the existence of the aforementioned ball of separable quantum states.

For \emph{strong} $\cls{NP}$-completeness, it is necessary to employ a specialized ``approximate'' version of the Yudin-Nemirovski\u\i{} Theorem due to Liu~\cite[Theorem 2.3]{Liu07}.

\item \emph{Present work on local quantum operations.}
In Chapter \ref{ch:NPhard} it is proved that the weak membership problems for LOSE and LOSR operations are both strongly $\cls{NP}$-hard under oracle reductions.
The result for LOSR operations follows trivially from Gharibian (just take the input spaces to be empty).
But it is unclear how to obtain the result for LOSE operations without invoking the contributions of the present thesis.

The proof begins by observing that the weak validity problem for LOSE operations is merely a two-player quantum game in disguise and hence is strongly $\cls{NP}$-hard \cite{KempeK+07}.
The hardness result for the weak \emph{membership} problem is then obtained via a Gurvits-Gharibian-style application of Liu's version of the Yudin-Nemirovski\u\i{} Theorem, which of course depends upon the existence of the ball revealed in Section \ref{sec:balls}.

\end{description}

\subsubsection{Characterization in terms of positive super-operators}
%=============================================================================%

\begin{description}

\item \emph{Prior work on separable quantum states.}
A quantum state $\rho$ of a bipartite system $\cX_1\ot\cX_2$ is separable if and only if the operator \[ \Pa{\Phi\otimes\identity_{\cX_2}}\pa{\rho} \] is positive semidefinite whenever the super-operator $\Phi$ is positive.
This fundamental fact was first proven in 1996 by Horodecki \emph{et al.}~\cite{HorodeckiH+96}.

The multipartite case reduces inductively to the bipartite case:
the state $\rho$ of an $m$-partite system is separable if and only if \( \Pa{\Phi\otimes\identity}\pa{\rho} \) is positive semidefinite whenever the super-operator $\Phi$ is positive on $(m-1)$-partite separable operators \cite{HorodeckiH+01}.

\item \emph{Present work on local quantum operations.}
In Chapter \ref{ch:char} it is proved that a multi-party quantum operation $\Lambda$ is a LOSE operation if and only if \[ \varphi\pa{\jam{\Lambda}}\geq 0 \] whenever the linear functional $\varphi$ is ``completely'' positive on a certain cone of separable Hermitian operators, under a natural notion of complete positivity appropriate to that cone.
A characterization of LOSR operations is obtained by replacing \emph{complete} positivity of $\varphi$ with mere positivity on that \emph{same} cone.
Here $\jam{\Lambda}$ denotes the Choi-Jamio\l kowski representation of the super-operator $\Lambda$.

The characterizations presented in Chapter \ref{ch:char} do not rely upon any of the prior discussion in this thesis.
This independence contrasts favorably with prior work on separable quantum states, wherein the existence of the ball around the completely mixed state (and the subsequent $\cls{NP}$-hardness result) relied crucially upon the characterization of separable states in terms of positive super-operators.

\end{description}

\subsubsection{No-signaling operations}
%=============================================================================%

A quantum operation is \emph{no-signaling} if it cannot be used by spatially separated parties to violate relativistic causality.
By definition, every LOSE operation is a no-signaling operation.
Moreover, there exist no-signaling operations that are not LOSE operations.
Indeed, it is noted in Chapter \ref{ch:no-sig} that the standard nonlocal box of Popescu and Rohrlich \cite{PopescuR94} is an example of a no-signaling operation that is separable (in the sense of Rains \cite{Rains97}), yet it is not a LOSE operation.

Two characterizations of no-signaling operations are also discussed in Chapter \ref{ch:no-sig}.
These characterizations were first established somewhat implicitly for the bipartite case in Beckman \emph{et al.}~\cite{BeckmanG+01}.
The present thesis generalizes these characterizations to the multi-party setting and recasts them more explicitly in terms of the Choi-Jamio\l kowski representation for quantum super-operators.

%=============================================================================%
\section{Mathematical preliminaries} \label{sec:prelim}
%=============================================================================%

In this section we summarize the background mathematical knowledge upon which the work in this thesis rests.
While extensive, by no means is this summary intended to be comprehensive.
Instead, the purpose of this section is only to review existing concepts so that we may fix terminology and notation throughout the thesis.

%=============================================================================%
\subsection{Linear algebra}
\label{sec:intro:linalg}
%=============================================================================%

\subsubsection{Vectors, operators, and inner products}
%=============================================================================%

The vector space $\Complex^n$ of all $n$-tuples of complex numbers is called a \emph{complex Euclidean space}.
Complex Euclidean spaces are denoted by capital script letters such as $\cX$, $\cY$, and $\cZ$ so as to better facilitate discussions involving multiple distinct spaces.

Vectors in a complex Euclidean space are denoted by lowercase Roman letters such as $u$, $v$, and $w$.
The \emph{standard orthonormal basis} of each $n$-dimensional complex Euclidean space is typically written $\set{e_1,\dots,e_n}$ where $e_i$ denotes the $n$-tuple whose $i$th component equals 1 with all other components equal to 0.
The \emph{standard inner product} between two vectors $u,v\in\cX$ is denoted $\inner{u}{v}$.
In this thesis, this inner product is conjugate linear in the first argument and linear in the second argument.
In particular, if
\begin{align*}
u &= (\alpha_1,\dots,\alpha_n)\\
v &= (\beta_1,\dots,\beta_n)
\end{align*}
then \[ \inner{u}{v} = \sum_{i=1}^n \overline{\alpha_i}\beta_i. \]
The \emph{standard Euclidean norm} of a vector $u$ is denoted $\norm{u}$ and is given by \[ \norm{u}=\sqrt{\inner{u}{u}}. \]
%The \emph{standard Euclidean norm} of a vector $u=(\alpha_1,\dots,\alpha_n)$ is denoted $\norm{u}$ and is given by \[ \norm{u}=\sqrt{\inner{u}{u}} = \sqrt{\sum_{i=1}^n \abs{\alpha_i}^2 }. \]
%More generally, the \emph{vector $p$-norm} is defined for all $p\in[1,\infty]$ by
%\[ \norm{u}_p = \Pa{ \sum_{i=1}^n \abs{\alpha_i}^p }^{1/p}, \norm{u}_\infty = \max_i \abs{\alpha_i}. \]
%The Euclidean norm is an example of a \emph{vector $p$-norm} for $p=2$.
Each vector $u\in\cX$ induces a \emph{dual vector} $u^*$, which is a linear function $u^*: \cX\to\Complex$ defined by
\[ u^* : x\mapsto\inner{u}{x}. \]
We sometimes use the alternate notation \[u^*v=\inner{u}{v}.\]

The vector $y\in\cY$ obtained by applying a linear operator $A:\cX\to\cY$ to a vector $x\in\cX$ is denoted by the simple juxtaposition $y=Ax$.
Similarly, the operator $C:\cX\to\cZ$ obtained by composing the linear operators $A:\cX\to\cY$ and $B:\cY\to\cZ$ is denoted by the juxtaposition $C=BA$.

Each pair of vectors $x\in\cX,y\in\cY$ induces an operator $yx^*:\cX\to\cY$ defined by
\[ (yx^*)u = y(x^*u) = \inner{x}{u} y \]
for all $u\in\cX$.
By analogy with the inner product, the operator $yx^*$ is sometimes called the \emph{outer product} of $x$ and $y$, but we will not use that terminology in this thesis.

The \emph{adjoint} of a linear operator $A:\cX\to\cY$ is the unique operator $A^*:\cY\to\cX$ satisfying \[ \inner{y}{Ax}=\inner{A^*y}{x} \] for every $x\in\cX,y\in\cY$.\
The \emph{standard inner product} for operators is given by
\[ \inner{A}{B} = \ptr{}{A^*B}. \]

\subsubsection{Matrix representation of vectors and operators}
%=============================================================================%

It is sometimes convenient to think of a linear operator $A:\cX\to\cY$ as an $m\times n$ matrix and of an element $x\in \cX$ as a $n\times 1$ column vector, so that the $m\times 1$ column vector $y\in\cY$ obtained by applying $A$ to $x$ is given by standard matrix multiplication:
\[
  \left[
    \begin{array}{c}
      \delta_1\\ \vdots \\ \delta_m
    \end{array}
  \right]
  = y = Ax =
  \left[
    \begin{array}{ccc}
      a_{1,1}&\dots &a_{1,n}\\
      \vdots & \ddots & \vdots\\
      a_{m,1}&\dots &a_{m,n}
    \end{array}
  \right]
  \left[
    \begin{array}{c}
      \gamma_1\\ \vdots \\ \gamma_n
    \end{array}
  \right]
\]
In this view, the asterisk superscript indicates the conjugate-transpose matrix operation.
In particular, the \emph{conjugate} operator $\overline{A}:\cX\to\cY$ and \emph{transpose} operator $A^\trans:\cY\to\cX$ are the linear operators whose matrix representations are given by
\[
  \overline{A}=
  \left[
    \begin{array}{ccc}
      \overline{a_{1,1}}&\dots &\overline{a_{1,n}}\\
      \vdots & \ddots & \vdots\\
      \overline{a_{m,1}}&\dots &\overline{a_{m,n}}
    \end{array}
  \right],
  \qquad
  A^\trans=
  \left[
    \begin{array}{ccc}
      a_{1,1}&\dots &a_{m,1}\\
      \vdots & \ddots & \vdots\\
      a_{1,n}&\dots &a_{n,m}
    \end{array}
  \right]
\]
and the matrix representation of the adjoint operator $A^*$ is given by \[ A^* = (\overline{A})^\trans = \overline{(A^\trans)}. \]
The asterisk notation for matrices is consistent with the definition of the dual vector.
Indeed, matrix multiplication can be used to compute the vector inner product:
\[
  u^*v = 
  \left[
    \begin{array}{ccc}
      \overline{\alpha_1} & \dots & \overline{\alpha_n}
    \end{array}
  \right]
  \left[
    \begin{array}{c}
      \beta_1\\ \vdots \\ \beta_n
    \end{array}
  \right]
  = \sum_{i=1}^n \overline{\alpha_i}\beta_i.
\]
Matrix multiplication is also employed to compute the matrix representation of the operator $yx^*:\cX\to\cY$:
\[
  yx^* = 
  \left[
    \begin{array}{c}
      \delta_1\\ \vdots \\ \delta_m
    \end{array}
  \right]
  \left[
    \begin{array}{ccc}
      \overline{\gamma_1} & \dots & \overline{\gamma_n}
    \end{array}
  \right]
  =
  \left[
    \begin{array}{ccc}
      \delta_1\overline{\gamma_1} &\dots & \delta_1\overline{\gamma_n}\\
      \vdots & \ddots & \vdots\\
      \delta_m\overline{\gamma_1} &\dots & \delta_m\overline{\gamma_n}
    \end{array}
  \right].  
\]
Similarly, matrix multiplication can be used to compute the matrix inner product---the composition $A^*B$ of linear operators is computed by matrix multiplication of $A^*$ and $B$, and the familiar trace function $\ptr{}{A^*B}$ equals the sum of the diagonal entries of the resulting matrix $A^*B$.

%With the exception of our discussion of certain semidefinite optimization problems in Section \ref{}, we make no specific use of the matrix representation for operators and vectors in this thesis.
%Nonetheless, this representation is often helpful

\subsubsection{Tensor products}
%=============================================================================%

The tensor product is a mathematical concept that is so fundamental to quantum information that discussion always implicitly assumes a working knowledge.
The concept is important because it is the mechanism by which two separate quantum systems are viewed as a single system.

For any two complex Euclidean spaces $\cX,\cY$ and any two vectors $x\in\cX,y\in\cY$, the \emph{tensor product} associates a third vector $x\ot y$, which is an element of a third vector space $\cX\ot\cY$ of dimension $\dim(\cX)\dim(\cY)$.
%Abstractly, the tensor product is induced by a distinguished bilinear mapping $\phi$ from $\cX\times\cY$ to a complex Euclidean space $\cZ$ of dimension $\dim(\cX)\dim(\cY)$.
%For vectors $x\in\cX$ and $y\in\cY$, the \emph{tensor product} of $x$ and $y$---denoted $x\ot y$---is the vector obtained by applying this distinguished bilinear mapping to $(x,y)$.
Specifically, letting
\begin{align*}
  \set{e_0,\dots,e_{\dim(\cX)-1}}&\subset\cX,\\
  \set{f_0,\dots,f_{\dim(\cY)-1}}&\subset\cY,\\
  \set{g_0,\dots,g_{\dim(\cX\ot\cY)-1}}&\subset\cX\ot\cY
\end{align*}
denote the standard bases of $\cX$, $\cY$, and $\cX\ot\cY$, respectively, the vector $x\ot y$ is specified in terms of standard basis elements as follows:
\[ e_i\ot f_j = g_{i\dim(\cY)+j}. \]
The complex Euclidean space $\cX\ot\cY$ is called the \emph{tensor product} of the spaces $\cX$ and $\cY$.
(Indeed, any complex Euclidean space $\cZ$ whose dimension is not a prime number may be viewed as a tensor product $\cZ=\cX\ot\cY$ of two nontrivial spaces $\cX$ and $\cY$.)

From this definition, it is possible to derive many of the widely known basic properties of the tensor product.
%and to extend its definition to operators and beyond.
For example, the bilinear mapping defined by \[(x,y)\mapsto x\ot y \] is \emph{universal}, meaning that \emph{any} bilinear transformation $\psi:\cX\times\cY\to\cV$ for some complex Euclidean space $\cV$ can alternately be written as a linear operator $A_\psi:\cX\ot\cY\to\cV$ satisfying \[ \psi(x,y) = A_\psi(x\ot y). \]
%Also, the tensor product is multiplicative with respect to the vector inner product, so that
%\[ \inner{x\ot y}{u\ot v} = \inner{x}{u}\inner{y}{v} \]
%and hence
%\[ \norm{x\ot y} = \norm{x}\ \norm{y}. \]

As the set of all linear operators from one complex Euclidean space to another is itself a complex vector space, the tensor product may be extended in the obvious way to linear operators.
From there it is possible to derive all the widely known properties of the tensor product for operators.
We shall not list these properties here.
%Moreover, it follows from the universality property of the tensor product that for any operators $A:\cX\to\cW$ and $B:\cY\to\cZ$ the unique operator \[C:\cX\ot\cY\to\cW\ot\cZ\] that implements the mapping \[C:x\ot y \mapsto (Ax)\ot (By)\] is given by the intuitive formula \[C=A\ot B.\]
%In this way, elements of $\lin{\cX}\ot\lin{\cY}$ are identified with elements of $\lin{\cX\ot\cY}$ as expected.
%If $A=\sum_i s_iy_ix_i^*$ and $B=\sum_j t_jz_jw_j^*$ are singular value decompositions of $A$ and $B$ then it holds that
%\[ A\ot B = \sum_{i,j} s_it_j (y_i\ot z_j)(x_i\ot w_j)^* \]
%is a singular value decomposition of $A\ot B$.
%A similar statement holds for spectral decompositions of Hermitian operators.

The tensor product of two or more vectors, spaces, or operators is defined inductively, as suggested by the expression
\[ \cX \ot \cY \ot \cZ = \cX \ot (\cY\ot\cZ). \]
As the tensor product operation is associative, there is no ambiguity in writing simply $\cX\ot\cY\ot\cZ$.

Under the matrix representation, properties of the tensor product may be derived from the following straightforward definition.
For $2\times 2$ matrices
\[
  A=
  \left[
    \begin{array}{cc}
      a & b \\ c & d
    \end{array}
  \right]
  ,\qquad P=
  \left[
    \begin{array}{cc}
      p & q \\ r & s
    \end{array}
  \right]
\]
we have
\[
  A\ot P = 
  \left[
    \begin{array}{cc}
    aP & bP \\ cP & dP
    \end{array}
  \right]
  =
  \left[
    \begin{array}{cc}
    a
    \left[
      \begin{array}{cc}
        p & q \\ r & s
      \end{array}
    \right]
    &
    b
    \left[
      \begin{array}{cc}
        p & q \\ r & s
      \end{array}
    \right]
    \\
    c
    \left[
      \begin{array}{cc}
        p & q \\ r & s
      \end{array}
    \right]    
    &
    d
    \left[
      \begin{array}{cc}
        p & q \\ r & s
      \end{array}
    \right]    
    \end{array}
  \right]
  =
  \left[
    \begin{array}{cccc}
    ap & aq & bp & bq \\
    ar & as & br & bs \\
    cp & cq & dp & dq \\
    cr & cs & dr & ds
    \end{array}
  \right].
\]
This definition extends in the obvious way to arbitrary matrices of any dimension, including column vectors and non-square matrices.
In this context, the tensor product might also be called the \emph{Kronecker product}.

Often in this thesis our discussion involves tensor products of finite sequences of vectors, spaces, operators, and so on.
As such, it is convenient to adopt the following shorthand for such a product: if $X_1,\dots,X_m$ are arbitrary operators then we define
\[ \kprod{X}{i}{j} \defeq X_i\ot\cdots\ot X_j \]
for integers $1\leq i\leq j \leq m$.
A similar notation shall be used for vectors and complex Euclidean spaces.

\subsubsection{Sets of operators}
%=============================================================================%

A linear operator $A:\cX\to\cY$ is called an \emph{isometry} if it holds that $\norm{Ax}=\norm{x}$ for all $x\in\cX$.
This condition can only be met when $\dim(\cY)\geq\dim(\cX)$.
When $\cX$ and $\cY$ have equal dimension, an isometry is also called a \emph{unitary} operator.

The (complex) vector space of linear operators of the form $A:\cX\to\cX$ is denoted $\lin{\cX}$.
The identity operator in $\lin{\cX}$ is denoted $I_\cX$ and the subscript is dropped whenever the space $\cX$ is clear from the context.

An element $A$ of $\lin{\cX}$ is \emph{Hermitian} (or \emph{self-adjoint}) if $A^*=A$.
The set of all Hermitian operators within $\lin{\cX}$ forms a (real) vector space, which we denote by $\her{\cX}$.

An operator $A\in\lin{\cX}$ is \emph{positive semidefinite} if $u^*Au$ is a nonnegative real number for each vector $u\in\cX$.
Every positive semidefinite operator is also Hermitian, and the set of all positive semidefinite operators within $\her{\cX}$ is denoted $\pos{\cX}$.
In general, the use of bold font is reserved for sets of operators.
For an arbitrary set $\bS\subset\her{\cX}$ of Hermitian operators, we let \[\bS^+=\bS\cap\pos{\cX}\] denote the set of positive semidefinite elements in $\bS$.

We adopt the notation $P\succeq 0$ to indicate that the operator $P$ is positive semidefinite.
As suggested by this notation, the \emph{semidefinite partial ordering} on Hermitian operators is defined so that $P\succeq Q$ if and only if $P-Q$ is positive semidefinite.
For each positive semidefinite operator $P$ there exists a unique positive semidefinite operator $\sqrt{P}$ called the \emph{square root} of $P$ with the property that $(\sqrt{P})^2=P$.

\subsubsection{Super-operators}
%=============================================================================%

A \emph{super-operator} is a linear operator of the form $\Phi:\lin{\cX}\to\lin{\cY}$.
The identity super-operator from $\lin{\cX}$ to itself is denoted $\idsup{\cX}$ and the subscript is dropped at will.

Whereas the application of an operator to a vector is denoted by simple juxtaposition, the operator $Y\in\lin{\cY}$ obtained by applying the super-operator $\Phi:\lin{\cX}\to\lin{\cY}$ to an operator $X\in\lin{\cX}$ is always denoted with parentheses: $Y=\Phi(X)$.
Similarly, whereas operator composition is denoted by simple juxtaposition, the super-operator $\Gamma:\lin{\cX}\to\lin{\cZ}$ obtained by composing the linear operators $\Phi:\lin{\cX}\to\lin{\cY}$ and $\Psi:\lin{\cY}\to\lin{\cZ}$ is denoted $\Gamma=\Psi\circ\Phi$.

The \emph{standard inner product} for super-operators $\Phi,\Psi:\lin{\cX}\to\lin{\cY}$ is given by \[ \inner{\Phi}{\Psi} = \sum_{i,j=1}^{\dim(\cX)} \inner{\Phi(e_ie_j^*)}{\Psi(e_ie_j^*)} \]
where $\set{e_1,\dots,e_{\dim(\cX)}}\subset\cX$ is the standard basis for $\cX$.
That this definition is a natural extension of the operator inner product can be argued from the fact that the operator inner product satisfies \[ \inner{A}{B} = \sum_{i=1}^{\dim(\cX)} \inner{Ae_i}{Be_i} \] for any two operators $A,B:\cX\to\cY$.

Just as the definition of the tensor product is extended in a natural way to operators, so too can it be extended to super-operators.

A super-operator $\Phi:\lin{\cX}\to\lin{\cY}$ is said to be
\begin{itemize}
\item
  \emph{Hermitian-preserving} if $\Phi(X)$ is Hermitian whenever $X$ is Hermitian.
\item
  \emph{positive on $\bK$} if $\Phi(X)$ is positive semidefinite whenever $X\in\bK$.
\item
  \emph{positive} if $\Phi$ is positive on $\pos{\cX}$.
\item
  \emph{completely positive} if $\Phi\ot\idsup{\cZ}$ is positive for every choice of complex Euclidean space $\cZ$.
\item
  \emph{trace-preserving} if $\ptr{}{\Phi(X)}=\ptr{}{X}$ for all $X$.
\end{itemize}
%Every positive super-operator is Hermitian-preserving,
While the definition of complete positivity might seem awkward at first, it is significantly simplified by the observation that $\Phi$ is completely positive if and only if $\Phi\ot\idsup{\cZ}$ is positive for a space $\cZ$ with $\dim(\cZ)=\dim(\cX)$.
In particular, there is no need to verify positivity of $\Phi\ot\idsup{\cZ}$ for infinitely many spaces $\cZ$.

An important example of a completely positive and trace-preserving super-operator is the \emph{partial trace}.
For any complex Euclidean spaces $\cX,\cY$, this super-operator has the form $\trace_\cX:\lin{\cX\ot\cY}\to\lin{\cY}$.
It is most easily specified by its actions upon product operators $X\ot Y$ for $X\in\lin{\cX},Y\in\lin{\cY}$ by the expression \[ \ptr{\cX}{X\ot Y} = \ptr{}{X} Y. \]
The importance of the partial trace stems from the fact that this super-operator is used to compute the quantum state of a portion of some larger system whose quantum state is already known.

Just as with operators, the \emph{adjoint} of a super-operator $\Phi:\lin{\cX}\to\lin{\cY}$ is the unique super-operator $\Phi^*:\lin{\cY}\to\lin{\cX}$ satisfying \[ \inner{Y}{\Phi(X)}=\inner{\Phi^*(Y)}{X} \] for every $X\in\lin{\cX},Y\in\lin{\cY}$.

Every super-operator $\Phi:\lin{\cX}\to\lin{\cY}$ may be expressed in the \emph{operator-sum} notation, whereby there exist operators $A_1,\dots,A_k,B_1,\dots,B_k:\cX\to\cY$ such that \[ \Phi(X) = \sum_{i=1}^k A_iXB_i^* \] for all $X$, from which it follows that \[ \Phi^*(Y) = \sum_{i=1}^k A_i^*YB_i \] for all $Y$.
It holds that $\Phi$ is completely positive if and only if it has a symmetric operator-sum decomposition, so that \[ \Phi(X)=\sum_{i=1}^k A_iXA_i^* \] for all $X$.
%Moreover, $\Phi$ is trace-preserving if and only every operator-sum decomposition satisfies \[ \sum_{i=1}^k B_i^*A_i = I_\cX. \]

Each super-operator also has a \emph{Stinespring representation}, whereby there exists a complex Euclidean space $\cZ$ and operators $A,B:\cX\to\cY\ot\cZ$ such that \[ \Phi(X) = \ptr{\cZ}{AXB^*} \] for all $X$.
By analogy to the operator-sum representation, $\Phi$ is completely positive if and only if it has a symmetric Stinespring representation, so that \[ \Phi(X) = \ptr{\cZ}{AXA^*} \] for all $X$.
If, in addition, $\Phi$ is also trace-preserving then the operator $A$ must be an isometry.

\subsubsection{The operator-vector and Choi-Jamio\l kowski isomorphisms}
%=============================================================================%

In this thesis we make use of an unconventional but useful isomorphism $\vectorize$ that associates with each operator $A:\cX\to\cY$ a unique vector $\col{A}$ in the complex Euclidean space $\cY\ot\cX$.
Letting $\set{e_1,\dots,e_{\dim(\cX)}}\subset\cX$ and $\set{f_1,\dots,f_{\dim(\cY)}}\subset\cY$ denote the standard bases of $\cX$ and $\cY$, this isomorphism defined by
%\[ \vectorize:f_je_i^* \mapsto f_j\ot e_i. \]
\[ \col{f_je_i^*} = f_j\ot e_i. \]
Note that this correspondence is basis-dependent, and we have chosen the standard basis in our definition.

In the matrix representation, the column vector $\col{A}$ is obtained from the entries of the matrix $A$ by taking each row of $A$, transposing that row to form a column vector, and then stacking each of these column vectors so as to form one large column vector.
For example, the $\vectorize$ mapping acts as follows on $2\times 2$ matrices:
\[
  A =
  \left[
    \begin{array}{cc}
      a & b \\ c & d
    \end{array}
  \right]
  , \qquad
  \col{A} =
  \left[
    \begin{array}{c}
      a \\ b \\ c \\ d
    \end{array}
  \right].
\]
The $\vectorize$ isomorphism has many convenient properties, some of which we list here.
Each of these identities may be verified by straightforward calculation.

\begin{proposition}[Properties of the $\vectorize$ isomorphism]

The following hold:
\begin{enumerate}
\item
The $\vectorize$ mapping preserves the standard inner products of operators and vectors.
That is, for each $A,B:\cX\to\cY$ it holds that \[ \inner{A}{B} = \inner{\col{A}}{\col{B}}. \]
\item
For any operators $A$, $B$, and $X$ for which the composition $AXB$ is defined it holds that \[ \Pa{A\ot B^\trans}\col{X} = \col{AXB}. \]
\item
For each $A,B:\cX\to\cY$ it holds that
\begin{align*}
  \ptr{\cX}{\col{A}\row{B}} &= AB^*, \\
  \ptr{\cY}{\col{A}\row{B}} &= (B^*A)^\trans.
\end{align*}
\item
For each $x\in\cX$ and $y\in\cY$ it holds that \[ \col{yx^*} = y\ot\overline{x}. \]
\end{enumerate}

\end{proposition}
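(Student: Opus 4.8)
\section*{Proof proposal}

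The plan is to reduce everything to part~(4), verify that directly on standard basis elements, and then bootstrap the other three identities from it.

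First I would prove~(4). Fix the standard bases $\set{e_1,\dots}\subset\cX$ and $\set{f_1,\dots}\subset\cY$. Both maps $(x,y)\mapsto\col{yx^*}$ and $(x,y)\mapsto y\ot\overline{x}$ are linear in $y$ and conjugate-linear in $x$, so it suffices to check agreement when $x=e_i$ and $y=f_j$. In that case $yx^*=f_je_i^*$, so by the defining property of $\vectorize$ we get $\col{yx^*}=f_j\ot e_i$, and since $\overline{e_i}=e_i$ this equals $y\ot\overline{x}$. Extending by (sesqui)linearity gives~(4).

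Next I would deduce~(1). The map $\vectorize$ is linear and, by its defining action $f_je_i^*\mapsto f_j\ot e_i$, sends the standard basis $\set{f_je_i^*}$ of $\lin{\cX,\cY}$ bijectively onto the standard basis $\set{f_j\ot e_i}$ of $\cY\ot\cX$; both bases are orthonormal in the respective standard inner products, so $\vectorize$ is an inner-product-preserving isomorphism, which is exactly $\inner{A}{B}=\inner{\col{A}}{\col{B}}$. (Alternatively one can just expand $A$ and $B$ in the standard operator basis and compute the two sides entrywise.)

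Finally I would handle~(2) and~(3) by reducing to the rank-one case. Every $X\in\lin{\cX,\cY}$ is a linear combination of operators $yx^*$, and both sides of~(2) are linear in $X$; similarly $\col{A}\row{B}$ is linear in $A$ and conjugate-linear in $B$, as are $AB^*$ and $(B^*A)^\trans$, so it suffices to treat $A=yx^*$, $B=y'(x')^*$. The only ingredients are the elementary Kronecker-product fact $\Pa{A\ot B}(u\ot v)=(Au)\ot(Bv)$, the relation $A^\trans=\overline{A^*}$ recorded in the preliminaries, and the identities $(yx^*)^*=xy^*$ and $x^*B=(B^*x)^*$. For~(2): using $B^\trans\overline{x}=\overline{B^*}\,\overline{x}=\overline{B^*x}$ and then~(4) twice, $\Pa{A\ot B^\trans}\col{yx^*}=(Ay)\ot\overline{B^*x}=\col{(Ay)(B^*x)^*}=\col{A(yx^*)B}$. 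For~(3): by~(4) and $(u\ot v)^*=u^*\ot v^*$ one gets $\col{yx^*}\row{y'(x')^*}=\Pa{y(y')^*}\ot\overline{x(x')^*}$, a product operator; applying $\trace_{\cX}$ (using $\trace\overline{M}=\overline{\trace M}$) yields $\inner{x}{x'}\,y(y')^*=AB^*$, and applying $\trace_{\cY}$ yields $\inner{y'}{y}\,\overline{x(x')^*}=\Pa{B^*A}^\trans$.

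None of this presents a genuine obstacle; the only thing demanding care is the bookkeeping of conjugates and transposes—keeping straight that $\overline{x}(x')^\trans=\overline{x(x')^*}$, that complex conjugation commutes with the trace, and that the $\vectorize$ convention lists the $\cY$-factor before the $\cX$-factor.
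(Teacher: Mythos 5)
Your proof is correct. The paper does not actually supply an argument for this proposition---it simply remarks that each identity may be verified by straightforward calculation---and your organization (checking item (4) on standard basis vectors, then deriving (1)--(3) via sesquilinearity and reduction to rank-one operators) is exactly that direct verification, with the conjugate/transpose bookkeeping handled correctly.
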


The \emph{Choi-Jamio\l kowski isomorphism} associates with each super-operator $\Phi:\lin{\cX}\to\lin{\cY}$ a unique operator $\jam{\Phi}\in\lin{\cY\ot\cX}$.
Letting $\set{e_1,\dots,e_{\dim(\cX)}}\subset\cX$ denote the standard basis of $\cX$, this isomorphism defined by
%\[ \Jamiolkowski:\Phi\mapsto \sum_{i,j=1}^{\dim(\cX)} \Phi(e_ie_j^*)\ot e_ie_j^*. \]
\[ \jam{\Phi} = \sum_{i,j=1}^{\dim(\cX)} \Phi(e_ie_j^*)\ot e_ie_j^*. \]
As with the $\vectorize$ isomorphism, the Choi-Jamio\l kowski isomorphism is basis-dependent and it is always defined with respect to the standard basis.

Given that $I_\cX=\sum_{i=1}^{\dim(\cX)} e_ie_i^*$, we obtain the following alternate characterization of the Choi-Jamio\l kowski isomorphism:
\[ \jam{\Phi} = \Pa{\Phi\ot\idsup{\cX}}(\col{I_\cX}\row{I_\cX}). \]
Like the $\vectorize$ isomorphism, the Choi-Jamio\l kowski isomorphism has many convenient properties, some of which we list here.

\begin{proposition}[Properties of the Choi-Jamio\l kowski isomorphism]

The following hold for all super-operators $\Phi:\lin{\cX}\to\lin{\cY}$:
\begin{enumerate}
\item
The Choi-Jamio\l kowski isomorphism preserves the standard inner product of super-operators and operators.
That is, for each $\Phi,\Psi:\lin{\cX}\to\lin{\cY}$ it holds that \[ \inner{\Phi}{\Psi} = \inner{\jam{\Phi}}{\jam{\Psi}}. \]
\item \label{item:prop:jam:act}
For each $X\in\lin{\cX}$ it holds that \[ \Phi(X) = \Ptr{\cX}{\Pa{I_\cY\ot X^\trans}\jam{\Phi}}. \]
\item
If $\Phi$ has operator-sum and Stinespring representations given by
\[ \Phi(X) = \sum_{i=1}^k A_iXB_i^* \qquad \textrm{and} \qquad \Phi(X) = \ptr{\cZ}{AXB^*} \]
for all $X$ then it holds that
\[ \jam{\Phi} = \sum_{i=1}^k \col{A_i}\row{B_i} \qquad \textrm{and} \qquad \jam{\Phi} = \ptr{\cZ}{\col{A}\row{B}}, \]
respectively.
\item
$\Phi$ is Hermitian-preserving if and only if $\jam{\Phi}$ is Hermitian.
\item
$\Phi$ is completely positive if and only if $\jam{\Phi}$ is positive semidefinite.
\item
$\Phi$ is trace-preserving if and only if $\ptr{\cY}{\jam{\Phi}}=I_\cX$.
\end{enumerate}

\end{proposition}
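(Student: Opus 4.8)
The plan is to treat part~3 (the operator-sum and Stinespring formulas) as the linchpin and derive most of the other parts from it, together with the characterizations of completely positive super-operators recalled above. First I would verify the operator-sum formula by starting from the alternate characterization $\jam{\Phi} = \Pa{\Phi\ot\idsup{\cX}}(\col{I_\cX}\row{I_\cX})$ and the identity $\col{I_\cX} = \sum_i e_i\ot e_i$: substituting an operator-sum representation $\Phi(X) = \sum_k A_kXB_k^*$ and pulling the operators through the spectator tensor factor via property~2 of the $\vectorize$ isomorphism --- in the form $(A_k\ot I)\col{I_\cX} = \col{A_k}$ --- yields $\jam{\Phi} = \sum_k \col{A_k}\row{B_k}$. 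The Stinespring formula then follows by expanding the partial trace over $\cZ$ in a fixed basis of $\cZ$ to get an operator-sum decomposition of $\Phi$ and invoking the formula just proved; the only delicate point is keeping track of which tensor factor the partial trace acts on once $\col{A}$ is viewed as a vector in $\cY\ot\cZ\ot\cX$.

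Part~2 I would prove directly from the definition $\jam{\Phi} = \sum_{i,j}\Phi(e_ie_j^*)\ot e_ie_j^*$: applying $I_\cY\ot X^\trans$ and tracing out $\cX$ replaces each $e_ie_j^*$ by the scalar $\ptr{}{X^\trans e_ie_j^*} = e_j^*X^\trans e_i$, which is the $(i,j)$ entry of $X$, so the sum collapses to $\Phi\Pa{\sum_{i,j}X_{i,j}e_ie_j^*} = \Phi(X)$. Beyond establishing the identity, this exhibits an explicit inverse of the map $\Phi\mapsto\jam{\Phi}$, which together with a dimension count ($\dim(\cX)^2\dim(\cY)^2$ on both sides) shows that this map is a linear bijection; this is the sense of the word ``isomorphism,'' and it is precisely the fact that lets one read operator-sum decompositions of $\Phi$ off of arbitrary decompositions of $\jam{\Phi}$. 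Part~1 is then an immediate direct computation: expanding both inner products over the standard basis and using $\inner{e_ie_j^*}{e_ke_l^*} = \delta_{i,k}\delta_{j,l}$ kills the cross terms and leaves exactly $\sum_{i,j}\inner{\Phi(e_ie_j^*)}{\Psi(e_ie_j^*)}$.

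For parts~4 and~5 the forward implications are one-line consequences of part~3: if $\Phi$ is completely positive it has a symmetric operator-sum $\Phi(X) = \sum_kA_kXA_k^*$, so $\jam{\Phi} = \sum_k\col{A_k}\row{A_k}\succeq 0$; and if $\Phi$ is Hermitian-preserving one checks $\jam{\Phi}^* = \jam{\Phi}$ by writing each $e_ie_j^*$ as $\frac{1}{2}(H_1+iH_2)$ with $H_1,H_2$ Hermitian and using that $\Phi$ maps Hermitian operators to Hermitian operators. The converses are where the bijectivity observation earns its keep: a positive semidefinite $\jam{\Phi}$ factors as $\sum_k\col{A_k}\row{A_k}$, and injectivity forces $\Phi(X) = \sum_kA_kXA_k^*$, hence $\Phi$ is completely positive; a Hermitian $\jam{\Phi}$ has a real-eigenvalue spectral decomposition $\sum_k\lambda_k\col{C_k}\row{C_k}$, forcing $\Phi(X) = \sum_k\lambda_kC_kXC_k^*$, hence $\Phi$ preserves Hermiticity. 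Part~6 I would do directly: tracing out $\cY$ in the definition gives $\ptr{\cY}{\jam{\Phi}} = \sum_{i,j}\ptr{}{\Phi(e_ie_j^*)}\,e_ie_j^*$, which equals $I_\cX$ exactly when $\ptr{}{\Phi(e_ie_j^*)} = \delta_{i,j} = \ptr{}{e_ie_j^*}$ for all $i,j$, and by linearity this is equivalent to $\Phi$ being trace-preserving.

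None of these steps is genuinely hard --- they are ``straightforward calculations'' in the spirit of the preceding proposition --- but the point that most rewards care is the use of injectivity of the Choi-Jamio\l kowski map to obtain the converse halves of parts~4 and~5 from arbitrary decompositions of $\jam{\Phi}$: part~3 as stated gives only the forward direction, so one must first observe that linearity plus injectivity of $\Phi\mapsto\jam{\Phi}$ upgrades it to a statement valid for any decomposition into terms of the form $\col{A_i}\row{B_i}$. The secondary nuisance is bookkeeping of tensor-factor orderings, and of transpose-versus-conjugate distinctions, in the partial-trace manipulations underlying the Stinespring half of part~3.
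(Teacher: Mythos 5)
Your proof is correct. The thesis itself states this proposition without proof, as standard background---the only related argument in the paper is Proposition \ref{prop:identities}, which generalizes item 2 and is proved by exactly the direct computation you give---so there is no paper proof to diverge from. Your organizing observation is the right one: item 2 exhibits an explicit inverse of $\Phi\mapsto\jam{\Phi}$, and it is this injectivity, combined with item 3, that upgrades ``$\Phi$ has such a decomposition $\Rightarrow$ $\jam{\Phi}$ has such a form'' to the converse statements needed for items 4 and 5; flagging that is the part most worth saying. One small refinement: for the forward direction of item 5 you can bypass the appeal to the symmetric operator-sum characterization (a fact whose standard proof runs through this very proposition, so your route leans on something usually derived from the converse you then prove): since $\col{I_\cX}\row{I_\cX}$ is positive semidefinite and $\jam{\Phi}=\Pa{\Phi\ot\idsup{\cX}}(\col{I_\cX}\row{I_\cX})$, complete positivity of $\Phi$ gives $\jam{\Phi}\succeq 0$ in one line, with no circularity concern. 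The rest---the vec manipulations in item 3, the basis computation in items 1, 2, and 6, and the Hermiticity argument via $X=H_1+iH_2$---is exactly the ``straightforward calculation'' the paper has in mind.
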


A generalization of item \ref{item:prop:jam:act} is proven in Proposition \ref{prop:identities} of Chapter \ref{ch:char}.
An additional identity involving the Choi-Jamio\l kowski isomorphism appears later in this section in Proposition \ref{prop:explicit-bound}.

\subsubsection{Operator decompositions}
%=============================================================================%

There are two operator decompositions that are so fundamental to quantum information that, like the tensor product, they are often used implicitly.
Indeed, in this thesis we make little explicit mention of these decompositions, yet our discussion always assumes a working knowledge of their existence.

The \emph{Singular Value Theorem} states that every operator $A:\cX\to\cY$ has at least one \emph{singular value decomposition}, whereby there exist orthonormal sets
\( \set{x_1,\dots,x_r}\subset\cX \) and \( \set{y_1,\dots,y_r}\subset\cY \)
and positive real numbers $s_1,\dots,s_r\in\Real$ such that
\[ A = \sum_{i=1}^r s_i y_i x_i^*. \]
Here $r$ is the rank of the operator $A$.
The real numbers $s_1,\dots,s_r$ are called the \emph{singular values} of $A$.
Sometimes, the vectors $y_1,\dots,y_r$ are called the \emph{left singular vectors} of $A$, whereas $x_1,\dots,x_r$ are called the \emph{right singular vectors} of $A$.

The \emph{Spectral Theorem} implies that an operator $A\in\lin{\cX}$ is Hermitian if and only if there exists at least one \emph{spectral decomposition} whereby there is an orthonormal set $\set{x_1,\dots,x_r}\subset\cX$ and real numbers $\lambda_1,\dots,\lambda_r$ such that
\[ A = \sum_{i=1}^r \lambda_r x_i x_i^*. \]
Again, $r$ is the rank of $A$.
The real numbers $\lambda_1,\dots,\lambda_r$ are called the \emph{eigenvalues} of $A$ and the vectors $x_1,\dots,x_r$ are called the \emph{eigenvectors} of $A$.

It is easy to see that a Hermitian operator $A$ is positive semidefinite if and only if each of its eigenvalues is nonnegative.
It follows immediately from the Spectral Theorem that every Hermitian operator $A$ has a \emph{Jordan decomposition} whereby there exist positive semidefinite operators $P$ and $Q$ with the property that \[ A=P-Q \] and $PQ=0$---that is, $P$ and $Q$ act on orthogonal subspaces.
The \emph{absolute value} $\abs{A}$ of $A$ is a positive semidefinite operator defined via the Jordan decomposition by \[ \abs{A} = P+Q. \]

A positive semidefinite operator $P$ is called a \emph{projection} if each of its eigenvalues is either zero or one.

\subsubsection{Norms of vectors and operators}
%=============================================================================%

For each real number $p\geq 1$ and each vector \( u=(\alpha_1,\dots,\alpha_n) \) in some $n$-dimensional complex Euclidean space, the \emph{vector $p$-norm} $\norm{u}_p$ of $u$ is defined by
\[ \norm{u}_p = \Pa{ \sum_{i=1}^n \abs{\alpha_i}^p }^{1/p} \]
with $\norm{u}_\infty$ given by
\[ \norm{u}_\infty = \lim_{p\to\infty} \norm{u}_p = \max_i \abs{\alpha_i}. \]
These norms satisfy $\norm{u}_p\leq\norm{u}_q$ whenever $p\geq q$, and it also holds that
\[ \norm{u}_1 \leq \sqrt{n}\norm{u}_2 \leq n\norm{u}_\infty. \]
The standard Euclidean norm $\norm{u}$ is an instance of the vector $p$-norm with $p=2$, so that $\norm{u}=\norm{u}_2$ for all $u$.

Each vector $p$-norm induces a norm on operators via the singular value decomposition.
In particular, the \emph{Schatten $p$-norm} $\norm{A}_p$ of an operator $A$ is defined as the vector $p$-norm of the singular values of $A$.
As such, the Schatten $p$-norms inherit many properties from the vector $p$-norms.
For example, $\norm{A}_p\leq\norm{A}_q$ whenever $p\geq q$ and
\[ \norm{A}_1 \leq \sqrt{n}\norm{A}_2 \leq n\norm{A}_\infty \]
for operators $A\in\lin{\cX}$ with $\dim(\cX)=n$.

In this thesis we are interested only in the Schatten $p$-norms for the values $p=1,2,\infty$.
The Schatten $p$-norm for $p=1$ is also called the \emph{trace norm} and is alternately denoted $\tnorm{A}$.
%The trace norm is characterized by the expression \[ \tnorm{A} = \Ptr{}{\sqrt{A^*A}}. \]
%(Note that $A^*A\in\pos{\cX}$ is always positive semidefinite for every $A:\cX\to\cY$.)

The Schatten $p$-norm for $p=2$ is also called the \emph{Frobenius norm} and is alternately denoted $\fnorm{A}$.
The Frobenius norm is merely the standard Euclidean norm for complex Euclidean spaces applied to the complex vector space of linear operators:
\[ \fnorm{A} = \sqrt{\inner{A}{A}} = \sqrt{\inner{\col{A}}{\col{A}}} = \norm{\col{A}}. \]

The Schatten $p$-norm for $p=\infty$ is also called the \emph{standard operator norm} and is alternately denoted without any subscript by $\norm{A}$.
The operator norm is \emph{induced} from the standard Euclidean norm for vectors:
\[ \norm{A} = \max_{\norm{u}=1} \norm{Au}. \]
In this thesis we prefer the notation $\tnorm{A}$, $\fnorm{A}$, and $\norm{A}$ to $\norm{A}_1$, $\norm{A}_2$, and $\norm{A}_\infty$.

We now prove a simple identity involving the trace norm of the Choi-Jamio\l kowski representation of a completely positive and trace-preserving super-operator.

\begin{proposition}
\label{prop:explicit-bound}

For any completely positive and trace-preserving super-operator $\Phi:\lin{\cX}\to\lin{\cY}$ it holds that $\tnorm{\jam{\Phi}}=\tnorm{\jam{\Phi^*}}=\dim(\cX)$.

\end{proposition}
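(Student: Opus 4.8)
The plan is to exploit the fact that $\jam{\Phi}$ is positive semidefinite whenever $\Phi$ is completely positive, so that the trace norm collapses to the ordinary trace, and then to evaluate that trace using the trace-preservation condition.

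For the first equality I would invoke the recorded property that complete positivity of $\Phi$ makes $\jam{\Phi}$ positive semidefinite. Since the trace norm of a positive semidefinite operator equals its trace (it is the sum of the eigenvalues, all of which are nonnegative), we get $\tnorm{\jam{\Phi}}=\Tr{\jam{\Phi}}$. Now apply the trace-preservation characterization $\ptr{\cY}{\jam{\Phi}}=I_\cX$ together with the fact that the partial trace is trace-preserving, i.e. $\Tr{\jam{\Phi}}=\Tr{\ptr{\cY}{\jam{\Phi}}}$; this yields $\Tr{\jam{\Phi}}=\Tr{I_\cX}=\dim(\cX)$.

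For the adjoint I would first note that $\Phi^*$ is again completely positive: starting from a symmetric operator-sum decomposition $\Phi(X)=\sum_i A_iXA_i^*$, the adjoint is $\Phi^*(Y)=\sum_i A_i^*YA_i$, which is again symmetric. Hence $\jam{\Phi^*}$ is positive semidefinite and $\tnorm{\jam{\Phi^*}}=\Tr{\jam{\Phi^*}}$, so it remains to show this trace equals $\dim(\cX)$. Unwinding the definition of the Choi-Jamio\l kowski isomorphism applied to $\Phi^*:\lin{\cY}\to\lin{\cX}$ and using $\Tr{f_if_j^*}=\delta_{ij}$ on the standard basis of $\cY$ gives $\Tr{\jam{\Phi^*}}=\Tr{\Phi^*(I_\cY)}$. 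Then $\Tr{\Phi^*(I_\cY)}=\inner{\Phi^*(I_\cY)}{I_\cX}=\inner{I_\cY}{\Phi(I_\cX)}=\Tr{\Phi(I_\cX)}=\Tr{I_\cX}=\dim(\cX)$, where the middle equality is the defining adjoint relation and the last equality is trace-preservation applied to $I_\cX$. (Equivalently, the same chain of identities shows $\Tr{\jam{\Phi^*}}=\Tr{\jam{\Phi}}$, so the second equality follows from the first.)

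There is essentially no obstacle here: the statement is a routine consequence of two properties of the Choi-Jamio\l kowski isomorphism already listed in the excerpt — positivity under complete positivity and the trace-preservation characterization — together with the fact that the trace norm agrees with the trace on the positive semidefinite cone. The only points requiring a moment of care are bookkeeping about which tensor factor is being traced out and confirming that complete positivity is inherited by the adjoint; both are immediate from what has already been established.
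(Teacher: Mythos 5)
Your proof is correct and follows essentially the same route as the paper: positivity of $\jam{\Phi}$ and $\jam{\Phi^*}$ reduces the trace norm to the trace, trace-preservation gives $\Tr{\jam{\Phi}}=\dim(\cX)$, and the adjoint case is handled through $\Tr{\jam{\Phi^*}}=\Tr{\Phi^*(I_\cY)}=\Tr{I_\cX}$. The only cosmetic difference is that you justify $\Tr{\Phi^*(I_\cY)}=\dim(\cX)$ directly via the adjoint relation (and verify complete positivity of $\Phi^*$ explicitly), whereas the paper simply cites $\Phi^*(I_\cY)=I_\cX$ as easy to verify.
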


\begin{proof}

  By definition, we have
  \begin{align*}
    \jam{\Phi} &= \Pa{\Phi\ot\idsup{\cX}}(\col{I_\cX}\row{I_\cX}),\\
    \jam{\Phi^*} &= \Pa{\Phi^*\ot\idsup{\cY}}(\col{I_\cY}\row{I_\cY}).
  \end{align*}
  As $\jam{\Phi}$ is positive semidefinite, so too must be $\jam{\Phi^*}$, from which we obtain
  \begin{align*}
    \tnorm{\jam{\Phi}}&=\ptr{}{\jam{\Phi}},\\
    \tnorm{\jam{\Phi^*}}&=\ptr{}{\jam{\Phi^*}}.
  \end{align*}
  As $\Phi$  is trace-preserving, it holds that
  \[ \ptr{}{\jam{\Phi}} = \row{I_\cX}\col{I_\cX} = \dim(\cX). \]
  Moreover, it is easy to verify that $\Phi^*(I_\cY)=I_\cX$, from which it follows that
  \[
    \ptr{}{\jam{\Phi^*}} =
    %\ptr{}{ \ptr{\cY}{ \jam{\Phi^*} } } =
    \ptr{}{ \Phi^*(I_\cY) } = \ptr{}{I_\cX} = \dim(\cX). \]
\end{proof}

%=============================================================================%
\subsection{Convexity}
%=============================================================================%

\subsubsection{Carath\'eodory's Theorem}
%=============================================================================%

A subset $C$ of a real vector space $\Real^n$ is called \emph{convex} if for every $x,y\in C$ and every real number $\alpha\in[0,1]$ it holds that \[ \alpha x + (1-\alpha) y \in C. \]
For vectors $v_1,\dots,v_m\in\Real^n$ and nonnegative real numbers $\alpha_1,\dots,\alpha_m\geq 0$ with \[ \sum_{i=1}^m \alpha_i = 1 \]
the vector \[ \sum_{i=1}^m \alpha_i v_i \] is called a \emph{convex combination} of $v_1,\dots,v_m$.
For an arbitrary set $S\subset\Real^n$, the \emph{convex hull} of $S$ is the subset of $\Real^n$ consisting of all convex combinations of elements in $S$.
\emph{Carath\'eodory's Theorem} is a useful result that bounds the number of terms in the sum of a given convex combination.

\begin{fact}[Carath\'eodory's Theorem] \label{fact:Carateodory}

Let $S\subset\Real^n$ be an arbitrary set.
Every element $x$ of the convex hull of $S$ can be written as a convex combination of no more than $n+1$ elements of $S$.

\end{fact}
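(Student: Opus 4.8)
The plan is to prove this by a standard reduction argument: start with an arbitrary representation of $x$ as a convex combination of elements of $S$, and repeatedly eliminate one term at a time until no more than $n+1$ terms remain. The key observation is that whenever more than $n+1$ vectors appear, a linear dependence among their differences lets us shift weight off of at least one term without leaving the convex hull.

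Concretely, suppose $x=\sum_{i=1}^m \alpha_i v_i$ with each $v_i\in S$, each $\alpha_i>0$, and $\sum_{i=1}^m\alpha_i=1$. (Terms with zero coefficient may be discarded at the outset.) Assume for contradiction, or rather for the purpose of the reduction, that $m\geq n+2$. First I would observe that the $m-1\geq n+1$ vectors $v_2-v_1,\dots,v_m-v_1$ lie in $\Real^n$ and hence are linearly dependent: there exist reals $\beta_2,\dots,\beta_m$, not all zero, with $\sum_{i=2}^m\beta_i(v_i-v_1)=0$. Setting $\beta_1\defeq-\sum_{i=2}^m\beta_i$ yields coefficients $\beta_1,\dots,\beta_m$, not all zero, satisfying both $\sum_{i=1}^m\beta_i v_i=0$ and $\sum_{i=1}^m\beta_i=0$. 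Since the $\beta_i$ sum to zero but are not all zero, at least one of them is strictly positive.

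Next I would perform the weight shift. For any real $t$ we have $x=\sum_{i=1}^m(\alpha_i-t\beta_i)v_i$ and $\sum_{i=1}^m(\alpha_i-t\beta_i)=1$. Choosing $t=\min\Set{\alpha_i/\beta_i : \beta_i>0}$ makes $t>0$ and guarantees $\alpha_i-t\beta_i\geq 0$ for every $i$ (automatic when $\beta_i\leq 0$; and when $\beta_i>0$ this is exactly the defining inequality of the minimum), while the index $i^*$ attaining the minimum has $\alpha_{i^*}-t\beta_{i^*}=0$. Discarding that term expresses $x$ as a convex combination of at most $m-1$ elements of $S$. Iterating this step strictly decreases the number of terms each time it is invoked, so after finitely many steps we reach a representation with at most $n+1$ terms, which is the claim.

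I do not expect a genuine obstacle here; the only points requiring any care are the bookkeeping that $\beta_1$ is chosen so that the $\beta_i$ sum to zero (which is what forces some $\beta_i>0$ and makes the set $\Set{i:\beta_i>0}$ nonempty, so the minimum defining $t$ exists), and the verification that the chosen $t$ keeps all coefficients nonnegative while killing exactly one of them. Everything else is routine linear algebra over $\Real^n$.
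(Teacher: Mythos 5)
Your proof is correct: the affine-dependence construction (taking $\beta_1=-\sum_{i\geq 2}\beta_i$ so that $\sum_i\beta_i v_i=0$ and $\sum_i\beta_i=0$), the choice $t=\min\{\alpha_i/\beta_i:\beta_i>0\}$, and the termination argument are all sound, and this is the standard textbook proof of Carath\'eodory's Theorem. Note that the paper does not prove this fact at all---it states it as Fact \ref{fact:Carateodory} and cites Rockafellar and Barvinok---so your argument simply supplies the classical proof found in those references; there is no divergence of approach to report.
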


Proofs of Carath\'eodory's Theorem can be found, for example, in Rockafellar \cite{Rockafellar70} or Barvinok \cite{Barvinok02}.

\subsubsection{Separation Theorem}
%=============================================================================%

The fundamental \emph{Separation Theorem} tells us that every pair of disjoint convex sets may be separated by a hyperplane.
There are many variants of the Separation Theorem, each of which differs only slightly based upon properties of the sets in question such as whether or not they are open, closed, bounded, or cones.
(A subset $K\subset\Real^n$ is called a \emph{cone} if for every $x\in K$ and every positive real number $\lambda>0$ it holds that $\lambda x \in K$.)
The following variant of the Separation Theorem serves all our needs in this thesis.

\begin{fact}[Separation Theorem]
\label{fact:separation}

Let $C,D\subset\Real^n$ be nonempty disjoint convex sets such that $D$ is open.
There exists a vector $h\in\Real^n$ and a real number $\alpha\in\Real$ with the property that
\begin{align*}
  \inner{h}{x} &\geq \alpha \textrm{ for all $x\in C$},\\
  \inner{h}{y} &< \alpha \textrm{ for all $y\in D$}.
\end{align*}
Moreover, if $C$ is a cone then we may take $\alpha=0$.

\end{fact}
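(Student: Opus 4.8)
The plan is to reduce to the classical problem of separating a point from a convex set and then to dispatch a boundary case by a compactness argument.

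\emph{Step 1: reduction to one set.} First I would form the difference set $E = \Set{y - x : y \in D,\ x \in C}$. Since $C,D$ are nonempty and convex, $E$ is nonempty and convex; since $E = \bigcup_{x \in C}(D - x)$ is a union of translates of the open set $D$, it is open; and $0 \notin E$ because $C$ and $D$ are disjoint. It therefore suffices to produce a unit vector $h$ with $\inner{h}{z} < 0$ for every $z \in E$. Indeed, unravelling $z = y - x$ then gives $\inner{h}{y} < \inner{h}{x}$ for all $x \in C$ and $y \in D$; fixing any $x_0 \in C$ shows $\alpha \defeq \sup_{y \in D}\inner{h}{y}$ is finite, and $\inner{h}{x} \ge \alpha$ holds on $C$ by taking the supremum over $y$, while $\inner{h}{y} < \alpha$ holds on $D$ because the image of the open set $D$ under the nonzero functional $\inner{h}{\cdot}$ is an open subset of $\Real$ and hence omits its own supremum.

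\emph{Step 2: separating the origin from $E$.} If $0 \notin \overline{E}$, let $p$ be the unique point of least norm in the closed convex set $\overline E$; the nearest-point variational inequality $\inner{p}{z - p} \ge 0$ for all $z \in \overline E$ gives $\inner{p}{z} \ge \norm{p}^2 > 0$, so $h \defeq -p/\norm{p}$ works. The harder subcase is $0 \in \overline E \setminus E$, i.e.\ $0$ is a boundary point of $E$. Using the standard fact that a nonempty open convex set equals the interior of its closure, $0 \notin \mathrm{int}(\overline E)$, so there is a sequence $x_k \to 0$ with $x_k \notin \overline E$. Projecting each $x_k$ onto $\overline E$ and normalizing the error vector produces unit vectors $h_k$ with $\inner{h_k}{z} \le \inner{h_k}{x_k}$ for all $z \in \overline E$; by compactness of the unit sphere, a subsequence converges to a unit vector $h$, and passing to the limit gives $\inner{h}{z} \le 0$ for all $z \in \overline E$. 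Finally, openness of $E$ upgrades this to $\inner{h}{z} < 0$ for $z \in E$: if $\inner{h}{z} = 0$ for some $z \in E$ then $z + th \in E$ for small $t > 0$ while $\inner{h}{z + th} = t > 0$, a contradiction. Together with Step 1 this establishes the first assertion.

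\emph{Step 3: the cone refinement.} Suppose in addition that $C$ is a cone. For $x \in C$ and every $\lambda > 0$ we have $\lambda x \in C$, hence $\lambda\inner{h}{x} \ge \alpha$; sending $\lambda \to \infty$ forces $\inner{h}{x} \ge 0$, and sending $\lambda \to 0^+$ forces $\alpha \le 0$. Consequently $\inner{h}{x} \ge 0 \ge \alpha$ for every $x \in C$ while $\inner{h}{y} < \alpha \le 0$ for every $y \in D$, so the conclusion holds with $\alpha$ replaced by $0$.

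I expect the main obstacle to be the boundary subcase of Step 2 — producing a supporting hyperplane of $E$ at the boundary point $0$. This is exactly where finite-dimensionality is essential, through compactness of the unit sphere, and it also needs the small topological lemma that $0$ is not an interior point of $\overline E$ (equivalently, that points outside $\overline E$ accumulate at $0$). An alternative route would apply the Hahn--Banach theorem to the Minkowski gauge of a translate of $E$, but the nearest-point construction keeps the argument self-contained and essentially constructive.
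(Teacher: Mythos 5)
Your proof is correct, but it takes a genuinely different route from the paper for the core existence step. The paper simply invokes Rockafellar's Theorem 11.3 to obtain a nonzero $h$ and $\alpha$ with $\inner{h}{x}\geq\alpha\geq\inner{h}{y}$ for all $x\in C$, $y\in D$, upgrades the second inequality to a strict one by perturbing a hypothetical maximizer $y'\in D$ inside the open set $D$, and then follows Rockafellar's Theorem 11.7 to show that the infimum of $\inner{h}{\cdot}$ over the cone $C$ equals $0$. You instead prove the existence of the separating functional from scratch: you pass to the open convex difference set $E=D-C$, which misses the origin, and produce a supporting functional at $0$ either by nearest-point projection (when $0\notin\overline{E}$) or, in the boundary case, by projecting a sequence of exterior points tending to $0$ and extracting a limit of the normalized error vectors on the unit sphere; strictness on $E$, and hence on $D$, then comes from openness, which is the same idea as the paper's perturbation of $y'$ (a nonzero linear functional maps open sets to open sets). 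Your cone refinement---scaling $\lambda x$ with $\lambda\to\infty$ and $\lambda\to 0^+$ to force $\inner{h}{x}\geq 0$ and $\alpha\leq 0$---is essentially the paper's Theorem 11.7 argument in more direct form. What your route buys is self-containedness: no appeal to an external reference, at the cost of being visibly tied to finite dimension through compactness of the unit sphere and of quietly relying on the standard (here unproved) fact that a nonempty open convex set is the interior of its closure; the paper's version is shorter precisely because it delegates that supporting-hyperplane work to Rockafellar.
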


\begin{proof}

It follows from Theorem 11.3 of Rockafellar \cite{Rockafellar70} that there exists a nonzero vector $h\in\Real^n$ and a real number $\alpha\in\Real$ with $\inner{h}{x} \geq \alpha \geq \inner{h}{y}$ for all $x\in C$ and $y\in D$.
Suppose toward a contradiction that there is a $y'\in D$ with $\inner{h}{y'}=\alpha$.
Choose any vector $b$ with $\inner{h}{b}>0$, so that $\inner{h}{y'+b}>\alpha$.
By rescaling $b$ and using the fact that $D$ is open, we may assume $y'+b\in D$, which contradicts the fact that $\inner{h}{y}\leq\alpha$ for all $y\in D$.

Next, suppose that $C$ is a cone and let $\alpha'$ denote the infimum of $\inner{h}{x}$ over all $x\in C$, so that \[ \inner{h}{x} \geq \alpha' \geq \alpha > \inner{h}{y} \]
for all $x\in C$ and $y\in D$.
Our proof that $\alpha'=0$ follows that of Theorem 11.7 of Rockafellar \cite{Rockafellar70}.
First, suppose toward a contradiction that $\alpha'<0$ and choose $x\in C$ with $\inner{h}{x}<0$.
Then $\inner{h}{\lambda x}$ can be made into an arbitrarily large negative number by an appropriately large choice of $\lambda$, contradicting the lower bound $\alpha$ on the infimum $\alpha'$.
Conversely, suppose toward a contradiction that $\alpha'>0$ and choose $\lambda$ small enough so that $\inner{h}{\lambda x}<\alpha'$, contradicting the definition of $\alpha'$.
\end{proof}

\subsubsection{Convexity and Hermitian operators}
%=============================================================================%

As noted in Section \ref{sec:intro:linalg}, the set $\her{\cX}$ of all Hermitian operators is a real vector space of dimension $\dim(\cX)^2$.
As such, the formalism of convexity translates without complication to Hermitian operators.
For example, the set $\pos{\cX}$ of all positive semidefinite operators within $\her{\cX}$ is a closed convex cone.
Moreover, Carath\'eodory's Theorem and the Separation Theorem have natural analogues in the context of $\her{\cX}$.
Indeed, it is worth restating Fact \ref{fact:separation} (Separation Theorem) in terms of Hermitian operators.

\begin{fact}[Separation Theorem, Hermitian operator version]
\label{fact:herm-sep}

Let $\bC,\bD\subset\her{\cX}$ be nonempty disjoint convex sets such that $\bD$ is open.
%If either set is bounded then 
There exists a Hermitian operator $H\in\her{\cX}$ and a real number $\alpha\in\Real$ with the property that
\begin{align*}
  \inner{H}{X} &\geq \alpha \textrm{ for all $X\in \bC$},\\
  \inner{H}{Y} &< \alpha \textrm{ for all $Y\in \bD$}.
\end{align*}
Moreover, if $\bC$ is a cone then we may take $\alpha=0$.

\end{fact}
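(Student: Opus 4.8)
The plan is to deduce this Hermitian-operator version directly from the already-established vector version, Fact~\ref{fact:separation}, by transporting every object through a real-linear isometry between $\her{\cX}$ and a Euclidean space $\Real^{N}$.

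First I would fix such an isometry. Writing $n=\dim(\cX)$, the space $\her{\cX}$ is a real vector space of dimension $N=n^2$, and on it the operator inner product restricts to the \emph{real-valued} form $\inner{A}{B}=\trace(AB)$; this is real because $\overline{\trace(AB)}=\trace\Pa{(AB)^*}=\trace(B^*A^*)=\trace(BA)=\trace(AB)$ for Hermitian $A,B$. An explicit orthonormal basis of $\her{\cX}$ with respect to this inner product is given by the $n$ diagonal operators $e_ie_i^*$ together with the operators $\tfrac{1}{\sqrt2}\Pa{e_ie_j^*+e_je_i^*}$ and $\tfrac{i}{\sqrt2}\Pa{e_ie_j^*-e_je_i^*}$ for $1\le i<j\le n$ (one checks directly that these $n+2\binom{n}{2}=n^2$ operators are Hermitian, mutually orthogonal, and of unit Frobenius norm). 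Let $\Theta:\her{\cX}\to\Real^{N}$ be the coordinate isomorphism associated with this basis; by construction $\Theta$ is a real-linear bijection with $\inner{\Theta(A)}{\Theta(B)}=\inner{A}{B}$ for all $A,B\in\her{\cX}$, and it is a homeomorphism.

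Next I would push the sets through $\Theta$. Since $\Theta$ is a linear homeomorphism, $\Theta(\bC)$ and $\Theta(\bD)$ are nonempty, disjoint, convex subsets of $\Real^{N}$, $\Theta(\bD)$ is open, and $\Theta(\bC)$ is a cone whenever $\bC$ is. Applying Fact~\ref{fact:separation} to $\Theta(\bC),\Theta(\bD)$ produces a vector $h\in\Real^{N}$ and a real number $\alpha$ with $\inner{h}{x}\ge\alpha$ for all $x\in\Theta(\bC)$ and $\inner{h}{y}<\alpha$ for all $y\in\Theta(\bD)$, and with $\alpha=0$ whenever $\Theta(\bC)$ is a cone. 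Finally I would set $H=\Theta^{-1}(h)\in\her{\cX}$; then for every $X\in\her{\cX}$ we have $\inner{H}{X}=\inner{\Theta(H)}{\Theta(X)}=\inner{h}{\Theta(X)}$, so the two displayed inequalities for $\bC$ and $\bD$ follow at once, together with $\alpha=0$ in the cone case, which proves the statement.

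There is no genuinely hard step here: all of the content resides in the vector version, which is assumed. The only point that requires care is verifying that the chosen isomorphism $\Theta$ truly intertwines the trace inner product on $\her{\cX}$ with the standard inner product on $\Real^{N}$ --- this is what guarantees that the separating functional pulls back to $\inner{H}{\,\cdot\,}$ for a genuine \emph{Hermitian} operator $H$ --- and that $\Theta$ preserves the convexity, openness, and cone properties invoked in the hypotheses. If one wishes to avoid exhibiting the explicit basis, one may instead simply quote the abstract fact that any two real inner product spaces of the same finite dimension are isometrically isomorphic.
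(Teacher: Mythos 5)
Your proof is correct, and it follows exactly the route the paper intends: the paper simply asserts that $\her{\cX}$ is a real vector space of dimension $\dim(\cX)^2$ so the Separation Theorem "translates without complication," and your explicit orthonormal basis and inner-product-preserving isomorphism $\Theta$ just fill in that translation. No gaps; the verification that $\Theta$ intertwines the trace inner product with the standard one and preserves convexity, openness, and the cone property is all that is needed.
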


%=============================================================================%
\subsection{Quantum information}
%=============================================================================%

\subsubsection{Quantum states}
%=============================================================================%

Associated with each $d$-level physical system is a $d$-dimensional complex Euclidean space $\cX$.
The \emph{quantum state} of such a system at some fixed point in time is described uniquely by a positive semidefinite operator $\rho\in\pos{\cX}$ with
%$\ptr{}{\rho}=1$.
trace equal to one.
Such an operator might also be called a \emph{density operator}.

A quantum state $\rho$ called \emph{pure} if $\rho$ has rank one---that is $\rho=uu^*$ for some unit vector $u\in\cX$.
It follows immediately from the Spectral Theorem that every quantum state may be written as a convex combination (or probabilistic ensemble) of orthogonal pure states.

Two distinct physical systems with associated complex Euclidean spaces $\cX$ and $\cY$ may be viewed as a single larger system with associated complex Euclidean space $\cX\ot\cY$.
If $\rho\in\pos{\cX\ot\cY}$ is the state of this larger system then the state $\sigma\in\pos{\cX}$ of the subsystem associated only with $\cX$ is given by the partial trace \[\sigma=\ptr{\cY}{\rho}.\]
Conversely, if the systems with associated spaces $\cX$ and $\cY$ are in states $\sigma$ and $\sigma'$ respectively then it must hold that the state $\rho\in\pos{\cX\ot\cY}$ of the larger system satisfies $\sigma=\ptr{\cY}{\rho}$ and $\sigma'=\ptr{\cX}{\rho}$.
Any state $\rho$ meeting these conditions is said to be \emph{consistent} with $\sigma$ and $\sigma'$.
If there is no correlation or entanglement between these two subsystems then $\rho$ is given by the tensor product \[\rho=\sigma\ot\sigma'.\]
Such a state is called a \emph{product state}.
For each $\sigma$ there are many states $\rho$ consistent with $\sigma$ other than product states.
For example, each state $\sigma\in\pos{\cX}$ may be \emph{purified}.
In other words, there always exists a complex Euclidean space $\cZ$ with $\dim{\cZ}=\rank(\sigma)$ and a pure state $u\in\cX\ot\cZ$ such that \[\sigma = \ptr{\cZ}{uu^*}.\]
The state $uu^*$ is called a \emph{purification} of $\sigma$.

\subsubsection{Quantum operations}
%=============================================================================%

A \emph{quantum operation} is a physically realizable discrete-time mapping (at least in an ideal sense) that takes as input a quantum state $\rho\in\pos{\cX}$ of some system $\cX$ and produces as output a quantum state $\sigma\in\pos{\cY}$ of some system $\cY$.
Every quantum operation is described uniquely by a completely positive and trace-preserving super-operator $\Psi:\lin{\cX}\to\lin{\cY}$, so that, for each input state $\rho$, the corresponding output state $\sigma$ is given by \[\sigma = \Psi(\rho).\]
As noted in Section \ref{sec:intro:linalg}, each quantum operation $\Psi$ has a Stinespring representation whereby there exists a complex Euclidean space $\cZ$ with $\dim(\cZ)=\rank(\jam{\Psi})$ and an isometry $A:\cX\to\cY\ot\cZ$ such that \[\Phi:X\mapsto\ptr{\cZ}{AXA^*}. \]
Naturally, two quantum operations $\Phi_b:\lin{\cX_b}\to\lin{\cY_b}$ for $b=1,2$ acting upon distinct physical systems may be viewed as one large quantum operation acting upon one large system via the tensor product:
\[ \Phi_1\ot\Phi_2:\lin{\cX_1\ot\cX_2}\to\lin{\cY_1\ot\cY_2}. \]
Using our shorthand notation, this tensor product can alternately be written
\[ \kprod{\Phi}{1}{2}:\lin{\kprod{\cX}{1}{2}}\to\lin{\kprod{\cY}{1}{2}}.\]

\subsubsection{Quantum measurements}
%=============================================================================%

Many applications of quantum information necessitate the ability to extract classical information from a quantum state of some physical system $\cX$.
This extraction is accomplished via \emph{quantum measurement}.
Each measurement is uniquely specified by a finite set $\Gamma$ of \emph{measurement outcomes} and a finite set $\set{P_a}_{a\in\Gamma}\subset\pos{\cX}$ of positive semidefinite \emph{measurement operators} obeying the condition
\[ \sum_{a\in\Gamma} P_a = I_\cX. \]
(These measurement operators are sometimes called \emph{POVM elements} for historical reasons---we avoid that terminology in this thesis.)
For any state $\rho$ of the system $\cX$, the probability with which the measurement $\set{P_a}$ yields a particular outcome $a\in\Gamma$ is given by the inner product
\[ \Pr[\textrm{$\set{P_a}$ yields outcome $a$ on $\rho$}] = \inner{P_a}{\rho}. \]

It is convenient to adopt the convention that the physical system associated with $\cX$ is destroyed when the measurement is applied.
There is a simple and common modification of the formalism of quantum measurements that admits non-destructive measurements that do not destroy the system upon which they act, but we do not require such a formalism in this thesis.

A measurement is called \emph{projective} if each of its measurement operators is a projection.
An arbitrary measurement $\set{P_a}_{a\in\Gamma}\subset\pos{\cX}$ may be ``simulated'' by a projective measurement in the sense that there always exists a complex Euclidean space $\cZ$, an isometry $A:\cX\to\cX\ot\cZ$ and a projective measurement $\set{\Pi_a}_{a\in\Gamma}\subset\pos{\cX\ot\cZ}$ with the property that \[ \inner{P_a}{\rho} = \inner{\Pi_a}{A\rho A^*} = \inner{A^*\Pi_a A}{\rho} \] for every quantum state $\rho\in\pos{\cX}$.

\part{Quantum Strategies} \label{part:strategies}

\chapter{Introduction to Quantum Strategies}
\label{ch:strategies}
%======================================================================

In this chapter we develop two distinct formalisms for quantum strategies.
We begin with a naive operational formalism in Section \ref{sec:naive}, and we provide a new formalism in Section \ref{sec:strategy}.

%======================================================================
\section{Operational formalism} \label{sec:naive}
%======================================================================

In this section we develop an intuitive, operational formalism for quantum strategies and discuss some properties and problems associated with that formalism.

%======================================================================
\subsection{Formal definitions}
%======================================================================

%In this thesis, every multi-party interaction has a \emph{protocol}, which describes the sequence of messages to be exchanged among the different parties in an interaction.
%For each message in this sequence, the protocol must specify the size of that message, its sender, and its recipient.

%Often, it is desirable for one or more parties to extract classical information from a quantum interaction by performing one or more quantum measurements throughout the interaction.
%Exactly which measurements are performed, if any, is a choice left to the parties in the interaction, to be specified by the strategies employed by those parties.
%In other words, a protocol specifies messages---not measurements.

At a high level, a \emph{strategy} is a complete description of one party's actions in a multiple-round interaction involving the exchange of quantum information with one or more other parties.
For convenience, let us call this party \emph{Alice}.
As we are only concerned for the moment with Alice's actions during the interaction, it is convenient to bundle the remaining parties into one party, whom we call \emph{Bob}.

From Alice's point of view, every finite interaction decomposes naturally into a finite number $r$ of \emph{rounds}.
In a typical round a message comes in, the message is processed, and a reply is sent out.
Naturally, this reply might depend upon messages exchanged during previous rounds of the interaction.
To account for such a dependence, we allow for a memory workspace to be maintained between rounds.

In order to facilitate discussion of the distinct strategies available to Alice for a given interaction, it is convenient to adopt the convention that the interaction specifies the number $r$ of rounds and the size of each message in the interaction.
In particular, individual strategies for Alice are \emph{not} free to specify messages whose number or size deviates from those dictated by the interaction.
By contrast, an interaction does \emph{not} dictate the size of the memory workspace to be used by each party.
In particular, different strategies for Alice may call for different amounts of memory between rounds, and there is \emph{no limit} on the amount of memory workspace a strategy may use.
(However, we shall see in Chapter \ref{ch:properties} that every possible strategy can be implemented with a reasonable fixed amount of memory workspace.)

The complex Euclidean spaces corresponding to the incoming and outgoing messages in an arbitrary round $i$ shall be denoted $\cX_i$ and $\cY_i$, respectively.
The space corresponding to the memory workspace to be stored for the next round shall be denoted $\cZ_i$.
As the size of each message is fixed by the interaction, the dimension of the spaces $\cX_1,\dots,\cX_r$ and $\cY_1,\dots,\cY_r$ are also fixed by the interaction.
Conversely, the dimension of the spaces $\cZ_1,\dots,\cZ_r$ may be arbitrarily large and these dimensions may vary among different strategies for the same interaction.

%Thus, any conceivable action Alice might take during a given round is faithfully represented by a quantum operation that takes as input an incoming message and the memory workspace from the previous round and provides as output the outgoing reply and the memory workspace for the next round.
Thus, in a typical round $i$ of the quantum interaction, Alice's actions are faithfully represented by a quantum operation
\[ \Phi_i : \lin{\cX_i\ot\cZ_{i-1}}\to\lin{\cY_i\ot\cZ_i}. \]
The first round of the interaction is a special case: there is no need for an incoming memory space for this round, so the quantum operation $\Phi_1$ has the form
\[ \Phi_1 : \lin{\cX_1}\to\lin{\cY_1\ot\cZ_1}. \]
The final round of the interaction is also a special case: there is no immediate need for an outgoing memory space for this round.
However, the presence of this final memory space better facilitates the forthcoming discussion of strategies involving measurements.
Thus, the quantum operation $\Phi_r$ representing Alice's actions in the final round of the interaction has the same form as those from previous rounds:
\[ \Phi_r : \lin{\cX_r\ot\cZ_{r-1}}\to\lin{\cY_r\ot\cZ_r}. \]

%At first glance, Alice's final message $\cY_r$ might seem superfluous as it has no effect on her output state $\cZ_r$.
%But in any finite interaction, \emph{some} party must send the final message and thus \emph{some} party's output state must be independent of her final message.
%For convenience, we decided that Alice shall send the final message in the interactions we consider.

In order to extract classical information from the interaction, it suffices to permit Alice to perform a single quantum measurement on her final memory workspace.
Sufficiency of a single measurement at the end of the interaction follows from foundational results on mixed state quantum computations \cite{AharonovK+98}, which tell us that any quantum operation calling for one or more intermediate measurements can be efficiently simulated by an operation with a single measurement at the end.

Discussion thus far in this section is aptly summarized by the following formal definition.
\def\defnaivestrategy{Operational definition of a strategy}
\begin{definition}[\defnaivestrategy]
\label{def:naive-strategy}

  Let $\cX_1,\ldots,\cX_r$ and $\cY_1,\ldots,\cY_r$ be complex Euclidean spaces.
  An \emph{operational $r$-round non-measuring strategy} for an interaction with \emph{input spaces} $\cX_1,\ldots,\cX_r$ and \emph{output spaces} $\cY_1,\ldots,\cY_r$ consists of:
  \begin{enumerate}

  \item[1.]
    complex Euclidean spaces $\cZ_1,\ldots,\cZ_r$, called \emph{memory spaces}, and

  \item[2.]
    an $r$-tuple of quantum operations $(\Phi_1,\ldots,\Phi_r)$ of the form
    \begin{align*}
      \Phi_1 &: \lin{\cX_1}\to\lin{\cY_1\ot\cZ_1}\\
      \Phi_i &: \lin{\cX_i\ot\cZ_{i-1}}\to\lin{\cY_i\ot\cZ_i} \quad (2 \leq i \leq r).
    \end{align*}

  \end{enumerate}
  An \emph{operational $r$-round measuring strategy} with outcomes indexed by $a$ consists of items 1 and 2 above, as well as:
  \begin{enumerate}

  \item[3.]
    a measurement $\set{P_a}$ on the last memory space $\cZ_r$.

  \end{enumerate}
  Figure~\ref{fig:strategy} illustrates an $r$-round non-measuring strategy.
\end{definition}

\begin{figure}[t]
  \begin{center}
\includegraphics{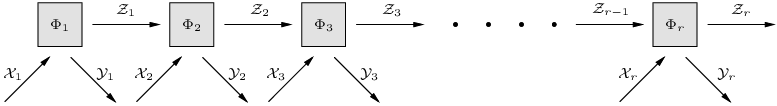}
  \end{center}
  \caption{An $r$-round strategy.}
  \label{fig:strategy}
\end{figure}

Having formalized the notion of a strategy for Alice, it is straightforward to formalize the notion of a compatible strategy for Bob.
Essentially, Bob must provide to Alice incoming messages with corresponding spaces $\cX_1,\dots,\cX_r$ and accept her outgoing replies with corresponding spaces $\cY_1,\dots,\cY_r$.

\def\defnaivecostrategy{Operational definition of a co-strategy}
\begin{definition}[\defnaivecostrategy]
\label{def:naive-co-strategy}

  Let $\cX_1,\dots,\cX_r$ and $\cY_1,\dots,\cY_r$ be complex Euclidean spaces.
  An \emph{operational $r$-round non-measuring co-strategy} for an interaction with \emph{input spaces} $\cX_1,\ldots,\cX_r$ and \emph{output spaces} $\cY_1,\ldots,\cY_r$ consists of:
  \begin{enumerate}

  \item[1.]
    complex Euclidean {\it memory} spaces $\cW_0,\ldots,\cW_r$,

  \item[2.]
    a quantum state $\rho_0 \in \pos{\cX_1\otimes\cW_0}$, and

  \item[3.]
    a $r$-tuple of quantum operations $(\Psi_1,\ldots,\Psi_r)$ of the form
    \begin{align*}
      \Psi_i & : \lin{\cY_i\ot\cW_{i-1}}\to\lin{\cX_{i+1}\ot\cW_i} \quad (1\leq i\leq r-1)\\
      \Psi_r & : \lin{\cY_r\ot\cW_{r-1}}\to\lin{\cW_r}.
    \end{align*}

  \end{enumerate}
  An \emph{operational $r$-round measuring co-strategy} with outcomes indexed by $b$ consists of items 1, 2 and 3 above, as well as:
  \begin{enumerate}

  \item[4.]
    a measurement $\set{Q_b}$ on the last memory space $\cW_r$.

  \end{enumerate}
  Figure~\ref{fig:interaction} depicts the interaction between a $r$-round strategy and co-strategy.
\end{definition}

\begin{figure}[t]
  \begin{center}
\includegraphics{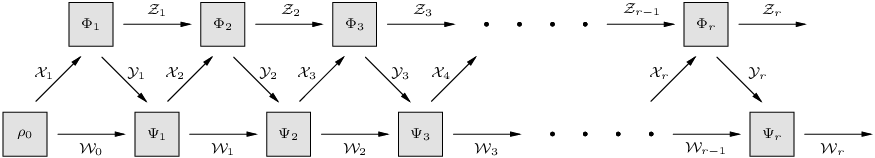}
  \end{center}
  \caption{An interaction between a $r$-round strategy and co-strategy.}
  \label{fig:interaction}
\end{figure}

%======================================================================
\subsection{Immediate observations}
\label{subsec:naive-observations}
%======================================================================

\subsubsection{Generality of the definition}
%======================================================================

Our definition of a strategy allows for trivial input or output spaces with dimension one---such spaces correspond to empty messages.
Hence, simple actions such as the preparation of a quantum state or performing a measurement without producing a quantum output can be viewed as special cases of strategies.

While we require that Alice receive the first message and send the last message, any interaction that deviates from this format can easily be recast to fit this mold at the possible expense of one additional round of messages, some of which might be empty.

Indeed, our definition is flexible enough that any $r$-round co-strategy may equivalently be viewed as a $(r+1)$-round strategy with input spaces $\mathbb{C},\cY_1,\dots,\cY_r$ and output spaces $\cX_1,\dots,\cX_r,\mathbb{C}$.
It is only for later convenience that we have chosen to define strategies and co-strategies as distinct objects.

\subsubsection{Restriction to finite interactions}
%======================================================================

In this thesis we restrict our attention to interactions in which the size and number of messages exchanged is fixed.
As we shall see in Section \ref{sec:strategy}, this restriction permits us the luxury of representing every conceivable strategy for a given interaction by a single positive semidefinite operator of fixed finite dimension.
Many of the results proven in this thesis rely crucially upon the finiteness of these representations of strategies.

At first, this restriction might seem overly constraining.
For example, one might reasonably wish to study strategies for, say, coin-flipping protocols wherein two parties continue exchanging messages until they reach an agreement.
With each round of messages, this agreement might be reached only probabilistically according to some quantum measurement.
Such an interaction is infinite in the sense that it admits strategies with the property that, for every number $r$ of rounds, there is a nonzero probability that it will not terminate in $r$ rounds or fewer.
(Indeed, this scenario also admits silly strategies in which the interaction \emph{never} terminates.)
To forbid interactions such as this might seem careless, as our study aspires to absolute generality for quantum strategies.

While it would certainly be desirable to include in our formalism potentially infinite interactions, %such as the coin flipping protocol just described,
the generality that we sacrifice in the name of finiteness is less of a liability than it might at first seem.
Many ``infinite'' interactions of any interest---such as the previous coin-flipping example---will terminate after a finite number of messages with probability 1.
Interactions such as this can be approximated to arbitrary precision with only a finite number of messages simply by truncating the interaction appropriately.

%In other cases, it might be possible to replace an infinite interaction with an equivalent finite interaction.
%For example, a game of checkers could concievably get into an infinite loop.
%Rather than allowing for infinite strategies that dumbly play out this loop, we might look for alternatives. 
%It might be possible, for instance, to try to recognize in advance those board positions that lead to infinite loops and truncate the game once the board reaches such a state.
%%Rather than allowing for infinite strategies that dumbly play out this loop, we could instead try to recognize in advance those board positions that lead to infinite loops and truncate the game once the board reaches such a state.

Admittedly, the precise extent to which generality is sacrificed by our restriction to finite interactions remains to be seen.
But it cannot be denied that the set of finite interactions encompasses a very wide swath of interesting quantum interactions.
In the real world, all interactions are finite.

\subsubsection{Restriction to isometric quantum operations}
%======================================================================

When convenient, we shall assume without loss of generality that each of Alice's quantum operations $\Phi_i$ are actually linear isometries, meaning that \[ \Phi_i : X \mapsto A_iXA_i^* \] for some linear isometry $A_i$. % : \cX_i\ot\cZ_{i-1}\to\cY_i\ot\cZ_i$.
Similarly, we may assume that Bob's initial state $\rho_0$ is actually a pure state, and that each of Bob's quantum operations $\Psi_i$ have the form \[ \Psi_i : X \mapsto B_iXB_i^* \] for some linear isometry $B_i$.
Moreover, if Alice's or Bob's strategy is a measuring strategy then we may also assume that the measurements $\set{P_a}$ for Alice and $\set{Q_b}$ for Bob are actually projective measurements, meaning that each $P_a$ and $Q_b$ is actually a projection operator.

These assumptions follow from the Stinespring representation for quantum operations mentioned in Section \ref{sec:intro:linalg}.
Specifically, an arbitrary quantum operation $\Phi : \lin{\cX}\to\lin{\cY}$ may be written $\Phi : X \mapsto \ptr{\cH}{UXU^*}$ for some auxiliary space $\cH$ and some isometry $U : \cX\to\cY\ot\cH$.
%These assumptions are justified by the fundamental fact that an arbitrary quantum operation $\Phi : \lin{\cX}\to\lin{\cY}$ may be written $\Phi : X \mapsto \ptr{\cH}{UXU^*}$ for some auxiliary space $\cH$ and some isometry $U : \cX\to\cY\ot\cH$.
As the memory spaces for strategies may be arbitrary, the auxiliary space $\cH$ for a given round $i$ may be ``absorbed'' into the private memory space $\cZ_i$ of Alice's quantum operation $\Phi_i$, thus removing the need for the partial trace over $\cH$ and leaving us with $\Phi_i : X \mapsto A_iXA_i^*$ as desired.

The justification for projective measurements is similar:
Naimark's Theorem implies that any measurement on the final memory space $\cZ_r$ may be simulated by a projective measurement on $\cZ_r\ot\cH$ for some auxiliary space $\cH$.
As before, $\cH$ may be ``absorbed'' into the definition of $\cZ_r$, leaving us with only a projective measurement on this space.

\subsubsection{Undesirable properties of the operational formalism}
%======================================================================

Definition \ref{def:naive-strategy} (\defnaivestrategy) is very natural
%from an operational perspective
in the sense that it is clear that any conceivable actions taken by Alice during an interaction can be represented by a strategy of that form.
Unfortunately, this definition has several undesirable mathematical properties that make this formalization cumbersome to use.

Picking an example arbitrarily, it is easy to see that the set of all strategies for a given interaction lacks a convenient distributive property for probabilistic combinations of strategies.
In particular, suppose Alice plays according to $\Phi=(\Phi_1,\dots,\Phi_r)$ with probability $p$, otherwise she plays according to $\Phi'=(\Phi_1',\dots,\Phi_r')$.
What is the $r$-tuple of quantum operations that describes this probabilistic combination of strategies?
In an ideal world, this $r$-tuple would be given by the simple convex combination
%\[ p(\Phi_1,\dots,\Phi_r) + (1-p)(\Phi_1',\dots,\Phi_r') \]
\[ p\Phi + (1-p)\Phi' \]
where the $i$th component of the resulting $r$-tuple is given in the usual way by
\[ p\Phi_i + (1-p)\Phi_i'. \]
Alas, this identity does not always hold.
Indeed, unless $(\Phi_1,\dots,\Phi_r)$ and $(\Phi_1',\dots,\Phi_r')$ happen to agree on the choice of memory spaces $\cZ_1,\dots,\cZ_r$, the above convex combination is not even well-defined!
Of course, this issue is easily overcome by ``padding'' the smaller memory spaces so that their dimensions agree with those of the larger spaces.
But this frivolous observation ignores the larger problem.

Indeed, it is easy to exhibit a pair of strategies for which the above identity does not hold, even when the two strategies agree on the dimension of their memory spaces.
The simple example we present here is a toy classical strategy expressed in the quantum formalism.

\begin{example}[Lack of distributive property]
\label{ex:naive-distributive}

  Consider a two-round interaction wherein Alice's incoming messages are empty.
  Letting $\rho_\alpha,\rho_\beta$ denote two distinct quantum states, the 2-tuples $(\rho_\alpha,\rho_\alpha)$ and $(\rho_\beta,\rho_\beta)$ each specify a silly strategy for Alice that uses no memory space and returns the same state in both rounds.
  Suppose Alice plays one of $(\rho_\alpha,\rho_\alpha)$ and $(\rho_\beta,\rho_\beta)$ uniformly at random.
  At the end of this interaction, Bob has received from Alice a combined state
  \[ \frac{1}{2} \rho_\alpha\ot\rho_\alpha + \frac{1}{2} \rho_\beta\ot\rho_\beta. \]
  But the 2-tuple resulting from the naive computation
  \[
    \frac{1}{2}(\rho_\alpha,\rho_\alpha) + \frac{1}{2}(\rho_\beta,\rho_\beta) =
    %\Pa{\frac{1}{2}\rho_\alpha + \frac{1}{2}\rho_\beta, \frac{1}{2}\rho_\alpha + \frac{1}{2}\rho_\beta}
    \Pa{\frac{\rho_\alpha+\rho_\beta}{2},\frac{\rho_\alpha+\rho_\beta}{2}}
  \]
  results in strategy for Alice in which Bob is instead left with the state
  \[
    \frac{1}{4} \rho_\alpha\ot\rho_\alpha + \frac{1}{4} \rho_\beta\ot\rho_\beta +
    \frac{1}{4} \rho_\alpha\ot\rho_\beta + \frac{1}{4} \rho_\beta\ot\rho_\alpha
  \]
  at the end of the interaction.
  As these final states for Bob are not equal, it follows that the naive convex combination $1/2(\rho_\alpha,\rho_\alpha) + 1/2(\rho_\beta,\rho_\beta)$ does not denote the desired probabilistic combination of the strategies described by $(\rho_\alpha,\rho_\alpha)$ and $(\rho_\beta,\rho_\beta)$.

\end{example}

This lack of a convenient distributive property suggests a cumbersome nonlinear dependence of probabilistic combinations $p\Phi + (1-p)\Phi'$ of strategies upon their constituent strategies $\Phi,\Phi'$.
Other examples of undesirable properties include:
\begin{description}

\item[Strategies are not unique.]
  There exist pairs of strategies
  %$(\Phi_1,\dots,\Phi_r)$ and $(\Phi_1',\dots,\Phi_r')$
  that differ in each component, yet they specify the same actions for an interaction in the sense that no interacting co-strategy could possibly distinguish the two.
  
  This lack of uniqueness suggests that our operational formalism is carrying around some unnecessary information, which typically leads to unnecessary complication.

\item[Outcomes depend nonlinearly on strategies.]
  For a multi-round interaction between Alice and Bob, the
  probability of obtaining a given measurement outcome
  %state of their final memory workspaces
  at the end of the interaction depends multilinearly upon \emph{each component} $\Phi_i$ of Alice's strategy and $\Psi_i$ of Bob's strategy.
  But the dependence of this probability as a function of the two $r$-tuples $(\Phi_1,\dots,\Phi_r)$ and $(\Psi_1,\dots,\Psi_r)$ is highly nonlinear.
  
  As a consequence, existing algorithms for standard linear and semidefinite optimization problems cannot be employed to efficiently compute some important properties of strategies, such as the maximum probability with which a given strategy can be forced to yield a given measurement outcome.
  
\end{description}

%======================================================================
\section{New formalism}
\label{sec:strategy}
%======================================================================

In the previous section we saw that the operational representation for quantum strategies has several drawbacks.
%It would be nice if our representation of quantum strategies \emph{did} satisfy the convenient distributive property for probabilistic combinations of strategies.
%Moreover, it is also desirable that our representation lend itself to tractable optimization problems for quantum strategies.
In this section, we develop a new representation for quantum strategies that rectifies these problems.

%======================================================================
\subsection{Formal definitions}
%======================================================================

We begin with several new definitions for non-measuring and measuring quantum strategies and co-strategies.

\def\defstrategy{New formalism for non-measuring strategies}
\begin{definition}[\defstrategy]
\label{def:strategy}

Let $(\Phi_1,\dots,\Phi_r)$ be an operational representation of a non-measuring strategy for input spaces $\cX_1,\dots,\cX_r$ and output spaces $\cY_1,\dots,\cY_r$.
Our new representation for quantum strategies associates with every such operational strategy a single positive semidefinite operator
\[ Q \in \pos{\kprod{\cY}{1}{r}\ot\kprod{\cX}{1}{r}} \]
via the following construction.
Let \[ \Xi : \lin{\kprod{\cX}{1}{r}} \to \lin{\kprod{\cY}{1}{r}} \] denote the super-operator composed of $\Phi_1,\dots,\Phi_r$ as suggested by Figure \ref{fig:choijam}.
\begin{figure}[t]
  \begin{center}
\includegraphics{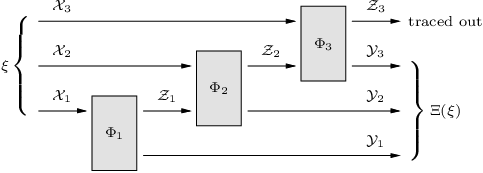}
  \end{center}
  \caption{The super-operator $\Xi$ associated with a three-round strategy.}
  \label{fig:choijam}
\end{figure}
Specifically, $\Xi$ is the $r$-fold composition of $\Phi_1,\dots,\Phi_r$ on the memory spaces $\cZ_1,\dots,\cZ_{r-1}$ followed by a partial trace on $\cZ_r$.
With some abuse of notation, this composition may be expressed succinctly as
\[ \Xi = \trace_{\cZ_r} \circ\: \Phi_r \circ \cdots \circ \Phi_1. \]
(Here a tensor product with the identity super-operator $\identity$ on the appropriate spaces is implicitly inserted where necessary in order for this composition to make sense.)

The \emph{$r$-round non-measuring strategy} is given by $Q\defeq\jam{\Xi}$.
An operator $Q$ is a valid representation of an $r$-round non-measuring strategy if and only if it has this form.
\end{definition}

\def\defmstrategy{New formalism for measuring strategies}
\begin{definition}[\defmstrategy]
\label{def:m-strategy}

With each operational representation $(\Phi_1,\dots,\Phi_r, \set{P_a})$ of an $r$-round measuring strategy  for input spaces $\cX_1,\dots,\cX_r$ and output spaces $\cY_1,\dots,\cY_r$ we associate a finite set 
\[ \set{Q_a} \subset \pos{\kprod{\cY}{1}{r}\ot\kprod{\cX}{1}{r}} \]
of positive semidefinite operators
%---one operator $Q_a$ for each measurement outcome $a\in\Sigma$---
via the following construction.
For each measurement outcome $a$ let
\[ \Xi_a = \Gamma_a \circ\: \Phi_r \circ \cdots \circ \Phi_1 \]
where the super-operator $\Gamma_a$ is given by
\[ \Gamma_a : X \mapsto \Ptr{\cZ_r}{\Pa{P_a\ot I_{\kprod{\cY}{1}{r}}}X}. \]
(Compare: for non-measuring strategies we defined $\Xi$ via the partial trace $\trace_{\cZ_r}$.
For measuring strategies we define $\Xi_a$ via $\Gamma_a$ instead of the partial trace.)

The \emph{$r$-round measuring strategy} $\set{Q_a}$ is given by $Q_a\defeq\jam{\Xi_a}$ for each $a$.
A set $\set{Q_a}$ of operators is a valid representation of an $r$-round measuring strategy if and only if it has this form.
\end{definition}

\def\defcostrategy{New formalism for co-strategies}
\begin{definition}[\defcostrategy]
\label{def:co-strategy}

Non-measuring co-strategies are defined in the same manner as non-measuring strategies, except the operator $Q$ is given by $Q\defeq\jam{\Xi^*}$ instead of $\jam{\Xi}$.
%(Here the super-operator $\Xi^*$ denotes the adjoint of $\Xi$.)
%
Similarly, measuring co-strategies are defined in the same manner as measuring strategies, except the operators $\set{Q_a}$ are given by $Q_a\defeq\jam{\Xi_a^*}$ instead of $\jam{\Xi_a}$ for each $a$.

Let us clarify this new representation for co-strategies.
Let $(\rho_0,\Psi_1,\dots,\Psi_r)$ denote an operational representation of an $r$-round non-measuring co-strategy for input spaces $\cX_1,\dots,\cX_r$ and output spaces $\cY_1,\dots,\cY_r$, so that
\begin{align*}
\rho_0 & \in \pos{\cX_1\ot\cW_0}\\
\Psi_i & : \lin{\cY_i\ot\cW_{i-1}}\to\lin{\cX_{i+1}\ot\cW_i} \quad (1\leq i\leq r-1)\\
\Psi_r & : \lin{\cY_r\ot\cW_{r-1}}\to\lin{\cW_r}.
\end{align*}
The induced super-operator
\( \Xi : \lin{\kprod{\cY}{1}{r}} \to \lin{\kprod{\cX}{1}{r}} \)
is depicted in Figure \ref{fig:co-trace-out} for the case $r=2$.
\begin{figure}[t]
  \begin{center}
\includegraphics{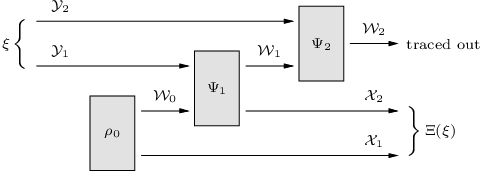}
  \end{center}
  \caption{The super-operator $\Xi$ associated with a two-round co-strategy.}
  \label{fig:co-trace-out}
\end{figure}
The adjoint super-operator has the form
\( \Xi^* : \lin{\kprod{\cX}{1}{r}} \to \lin{\kprod{\cY}{1}{r}}. \)
In particular, we have
\begin{align*}
\jam{\Xi} &\in \pos{\kprod{\cX}{1}{r}\ot\kprod{\cY}{1}{r}}, \\
\jam{\Xi^*} &\in \pos{\kprod{\cY}{1}{r}\ot\kprod{\cX}{1}{r}}.
\end{align*}
By taking $Q=\jam{\Xi^*}$, we ensure that strategies and co-strategies for the same input and output spaces lie within the same space $\her{\kprod{\cY}{1}{r}\ot\kprod{\cX}{1}{r}}$ of Hermitian operators---with $\kprod{\cY}{1}{r}$ occurring \emph{before} $\kprod{\cX}{1}{r}$ in the tensor product.

Incidentally, it is instructive to note that $\jam{\Xi}$ is a valid representation of an $(r+1)$-round \emph{strategy} for input spaces $\mathbb{C},\cY_1,\dots,\cY_r$ and output spaces $\cX_1,\dots,\cX_r,\mathbb{C}$ according to Definition \ref{def:strategy}.

Conversely, if $\Lambda:\lin{\kprod{\cX}{1}{r}}\to\lin{\kprod{\cY}{1}{r}}$ is a super-operator for which $\jam{\Lambda}$ is an $r$-round strategy for input spaces $\cX_1,\dots,\cX_r$ and output spaces $\cY_1,\dots,\cY_r$ then $\jam{\Lambda^*}$ denotes a valid $(r+1)$-round co-strategy for input spaces $\mathbb{C},\cY_1,\dots,\cY_r$ and output spaces $\cX_1,\dots,\cX_r,\mathbb{C}$.
\end{definition}

%======================================================================
\subsection{Immediate observations}
%======================================================================

Some basic properties of this new representation for quantum strategies may be pointed out immediately.
%Indeed, we shall require these observations in Chapter \ref{ch:properties}, wherein we establish some sophisticated properties of strategies.
However, a more comprehensive discussion of basic properties of strategies must wait until Section \ref{sec:other-properties}, at which point the results of Chapter \ref{ch:properties} are employed to establish with ease some additional basic properties.

\subsubsection{Zero-round strategies}
%======================================================================

It is convenient to adopt the convention that an $r$-round non-measuring strategy or co-strategy $Q$ denotes the scalar 1 when $r=0$.
Similarly, if $\set{Q_a}$ is a zero-round measuring strategy or co-strategy then $\set{Q_a}$ denotes a finite set of nonnegative real numbers that sum to one.

\subsubsection{Representation in terms of isometries}
%======================================================================

Let $(\Phi_1,\dots,\Phi_r)$ be an operational description of an $r$-round strategy.
As discussed in Section \ref{subsec:naive-observations}, we may assume without loss of generality that each of the quantum operations $\Phi_i$ is actually an isometric operation, so that $\Phi_i : X \mapsto A_i X A_i^*$ for some isometry $A_i$.
In this case, the super-operator $\Xi$ in Definition \ref{def:strategy} (\defstrategy) has the form
\[ \Xi : X \mapsto \ptr{\cZ_r}{AXA^*} \]
where $A : \kprod{\cX}{1}{r} \to \kprod{\cY}{1}{r}\ot\cZ_r$ is an isometry defined as the composition of $A_1,\dots,A_r$ on the memory spaces $\cZ_1,\dots,\cZ_{r-1}$.
With some abuse of notation, this composition may be expressed succinctly as
\[ A = A_r A_{r-1} \cdots A_2 A_1 \]
where a tensor product with the identity operator $I$ on the appropriate spaces is inserted where necessary in order for this product to make sense.
%Making these extra tensor operatoions explicit, it is more accurate (but also more cumbersome) to denote this operator by
%\[
%  A = \Pa{I_{\kprod{\cY}{1}{r-1}}\ot A_r}
%  \Pa{ I_{\kprod{\cY}{1}{r-2}}\ot A_{r-1}\ot I_{\cX_r} } \cdots
%  \Pa{ I_{\cY_1}\ot A_2\ot I_{\kprod{\cX}{3}{r}} }
%  \Pa{ A_1\ot I_{\kprod{\cX}{2}{r}} }.
%\]

Under the new formalism for strategies, the operator $Q$ representing $(\Phi_1,\dots,\Phi_r)$ has the special form
\[ Q = \jam{\Xi} = \ptr{\cZ_r}{\col{A}\row{A}}. \]
For a measuring strategy with measurement $\set{P_a}$, each element of $\set{Q_a}$ may be written
\[ Q_a = \Ptr{\cZ_r}{\Pa{P_a\ot I_{\kprod{\cY}{1}{r}\ot\kprod{\cX}{1}{r}}}\col{A}\row{A}}. \]

If $Q$ is a co-strategy, rather than a strategy, then by Definition \ref{def:co-strategy} (\defcostrategy) we have
\[ Q = \jam{\Xi^*} = \ptr{\cZ_r}{\col{A^*}\row{A^*}} \]
for non-measuring co-strategies and
\[ Q_a = \Ptr{\cZ_r}{\Pa{P_a\ot I_{\kprod{\cY}{1}{r}\ot\kprod{\cX}{1}{r}}}\col{A^*}\row{A^*}} \]
for measuring co-strategies.

\subsubsection{Measuring strategy elements sum to a non-measuring strategy}
%======================================================================

We conclude this chapter with an exercise that should help to wet our feet before diving into Chapter \ref{ch:properties}.

\begin{proposition}
\label{prop:measure-sum}

A set $\set{Q_a}$
%\subset \pos{\kprod{\cY}{1}{r}\ot\kprod{\cX}{1}{r}}$
of positive semidefinite operators
is a measuring strategy or co-strategy if and only if $\sum_a Q_a$ is a non-measuring strategy or co-strategy, respectively.

\end{proposition}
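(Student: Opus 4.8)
The plan is to prove the two implications separately, reducing the substantive direction to a standard fact about purifications.

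\emph{Measuring $\Rightarrow$ non-measuring.} Suppose $\set{Q_a}$ is a measuring strategy with operational representation $(\Phi_1,\dots,\Phi_r,\set{P_a})$, so that $Q_a=\jam{\Xi_a}$ with $\Xi_a=\Gamma_a\circ\Phi_r\circ\cdots\circ\Phi_1$ and $\Gamma_a:X\mapsto\Ptr{\cZ_r}{\Pa{P_a\ot I_{\kprod{\cY}{1}{r}}}X}$ as in Definition~\ref{def:m-strategy} (\defmstrategy). Since $\sum_a P_a=I_{\cZ_r}$ we have $\sum_a\Gamma_a=\trace_{\cZ_r}$, hence $\sum_a\Xi_a=\trace_{\cZ_r}\circ\,\Phi_r\circ\cdots\circ\Phi_1=\Xi$, which is precisely the super-operator attached by Definition~\ref{def:strategy} (\defstrategy) to the non-measuring strategy $(\Phi_1,\dots,\Phi_r)$. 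By linearity of the Choi--Jamio\l kowski isomorphism, $\sum_a Q_a=\jam{\sum_a\Xi_a}=\jam{\Xi}$ is a valid non-measuring strategy. The co-strategy case is identical, using $\Xi_a^*$ in place of $\Xi_a$ and $\sum_a\Xi_a^*=\Pa{\sum_a\Xi_a}^*=\Xi^*$.

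\emph{Non-measuring $\Rightarrow$ measuring.} Suppose $\set{Q_a}$ consists of positive semidefinite operators whose sum $Q$ is a non-measuring strategy. Fix an operational representation of $Q$ which, by the discussion in Section~\ref{subsec:naive-observations}, may be taken isometric, so that $Q=\ptr{\cZ_r}{\col{A}\row{A}}$ for an isometry $A:\kprod{\cX}{1}{r}\to\kprod{\cY}{1}{r}\ot\cZ_r$ composed of $A_1,\dots,A_r$. The reduction rests on the following lemma: \emph{whenever $vv^*$ is a pure operator on $\cH\ot\cK$ with $\ptr{\cK}{vv^*}=\sum_a Q_a$ for positive semidefinite $Q_a$, there is a measurement $\set{\tilde P_a}$ on $\cK$ with $\ptr{\cK}{\Pa{I_\cH\ot\tilde P_a}vv^*}=Q_a$ for every $a$.} Applying this with $v=\col{A}$, $\cK=\cZ_r$, and $\cH=\kprod{\cY}{1}{r}\ot\kprod{\cX}{1}{r}$ (the factor $\cZ_r$ sitting in the middle of the product, which is immaterial for the partial trace over it) produces a measurement $\set{\tilde P_a}$ on $\cZ_r$. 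Adjoining $\set{\tilde P_a}$ to $(\Phi_1,\dots,\Phi_r)$ gives an operational measuring strategy whose associated operators are, unwinding Definition~\ref{def:m-strategy} in its isometric form, exactly $\Ptr{\cZ_r}{\Pa{\tilde P_a\ot I_{\kprod{\cY}{1}{r}\ot\kprod{\cX}{1}{r}}}\col{A}\row{A}}=Q_a$. Hence $\set{Q_a}$ is a valid measuring strategy. The co-strategy case is identical, starting instead from $Q=\ptr{\cZ_r}{\col{A^*}\row{A^*}}$ and adjoining $\set{\tilde P_a}$ to the operational co-strategy.

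To prove the lemma, write $v=\sum_k w_k\ot f_k$ in an orthonormal basis $\set{f_k}$ of $\cK$, let $W$ be the operator with $Wf_k=w_k$, and note that $WW^*=\ptr{\cK}{vv^*}=Q$ while $\ptr{\cK}{\Pa{I_\cH\ot M}vv^*}=WM^{\trans}W^*$ for every operator $M$ on $\cK$, the transpose taken in the basis $\set{f_k}$. From $0\preceq Q_a\preceq Q$ one gets $\supp(Q_a)\subseteq\supp(Q)=\mathrm{range}(W)$, so $Q_a$ may be conjugated by the inverse of $W_0$, the bijective restriction of $W$ from $(\ker W)^\perp$ onto $\mathrm{range}(W)$: set $M_a=W_0^{-1}Q_a\Pa{W_0^{-1}}^*$, extended by zero on $\ker W$. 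Each $M_a$ is positive semidefinite, $WM_aW^*=Q_a$, and $\sum_a M_a$ equals the orthogonal projection onto $(\ker W)^\perp$; absorbing the complementary projection $I_\cK-\sum_a M_a$ into one fixed term leaves these identities intact and makes $\set{M_a}$ a measurement, and $\tilde P_a:=M_a^{\trans}$ is then the required measurement on $\cK$. The main obstacle is precisely this lemma — in particular getting the supports right so that $W_0$ is invertible, and patching $\set{M_a}$ up to a genuine measurement summing to $I_\cK$ (with the harmless transpose bookkeeping forced by the $\vectorize$ convention); everything else is a direct unwinding of Definitions~\ref{def:strategy}--\ref{def:co-strategy}.
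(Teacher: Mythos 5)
Your proposal is correct and follows essentially the same route as the paper: the easy direction is the same linearity argument, and for the converse both arguments reshape the isometry $A$ into an operator $W$ (the paper's $S$, defined by swapping $\kprod{\cX}{1}{r}$ and $\cZ_r$) with $WW^*=\sum_a Q_a$ and conjugate each $Q_a$ by its (pseudo-)inverse to pull back a measurement on $\cZ_r$. Your explicit patching step---absorbing $I_{\cZ_r}-\sum_a M_a=\Pi_{\ker W}$ into one outcome---is a welcome bit of extra care, since the paper simply asserts $\sum_a P_a=I_{\cZ_r}$, which as written requires $\dim(\cZ_r)=\rank\Pa{\sum_a Q_a}$ or precisely this absorption trick.
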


\begin{proof}

The proof for co-strategies is completely symmetric to the proof for strategies, so we only prove the proposition for strategies.

The ``only if'' portion of the proof is straightforward.
As $\set{Q_a}$ is a measuring strategy, there is an $r$-tuple $(A_1,\dots,A_r)$ of isometries and a measurement $\set{P_a}$ with the property that each $Q_a$ is given by
%\[ \Xi_a : X\mapsto\Ptr{\cZ_r}{\Pa{P_a\ot I_{\kprod{\cY}{1}{r}}}AXA^*} \]
\[ Q_a = \Ptr{\cZ_r}{\Pa{P_a\ot I_{\kprod{\cY}{1}{r}\ot\kprod{\cX}{1}{r}}}\col{A}\row{A}} \]
where the isometry $A$ is given by the $r$-fold composition $A=A_r\cdots A_1$ is as noted earlier in this section.
Then by linearity we have
%\[ \sum_a \Xi_a : X\mapsto\Ptr{\cZ_r}{\Pa{\Pa{\sum_aP_a}\ot I_{\kprod{\cY}{1}{r}}}AXA^*} = \Ptr{\cZ_r}{AXA^*}, \]
\begin{align*}
\sum_a Q_a &= \Ptr{\cZ_r}{\Pa{\Pa{\sum_aP_a}\ot I_{\kprod{\cY}{1}{r}\ot\kprod{\cX}{1}{r}}}\col{A}\row{A}} \\
&= \Ptr{\cZ_r}{\col{A}\row{A}},
\end{align*}
which is an $r$-round non-measuring strategy as required.

We now proceed to the ``if'' portion of the proof.
As $\sum_a Q_a$ is a non-measuring strategy, there is an $r$-tuple $(A_1,\dots,A_r)$ of isometries with the property that
\[ \sum_a Q_a = \Ptr{\cZ_r}{\col{A}\row{A}} \]
where again the isometry
\[ A : \kprod{\cX}{1}{r}\to\kprod{\cY}{1}{r}\ot\cZ_r \]
is given by $A=A_r\cdots A_1$.
Let \[S:\cZ_r\to\kprod{\cY}{1}{r}\ot\kprod{\cX}{1}{r}\] be the operator
obtained from $A$ by swapping the spaces $\kprod{\cX}{1}{r}$ and $\cZ_r$.
(In other words, $S$ is the image of $A$ under the mapping $(y\ot z)x^*\mapsto(y\ot x)z^*$ on standard basis states.)
%The two operators $A'$ and $A$ are equivalent for our purpose, as we always trace out the space $\cZ_r$.
We may thus write
\[ \sum_a Q_a = \Ptr{\cZ_r}{\col{S}\row{S}} = SS^*. \]
Let $T$ denote the Moore-Penrose pseudo-inverse of $S$, so that
\[ ST = T^*S^* = \Pi_S \]
where $\Pi_S$ denotes the projection onto the image of $S$.
For each outcome $a$ we let $P_a=TQ_aT^*$.
As $Q_a$ is positive semidefinite, it is clear that $P_a$ is also positive semidefinite.
Observe that $Q_a=SP_aS^*$, which follows from
\[SP_aS^* = STQ_aT^*S^* = \Pi_S Q_a \Pi_S = Q_a \]
where the final equality follows from the fact that the image of $Q_a$ is contained in the image of $S$.
Then
\begin{align*}
  Q_a &= SP_aS^* \\
  &= \Ptr{\cZ_r}{\col{SP_a}\row{S}} \\
  &= \Ptr{\cZ_r}{\Pa{P_a\ot I_{\kprod{\cY}{1}{r}\ot\kprod{\cX}{1}{r}}}\col{S}\row{S}} \\
  &= \Ptr{\cZ_r}{\Pa{P_a\ot I_{\kprod{\cY}{1}{r}\ot\kprod{\cX}{1}{r}}}\col{A}\row{A}}.
\end{align*}
As
\( \sum_a Q_a = \Ptr{\cZ_r}{\col{A}\row{A}}, \)
it follows that
\[ \sum_a P_a = I_{\cZ_r}. \]
As each $P_a$ is positive semidefinite, the set $\set{P_a}$ is a quantum measurement and hence $\set{Q_a}$ is a measuring strategy.
\end{proof}

\chapter{Three Important Properties}
\label{ch:properties}

In this chapter we establish three fundamental and powerful properties of our new representation for quantum strategies.
For ease of reference, we begin with rigorous statements for each of these properties.
Their proofs appear in the subsequent sections, and the chapter concludes with a short list of other basic properties of strategies.
%These properties may be informally described as follows.
%\begin{enumerate}
%\item
%  The probability with which an interaction yields a specific measurement outcome is given by the inner product of the appropriate strategy and co-strategy operators.
  
%  %In particular, this probability depends linearly upon the strategy and co-strategy.
%  %(By contrast, recall that the naive representation for strategies led to a nonlinear dependence of outcomes on strategies.)
%\item
%  The set of all strategies is characterized in terms of simple and efficiently checkable linear constraints on semidefinite operators.
%\item
%  The maximum probability with which a given measuring strategy $\set{Q_a}$ can be forced to output a given measurement outcome $a$ by a compatible co-strategy equals the ``gauge'' value of the operator $Q_a$.
%\end{enumerate}
%For ease of reference, we state all three of these properties rigorously:

The first of these three properties establishes a bilinear dependence of the probability of a given measurement outcome upon the interacting strategy and co-strategy.
(By contrast, recall that the operational representation for strategies led to a nonlinear dependence of outcomes on strategies in Section \ref{subsec:naive-observations}.)

\begin{theorem}[\theoreminnerproduct]
\label{theorem:inner-product}

  Let $\set{Q_a}$ be a measuring strategy and let $\set{R_b}$ be a compatible measuring co-strategy.
  For each pair $(a,b)$ of measurement outcomes, the probability with which the interaction between $\set{Q_a}$ and $\set{R_b}$ yields $(a,b)$ is given by the inner product $\inner{Q_a}{R_b}$.

\end{theorem}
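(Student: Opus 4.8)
The plan is to pass to the isometric picture, exhibit the joint state of the two parties at the end of the interaction as an explicit vector, and then recognise $\inner{Q_a}{R_b}$ inside the resulting expression. By Definitions~\ref{def:m-strategy} and~\ref{def:co-strategy} the operators $\set{Q_a}$ and $\set{R_b}$ arise from operational data, which, by the observations in Section~\ref{subsec:naive-observations}, we may take to be isometric: Alice's operational strategy consists of isometries $A_1,\dots,A_r$ together with a measurement $\set{P_a}$ on her last memory space $\cZ_r$, and Bob's operational co-strategy consists of a pure initial state $\rho_0=vv^*$ (with $v$ a unit vector), isometries $B_1,\dots,B_r$, and a measurement $\set{\Pi_b}$ on his last memory space $\cW_r$. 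Composing all of these operations in the interleaved order dictated by the interaction (Figure~\ref{fig:interaction}) produces a single unit vector $u\in\cZ_r\ot\cW_r$ holding the joint contents of Alice's and Bob's memory registers just before they measure. Since $P_a$ and $\Pi_b$ act on disjoint tensor factors, $\set{P_a\ot\Pi_b}_{a,b}$ is a measurement on $\cZ_r\ot\cW_r$, so the standard measurement rule gives
\[ \Pr[\textrm{outcome }(a,b)] = \Norm{(P_a\ot\Pi_b)u}^2 = u^*(P_a\ot\Pi_b)u, \]
and it suffices to prove $u^*(P_a\ot\Pi_b)u=\inner{Q_a}{R_b}$.

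Next I would bring in the composed isometries. Let $A:\kprod{\cX}{1}{r}\to\kprod{\cY}{1}{r}\ot\cZ_r$ be the composition $A_r\cdots A_1$ of Alice's isometries on the memory spaces $\cZ_1,\dots,\cZ_{r-1}$ (with identity operators inserted where required), so that, by the isometric form of the strategy operators, $Q_a=\Ptr{\cZ_r}{(P_a\ot I)\col{A}\row{A}}$. Likewise, viewing Bob's co-strategy as an $(r+1)$-round strategy as in Definition~\ref{def:co-strategy}, his operations together with $v$ compose into an isometry $B:\kprod{\cY}{1}{r}\to\kprod{\cX}{1}{r}\ot\cW_r$, and $R_b=\Ptr{\cW_r}{(\Pi_b\ot I)\col{B^*}\row{B^*}}$. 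The heart of the argument is to show that the interleaved composition defining $u$ telescopes: the identity $(M\ot N^\trans)\col{X}=\col{MXN}$ among the properties of the $\vectorize$ map lets one absorb, one round at a time, each message space $\cX_i$ (produced by Bob, consumed by Alice) and each $\cY_i$ (produced by Alice, consumed by Bob) as a pair of conjugate tensor factors that is then contracted away. Pushing this computation through should rewrite $u^*(P_a\ot\Pi_b)u$ as
\[ \Inner{\Ptr{\cZ_r}{(P_a\ot I)\col{A}\row{A}}}{\Ptr{\cW_r}{(\Pi_b\ot I)\col{B^*}\row{B^*}}} = \inner{Q_a}{R_b}, \]
and since $Q_a$ is positive semidefinite, hence Hermitian, this inner product equals $\Tr{Q_aR_b}$, which is the assertion of the theorem.

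I expect the telescoping step to be the main obstacle: the bookkeeping required to match Alice's outputs with Bob's inputs across all $r$ rounds, to thread the memory spaces $\cZ_1,\dots,\cZ_{r-1}$ and $\cW_0,\dots,\cW_{r-1}$ correctly, and to track the transposes and complex conjugates introduced both by the basis-dependent $\vectorize$ map and by the adjoint in the definition of a co-strategy. I would organise it as an induction on the number of rounds $r$, peeling off the first round --- Bob's preparation of the $\cX_1$-message and Alice's reply on $\cY_1$ --- using the $\vectorize$ identities to fold the two newly produced registers into a partial trace and then applying the inductive hypothesis (suitably generalised, if needed, to allow a shared initial memory) to the remaining $(r-1)$-round interaction. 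The base case $r=0$ is the degenerate one in which strategies and co-strategies are probability distributions, the interaction is empty, and $\inner{Q_a}{R_b}$ is just the product of the two scalars $Q_a$ and $R_b$ --- precisely the probability that the two independent trivial interactions produce $(a,b)$.
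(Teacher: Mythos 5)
Your setup is the same as the paper's: pass to isometries (and, via Section \ref{subsec:naive-observations}, projective measurements, which is what licenses writing the probability as $\norm{(P_a\ot\Pi_b)u}^2=u^*(P_a\ot\Pi_b)u$), let $u\in\cZ_r\ot\cW_r$ be the joint memory state at the end of the interaction, and reduce the theorem to the identity $u^*(P_a\ot\Pi_b)u=\inner{Q_a}{R_b}$ with $Q_a=\ptr{\cZ_r}{(P_a\ot I)\col{A}\row{A}}$ and $R_b=\ptr{\cW_r}{(\Pi_b\ot I)\col{B^*}\row{B^*}}$. The problem is that this identity is the entire content of the theorem, and in your write-up it is announced rather than proven: ``pushing this computation through should rewrite\dots'' is exactly the step you yourself flag as the main obstacle. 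The paper closes that step by introducing the partial contraction operation, which turns the interleaved product $B_rA_r\cdots B_1A_1u_0$ into a contraction of the tensor product of all of Alice's operators with all of Bob's, and then proves the componentwise identity $(e_i^*\ot f_j^*)\,u=\inner{B_{(j)}^*}{A_{(i)}}$ where $A_{(i)}=(e_i^*\ot I)A$ and $B_{(j)}=(f_j^*\ot I)B$; summing the squared moduli and using $\sum_i\col{A_{(i)}}\row{A_{(i)}}=\ptr{\cZ_r}{\col{A}\row{A}}$ (and the analogous identity for $B^*$) yields $\inner{Q_a}{R_b}$. Some such bookkeeping device, carried out explicitly, is what your proposal is missing.

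The inductive route you sketch also has a structural problem that ``suitably generalised, if needed, to allow a shared initial memory'' does not resolve. After peeling off round one, Alice's memory $\cZ_1$ and Bob's registers $\cX_2\ot\cW_1$ hold a jointly entangled pure state, so the residual $(r-1)$-round interaction is not an instance of the statement being proved: with a shared initial state there is no way to define two operators $Q_a$ and $R_b$, one from each party's data alone, and the conclusion ``probability $=\inner{Q_a}{R_b}$'' no longer even has the right form. The induction hypothesis must therefore be a strictly stronger, bilinear statement --- for example the componentwise identity above, or an identity in which an arbitrary shared initial vector is contracted against both parties' composed operators --- and formulating and proving that strengthening is precisely where the work of the theorem lies. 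So your approach is viable and is essentially the paper's, but as written the proof is incomplete at its central step.
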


The second property provides a recursive characterization of $r$-round strategies in terms of $(r-1)$-round strategies.

\begin{theorem}[\thmchar]
\label{thm:char}

  Let $Q\in\pos{\kprod{\cY}{1}{r}\ot\kprod{\cX}{1}{r}}$ be an arbitrary positive semidefinite operator.
  The following hold:
  \begin{enumerate}
  
  \item %\label{item:char:st}
    $Q$ is an $r$-round strategy if and only if there exists an $(r-1)$-round strategy $R$ with the property that \( \ptr{\cY_r}{Q} = R\ot I_{\cX_r}. \)
    Moreover, $R$ is obtained from $Q$ by terminating that strategy after $r-1$ rounds.
  
  \item %\label{item:char:cst}
    $Q$ is an $r$-round co-strategy if and only if there exists an operator $R$ for which \( Q=R\ot I_{\cY_r} \) and $\ptr{\cX_r}{R}$ is an $(r-1)$-round co-strategy.
    Moreover, $\ptr{\cX_r}{R}$ is obtained from $Q$ by terminating that co-strategy after $r-1$ rounds.
  
  \item %\label{item:char:mem}
    Every $r$-round strategy or co-strategy $Q$ may be described by isometries in such a way that the final memory space has dimension equal to $\rank(Q)$.
  \end{enumerate}

\end{theorem}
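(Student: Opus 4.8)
The plan is to prove parts~1 and~3 simultaneously by induction on $r$, and then to obtain part~2 from part~1 by the strategy/co-strategy duality built into Definition~\ref{def:co-strategy} (every $r$-round co-strategy being the adjoint-type object $\jam{\Xi^*}$ of a strategy, equivalently an $(r+1)$-round strategy for shifted input/output spaces, as in Section~\ref{subsec:naive-observations}). Throughout I will pass freely to isometric operational representations, as justified in Section~\ref{subsec:naive-observations}, and I will invoke three standard facts: a positive semidefinite $Q\in\pos{\cY\ot\cX}$ equals $\jam{\Xi}$ for a trace-preserving $\Xi$ exactly when $\ptr{\cY}{Q}=I_\cX$; every channel has a minimal Stinespring dilation whose environment has dimension $\rank(\jam{\Xi})$; and any two Stinespring dilations of one channel are intertwined by an isometry from the environment of the minimal one into the environment of the other. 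The base case $r=1$ is immediate: an $r=1$ strategy is $\jam{\Xi}$ for a channel $\Xi:\lin{\cX_1}\to\lin{\cY_1}$, so ``$Q$ is a $1$-round strategy'' is equivalent to ``$Q\succeq 0$ and $\ptr{\cY_1}{Q}=I_{\cX_1}$'', which, since a $0$-round strategy is by convention the scalar $1$, is exactly the condition in part~1; and part~3 at $r=1$ is precisely the minimal Stinespring dilation, giving $\dim\cZ_1=\rank Q$.

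For the ``only if'' half of part~1 at step $r$: given an isometric description $(\Phi_1,\dots,\Phi_r)$ of $Q$, set $R\defeq\jam{\trace_{\cZ_{r-1}}\circ\Phi_{r-1}\circ\cdots\circ\Phi_1}$, the $(r-1)$-round strategy obtained by terminating after round $r-1$. Since $\Phi_r$ is trace-preserving, discarding its output system $\cY_r\ot\cZ_r$ has the same effect as discarding its input system $\cX_r\ot\cZ_{r-1}$; as $\cZ_r$ is already traced out in the super-operator $\Xi$ defining $Q$, this forces $\ptr{\cY_r}\circ\Xi=\big(\trace_{\cZ_{r-1}}\circ\Phi_{r-1}\circ\cdots\circ\Phi_1\big)\ot\trace_{\cX_r}$. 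Applying $\jam{\cdot}$, which commutes with $\ptr{\cY_r}$, respects the relevant tensor-product factorization, and sends $\trace_{\cX_r}$ to $I_{\cX_r}$, gives $\ptr{\cY_r}{Q}=R\ot I_{\cX_r}$, with $R$ obtained by truncation as claimed.

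The crux is the ``if'' half of part~1, which will also deliver part~3. Suppose $Q\succeq 0$ with $\ptr{\cY_r}{Q}=R\ot I_{\cX_r}$ for an $(r-1)$-round strategy $R$. Iterating this identity down to $r=1$ (using that every $(r-1)$-round strategy $R$ satisfies $\ptr{\cY_1\cdots\cY_{r-1}}{R}=I_{\cX_1\cdots\cX_{r-1}}$) gives $\ptr{\cY_1\cdots\cY_r}{Q}=I_{\cX_1\cdots\cX_r}$, so $Q=\jam{\Xi}$ for a channel $\Xi:\lin{\kprod{\cX}{1}{r}}\to\lin{\kprod{\cY}{1}{r}}$. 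By the inductive hypothesis (part~3 at $r-1$) realize $R$ by isometries $A_1,\dots,A_{r-1}$ with final memory $\cZ_{r-1}$ of dimension $\rank R$, and put $B\defeq A_{r-1}\cdots A_1:\kprod{\cX}{1}{r-1}\to\kprod{\cY}{1}{r-1}\ot\cZ_{r-1}$. A short computation (the ``only if'' step run backwards) shows that the channel $\ptr{\cY_r}\circ\Xi$ has Choi operator $R\ot I_{\cX_r}$ and admits $B\ot I_{\cX_r}$ as a Stinespring dilation with environment $\cZ_{r-1}\ot\cX_r$; since $\dim(\cZ_{r-1}\ot\cX_r)=(\rank R)(\dim\cX_r)=\rank(R\ot I_{\cX_r})$, that dilation is minimal. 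Now take a minimal Stinespring dilation $U:\kprod{\cX}{1}{r}\to\kprod{\cY}{1}{r}\ot\cZ_r$ of $\Xi$ itself, so that $\dim\cZ_r=\rank(\jam{\Xi})=\rank Q$; viewed as a (generally non-minimal) dilation of $\ptr{\cY_r}\circ\Xi$ with environment $\cY_r\ot\cZ_r$, uniqueness of Stinespring dilations supplies an isometry $A_r:\cX_r\ot\cZ_{r-1}\to\cY_r\ot\cZ_r$ with $U=(I_{\kprod{\cY}{1}{r-1}}\ot A_r)(B\ot I_{\cX_r})$ (after the harmless reordering $\cX_r\ot\cZ_{r-1}\leftrightarrow\cZ_{r-1}\ot\cX_r$). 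Then $(\Phi_1,\dots,\Phi_{r-1},\Phi_r)$ with $\Phi_r:X\mapsto A_rXA_r^*$ and memory spaces $\cZ_1,\dots,\cZ_r$ is an isometric operational $r$-round strategy whose associated operator is $\ptr{\cZ_r}{\col{U}\row{U}}=\jam{\Xi}=Q$ and whose final memory space has dimension $\rank Q$; terminating it after $r-1$ rounds recovers $R$. Feeding an arbitrary $r$-round strategy $Q$ through the ``only if'' half to produce $R$ and then through this construction proves part~3 for strategies (including that the operations may be taken isometric, which is already noted in Section~\ref{subsec:naive-observations}).

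Part~2 and part~3 for co-strategies then follow by the dual argument: under $\Xi\mapsto\Xi^*$ the partial trace over $\cX_r$ plays the role of the partial trace over $\cY_r$, and the fact that a non-measuring co-strategy's final operation $\Psi_r$ receives $\cY_r$ but emits no outgoing message is what makes $\cY_r$ decouple as a tensor factor $I_{\cY_r}$; translating parts~1 and~3 through this correspondence gives the stated recursion, with final memory space of dimension $\rank(\jam{\Xi^*})$. The main obstacle is the ``if'' half of part~1: one must reconstruct the final-round operation from $Q$ alone, and the key observation is that the semidefinite condition $\ptr{\cY_r}{Q}=R\ot I_{\cX_r}$ combined with a \emph{minimal} realization of $R$ makes $B\ot I_{\cX_r}$ a minimal Stinespring dilation of $\ptr{\cY_r}\circ\Xi$, so that Stinespring uniqueness forces the ``extra part'' of any dilation of $\Xi$ to factor through a single isometry acting only on $\cX_r\ot\cZ_{r-1}$ --- precisely the last round. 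The remaining work, namely tensor-factor reorderings and the conjugations introduced by the $\vectorize$ and Choi-Jamio\l kowski conventions, is routine bookkeeping.
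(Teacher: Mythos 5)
Your proposal is correct and follows essentially the same route as the paper: the ``only if'' direction via trace-preservation of the final round, the ``if'' direction by realizing $R$ minimally and invoking uniqueness of dilations (your minimal-Stinespring formulation with environment $\cZ_{r-1}\ot\cX_r$ is just the paper's isometric equivalence of purifications of $R\ot I_{\cX_r}$, where the paper introduces the auxiliary unitary $V:\cX_r\to\cV$ in place of your output copy of $\cX_r$), with part 3 obtained along the way and part 2 deduced from part 1 through the strategy/co-strategy correspondence and adjoint-robustness of the linear conditions.
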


The recursive characterization of Theorem \ref{thm:char} may be equivalently expressed as an explicit list of linear constraints on positive semidefinite operators.
These linear constraints are efficiently checkable and hence amenable to standard algorithms for semidefinite optimization problems.

\begin{numberedtheorem}{\ref{thm:char}}[Characterization of strategies, alternate version]

  An operator $Q\in\pos{\kprod{\cY}{1}{r}\ot\kprod{\cX}{1}{r}}$ is an $r$-round strategy if and only if there exist operators
  %$Q_1,\dots,Q_{r-1}$
  \[ Q_k\in\pos{\kprod{\cY}{1}{k}\ot\kprod{\cX}{1}{k}}\] for $k=1,\dots,r-1$
  such that
  \begin{align*}
    \ptr{\kprod{\cY}{k}{r}}{Q} &= Q_{k-1} \ot I_{\kprod{\cX}{k}{r}} \quad (2\leq k\leq r),\\
    \ptr{\kprod{\cY}{1}{r}}{Q} & = I_{\kprod{\cX}{1}{r}}.
  \end{align*}
  Moreover, each $Q_k$ is obtained from $Q$ by terminating that strategy after $k$ rounds.

\end{numberedtheorem}

It follows immediately from this characterization that the sets of strategies and co-strategies are compact and convex.
For completeness, we provide a formal proof of this important fact as Proposition \ref{prop:convexity} in Section \ref{sec:other-properties}.

Our third property of strategies provides a formula for the maximum probability with which some co-strategy can force a given measuring strategy to output a given measurement outcome.
Whereas the previous two properties are fundamental---both in their statements and their proofs---this third property is more advanced.
It's proof relies crucially upon the previous two properties, as well as more advanced ideas from analysis such as convex polarity or semidefinite optimization duality.

\def\thmmaxprob{Maximum output probability}
\begin{theorem}[\thmmaxprob]
\label{thm:max-prob}

  Let $\set{Q_m}$ be a measuring strategy.
  For each outcome $m$, the maximum probability with which $\set{Q_m}$ can be forced to output $m$ by a compatible co-strategy is given by
  \[ \min \Set{ p\in[0,1] : Q_m \preceq pR \textrm{ for some strategy } R }. \]
  An analogous result holds for co-strategies.

\end{theorem}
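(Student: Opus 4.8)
The plan is to recognize the maximum probability as the optimal value of a semidefinite program and to identify the stated minimization with its dual. First I would turn the quantity in question into a concrete optimization: by Proposition \ref{prop:measure-sum} a collection $\set{R_b}$ is a measuring co-strategy precisely when $R\defeq\sum_b R_b$ is a (non-measuring) co-strategy, and by Theorem \ref{theorem:inner-product} the probability that $\set{Q_m}$ yields outcome $m$ against $\set{R_b}$ is $\sum_b\inner{Q_m}{R_b}=\inner{Q_m}{R}$; hence the maximum over all compatible co-strategies equals
\[ p^\ast\defeq\max\Set{\inner{Q_m}{R} : R\textrm{ is an $r$-round co-strategy}}, \]
which is attained because the set of $r$-round co-strategies is compact by Theorem \ref{thm:char}. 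It remains to prove $p^\ast$ equals $\mu\defeq\min\set{p\in[0,1] : Q_m\preceq pR\textrm{ for some strategy }R}$; the set defining $\mu$ is nonempty (take $p=1$ and $R=\sum_{m'}Q_{m'}$, a strategy by Proposition \ref{prop:measure-sum}) and closed, so the minimum is attained. The inequality $p^\ast\le\mu$ is immediate: if $Q_m\preceq pS$ for an $r$-round strategy $S$ and $R$ is any compatible $r$-round co-strategy, then $\inner{Q_m}{R}\le p\inner{S}{R}$ since $pS-Q_m\succeq 0$ and $R\succeq 0$, while $\inner{S}{R}=1$ because, regarding $S$ and $R$ as one-outcome measuring strategies, Theorem \ref{theorem:inner-product} identifies $\inner{S}{R}$ with the total probability of the unique outcome pair; thus $\inner{Q_m}{R}\le p$ for every feasible $p$.

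For the reverse inequality $\mu\le p^\ast$ I would use semidefinite optimization duality. By the alternate (linear-constraint) form of Theorem \ref{thm:char}, the condition ``$R$ is an $r$-round co-strategy'' is equivalent to ``$R\succeq 0$ and $\mathcal{L}(R)=\ell$'' for an explicit linear map $\mathcal{L}$ and constant $\ell$ encoding $\ptr{\kprod{\cY}{1}{r}}{R}=I$ together with the subspace conditions ``$R$ has the form $(\,\cdot\,)\ot I$'' at each level of the recursion, so $p^\ast$ is the value of the program $\max\set{\inner{Q_m}{R} : R\succeq 0,\ \mathcal{L}(R)=\ell}$. Forming the Lagrangian dual and simplifying, the dual variables conjugate to these constraints reassemble --- via the adjointness built into the recursion of Theorem \ref{thm:char} --- into a nonnegative multiple of a strategy, so that the dual program reads exactly $\min\set{p : Q_m\preceq pR,\ R\textrm{ an $r$-round strategy}}=\mu$. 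Strong duality (no gap, both optima attained) then follows from a Slater point: the completely forgetful co-strategy, whose Choi--Jamio\l kowski operator is $\frac{1}{\dim(\kprod{\cX}{1}{r})}I$, is positive definite and hence strictly feasible for the primal. Therefore $p^\ast=\mu$; the statement for co-strategies is symmetric.

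The step I expect to be the main obstacle is the explicit dual computation together with the verification of strong duality and attainment: one must track the action of $\mathcal{L}^\ast$ and check that the dual constraint ``$\mathcal{L}^\ast(\cdot)\succeq Q_m$'' is precisely ``$Q_m\preceq pR$ for an $r$-round strategy $R$'', and confirm Slater's condition. A route that sidesteps this bookkeeping is a direct separation argument: if $Q_m\not\preceq p^\ast R$ for every $r$-round strategy $R$, then $Q_m$ lies outside the closed convex set $p^\ast\bS_r-\pos{\kprod{\cY}{1}{r}\ot\kprod{\cX}{1}{r}}$, where $\bS_r$ is the compact convex set of $r$-round strategies; Fact \ref{fact:herm-sep} then yields a separating Hermitian operator that is forced to be positive semidefinite and, after rescaling by its maximum inner product against a strategy, is dominated by an $r$-round co-strategy --- contradicting the maximality of $p^\ast$. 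The crux of that variant is again the fact, drawn from Theorem \ref{thm:char}, that such a positive semidefinite operator is dominated by an actual co-strategy.
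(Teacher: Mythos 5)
Your proposal is correct and follows essentially the same route as the paper's second proof of Theorem \ref{thm:max-prob}: reduce the maximum forcing probability to a linear optimization over non-measuring (co-)strategies via Theorem \ref{theorem:inner-product} and Proposition \ref{prop:measure-sum}, then identify the stated minimization as the semidefinite dual, where the ``bookkeeping'' you defer is exactly what the paper carries out in Lemmas \ref{lm:primal}, \ref{lm:dual} and \ref{lm:duality} (the paper certifies strong duality with an explicit strictly feasible block-diagonal primal point rather than your normalized-identity Slater point, a cosmetic difference, and handles the strategy/co-strategy orientation through Lemma \ref{lm:max-prob-switch} just as you do). Your fallback separation argument is, in spirit, the paper's first proof, which formalizes that separation through convex polarity (Proposition \ref{prop:polar}).
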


%======================================================================
\section{\theoreminnerproduct}
%======================================================================

This section is devoted to a proof of Theorem \ref{theorem:inner-product} (\theoreminnerproduct).
We preface the proof with a discussion of a super-operator called the \emph{contraction}, whose useful properties will be employed in the proof that follows.
%This theorem establishes a convenient bilinear relationship for computing output probabilities.

\subsubsection{The contraction operation}
%======================================================================

%Our proof of Theorem \ref{theorem:inner-product} (\theoreminnerproduct) will exploit the properties of a linear super-operator known as the \emph{contraction}.
Informally speaking, the contraction operation is defined so that a composition of operators
\[ B_r A_r B_{r-1} \cdots B_1 A_1 \]
is the image under the contraction of the tensor product
\[ B_r \ot \cdots \ot B_1 \ot A_r \ot \cdots \ot A_1. \]
The ability to ``unravel'' operator compositions in this fashion is useful for our purpose because it allows us to isolate the actions of Alice $(A_1,\dots,A_r)$ from those of Bob $(B_1,\dots,B_r)$ in some $r$-round interaction.
%, thus better facilitating a proof of Theorem \ref{theorem:inner-product} (\theoreminnerproduct).
%thus better facilitating the derivation of a convenient expression for the result of the interaction, which is the goal of Theorem \ref{theorem:inner-product} (\theoreminnerproduct)
%so that a convenient expression may be derived for the result of the interaction.

Still speaking informally, the existence of an operation such as the contraction follows from the fact that the composition $BA$ is multilinear in the operators $A,B$.
Of course, the tensor product $A\ot B$ is also multilinear in these operators.
Indeed, as noted in Section \ref{sec:intro:linalg}, the tensor product possesses a special \emph{universality} property whereby any multilinear mapping on $A,B$ could equivalently be expressed as a linear mapping on the tensor product $A\ot B$.
In our case, that linear mapping is the contraction operation.

Let us formally define this operation.

\begin{definition}[Contraction operation]

  The \emph{contraction operation} in its full generality is more easily defined as a linear functional on vectors.
  For any complex Euclidean space $\cV$ we define
  \[ \contract \ : \ \cV\ot\cV \to \mathbb{C} \ : \ u \mapsto \Pa{ \sum_{i=1}^{\dim(\cV)} e_i^*\ot e_i^* } u \]
%  \begin{align*}
%    \contract &: \cV\ot\cV \to \mathbb{C} \\
%    &: u \mapsto \Pa{ \sum_{i=1}^{\dim(\cV)} e_i^*\ot e_i^* } u
%  \end{align*}
  where $\set{e_1,\dots,e_{\dim(\cV)}}$ denotes the standard basis for $\cV$.
  
  Just as the trace function is tensored with the identity to yield the partial trace, the contraction is often tensored with the identity to yield the \emph{partial contraction over $\cV$}:
  \[ \contract[\cV] \ : \ \cV\ot\cA\ot\cV \to \cA \ : \ u \mapsto \Pa{ \sum_{i=1}^{\dim(\cV)} e_i^*\ot I_\cA \ot e_i^* } u. \]
%  \begin{align*}
%      \contract[\cV] &: \cV\ot\cA\ot\cV \to \cA \\
%      &: u \mapsto \Pa{ \sum_{i=1}^{\dim(\cV)} e_i^*\ot I_\cA \ot e_i^* } u.
%  \end{align*}
  (The exact ordering of the three spaces in the above tensor product $\cV\ot\cA\ot\cV$ is immaterial---the contraction is defined similarly for other orderings.)
  
  The partial contraction is often viewed as a super-operator as follows.
  If $u=\col{X}$ for some operator $X:\cV\to\cV\ot\cA$ then we may dispense with the $\col{\cdot}$ notation and simply write
  \[ \con{\cV}{X} = \sum_{i=1}^{\dim(\cV)} \Pa{e_i^*\ot I_\cA} X e_i. \]
  Finally, we write $\contract[\cV,\cW]$ as shorthand for the composition $\contract[\cV] \circ \contract[\cW]$.
\end{definition}

Two useful properties of the contraction operation---including the ability to ``unravel'' operator compositions---are noted in the following proposition.
Each item in this proposition is proven by a straightforward but tedious exercise in ``index gymnastics.''

\begin{proposition}

  %Let $\cA,\cB,\cX,\cY,\cZ$ be complex Euclidean spaces.
  The following hold:
  \begin{enumerate}
  
  \item \label{item:contraction:hookup}
  For any operators $A:\cX\to\cA\ot\cY$ and $B:\cB\ot\cY\to\cZ$ we have
  \[ \con{\cY}{A\ot B} = \Pa{B\ot I_\cA}\Pa{A\ot I_\cB}. \]
  In particular, if $\cA=\cB=\mathbb{C}$ then $\con{\cY}{A\ot B} = BA$.
  
  \item \label{item:contraction:inner}
  For any operators $A:\cX\to\cY$ and $B:\cY\to\cX$ we have
  \[ \con{\cX,\cY}{A\ot B} = \ptr{}{AB} = \inner{A^*}{B}. \]
  
  \end{enumerate}

\end{proposition}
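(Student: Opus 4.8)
The plan is to establish item~\ref{item:contraction:hookup} by a direct computation in the standard basis and then to obtain item~\ref{item:contraction:inner} as an easy consequence of item~\ref{item:contraction:hookup} together with the definitions of the trace and of the operator inner product.

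For item~\ref{item:contraction:hookup}, the first observation is that both $\con{\cY}{A\ot B}$ and $(B\ot I_\cA)(A\ot I_\cB)$ are bilinear in the pair $(A,B)$: the contraction is linear by definition, while the tensor product and operator composition are each bilinear in their arguments. It therefore suffices to verify the identity when $A$ and $B$ are elementary operators, say $A=(p\ot e_k)x^*$ and $B=z(q\ot e_l)^*$, with $p\in\cA$, $x\in\cX$, $z\in\cZ$, $q\in\cB$, and $e_k,e_l$ ranging over a fixed standard basis of $\cY$. On such operators, evaluating $(B\ot I_\cA)(A\ot I_\cB)$ at a vector $x'\ot q'$ yields $\inner{x}{x'}\inner{q}{q'}\,\delta_{k,l}$ times a fixed elementary tensor on $\cA\ot\cZ$ built from $p$ and $z$, after inserting the identity factors and permuting tensor slots so that the composition is well defined. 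On the other side, unwinding the definition of $\contract[\cY]$ and invoking orthonormality of the basis of $\cY$, the sum over basis vectors collapses to the single surviving term indexed by $k=l$, producing the same expression. Since every operator is a linear combination of such elementary operators, this proves the identity in general, and the special case $\cA=\cB=\mathbb{C}$ is then immediate.

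For item~\ref{item:contraction:inner}, I would use the definition $\contract[\cX,\cY]=\contract[\cX]\circ\contract[\cY]$. Applying item~\ref{item:contraction:hookup} with $\cA=\cB=\mathbb{C}$ gives $\con{\cY}{A\ot B}=BA$, which is an operator acting on $\cX$. Applying $\contract[\cX]$ to this operator, the super-operator form of the partial contraction (again with a trivial auxiliary space) reads $\con{\cX}{BA}=\sum_i e_i^*\,(BA)\,e_i=\ptr{}{BA}$, and this equals $\ptr{}{AB}$ by cyclicity of the trace. Finally, $\ptr{}{AB}=\ptr{}{(A^*)^*B}=\inner{A^*}{B}$ by the definition $\inner{M}{N}=\ptr{}{M^*N}$ of the operator inner product.

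The only genuine obstacle is bookkeeping. The partial contraction is defined on vectors in a tensor product of the shape $\cV\ot\cA\ot\cV$, so to even make sense of $\con{\cY}{A\ot B}$ one must permute the tensor factors of $A\ot B$ so that the two copies of $\cY$ to be contracted sit in the correct positions; similarly, $(B\ot I_\cA)(A\ot I_\cB)$ requires inserting identity operators and reshuffling factors so that the composition typechecks. Carrying these identifications consistently through both sides of each equation is exactly the ``index gymnastics'' the statement refers to --- routine, but the one place where care is needed.
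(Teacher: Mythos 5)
Your proposal is correct. For item~\ref{item:contraction:hookup} you do essentially what the paper does: the paper expands general $A$ and $B$ in the standard bases with coefficients $\alpha_{i,j,k},\beta_{l,m,n}$ and collapses the sum over the contracted $\cY$-index, while you invoke bilinearity to reduce to rank-one elementary operators and verify the same collapse there; these are the same computation organized differently, and your remark about permuting tensor factors so that $(B\ot I_\cA)(A\ot I_\cB)$ typechecks matches the paper's implicit convention. Where you genuinely diverge is item~\ref{item:contraction:inner}: the paper proves it by a second, independent coordinate computation ($\con{\cX,\cY}{A\ot B}=\sum_{o,p}(y_p\ot x_o)^*(A\ot B)(x_o\ot y_p)=\sum_{o,p}\alpha_{p,o}\beta_{o,p}$), whereas you deduce it from item~\ref{item:contraction:hookup} via the definition $\contract[\cX,\cY]=\contract[\cX]\circ\contract[\cY]$, the observation that $\con{\cX}{BA}=\sum_i e_i^*(BA)e_i=\ptr{}{BA}$, and cyclicity of the trace. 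Your route is slightly cleaner in that it reuses item~\ref{item:contraction:hookup} and avoids a second round of index bookkeeping, at the cost of having to check that the two remaining copies of $\cX$ after the $\cY$-contraction are exactly the ones the double contraction is meant to pair (which they are); the paper's version buys uniformity, treating both items by the same direct calculation.
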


\begin{proof}
We begin with item \ref{item:contraction:hookup}.
Throughout the proof it might be helpful to remember that $A\ot B$ and its contraction take the following forms:
\begin{align*}
  A\ot B \ &: \ \cX\ot\cB\ot\cY \to \cA\ot\cY\ot\cZ, \\
  \con{\cY}{A\ot B} \ &: \ \cX\ot\cB \to \cA\ot\cZ.
\end{align*}
We denote the standard bases of $\cA$, $\cB$, $\cX$, $\cY$, and $\cZ$ by
\begin{align*}
\Set{a_1,\dots,a_{\dim(\cA)}} &\subset \cA, & \Set{x_1,\dots,x_{\dim(\cX)}} &\subset \cX, \\
\Set{b_1,\dots,b_{\dim(\cB)}} &\subset \cB, & \Set{y_1,\dots,y_{\dim(\cY)}} &\subset \cY, \\
& & \Set{z_1,\dots,z_{\dim(\cZ)}} &\subset \cZ.
\end{align*}
Then for some complex numbers $\alpha_{i,j,k},\beta_{l,m,n}\in\Complex$ we may write
\begin{align*}
  A &= \sum_{i=1}^{\dim(\cY)}\sum_{j=1}^{\dim(\cX)}\sum_{k=1}^{\dim(\cA)} \alpha_{i,j,k} (a_k\ot y_i)x_j^*, \\
  B &= \sum_{l=1}^{\dim(\cY)}\sum_{m=1}^{\dim(\cB)}\sum_{n=1}^{\dim(\cZ)} \beta_{l,m,n} z_n(b_m\ot y_l)^*.
\end{align*}
By the definitions of the contraction and of matrix multiplication, we obtain
\begin{align*}
  \con{\cY}{A\ot B}
  &= \sum_{o=1}^{\dim(\cY)} \Pa{I_{\cA\ot\cZ} \ot y_o^*} \Pa{A\ot B} \Pa{I_{\cX\ot\cB} \ot y_o} \\
  &= \sum_{j,k,m,n} \Pa{\sum_{o=1}^{\dim(\cY)} \alpha_{o,j,k} \beta_{o,m,n}} (a_k\ot z_n)(x_j\ot b_m)^* \\
  &= \Pa{B\ot I_\cA}\Pa{A\ot I_\cB}.
\end{align*}

For item \ref{item:contraction:inner}, we denote the standard bases of $\cX$ and $\cY$ by
\begin{align*}
\Set{x_1,\dots,x_{\dim(\cX)}} &\subset \cX, \\
\Set{y_1,\dots,y_{\dim(\cY)}} &\subset \cY.
\end{align*}
Then for some complex numbers $\alpha_{i,j},\beta_{l,m}\in\Complex$ we may write
\begin{align*}
  A &= \sum_{i=1}^{\dim(\cX)}\sum_{j=1}^{\dim(\cY)} \alpha_{j,i} y_jx_i^*, \\
  B &= \sum_{l=1}^{\dim(\cX)}\sum_{m=1}^{\dim(\cY)} \beta_{l,m} x_ly_m^*.
\end{align*}
By the definitions of the contraction, matrix multiplication, and the trace we obtain
\begin{align*}
  \con{\cX,\cY}{A\ot B}
  &= \sum_{o=1}^{\dim(\cX)}\sum_{p=1}^{\dim(\cY)} (y_p\ot x_o)^* \Pa{A\ot B} (x_o\ot y_p) \\
  &= \sum_{o=1}^{\dim(\cX)}\sum_{p=1}^{\dim(\cY)} \alpha_{p,o} \beta_{o,p} \\
  &= \ptr{}{AB} = \inner{A^*}{B}
\end{align*}
as desired.
\end{proof}

%\subsubsection{\theoreminnerproduct}
\subsubsection{Proof of Theorem \ref{theorem:inner-product} (\theoreminnerproduct)}
%======================================================================

We are now ready to provide the promised proof.
The theorem is restated here for convenience.

\begin{numberedtheorem}{\ref{theorem:inner-product}}[\theoreminnerproduct]

  Let $\set{Q_a}$ be a measuring strategy and let $\set{R_b}$ be a compatible measuring co-strategy.
  For each pair $(a,b)$ of measurement outcomes, the probability with which the interaction between $\set{Q_a}$ and $\set{R_b}$ yields $(a,b)$ is given by the inner product $\inner{Q_a}{R_b}$.

\end{numberedtheorem}

\begin{proof}

Suppose $\set{Q_a}$ is described by isometries $A_1,\dots,A_r$ and a projective measurement $\set{\Pi_a}$.
Similarly, suppose $\set{R_b}$ is described by a pure state $u_0$, isometries $B_1,\dots,B_r$, and a projective measurement $\set{\Delta_b}$.
For each pair $(a,b)$ let $v_{a,b}\in\cZ_r\ot\cW_r$ denote the vector obtained by applying the measurement operator $\Pi_a\ot\Delta_b$ to the pure state of the entire system at the end of the interaction.
That is, $v_{a,b}$ is given by
\[ v_{a,b} \defeq \Pa{\Pi_a\ot\Delta_b} B_r A_r B_{r-1} \cdots B_1 A_1 u_0. \]
(Our notation here suppresses the numerous tensors with identity.)
Then the desired probability is equal to $\norm{v_{a,b}}^2$.
The remainder of this proof is dedicated to proving that
\[ \norm{v_{a,b}}^2 = \Inner{Q_a}{R_b}. \]

We use the contraction operation to pull apart the composition of $v_{a,b}$ and express it as a contraction of the tensor product
%\[ v_{a,b} = \Con{\cX_1,\dots,\cX_r,\cY_1,\dots,\cY_r,\cZ_1,\dots,\cZ_{r-1},\cW_0,\dots,\cW_{r-1}}{\Pa{\sqrt{P_a}\ot I_{\cY_r}}A_r \ot A_{r-1} \ot \cdots \ot A_1 \ot \sqrt{S_b}B_r \ot B_{r-1} \ot \cdots \ot B_1 \ot u_0 } \]
\[ \Pa{\Pi_a\ot I_{\cY_r}}A_r \ot A_{r-1} \ot \cdots \ot A_1 \ot \Delta_b B_r \ot B_{r-1} \ot \cdots \ot B_1 \ot u_0 \]
over every space except $\cZ_r$ and $\cW_r$.
For convenience, this expression for $v_{a,b}$ is written
\[ v_{a,b} = \Con{\cX_1,\dots,\cX_r,\cY_1,\dots,\cY_r}{A\ot B} \]
where the operators
\begin{align*}
  A: \kprod{\cX}{1}{r}\to\kprod{\cY}{1}{r}\ot\cZ_r,\\
  B:\kprod{\cY}{1}{r}\to\kprod{\cX}{1}{r}\ot\cW_r
\end{align*}
are given by
\begin{align*}
  A &\defeq \Con{\cZ_1,\dots,\cZ_{r-1}}{\Pa{\Pi_a\ot I_{\cY_r}}A_r \ot A_{r-1} \ot \cdots \ot A_1}, \\
  B &\defeq \Con{\cW_0,\dots,\cW_{r-1}}{\Delta_b B_r \ot B_{r-1} \ot \cdots \ot B_1 \ot u_0}.
\end{align*}
With an eye toward the end of the proof, we observe that
\begin{align*}
  Q_a &= \ptr{\cZ_r}{\col{A}\row{A}},\\
  R_b &= \ptr{\cW_r}{\col{B^*}\row{B^*}}
\end{align*}
as per Definition \ref{def:strategy} (\defstrategy).

Let $\set{e_1,\dots,e_{\dim(\cZ_r)}}$ and $\set{f_1,\dots,f_{\dim(\cW_r)}}$ denote the standard bases of $\cZ_r$ and $\cW_r$, respectively, and for each $i,j$ define the operators
\begin{align*}
  A_{(i)} &: \kprod{\cX}{1}{r}\to\kprod{\cY}{1}{r},\\
  B_{(j)} &: \kprod{\cY}{1}{r}\to\kprod{\cX}{1}{r}
\end{align*}
by
\begin{align*}
  A_{(i)} &\defeq \Pa{e_i^*\ot I_{\kprod{\cY}{1}{r}}} A, \\
  B_{(j)} &\defeq \Pa{f_j^*\ot I_{\kprod{\cX}{1}{r}}} B.
\end{align*}
Again, with an eye toward the end of the proof we observe that
\begin{align*}
  \sum_{i=1}^{\dim(\cZ_r)} \Col{A_{(i)}}\Row{A_{(i)}} &= \ptr{\cZ_r}{\col{A}\row{A}} = Q_a,\\
  \sum_{j=1}^{\dim(\cW_r)}\Col{B_{(j)}^*}\Row{B_{(j)}^*}  &= \ptr{\cW_r}{\col{B^*}\row{B^*}} = R_b.
\end{align*}
The $(i,j)$th component of $v_{a,b}$ (in the standard basis) is given by
\[
  \Pa{e_i^*\ot f_j^*} v_{a,b}
  = \Con{\cX_1,\dots,\cX_r,\cY_1,\dots,\cY_r}{A_{(i)}\ot B_{(j)}}
  = \Inner{B_{(j)}^*}{A_{(i)}}.
\]
Employing the cyclic property of the trace, we find that the modulus squared of the $(i,j)$th component of $v_{a,b}$ is
\begin{align*}
  \Inner{B_{(j)}^*}{A_{(i)}} \Inner{A_{(i)}}{B_{(j)}^*}
  %&= \Row{A_{(i)}}\Col{B_{(j)}^*}\Row{B_{(j)}^*}\Col{A_{(i)}} \\
  &= \Inner{ \Col{A_{(i)}}\Row{A_{(i)}} }{ \Col{B_{(j)}^*}\Row{B_{(j)}^*} }.
\end{align*}
Then the desired norm $\norm{v_{a,b}}^2$ is the sum of these moduli squared:
\begin{align*}
  \norm{v_{a,b}}^2
  &= \sum_{i,j} \Inner{ \Col{A_{(i)}}\Row{A_{(i)}} }{ \Col{B_{(j)}^*}\Row{B_{(j)}^*} } \\
  &= \Inner{ \sum_{i=1}^{\dim(\cZ_r)} \Col{A_{(i)}}\Row{A_{(i)}} }{ \sum_{j=1}^{\dim(\cW_r)}\Col{B_{(j)}^*}\Row{B_{(j)}^*} } \\
  %&= \Inner{ \ptr{\cZ_r}{\col{A}\row{A}} }{ \ptr{\cW_r}{\col{B^*}\row{B^*}} } \\
  &= \Inner{ Q_a }{ R_b }.
\end{align*}
\end{proof}

%======================================================================
\section{\thmchar}
%======================================================================

This section is devoted to a proof of Theorem \ref{thm:char} (\thmchar).
Before providing the proof, it is appropriate to comment on the relationship between our characterization of strategies and prior work on so-called ``no-signaling'' quantum operations.
%No-signaling operations are also discussed in the present thesis in Chapter X.

%The $i$th memory space need never have dimension higher than $\dim(\kprod{\cY}{1}{i}\ot\kprod{\cX}{1}{i})$.

%State the linear conditions explicitly here (not just the recursive condition appearing in the theorem).

\subsubsection{Relationship between strategies and no-signaling operations}
%======================================================================

The content of Theorem \ref{thm:char} was originally established in 2002 within the context of no-signaling operations via the combined work of 
Beckman \emph{et al.}~\cite{BeckmanG+01}
and
Eggeling, Schlingemann, and Werner \cite{EggelingSW02}.
The proof presented in this thesis was developed by the present author and Watrous in 2007 \cite{GutoskiW07} within the context of quantum strategies and without any knowledge of this prior work.
%Refs.~\cite{BeckmanG+01, EggelingSW02}.

Let us elaborate upon the connection between strategies and no-signaling operations.
Simply put, Theorem \ref{thm:char} states that $Q$ is an $r$-round strategy if and only if it obeys the partial trace condition \[ \ptr{\cY_r}{Q}=R\ot I_{\cX_r} \] for some $(r-1)$-round strategy $R$.
This partial trace condition also appears in Beckman \emph{et al.}~\cite[Theorem 8]{BeckmanG+01}, wherein it was established that the above condition also captures the \emph{one-directional no-signaling} property of certain quantum operations.

%Quantum operations with this no-signaling property are discussed in detail in Chapter X.
%Essentially, the quantum operation
%\[ \Phi : \lin{\kprod{\cX}{1}{r}} \to \lin{\kprod{\cY}{1}{r}} \]
%with $\jam{\Phi}=Q$ has the property that the reduced state of the \emph{output} of $\Phi$ corresponding to $\cY_1$ is always independent of the portion of the \emph{input} of $\Phi$ corresponding to $\kprod{\cX}{2}{r}$.
%Similarly, the reduced state corresponding to $\cY_2$ is independent of $\kprod{\cX}{3}{r}$ and so on.
%This independence implies that $\Phi$ cannot be used to send information (a ``signal'') from $\cX_r$ to $\cY_1,\dots,\cY_{r-1}$, from $\cX_{r-1}$ to $\cY_1,\dots,\cY_{r-2}$, and so on.

For quantum strategies, this property means that it is impossible to send information (a ``signal'') from the system $\cX_i$ to any of the systems $\cY_1,\dots,\cY_{i-1}$ for each $i$.
Intuitively, we should expect quantum strategies to obey a no-signaling condition of this sort.
Any ``strategy'' that disobeys this condition could be used to communicate backwards in time---from future rounds of interaction to previous rounds.
A world that permitted such clairvoyant strategies would be an interesting world indeed!
Alas, such a world is not causally consistent.

While Beckman \emph{et al.} showed that the partial trace condition of Theorem \ref{thm:char} is a \emph{necessary} condition for quantum strategies, it was Eggeling, Schlingemann, and Werner who established that this condition is also \emph{sufficient}.
In particular, they showed that any quantum operation that forbids signaling in \emph{one} direction may be implemented by two separate quantum operations, possibly with communication in the \emph{other} direction.

For quantum strategies, signaling is forbidden in the future-past direction, but permitted in the past-future direction.
In this context, such past-future communication is better known as ``memory'' and the two separate quantum operations can be taken to represent the actions of the strategy in two distinct rounds of the interaction.
In this way, a quantum strategy is constructed from an operation that obeys the one-directional no-signaling operation, just as in Theorem \ref{thm:char}.
%Thus, Theorem \ref{thm:char} (\thmchar) amounts to little more than a multi-round generalization of the no-signaling results in Refs.~\cite{BeckmanG+01, EggelingSW02}, recast in the language of quantum strategies.

%\subsubsection{A Lemma about Choi-Jams of Adjoints}
%======================================================================

%\begin{lemma}
%\label{lm:char-choijam}
%
%Let $\Phi : \lin{\cY\ot\cB}\to\lin{\cX\ot\cA}$ and $\Psi : \lin{\cB}\to\lin{\cA}$ be super-operators.
%Then
%\[
%  \ptr{\cX}{\jam{\Phi}} = \jam{\Psi}\ot I_\cY
%  %\iff
%  \quad \textrm{if and only if} \quad
%  \ptr{\cX}{\jam{\Phi^*}} = \jam{\Psi^*}\ot I_\cY.
%\]
%\end{lemma}
%
%\begin{proof}
%
%First, suppose that $\ptr{\cX}{\jam{\Phi}} = \jam{\Psi}\ot I_\cY$ and let $X\in\lin{\cY\ot\cB\ot\cA}$ be any operator.
%Then
%\begin{align*}
%  \Inner{X}{\ptr{\cX}{\jam{\Phi^*}}}
%  &= \Inner{X}{ \Pa{ \Phi^* \ot \trace_\cX \ot \idsup{\cA} } \Pa{ \col{I_{\cX\ot\cA}}\row{I_{\cX\ot\cA}} } } \\
%%  &= \Inner{ \Pa{\Phi\ot\idsup{\cA}}(X) }{ \Ptr{\cX}{ \col{I_{\cX\ot\cA}}\row{I_{\cX\ot\cA}} } } \\
%  &= \Inner{ \Pa{\Phi\ot\idsup{\cA}}(X) }{ \col{I_\cA}\row{I_\cA} \ot I_\cX } \\
%  &= \Inner{ \Ptr{\cX}{ \Pa{\Phi\ot\idsup{\cA}}(X) } }{ \col{I_\cA}\row{I_\cA} } \\
%  &= \Inner{ \Pa{\Psi\ot\trace_\cY\ot\idsup{\cA}}(X) }{ \col{I_\cA}\row{I_\cA} } \\
%  &= \Inner{X}{ \Pa{\Psi^*\ot\idsup{\cA}} \Pa{\col{I_\cA}\row{I_\cA}} \ot I_\cY } \\
%  &= \Inner{X}{ \jam{\Psi^*}\ot I_\cY }
%\end{align*}
%As this equality holds for all $X$, it follows that $\ptr{\cX}{\jam{\Phi^*}} = \jam{\Psi^*}\ot I_\cY$.
%The converse is proven via the same argument.
%\end{proof}

\subsubsection{Proof of Theorem \ref{thm:char} (\thmchar)}
%======================================================================

Let us present the promised proof.
The theorem is restated here for convenience.

\begin{numberedtheorem}{\ref{thm:char}}[\thmchar]

  Let $Q\in\pos{\kprod{\cY}{1}{r}\ot\kprod{\cX}{1}{r}}$ be an arbitrary positive semidefinite operator.
  The following hold:
  \begin{enumerate}
  
  \item \label{item:char:st}
    $Q$ is an $r$-round strategy if and only if there exists an $(r-1)$-round strategy $R$ with the property that \( \ptr{\cY_r}{Q} = R\ot I_{\cX_r}. \)
    Moreover, $R$ is obtained from $Q$ by terminating that strategy after $r-1$ rounds.
  
  \item \label{item:char:cst}
    $Q$ is an $r$-round co-strategy if and only if there exists an operator $R$ for which \( Q=R\ot I_{\cY_r} \) and $\ptr{\cX_r}{R}$ is an $(r-1)$-round co-strategy.
    Moreover, $\ptr{\cX_r}{R}$ is obtained from $Q$ by terminating that co-strategy after $r-1$ rounds.
  
  \item \label{item:char:mem}
    Every $r$-round strategy or co-strategy $Q$ may be described by isometries in such a way that the final memory space has dimension equal to $\rank(Q)$.
  \end{enumerate}
  
\end{numberedtheorem}

\begin{proof}

We begin with a proof of item \ref{item:char:st}.
Along the way, we will also prove item \ref{item:char:mem}.
Item \ref{item:char:cst} follows from item \ref{item:char:st}---a fact we establish at the end of this proof.

Suppose first that $Q$ is an $r$-round strategy and let $R$ denote the $(r-1)$-round strategy obtained from $Q$ by terminating that strategy after the first $r-1$ rounds.
We will prove that $\ptr{\cY_r}{Q} = R\ot I_{\cX_r}$.

Toward that end, let $\Xi_r, \Xi_{r-1}$ be the quantum operations
%\begin{align*}
%\Xi_r &: \lin{\kprod{\cX}{1}{r}}\to\lin{\kprod{\cY}{1}{r}} \\
%\Xi_{r-1} &: \lin{\kprod{\cX}{1}{r-1}}\to\lin{\kprod{\cY}{1}{r-1}}
%\end{align*}
satisfying $Q=J(\Xi_r)$ and $R=J(\Xi_{r-1})$.
As illustrated in Figure \ref{fig:trace-out}, is clear that
the super-operators $\Pa{{\trace_{\cY_r}}\circ\Xi_r}$ and $\Pa{\Xi_{r-1}\ot{\trace_{\cX_r}}}$ are equal.
We have
\[
  \ptr{\cY_r}{Q}=\ptr{\cY_r}{J(\Xi_r)}=J({\trace_{\cY_r}}\circ\Xi_r)
  = J(\Xi_{r-1}\ot{\trace_{\cX_r}})=R\ot I_{\cX_r}
\]
as desired.
\begin{figure}[h]
  \begin{center}
\includegraphics{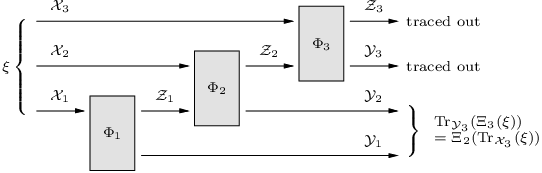}
  \end{center}
  \caption{For a three-round strategy, the super-operators $\Pa{{\trace_{\cY_3}}\circ\Xi_3}$ and $\Pa{\Xi_2\ot{\trace_{\cX_3}}}$ are equal.}
  \label{fig:trace-out}
\end{figure}

Next, assume that $Q$ satisfies $\ptr{\cY_r}{Q}=R\ot I_{\cX_r}$ for some $(r-1)$-round strategy $R$.
We will prove that $Q$ is an $r$-round strategy.
Along the way, we will also establish item \ref{item:char:mem} in the statement of the theorem.

This portion of the proof is by induction on the number of rounds $r$.
We begin with the base case $r=1$.
It is clear that $Q$ is a one-round strategy if and only if $Q=J(\Phi_1)$ for some quantum operation $\Phi_1:\lin{\cX_1}\to\lin{\cY_1}$.
As $\Phi_1$ is a quantum operation, it must be the case that \( \ptr{\cY_1}{Q}=I_{\cX_1}. \)
By convention, the zero-round strategy $R$ that we seek must be represented by the scalar 1.
The desired expression \( \ptr{\cY_1}{Q}=R\ot I_{\cX_1}\) follows from the simple observation that $I_{\cX_1}=1\ot I_{\cX_1}$.
Item \ref{item:char:mem} for this case follows from the usual Stinespring representation: there exists a space $\cZ_1$ with $\dim(\cZ_1)=\rank(Q)$ and an isometry $A_1:\cX_1\to\cY_1\ot\cZ_1$ such that $\Phi_1 : X \mapsto \ptr{\cZ_1}{A_1XA_1^*}$.

For the case $r\geq 2$, let $A_1,\dots,A_{r-1}$ be isometries that describe $R$ and let
$A=A_{r-1}\cdots A_1$ denote the $(r-1)$-fold composition of these isometries, so that
\[R=\ptr{\cZ_{r-1}}{\col{A}\row{A}}.\]
By the induction hypothesis, the memory space $\cZ_{r-1}$ has $\dim(\cZ_{r-1})=\rank(R)$.
As required for item \ref{item:char:mem}, we let $\cZ_r$ be a complex Euclidean space with dimension equal to $\rank(Q)$.
Let $B:\kprod{\cX}{1}{r}\to\kprod{\cY}{1}{r}\ot\cZ_r$ be any operator satisfying
\[ \ptr{\cZ_r}{\col{B}\row{B}} = Q.\]
Such a choice of $B$ must exist given that the dimension of $\cZ_r$ is large enough to admit a purification of $Q$.
Note that $\col{B}$ is also a purification of $R\ot I_{\cX_r}$:
\[ \ptr{\cY_r\ot\cZ_r}{\col{B}\row{B}} = \ptr{\cY_r}{Q} = R\ot I_{\cX_r}.\]
We will now identify a second purification of $R\ot I_{\cX_r}$.
Toward that end, let $\cV$ be a complex Euclidean space with $\dim(\cV)=\dim(\cX_r)$ and let $V:\cX_r\to\cV$ be an arbitrary unitary operator.
Then \[ \ptr{\cV}{\col{V}\row{V}}=I_{\cX_r} \] and so \[\ptr{\cZ_{r-1}\ot\cV}{\col{A\ot V}\row{A\ot V}} = R\ot I_{\cX_r}.\]
We will now use the isometric equivalence of purifications to define an isometry
\[ A_r : \cX_r\ot\cZ_{r-1}\to\cY_r\ot\cZ_r \]
for the $r$th round that will complete the proof.
Because $\cZ_{r-1}\ot\cV$ has the minimal dimension required to admit a purification of $R\ot I_{\cX_r}$, it follows that there exists an isometry \( U:\cZ_{r-1}\ot\cV\to\cY_r\ot\cZ_r \) such that
\[ \Pa{ I_{\kprod{\cY}{1}{r-1}} \ot U \ot I_{\kprod{\cX}{1}{r}} } \col{A\ot V} = \col{B}.\]
This expression may equivalently be written \[B=\Pa{I_{\kprod{\cY}{1}{r-1}}\ot U}(A\ot V).\]
We now define $A_r\defeq U(I_{\cZ_{r-1}}\ot V)$ so that
\[ B= \pa{ I_{\kprod{\cY}{1}{r-1}} \ot A_r } (A\ot I_{\cX_r}).\]
In other words, $B$ is given by the $r$-fold composition $A_r\cdots A_1$.
As \[Q=\ptr{\cZ_r}{\col{B}\row{B}},\] it follows that $Q$ is an $r$-round strategy described by the isometries $A_1,\dots,A_r$.
The proofs of items \ref{item:char:st} and \ref{item:char:mem} are thus complete.

Let us now prove item \ref{item:char:cst}.
Essentially, the proof consists of several applications of item \ref{item:char:st} with different choices of input and output spaces.

As noted in Definition \ref{def:co-strategy} (\defcostrategy), each $r$-round co-strategy may be viewed as an $(r+1)$-round strategy with input spaces $\mathbb{C},\cY_1,\dots,\cY_r$ and output spaces $\cX_1,\dots,\cX_r,\mathbb{C}$ and \emph{vice versa}.
With this fact in mind, item \ref{item:char:st} tells us
\begin{align*}
  &\textrm{$Q$ is an $(r+1)$-round strategy}\\
  &\qquad\textrm{for input spaces $\mathbb{C},\cY_1,\dots,\cY_r$ and output spaces $\cX_1,\dots,\cX_r,\mathbb{C}$}\\
  \iff {}&\ptr{\mathbb{C}}{Q}=R\ot I_{\cY_r} \textrm{ where $R$ is an $r$-round strategy}\\
  &\qquad\textrm{for the input spaces $\mathbb{C},\cY_1,\dots,\cY_{r-1}$ and output spaces $\cX_1,\dots,\cX_r$.}
\end{align*}
(Of course, $\ptr{\mathbb{C}}{X}=X$ for every $X$.)
To complete the proof, it suffices to show that
\begin{align*}
  &\textrm{$R$ is an $r$-round strategy}\\
  &\qquad \textrm{for input spaces $\mathbb{C},\cY_1,\dots,\cY_{r-1}$ and output spaces $\cX_1,\dots,\cX_r$}\\
  \iff {}&\textrm{$\ptr{\cX_r}{R}$ is an $r$-round strategy}\\
  &\qquad \textrm{for input spaces $\mathbb{C},\cY_1,\dots,\cY_{r-1}$and output spaces $\cX_1,\dots,\cX_{r-1},\mathbb{C}$}.
\end{align*}
For once this equivalence is shown,
item \ref{item:char:cst} follows from the fact that
the linear conditions in question are robust with respect to taking adjoints.
In particular, it holds that
\[
  \ptr{\cX}{\jam{\Xi}}=\jam{\Lambda}\ot I_\cY
  \ \iff \
  \ptr{\cX}{\jam{\Xi^*}}=\jam{\Lambda^*}\ot I_\cY
\]
for all super-operators $\Xi,\Lambda$ and all appropriate choices of spaces $\cX,\cY$.

It remains only to establish the stated equivalence between $R$ and $\ptr{\cX_r}{R}$.
Toward that end, suppose first that $R$ is an $r$-round strategy for the input spaces $\mathbb{C},\cY_1,\dots,\cY_{r-1}$ and output spaces $\cX_1,\dots,\cX_r$.
By item \ref{item:char:st} it follows that
\[ \ptr{\cX_r}{R}=S\ot I_{\cY_{r-1}} \]
where $S$ is an $(r-1)$-round strategy for the input spaces $\mathbb{C},\cY_1,\dots,\cY_{r-2}$ and output spaces $\cX_1,\dots,\cX_{r-1}$.
It is not hard to see that $S\ot I_{\cY_{r-1}}$ denotes a valid $r$-round strategy for the input spaces $\mathbb{C},\cY_1,\dots,\cY_{r-1}$ and output spaces $\cX_1,\dots,\cX_{r-1},\mathbb{C}$.
Hence, $\ptr{\cX_r}{R}$ is as claimed.

Conversely, suppose that $\ptr{\cX_r}{R}$ is an $r$-round strategy for the input spaces $\mathbb{C},\cY_1,\dots,\cY_{r-1}$ and output spaces $\cX_1,\dots,\cX_{r-1},\mathbb{C}$.
By item \ref{item:char:st} it follows that 
\[ \ptr{\mathbb{C}}{\ptr{\cX_r}{R}}=S\ot I_{\cY_{r-1}} \]
where $S$ is an $(r-1)$-round strategy for the input spaces $\mathbb{C},\cY_1,\dots,\cY_{r-2}$ and output spaces $\cX_1,\dots,\cX_{r-1}$.
By item \ref{item:char:st} again, it must be that $R$ is an $r$-round strategy for the input spaces $\mathbb{C},\cY_1,\dots,\cY_{r-1}$ and output spaces $\cX_1,\dots,\cX_r$ as desired.
\end{proof}

%======================================================================
\section{Maximum output probabilities}
%======================================================================

This section is devoted to a proof of Theorem \ref{thm:max-prob} (\thmmaxprob).
We offer two distinct proofs of this theorem.
The first is provided in Section \ref{subsec:convex-polarity} and employs the formalism of convex polarity, while the second is provided in Section \ref{subsec:semidefinite-duality} and employs the formalism of semidefinite optimization duality.
Both proofs make use of the fact that the sets of strategies and co-strategies are compact and convex---an immediate implication of Theorem \ref{thm:char} that was noted at the start of this chapter and shall be proven at the end of this chapter.

The theorem is restated here for convenience and is followed by a simple lemma employed in both our proofs.

\begin{numberedtheorem}{\ref{thm:max-prob}}[\thmmaxprob]

  Let $\set{Q_m}$ be a measuring strategy.
  For each outcome $m$, the maximum probability with which $\set{Q_m}$ can be forced to output $m$ by a compatible co-strategy is given by
  \[ \min \Set{ p\in[0,1] : Q_m \preceq pR \textrm{ for some strategy } R }. \]
  An analogous result holds for co-strategies.

\end{numberedtheorem}

\begin{lemma}
\label{lm:max-prob-switch}

  Suppose that Theorem \ref{thm:max-prob} is known to hold for strategies.
  Then Theorem \ref{thm:max-prob} also holds for co-strategies.
  The converse is also true.

\end{lemma}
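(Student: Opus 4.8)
The plan is to exploit the symmetry between strategies and co-strategies built into Definition \ref{def:co-strategy}: an $r$-round co-strategy for input spaces $\cX_1,\dots,\cX_r$ and output spaces $\cY_1,\dots,\cY_r$ is ``the same object'' as an $(r+1)$-round \emph{strategy} for input spaces $\mathbb{C},\cY_1,\dots,\cY_r$ and output spaces $\cX_1,\dots,\cX_r,\mathbb{C}$, and symmetrically an $r$-round strategy is an $(r+1)$-round co-strategy for those same shifted spaces. Write $\kappa$ for the underlying linear involution $\jam{\Xi}\mapsto\jam{\Xi^*}$, which on Hermitian operators amounts --- up to the insertion of trivial $\mathbb{C}$ factors --- to the transpose $M\mapsto M^{\trans}$ composed with a swap of the $\kprod{\cY}{1}{r}$ and $\kprod{\cX}{1}{r}$ blocks of tensor factors. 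First I would record the properties of $\kappa$ used below. By Theorem \ref{thm:char}, $\kappa$ restricts to a bijection between the $r$-round co-strategies for the original spaces and the $(r+1)$-round strategies for the shifted spaces, and likewise between the $r$-round strategies for the original spaces and the $(r+1)$-round co-strategies for the shifted spaces; moreover it carries measuring objects to measuring objects of the matching type, directly from the constructions of Definition \ref{def:m-strategy}. Since $\kappa$ is linear and carries positive semidefinite operators to positive semidefinite operators (the adjoint of a completely positive map is completely positive), it preserves the semidefinite order $\preceq$. And since $\tr{A^{\trans}B^{\trans}}=\tr{(BA)^{\trans}}=\tr{BA}=\tr{AB}$, the map $\kappa$ preserves the Hilbert--Schmidt inner product $\inner{A}{B}=\tr{AB}$ of Hermitian operators.

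Assume now that Theorem \ref{thm:max-prob} holds for strategies, and let $\set{Q_m}$ be a measuring $r$-round co-strategy for input spaces $\cX_1,\dots,\cX_r$ and output spaces $\cY_1,\dots,\cY_r$. By Theorem \ref{theorem:inner-product}, the quantity to be computed --- the maximum probability that $\set{Q_m}$ is forced to output $m$ by a compatible strategy --- is the maximum of $\inner{Q_m}{S}$ over all $r$-round non-measuring strategies $S$. Put $Q'_m=\kappa(Q_m)$, so that $\set{Q'_m}$ is a measuring $(r+1)$-round strategy for the shifted spaces; its compatible co-strategies are precisely the operators $\kappa(S)$ with $S$ ranging over $r$-round strategies, and $\inner{Q'_m}{\kappa(S)}=\inner{\kappa(Q_m)}{\kappa(S)}=\inner{Q_m}{S}$. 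Hence the maximum output probability of the \emph{strategy} $\set{Q'_m}$ equals the target quantity $\max_S\inner{Q_m}{S}$, and Theorem \ref{thm:max-prob} for strategies rewrites this as $\min\Set{p\in[0,1]:Q'_m\preceq pR\text{ for some }(r+1)\text{-round strategy }R}$. Every such $R$ is $\kappa(R'')$ for a unique $r$-round co-strategy $R''$; since $\kappa$ is a linear bijection preserving $\preceq$ with $\kappa(Q'_m)=Q_m$ and $\kappa(R)=R''$, the inequality $Q'_m\preceq pR$ is equivalent to $Q_m\preceq pR''$. Therefore the minimum equals $\min\Set{p\in[0,1]:Q_m\preceq pR''\text{ for some }r\text{-round co-strategy }R''}$, which is exactly the formula that Theorem \ref{thm:max-prob} asserts for the co-strategy $\set{Q_m}$. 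The converse implication comes from the identical chain with ``strategy'' and ``co-strategy'' interchanged throughout.

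Conceptually the argument is light; the single step that needs real care is the claim that $\kappa$ maps the set of $r$-round co-strategies \emph{onto} the set of $(r+1)$-round strategies for the shifted spaces (and similarly the other way). This is precisely where Theorem \ref{thm:char} is used: one must verify that the linear-equality-plus-positivity description of $r$-round co-strategies matches, after the shift of spaces, that of $(r+1)$-round strategies --- exactly the reduction performed in the proof of the co-strategy part of that theorem. Everything else is bookkeeping: tracking the permutation of the tensor factors $\kprod{\cY}{1}{r}$ and $\kprod{\cX}{1}{r}$ and the transpose implicit in passing between $\jam{\Xi}$ and $\jam{\Xi^*}$, which is harmless since it acts only on Hermitian operators.
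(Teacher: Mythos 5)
Your proposal is correct and follows essentially the same route as the paper: view the $r$-round measuring co-strategy, via the adjoint map $\jam{\Xi^*}\mapsto\jam{\Xi}$, as an $(r+1)$-round measuring strategy for the shifted spaces $\mathbb{C},\cY_1,\dots,\cY_r$ and $\cX_1,\dots,\cX_r,\mathbb{C}$, apply Theorem \ref{thm:max-prob} for strategies, and transfer the conclusion back using that this map preserves inner products and the semidefinite ordering (the paper states these as $\inner{\jam{\Xi_b}}{\jam{\Phi}}=\inner{\jam{\Xi_b^*}}{\jam{\Phi^*}}$ and $\jam{\Xi_b}\preceq p\jam{\Psi}\iff\jam{\Xi_b^*}\preceq p\jam{\Psi^*}$). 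Your explicit verification that $\kappa$ is a bijection between the relevant strategy and co-strategy sets (via Theorem \ref{thm:char} and Definition \ref{def:co-strategy}) is left implicit in the paper but is the same underlying fact.
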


\begin{proof}
Suppose that Theorem \ref{thm:max-prob} is known to hold for strategies.
Let $\set{R_b}$ be an $r$-round measuring co-strategy for input spaces $\cX_1,\dots,\cX_r$ and output spaces $\cY_1,\dots,\cY_r$.
For each outcome $b$ let $\Xi_b:\lin{\kprod{\cY}{1}{r}}\to\lin{\kprod{\cX}{1}{r}}$ be the super-operator with $R_b=\jam{\Xi_b^*}$.
Then $\set{\jam{\Xi_b}}$ denotes an $(r+1)$-round measuring strategy for input spaces $\mathbb{C},\cY_1,\dots,\cY_r$ and output spaces $\cX_1,\dots,\cX_r,\mathbb{C}$.
As such, it holds that
\begin{align*}
& \max \Set{ \inner{\jam{\Xi_b}}{\jam{\Phi}} : \jam{\Phi} \textrm{ is a compatible co-strategy} } \\
={}& \min \Set{ p\in[0,1] : \jam{\Xi_b} \preceq p\jam{\Psi} \textrm{ for some strategy } \jam{\Psi} }.
\end{align*}
The result follows from the fact that $\inner{\jam{\Xi_b}}{\jam{\Phi}}=\inner{\jam{\Xi_b^*}}{\jam{\Phi^*}}$ and
\[ \jam{\Xi_b} \preceq p\jam{\Psi} \iff \jam{\Xi_b^*} \preceq p\jam{\Psi^*} \]
for all choices of super-operators $\Xi_b$, $\Phi$, and $\Psi$.
The proof of the converse statement is identical.
%
%$\Phi$ is CP iff $\Phi^*$ is CP follows from $\jam{\Phi}=\sum_i\col{A_i}\row{A_i}$ and so $\jam{\Phi^*}=\sum_i\col{A_i^*}\row{A_i^*}$ which is positive semidefinite.
%Hence, $\jam{\Xi}\preceq p\jam{\Lambda}$ iff $\jam{\Xi^*}\preceq p\jam{\Lambda^*}$.
%
%We also need $\inner{\jam{\Phi}}{\jam{\Lambda}}=\inner{\jam{\Phi^*}}{\jam{\Lambda^*}}$, which follows from
%\begin{align*}
%\inner{\jam{\Phi}}{\jam{\Lambda}}
%&= \sum_{i,j} \inner{\col{A_i}\row{A_i}}{\col{B_j}\row{B_j}} \\
%&= \sum_{i,j} \inner{A_i}{B_j}\cdot\inner{B_j}{A_i}
%\end{align*}
%and $\inner{A}{B}=\ptr{}{A^*B}=\ptr{}{BA^*}=\inner{B^*}{A^*}$.
\end{proof}

%======================================================================
\subsection{Proof by convex polarity}
\label{subsec:convex-polarity}
%======================================================================

Our first proof of Theorem \ref{thm:max-prob} (\thmmaxprob) employs a notion from convex analysis known as ``polarity.''
But before we discuss this notion in detail, let us first introduce some notation.
We let 
\[ \st_r\subset\pos{\kprod{\cY}{1}{r}\ot\kprod{\cX}{1}{r}} \]
denote the set of all $r$-round non-measuring strategies for input spaces $\cX_1,\dots,\cX_r$ and output spaces $\cY_1,\dots,\cY_r$.
%We may abbreviate this notation to $\st$ whenever the number of rounds is clear from the context.
Similarly, we let 
\[ \cst_r\subset\pos{\kprod{\cY}{1}{r}\ot\kprod{\cX}{1}{r}} \]
denote the set of all $r$-round non-measuring co-strategies for these input and output spaces.
In keeping with our convention, we have $\st_0=\cst_0=\set{1}$.
We mentioned earlier that the sets $\st_r$ and $\cst_r$ are compact and convex.

For any set $\bC$ of positive semidefinite operators, we write
\[ \sub{\bC} \defeq \Set{ X : 0 \preceq X \preceq Y \textrm{ for some } Y\in\bC }. \]
A key component of this proof of Theorem \ref{thm:max-prob} is a characterization of the polar sets of $\subst_r$ and $\csubst_r$.

\subsubsection{Introduction to polarity}
%======================================================================

For any non-empty set $\bC$ of Hermitian operators, the \emph{polar} $\bC^\circ$ of $\bC$ is defined as
\[ \bC^\circ \defeq \Set{ A : \inner{B}{A}\leq 1 \textrm{ for all } B\in\bC } \]
and the \emph{support} and \emph{gauge} functions for $\bC$ are defined as
\begin{align*}
s(X\mid \bC) &\defeq \sup \Set{ \inner{X}{Y} : Y\in\bC } \\
g(X\mid \bC) &\defeq \inf \Set{ \lambda\geq 0 : X\in\lambda\bC }.
\end{align*}
Let us list some basic facts concerning these objects.

\begin{proposition}
\label{prop:polar-items}

  Let $\bC,\bD$ be non-empty sets of Hermitian operators.  The following hold:
  \begin{enumerate}
  \item
    If $\bC\subseteq\bD$ then $\bD^\circ\subseteq\bC^\circ$.
  \item
    If $-X\in\bC$ for each positive semidefinite operator $X$ then every element of $\bC^\circ$ is positive semidefinite.
  \item
    If $\bC$ is closed, convex, and contains the origin then the same is true of $\bC^\circ$.
    In this case we have $\bC^{\circ\circ}=\bC$ and
    \(
      s(\cdot\mid\bC) = g(\cdot\mid\bC^\circ).
%      \quad \textrm{ and } \quad
%      s(\cdot\mid\bC^\circ) = g(\cdot\mid\bC).
    \)
  \end{enumerate}

\end{proposition}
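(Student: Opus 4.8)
The plan is to treat the three items in turn; each reduces to unwinding the definitions, with only item~3 requiring anything beyond bookkeeping, namely one invocation of the Separation Theorem (Fact~\ref{fact:herm-sep}).

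For item~1, the inclusion is immediate from the definition of the polar: if $A\in\bD^\circ$ then $\inner{B}{A}\le 1$ for every $B\in\bD$, hence in particular for every $B\in\bC\subseteq\bD$, so $A\in\bC^\circ$. For item~2, suppose $-X\in\bC$ whenever $X\succeq 0$, and fix $A\in\bC^\circ$. For any $X\succeq 0$ and any $\lambda>0$ we have $\lambda X\succeq 0$, hence $-\lambda X\in\bC$, hence $-\lambda\inner{X}{A}=\inner{-\lambda X}{A}\le 1$, i.e.\ $\inner{X}{A}\ge-1/\lambda$; letting $\lambda\to\infty$ gives $\inner{X}{A}\ge 0$ for every $X\succeq 0$. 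Taking $X=uu^*$ shows $u^*Au\ge 0$ for all $u$, so $A\succeq 0$.

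For item~3, I would first note that $\bC^\circ=\bigcap_{B\in\bC}\Set{A:\inner{B}{A}\le 1}$ is an intersection of closed half-spaces, hence closed and convex, and that $0\in\bC^\circ$ since $\inner{B}{0}=0\le 1$ for all $B$. The inclusion $\bC\subseteq\bC^{\circ\circ}$ is again immediate from the definition. The reverse inclusion $\bC^{\circ\circ}\subseteq\bC$ is the crux: given $X\notin\bC$, closedness of $\bC$ provides an open ball $\bD$ around $X$ disjoint from $\bC$, and Fact~\ref{fact:herm-sep} (after negating the separating operator if necessary) yields $H\in\her{\cX}$ and $\alpha\in\Real$ with $\inner{H}{Y}\le\alpha<\inner{H}{X}$ for all $Y\in\bC$. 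Since $0\in\bC$ we have $\alpha\ge 0$. If $\alpha>0$ then $H/\alpha\in\bC^\circ$ while $\inner{H/\alpha}{X}>1$, so $X\notin\bC^{\circ\circ}$; if $\alpha=0$ then $\lambda H\in\bC^\circ$ for every $\lambda>0$ (here I use that $\inner{H}{\cdot}\le 0$ on $\bC$, which is exactly the cone clause of Fact~\ref{fact:herm-sep} applied to the ray through $H$) while $\inner{\lambda H}{X}\to\infty$, so again $X\notin\bC^{\circ\circ}$. Hence $\bC^{\circ\circ}=\bC$. Finally, for the support/gauge identity, for $\lambda>0$ one has $X\in\lambda\bC^\circ\iff X/\lambda\in\bC^\circ\iff\inner{B}{X}\le\lambda$ for all $B\in\bC\iff s(X\mid\bC)\le\lambda$; since $0\in\bC$ forces $s(X\mid\bC)\ge 0$, taking the infimum over such $\lambda$ gives $g(X\mid\bC^\circ)=s(X\mid\bC)$, with the convention that both sides are $+\infty$ when no finite $\lambda$ works.

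The main obstacle is the bipolar inclusion $\bC^{\circ\circ}\subseteq\bC$, and within it the bookkeeping annoyance that the separating constant $\alpha$ may vanish; this is precisely what the ``cone'' clause of Fact~\ref{fact:herm-sep} is for, since when $\alpha=0$ the separating functional is nonpositive on all of $\bC$ and so can be rescaled arbitrarily while remaining in $\bC^\circ$. Everything else is a routine unwinding of definitions.
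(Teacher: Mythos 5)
Your proof is correct, but it is worth noting that the paper never actually proves this proposition: it declares items 1 and 2 elementary and, for item 3, simply cites Rockafellar. Your argument is therefore genuinely more self-contained than what appears in the thesis — items 1 and 2 are the same routine unwinding the paper has in mind, while for item 3 you derive the bipolar identity $\bC^{\circ\circ}=\bC$ directly from the paper's own Fact~\ref{fact:herm-sep} (separating a point $X\notin\bC$ from $\bC$ by an open ball, normalizing the separating functional when $\alpha>0$, and rescaling it to infinity when $\alpha=0$), and then read off $s(\cdot\mid\bC)=g(\cdot\mid\bC^\circ)$ from the equivalence $X\in\lambda\bC^\circ\iff s(X\mid\bC)\leq\lambda$; this matches the standard Rockafellar-style proof and buys the thesis independence from the external reference, at the cost of a page of routine convex analysis. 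One small correction: in the $\alpha=0$ case you attribute the bound $\inner{H}{Y}\leq 0$ for all $Y\in\bC$ to the ``cone clause'' of Fact~\ref{fact:herm-sep}; that clause concerns the situation where $\bC$ itself is a cone and is not applicable here. The bound you need follows immediately from the separation inequality $\inner{H}{Y}\leq\alpha$ together with $\alpha=0$, so the step is valid — only the justification cited for it is misplaced.
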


The first two items of Proposition \ref{prop:polar-items} are elementary; a proof of the third may be found in Rockafellar \cite{Rockafellar70}.

\subsubsection{A characterization of polar sets of strategies}
%======================================================================

We now establish a useful characterization of polar sets derived from strategies and co-strategies.

\def\proppolar{Polar sets of strategies}
\begin{proposition}[\proppolar]
\label{prop:polar}
  The following polarity relations hold for the sets $\subst_r$ and $\csubst_r$:
  \begin{align*}
    (\subst_r)^\circ  &= \Set{ X : X\preceq Q \textrm{ for some } Q\in\cst_r },\\
    (\csubst_r)^\circ &= \Set{ X : X\preceq Q \textrm{ for some } Q\in\st_r }.
  \end{align*}

  %For each $X\in\her{\kprod{\cY}{1}{r}\ot\kprod{\cX}{1}{r}}$ we have
  %\begin{align*}
  %X\in (\subst_n)^\circ & \textrm{ if and only if }
  %X\leq Q \textrm{ for some choice of } Q\in\cst_n,\\
  %X\in (\csubst_n)^\circ & \textrm{ if and only if }
  %X\leq Q \textrm{ for some choice of } Q\in\st_n.
  %\end{align*}

\end{proposition}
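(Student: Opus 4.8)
The plan is to prove the first polarity relation, $(\subst_r)^\circ = \Set{X : X \preceq Q \textrm{ for some } Q \in \cst_r}$, by induction on the number of rounds $r$, and then to deduce the second relation from it using the correspondence (exploited in Lemma~\ref{lm:max-prob-switch}) between $r$-round co-strategies and $(r+1)$-round strategies on suitably permuted message spaces — a correspondence that is an isometry of the ambient Hermitian space, hence preserves inner products, the semidefinite order, and therefore polars. The base case $r=0$ is immediate: $\st_0 = \cst_0 = \set{1}$ and $\subst_0 = \csubst_0 = [0,1]$, whose polar is $\Set{x : x \leq 1} = \Set{x : x \leq q \textrm{ for some } q \in \set{1}}$.

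Write $\bD = \Set{X : X \preceq Q \textrm{ for some } Q \in \cst_r}$. The inclusion $\bD \subseteq (\subst_r)^\circ$ is the easy half: given $X \preceq Q$ with $Q \in \cst_r$, and $B \in \subst_r$ with $0 \preceq B \preceq P$, $P \in \st_r$, monotonicity of $\inner{B}{\cdot}$ (using $B \succeq 0$) and then of $\inner{\cdot}{Q}$ (using $Q \succeq 0$) gives $\inner{B}{X} \leq \inner{B}{Q} \leq \inner{P}{Q}$, and $\inner{P}{Q} = 1$ by Theorem~\ref{theorem:inner-product} applied to the single-outcome measuring strategy $\set{P}$ against the single-outcome measuring co-strategy $\set{Q}$. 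Hence $\inner{B}{X} \leq 1$, so $X \in (\subst_r)^\circ$.

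For the reverse inclusion I would first collect routine facts: since $\st_r$ and $\cst_r$ are compact and convex, the sets $\subst_r$, $\csubst_r$, and $\bD = \cst_r - \pos{\kprod{\cY}{1}{r}\ot\kprod{\cX}{1}{r}}$ are closed, convex, and contain the origin, and $\subst_r$ has nonempty interior because the linear characterization in Theorem~\ref{thm:char} shows $\st_r$ contains a positive multiple of the identity. Taking polars of the inclusion already proved and invoking the bipolar theorem (Proposition~\ref{prop:polar-items}(3)), the equality $(\subst_r)^\circ = \bD$ reduces to $\bD^\circ \subseteq \subst_r$; and since $-W \in \bD$ for every positive semidefinite $W$, Proposition~\ref{prop:polar-items}(2) identifies $\bD^\circ = \Set{X \succeq 0 : \inner{X}{Q} \leq 1 \textrm{ for all } Q \in \cst_r}$. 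So it suffices to show that every $X \succeq 0$ bounded by $1$ against all of $\cst_r$ satisfies $X \preceq P$ for some $P \in \st_r$ — equivalently, that the closed convex sets $\st_r$ and $\Set{Y : Y \succeq X}$ intersect. If they did not, the Hermitian Separation Theorem (Fact~\ref{fact:herm-sep}) would give an $H$ with $\sup_{P \in \st_r}\inner{H}{P} < \inf_{Y \succeq X}\inner{H}{Y}$; finiteness of the right side forces $H \succeq 0$ and makes it equal $\inner{H}{X}$, so after rescaling (the supremum is positive because $\st_r$ meets the positive multiples of the identity) we would have $H \succeq 0$, $\sup_{P \in \st_r}\inner{H}{P} = 1$, and $\inner{H}{X} > 1$.

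The contradiction must come from showing that such an $H$ — positive semidefinite and bounded by $1$ against every $r$-round strategy — is itself dominated by an $r$-round co-strategy, for then $\inner{H}{X} \leq 1$. This is where the induction is used: by Theorem~\ref{thm:char} one has $P \in \st_r \iff \ptr{\cY_r}{P} = P' \ot I_{\cX_r}$ with $P' \in \st_{r-1}$, and $Q \in \cst_r \iff Q = R \ot I_{\cY_r}$ with $\ptr{\cX_r}{R} \in \cst_{r-1}$; I would peel the last round off the family of scalar inequalities $\inner{H}{P} \leq 1$, apply the induction hypothesis (in its reduced form, at level $r-1$) to produce a dominating $(r-1)$-round co-strategy for the ``history'' part, and then reattach an $r$-th round to assemble the required $Q = R \ot I_{\cY_r} \in \cst_r$ with $H \preceq Q$. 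I expect this last step to be the main obstacle: passing from the scalar bound $\inner{H}{P} \leq 1$ to the operator inequality $H \preceq R \ot I_{\cY_r}$ will itself need a second appeal to semidefinite duality / separation inside the inductive step, together with careful tracking of which subsystems each partial trace removes so that the rigid ``tensor-with-identity'' form demanded by Theorem~\ref{thm:char} is respected.
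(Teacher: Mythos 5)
There is a genuine gap, and you have in fact flagged it yourself: the entire content of the hard direction is deferred to the final ``peeling/reattaching'' step, which is never carried out. Your reductions are fine as far as they go — the easy inclusion, the bipolar bookkeeping showing that $(\subst_r)^\circ=\bD$ reduces to $\bD^\circ\subseteq\subst_r$, and the identification of $\bD^\circ$ with the positive semidefinite operators bounded by $1$ against $\cst_r$ are all correct. But your next move is a detour that goes in a circle: to prove ``$X\succeq 0$ bounded against all co-strategies is dominated by a strategy'' you separate and reduce to ``$H\succeq 0$ bounded against all strategies is dominated by a co-strategy,'' which is exactly the same statement with the roles of $\st_r$ and $\cst_r$ exchanged (it is the positive semidefinite part of the very polarity relation you set out to prove). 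Nothing has been gained; the real work — relating the level-$r$ statement to the level-$(r-1)$ one — is still entirely contained in the step you describe as ``the main obstacle,'' namely converting the scalar bounds $\inner{H}{P}\leq 1$ into an operator inequality $H\preceq R\ot I_{\cY_r}$ with the partial-trace structure of Theorem~\ref{thm:char}. In the paper this is precisely what Lemma~\ref{lemma:polar-technical} supplies: for a downward-closed set $\bC=\set{X : X\preceq Y\ot I_\cW \textrm{ for some } Y\in\bD}$ one has $\bC^\circ=\set{Q\succeq 0 : \ptr{\cW}{Q}\in\bD^\circ}$, and applying this twice per round (once for $\cY_r$, once for $\cX_r$), together with the bipolar theorem and the induction hypothesis $(\subst_{r-1})^\circ=\set{X : X\preceq Q,\ Q\in\cst_{r-1}}$, performs the peeling in one clean computation. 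Without this lemma or an equivalent, your inductive step is an unproven assertion, and your own remark that it ``will itself need a second appeal to semidefinite duality/separation'' concedes as much.

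Two secondary points. First, your separation step invokes Fact~\ref{fact:herm-sep} on two sets neither of which is open ($\st_r$ and $\set{Y : Y\succeq X}$); strict separation of a compact convex set from a disjoint closed convex set is standard, but it is not literally the statement of the fact you cite, so it would need a word of justification. Second, your plan to deduce the co-strategy relation from the strategy relation at $r+1$ rounds via the embedding of $r$-round co-strategies as $(r+1)$-round strategies on permuted spaces is plausible (it is the same device as Lemma~\ref{lm:max-prob-switch}), but it requires checking that the correspondence is a bijection carrying $\cst_r$ onto the permuted $(r+1)$-round strategies and $\st_r$ onto the permuted $(r+1)$-round co-strategies; the paper avoids this by the purely convex-analytic Lemma~\ref{lemma:switch-polar}, which derives one relation from the other by taking polars. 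These are fixable; the missing inductive lemma is not a detail but the heart of the proof.
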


We begin with a simple observation implying that the two equalities in Proposition \ref{prop:polar} are equivalent.

\begin{lemma} \label{lemma:switch-polar}
  Let $\bA,\bB$ be non-empty, closed, and convex sets of positive semidefinite operators, and suppose
  \[  (\sub{\bA})^\circ = \Set{ X : X\preceq Q \text{ for some } Q\in\bB}. \]
  Then
  \[  (\sub{\bB})^\circ = \Set{ Y : Y\preceq R \text{ for some } R\in\bA}. \]
\end{lemma}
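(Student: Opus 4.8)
The statement is a symmetry lemma: if the polar of $\sub{\bA}$ is described by operators dominated by some element of $\bB$, then the polar of $\sub{\bB}$ is described by operators dominated by some element of $\bA$. The natural approach is to apply the bipolar theorem (item~3 of Proposition~\ref{prop:polar-items}) twice, together with a direct computation of the polar of a set of the form $\set{X : X \preceq Q \textrm{ for some } Q \in \bB}$. The key structural fact is that both $\sub{\bA}$ and the right-hand side $\set{X : X\preceq Q \textrm{ for some } Q\in\bB}$ are closed, convex, contain the origin, and are ``downward closed'' in the relevant senses, so the bipolar theorem applies to them.

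\textbf{Key steps, in order.} First I would record that $\sub{\bA}$ is closed, convex, contains $0$, and contains $-X$ for every positive semidefinite $X$ is \emph{not} quite true---rather, $\sub{\bA}$ consists of positive semidefinite operators---so instead I would note that $\sub{\bA}$ is closed, convex, and contains the origin, hence $(\sub{\bA})^{\circ\circ} = \sub{\bA}$. Second, let $\bB' \defeq \set{X : X \preceq Q \textrm{ for some } Q \in \bB}$; the hypothesis says $(\sub{\bA})^\circ = \bB'$, so taking polars of both sides gives $\sub{\bA} = (\bB')^\circ$. Third---and this is the computational heart---I would compute $(\bB')^\circ$ directly from the definition: $A \in (\bB')^\circ$ iff $\inner{X}{A} \le 1$ for all $X$ with $0 \preceq X \preceq Q$, $Q\in\bB$. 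Since the constraint region includes all $0 \preceq X$ with $X \preceq Q$ for arbitrarily ``large'' $Q\in\bB$, and in particular all small positive semidefinite $X$, one deduces $A \succeq 0$; and then $\sup_{0\preceq X\preceq Q}\inner{X}{A} = \inner{Q}{A}$ (the sup over the order interval $[0,Q]$ of a linear functional with positive semidefinite $A$ is attained at $X = Q$). Hence $(\bB')^\circ = \set{A \succeq 0 : \inner{Q}{A} \le 1 \textrm{ for all } Q \in \bB} = (\bB)^\circ \cap \pos{}$, but actually more cleanly $(\bB')^\circ = \set{A : \inner{Q}{A}\le 1 \ \forall Q\in\bB,\ A\succeq 0}$. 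Combining with step two: $\sub{\bA} = \set{A : A\succeq 0, \inner{Q}{A}\le 1\ \forall Q\in\bB}$. Fourth, I would run the exact same argument with the roles of $\bA,\bB$ swapped, but now \emph{deriving} rather than assuming: since $\sub{\bA}$ is closed convex containing $0$, we have $(\sub{\bB})^\circ = $ (something); to pin it down, observe from the displayed equality for $\sub{\bA}$ that $(\sub{\bA})^\circ$, by the bipolar theorem, equals $\set{A\succeq 0 : \inner{Q}{A}\le 1\ \forall Q\in\bB}$'s polar, and massage symmetrically. Concretely: from $\sub{\bA} = \set{A : A \succeq 0, \inner{Q}{A}\le 1\ \forall Q\in\bB}$ I would take polars again to get $(\sub{\bA})^\circ = \convex(\sub{\bB} \cup \set{-X : X\succeq 0})^{\circ\circ}$-type expression; cleaner is to show directly that $\set{Y : Y\preceq R \textrm{ for some } R\in\bA}$ is exactly $(\sub{\bB})^\circ$ by verifying the two inclusions using the formula for $\sub{\bA}$ just obtained and the analogous computation of $(\sub{\bB})^\circ$.

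\textbf{Anticipated main obstacle.} The delicate point is the bookkeeping about positive semidefiniteness: the sets $\sub{\bA}$, $\bB'$ live inside $\pos{}$, not in all of $\her{}$, so I must be careful that ``$\bC^{\circ\circ} = \bC$'' is applied only to sets that are closed, convex, and contain the origin (which these are) and that the polar is taken within $\her{}$ (not within $\pos{}$). The cleanest route is probably to first prove the auxiliary identity
\[
  \Set{X : X \preceq Q \textrm{ for some } Q \in \bB}^\circ = \Set{A\succeq 0 : \inner{Q}{A}\le 1 \textrm{ for all } Q\in\bB}
\]
for any nonempty closed convex $\bB\subset\pos{}$, and dually
\[
  (\sub{\bB})^\circ = \Set{A : \inner{X}{A}\le 1 \textrm{ for all } X\in\sub{\bB}},
\]
and then the lemma becomes a three-line manipulation: apply the hypothesis, take polars, use the bipolar theorem on $\sub{\bA}$, substitute the auxiliary identity, and recognize the right-hand side. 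I expect roughly half a page once the auxiliary identity is in hand; the auxiliary identity itself is the only place where a genuine (though short) argument---the attainment of the supremum over an order interval at its top endpoint, plus a limiting argument forcing $A\succeq 0$---is needed.
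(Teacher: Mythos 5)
Your first three steps are sound in substance: taking polars of the hypothesis and using the bipolar theorem on $\sub{\bA}$, together with the computation of the polar of the downward-closed set $\bB'=\set{X : X\preceq Q \textrm{ for some } Q\in\bB}$, correctly yields $\sub{\bA} = (\bB')^\circ = \set{A\succeq 0 : \inner{Q}{A}\leq 1 \textrm{ for all } Q\in\bB}$, which says exactly that $\sub{\bA}$ is the set of positive semidefinite elements of $(\sub{\bB})^\circ$. (One slip in the writing: membership in $(\bB')^\circ$ is not tested only against $0\preceq X\preceq Q$, and positivity of $A$ is not deduced from ``small positive semidefinite $X$''; it comes from the fact that $-tP\in\bB'$ for every positive semidefinite $P$ and every $t>0$. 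The final formula is nonetheless correct.)

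The genuine gap is in your fourth step, and it is exactly where the content of the lemma lies. Your derived identity determines only the positive semidefinite part of $(\sub{\bB})^\circ$, while the lemma describes the whole set, including its non-positive elements. The inclusion $\set{Y : Y\preceq R \textrm{ for some } R\in\bA}\subseteq(\sub{\bB})^\circ$ does follow from your identity, but the reverse inclusion---given an arbitrary Hermitian $Y\in(\sub{\bB})^\circ$, produce $R\in\bA$ with $Y\preceq R$---is never supplied, and it does not follow by ``recognizing the right-hand side.'' You would need that every element of $(\sub{\bB})^\circ$ lies below some positive semidefinite element of $(\sub{\bB})^\circ$, and the natural candidate $Y^+$ fails: $\sup\set{\inner{X}{Y} : 0\preceq X\preceq Q}$ can be strictly smaller than $\inner{Q}{Y^+}$ (equivalently, $X\mapsto X^+$ is not operator monotone), so $Y\in(\sub{\bB})^\circ$ does not imply $Y^+\in(\sub{\bB})^\circ$. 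For instance, with $\bB$ a single rank-one $Q$, one easily finds $Y\in(\sub{\bB})^\circ$ whose positive part violates the polar condition, even though a suitable dominating $R$ exists. The way to close the gap is essentially the paper's proof: set $\bC=\set{Y : Y\preceq R \textrm{ for some } R\in\bA}$, compute $\bC^\circ=\sub{\bB}$ directly (the same downward-closed polar computation you already performed, now combined with the hypothesis), and then apply the bipolar theorem to $\bC$---after noting that $\bC$ is closed, convex, and contains the origin---to conclude $\bC=\bC^{\circ\circ}=(\sub{\bB})^\circ$. Your outline invokes the bipolar theorem only for $\sub{\bA}$; without applying it to $\bC$ (or an equivalent separation argument), the hard inclusion is missing.
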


\begin{proof}
Let \[ \bC = \Set{ Y : Y\preceq R \text{ for some } R\in\bA }. \]
The lemma is proved by showing $\bC^\circ=\sub{\bB}$, from which the desired result $\bC=(\sub{\bB})^\circ$ immediately follows.
Let us start by proving $\bC^\circ\subseteq\sub{\bB}$.
As $-P\in\bC$ for every positive semidefinite $P$, it follows that every element of the polar
$\bC^\circ$ is positive semidefinite.
Clearly $\sub{\bA}\subseteq\bC$, and therefore $\bC^\circ\subseteq(\sub{\bA})^\circ$.
By definition, $\sub{\bB}$ consists of the positive semidefinite elements of $(\sub{\bA})^\circ$, so we have $\bC^\circ\subseteq\sub{\bB}$.

On the other hand, we have that every $Q\in\sub{\bB}$ is contained in $(\sub{\bA})^\circ$, implying that $\inner{Q}{R}\leq 1$ for all $R\in\bA$.
As $Q$ is positive semidefinite, this also implies that $\inner{Q}{X}\leq 1$ for all $X\preceq R$.
Consequently, $Q\in \bC^\circ$.
Thus $\sub{\bB} = \bC^{\circ}$ as desired.
\end{proof}

We also require a technical statement that simplifies computations involving $\subst_r$ and $\csubst_r$.

\begin{lemma} \label{lemma:polar-technical}
  Let $\bD \subseteq \her{\cV}$ be any closed, convex set that contains the origin and let
  $\bC\subseteq \her{\cV\ot\cW}$ be given by
  \[ \bC = \Set{ X :  X \preceq Y\otimes I_\cW \textrm{ for some } Y\in\bD }. \]
  Then $\bC$ is also a closed, convex set that contains the origin and
  \[ \bC^\circ = \Set{ Q : Q\succeq 0 \textrm{ and } \ptr{\cW}{Q} \in \bD^\circ }. \]
\end{lemma}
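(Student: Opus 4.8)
The plan is to prove the two containments $\bC^\circ \subseteq \Set{Q : Q\succeq 0, \ptr{\cW}{Q}\in\bD^\circ}$ and $\bC^\circ \supseteq \Set{Q : Q\succeq 0, \ptr{\cW}{Q}\in\bD^\circ}$ separately, after first disposing of the easy structural claims. The fact that $\bC$ is convex and contains the origin is immediate: $0 \preceq 0\otimes I_\cW$ since $0\in\bD$, and if $X_1 \preceq Y_1\otimes I_\cW$ and $X_2\preceq Y_2\otimes I_\cW$ with $Y_1,Y_2\in\bD$, then $\lambda X_1 + (1-\lambda)X_2 \preceq (\lambda Y_1 + (1-\lambda)Y_2)\otimes I_\cW$ and $\lambda Y_1 + (1-\lambda)Y_2\in\bD$ by convexity of $\bD$. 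Closedness of $\bC$ requires a small argument: if $X_n\to X$ with $X_n\preceq Y_n\otimes I_\cW$ and $Y_n\in\bD$, I would recover $Y_n$ from $X_n$ --- note that $\ptr{\cW}{Y_n\otimes I_\cW - X_n}\succeq 0$ gives $Y_n \succeq \frac{1}{\dim\cW}\ptr{\cW}{X_n}$, so the $Y_n$ that matter can be taken to lie in a bounded region (intersect with a large ball, WLOG), extract a convergent subsequence $Y_{n_k}\to Y\in\bD$ by closedness of $\bD$, and pass to the limit in $X_{n_k}\preceq Y_{n_k}\otimes I_\cW$ to get $X\preceq Y\otimes I_\cW$.

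For the containment $\bC^\circ \subseteq \Set{Q : Q\succeq 0, \ptr{\cW}{Q}\in\bD^\circ}$: let $Q\in\bC^\circ$. Since $-P\in\bC$ for every positive semidefinite $P$ (take $Y=0\in\bD$), item 2 of Proposition \ref{prop:polar-items} gives $Q\succeq 0$. Now I must show $\ptr{\cW}{Q}\in\bD^\circ$, i.e. $\inner{Y}{\ptr{\cW}{Q}}\le 1$ for all $Y\in\bD$. Fix $Y\in\bD$. Here I would use the identity $\inner{Y}{\ptr{\cW}{Q}} = \inner{Y\otimes I_\cW}{Q}$, which holds because $\trace_\cW$ is the adjoint of the map $Z\mapsto Z\otimes I_\cW$ (concretely, $\ptr{}{(Y^*\otimes I_\cW)Q} = \ptr{}{Y^* \ptr{\cW}{Q}}$). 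Since $Y\otimes I_\cW\in\bC$ (taking $X = Y\otimes I_\cW\preceq Y\otimes I_\cW$), the definition of the polar gives $\inner{Y\otimes I_\cW}{Q}\le 1$, hence $\inner{Y}{\ptr{\cW}{Q}}\le 1$ as needed.

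For the reverse containment: suppose $Q\succeq 0$ and $\ptr{\cW}{Q}\in\bD^\circ$. I need $\inner{X}{Q}\le 1$ for every $X\in\bC$. Given such an $X$, pick $Y\in\bD$ with $X\preceq Y\otimes I_\cW$. Since $Q\succeq 0$, the inequality $X\preceq Y\otimes I_\cW$ is preserved under taking inner product with $Q$: $\inner{X}{Q}\le \inner{Y\otimes I_\cW}{Q} = \inner{Y}{\ptr{\cW}{Q}}\le 1$, where the last step uses $\ptr{\cW}{Q}\in\bD^\circ$ and $Y\in\bD$. This completes both containments and hence the lemma. \textbf{The main obstacle} I anticipate is the closedness argument for $\bC$, since $\bC$ is defined by an existential quantifier over the unbounded set of $Y\in\bD$ satisfying $X\preceq Y\otimes I_\cW$; the key observation making it work is that $\ptr{\cW}$ applied to the defining inequality pins down $Y$ from below, so the relevant $Y$'s can be restricted to a bounded set without changing $\bC$. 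Everything else is a routine application of adjointness of $\trace_\cW$ and monotonicity of the inner product against a positive semidefinite operator.
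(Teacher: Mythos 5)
Your computation of $\bC^\circ$ is correct and is essentially the paper's own argument: elements of $\bC^\circ$ are positive semidefinite because $-P\in\bC$ for every $P\succeq 0$ (using $0\in\bD$), the adjointness identity $\inner{Y\ot I_\cW}{Q}=\inner{Y}{\ptr{\cW}{Q}}$ handles one direction, and for the reverse direction positivity of $Q$ gives $\inner{X}{Q}\leq\inner{Y\ot I_\cW}{Q}$ whenever $X\preceq Y\ot I_\cW$. Note that this part nowhere uses closedness of $\bC$, and it is the only part of the lemma the paper proves explicitly.

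The gap is in your closedness argument. The inequality $Y_n\succeq\frac{1}{\dim(\cW)}\ptr{\cW}{X_n}$ bounds $Y_n$ only from \emph{below}, so it does not justify the step ``the $Y_n$ that matter can be taken to lie in a bounded region,'' and that step cannot be repaired at this level of generality: the closedness claim itself fails for arbitrary closed convex $\bD\ni 0$. For instance, take $\cV=\mathbb{C}^2$, $\cW=\mathbb{C}$, and $\bD=\Set{\diag(x,y) : y\geq 0,\ x\leq y/(y+1)}$, which is closed, convex, and contains the origin. Then $\diag\pa{\tfrac{n}{n+1},0}\in\bC$, but every witness $Y\in\bD$ with $Y\succeq\diag\pa{\tfrac{n}{n+1},0}$ must have second diagonal entry at least $n$ (so no bounded choice of witnesses exists), and the limit $\diag(1,0)$ is not in $\bC$, since $Y\succeq\diag(1,0)$ would force the first entry of $Y$ to be at least $1$, impossible in $\bD$. (If you want $\dim(\cW)\geq 2$, tensor with $I_\cW$ and use $A\preceq B \iff A\ot I_\cW\preceq B\ot I_\cW$.) The paper does not actually prove this clause---it asserts closedness ``immediately from the definition''---and in every application of the lemma in Proposition \ref{prop:polar} the relevant $\bD$ is bounded, in which case $\bC=(\bD\ot I_\cW)-\pos{\cV\ot\cW}$ is the sum of a compact set and a closed cone and is genuinely closed. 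So your polar argument stands as written, but for the closedness clause you should either add a boundedness-type hypothesis on $\bD$ (which covers the paper's uses) or observe that the polar identity does not require it.
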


\begin{proof}
  That $\bC$ is closed, convex, and contains the origin follows immediately from its definition.
  Let us compute $\bC^\circ$.
  The assumption $0\in\bD$ implies that $-R \in \bC$ for every positive semidefinite $R$, and hence every element of the polar $\bC^{\circ}$ must be positive semidefinite.
  If it is the case that $Q\in\bC^{\circ}$ then for all $Y\in\bD$ we have
  \[ 1 \geq \inner{Q}{Y\otimes I_{\cW}} = \inner{\ptr{\cW}{Q}}{Y} \]
  and so $\ptr{\cW}{Q} \in \bD^{\circ}$ as desired.

  On the other hand, if $\ptr{\cW}{Q} \in \bD^{\circ}$ then
  \[ 1 \geq  \inner{\ptr{\cW}{Q}}{Y} = \inner{Q}{Y\otimes I_{\cW}} \]
  for all $Y\in\bD$.
  If in addition $Q$ is positive semidefinite then it also holds that $\inner{Q}{X}\leq 1$ for all $X\preceq Y\otimes I_{\cW}$ and therefore $Q\in\bC^\circ$.
\end{proof}

We are now ready to prove the desired characterization of polar sets of strategies.

\begin{proof}[Proof of Proposition \ref{prop:polar} (\proppolar)]
  The proof is by induction on the number of rounds $r$.
  For the base case $r=0$, we have
%  \begin{align*}
%    \subst_0=\csubst_0&=[0,1],\\
%    (\subst_0)^\circ = (\csubst_0)^\circ &= (-\infty,1]
%  \end{align*}
  $\subst_0=\csubst_0=[0,1]$  and
  $(\subst_0)^\circ = (\csubst_0)^\circ = (-\infty,1]$
  and so the lemma holds when $r=0$.

  For the general case, we note that the two items in the statement of the proposition are equivalent by Lemma \ref{lemma:switch-polar}, so it suffices to prove only the first.
  Let
  \begin{align*}
    \bA &= \Set{ X : X \succeq 0 \textrm{ and } \ptr{\cY_r}{X} \in \bB },\\
    \bB &= \Set{ Y : Y \preceq R\ot I_{\cX_r} \textrm{ for some } R\in\subst_{r-1} }.
  \end{align*}
  Then by Theorem \ref{thm:char} (\thmchar) we have \[\subst_r = \bA.\]
  We apply Lemma \ref{lemma:polar-technical} twice---once with
  $(\bC^\circ,\bD^\circ,\cW)=(\bA,\bB,\cY_r)$ and once with
  $(\bC,\bD,\cW)=(\bB,\subst_{r-1},\cX_r)$---to obtain
  \begin{align*}
    (\subst_r)^\circ = \bA^\circ &=
      \Set{ X : X\preceq Q\ot I_{\cY_r} \textrm{ for some } Q\in\bB^\circ },\\
    \bB^\circ &= \Set{ Y : Y\succeq 0 \textrm{ and } \ptr{\cX_r}{Y}\in(\subst_{r-1})^\circ }.
  \end{align*}
  By the induction hypothesis we have
  \[ (\subst_{r-1})^\circ = \Set{ X : X\preceq Q \textrm{ for some } Q\in\cst_{r-1} }. \]
  Substituting this expression into the above expression for $(\subst_r)^\circ$ we find that the proposition follows from Theorem \ref{thm:char} (\thmchar).
\end{proof}

\subsubsection{First proof of Theorem~\ref{thm:max-prob} (\thmmaxprob)}
%======================================================================

\begin{proof}[First proof of Theorem~\ref{thm:max-prob}]
  We prove the theorem for strategies---the result for co-strategies then follows from Lemma \ref{lm:max-prob-switch}.
  Let $p$ denote the maximum probability with which $\set{Q_m}$ can be forced to output $m$ in an interaction with some compatible co-strategy.
  It follows from Theorem~\ref{theorem:inner-product} (\theoreminnerproduct) that $p = s(Q_m \mid \cst_r)$.
  Using Proposition~\ref{prop:polar} (\proppolar), along with the fact that $Q_m$ is positive semidefinite, we have
  \[
    s(Q_m \mid \cst_r) = s(Q_m \mid \csubst_r)
    = g(Q_m \mid (\csubst_r)^{\circ}) = g(Q_m \mid \subst_r),
  \]
  which completes the proof.
\end{proof}

%======================================================================
\subsection{Proof by semidefinite optimization duality}
\label{subsec:semidefinite-duality}
%======================================================================

Our second proof of Theorem \ref{thm:max-prob} (\thmmaxprob) employs the powerful machinery of semidefinite optimization duality.
The idea is to construct a semidefinite optimization problem that captures the maximum output probability, compute its dual problem, and then show that the primal and dual problems satisfy the conditions for so-called ``strong'' duality.

\subsubsection{Overview of semidefinite optimization}
%======================================================================

The semidefinite optimization problems we consider are expressed in \emph{super-operator form}.
While the super-operator form differs superficially from the more conventional \emph{standard form} for these problems, the two forms can be shown to be equivalent and the super-operator form is more convenient for our purpose.
%The overview presented here is lifted from Watrous \cite{Watrous09}.
Watrous provides a helpful overview of this form of semidefinite optimization \cite{Watrous09}.
For completeness, that overview is reproduced here.

Let $\cX,\cY$ be complex Euclidean spaces.
A \emph{semidefinite optimization problem} for these spaces is specified by a triple $(\Phi,A,B)$ where $\Phi:\lin{\cX}\to\lin{\cY}$ is a Hermitian-preserving super-operator and $A\in\her{\cX}$ and $B\in\her{\cY}$.
This triple specifies two optimization problems:
\begin{align*}
\textrm{\underline{Primal problem}} && \textrm{\underline{Dual problem}} \\
\textrm{maximize} \quad & \inner{A}{X} & \textrm{minimize} \quad & \inner{B}{Y} \\
\textrm{subject to} \quad & \Phi(X) \preceq B & \textrm{subject to} \quad & \Phi^*(Y) \succeq A\\
& X \in \pos{\cX} & & Y \in \pos{\cY}
\end{align*}
An operator $X$ obeying the constraints of the primal problem is said to be \emph{primal feasible}, while an operator $Y$ obeying the constraints of the dual problem is called \emph{dual feasible}.
The functions $X\mapsto\inner{A}{X}$ and $Y\mapsto\inner{B}{Y}$ are called the primal and dual \emph{objective functions}, respectively.
We let
\begin{align*}
  \alpha &\defeq \sup \Set{ \inner{A}{X}: \textrm{ $X$ is primal feasible} } \\
  \beta  &\defeq \inf \Set{ \inner{B}{Y}: \textrm{ $Y$ is dual feasible} }
\end{align*}
denote the \emph{optimal values} of the primal and dual problems.
(If there are no primal or dual feasible operators then we adopt the convention $\alpha=-\infty$ and $\beta=\infty$, respectively.)
It is not always the case that the optimal value is actually attained by a feasible operator for these problems.

Semidefinite optimization problems derive great utility from the notions of \emph{weak} and \emph{strong duality}.
Essentially, weak duality asserts that $\alpha\leq\beta$ for all triples $(\Phi,A,B)$, whereas strong duality provides conditions on $(\Phi,A,B)$ under which $\alpha=\beta$.
Two such conditions are stated explicitly as follows.

\def\factstrongduality{Strong duality conditions}
\begin{fact}[\factstrongduality]
\label{fact:strong-duality}

  Let $(\Phi,A,B)$ be a semidefinite optimization problem.
  The following hold:
  \begin{enumerate}
  
  \item \label{item:strong-primal}
    (Strict primal feasibility.)
    Suppose $\beta$ is finite and there exists a primal feasible operator $X$ such that $X$ is positive definite and $\Phi(X)\prec B$.
    Then $\alpha=\beta$ and $\beta$ is achieved by some dual feasible operator.

  \item \label{item:strong-dual}
    (Strict dual feasibility.)
    Suppose $\alpha$ is finite and there exists a dual feasible operator $Y$ such that $Y$ is positive definite and $\Phi^*(Y)\succ A$.
    Then $\alpha=\beta$ and $\alpha$ is achieved by some primal feasible operator.
  
  \end{enumerate}
  
\end{fact}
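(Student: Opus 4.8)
The plan is to prove item~\ref{item:strong-primal} (the strict-primal-feasibility case) from scratch using the Separation Theorem, and then to obtain item~\ref{item:strong-dual} essentially for free by applying item~\ref{item:strong-primal} to the dual problem rewritten as a primal problem. The starting point is \emph{weak duality}: for any primal feasible $X$ and dual feasible $Y$,
\[ \inner A X \le \inner{\Phi^*(Y)}{X} = \inner{Y}{\Phi(X)} \le \inner B Y, \]
where the first inequality uses $\Phi^*(Y)-A\succeq 0$ and $X\succeq 0$, the middle equality is the adjoint identity, and the last uses $B-\Phi(X)\succeq 0$ and $Y\succeq 0$. Hence $\alpha\le\beta$ always; in particular, under the hypotheses of item~\ref{item:strong-primal} (a primal feasible operator exists and $\beta<\infty$) the value $\alpha$ is finite.

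For item~\ref{item:strong-primal} I would work in the real vector space $\her{\cY}\oplus\Real$ and introduce the convex set
\[ \cS = \Set{ \Pa{\Phi(X)-B+Z,\ \inner A X - \rho} : X\in\pos{\cX},\ Z\in\pos{\cY},\ \rho\ge 0 }. \]
The slack operator $Z\succeq 0$ makes $\cS$ upward closed in its first coordinate (in the semidefinite order) and the slack scalar $\rho\ge 0$ makes it downward closed in its second coordinate. These slacks also guarantee that $\cS$ has nonempty interior: taking $X=0$, $Z=I_{\cY}$, $\rho=1$ gives a point whose first coordinate ranges over a full neighbourhood as $Z$ is perturbed by small Hermitian operators, and whose second coordinate ranges over an interval. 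On the other hand $(0,\alpha)\notin\mathrm{int}(\cS)$, for an interior point $(0,\alpha)$ would force $(0,\alpha+\varepsilon)\in\cS$ for some $\varepsilon>0$, producing a primal feasible operator of objective value exceeding $\alpha$.

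I would then apply the Separation Theorem (Fact~\ref{fact:herm-sep}) to the disjoint convex sets $\set{(0,\alpha)}$ and the open convex set $\mathrm{int}(\cS)$, obtaining $(Y,\mu)\in\her{\cY}\oplus\Real$ and $c\in\Real$ with $\inner Y W + \mu t \le c \le \mu\alpha$ for all $(W,t)\in\cS$, the inequality extending from the interior to all of $\cS$ by continuity. Downward closure in $t$ forces $\mu\ge 0$; upward closure in $W$ forces $Y\preceq 0$, so $P\defeq -Y$ is positive semidefinite. \emph{This is where the strict feasibility hypothesis does its essential work, and I expect it to be the main obstacle of the argument:} I must rule out the degenerate ``vertical'' separating hyperplane $\mu=0$. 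If $\mu=0$ then $(Y,\mu)\ne 0$ gives $P\ne 0$, and evaluating the separating inequality at the point $(\Phi(X)-B,\inner A X)\in\cS$ for the strictly feasible $X$ yields $\inner P{\Phi(X)-B}\ge 0$, contradicting $\Phi(X)-B\prec 0$ together with $0\ne P\succeq 0$. (Mere feasibility, giving only $\Phi(X)-B\preceq 0$, would not produce this contradiction.) Hence $\mu>0$, and after rescaling I may assume $\mu=1$. Evaluating the inequality at $(\Phi(X)-B,\inner A X)$ for arbitrary $X\succeq 0$ and rearranging gives $\inner{A-\Phi^*(P)}{X}\le\alpha-\inner B P$ for all $X\succeq 0$; unboundedness of $X\succeq 0$ forces $\Phi^*(P)\succeq A$, so $P$ is dual feasible, while $X=0$ gives $\inner B P\le\alpha$. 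Weak duality then yields $\beta\le\inner B P\le\alpha\le\beta$, so $\alpha=\beta$ and $P$ attains the dual optimum.

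Finally, item~\ref{item:strong-dual} follows by duality symmetry rather than a repeated argument. After negating its objective, the dual problem is exactly a primal problem in super-operator form with data $(-\Phi^*,-B,-A)$, and one checks that its own dual is the negation of the original primal. Strict dual feasibility of $(\Phi,A,B)$ is precisely strict primal feasibility of this transformed problem, and the finiteness of $\alpha$ makes the transformed dual value finite. Applying item~\ref{item:strong-primal} to $(-\Phi^*,-B,-A)$ therefore gives equality of the (negated) values and attainment of \emph{its} dual, i.e. of the original primal, which is item~\ref{item:strong-dual}.
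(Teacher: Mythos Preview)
The paper does not prove this statement: it is presented as a background ``Fact'' in the overview of semidefinite optimization (reproduced from Watrous~\cite{Watrous09}) and is invoked without proof in the arguments of Section~\ref{subsec:semidefinite-duality}. There is therefore no paper proof to compare against.

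That said, your argument is a correct and standard derivation of conic strong duality via separation. The key steps all check out: the set $\cS$ is convex with nonempty interior; $(0,\alpha)$ lies outside that interior (here you implicitly use that $\alpha$ is finite, which follows from weak duality together with the hypotheses that a primal feasible point exists and $\beta<\infty$); the separating functional $(Y,\mu)$ has $Y\preceq 0$ and $\mu\ge 0$ by the slack closures; and the strict inequality $\Phi(X)\prec B$ is exactly what forces $\mu>0$ and hence excludes the degenerate vertical hyperplane. Your reduction of item~\ref{item:strong-dual} to item~\ref{item:strong-primal} via the transformed problem $(-\Phi^*,-B,-A)$ is also correct. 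One minor remark: your argument never actually uses the hypothesis $X\succ 0$, only $\Phi(X)\prec B$; the positive-definiteness of $X$ in the statement is a slightly stronger assumption than your proof requires, which is harmless.
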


\subsubsection{A semidefinite optimization problem for maximum output probabilities}
%======================================================================

We now construct a triple $(\Phi,A,B)$ whose primal problem captures the maximum probability with which a given measuring co-strategy $\set{R_b}$ can be forced to output a given outcome $b$ by some compatible strategy.
The dual problem will capture the minimization stated in Theorem \ref{thm:max-prob}.
Equality between these two values will follow when we establish strong duality for $(\Phi,A,B)$.

It is convenient to express the components of our triple $(\Phi,A,B)$ in block form.
The operators $A,B$ are given by
\[
  A = \Pa{
  \begin{array}{cccc}
    0 &&& 0\\ & \ddots \\ && 0 \\ 0 &&& R_b
  \end{array} }
  \qquad
  B = \Pa{
  \begin{array}{cccc}
    I_{\cX_1} &&& 0 \\ & 0 \\ && \ddots \\ 0 &&& 0
  \end{array} }
\]
and the super-operator $\Phi$ is given by
\begin{align*}
  \Phi :&
  \Pa{
  \begin{array}{ccc}
    S_1 \\
    & \ddots\\
    && S_r
  \end{array} } \\
  %\stackrel{\Phi}{\longmapsto}&
  \mapsto&
  \Pa{
  \begin{array}{cccc}
    \ptr{\cY_1}{S_1} &&& 0 \\
    & \ptr{\cY_2}{S_2} - S_1\ot I_{\cX_2} \\
    && \ddots\\
    0 &&& \ptr{\cY_r}{S_r} - S_{r-1}\ot I_{\cX_r}
  \end{array} }.
\end{align*}
To be clear, $\Phi$ depends only upon the diagonal blocks $S_1,\dots,S_r$ of the input matrix.

\subsubsection{Correctness of the primal problem}
%======================================================================

Let us verify that the primal problem expresses the desired maximum output probability for the measurement outcome $b$.
For any operator $S$ with diagonal blocks $S_1,\dots,S_r$, the primal objective value at $S$ is given by $\inner{A}{S}=\inner{R_b}{S_r}$.
Hence, the desired quantity is precisely the supremum of $\inner{A}{S}$ over all $S$ for which $S_r$ is a strategy.
It remains only to verify that the constraint $\Phi(S)\preceq B$ enforces this property of $S_r$.
The following lemma serves that purpose.

\begin{lemma}[Correctness of the primal problem]
\label{lm:primal}

For every primal feasible solution $S$ there exists another primal feasible solution $S'$ whose objective value meets or exceeds that of $S$ and whose diagonal blocks $S_1',\dots,S_r'$ have the property that $S_i'$ is an $i$-round non-measuring strategy for each $i=1,\dots,r$.

\end{lemma}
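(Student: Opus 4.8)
The plan is to take an arbitrary primal feasible $S$ and tighten the partial-trace inequalities hidden inside $\Phi(S)\preceq B$ into equalities by adding positive semidefinite correction terms to the diagonal blocks, processing them in the order $S_1,\dots,S_r$, and then to invoke Theorem~\ref{thm:char} round by round to certify that the tightened blocks are genuine strategies. Since $S\succeq 0$, each diagonal block $S_i$ is positive semidefinite, and $\Phi(S)\preceq B$ unpacks precisely into $\ptr{\cY_1}{S_1}\preceq I_{\cX_1}$ together with $\ptr{\cY_i}{S_i}\preceq S_{i-1}\ot I_{\cX_i}$ for $2\leq i\leq r$. The off-diagonal blocks of $S$ play no role---they influence neither $\Phi(S)$ nor the objective $\inner{A}{S}=\inner{R_b}{S_r}$---so I may freely take $S'$ to be block-diagonal.

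Concretely, I would first set
\[ S_1' = S_1 + \tfrac{1}{\dim(\cY_1)}\,I_{\cY_1}\ot\bigl(I_{\cX_1}-\ptr{\cY_1}{S_1}\bigr). \]
Primal feasibility gives $I_{\cX_1}-\ptr{\cY_1}{S_1}\succeq 0$, so the correction term is positive semidefinite; hence $S_1'\succeq S_1\succeq 0$ and $\ptr{\cY_1}{S_1'}=I_{\cX_1}$, which by the $r=1$ case of Theorem~\ref{thm:char} makes $S_1'$ a $1$-round strategy. Proceeding inductively, once $S_1',\dots,S_{i-1}'$ have been built with each $S_j'$ a $j$-round strategy and $S_j'\succeq S_j$, I would set
\[ S_i' = S_i + \tfrac{1}{\dim(\cY_i)}\,I_{\cY_i}\ot\bigl(S_{i-1}'\ot I_{\cX_i}-\ptr{\cY_i}{S_i}\bigr), \]
where the identity on $\cY_i$ is understood to occupy the $\cY_i$ slot of the tensor product. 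Monotonicity of the partial trace and the chain $\ptr{\cY_i}{S_i}\preceq S_{i-1}\ot I_{\cX_i}\preceq S_{i-1}'\ot I_{\cX_i}$ show the correction term is positive semidefinite, so $S_i'\succeq S_i\succeq 0$; moreover $\ptr{\cY_i}{S_i'}=S_{i-1}'\ot I_{\cX_i}$ by construction, so item~1 of Theorem~\ref{thm:char} certifies that $S_i'$ is an $i$-round strategy.

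To conclude I would check that $S'=\diag(S_1',\dots,S_r')$ is primal feasible with objective value at least that of $S$. Positivity is immediate since the blocks are positive semidefinite. For the constraint, the first block of $\Phi(S')$ equals $\ptr{\cY_1}{S_1'}=I_{\cX_1}$, matching the first block of $B$, while for $2\leq i\leq r$ the $i$-th block of $\Phi(S')$ is $\ptr{\cY_i}{S_i'}-S_{i-1}'\ot I_{\cX_i}=0$, matching the corresponding zero blocks of $B$; thus $\Phi(S')\preceq B$. Finally $\inner{A}{S'}=\inner{R_b}{S_r'}\geq\inner{R_b}{S_r}=\inner{A}{S}$ because $R_b\succeq 0$ and $S_r'\succeq S_r$. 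The argument has no genuine obstacle; the only thing demanding care is the bookkeeping of tensor-factor positions in the correction terms, together with the repeated application of Theorem~\ref{thm:char} to promote each tightened partial-trace identity to the statement that $S_i'$ is a bona fide $i$-round strategy.
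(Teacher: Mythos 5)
Your proof is correct and follows essentially the same route as the paper's: tighten each partial-trace inequality from the primal constraints into an equality by adding a positive semidefinite correction to the diagonal blocks (inductively, using $S_{i-1}'\succeq S_{i-1}$), then invoke Theorem~\ref{thm:char} to conclude each $S_i'$ is an $i$-round strategy, with the objective nondecreasing because $A\succeq 0$. The only difference is cosmetic: where the paper simply asserts the existence of such $S_i'\succeq S_i$, you supply an explicit correction term of the form $\frac{1}{\dim(\cY_i)}I_{\cY_i}\ot(S_{i-1}'\ot I_{\cX_i}-\ptr{\cY_i}{S_i})$, and you take $S'$ block-diagonal rather than keeping the off-diagonal blocks of $S$; both choices are valid.
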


\begin{proof}

Let $S_1,\dots,S_r$ denote the diagonal blocks of $S$.
As $S$ is primal feasible, we know that
\begin{align*}
  \ptr{\cY_1}{S_1} &\preceq I_{\cX_1} \\
  \ptr{\cY_2}{S_2} &\preceq S_1\ot I_{\cX_2} \\
  \vdots \\
  \ptr{\cY_r}{S_r} &\preceq S_{r-1}\ot I_{\cX_r}
\end{align*}
It is clear that there is a $S_1'\succeq S_1$ such that $\ptr{\cY_1}{S_1'}=I_{\cX_1}$.
Similarly, for each $i=2,\dots,r$ there is a $S_i'\succeq S_i$ such that $\ptr{\cY_i}{S_i'}=S_{i-1}'\ot I_{\cX_i}$.
The desired operator $S'$ is then obtained from $S$ by replacing its diagonal blocks with $S_1',\dots,S_r'$.
That $\inner{A}{S'}\geq\inner{A}{S}$ follows from the fact that $A\succeq 0$ and $S'\succeq S$.
That $S_1',\dots,S_r'$ are non-measuring strategies follows from Theorem \ref{thm:char} (\thmchar).
\end{proof}

\subsubsection{Correctness of the dual problem}
%======================================================================

%A brute-force computation shows that the adjoint super-operator $\Phi^*$ is given by

Before we can show correctness of the dual problem, we must compute the adjoint super-operator $\Phi^*$.
Let us verify that $\Phi^*=\Psi$ where $\Psi$ is given by
\begin{align*}
  \Psi :&
  \Pa{
  \begin{array}{ccc}
    T_1 \\
    & \ddots\\
    && T_r
  \end{array} } \\
  %\stackrel{\Phi}{\longmapsto}&
  \mapsto&
  \Pa{
  \begin{array}{cccc}
    T_1\ot I_{\cY_1} - \ptr{\cX_2}{T_2} &&& 0 \\
    & \ddots \\
    && T_{r-1}\ot I_{\cY_{r-1}} - \ptr{\cX_r}{T_r}\\
    0 &&& T_r\ot I_{\cY_r}
  \end{array} }.
\end{align*}
Let $S,T$ be any operators in the domains of $\Phi,\Psi$, respectively, and let $S_1,\dots,S_r$ and $T_1,\dots,T_r$ denote the diagonal blocks of $S$ and $T$, respectively, so that
\begin{align*}
  \Inner{\Phi(S)}{T}
  &= \Inner{\ptr{\cY_1}{S_1}}{T_1} + \sum_{i=2}^r \Inner{\ptr{\cY_i}{S_i} - S_{i-1}\ot I_{\cX_i}}{T_i} \\
  &= \Inner{S_1}{T_1\ot I_{\cY_1}} + \sum_{i=2}^r \Inner{S_i}{T_i\ot I_{\cY_i}} - \Inner{S_{i-1}}{\ptr{\cX_i}{T_i}} \\
  &= \Inner{S_r}{T_r\ot I_{\cY_r}} + \sum_{i=1}^{r-1} \Inner{S_i}{T_i\ot I_{\cY_i}} - \Inner{S_i}{\ptr{\cX_{i+1}}{T_{i+1}}} \\
  &= \Inner{S_r}{T_r\ot I_{\cY_r}} + \sum_{i=1}^{r-1} \Inner{S_i}{T_i\ot I_{\cY_i} - \ptr{\cX_{i+1}}{T_{i+1}}} \\
  &= \Inner{S}{\Psi(T)}.
\end{align*}
As this equality holds for all $S,T$, it follows from the definition of the adjoint that $\Phi^*=\Psi$ as claimed.

We now prove a lemma that establishes the link between the dual problem and the minimization condition of Theorem~\ref{thm:max-prob} (\thmmaxprob).

\begin{lemma}[Correctness of the dual problem]
\label{lm:dual}

For every dual feasible solution $T$ there exists another dual feasible solution $T'$ whose objective value $p$ equals that of $T$ and whose diagonal blocks $T_1',\dots,T_r'$ have the property that $T_i'\ot I_{\cY_i}$ is a non-measuring co-strategy multiplied by $p$ for each $i=1,\dots,r$.

\end{lemma}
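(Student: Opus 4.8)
The plan is to mirror the proof of Lemma~\ref{lm:primal} on the dual side. Since $\Phi^*=\Psi$ and the dual objective $Y\mapsto\inner{B}{Y}=\inner{I_{\cX_1}}{Y_1}$ depend only on the diagonal blocks of $Y$, and since the block-diagonal pinching of a positive semidefinite operator is again positive semidefinite with the same diagonal blocks, I may assume $T$ is block diagonal with blocks $T_1,\dots,T_r$; dual feasibility then reads $T_i\succeq 0$, $\ptr{\cX_{i+1}}{T_{i+1}}\preceq T_i\ot I_{\cY_i}$ for $1\le i\le r-1$, and $R_b\preceq T_r\ot I_{\cY_r}$, while the objective value is $\trace(T_1)=p$ (recall $T_1\in\pos{\cX_1}$). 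I would then process the blocks from $i=1$ upward, enlarging each one by a positive semidefinite ``padding'' term so as to turn its partial-trace inequality into an equality, never decreasing any block and never touching $T_1$, so that the objective is preserved.

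Concretely, set $T_1'\defeq T_1$ and, for $i=2,\dots,r$, put $\Sigma_i\defeq T_{i-1}'\ot I_{\cY_{i-1}}-\ptr{\cX_i}{T_i}$ and $T_i'\defeq T_i+\frac{1}{\dim(\cX_i)}\,\Sigma_i\ot I_{\cX_i}$, with the identity factor $I_{\cX_i}$ inserted in the slot of $\cX_i$ so that $\ptr{\cX_i}{\Sigma_i\ot I_{\cX_i}}=\dim(\cX_i)\,\Sigma_i$. By induction on $i$ one then checks: $\Sigma_i\succeq 0$ — for $i=2$ this is the level-$1$ hypothesis, and for $i>2$ it holds since $T_{i-1}'\succeq T_{i-1}$ gives $T_{i-1}'\ot I_{\cY_{i-1}}\succeq T_{i-1}\ot I_{\cY_{i-1}}\succeq\ptr{\cX_i}{T_i}$; hence $T_i'\succeq T_i\succeq 0$; and $\ptr{\cX_i}{T_i'}=T_{i-1}'\ot I_{\cY_{i-1}}$ by the choice of $\Sigma_i$. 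Taking $i=r$ in the second fact gives $T_r'\ot I_{\cY_r}\succeq T_r\ot I_{\cY_r}\succeq R_b$, and the third fact makes the remaining levels equalities, so the block-diagonal operator $T'$ with blocks $T_1',\dots,T_r'$ is dual feasible; since $T_1'=T_1$, it has objective value $p$.

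The structural claim then follows from $\ptr{\cX_i}{T_i'}=T_{i-1}'\ot I_{\cY_{i-1}}$ together with $\trace(T_1')=p$, by induction on $i$ using item~\ref{item:char:cst} of Theorem~\ref{thm:char} (\thmchar): $\frac1p(T_1'\ot I_{\cY_1})$ has the form $(\frac1p T_1')\ot I_{\cY_1}$ with $\ptr{\cX_1}{\frac1p T_1'}=\frac1p\trace(T_1')=1$, the zero-round co-strategy; and if $\frac1p(T_{i-1}'\ot I_{\cY_{i-1}})$ is an $(i-1)$-round co-strategy, then $\frac1p(T_i'\ot I_{\cY_i})=(\frac1p T_i')\ot I_{\cY_i}$ with $\ptr{\cX_i}{\frac1p T_i'}=\frac1p(T_{i-1}'\ot I_{\cY_{i-1}})$ an $(i-1)$-round co-strategy, hence it is an $i$-round co-strategy. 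The degenerate case $p=0$ is immediate: feasibility forces $T_1=0$, hence $T_i=0$ down the chain and $R_b=0$, so all $T_i'\defeq 0$ work.

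The step I expect to require the most care is the block-diagonal reduction together with the attendant tensor-factor bookkeeping: one must track that each $\Sigma_i$ lives in $\her{\kprod{\cY}{1}{i-1}\ot\kprod{\cX}{1}{i-1}}$ up to a harmless reordering of factors, that $\Sigma_i\ot I_{\cX_i}$ lands in the space of $T_i$, and that each partial trace in the dual constraints is over the intended factor. Once the bookkeeping is fixed, the inequality manipulations are routine and run exactly parallel to the proof of Lemma~\ref{lm:primal}.
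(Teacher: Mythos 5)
Your proof is correct and follows essentially the same route as the paper's: handle $p=0$ trivially, keep $T_1'=T_1$, inductively enlarge each subsequent block to $T_i'\succeq T_i$ so that $\ptr{\cX_i}{T_i'}=T_{i-1}'\ot I_{\cY_{i-1}}$ (the paper merely asserts such $T_i'$ exist, while you give the explicit padding $T_i'=T_i+\frac{1}{\dim(\cX_i)}\Sigma_i\ot I_{\cX_i}$), note $T_r'\ot I_{\cY_r}\succeq R_b$ preserves feasibility, and invoke Theorem~\ref{thm:char} to conclude each $\frac{1}{p}T_i'\ot I_{\cY_i}$ is a co-strategy. The only additions — the block-diagonal pinching reduction and the explicit inductive use of item 2 of Theorem~\ref{thm:char} — are harmless elaborations of steps the paper leaves implicit.
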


\begin{proof}
Let $T_1,\dots,T_r$ denote the diagonal blocks of $T$.
As $T$ is feasible, we know that
\begin{align*}
  T_1\ot I_{\cY_1} &\succeq \ptr{\cX_2}{T_2}\\
  \vdots \\
  T_{r-1}\ot I_{\cY_{r-1}} &\succeq \ptr{\cX_r}{T_r} \\
  T_r\ot I_{\cY_r} &\succeq R_b
\end{align*}
The objective value for $T$ is then given by $p=\inner{B}{T}=\ptr{}{T_1}$.
If $p=0$ then it must hold that $T_1=\cdots=T_r=0$, so the lemma holds trivially in this case.
For the remainder of the proof we shall assume $p>0$.

It is clear that there is a $T_2'\succeq T_2$ such that $T_1\ot I_{\cY_1}=\ptr{\cX_2}{T_2'}$.
Similarly, for each $i=3,\dots,r$ there is a $T_i'\succeq T_i$ such that $T_{i-1}'\ot I_{\cY_{i-1}}=\ptr{\cX_i}{T_i'}$.
As $T_r'\succeq T_r$, it must also be the case that $T_r'\ot I_{\cY_r} \succeq R_b$.
The desired operator $T'$ is then obtained from $T$ by replacing its diagonal blocks with $T_1,T_2',\dots,T_r'$.
That $\inner{B}{T'}=\inner{B}{T}$ follows from the equality $\inner{B}{T'}=\ptr{}{T_1}=p$.
Finally, it follows from Theorem \ref{thm:char} (\thmchar) that each $\frac{1}{p}T_i'\ot I_{\cY_i}$ is a non-measuring co-strategy.
\end{proof}

%Next, let us compute the adjoint $\Phi^*$ of $\Phi$.
%For any $Q$ and $R$ we have
%\begin{align*}
%\Inner{\Phi(Q)}{R}
%=& \Inner{\ptr{\cY_1}{Q_1}}{R_1} \\
%& + \Inner{\ptr{\cY_2}{Q_2} - Q_1\ot I_{\cX_2}}{R_2} \\
%& + \cdots \\
%& + \Inner{\ptr{\cY_r}{Q_r} - Q_{r-1}\ot I_{\cX_r}}{R_r} \\
%=& \Inner{\ptr{\cY_1}{Q_1}}{R_1} \\
%& + \Inner{\ptr{\cY_2}{Q_2}}{R_2} - \Inner{Q_1\ot I_{\cX_2}}{R_2} \\
%& + \cdots \\
%& + \Inner{\ptr{\cY_r}{Q_r}}{R_r} - \Inner{Q_{r-1}\ot I_{\cX_r}}{R_r} \\
%=& \Inner{Q_1}{R_1\ot I_{\cY_1}} - \Inner{Q_1}{\ptr{\cX_2}{R_2}} \\
%& + \cdots \\
%& + \Inner{Q_{r-1}}{R_{r-1}\ot I_{\cY_r}} - \Inner{Q_{r-1}}{\ptr{\cX_r}{R_r}} \\
%& + 
%\end{align*}
%\[
%  \Inner{
%    \Pa{
%    \begin{array}{cccc}
%      \ptr{\cY_1}{Q_1} &&& 0 \\
%      & \ptr{\cY_2}{Q_2} - Q_1\ot I_{\cX_2} \\
%      && \ddots\\
%      0 &&& \ptr{\cY_r}{Q_r} - Q_{r-1}\ot I_{\cX_r}
%    \end{array} }
%  }{
%    \Pa{
%    \begin{array}{ccc}
%        R_1 && 0 \\
%        & \ddots\\
%        0 && R_r
%    \end{array} }
%  }
%\]

As any feasible solution $T$ has $\Phi^*(T)\succeq A$, it must be that the $r$th diagonal block $T_r$ of $T$ has the property that $T_r\ot I_{\cY_r}\succeq R_b$.
By Lemma \ref{lm:dual} we may also assume that $T_r\ot I_{\cY_r}$ is a scalar multiple of a co-strategy.
It is clear then that the optimal value of the dual problem equals the infimum over all such scalar multiples, subject to $T_r\ot I_{\cY_r}\succeq R_b$, as required by Theorem~\ref{thm:max-prob}.

\subsubsection{Strong duality of the semidefinite optimization problem}
%======================================================================

We already argued that the two quantities appearing in Theorem~\ref{thm:max-prob} are captured by the primal and dual semidefinite optimization problems associated with the triple $(\Phi,A,B)$.
To prove Theorem~\ref{thm:max-prob}, it remains only to show that these two quantities are equal.
This equality is established by showing that $(\Phi,A,B)$ satisfies the conditions for strong duality.

\begin{lemma}[Strong duality of $(\Phi,A,B)$]
\label{lm:duality}

There exists a primal feasible operator $S$ and a dual feasible operator $T$ such that $\inner{A}{S}=\inner{B}{T}$.

\end{lemma}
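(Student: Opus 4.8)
The plan is to deduce the lemma from the strong duality conditions of Fact~\ref{fact:strong-duality}: I will exhibit a strictly feasible primal operator and a strictly feasible dual operator for the problem $(\Phi,A,B)$. Part~\ref{item:strong-primal} of that fact will then give $\alpha=\beta$ together with a dual feasible operator $T$ attaining $\beta$, and part~\ref{item:strong-dual} will give a primal feasible operator $S$ attaining $\alpha$, so that $\inner{A}{S}=\alpha=\beta=\inner{B}{T}$, which is exactly the assertion of the lemma.

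For strict primal feasibility, I would take $S$ to be block diagonal with diagonal blocks $S_i=\varepsilon_i I$ (the identity on $\kprod{\cY}{1}{i}\ot\kprod{\cX}{1}{i}$), for positive reals chosen so that $\varepsilon_1\dim(\cY_1)<1$ and $\varepsilon_i\dim(\cY_i)<\varepsilon_{i-1}$ for $i=2,\dots,r$ — for instance $\varepsilon_1=\tfrac{1}{2\dim(\cY_1)}$ and $\varepsilon_i=\tfrac{\varepsilon_{i-1}}{2\dim(\cY_i)}$. Then $S$ is positive definite, and reading off the diagonal blocks of $\Phi(S)$ gives $\ptr{\cY_1}{S_1}=\varepsilon_1\dim(\cY_1)I_{\cX_1}\prec I_{\cX_1}$ and $\ptr{\cY_i}{S_i}-S_{i-1}\ot I_{\cX_i}=\pa{\varepsilon_i\dim(\cY_i)-\varepsilon_{i-1}}I\prec 0$ for $i\geq 2$, hence $\Phi(S)\prec B$. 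Dually, I would take $T$ block diagonal with $T_i=c_i I$, choosing the positive reals in decreasing order of the index so that $c_r>\norm{R_b}$ and $c_i>c_{i+1}\dim(\cX_{i+1})$ for $i=r-1,\dots,1$. Then $T$ is positive definite, the diagonal blocks of $\Phi^*(T)$ are $T_i\ot I_{\cY_i}-\ptr{\cX_{i+1}}{T_{i+1}}=\pa{c_i-c_{i+1}\dim(\cX_{i+1})}I\succ 0$ for $i<r$, and the last block is $T_r\ot I_{\cY_r}=c_r I\succ R_b$, so $\Phi^*(T)\succ A$.

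Finally I would check the finiteness hypotheses of Fact~\ref{fact:strong-duality}. The operator $T$ just constructed is in particular dual feasible, so $\beta<\infty$; and $S=0$ is primal feasible with objective value $0$, so weak duality gives $0\leq\alpha\leq\beta<\infty$. Thus part~\ref{item:strong-primal} applies and yields $\alpha=\beta$ with $\beta$ attained by a dual feasible operator; in particular $\alpha=\beta$ is finite, so part~\ref{item:strong-dual} applies as well and gives that $\alpha$ is attained by a primal feasible operator. Combining the two attaining operators proves the lemma. I do not anticipate any genuine obstacle: the only real work is the bookkeeping in the recursive choice of the scalars $\varepsilon_i$ and $c_i$, and this is painless precisely because the block structure of $\Phi$ (and hence of $\Phi^*$) is triangular, so the strict inequalities propagate cleanly down (respectively up) the diagonal.
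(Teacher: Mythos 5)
Your proof is correct, and it follows the paper's framework — Fact \ref{fact:strong-duality} applied to $(\Phi,A,B)$, with a strictly feasible primal point built from scaled identities, essentially the same as the paper's choice $S_i=\frac{1-i\delta}{\dim(\kprod{\cY}{1}{i})}I$ — but it finishes differently. The paper invokes only item \ref{item:strong-primal}: strict primal feasibility plus finiteness of $\beta$ (shown by constructing a dual feasible point from a non-measuring co-strategy $R\succeq R_b$ via Theorem \ref{thm:char}) gives $\alpha=\beta$ with dual attainment, and then primal attainment is obtained separately by a compactness argument: Lemma \ref{lm:primal} reduces to primal feasible points whose blocks are strategies, and Proposition \ref{prop:convexity} gives compactness of that set. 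You instead exhibit an explicit strictly dual feasible point $T_i=c_iI$ (the recursion $c_r>\norm{R_b}$, $c_i>c_{i+1}\dim(\cX_{i+1})$ indeed yields $T\succ 0$ and $\Phi^*(T)\succ A$, since the off-target blocks of $A$ vanish) and apply item \ref{item:strong-dual} to get primal attainment directly. This buys you two simplifications — no appeal to Theorem \ref{thm:char} for the dual feasible point and no appeal to Lemma \ref{lm:primal}/Proposition \ref{prop:convexity} for attainment of $\alpha$ — at no real cost; the paper's route, on the other hand, reuses machinery it needs anyway (Lemma \ref{lm:primal} is also needed to interpret the primal value). Your finiteness bookkeeping ($S=0$ primal feasible, $T$ dual feasible, weak duality) is also sound.
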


\begin{proof}

We prove strong duality via item \ref{item:strong-primal} of Fact \ref{fact:strong-duality} (\factstrongduality).
That is, we show that $\beta$ is finite and the primal problem is strictly feasible.
Then by Fact \ref{fact:strong-duality} it follows that $\alpha=\beta$ and that $\beta$ is achieved for some dual feasible operator.
We complete the proof by noting that the optimal value $\alpha$ is also achieved by a primal feasible operator.

First, let us argue that $\beta$ is finite.
As $B\succeq 0$, any dual feasible solution has nonnegative objective value.
Thus, to show that $\beta$ is finite it suffices to exhibit a single dual feasible solution.
Toward that end, let $R$ be a non-measuring co-strategy with $R\succeq R_b$.
Let $T_r$ be such that $T_r\ot I_{\cY_r}=R$ and for each $i=r-1,\dots,1$ choose $T_i$ so that $T_i\ot I_{\cY_i}=\ptr{\cX_{i+1}}{T_{i+1}}$.
(That each of $T_1,\dots,T_r$ exists follows from Theorem \ref{thm:char} (\thmchar).)
Finally, let $T$ be the block-diagonal operator whose diagonal blocks are $T_1,\dots,T_r$.
It is clear that $T$ is dual feasible. % and that its objective value $\inner{B}{T}=1$.

Next, we show that the primal is strictly feasible.
Choose $\delta\in(0,1/r)$ and let $S$ be the block-diagonal operator whose $i$th diagonal block $S_i$ is given by
\[ S_i = \frac{1-i\delta}{\dim(\kprod{\cY}{1}{i})}I_{\kprod{\cY}{1}{i}\ot\kprod{\cX}{1}{i}}. \]
It is clear that $S\succ 0$ and it is tedious but straightforward to verify that $\Phi(S)\succ B$.
In particular, we have
\begin{align*}
  \ptr{\cY_1}{S_1} = \Pa{1-\delta} I_{\cX_1} &\ \prec \ I_{\cX_1} \\
  \ptr{\cY_2}{S_2} = \frac{1-2\delta}{\dim(\cY_1)}I_{\cY_1\ot\kprod{\cX}{1}{2}}
  &\ \prec \ \frac{1-\delta}{\dim(\cY_1)} I_{\cY_1\ot\kprod{\cX}{1}{2}} = S_1\ot I_{\cX_2} \\
  &\ \vdots \\
  \ptr{\cY_r}{S_r} = \frac{1-r\delta}{\dim(\kprod{\cY}{1}{r-1})} I_{\kprod{\cY}{1}{r-1}\ot\kprod{\cX}{1}{r}}
  &\ \prec \ \frac{1-(r-1)\delta}{\dim(\kprod{\cY}{1}{r-1})} I_{\kprod{\cY}{1}{r-1}\ot\kprod{\cX}{1}{r}} = S_{r-1}\ot I_{\cX_r}
\end{align*}
It now follows from item \ref{item:strong-primal} of Fact \ref{fact:strong-duality} that $\alpha=\beta$ and that $\beta$ is achieved by some dual feasible operator.

It remains only to show that $\alpha$ is also achieved by some primal feasible operator.
By Lemma \ref{lm:primal} it suffices to consider only those primal feasible $S$ whose diagonal blocks are strategies.
As the set of strategies is compact (Proposition \ref{prop:convexity}), it follows that the optimal $\alpha$ is finite and is achieved by a primal feasible solution.
\end{proof}

\subsubsection{Second proof of Theorem~\ref{thm:max-prob} (\thmmaxprob)}
%======================================================================

\begin{proof}[Second proof of Theorem~\ref{thm:max-prob}]

We prove the theorem for co-strategies---the result for strategies then follows from Lemma \ref{lm:max-prob-switch}.
Let $p_m\in[0,1]$ denote the maximum probability with which $\set{Q_m}$ can be forced to output $m$ in an interaction with some compatible strategy.
By Lemma \ref{lm:primal}, the optimal value of the primal problem associated with $(\Phi,A,B)$ equals $p_m$.
By Lemma \ref{lm:dual}, the optimal value of the dual problem associated with $(\Phi,A,B)$ equals the minimum $p$ such that $Q_m\preceq pR$ for some non-measuring co-strategy $R$.
That these two values are equal follows from Lemma \ref{lm:duality}.
\end{proof}

%======================================================================
\section{Other properties of strategies}
\label{sec:other-properties}
%======================================================================

We conclude this chapter with three simple applications of the properties of strategies established in Theorems \ref{theorem:inner-product} and \ref{thm:char}.
In particular, we point out that Theorem \ref{thm:char} immediately implies that the set of all strategies is compact and convex.
We also show that representations of strategies are unique and that they satisfy a convenient distributive property.
The results in this section are proven only for strategies, but it is trivial to repeat each proof for co-strategies.

\subsubsection{The set of strategies is compact and convex}
%======================================================================

\begin{proposition}[Convexity of strategies]
\label{prop:convexity}

  The set of all $r$-round strategies is compact and convex, as is the set of all $r$-round co-strategies.

\end{proposition}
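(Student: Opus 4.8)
The plan is to extract both properties directly from the linear characterization of strategies provided by Theorem~\ref{thm:char}. By the alternate version of that theorem, $\st_r$ consists of exactly those positive semidefinite operators $Q\in\pos{\kprod{\cY}{1}{r}\ot\kprod{\cX}{1}{r}}$ satisfying the equalities $\ptr{\kprod{\cY}{k}{r}}{Q}=Q_{k-1}\ot I_{\kprod{\cX}{k}{r}}$ for $2\leq k\leq r$ together with $\ptr{\kprod{\cY}{1}{r}}{Q}=I_{\kprod{\cX}{1}{r}}$, where each auxiliary $Q_{k-1}$ is itself pinned down as the rescaled partial trace $\frac{1}{\dim(\kprod{\cX}{k}{r})}\ptr{\kprod{\cY}{k}{r}\ot\kprod{\cX}{k}{r}}{Q}$; hence these are honest linear constraints on $Q$ alone. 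Therefore $\st_r$ is the intersection of the closed convex cone $\pos{\kprod{\cY}{1}{r}\ot\kprod{\cX}{1}{r}}$ with an affine subspace of the finite-dimensional real vector space $\her{\kprod{\cY}{1}{r}\ot\kprod{\cX}{1}{r}}$, so it is closed and convex. (If one prefers not to eliminate the $Q_k$, convexity still follows directly: given two strategies with witnessing families $\{Q_k\}$ and $\{Q_k'\}$, the family $\{\lambda Q_k+(1-\lambda)Q_k'\}$ witnesses that $\lambda Q+(1-\lambda)Q'$ is a strategy, since positive semidefiniteness and all the linear equalities survive convex combination.)

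For compactness it remains only to check boundedness, after which the Heine--Borel theorem applies in this finite-dimensional setting. Every $Q\in\st_r$ obeys $\ptr{\kprod{\cY}{1}{r}}{Q}=I_{\kprod{\cX}{1}{r}}$, so $\tr{Q}=\dim(\kprod{\cX}{1}{r})$, and since $Q\succeq 0$ this equals $\tnorm{Q}$ (cf.\ Proposition~\ref{prop:explicit-bound}). Thus $\st_r$ lies on a sphere of fixed radius and is bounded, hence compact.

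For co-strategies I would repeat the argument verbatim using item~\ref{item:char:cst} of Theorem~\ref{thm:char} in place of item~\ref{item:char:st}: the conditions ``$Q=R\ot I_{\cY_r}$ and $\ptr{\cX_r}{R}$ is an $(r-1)$-round co-strategy'' are again linear, yielding closedness and convexity of $\cst_r$ by induction on $r$, and a parallel induction gives $\tr{Q}=\dim(\kprod{\cY}{1}{r})$ for every $r$-round co-strategy, hence boundedness. Alternatively, $\cst_r$ is the image of a compact convex set of strategies---those for input spaces $\mathbb{C},\cY_1,\dots,\cY_r$ and output spaces $\cX_1,\dots,\cX_r,\mathbb{C}$---under the continuous, injective map $\jam{\Xi}\mapsto\jam{\Xi^*}$, which is linear on Hermitian operators, so compactness and convexity transfer. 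There is no real obstacle here: the proposition is essentially a corollary of Theorem~\ref{thm:char}, and the only point deserving a moment's care is confirming that the auxiliary operators $Q_1,\dots,Q_{r-1}$ in the characterization may be absorbed into linear constraints on $Q$ (equivalently, that the coordinate projection of this particular closed convex set stays closed), together with the routine verification that all $r$-round strategies share a common trace.
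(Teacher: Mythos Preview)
Your proof is correct and follows essentially the same approach as the paper: closedness and convexity come from Theorem~\ref{thm:char} by viewing $\st_r$ as the intersection of the positive semidefinite cone with an affine subspace, and boundedness comes from Proposition~\ref{prop:explicit-bound} via the fixed trace. Your treatment is more explicit than the paper's---you spell out how the auxiliary operators $Q_{k-1}$ are determined linearly by $Q$ and you handle co-strategies separately---but the argument is the same.
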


\begin{proof}

%As co-strategies may be viewed as strategies, it suffices to prove the proposition only for strategies.
That the set of strategies is bounded follows from Proposition \ref{prop:explicit-bound} and the fact that every strategy is the Choi-Jamio\l kowski representation of some quantum operation.

That the set of strategies is closed and convex follows from Theorem \ref{thm:char}, which characterizes this set as an intersection between two closed and convex sets---the positive semidefinite operators and those operators satisfying the linear constraints appearing in the theorem.
\end{proof}

\subsubsection{Equivalence and uniqueness of strategies}
%======================================================================

In Section \ref{subsec:naive-observations} we noted that operational descriptions of strategies are not unique in the sense that two distinct descriptions could specify equivalent strategies.
For the new representation of strategies, this is not so.

\begin{proposition}[Uniqueness of strategies]
\label{prop:uniqueness}

  Two strategies $Q,Q'$ may be distinguished with nonzero bias by a compatible measuring co-strategy if and only if $Q\neq Q'$.
  A similar statement holds for co-strategies.

\end{proposition}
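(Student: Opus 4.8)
The plan is to prove the two directions separately, observing first that the ``only if'' direction is immediate: if $Q=Q'$ then for every compatible measuring co-strategy $\set{R_b}$ and every outcome $b$, Theorem~\ref{theorem:inner-product} (\theoreminnerproduct) gives $\inner{Q}{R_b}=\inner{Q'}{R_b}$, so the two strategies induce identical output distributions and the bias is zero.  For the ``if'' direction I would take $Q\neq Q'$, set $D\defeq Q-Q'$ (a nonzero Hermitian operator in $\her{\kprod{\cY}{1}{r}\ot\kprod{\cX}{1}{r}}$), and reduce via Theorem~\ref{theorem:inner-product} to producing a measuring co-strategy $\set{R_0,R_1}$ with $\inner{D}{R_0}\neq 0$: given such a co-strategy, the probability of outcome $0$ is $\inner{Q}{R_0}$ against $Q$ and $\inner{Q'}{R_0}$ against $Q'$, and these differ.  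By Proposition~\ref{prop:measure-sum}, the operators $R_0$ that occur as an element of some measuring co-strategy are exactly those with $0\preceq R_0\preceq R$ for some non-measuring co-strategy $R$ — that is, the members of $\csubst_r$.  So the whole task reduces to showing that $D$ cannot be orthogonal to all of $\csubst_r$; equivalently, that $\csubst_r$ linearly spans $\her{\kprod{\cY}{1}{r}\ot\kprod{\cX}{1}{r}}$.

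The step I expect to be the crux is exhibiting a \emph{positive definite} non-measuring co-strategy.  I would prove by induction on $r$ that $\cst_r$ contains a positive definite operator: the base case is $\cst_0=\set{1}$, and for $r\geq 1$, given a positive definite $S\in\cst_{r-1}$, one lifts it to a positive definite $R$ with $\ptr{\cX_r}{R}=S$ (for instance $R$ proportional to $S\ot I_{\cX_r}$ with the appropriate normalization and tensor ordering), whereupon item~\ref{item:char:cst} of Theorem~\ref{thm:char} (\thmchar) guarantees that $R\ot I_{\cY_r}$ is a positive definite $r$-round co-strategy.  Fixing such a positive definite $R^\star\in\cst_r$, I would choose $\varepsilon>0$ with $\varepsilon I\preceq R^\star$; then $\varepsilon\Pi\preceq R^\star$ for every projection $\Pi$, so $\set{\varepsilon\Pi,\,R^\star-\varepsilon\Pi}$ is a measuring co-strategy (Proposition~\ref{prop:measure-sum}) and hence $\varepsilon\Pi\in\csubst_r$.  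Since the projections span $\her{\kprod{\cY}{1}{r}\ot\kprod{\cX}{1}{r}}$, so does $\csubst_r$.

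To finish, since $D\neq 0$ and $\csubst_r$ spans the whole space there is a projection $\Pi$ with $\inner{D}{\varepsilon\Pi}\neq 0$; taking $R_0=\varepsilon\Pi$ and $R_1=R^\star-\varepsilon\Pi$ gives the desired distinguishing measuring co-strategy.  The analogous statement for co-strategies follows by the same argument with the roles of strategies and co-strategies interchanged, using item~\ref{item:char:st} of Theorem~\ref{thm:char} in place of item~\ref{item:char:cst} to produce a positive definite non-measuring strategy.  The only real obstacle is the positive-definiteness claim; everything after it is a short formal argument built on Theorems~\ref{theorem:inner-product} and~\ref{thm:char} and Proposition~\ref{prop:measure-sum}.
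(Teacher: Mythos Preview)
Your proposal is correct and follows essentially the same approach as the paper: reduce via Theorem~\ref{theorem:inner-product} and Proposition~\ref{prop:measure-sum} to showing that elements of measuring co-strategies span the Hermitian space, then establish this via the existence of a positive definite co-strategy. The paper shortcuts your induction by simply asserting that the suitably normalized identity is itself a valid non-measuring co-strategy (which is exactly what your induction reproduces), so the difference is one of exposition rather than substance.
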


\begin{proof}

  %As co-strategies may be viewed as strategies, it suffices to prove the proposition only for strategies.

  Let $\set{R_b}$ be any measuring co-strategy.
  By Theorem \ref{theorem:inner-product}, the probability with which $\set{R_b}$ outputs a given outcome $b$ after an interaction with $Q$ or with $Q'$ is given by $\inner{R_b}{Q}$ or $\inner{R_b}{Q'}$, respectively.
  If $Q=Q'$ then it holds that $\inner{X}{Q}=\inner{X}{Q'}$ for every operator $X$ and so $\set{R_b}$ cannot distinguish $Q$ from $Q'$ with nonzero bias.

  Conversely, if $Q\neq Q'$ then for each spanning set $\bB$ of operators there must exist some $B\in\bB$ with $\inner{B}{Q}\neq\inner{B}{Q'}$.
  We claim that the set
  \[ \Set{ R_b : R_b \textrm{ is an element of some measuring co-strategy} } \]
  is a spanning set, from which the proposition follows.

  One way to verify this claim is to note that the identity operator $I$ (suitably normalized) denotes a valid non-measuring co-strategy.
  It follows from Proposition \ref{prop:measure-sum} that for each positive semidefinite operator $R\preceq I$ (again, suitably normalized) there exists a measuring co-strategy of which $R$ is an element.
  The claim then follows from the fact that the set of all operators $0\preceq R\preceq I$ is a spanning set.
\end{proof}

\subsubsection{Distributive property for probabilistic combinations of strategies}
%======================================================================

In Example \ref{ex:naive-distributive} it was shown that the operational description of strategies lacks a convenient distributive property for probabilistic combinations of strategies.
This problem is rectified under the new representation for strategies.

\begin{proposition}[Distributive property for strategies]
\label{prop:distributive}

  Let $Q,Q'$ be strategies and let $p\in[0,1]$.
  The strategy that plays according to $Q$ with probability $p$ and according to $Q'$ otherwise is given by the convex combination $pQ+(1-p)Q'$.
  A similar statement holds for co-strategies.

\end{proposition}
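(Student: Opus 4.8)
The plan is to exploit the inner-product property (Theorem~\ref{theorem:inner-product}) together with the fact, established inside the proof of Proposition~\ref{prop:uniqueness}, that the operators arising as elements of measuring co-strategies span the whole ambient space of Hermitian operators. The mixed strategy ``play according to $Q$ with probability $p$, play according to $Q'$ otherwise'' is plainly realizable by an operational procedure, hence by Definition~\ref{def:strategy} it is represented by \emph{some} positive semidefinite operator $Q''$; the goal is to show $Q'' = pQ + (1-p)Q'$, and it suffices to check that $Q''$ and $pQ+(1-p)Q'$ have the same inner product with every element of every measuring co-strategy.

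First I would make the operational mixture precise. Fix operational realizations $(\Phi_1,\dots,\Phi_r)$ of $Q$ and $(\Phi_1',\dots,\Phi_r')$ of $Q'$, with induced super-operators $\Xi,\Xi'$ so that $Q=\jam{\Xi}$ and $Q'=\jam{\Xi'}$. Build a new operational strategy that, before the first round, flips a $p$-biased coin and records the outcome in a control register kept in memory for the entire interaction, and in round $i$ applies $\Phi_i$ or $\Phi_i'$ according to the recorded value. (Concretely, one may take the $i$th memory space of the new strategy to be the direct sum $\cZ_i\oplus\cZ_i'$, so that ``which summand is occupied'' encodes the coin; each round operation then acts block-diagonally, sending off-diagonal blocks to zero, and is readily checked to be completely positive and trace-preserving.) Call its induced super-operator $\Xi''$ and its representation $Q''=\jam{\Xi''}$. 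Then for any compatible measuring co-strategy $\set{R_b}$: conditioning on the coin, the interaction between the new strategy and $\set{R_b}$ reduces, on the ``$Q$'' branch, to the interaction between $(\Phi_1,\dots,\Phi_r)$ and $\set{R_b}$ — yielding outcome $b$ with probability $\inner{Q}{R_b}$ by Theorem~\ref{theorem:inner-product} — and similarly to the interaction with $(\Phi_1',\dots,\Phi_r')$ on the other branch. Hence outcome $b$ occurs with total probability $p\inner{Q}{R_b}+(1-p)\inner{Q'}{R_b}=\inner{pQ+(1-p)Q'}{R_b}$. On the other hand, applying Theorem~\ref{theorem:inner-product} to the strategy $Q''$ itself shows this same probability equals $\inner{Q''}{R_b}$. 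Thus $\inner{Q''-pQ-(1-p)Q'}{R_b}=0$ for every $R_b$ occurring in a measuring co-strategy, and since these operators span $\her{\kprod{\cY}{1}{r}\ot\kprod{\cX}{1}{r}}$, one concludes $Q''=pQ+(1-p)Q'$; that the right-hand side is itself a strategy is in any case guaranteed directly by convexity (Proposition~\ref{prop:convexity}).

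A more computational route skips the co-strategy entirely: since the control register is never disturbed and is traced out at the end, composing the round operations of the mixed strategy and tracing out the final memory gives $\Xi''=p\,\Xi+(1-p)\,\Xi'$ outright, whereupon linearity of the Choi-Jamio\l kowski isomorphism yields $Q''=\jam{\Xi''}=p\,\jam{\Xi}+(1-p)\,\jam{\Xi'}=pQ+(1-p)Q'$. Either way, the main obstacle is purely one of bookkeeping: one must exhibit a genuine operational strategy implementing the random choice, and the naive convex combination of round operations is not even well-defined when the two strategies use memory spaces of different sizes — precisely the pathology of Example~\ref{ex:naive-distributive} — so some branching memory device (the direct sum $\cZ_i\oplus\cZ_i'$, or an auxiliary classical control register) is unavoidable, and one must verify that its off-diagonal contributions vanish under composition. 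The statement for co-strategies follows by the identical argument with the roles of input and output spaces interchanged.
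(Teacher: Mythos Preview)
Your proof is correct and follows essentially the same route as the paper: let $Q''$ denote the representation of the probabilistic mixture, observe that for every measuring co-strategy $\set{R_b}$ one has $\inner{Q''}{R_b}=p\inner{Q}{R_b}+(1-p)\inner{Q'}{R_b}$ by Theorem~\ref{theorem:inner-product}, and conclude $Q''=pQ+(1-p)Q'$ via the spanning/uniqueness argument of Proposition~\ref{prop:uniqueness}. The paper simply invokes Proposition~\ref{prop:uniqueness} as a black box rather than unpacking the spanning claim, and does not bother to spell out the operational construction of the mixed strategy (your direct-sum memory device), taking for granted that the probabilistic combination is a strategy and that its outcome probabilities satisfy $r_b''=pr_b+(1-p)r_b'$ by definition; your second, computational route via $\Xi''=p\,\Xi+(1-p)\,\Xi'$ is a minor bonus not present in the paper.
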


\begin{proof}

  %As co-strategies may be viewed as strategies, it suffices to prove the proposition only for strategies.

  Let $Q''$ denote the probabilistic combination of $Q$ and $Q'$ described in the statement of the proposition.
  For any measuring co-strategy $\set{R_b}$ and any outcome $b$ we let $r_b$ denote the probability with which the interaction between $Q$ and $\set{R_b}$ yields the outcome $b$.
  We also define $r_b'$ and $r_b''$ similarly in terms of $Q'$ and $Q''$, respectively.
  By definition, $r_b'' = pr_b+(1-p)r_b'$.

  By Theorem \ref{theorem:inner-product} and the linearity of the inner product, we have
  \[
    \Inner{Q''}{R_b}
    = r_b''
    = pr_b + (1-p)r_b'
    = p\Inner{Q}{R_b} + (1-p)\Inner{Q'}{R_b}
    = \Inner{pQ+(1-p)Q'}{R_b}
  \]
  from which it follows that $\set{R_b}$ cannot distinguish $Q''$ from $pQ+(1-p)Q'$ with nonzero bias.
  That these two operators are equal then follows from Proposition \ref{prop:uniqueness}.
\end{proof}

\chapter{Applications} \label{ch:applications}
%======================================================================

In this chapter we present several applications of the formalism for quantum strategies developed in Chapter \ref{ch:properties}:
\begin{itemize}

\item
In Section \ref{sec:game-theory} we develop a general formalism for zero-sum quantum games.
A quantum analogue of von Neumann's Min-Max Theorem for two-player zero-sum games is established, and an efficient algorithm to compute the value of a zero-sum quantum game is presented.

\item
Section \ref{sec:interactive-proofs} contains complexity theoretic applications of quantum strategies, including the collapse of the complexity classes $\cls{QRG}$ and $\cls{EXP}$ and parallel repetition results for single-prover quantum interactive proofs and for quantum interactive proofs with two competing provers.

\item
In Section \ref{sec:coin-flip} we provide a simplified proof of Kitaev's bound for strong quantum coin-flipping.

\end{itemize}
With the exception of the parallel repetition results, each of these applications first appeared in Ref.~\cite{GutoskiW07}.
The parallel repetition result for single-prover quantum interactive proofs is due to Watrous and was not published prior to the present thesis.

%======================================================================
\section{Theory of zero-sum quantum games} \label{sec:game-theory}
%======================================================================

In this section we develop a general formalism for zero-sum quantum games.
Specifically, we prove a quantum analogue of von Neumann's Min-Max Theorem for two-player zero-sum games in Section \ref{sec:game-theory:min-max} and we provide an efficient algorithm to compute the value of such a game in Section \ref{subsec:game-theory:alg}.

\subsection{Min-max theorem for zero-sum quantum games}
\label{sec:game-theory:min-max}
%======================================================================

This section begins with a brief introduction to games and min-max theorems, after which we provide our formalism for quantum games and the quantum min-max theorem.
%The existence of an efficient algorithm to compute the value of a quantum game is argued in Section \ref{subsec:game-theory:alg}.

While it is natural to expect a min-max theorem to hold for two-player zero-sum quantum games, the absence of a mathematically convenient formalism for these games has precluded the appearance of such a theorem in the literature.
(Lee and Johnson established a min-max theorem for the special case of one-round quantum games, wherein only a single round of messages is exchanged between the players and a referee \cite{LeeJ03}.)

\subsubsection{Two-player games and min-max theorems}
%======================================================================

A \emph{game} consists of a interaction between two or more players, followed by a \emph{payout} that is awarded to each player at the end of the interaction.
It is the goal of each player to maximize his or her own payout.
A two-player game is \emph{zero-sum} if the sum of the payouts awarded to the players is always zero.
Two-player zero-sum games are always competitive, as the players never have incentive to co-operate.
Many popular examples of games fall into this category, including Poker, Checkers, and Go.
(Win-lose games such as Checkers and Go can be represented as a zero-sum game wherein the only possible payouts are $\pm 1$.)

Let us call the two players in a zero-sum game Alice and Bob.
In the context of these games, a \emph{min-max theorem} is an assertion that every game has a \emph{value} $v$ with the following properties:
\begin{enumerate}
\item[(i)] There exists a strategy for Alice that ensures a payout of at least $v$ regardless of Bob's strategy.
\item[(ii)] There exists a strategy for Bob that ensures a payout of at most $v$ to Alice regardless of her strategy.
\end{enumerate}
In other words, there always exist strategies for the players that are \emph{optimal} in the sense that the players never have incentive to deviate from their optimal strategies.
%Indeed, even if Alice enjoys the advantage of examining Bob's strategy before committing to a strategy of her own, 
The original Min-Max Theorem for classical games was established by von Neumann in 1928 \cite{vonNeumann28}.

In an analytical context, min-max theorems are statements about sets $A,B$ and functions $f:A\times B\to\mathbb{R}$.
While it must always hold that
\[ \sup_{a\in A} \inf_{b\in B} f(a,b) \leq \inf_{b\in B} \sup_{a\in A} f(a,b), \]
a \emph{min-max theorem} provides conditions upon $A,B,f$ under which these two quantities are equal.
For example, the following well-known min-max theorem will prove useful for our purpose.

\def\factminmax{Convex-bilinear min-max theorem}
\begin{fact}[\factminmax]
\label{fact:min-max}

  If $A,B$ are compact convex subsets of finite-dimensional real vector spaces and $f:A\times B\to\mathbb{R}$ is bilinear then
  \[ \max_{a\in A} \min_{b\in B} f(a,b) = \min_{b\in B} \max_{a\in A} f(a,b). \]

\end{fact}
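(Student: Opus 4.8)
The plan is to derive this classical fact from the Separation Theorem (Fact~\ref{fact:separation}), following the familiar route through finite matrix games; von Neumann's original argument used the Brouwer fixed-point theorem, but separation suffices and keeps us within the toolkit already set up above. First I would dispose of the easy preliminaries. Since $f$ is bilinear on finite-dimensional real vector spaces it is continuous; moreover $a\mapsto\inf_{b\in B}f(a,b)$ is concave and upper semicontinuous (a pointwise infimum of affine functions of $a$), and $b\mapsto\sup_{a\in A}f(a,b)$ is convex and lower semicontinuous. Together with compactness of $A$ and $B$ this shows that all of the nested extrema are attained, so we may write $\max$ and $\min$ throughout. The inequality $\max_{a}\min_{b}f(a,b)\le\min_{b}\max_{a}f(a,b)$ is then immediate: for every $(a,b)$ we have $\min_{b'}f(a,b')\le f(a,b)\le\max_{a'}f(a',b)$, and it remains only to take the maximum over $a$ on the left and the minimum over $b$ on the right.

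The content is the reverse inequality, which I would prove by contradiction. Suppose $\alpha:=\max_{a}\min_{b}f(a,b)<\beta:=\min_{b}\max_{a}f(a,b)$ and fix $\gamma\in(\alpha,\beta)$. The open sets $\{a\in A:f(a,b)<\gamma\}$, indexed by $b\in B$, cover $A$ because $\min_{b}f(a,b)\le\alpha<\gamma$ for each $a$; by compactness finitely many $b_{1},\dots,b_{k}$ suffice. Symmetrically, the open sets $\{b\in B:f(a,b)>\gamma\}$, indexed by $a\in A$, cover $B$ because $\max_{a}f(a,b)\ge\beta>\gamma$; finitely many $a_{1},\dots,a_{\ell}$ suffice. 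Passing to the polytopes $A'=\convex\{a_{1},\dots,a_{\ell}\}\subseteq A$ and $B'=\convex\{b_{1},\dots,b_{k}\}\subseteq B$ and invoking continuity and compactness once more, one obtains $\max_{a\in A'}\min_{b\in B'}f(a,b)<\gamma<\min_{b\in B'}\max_{a\in A'}f(a,b)$, which would contradict the min-max equality for the bilinear ``matrix game'' played on the two simplices $A'$ and $B'$.

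It therefore remains to establish that equality for finite games, and this is where the Separation Theorem does the actual work. Writing $M$ for the matrix with entries $f(a_{i},b_{j})$ and $v$ for the value of the inner ``$\min\max$'', I would show that the closed convex set $\{\,M^{\trans}x-w:x\text{ a probability vector},\ w\text{ a nonnegative vector}\,\}$ contains the vector $v\mathbf{1}$: otherwise Fact~\ref{fact:separation} produces a separating functional which, after a sign analysis forced by the downward-closedness of the set, can be normalized to a probability vector that hands the second player a mixed strategy of value strictly below $v$, contradicting the definition of $v$. That membership is precisely the existence of an optimal mixed strategy for the first player, from which the finite min-max equality follows together with the easy direction. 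I expect this final step to be the only genuine obstacle: one must choose the convex sets being separated correctly, verify that the relevant set is closed (a compact set plus the nonpositive orthant), and carry out the sign-chasing that turns the separating normal into a bona fide probability vector while ruling out degenerate coordinates. Everything else is routine compactness and bilinearity bookkeeping. (Alternatively, this finite min-max equality is a standard consequence of linear-programming duality, which one could cite in place of the separation argument.)
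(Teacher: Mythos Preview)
Your argument is correct. The reduction to finite matrix games via compactness is sound: the covers you describe do what you claim, and since $f(a,\cdot)$ and $f(\cdot,b)$ are linear, the extrema over the polytopes $A'$ and $B'$ are attained at vertices, yielding the strict inequalities around $\gamma$. The separation argument for the finite game also goes through as you outline; the set $M^{\trans}\Delta-\mathbb{R}_{\ge 0}^{k}$ is closed (compact plus closed cone), the sign analysis forces the separating normal to be entrywise nonpositive with strictly negative sum, and normalizing produces a probability vector $y$ with $\max_i(My)_i<v$, contradicting the definition of $v$.

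That said, the paper does not prove this fact at all: it is stated as a \texttt{Fact} and attributed to Ville's 1938 generalization of von Neumann's theorem, with a pointer to Rockafellar for an English-language proof. So there is nothing to compare against---you have supplied a complete (and standard) proof where the paper simply cites the literature. Your choice to ground the argument in the Separation Theorem (Fact~\ref{fact:separation}) is pedagogically nice here, since it keeps the dependency graph of the thesis self-contained rather than importing LP duality or a fixed-point theorem from outside.
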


While Fact \ref{fact:min-max} does not follow immediately from von Neumann's original Min-Max Theorem, it does follow from an early generalization due to Ville \cite{Ville38}.
Additional citations and an English-language proof can be found in Rockafellar \cite{Rockafellar70}.

Analytical min-max theorems such as Fact \ref{fact:min-max} can be used to establish a min-max theorem for a given game or class of games.
To do so, it suffices that $A,B$ represent the sets of all possible strategies for Alice and Bob and that $f(a,b)$ denotes the payout to Alice when she employs strategy $a$ and Bob employs strategy $b$.

%For classical games, every possible strategy for a player is represented uniquely by a probability vector and the payout function is bilinear in the probability vectors of the two players.

\subsubsection{Formal definition of a zero-sum quantum game}
%======================================================================

Classical two-player games have several distinct mathematical formalizations, each with its own advantages.
%Two of the most common such formalisms include the \emph{extensive form} and the \emph{normal form}.
%In the extensive form, a game is represented as a tree of possible moves with each leaf of the tree specifying a payout to the players.
%In the normal form, the game is specified by a payoff matrix whose rows and columns are indexed by the possible strategies of the players.
%
%There is yet another formalism that lends itself particularly well to generalization to quantum games.
%This third formalism, called a \emph{refereed game}, encapsulates the rules of a particular game into a \emph{referee}, who exchanges messages with each of the two players.
One of these formalisms---the \emph{refereed game}---lends itself particularly well to generalization to quantum games.
This formalism encapsulates the rules of a particular game into a \emph{referee}, who exchanges messages with each of the two players.
The referee enforces the rules of the game and decides when to terminate the interaction and award a payout to the players.
In a \emph{quantum} refereed game the players and referee may exchange and process quantum information and the payout is determined by a measurement made by the referee at the end of the interaction.

\def\defquantumgame{Quantum game}
\begin{definition}[\defquantumgame] \label{def:quantum-game}
  An \emph{$r$-round referee} is an $r$-round measuring co-strategy $\set{R_m}_{m\in\Sigma}$ whose input spaces $\cX_1,\dots,\cX_r$ and output spaces $\cY_1,\dots,\cY_r$ take the form
  \[ \cX_i = \cA_i\ot\cB_i \quad \textrm{and} \quad \cY_i = \cC_i\ot\cD_i \]
  for complex Euclidean spaces $\cA_i$, $\cB_i$, $\cC_i$, and $\cD_i$ for $1\leq i\leq r$.
  An \emph{$r$-round quantum game} consists of an $r$-round referee along with \emph{payout functions}
  \[ V_A,V_B : \Sigma \to \mathbb{R} \]
  defined on the referee's set $\Sigma$ of measurement outcomes.
  For each such outcome $m\in\Sigma$, Alice's payout is $V_A(m)$ and Bob's payout is $V_B(m)$.
  Such a game is \emph{zero-sum} if $V_A(m)+V_B(m)=0$ for all $m\in\Sigma$.
  
%  The referee's actions in a quantum refereed game are completely determined by $\set{R_m}$.
  During each round, the referee simultaneously sends a message to Alice and a message to Bob, and a response is expected from each player.
  The spaces $\cA_i$ and $\cB_i$ correspond to the messages sent by the referee during the $i$th round, while $\cC_i$ and $\cD_i$ correspond to their responses.
  After $r$ rounds, the referee produces an output $m\in\Sigma$ and awards the payouts $V_A(m)$ to Alice and $V_B(m)$ to Bob.
\end{definition}

Notice that Definition \ref{def:quantum-game} places no restrictions on the strategies available to the players.
For example, the players might employ a strategy that allows them to exchange quantum information directly, as opposed to an indirect exchange via messages to the referee.
Alternatively, they might share entanglement or randomness but be forbidden from direct communication, or they might even be forbidden from sharing entanglement or randomness altogether.
%Clearly, the specific characteristics of a given game can depend upon such restrictions.

\subsubsection{Min-max theorem for zero-sum quantum games}
%======================================================================

%Throughout the remainder of Section \ref{sec:game-theory} we will use the following notation.
Let
\[ \strategy{A} \subset \pos{\kprod{\cC}{1}{r}\ot\kprod{\cA}{1}{r}} \]
denote the set of all $r$-round non-measuring strategies for Alice's input spaces $\cA_1,\dots,\cA_r$ and output spaces $\cC_1,\dots,\cC_r$.
Similarly, let
\[ \strategy{B} \subset \pos{\kprod{\cD}{1}{r}\ot\kprod{\cB}{1}{r}} \]
denote the set of all $r$-round non-measuring strategies for Bob's input spaces $\cB_1,\dots,\cB_r$ and output spaces $\cD_1,\dots,\cD_r$.

In a zero-sum quantum game it cannot simultaneously be to both players' advantage to communicate directly with each other or to share a source of randomness or entanglement.
Thus, we may assume that Alice and Bob play independent strategies represented by
$A\in\strategy{A}$ and $B\in\strategy{B}$, respectively.
In particular, their combined $r$-round strategy for the referee's input spaces $\cX_1,\dots,\cX_r$ and output spaces $\cY_1,\dots,\cY_r$ is described by the tensor product $A\ot B$.
%\[ A\ot B \in\pos{\kprod{\cY}{1}{r}\ot\kprod{\cX}{1}{r}}. \]
%It is instructive to recognize that the tensor product structure $A\ot B$ of the players' combined strategy reflects only the fact that Alice and Bob do not share prior entanglement or randomness and do not communicate directly during the game.
%In particular, 

For any zero-sum quantum game with referee $\set{R_m}$ and payout functions $V_A,V_B$ we write
\[ V(m) \defeq V_A(m) = -V_B(m) \]
and define the Hermitian operator
\[ R \defeq \sum_{m\in\Sigma} V(m)R_m. \]
By Theorem \ref{theorem:inner-product} (\theoreminnerproduct) Alice's expected payout for this game is given by
\[ \sum_{m\in\Sigma} V(m)\inner{A\ot B}{R_m}=\inner{A\ot B}{R} \]
while Bob's expected payout is $-\inner{A\ot B}{R}$.
Because the inner product $\inner{A\ot B}{R}$ is a bilinear function of $A$ and $B$ and because the sets $\strategy{A},\strategy{B}$ are compact and convex (as noted in Proposition \ref{prop:convexity}), we may employ Fact \ref{fact:min-max} (\factminmax) to obtain
\[ \max_{A\in \strategy{A}} \min_{B\in \strategy{B}} \inner{A\ot B}{R} = \min_{B\in \strategy{B}} \max_{A\in \strategy{A}} \inner{A\ot B}{R}. \]
The real number represented by the two sides of this equation is called the \emph{value} of the game.
Any strategy $A\in\strategy{A}$ achieving the maximum of the left side of this equality is an optimal strategy for Alice, while any strategy $B\in\strategy{B}$ achieving the minimum of the right side of this equality is an optimal strategy for Bob.
%The value $v$ in the statement of the theorem is the real number represented by the two sides of this equation.
%Any strategy $A\in\strategy{A}$ for Alice achieving the maximum of the left side of this equality satisfies the requirements of item \ref{item:min-max:1}, while any strategy $B\in\strategy{B}$ for Bob achieving the minimum of the right side of this equality satisfies the requirements of item \ref{item:min-max:2}.
The following theorem is now proved.

\def\thmminmax{Min-max theorem for zero-sum quantum games}
\begin{theorem}[\thmminmax]
\label{thm:min-max}

  Every two-player zero-sum $r$-round quantum game has a \emph{value} $v$ with the following properties:
  \begin{enumerate}
  
  \item[(i)]
    There exists a strategy for Alice that ensures a payout of at least $v$ regardless of Bob's strategy.
    
  \item[(ii)]
    There exists a strategy for Bob that ensures a payout of at most $v$ to Alice regardless of her strategy.
    
  \end{enumerate}

\end{theorem}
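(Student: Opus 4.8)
The plan is to observe that nearly every ingredient of the proof has already been assembled in the discussion preceding the statement, so that what remains is to invoke the bilinear min-max theorem on the appropriate payout functional and read off the two claimed strategies. First I would fix a zero-sum $r$-round quantum game with referee $\set{R_m}_{m\in\Sigma}$ and payout $V(m)=V_A(m)=-V_B(m)$, and form the Hermitian operator $R\defeq\sum_{m\in\Sigma}V(m)R_m$ on the referee's message spaces. Since in a zero-sum game neither player can benefit from direct communication with, or shared randomness or entanglement with, the opponent, it is no loss of generality to take Alice and Bob to play independent strategies $A\in\strategy{A}$ and $B\in\strategy{B}$, whose combined action on the referee's input and output spaces is the tensor product $A\ot B$. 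By Theorem \ref{theorem:inner-product} (\theoreminnerproduct), Alice's expected payout is then the bilinear form $f(A,B)\defeq\inner{A\ot B}{R}$, and Bob's expected payout is $-f(A,B)$.

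Next I would record that $\strategy{A}$ and $\strategy{B}$ are compact and convex subsets of the finite-dimensional real vector spaces $\her{\kprod{\cC}{1}{r}\ot\kprod{\cA}{1}{r}}$ and $\her{\kprod{\cD}{1}{r}\ot\kprod{\cB}{1}{r}}$, respectively, by Proposition \ref{prop:convexity}. Hence Fact \ref{fact:min-max} (\factminmax) applies to $f$ and yields a real number
\[ v\defeq\max_{A\in\strategy{A}}\min_{B\in\strategy{B}}f(A,B)=\min_{B\in\strategy{B}}\max_{A\in\strategy{A}}f(A,B). \]
I would then extract the two strategies directly. Let $A_0\in\strategy{A}$ attain the outer maximum on the left-hand side; then $\min_{B\in\strategy{B}}f(A_0,B)=v$, so $f(A_0,B)\geq v$ for every $B\in\strategy{B}$, which is property (i). Symmetrically, let $B_0\in\strategy{B}$ attain the outer minimum on the right-hand side; then $\max_{A\in\strategy{A}}f(A,B_0)=v$, so $f(A,B_0)\leq v$ for every $A\in\strategy{A}$, giving property (ii). (The maxima and minima are attained since $f$ is continuous and the strategy sets are compact and nonempty.)

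As for obstacles: the analytic heart of the argument is entirely delegated to Fact \ref{fact:min-max}, so there is essentially nothing to grind through beyond checking its hypotheses, all of which are already in hand. The one point that deserves care — and the only place the zero-sum condition is genuinely used — is the reduction to product strategies $A\ot B$: one must argue that whichever player is in effect choosing a best response to the opponent's committed strategy has nothing to gain from shared entanglement, shared randomness, or side-communication, since realizing any such correlated resource would require the opponent to act against their own interest. Once this modeling step is granted, the rest is a mechanical application of bilinearity, compactness, convexity, and Theorem \ref{theorem:inner-product}.
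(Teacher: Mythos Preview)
Your proposal is correct and matches the paper's approach exactly: the paper assembles precisely the same ingredients (product strategies $A\ot B$ by the zero-sum assumption, bilinearity of $\inner{A\ot B}{R}$ via Theorem~\ref{theorem:inner-product}, compactness and convexity of $\strategy{A},\strategy{B}$ via Proposition~\ref{prop:convexity}, then Fact~\ref{fact:min-max}) in the discussion immediately preceding the theorem and simply declares the theorem proved.
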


\subsection{Efficient algorithm to compute the value} % of a zero-sum quantum game}
\label{subsec:game-theory:alg}
%======================================================================

Theorem \ref{thm:min-max} (\thmminmax) asserts that each zero-sum quantum game has a value.
But can this value be easily computed?
%But how hard is it to \emph{compute} this value?
In this section we answer this question in the affirmative by exhibiting an efficient deterministic algorithm that takes as input a description of a referee $\set{R_m}$
%that defines a zero-sum quantum game
and produces as output the value of the zero-sum quantum game defined by $\set{R_m}$.

This goal is achieved by expressing the value of the game
%defined by $\set{R_m}$
as a semidefinite optimization problem and then employing the existence of efficient algorithms for semidefinite optimization.
As these algorithms apply only to certain semidefinite optimization problems whose feasible sets are ``well-bounded,'' it is necessary to establish the well-boundedness of our semidefinite optimization problem for zero-sum quantum games.

\subsubsection{An optimization problem for the value of a game}
%======================================================================

%Having established that every zero-sum quantum game has a value, we now describe an efficient algorithm to compute the value of a given quantum game.

Let $\set{R_m}_{m\in\Sigma}$ be an $r$-round referee and consider the following super-operator defined for each measurement outcome $m\in\Sigma$:
\begin{align*}
  \Omega_m
  &: \lin{\kprod{\cD}{1}{r}\ot\kprod{\cB}{1}{r}}\to\lin{\kprod{\cC}{1}{r}\ot\kprod{\cA}{1}{r}} \\
  &: B \mapsto \Ptr{ \kprod{\cD}{1}{r}\ot\kprod{\cB}{1}{r}
  }{
    \Pa{ B\ot I_{\kprod{\cC}{1}{r}\ot\kprod{\cA}{1}{r}} } R_m
  }.
\end{align*}
These super-operators have the property that $\inner{A\ot B}{R_m} = \inner{A}{\Omega_m(B)}$ for each measurement outcome $m\in\Sigma$ and each choice of strategies $A$ for Alice and $B$ for Bob.
Indeed, the set $\set{\Omega_m(B)}_{m\in\Sigma}$ is the $r$-round measuring co-strategy for Alice's input spaces $\cA_1,\dots,\cA_r$ and output spaces $\cC_1,\dots,\cC_r$ obtained by ``hard-wiring'' Bob's strategy $B$ into the referee.

Let $V(m)$ denote the payout to Alice in a zero-sum quantum game with referee $\set{R_m}$.
For the moment, it is convenient to restrict our attention to payout functions with $0\leq V(m)\leq 1$ for all $m$.
But we shall soon see that the ensuing discussion is easily generalized to arbitrary payout functions.

Borrowing from the previous subsection, we define the super-operator \[\Omega_R\defeq\sum_{m\in\Sigma}V(m)\Omega_m\] so that Alice's expected payout is given by \(\inner{A}{\Omega_R(B)}\) when Alice and Bob play according to the strategies $A$ and $B$, respectively.
Naturally, Alice's maximum expected payout when Bob plays according to $B$ is
\[ \max_{A\in\strategy{A}} \inner{A}{\Omega_R(B)}. \]
Let $\costrategy{A}$ denote the set of co-strategies for Alice's input and output spaces.
Because $0\leq V(m) \leq 1$ for all $m$, it follows that, regardless of the choice of $B\in\bB$, the operator $\Omega_R(B)$ is always an element of some measuring co-strategy.
Hence, we may apply Theorem \ref{thm:max-prob} (\thmmaxprob)
%appearing in Section \ref{subsec:convex-polarity}
to obtain the following alternate expression for Alice's maximum expected payout:
\[ \min \Set{\lambda\geq 0 : \Omega_R(B)\preceq\lambda Q \textrm{ for some } Q\in\costrategy{A} }. \]
Thus, the value of the game is given by
\[
  \min_{B\in\strategy{B}} \max_{A\in \strategy{A}} \inner{A}{\Omega_R(B)}
  = \min_{B\in\strategy{B}} \min \Set{\lambda\geq 0 : \Omega_R(B)\preceq\lambda Q \textrm{ for some } Q\in\costrategy{A} }, 
\]
which can equivalently be written as an optimization problem:
\begin{align}
%  \textrm{\underline{Min-max problem}} \nonumber \\
  \textrm{minimize} \quad & \lambda \nonumber \\
  \textrm{subject to} \quad & \Omega_R(B)\preceq\lambda Q
    \label{eq:min-max-opt}
%    \nonumber
    \\
  & B\in\strategy{B} \nonumber \\ & Q\in\costrategy{A}. \nonumber
\end{align}
Thus, an efficient solution for this optimization problem yields an efficient algorithm to compute the value of a zero-sum quantum game, provided that the payouts all fall within the interval $[0,1]$.
To compute the value of a game with an \emph{arbitrary} payout function $V$ we simply shift and scale $V$ to a payout function $V'$ with payouts in $[0,1]$ by defining 
\[ V'(m) \defeq \frac{V(m) + \Abs{ \min_n \set{V(n)} } }{ \Abs{\max_n \set{V(n)}} + \Abs{ \min_n \set{V(n)} }} \]
and then solve the optimization problem associated with $V'$, from which the original value is easily recovered.

%\subsubsection{The Min-max problem as a semidefinite optimization problem}
\subsubsection{A semidefinite optimization problem for the value of a game}
%======================================================================

Let us argue that the optimization problem \eqref{eq:min-max-opt} can be expressed as a semidefinite optimization problem in the super-operator form described in Section \ref{subsec:semidefinite-duality}.
Specifically, we construct a triple $(\Phi,E,F)$ with the property that the optimal value of \eqref{eq:min-max-opt} equals the optimal value of the dual problem
\begin{align*}
\textrm{minimize} \quad & \inner{F}{X} \\
\textrm{subject to} \quad & \Phi(X) \succeq E\\
& X \succeq 0.
\end{align*}
%Our construction in this section borrows heavily form that of Section \ref{subsec:semidefinite-duality}, which was used to prove Theorem \ref{thm:max-prob} (\thmmaxprob).

%Indeed, as in Section \ref{subsec:semidefinite-duality}
To this end,
we choose the super-operator $\Phi$ so that the semidefinite variable $X$ can be assumed without loss of generality to be block-diagonal with the form
\[
  X = \Pa{
  \begin{array}{cc}
    B \\
    & Q
  \end{array} },
  \qquad
  B = \Pa{
  \begin{array}{ccc}
    B_1 \\
    & \ddots\\
    && B_r
  \end{array} },
  \qquad
  Q = \Pa{
  \begin{array}{ccc}
    Q_1\\
    & \ddots\\
    && Q_r
  \end{array} }
\]
It is convenient to decompose $\Phi$, $E$, and $F$ into a hierarchy of block-diagonal operators as follows:
\[
E = \Pa{
  \begin{array}{ccc}
    E_\textrm{Bob}\\
    & E_\textrm{co-Alice}\\
    && 0
  \end{array} },
\qquad
F = \Pa{
  \begin{array}{cc}
    F_\textrm{Bob}\\
    & F_\textrm{co-Alice}
  \end{array} },
\]
and
\begin{align*}
  \Phi :&
  \Pa{
  \begin{array}{cc}
    B \\
    & Q
  \end{array} } %\\
  %\stackrel{\Phi}{\longmapsto}&
  \mapsto %&
  \Pa{
  \begin{array}{cccc}
    \Phi_\textrm{Bob}(B) && 0 \\
    & \Phi_\textrm{co-Alice}(Q) \\
    0 && Q_r\ot I_{\cC_r} - \Omega_R(B_r)
  \end{array} }.
\end{align*}
It is clear that the condition $\Phi(X)\succeq E$ is equivalent to
\begin{align*}
\Phi_\textrm{Bob}(B) &\succeq E_\textrm{Bob},\\
\Phi_\textrm{co-Alice}(Q) &\succeq E_\textrm{co-Alice},\\
Q_r\ot I_{\cC_r} &\succeq \Omega_R(B_r).
\end{align*}
The idea is that we will choose $\Phi_\textrm{Bob},\Phi_\textrm{co-Alice}$ so that the first two constraints ensure that $B_r$ is a valid $r$-round strategy for Bob and that $Q_r\ot I_{\cC_r}$ is (a scalar multiple of) a valid $r$-round co-strategy for Alice's input and output spaces.
The third constraint clearly captures the
%minimization condition of Theorem \ref{thm:max-prob} (\thmmaxprob).
semidefinite inequality condition of \eqref{eq:min-max-opt}.

%The idea is that we will choose $\Phi_\textrm{Bob},\Phi_\textrm{co-Alice}$ so that the first of these three constraints ensures that $B_r$ is a valid $r$-round strategy for Bob while the second ensures that $Q_r\ot I_{\cC_r}$ is (a scalar multiple of) a valid $r$-round co-strategy for Alice's input and output spaces.
%The third constraint clearly captures the
%%minimization condition of Theorem \ref{thm:max-prob} (\thmmaxprob).
%semidefinite inequality condition of \eqref{eq:min-max-opt}.

%The fourth constraint ensures that the set of feasible solutions is bounded.
%While not strictly necessary, this condition makes the problem $(\Phi,E,F)$ more amenable to standard algorithms for semidefinite optimization, and therefore simplifies the forthcoming discussion of computational efficiency.
%The value of $t$, to be derived later, can be assumed to satisfy
%\[ t = O\Pa{r^3 \dim\Pa{ \kprod{\cY}{1}{r}\ot\kprod{\cX}{1}{r}\ot\kprod{\cC}{1}{r} } }. \]
%
%In Section \ref{subsec:semidefinite-duality} we described a semidefinite optimization problem whose primal and dual constraints ensured that the feasible set consists of strategies and co-strategies, respectively.
%Given the similarity between the semidefinite optimization problem described in Section \ref{subsec:semidefinite-duality} and what we require now of $\Phi$, $E$, and $F$, it is no surprise that these objects closely resemble their counterparts from Section \ref{subsec:semidefinite-duality}.

Let us fill in the details for $\Phi$, $E$, and $F$.
Define
\begin{align*}
  \Phi_\textrm{Bob} :&
%  \Pa{
%  \begin{array}{ccc}
%    B_1 \\
%    & \ddots\\
%    && B_r
%  \end{array} } \\
\ B
  %\stackrel{\Phi}{\longmapsto}&
  \mapsto %&
  \Pa{
  \begin{array}{cccc}
    \ptr{\cD_1}{B_1} &&& 0 \\
    & \ptr{\cD_2}{B_2} - B_1\ot I_{\cB_2} \\
    && \ddots\\
    0 &&& \ptr{\cD_r}{B_r} - B_{r-1}\ot I_{\cB_r}
  \end{array} },
%\end{align*}
%and
%\begin{align*}
  \\[2mm]
  \Phi_\textrm{co-Alice} :&
%    \Pa{
%    \begin{array}{ccc}
%      Q_1 \\
%      & \ddots\\
%      && Q_r
%    \end{array} } \\
\ Q
    %\stackrel{\Phi}{\longmapsto}&
    \mapsto %&
    \Pa{
    \begin{array}{ccc}
      Q_1\ot I_{\cC_1} - \ptr{\cA_2}{Q_2} && 0 \\
      & \ddots\\
      0 && Q_{r-1}\ot I_{\cC_{r-1}} - \ptr{\cA_r}{Q_r}
  \end{array} },
\end{align*}
and
\[
  E_\textrm{Bob} = \Pa{
  \begin{array}{cccc}
    I_{\cB_1} &&& 0 \\ & 0 \\ && \ddots \\ 0 &&& 0
  \end{array} },
  \qquad
  E_\textrm{co-Alice} = 0.
\]
For the objective function, we select
\[
  F_\textrm{Bob} = 0,
  \qquad
  F_\textrm{co-Alice} = \Pa{
  \begin{array}{cccc}
    I_{\cA_1} &&& 0 \\ & 0 \\ && \ddots \\ 0 &&& 0
  \end{array} }
\]
so that \[\inner{F}{X}=\inner{F_\textrm{co-Alice}}{Q}=\inner{I_{\cA_1}}{Q_1}=\ptr{}{Q_1}.\]

\subsubsection{Correctness of the semidefinite optimization problem}
%======================================================================

Let us verify that the optimal value of the dual problem $(\Phi,E,F)$ equals the optimal value of \eqref{eq:min-max-opt}.
To the extent that $(\Phi,E,F)$ resembles the semidefinite optimization problem appearing in
%the proof of Theorem \ref{thm:max-prob} (\thmmaxprob) from
Section \ref{subsec:semidefinite-duality}, 
the material in that section---particularly the proofs of Lemmas \ref{lm:primal} and \ref{lm:dual}---can be reused in the present setting.

\begin{lemma}[Correctness of $(\Phi,E,F)$]
\label{lm:sdp-min-max} 

%For each feasible solution $X$ to $(\Phi,E,F)$ achieving an objective value $v\in[0,1]$ there is another feasible solution $X'$ with objective value $v$ and whose diagonal blocks $B_1',\dots,B_r'$, $Q_1',\dots,Q_r'$ have the property that

%Let $p$ denote the optimal value of $(\Phi,E,F)$.
%Under the provision that the bound $t$ on the set of feasible solutions is chosen large enough,
For each dual feasible $X$ for $(\Phi,E,F)$
%achieving objective value $p$
there is another dual feasible $X'$ whose objective value $p$ equals that of $X$ and
whose diagonal blocks $B_1',\dots,B_r'$, $Q_1',\dots,Q_r'$ have the property that
\begin{enumerate}
\item \label{item:sdp-Bob} $B_i'$ is an $i$-round non-measuring strategy, and 
\item \label{item:sdp-co-Alice}$Q_i'\ot I_{\cC_i}$ is an $i$-round non-measuring co-strategy multiplied by $p$
\end{enumerate}
for each $i=1,\dots,r$.

\end{lemma}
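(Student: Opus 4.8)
The plan is to read off from the constraint $\Phi(X)\succeq E$ that a dual feasible $X$ has Bob blocks $B_1,\dots,B_r$ with $\ptr{\cD_1}{B_1}\succeq I_{\cB_1}$ and $\ptr{\cD_i}{B_i}\succeq B_{i-1}\ot I_{\cB_i}$ for $i\geq 2$, co-Alice blocks $Q_1,\dots,Q_r$ with $Q_{i-1}\ot I_{\cC_{i-1}}\succeq\ptr{\cA_i}{Q_i}$ for $i\geq 2$, and a coupling $Q_r\ot I_{\cC_r}\succeq\Omega_R(B_r)$, the objective value being $p=\ptr{}{Q_1}$. The goal is then to produce a feasible $X'$ with the same objective whose Bob blocks form a genuine strategy chain and whose co-Alice blocks are $p$ times the truncation chain of a genuine co-strategy.

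I would first obtain the key numerical fact that $v\leq p$, where $v$ is the value of the game. This follows from weak duality for the triple $(\Phi,E,F)$ together with correctness of the associated primal problem: the latter, proved along the lines of Lemma~\ref{lm:primal} and the reformulation via Theorem~\ref{thm:max-prob} (\thmmaxprob), identifies the primal optimum $\alpha$ with $v$, while weak duality gives $\alpha\leq\beta\leq p$, whence $v=\alpha\leq p$. Consequently Bob has an optimal strategy $B^{\star}\in\strategy{B}$; let $B_1',\dots,B_r'$ be its $i$-round truncations, which exist by Theorem~\ref{thm:char} (\thmchar). Since Alice's maximum payout against $B^{\star}$ is $v$ and $\costrategy{A}$ is compact (Proposition~\ref{prop:convexity}), there is a co-strategy $C^{\star}\in\costrategy{A}$ with $\Omega_R(B^{\star})\preceq v\,C^{\star}\preceq p\,C^{\star}$. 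Writing $C^{\star}=R^{\star}\ot I_{\cC_r}$ as in Theorem~\ref{thm:char}, I would set $Q_r':=p\,R^{\star}$ and, descending, define $Q_{i-1}'$ by $Q_{i-1}'\ot I_{\cC_{i-1}}:=\ptr{\cA_i}{Q_i'}$, so that $\tfrac1p Q_i'\ot I_{\cC_i}$ is exactly the $i$-round truncation of $C^{\star}$; then $X'$ is the block-diagonal operator with Bob blocks $B_1',\dots,B_r'$ and co-Alice blocks $Q_1',\dots,Q_r'$.

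The verification is then a formality: $X'\succeq 0$ blockwise; the Bob constraints and the co-Alice constraints hold with equality, hence a fortiori with $\succeq$, because strategy and co-strategy truncation chains satisfy precisely those partial-trace identities by Theorem~\ref{thm:char}; the coupling holds since $Q_r'\ot I_{\cC_r}=p\,C^{\star}\succeq\Omega_R(B^{\star})$; the objective is $\ptr{}{Q_1'}=p$ because $\tfrac1p Q_1'\ot I_{\cC_1}$ is a one-round co-strategy, forcing $\tfrac1p Q_1'$ to have trace one; and properties (1) and (2) hold by construction. The degenerate case $p=0$ is the same argument read with $v=0$, so that $\Omega_R(B^{\star})\preceq 0$ and every $Q_i'=0$.

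The part I expect to be the main obstacle is handling Bob's blocks, and it is exactly what prevents the proof from mimicking Lemmas~\ref{lm:primal} and~\ref{lm:dual} by a local rounding of the blocks of the given $X$. Bob's constraints say $\ptr{\cD_1}{B_1}\succeq I_{\cB_1}$ rather than $=I_{\cB_1}$, so $B$ is ``super-normalized'', and in general there is no strategy $B_1'$ with $0\preceq B_1'\preceq B_1$ (already when $B_1$ is rank one and its $\cB_1$-marginal is not a multiple of $I_{\cB_1}$); hence one cannot preserve the coupling by invoking monotonicity of the positive map $\Omega_R$ under a downward perturbation of $B_r$. Replacing Bob's blocks by an a priori unrelated optimal strategy is therefore unavoidable, and that is what forces the detour through the game value $v$, and so through weak duality and correctness of the primal problem. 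The co-Alice blocks, by contrast, could alternatively be treated directly by the upward-rounding argument of Lemma~\ref{lm:dual}.
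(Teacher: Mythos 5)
Your route is genuinely different from the paper's, and your skepticism about the local-rounding approach is well founded: the paper proves this lemma exactly in the style of Lemmas \ref{lm:primal} and \ref{lm:dual}, asserting that there is a $B_1'\preceq B_1$ with $\ptr{\cD_1}{B_1'}=I_{\cB_1}$ (and similarly down Bob's chain), rounding the $Q_i$ upward, and preserving the coupling from $B_r'\preceq B_r$, $Q_r'\succeq Q_r$ and complete positivity of $\Omega_R$. Your rank-one example refutes that downward step: for $B_1=vv^*$ with $v=\sqrt{2}\,d_1\ot b_1+d_2\ot b_2$ one has $\ptr{\cD_1}{B_1}\succeq I_{\cB_1}$, yet every positive semidefinite operator below $B_1$ is a scalar multiple of $B_1$, so no admissible $B_1'$ exists. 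Moreover some global input really is unavoidable: a structured $X'$ as in the lemma certifies, via Theorems \ref{theorem:inner-product} and \ref{thm:max-prob}, that $p\geq v$, so the lemma is equivalent to the statement that no dual feasible point has objective value below the game value.

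That said, your proof leaves precisely this global input unproved, and that is a genuine gap. Granting $v\leq p$, your construction is correct (optimal $B^{\star}$ and its truncations; a co-strategy $C^{\star}$ with $\Omega_R(B^{\star})\preceq vC^{\star}$; the truncation chain of $pC^{\star}$ via Theorem \ref{thm:char}; the trace computation $\ptr{}{Q_1'}=p$; the $p=0$ case). But you derive $v\leq p$ from "correctness of the associated primal problem," claimed to follow "along the lines of Lemma \ref{lm:primal} and Theorem \ref{thm:max-prob}." Lemma \ref{lm:primal} concerns the different triple of Section \ref{subsec:semidefinite-duality}, and the primal of $(\Phi,E,F)$ is another object entirely: writing it out, exhibiting feasible points of value near $v$ requires, for Alice's optimal strategy $A^{\star}$, a chain $Y_1,\dots,Y_r\succeq 0$ with $Y_i\ot I_{\cD_i}\preceq\ptr{\cB_{i+1}}{Y_{i+1}}$, $Y_r\ot I_{\cD_r}\preceq G$ and $\ptr{}{Y_1}$ close to $v$, where $G\succeq 0$ on Bob's spaces is defined by $\inner{B}{G}=\inner{A^{\star}\ot B}{R}$. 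Producing such a chain amounts to showing $G\succeq vC$ for some co-strategy $C$ on Bob's spaces---a \emph{minimum}-output-probability counterpart of Theorem \ref{thm:max-prob} that the thesis never proves and that carries essentially the same weight as the lemma itself (equivalently, one must show directly that a relaxed chain $B_i\succeq 0$, $\ptr{\cD_1}{B_1}\succeq I_{\cB_1}$, $\ptr{\cD_i}{B_i}\succeq B_{i-1}\ot I_{\cB_i}$ satisfies $\inner{A^{\star}\ot B_r}{R}\geq v$, which, as your own example shows, does not follow from domination of a genuine strategy). Until that ingredient is supplied, the crux of the argument is asserted rather than proved.
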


%We remark that the restriction in Lemma \ref{lm:sdp-min-max} to objective values $v$ belonging to the interval $[0,1]$ is not strictly necessary and is used only as a convenient way to ensure that the replacement solution $X'$ obeys $\ptr{}{X'}\leq t$.
%(Recall that this bound on the trace of $X'$ was imposed only to simplify the forthcoming discussion of computational efficiency.)

%\begin{proof}[Proof of Lemma \ref{lm:sdp-min-max}]
\begin{proof}
Let $B_1,\dots,B_r$, $Q_1,\dots,Q_r$ denote the diagonal blocks of $X$.
First, we choose $B_1',\dots,B_r'$ and show that they satisfy item \ref{item:sdp-Bob} in the statement of the lemma.
The proof is nearly identical to that of Lemma \ref{lm:primal}.
As $X$ is feasible, we know that $\Phi_\textrm{Bob}(B)\succeq E_\textrm{Bob}$ and therefore
\begin{align*}
  \ptr{\cD_1}{B_1} &\succeq I_{\cB_1} \\
  \ptr{\cD_2}{B_2} &\succeq B_1\ot I_{\cB_2} \\
  \vdots \\
  \ptr{\cD_r}{B_r} &\succeq B_{r-1}\ot I_{\cB_r}.
\end{align*}
It is clear that there is a $B_1'\preceq B_1$ such that $\ptr{\cD_1}{B_1'}=I_{\cB_1}$.
Similarly, for each $i=2,\dots,r$ there is a $B_i'\preceq B_i$ such that $\ptr{\cD_i}{B_i'}=B_{i-1}'\ot I_{\cB_i}$.
That $B_1',\dots,B_r'$ are non-measuring strategies follows from Theorem \ref{thm:char} (\thmchar).

Next, we choose $Q_1',\dots,Q_r'$ and show that they satisfy item \ref{item:sdp-co-Alice} in the statement of the proposition.
The proof is nearly identical to that of Lemma \ref{lm:dual}.
As $X$ is feasible, we know that $\Phi_\textrm{co-Alice}(Q)\succeq E_\textrm{co-Alice}$ and therefore
\begin{align*}
  Q_1\ot I_{\cC_1} &\succeq \ptr{\cA_2}{Q_2}\\
  \vdots \\
  Q_{r-1}\ot I_{\cC_{r-1}} &\succeq \ptr{\cA_r}{Q_r}.
\end{align*}
As $X$ has objective value $p$, it must be that $p=\ptr{}{Q_1}$.
If $p=0$ then it must hold that $Q_1=\cdots=Q_r=0$,
%In this case,
so item \ref{item:sdp-co-Alice} holds trivially for the choice $Q_1'=\cdots=Q_r'=0$.

Assume then that $p>0$.
It is clear that there is a $Q_2'\succeq Q_2$ such that $Q_1\ot I_{\cC_1}=\ptr{\cA_2}{Q_2'}$.
Similarly, for each $i=3,\dots,r$ there is a $Q_i'\succeq Q_i$ such that $Q_{i-1}'\ot I_{\cC_{i-1}}=\ptr{\cA_i}{Q_i'}$.
Taking $Q_1'=Q_1$ and using the fact that $p=\ptr{}{Q_1}$, it follows from Theorem \ref{thm:char} (\thmchar) that each $\frac{1}{p}Q_i'\ot I_{\cC_i}$ is a non-measuring co-strategy.

At this point, we have chosen the diagonal blocks $B_1',\dots,B_r'$, $Q_1',\dots,Q_r'$ of $X'$.
That $X$ and $X'$ have the same objective value follows immediately from the choice $Q_1'=Q_1$.
%It remains only to verify that $Q_r'\ot I_{\cC_r} \succeq \Omega_R(B_r')$ and that $\ptr{}{X'}\leq t$.
%The first inequality follows immediately from the facts that $Q_r'\succeq Q_r$, that $B_r'\preceq B_r$, and that the super-operator $\Omega_R$ is completely positive.
%The second inequality follows from the fact that $v\leq 1$.
It remains only to verify that $Q_r'\ot I_{\cC_r} \succeq \Omega_R(B_r')$.
But this inequality follows immediately from the facts that $Q_r'\succeq Q_r$, that $B_r'\preceq B_r$, and that the super-operator $\Omega_R$ is completely positive.
\end{proof}

%By Lemma \ref{lm:sdp-min-max}, any optimal solution $X$ to $(\Phi,E,F)$ may be replaced by another optimal solution $X'$ for which the blocks $B_r',Q_r'$ satisfy \[ \Omega_R(B_r')\preceq Q_r'\ot I_{\cC_r}=\lambda Q' \]  where $B_r'$ is an $r$-round strategy and $Q_r'\ot I_{\cC_r}=\lambda Q'$ for some $r$-round co-strategy $Q'$ and some real number $\lambda$.
%As $X'$ is optimal, $\lambda$ must be minimal over all choices of $B_r'$ and $Q$, from whic

%\subsubsection{Overview of semidefinite optimization algorithms}
\subsubsection{Computational efficiency and well-boundedness}
%======================================================================

Now that we have expressed the value of a zero-sum quantum game as a semidefinite optimization problem, it remains only to argue that an existing algorithm for semidefinite optimization can be applied to our specific problem $(\Phi,E,F)$.

The algorithm we employ is the ellipsoid method \cite{Khachiyan79,GrotschelL+88}, which approximates the optimal value of a semidefinite optimization problem to arbitrary precision in polynomial-time, provided that the set of feasible solutions is \emph{well-bounded}---that is, it contains a ball of radius $\delta$ and is, in turn, contained in a ball of radius $\Delta$ such that the ratio $\Delta/\delta$ is not too large.
More precisely, we have the following.

\def\factsdpalg{Efficient algorithm for semidefinite optimization}
\begin{fact}[\factsdpalg]
\label{fact:sdp-alg}

The following promise problem admits a deterministic polynomial-time solution:
\begin{description}

\item[Input.]
A semidefinite optimization problem $(\Phi,A,B)$ and positive real numbers $\varepsilon,\delta,\Delta$.
The numbers $\varepsilon$, $\delta$, and $\Delta$ are given explicitly in binary, as are the real and complex parts of each entry of $\jam{\Phi}$, $A$, and $B$.

\item[Promise.]
%\begin{itemize}
%\item
There exists a primal [dual] feasible solution $X_0$ such that
% every element of $\set{X_0+H : H \textrm{ is Hermitian and } \fnorm{H}\leq\delta}$
for all Hermitian operators $H$ with $\fnorm{H}\leq\delta$ it holds that $X_0+H$
is primal [dual] feasible.
%\item
Moreover,
for all primal [dual] feasible solutions $X$ it holds that $\fnorm{X}\leq\Delta$.
%\end{itemize}

\item[Output.]
A real number $\gamma$ such that $\abs{\gamma-\alpha}<\varepsilon$, where $\alpha$ is the optimal value of the primal [dual] problem associated with $(\Phi,A,B)$.

\end{description}
%A similar statement holds for the primal problem associated with $(\Phi,A,B)$.

\end{fact}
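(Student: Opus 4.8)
The plan is to recognize this statement as a specialization to semidefinite programming, in super-operator form, of the Gr\"otschel--Lov\'asz--Schrijver machinery for convex optimization by the ellipsoid method. First I would fix a real-coordinate representation: identify $\her{\cX}$ and $\her{\cY}$ with real Euclidean spaces of dimensions $\dim(\cX)^2$ and $\dim(\cY)^2$ under the Frobenius inner product, so that the primal feasible set $\Set{X\in\pos{\cX} : \Phi(X)\preceq B}$ becomes a closed convex subset of some $\Real^N$ and the objective $X\mapsto\inner{A}{X}$ becomes a linear functional on $\Real^N$. The dual problem is handled symmetrically after replacing $\Phi$ by $\Phi^*$ and swapping the roles of the two spaces, and the super-operator form is polynomial-time interconvertible with the standard form of semidefinite programming, so no generality is lost. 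The promise in the statement supplies exactly the \emph{well-boundedness} hypothesis required by the ellipsoid method: the feasible set contains the Frobenius ball of radius $\delta$ about $X_0$ and is contained in the Frobenius ball of radius $\Delta$, so the ratio $\Delta/\delta$ controls the number of iterations.

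The core step is to implement a polynomial-time \emph{weak separation oracle} for the feasible set. Given a candidate Hermitian operator $X$, I would compute an approximate spectral decomposition of $X$ (to any desired precision, in time polynomial in $\dim(\cX)$ and the number of output bits). If $X$ has an eigenvalue well below zero with unit eigenvector $v$, then $vv^*$ is a separating direction, since $\inner{vv^*}{X}<0\leq\inner{vv^*}{X'}$ for every feasible $X'$. Otherwise I would inspect the spectrum of $\Phi(X)-B$: if it has an eigenvalue well above zero with unit eigenvector $w$, then $\Phi^*(ww^*)$ separates, because for every feasible $X'$ we have $\inner{\Phi^*(ww^*)}{X}=\inner{ww^*}{\Phi(X)}>\inner{ww^*}{B}\geq\inner{ww^*}{\Phi(X')}=\inner{\Phi^*(ww^*)}{X'}$, where $\Phi^*$ is obtained from $\jam{\Phi}$ in polynomial time. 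If neither violation is large, declare $X$ approximately feasible. Feeding this oracle, together with the data $X_0$, $\delta$, $\Delta$, into the ellipsoid method yields a weak optimization procedure: in time polynomial in the total bit-length of the input, $\log(\Delta/\delta)$, and $\log(1/\varepsilon)$ it returns a point that is nearly feasible with objective within a controlled error of the optimum, and the well-boundedness then lets me round this point to a genuinely feasible solution with objective value within $\varepsilon$ of the true optimum $\alpha$. Finiteness of $\alpha$ follows because the linear objective is bounded in absolute value by roughly $\fnorm{A}\Delta$ on the bounded feasible set.

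The hard part will be the bookkeeping of approximation errors rather than any single conceptual ingredient. Because eigenvalue computations and arithmetic are only approximate, the ellipsoid method here must be run with a \emph{weak} (noisy) separation oracle, and one must verify that the oracle's perturbations, the rounding of successive ellipsoid centres, and the slack between ``approximately feasible'' and ``feasible'' all compose to give the clean guarantee $\abs{\gamma-\alpha}<\varepsilon$. Fortunately the Gr\"otschel--Lov\'asz--Schrijver analysis already treats precisely this generality --- weak separation over well-bounded convex bodies --- so the remaining work is to confirm that the semidefinite feasible set and the separation oracle above meet those hypotheses with the promised parameters, i.e.\ to invoke the general theorem and check the reduction, not to reprove the ellipsoid method.
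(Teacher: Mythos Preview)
Your approach is correct and matches the paper's treatment: the paper does not prove this statement at all but simply states it as a \emph{Fact} and cites the ellipsoid method of Khachiyan and Gr\"otschel--Lov\'asz--Schrijver (and, for the super-operator form specifically, Watrous \cite{Watrous09}). Your sketch of why the GLS weak-separation/ellipsoid machinery applies to well-bounded SDP feasible sets is exactly the content being invoked by those citations, so there is nothing to compare---you have supplied the argument the paper deliberately omits.
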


In addition to the ellipsoid method, interior point methods are also used for semidefinite optimization \cite{deKlerk02,BoydV04}.
The reader is referred to Watrous \cite{Watrous09} and the references therein for more detailed discussion of algorithms for semidefinite optimization problems written in super-operator form.

Unfortunately, the set of dual feasible solutions associated with our problem $(\Phi,E,F)$ is unbounded---if $X$ is dual feasible then so is $\lambda X$ for all $\lambda\geq 1$.
To remedy this problem, it suffices to augment the original problem $(\Phi,E,F)$ with the additional constraint $\ptr{}{X}\leq t$ on dual feasible solutions $X$ for some appropriately large choice of $t$.
Such a constraint can be incorporated into the super-operator form by defining
\begin{align*}
\Phi' &: X\mapsto
  \left(
  \begin{array}{cc}
  \Phi(X)\\
  & -\ptr{}{X}
  \end{array}
  \right), \\
E' &=
  \left(
  \begin{array}{cc}
  E\\
  & -t
  \end{array}
  \right).
\end{align*}
If $t$ is large enough so that an optimal solution $X^\star$ for the dual problem $(\Phi,E,F)$ has $\ptr{}{X^\star}\leq t$ then $X^\star$ is also optimal for $(\Phi',E',F)$.
In this case, we can obtain the optimal value for $(\Phi,E,F)$ by solving $(\Phi',E',F)$.
Thus, it suffices to prove well-boundedness for $(\Phi',E',F)$.

\def\lmwellbounded{Well-boundedness of $(\Phi',E',F)$}
\begin{lemma}[\lmwellbounded]
\label{lm:well-bounded}

There exists a dual feasible $X_0$ for $(\Phi,E,F)$ and positive real numbers $\delta,t$ such that:
\begin{enumerate}
\item %\label{item:well-bounded:ball}
For all Hermitian operators $H$ with $\fnorm{H}\leq\delta$ it holds that $X_0+H$ is dual feasible with $\ptr{}{X_0+H}\leq t$.
\item %\label{item:well-bounded:radius}
The optimal value of the dual problem $(\Phi,E,F)$ is achieved by a dual feasible $X^\star$ with $\ptr{}{X^\star}\leq t$.
\end{enumerate}
In particular, we may select
\begin{align*}
  \delta &= \Omega\Pa{\dim(\kprod{\cD}{1}{r})^{-1} },\\
  t &= O\Pa{r^3 \dim\Pa{ \kprod{\cY}{1}{r}\ot\kprod{\cX}{1}{r}\ot\kprod{\cC}{1}{r} } }.
\end{align*}

\end{lemma}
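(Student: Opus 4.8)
The plan is to produce an explicit strictly feasible ``interior point'' $X_0$ for the dual problem associated with $(\Phi,E,F)$ that is deep enough inside the feasible region for an $\fnorm{\cdot}$-ball of radius $\delta$ about it to remain feasible, and to bound the trace of every dual feasible operator for the augmented problem. Since $(\Phi',E',F)$ merely appends the constraint $\trace(X)\le t$ to the dual of $(\Phi,E,F)$, any $X$ that is dual feasible for $(\Phi',E',F)$ satisfies $X\succeq 0$ and hence $\fnorm{X}\le\trace(X)\le t$, so we may take $\Delta=t$ and the entire burden reduces to (i) building $X_0$ with $\trace(X_0)$ small and a surviving ball of the stated radius, and (ii) verifying that the dual optimum of $(\Phi,E,F)$ is attained at some $X^\star$ with $\trace(X^\star)\le t$.

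For (i), a dual feasible $X$ decomposes into blocks $B_1,\dots,B_r,Q_1,\dots,Q_r$, and $\Phi(X)\succeq E$ unwinds into the ``Bob'' chain $\ptr{\cD_1}{B_1}\succeq I_{\cB_1}$, $\ptr{\cD_i}{B_i}\succeq B_{i-1}\ot I_{\cB_i}$, the reverse ``co-Alice'' chain $Q_{i-1}\ot I_{\cC_{i-1}}\succeq \ptr{\cA_i}{Q_i}$, and the coupling inequality $Q_r\ot I_{\cC_r}\succeq\Omega_R(B_r)$. I would take each block to be a scalar multiple of the identity, $B_i=\beta_i I$ and $Q_i=q_i I$. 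The $\beta_i$ are defined by $\beta_1\dim(\cD_1)=1+m_1$ and $\beta_i\dim(\cD_i)=\beta_{i-1}+m_i$ with slacks $m_i=\Theta\Pa{(\dim(\cD_i)+1)\delta}$, chosen so that the gap in the $i$-th constraint exceeds the worst-case perturbation: a perturbation of Frobenius norm at most $\delta$ moves $\ptr{\cD_i}{B_i}$ in operator norm by at most $\dim(\cD_i)\delta$ and moves $B_{i-1}\ot I_{\cB_i}$ by at most $\delta$. Unrolling gives $\beta_i\dim(\kprod{\cD}{1}{i})=1+O\Pa{r\delta\dim(\kprod{\cD}{1}{i})}$, so with $\delta=c/\dim(\kprod{\cD}{1}{r})$ for a small constant $c$ we obtain $\beta_i=O\Pa{r/\dim(\kprod{\cD}{1}{i})}$. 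Next I would bound the coupling term: since $\Omega_R$ is completely positive and $\Omega_R(M)\preceq\Ptr{\kprod{\cD}{1}{r}\ot\kprod{\cB}{1}{r}}{(M\ot I)R}$ for $M\succeq 0$, where $R=\sum_m R_m$ is (by Proposition \ref{prop:measure-sum}) a non-measuring co-strategy with $\trace(R)=\dim(\kprod{\cY}{1}{r})$ (Proposition \ref{prop:explicit-bound} and Theorem \ref{thm:char}), monotonicity with $B_r=\beta_r I$ yields $\norm{\Omega_R(B_r)}\le\beta_r\trace(R)=O\Pa{r\dim(\kprod{\cC}{1}{r})}$. Working down the co-Alice chain, set $q_r=\norm{\Omega_R(B_r)}+\Theta\Pa{\delta(1+\dim(\kprod{\cY}{1}{r}))}$ and $q_{i-1}=q_i\dim(\cA_i)+\Theta\Pa{(\dim(\cA_i)+1)\delta}$; unrolling keeps every $q_i=O\Pa{r\,\dim(\kprod{\cA}{i+1}{r})\dim(\kprod{\cC}{1}{r})}$. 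A direct computation of $\trace(X_0)=\sum_i\beta_i\dim(\kprod{\cD}{1}{i}\ot\kprod{\cB}{1}{i})+\sum_i q_i\dim(\kprod{\cC}{1}{i-1}\ot\kprod{\cA}{1}{i})$ shows it is $O\Pa{r^2\dim(\kprod{\cC}{1}{r})^2\dim(\kprod{\cA}{1}{r})+r^2\dim(\kprod{\cB}{1}{r})}$, well below the claimed $t$; moreover every $\beta_i,q_i\ge\delta$, so $X_0+H\succeq X_0-\delta I\succeq 0$ whenever $\fnorm{H}\le\delta$, while $\trace(X_0+H)$ exceeds $\trace(X_0)$ by only a lower-order term and so stays below $t$.

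For (ii), by Lemma \ref{lm:sdp-min-max} every dual feasible operator can be replaced by one of equal objective value whose blocks $B_i$ are $i$-round non-measuring Bob strategies and whose blocks satisfy $Q_i\ot I_{\cC_i}=p\cdot(\text{co-strategy})$ with $p=\trace(Q_1)\in[0,1]$, the game value after the affine rescaling of payouts into $[0,1]$. For such operators $\trace(B_i)=\dim(\kprod{\cB}{1}{i})$ and $\trace(Q_i)=p\,\dim(\kprod{\cC}{1}{i-1})\le\dim(\kprod{\cC}{1}{r})$, by Proposition \ref{prop:explicit-bound} together with the recursive characterization of Theorem \ref{thm:char}, so every such operator has trace at most $r\Pa{\dim(\kprod{\cB}{1}{r})+\dim(\kprod{\cC}{1}{r})}\le t$; since the sets of strategies and co-strategies are compact (Proposition \ref{prop:convexity}) and the objective is continuous, the dual optimum is attained within this restricted class, giving an $X^\star$ with $\trace(X^\star)\le t$. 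Together (i) and (ii) establish the lemma.

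The main obstacle. The delicate point is the interaction between the radius $\delta$ and the dimensions: each partial trace along the two constraint chains inflates the operator norm of a perturbation by the dimension of the traced-out subsystem, so a naive ball would shrink geometrically over the $r$ rounds. Resolving this — baking a slack of order $\dim(\cD_i)\delta$ (resp.\ $\dim(\cA_i)\delta$) into the $i$-th constraint from the outset and then checking, via the unrolled recursions, that this enlarges the center $X_0$ only by lower-order terms and keeps $\trace(X_0)$ polynomially bounded — is where the real work lies; the clean bound $\norm{\Omega_R(B_r)}\le\beta_r\trace(R)$ via complete positivity of $\Omega_R$ and the trace of the referee's co-strategy is the other ingredient that must be handled carefully.
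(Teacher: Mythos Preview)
Your proposal is correct and takes essentially the same approach as the paper: both constructions take each diagonal block of $X_0$ to be a scalar multiple of the identity, build in additive slack along the Bob chain and the co-Alice chain so that operator-norm perturbations of size $O(\delta)$ are absorbed, bound the coupling term $\Omega_R(B_r)$ via complete positivity of $\Omega_R$ together with the trace bound on the referee's co-strategy, and invoke Lemma~\ref{lm:sdp-min-max} plus compactness of strategies/co-strategies for item~2. The only cosmetic difference is that the paper writes the scalars $\beta_i$ and $q_i$ in closed form (e.g.\ $B_i=\tfrac{i+1}{\dim(\cD_{1\dots i})}I$ and $Q_i=(r-i+2)\dim(\cA_{i+1\dots r})\gamma I$) rather than via your recursions, and bounds $\Omega_R$ using $\norm{R}\le\dim(\cY_{1\dots r})$ rather than $\trace(R)=\dim(\cY_{1\dots r})$ directly; your route actually yields a slightly smaller $t$, but both comfortably meet the stated asymptotic bounds.
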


\begin{proof}
First, we select $X_0$ and $\delta$; the quantity $t$ will be chosen at the end of the proof.
We begin by choosing the first collection $B_1,\dots,B_r$ of diagonal blocks for $X_0$.
For each $i=1,\dots,r$ let
\[ B_i = \frac{i+1}{\dim(\kprod{\cD}{1}{i})} I_{\kprod{\cD}{1}{i}\ot\kprod{\cB}{1}{i}} \]
and let $Y_i$ be any Hermitian operator with
\[ \norm{Y_i} \leq \frac{1}{3\dim(\kprod{\cD}{1}{i})} \]
so that
\[
  \frac{i+\frac{2}{3}}{\dim(\kprod{\cD}{1}{i})} I_{\kprod{\cD}{1}{i}\ot\kprod{\cB}{1}{i}}
  \ \preceq \ B_i+Y_i \ \preceq \ 
  \frac{i+\frac{4}{3}}{\dim(\kprod{\cD}{1}{i})} I_{\kprod{\cD}{1}{i}\ot\kprod{\cB}{1}{i}}.
\]
As in the proof of Lemma \ref{lm:duality}, it is tedious but straightforward to verify that
%\begin{align*}
%  \ptr{\cD_1}{B_1} = \Pa{1+ \delta} I_{\cB_1} &\ \succ \ I_{\cB_1} \\
%  \ptr{\cD_2}{B_2} = \frac{1+2\delta}{\dim(\cD_1)} I_{\cD_1\ot\kprod{\cB}{1}{2}}
%  &\ \succ \ \frac{1+\delta}{\dim(\cD_1)} I_{\cD_1\ot\kprod{\cB}{1}{2}} = B_1\ot I_{\cB_2} \\
%  &\ \vdots \\
%  \ptr{\cD_r}{B_r} = \frac{1+r\delta}{\dim(\kprod{\cD}{1}{r-1})} I_{\kprod{\cD}{1}{r-1}\ot\kprod{\cB}{1}{r}}
%  &\ \succ \ \frac{1+(r-1)\delta}{\dim(\kprod{\cD}{1}{r-1})} I_{\kprod{\cD}{1}{r-1}\ot\kprod{\cB}{1}{r}} = B_{r-1}\ot I_{\cB_r}
%\end{align*}
\begin{align*}
  \ptr{\cD_1}{B_1+Y_1} &\ \succ \ I_{\cB_1} \\
  \ptr{\cD_2}{B_2+Y_2} &\ \succ \ (B_1+Y_1)\ot I_{\cB_2} \\
  &\ \vdots \\
  \ptr{\cD_r}{B_r+Y_r} &\ \succ \ (B_{r-1}+Y_{r-1})\ot I_{\cB_r}.
\end{align*}

Next, we choose the remaining diagonal blocks $Q_1,\dots,Q_r$ for $X_0$.
To this end, let $\gamma$ be a real number large enough to guarantee that \[ \Omega_R(B_r+Y_r)\preceq\gamma I_{\kprod{\cC}{1}{r}\ot\kprod{\cA}{1}{r}}, \]
regardless of the choice of $Y_r$.
(The precise value of $\gamma$ will be chosen later.)
For each $i=1,\dots,r$ let
\[ Q_i = \Pa{r-i+2} \dim(\kprod{\cA}{i+1}{r}) \gamma I_{\kprod{\cC}{1}{i-1}\ot\kprod{\cA}{1}{i}} \]
and let $Z_i$ be any Hermitian operator with
\[ \norm{Z_i} \leq \frac{\dim(\kprod{\cA}{i+1}{r})\gamma}{3} \]
so that
%\begin{align*}
%  &\Pa{r-i+{\textstyle\frac{5}{3}}} \dim(\kprod{\cA}{i+1}{r}) \gamma I_{\kprod{\cC}{1}{i-1}\ot\kprod{\cA}{1}{i}} \\
%  \preceq{} &Q_i + Z_i \\
%  \preceq{} & \Pa{r-i+{\textstyle\frac{7}{3}}} \dim(\kprod{\cA}{i+1}{r}) \gamma I_{\kprod{\cC}{1}{i-1}\ot\kprod{\cA}{1}{i}}
%\end{align*}
%Then
%\begin{align*}
%  Q_r\ot I_{\cC_r} = \Pa{1+\delta} \gamma I_{\kprod{\cC}{1}{r}\ot\kprod{\cA}{1}{r}} &\ \succ \ \gamma I_{\kprod{\cC}{1}{r}\ot\kprod{\cA}{1}{r}} \succeq \Omega_R(B_r) \\
%  \underbrace{ \Pa{1+2\delta} \dim(\cA_r) \gamma I_{\kprod{\cC}{1}{r-1}\ot\kprod{\cA}{1}{r-1}} }_{ Q_{r-1}\ot I_{\cC_{r-1}} } &\ \succ \ \underbrace{ \Pa{1+\delta} \dim(\cA_r) \gamma I_{\kprod{\cC}{1}{r-1}\ot\kprod{\cA}{1}{r-1}} }_{ \ptr{\cA_r}{Q_r} } \\
%  &\ \vdots \\
%  \underbrace{ \Pa{1+r\delta} \dim(\kprod{\cA}{2}{r}) \gamma I_{\cC_1\ot\cA_1} }_{ Q_1\ot I_{\cC_1} }&\ \succ \ \underbrace{ \Pa{1+(r-1)\delta} \dim(\kprod{\cA}{2}{r}) \gamma I_{\cC_1\ot\cA_1} }_{ \ptr{\cA_2}{Q_2} }
%\end{align*}
%By a similar argument, we have
\begin{align*}
  (Q_r+Z_r)\ot I_{\cC_r} &\ \succ \ \Omega_R(B_r+Y_r) \\
  (Q_{r-1}+Z_{r-1})\ot I_{\cC_{r-1}} &\ \succ \ \ptr{\cA_r}{Q_r+Z_r} \\
  &\ \vdots \\
  (Q_1+Z_1)\ot I_{\cC_1} &\ \succ \ \ptr{\cA_2}{Q_2+Z_2}.
\end{align*}
It follows from these semidefinite inequalities that the block-diagonal operator $X_0$ with diagonal blocks $B_1,\dots,B_r$, $Q_1,\dots,Q_r$ is a strictly dual feasible solution to $(\Phi,E,F)$.
Moreover, for any Hermitian operator $H$ with diagonal blocks $Y_1,\dots,Y_r$, $Z_1,\dots,Z_r$ and $\fnorm{H}\leq\delta$ it must be that $X_0+H$ is also a strictly feasible solution, provided
\[ \delta \leq \frac{1}{3\dim(\kprod{\cD}{1}{r})}. \]
%(This choice ensures that $\norm{Y_i}, \norm{Z_i} \leq \delta$ for each $i$.)
%The proof of item \ref{item:well-bounded:ball} is thus complete.

Next, let us choose an appropriate value for $\gamma$, which will enable us to establish
the desired upper bound $t$.
%on the set of feasible solutions
%in item \ref{item:well-bounded:radius}.
For notational convenience, write \[ \nu = \frac{r+\frac{4}{3}}{\dim(\kprod{\cD}{1}{r})} \] so that \[ B_r+Y_r\preceq \nu I_{\kprod{\cD}{1}{r}\ot\kprod{\cB}{1}{r}}.\]
As $\Omega_R$ is completely positive, it follows that
\[
  \Omega_R(B_r+Y_r)
  \preceq \nu \Omega_R( I_{\kprod{\cD}{1}{r}\ot\kprod{\cB}{1}{r}} )
  = \nu \Ptr{ \kprod{\cD}{1}{r}\ot\kprod{\cB}{1}{r} }{ R }
\]
where $R$ is an operator with $0\preceq R\preceq\jam{\Psi^*}$ for some completely positive and trace-preserving super-operator $\Psi:\lin{\kprod{\cY}{1}{r}}\to\lin{\kprod{\cX}{1}{r}}$.
%As $\Phi$ is trace-preserving, $\Phi^*$ is unital and so $\tnorm{\jam{\Phi^*}}
Proposition \ref{prop:explicit-bound} tells us that
%$\jam{\Psi^*}$ has trace norm equal to $\dim(\kprod{\cY}{1}{r})$,
$\tnorm{\jam{\Psi^*}}=\dim(\kprod{\cY}{1}{r})$,
from which we obtain $\norm{R}\leq\dim(\kprod{\cY}{1}{r})$ and hence
\[ R\preceq \dim(\kprod{\cY}{1}{r}) I_{\kprod{\cY}{1}{r}\ot\kprod{\cX}{1}{r}}. \]
It follows that
\[
  \Omega_R(B_r+Y_r) \preceq
  \nu \dim\Pa{\kprod{\cY}{1}{r}\ot\kprod{\cD}{1}{r}\ot\kprod{\cB}{1}{r}} I_{\kprod{\cC}{1}{r}\ot\kprod{\cA}{1}{r}}
\]
and hence we may select
\[
  \gamma =
  \Pa{r+{\textstyle\frac{4}{3}}}\dim\Pa{\kprod{\cY}{1}{r}\ot\kprod{\cB}{1}{r}}.
  %= O\Pa{r\dim\Pa{\kprod{\cY}{1}{r}\ot\kprod{\cB}{1}{r}}}.
\]
%
%Finally, let us derive an expression for $t$.
It is straightforward but tedious to verify that for every Hermitian $H$ with $\fnorm{H}\leq\delta$ it holds that
\[
  \ptr{}{X_0+H} < r\Pa{r+{\textstyle\frac{4}{3}}} \Pa{\dim(\kprod{\cB}{1}{r}) + \gamma\dim(\kprod{\cA}{1}{r}\ot\kprod{\cC}{1}{r})}.
\]
Substituting our choice of $\gamma$, we find that it suffices to select
\[ t = 2(r+2)^3 \dim\Pa{ \kprod{\cY}{1}{r}\ot\kprod{\cX}{1}{r}\ot\kprod{\cC}{1}{r} }. \]
To see that this choice of $t$ also bounds the trace of an optimal solution, we note that by Lemma \ref{lm:sdp-min-max} the diagonal blocks $B^\star_1,\dots,B^\star_r$, $Q^\star_1,\dots,Q^\star_r$ of any optimal solution $X^\star$ can be assumed to be strategies and co-strategies, respectively, from which it is easy to see that
\[ \ptr{}{X^\star}< r \Pa{\dim(\kprod{\cB}{1}{r})+\dim(\kprod{\cC}{1}{r})}<t. \]
\end{proof}

Lemma \ref{lm:well-bounded} (\lmwellbounded) provides all that we need in order to apply Fact \ref{fact:sdp-alg} (\factsdpalg).
While the bound $t$ on the set of feasible solutions to $(\Phi',E',F)$ from Lemma \ref{lm:well-bounded} is stated in terms of the trace norm, it is straightforward to convert this quantity into a bound $\Delta$ in terms of the Frobenius norm via the norm inequalities listed in Section \ref{sec:intro:linalg}.
The following theorem is now proved.

\def\thmcomputevalue{Efficient algorithm to compute the value of a quantum game}
\begin{theorem}[\thmcomputevalue]
\label{thm:compute-value}

%There is a deterministic polynomial-time algorithm that approximates to arbitrary precision the value of a two-player zero-sum quantum refereed game.

The following problem admits a deterministic polynomial-time solution:
\begin{description}

\item[Input.] A two-player zero-sum quantum game specified by an $r$-round quantum referee $\set{R_m}$, a payout function $V(m)$, and an accuracy parameter $\varepsilon> 0$.
The real numbers $V(m)$ and $\varepsilon$ are each given explicitly in binary, as are the real and complex parts of each entry of the matrices $R_m$.

\item[Output.]
A real number $v$ such that the value of the game specified by $\set{R_m}$ and $V(m)$ lies in the open interval $(v-\varepsilon,v+\varepsilon)$.

\end{description}

\end{theorem}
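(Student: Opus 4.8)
The plan is to assemble the pieces already developed in Section~\ref{subsec:game-theory:alg} and hand them to the ellipsoid method of Fact~\ref{fact:sdp-alg} (\factsdpalg). First I would handle the reduction from an arbitrary payout function to one taking values in $[0,1]$: given $V$, define $V'$ by the affine shift-and-scale formula already displayed above, observe that $V'(m)\in[0,1]$ for all $m\in\Sigma$, and note that the value of the game with payout $V$ is recovered from the value of the game with payout $V'$ by the inverse affine map. Since this affine rescaling is computable exactly in polynomial time from the binary encodings of the $V(m)$, it suffices to approximate the value of the rescaled game to precision $\varepsilon' = \varepsilon \big/ \big(\Abs{\max_n\set{V(n)}} + \Abs{\min_n\set{V(n)}}\big)$, which is again polynomial-size in the input.

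Next I would invoke the semidefinite formulation. By the discussion culminating in the optimization problem \eqref{eq:min-max-opt} together with Lemma~\ref{lm:sdp-min-max} (Correctness of $(\Phi,E,F)$), the value of the rescaled game equals the optimal value of the dual problem associated with the triple $(\Phi,E,F)$ constructed there; and by the augmentation to $(\Phi',E',F)$ described just before Lemma~\ref{lm:well-bounded}, this optimal value is unchanged provided the extra trace constraint parameter $t$ is chosen large enough to dominate the trace of some optimal dual solution. One checks that the Choi--Jamio\l kowski representation $\jam{\Phi'}$ and the operators $E',F$ are all computable in polynomial time from the entries of the $R_m$ and the $V'(m)$, so the instance $(\Phi',E',F)$ together with the accuracy $\varepsilon'$ forms a legitimate polynomial-size input to Fact~\ref{fact:sdp-alg}.

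Then I would apply Lemma~\ref{lm:well-bounded} (\lmwellbounded): it supplies a strictly dual feasible $X_0$ together with explicit $\delta = \Omega\Pa{\dim(\kprod{\cD}{1}{r})^{-1}}$ and $t = O\Pa{r^3\dim\Pa{\kprod{\cY}{1}{r}\ot\kprod{\cX}{1}{r}\ot\kprod{\cC}{1}{r}}}$ such that the whole $\delta$-ball about $X_0$ is dual feasible (with trace at most $t$) and some optimal dual solution also has trace at most $t$. Converting $t$ from a trace-norm bound to a Frobenius-norm bound $\Delta$ via the inequality $\fnorm{X}\le\tnorm{X}$ from Section~\ref{sec:intro:linalg}, the well-boundedness promise of Fact~\ref{fact:sdp-alg} is met with these $\delta$ and $\Delta$, both of which have binary encodings polynomial in the input size. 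Feeding $(\Phi',E',F)$, $\varepsilon'$, $\delta$, $\Delta$ to the algorithm of Fact~\ref{fact:sdp-alg} returns, in deterministic polynomial time, a number within $\varepsilon'$ of the optimal value of the dual problem; undoing the affine rescaling yields the desired $v$ with the game value in $(v-\varepsilon,v+\varepsilon)$.

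I do not expect any serious obstacle, since the substantive work has already been done in Lemmas~\ref{lm:sdp-min-max} and~\ref{lm:well-bounded}; the residual task is purely bookkeeping. The one point that deserves a careful sentence or two is ensuring that all the quantities passed to Fact~\ref{fact:sdp-alg} --- the entries of $\jam{\Phi'}$, $E'$, $F$, and the parameters $\varepsilon'$, $\delta$, $t$ (hence $\Delta$) --- remain of size polynomial in the bit-length of the input, so that ``polynomial time'' is genuinely preserved through the chain of reductions; this is where the explicit forms of $\delta$ and $t$ from Lemma~\ref{lm:well-bounded} are essential. A secondary point to state cleanly is that the affine rescaling of $V$ does not blow up precision: approximating the rescaled value to $\varepsilon'$ suffices precisely because the scaling factor is at least $1$ unless all payouts coincide, in which case the game value is trivial.
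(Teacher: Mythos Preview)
Your proposal is correct and follows essentially the same route as the paper: the paper's proof is simply the paragraph immediately preceding the theorem, which says that Lemma~\ref{lm:well-bounded} supplies the well-boundedness needed to invoke Fact~\ref{fact:sdp-alg}, with the trace-norm bound converted to a Frobenius bound via $\fnorm{X}\le\tnorm{X}$. Your write-up just spells out the bookkeeping (the affine rescaling of $V$, the polynomial-size encoding of $(\Phi',E',F)$ and of $\delta,\Delta,\varepsilon'$) more explicitly than the paper does; the only slightly off remark is the aside that the scaling factor is ``at least $1$'', which is not generally true but is also irrelevant---what matters is that $\varepsilon'$ has polynomial bit-length, which it does regardless.
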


%This problem can, in turn, be expressed as a semidefinite optimization problem by substituting the linear conditions of Theorem \ref{thm:char} (\thmchar):
%\begin{alignat*}{2}
%    \textrm{minimize} \quad & \ptr{}{Q_1} \\
%    \textrm{subject to} \quad
%    & \,\Omega_R(B_r)\preceq Q_r\ot I_{\cC_r} \\[2mm]
%    & \ptr{\cD_i}{B_i} = B_{i-1} \ot I_{\cB_i} \quad & (2\leq i\leq r)\\
%    & \ptr{\cD_1}{B_1} = I_{\cB_1}\\[2mm]
%    & \ptr{\cA_i}{Q_i} = Q_{i-1}\ot I_{\cC_{i-1}}\quad & (2\leq i\leq r)\\[2mm]
%    & B_i \in \pos{\kprod{\cD}{1}{i}\ot\kprod{\cB}{1}{i}} \quad & (1\leq i\leq r)\\
%    & Q_i \in \pos{\kprod{\cC}{1}{i-1}\ot\kprod{\cA}{1}{i}}\quad & (1\leq i\leq r)
%\end{alignat*}
%While this semidefinite optimization problem is not phrased in a canonical form such as the super-operator form described in Section \ref{subsec:semidefinite-duality}, it is a straightforward but tedious exercise to rewrite it such a form.

%======================================================================
\section{Quantum interactive proofs} \label{sec:interactive-proofs}
%======================================================================

In this section we provide an application of the theory of zero-sum quantum games developed in Section \ref{sec:game-theory} to quantum interactive proofs.
In particular, we observe that the existence of an efficient algorithm that computes the value of a zero-sum quantum game implies the equivalence of the complexity classes $\cls{QRG}$ and $\cls{EXP}$.
Here $\cls{QRG}$ is the class of problems that admit quantum interactive proofs with two competing provers, whereas $\cls{EXP}$ is the fundamental class of problems that admit deterministic (classical) exponential-time solutions.

We also extend an existing parallel repetition result for single-prover quantum interactive proofs so that it applies to interactions with an arbitrary number of messages, as opposed to interactions with only three messages.
The technique employed toward this end is then applied to yield a similar parallel repetition result for quantum interactive proofs with competing provers.

This section begins with a primer on interactive proofs, followed by a brief historical survey before the results are presented.

\subsubsection{Interactive proofs}
%======================================================================

%Consider a decision problem $L=(L_\textrm{yes},L_\textrm{no})$ with a set $L_\textrm{yes}$ of yes-instances and a set $L_\textrm{no}$ of no-instances and let $x$ be an instance of $L$.
%(For example, $L_\textrm{yes}$ might contain all the satisfiable boolean formulae, whereas $L_\textrm{no}$ contains all the unsatisfiable boolean formulae.
%The instances of this decision problem are boolean formulae, and for a given instance $x$ the task is to decide whether $x\in L_\textrm{yes}$ or $x\in L_\textrm{no}$.)

An \emph{interactive proof} consists of an interaction between a \emph{verifier} and a \emph{prover} regarding some common input $x$, which is viewed as an instance of a yes-no decision problem $P$.
Throughout the interaction the verifier is restricted to randomized polynomial-time computations, while the prover's computational power is unlimited.
After the interaction the verifier produces a binary outcome indicating acceptance or rejection of $x$.
The verifier's goal is to accept those inputs that are yes-instances of $P$ and reject those inputs that are no-instances of $P$.
Typically, the verifier does not have the computational power to make this determination himself and so he must look to the prover and his unlimited computational power for help.
However, the goal of the prover is always to convince the verifier to accept, regardless of whether the input $x$ is actually a yes-instance.
Hence, the verifier must be careful to distinguish truthful proofs from false proofs.

A decision problem $P$ is said to \emph{admit an interactive proof} if there exists a verifier with the property that a prover can convince him to accept every yes-instance of $P$ with high probability, yet no prover can convince him to accept any no-instance of $P$ except with small probability.
Such an interactive proof is said to \emph{solve} the problem $P$.
The complexity class of decision problems that admit interactive proofs is denoted $\cls{IP}$.
In a \emph{quantum} interactive proof, the verifier and prover may process and exchange quantum information.
Whereas a classical verifier is restricted to randomized polynomial-time computations, a quantum verifier is instead restricted to quantum circuits that can be generated uniformly in deterministic polynomial time.
The corresponding complexity class is denoted $\cls{QIP}$.

An interactive proof \emph{with competing provers} has two provers---one, as before, whose goal is always to convince the verifier to accept, and another, whose goal is to convince the verifier to reject.
%Clearly, the two-prover model is a generalization of the single-prover model, as any one-prover proof can be simulated in the two-prover model with a verifier who simply ignores the new prover.
The class of problems that admit interactive proofs with competing provers is denoted $\cls{RG}$, for ``refereed games'', owing to the similarity between these two models of interaction.
As above, we may speak of \emph{quantum} interactive proofs with competing provers and the associated complexity class $\cls{QRG}$.

\subsubsection{Background}
%======================================================================

Interactive proofs were introduced \cite{Babai85,GoldwasserM+89} as a generalization of efficiently verifiable proofs---a fruitful concept that characterizes the ubiquitous complexity class $\cls{NP}$.
%The boolean satisfiability problem, for example, is an $NP$-complete problem that admits efficiently verifiable proofs: the satisfiability of a given formula is proven by exhibiting a satisfying assignment of its variables.  By contrast, unsatisfiable formulas cannot be proven satisfiable because any pruported satisfying assignment is easily verified to fail.
The generalization lies in the interaction:
whereas any problem in $\cls{NP}$ may be verified with a single message from the prover to the verifier, interactive proofs allow the verifier to ask a series of questions of the prover, possibly basing future questions upon previous answers.
In order to make this generalization from $\cls{NP}$ to $\cls{IP}$ nontrivial, the verifier is permitted a source of randomness.
(Hence the allowance in interactive proofs for a small probability of error.)
%---yes-instances should be accepted with high probability, whilst no-instances must be rejected with high probability.

Naturally, every problem in $\cls{NP}$ admits an interactive proof.
Moreover, it is conjectured that interactive proofs with only a constant number of messages cannot solve problems outside $\cls{NP}$ \cite{MiltersenV99}.
%It is conjectured that any problem outside $\cls{NP}$ that admits an interactive proof requires that the interaction involve a nonconstant number of messages \cite{MiltersenV99}.
On the other hand, interactive proofs with an unbounded number of messages are surprisingly powerful: every problem in $\cls{PSPACE}$ admits such an interactive proof \cite{LundF+92,Shamir92}.
By contrast, in any \emph{quantum} interactive proof it suffices that the verifier and prover exchange only three messages \cite{KitaevW00}.
It is not difficult to see that $\cls{PSPACE}$ contains $\cls{IP}$, from which the characterization $\cls{IP}=\cls{PSPACE}$ then follows.
In a recent breakthrough, Jain, Ji, Upadhyay, and Watrous proved that $\cls{PSPACE}$ also contains $\cls{QIP}$, from which one obtains $\cls{QIP}=\cls{PSPACE}$ \cite{JainJ+09}.
%As every interactive proof can be simulated in deterministic polynomial space, it follows that the class $\cls{PSPACE}$ perfectly characterizes the computational power of classical interactive proofs.

Feige and Kilian showed that every problem that can be solved in deterministic exponential-time can also be solved by an interactive proof with competing provers \cite{FeigeK97}.
That these proofs can be simulated in deterministic exponential time follows from Ref.~\cite{KollerM92}, and hence we have the complexity theoretic equality $\cls{RG}=\cls{EXP}$.
In this section we employ the algorithm from Section \ref{subsec:game-theory:alg} for computing the value of a zero-sum quantum game to show that $\cls{QRG}=\cls{EXP}$.

\subsubsection{Containment of $\cls{QRG}$ inside $\cls{EXP}$}
%======================================================================

\begin{theorem}[$\cls{QRG}=\cls{EXP}$]

Every decision problem that admits a quantum interactive proof with competing provers also admits a deterministic exponential-time solution.
It follows that $\cls{QRG}=\cls{EXP}$.

\end{theorem}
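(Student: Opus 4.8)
The plan is to recognize a quantum interactive proof with two competing provers as nothing more than a two-player zero-sum quantum game in the sense of Definition \ref{def:quantum-game}, and then to run the polynomial-time algorithm of Theorem \ref{thm:compute-value} (\thmcomputevalue) on the (exponentially large) referee operator obtained from the verifier's circuit. Concretely, fix a problem $P\in\cls{QRG}$ with verifier $V$ and an input $x$ of length $n$. The verifier is a uniformly generated quantum circuit of size $\mathrm{poly}(n)$ that exchanges $r=\mathrm{poly}(n)$ rounds of messages, each consisting of $\mathrm{poly}(n)$ qubits, with the two provers, and then measures to accept or reject. Casting $V$'s actions into the co-strategy formalism of Chapter \ref{ch:strategies} (recasting the interaction so the referee speaks first if necessary, at the cost of one extra round), the verifier becomes an $r$-round measuring co-strategy $\set{R_{\mathrm{acc}},R_{\mathrm{rej}}}$ whose input spaces $\cX_i=\cA_i\ot\cB_i$ and output spaces $\cY_i=\cC_i\ot\cD_i$ have dimension $2^{\mathrm{poly}(n)}$. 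Taking the ``yes''-prover to be Alice with payout $V(\mathrm{acc})=1$, $V(\mathrm{rej})=0$, and the ``no''-prover to be Bob, we obtain a zero-sum quantum game whose value $\mathrm{val}(x)$ is, by completeness and soundness of the proof system, at least $c$ when $x\in P$ and at most $s$ when $x\notin P$, for some gap $c-s\geq 2^{-\mathrm{poly}(n)}$ (one may amplify using the parallel-repetition result of Section \ref{sec:interactive-proofs}, though a $2^{-\mathrm{poly}(n)}$ gap already suffices below). Theorem \ref{thm:min-max} (\thmminmax) guarantees that this value is well defined.

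Next I would argue that the referee operator can be written down in deterministic exponential time. Since the verifier's circuit is produced by a deterministic polynomial-time machine, one can compute in time $2^{\mathrm{poly}(n)}$ the isometries $B_1,\dots,B_r$ and the projective measurement describing the co-strategy, and then, following the isometric representation given after Definition \ref{def:co-strategy} (\defcostrategy), compute $R_{\mathrm{acc}}$ and $R_{\mathrm{rej}}$ explicitly by composing these operators and tracing out the final memory space---all manipulations of matrices of dimension $2^{\mathrm{poly}(n)}$, doable in time polynomial in that dimension and hence in time $2^{\mathrm{poly}(n)}$. The well-boundedness parameters needed to invoke Fact \ref{fact:sdp-alg} (\factsdpalg) are given in closed form in Lemma \ref{lm:well-bounded} (\lmwellbounded) purely in terms of $r$ and the dimensions of the message spaces, so they too are computable, and their bit-lengths are $\mathrm{poly}(n)$.

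Then I would feed the triple $(\Phi,E,F)$ built from $\set{R_{\mathrm{acc}},R_{\mathrm{rej}}}$ and $V$---together with accuracy $\varepsilon=(c-s)/4$, so that $\log(1/\varepsilon)=\mathrm{poly}(n)$---into the algorithm of Theorem \ref{thm:compute-value}. Its running time is polynomial in the bit-length of the input, which here is $2^{\mathrm{poly}(n)}$ for the matrices $R_m$ and $\mathrm{poly}(n)$ for $\varepsilon$, for an overall deterministic running time of $2^{\mathrm{poly}(n)}$. The algorithm returns $v$ with $\abs{v-\mathrm{val}(x)}<\varepsilon$; accepting $x$ exactly when $v>(c+s)/2$ then decides $P$ correctly, giving $\cls{QRG}\subseteq\cls{EXP}$. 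For the reverse containment, $\cls{EXP}=\cls{RG}$ by Feige and Kilian \cite{FeigeK97} together with Koller and Megiddo \cite{KollerM92}, and $\cls{RG}\subseteq\cls{QRG}$ is immediate since a classical referee and classical provers are special cases of quantum ones; hence $\cls{EXP}\subseteq\cls{QRG}\subseteq\cls{EXP}$, i.e. $\cls{QRG}=\cls{EXP}$.

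The main obstacle I anticipate is the bookkeeping in the second step: one must check carefully that casting the verifier's circuit---which may use intermediate measurements and its own private workspace---into the co-strategy formalism and then forming the Choi--Jamio\l kowski representation can genuinely be carried out by a deterministic machine in time $2^{\mathrm{poly}(n)}$, and that the dimensions and gap parameters all remain within the regime where Theorem \ref{thm:compute-value} applies with only polynomial overhead in $n$ beyond the exponential blowup already forced by the matrix sizes. Everything else---the reduction to a zero-sum game, the min-max structure, and the correctness of the decision rule---follows directly from Theorems \ref{thm:min-max} and \ref{thm:compute-value}.
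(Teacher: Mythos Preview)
Your proposal is correct and follows essentially the same approach as the paper: represent the verifier on input $x$ as an exponential-size two-outcome measuring co-strategy $\{R_{\mathrm{acc}},R_{\mathrm{rej}}\}$, view it as a zero-sum quantum game with payouts $V(\mathrm{acc})=1$, $V(\mathrm{rej})=0$, invoke Theorem~\ref{thm:compute-value} to approximate the value in time polynomial in the (exponential) bit-length of the referee, and compare to a threshold; the reverse containment comes from $\cls{EXP}=\cls{RG}\subseteq\cls{QRG}$. Your write-up is in fact more explicit than the paper's about the bookkeeping (dimensions, well-boundedness parameters, the accuracy $\varepsilon$), which is fine---the paper dispatches that step with ``in the standard way.''
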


\begin{proof}

Let $P$ be any problem that admits a quantum interactive proof with competing provers.
%and let $x$ be any input.
We will exhibit a deterministic exponential-time algorithm that, for each input $x$, correctly decides whether $x$ is a yes-instance of $P$ or a no-instance.

For each fixed input $x$, the actions of the verifier may be represented by a two-outcome measuring co-strategy $\set{R_\textrm{accept},R_\textrm{reject}}$ where the two provers combine to implement the corresponding strategy.
The operators in this co-strategy may be computed explicitly in deterministic exponential time by running the polynomial-time algorithm that generates a description of the quantum circuits corresponding to the actions of the verifier on input $x$ and then converting that description into an exponential-sized measuring co-strategy in the standard way.

Consider the two-player zero-sum quantum refereed game defined by the referee $\set{R_\textrm{accept},R_\textrm{reject}}$ and the payout function
\[ V(\textrm{accept}) = 1, \quad V(\textrm{reject}) = 0. \]
%\begin{align*}
%V(\textrm{accept}) &= 1,\\
%V(\textrm{reject}) &= 0.
%\end{align*}
The value of this game equals the probability with which the verifier is convinced to accept the input when the two-provers act according to optimal strategies.
%(Of course, neither prover has incentive to deviate from an optimal strategy.)
By Theorem \ref{thm:compute-value} (\thmcomputevalue), there is a deterministic algorithm that approximates this value to arbitrary precision in time polynomial in the bit length of $\set{R_\textrm{accept},R_\textrm{reject}}$, which we know to be exponential in the input $x$.
By running this algorithm and computing this probability, it is possible to determine whether or not $x$ is a yes-instance of $P$.
Thus, $P\in\cls{EXP}$.
As $P$ was selected arbitrarily from $\cls{QRG}$, it follows that $\cls{QRG}\subseteq\cls{EXP}$.
As $\cls{QRG}$ is already known to contain $\cls{EXP}$, it follows that $\cls{QRG}=\cls{EXP}.$
\end{proof}

\subsubsection{Parallel repetition of many-message quantum interactive proofs}
%======================================================================

Thus far, our discussion of interactive proofs has been restricted to protocols that succeed with high probability.
But this condition is quite vague---what is a ``high'' probability?
Can interactive proofs be somehow transformed so as to amplify their probability of success?

To address these questions, we need a more specific definition of interactive proof.
A decision problem $P$ is said to admit an interactive proof with \emph{completeness error} $c$ and \emph{soundness error} $s$ if the following hold:
\begin{enumerate}

\item[(i)] If $x$ is a yes-instance of $P$ then there is a prover who can convince the verifier to accept with probability at least $1-c$.

\item[(ii)] If $x$ is a no-instance of $P$ then no prover can convince the verifier to accept with probability larger than $s$.

\end{enumerate}
It is not difficult to see that \emph{any decision problem whatsoever} admits an interactive proof with $1-c\leq s$.
Clearly then, interactive proofs are only interesting when $1-c > s$.
%When this condition is met, the interactive proof may be transformed so as to amplify the probability of success.
When this condition is met, the interactive proof may be transformed so as achieve
%completeness and soundess error less than
$c,s<\varepsilon$ for any desired $\varepsilon>0$.
This reduction in error is achieved by a method called \emph{sequential repetition}, whereby the verifier simply repeats the protocol many times in succession and then bases his final decision upon a weighted vote of the outcomes of the individual repetitions.
Under sequential repetition, the probability of error decreases exponentially in the number of repetitions.

A side-effect of sequential repetition is that the number of messages exchanged between the verifier and prover increases with each repetition.
Sometimes, it is desirable to achieve error reduction without increasing the number of messages in the interaction.
The standard transformation meeting this criterion is called \emph{parallel repetition}, whereby many copies of the protocol are executed simultaneously instead of sequentially.
The danger of parallel repetition is that the prover need not treat each repetition independently.
Instead, he might somehow attempt to correlate the repetitions so as to thwart the exponential reduction in error.
Therefore, before parallel repetition may be relied upon to reduce error, it must first be established that the prover cannot significantly affect the probability of acceptance by deviating from a strategy that treats the repetitions independently.

For classical single-prover interactive proofs, it can be shown that parallel repetition works as desired to achieve exponential error reduction.
For quantum single-prover interactive proofs, parallel repetition is known to work only when
\begin{enumerate}

\item[(i)] the verifier and prover exchange at most \emph{three messages}, and
\item[(ii)] the verifier's final decision depends upon a \emph{unanimous vote} of the repetitions (as opposed to some other weighting of the repetitions, such as a majority vote).

\end{enumerate}
%For quantum single-prover interactive proofs, parallel repetition is known to work only for \emph{three-message} interactive proofs, and only when the verifier's decision depends upon a \emph{unanimous} vote of the repetitions (as opposed to an arbitrarily weighted vote) \cite{KitaevW00}.
%Fortunately, a unanimous vote is all that is required, as quantum interactive proofs can be assumed to have zero completeness error.
%While the restriction to a unanimous vote in condition (ii) might seem overly constraining, 
(See Kitaev and Watrous for proofs of this and other properties of quantum interactive proofs \cite{KitaevW00}.)

In this section, we show that condition (i) is unnecessary.
Specifically, we prove the following.

\begin{theorem}[Parallel repetition of many-message quantum interactive proofs]
\label{thm:qip-parallel}

Consider the $k$-fold parallel repetition of an arbitrary $r$-round quantum interactive proof with soundness error $s$.
If the verifier in the repeated protocol accepts only when all $k$ repetitions accept then the repeated protocol has soundness error $s^k$.
%Under $k$-fold parallel repetition wherein the verifier accepts only when all $k$ repetitions accept, that $r$-round quantum interactive proof has soundness error $s^k$.

\end{theorem}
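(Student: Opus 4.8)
The plan is to read off the statement from the ``Maximum output probability'' theorem (Theorem~\ref{thm:max-prob}), exploiting the fact that the cumbersome freedom of a dishonest prover to correlate the $k$ repetitions is, via that theorem, completely captured by a dual minimisation over co-strategies. First I would fix notation: on a no-instance $x$ the verifier is an $r$-round measuring co-strategy $\set{R_{\mathrm{acc}},R_{\mathrm{rej}}}$ (if the prover rather than the verifier sends the first message, pad the protocol with one empty message, as permitted by the remarks in Section~\ref{subsec:naive-observations}), and $R_V\defeq R_{\mathrm{acc}}+R_{\mathrm{rej}}$ is the associated non-measuring co-strategy. A prover is a compatible non-measuring strategy $Q$, and by Theorem~\ref{theorem:inner-product} it convinces the verifier to accept with probability $\inner{Q}{R_{\mathrm{acc}}}$, so ``soundness error $s$'' is the statement $\max_{Q}\inner{Q}{R_{\mathrm{acc}}}\le s$, the maximum being over all compatible strategies. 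Applying Theorem~\ref{thm:max-prob} to co-strategies, this maximum equals $\min\Set{p\in[0,1]:R_{\mathrm{acc}}\preceq pR\textrm{ for some co-strategy }R}$; since the minimum is attained, soundness error $s$ is \emph{equivalent} to the existence of an $r$-round co-strategy $R$ with
\[ R_{\mathrm{acc}}\preceq sR. \]

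Next I would pin down the repeated verifier. Running $k$ independent copies in parallel and accepting only if every copy accepts yields a measuring co-strategy whose ``accept'' element is $R_{\mathrm{acc}}^{\ot k}$ and whose underlying non-measuring co-strategy is $R_V^{\ot k}$ — here all tensor factors are reordered in the natural way, grouping the $i$th round of all $k$ copies together, which amounts to a fixed permutation of tensor factors that preserves positivity, the semidefinite order, and the partial-trace conditions of Theorem~\ref{thm:char} that define a co-strategy. The one structural fact I need is that a tensor product of $r$-round co-strategies is again an $r$-round co-strategy; this follows, after the reordering, by induction on the number of rounds from the explicit linear-constraint version of Theorem~\ref{thm:char}, since tensoring operators that satisfy those partial-trace constraints produces an operator satisfying the corresponding constraints. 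In particular $R_V^{\ot k}$ and $R^{\ot k}$ are co-strategies; and since $R_{\mathrm{acc}}\preceq R_V$ forces $R_{\mathrm{acc}}^{\ot k}\preceq R_V^{\ot k}$, Proposition~\ref{prop:measure-sum} confirms that $\set{R_{\mathrm{acc}}^{\ot k},\,R_V^{\ot k}-R_{\mathrm{acc}}^{\ot k}}$ really is a valid measuring co-strategy, with the acceptance probability of any (arbitrary, possibly correlated) prover strategy $\tilde Q$ in the repeated protocol equal to $\inner{\tilde Q}{R_{\mathrm{acc}}^{\ot k}}$ by Theorem~\ref{theorem:inner-product}.

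Finally I would assemble the bound. From $0\preceq A\preceq B$ and $0\preceq C\preceq D$ one gets $A\ot C\preceq B\ot D$, because $B\ot D-A\ot C=(B-A)\ot D+A\ot(D-C)$ is a sum of positive semidefinite operators; iterating, $R_{\mathrm{acc}}\preceq sR$ gives $R_{\mathrm{acc}}^{\ot k}\preceq (sR)^{\ot k}=s^{k}R^{\ot k}$. Since $R^{\ot k}$ is a co-strategy and $s^{k}\in[0,1]$, the pair $(p,R')=(s^{k},R^{\ot k})$ is feasible for the minimisation in Theorem~\ref{thm:max-prob} applied to the repeated verifier, whence $\max_{\tilde Q}\inner{\tilde Q}{R_{\mathrm{acc}}^{\ot k}}\le s^{k}$; that is, the $k$-fold parallel repetition has soundness error $s^{k}$.

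I expect the only genuinely fiddly step to be the middle paragraph: setting up the reordering of tensor factors precisely and verifying that ``tensor product of co-strategies is a co-strategy'' survives it. This is routine given Theorem~\ref{thm:char}, but it is exactly the place where the whole argument would become long and delicate in the operational formalism of Chapter~\ref{ch:strategies} — one would there have to argue by hand that a prover cannot gain by correlating the repetitions, whereas here that subtlety is dissolved by the semidefinite duality packaged into Theorem~\ref{thm:max-prob}.
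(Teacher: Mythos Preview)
Your proposal is correct and follows essentially the same route as the paper: apply Theorem~\ref{thm:max-prob} to obtain $R_{\mathrm{acc}}\preceq sR$, tensor to get $R_{\mathrm{acc}}^{\ot k}\preceq s^k R^{\ot k}$, observe that $R^{\ot k}$ is again a (co-)strategy, and invoke Theorem~\ref{thm:max-prob} once more. The only differences are cosmetic---you cast the verifier as a co-strategy rather than a strategy (an immaterial choice, as the paper itself notes), you supply the telescoping argument for the tensor inequality yourself whereas the paper cites \cite{CleveS+06-preprint}, and you are more explicit than the paper about the reordering of tensor factors and about why a tensor product of co-strategies is a co-strategy.
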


\begin{proof}

Suppose that the interactive proof in the statement of the theorem solves the decision problem $P$.
The theorem makes no claim about the case where the input is a yes-instance of $P$, so we need only consider those inputs which are no-instances of $P$.
We must show that the verifier in the repeated protocol can be made to accept such an input with probability not exceeding $s^k$.

For any such input, the actions of the verifier may be represented by a two-outcome $r$-round measuring strategy $\set{R_\textrm{accept},R_\textrm{reject}}$ where the prover implements the corresponding $r$-round co-strategy.
%Let $\strategy{R}$ denote the set of $r$-round co-strategies for the referee, so that
%\[ \set{R_\textrm{accept},R_\textrm{reject}} \subset \substrategy{R}. \]
As the verifier has soundness error $s$, it follows from Theorem \ref{thm:max-prob} (\thmmaxprob) that there exists an $r$-round non-measuring strategy $R$ for the referee's input and output spaces with the property that
\[ R_\textrm{accept}\preceq sR. \]
It follows from a semidefinite inequality proven in Ref.~\cite[Proposition 5]{CleveS+06-preprint} that
\[ R_\textrm{accept}^{\ot k} \preceq s^k R^{\ot k}. \]
(The cited inequality is not as explicit in the final journal version \cite{CleveS+08} of Ref.~\cite{CleveS+06-preprint}.)

For the repeated protocol with unanimous vote, Theorem \ref{thm:max-prob} implies that the maximum probability with which the verifier can be made to accept is given by
\[ \min \Set{\lambda\geq 0 : R_\textrm{accept}^{\ot k} \preceq\lambda Q \text{ for some $r$-round non-measuring strategy $Q$} }. \]
Taking $\lambda=s^k$ and $Q=R^{\ot k}$ completes the proof.
\end{proof}

Theorem \ref{thm:qip-parallel} makes no claim about the effect of a unanimous vote on the completeness error.
But for single-prover interactive proofs the question is moot, since these proofs can be assumed to have zero completeness error \cite{KitaevW00}.
In particular, for yes-instances of $P$ the prover in the $k$-fold repeated protocol
%with unanimous vote
can achieve zero completeness error by playing an independent copy of the zero-error strategy for each of the $k$ repetitions.

\subsubsection{Parallel repetition of quantum interactive proofs with competing provers}
%======================================================================

Fortunately, Theorem \ref{thm:qip-parallel} can be adapted with little difficulty to say something meaningful about the parallel repetition of quantum interactive proofs with competing provers.

\begin{theorem}[Parallel repetition of quantum interactive proofs with competing provers]
\label{thm:qrg-parallel}

Consider the $k$-fold parallel repetition of an arbitrary $r$-round quantum interactive proof with competing provers that has completeness error $c$ and soundness error $s$.

If the verifier in the repeated protocol accepts only when all $k$ repetitions accept then the repeated protocol has completeness error $kc$ and soundness error $s^k$.
Similarly, if the verifier in the repeated protocol rejects only when all $k$ repetitions reject then the repeated protocol has completeness error $c^k$ and soundness error $ks$.

\end{theorem}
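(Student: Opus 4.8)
The plan is to reduce to the single-copy analysis exactly as in the proof of Theorem~\ref{thm:qip-parallel}, using the ``hard-wiring'' device of Section~\ref{subsec:game-theory:alg} to turn each two-prover protocol into a one-prover protocol played against an adversarially chosen \emph{fixed} strategy. Fix an input. The verifier in a refereed game is an $r$-round measuring co-strategy $\set{R_{\mathrm{accept}},R_{\mathrm{reject}}}$ whose spaces split as $\cX_i=\cA_i\ot\cB_i$ and $\cY_i=\cC_i\ot\cD_i$ between the ``yes'' prover (the $\cA_i,\cC_i$ parts) and the ``no'' prover (the $\cB_i,\cD_i$ parts), and the two provers jointly implement a strategy of product form $A\ot B$. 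For a fixed ``no''-prover strategy $B$ the super-operators $\Omega_m$ of Section~\ref{subsec:game-theory:alg} give a measuring co-strategy $\set{\Omega_{\mathrm{accept}}(B),\Omega_{\mathrm{reject}}(B)}$ for the ``yes'' prover's spaces with $\inner{A}{\Omega_m(B)}=\inner{A\ot B}{R_m}$, and symmetrically a fixed ``yes''-prover strategy $A$ hard-wires to a measuring co-strategy $\set{\Omega'_{\mathrm{accept}}(A),\Omega'_{\mathrm{reject}}(A)}$ for the ``no'' prover's spaces. I will use throughout that a tensor product of (co-)strategies is again a (co-)strategy---the same fact underlies the step $Q=R^{\ot k}$ in the proof of Theorem~\ref{thm:qip-parallel}---and that, up to the inert permutation of tensor factors that regroups the $k$ copies round by round, the ``all-accept'' outcome of the $k$-fold repeated referee is the operator $R_{\mathrm{accept}}^{\ot k}$ and hard-wiring $k$ independent copies $A^{\ot k}$ of a prover strategy produces $\Omega'_{\mathrm{accept}}(A)^{\ot k}$.

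Two of the four claimed bounds---soundness error $s^k$ for the unanimous-accept repetition and completeness error $c^k$ for the unanimous-reject repetition---follow exactly as in Theorem~\ref{thm:qip-parallel}. For the first: on a no-instance the soundness hypothesis gives a ``no''-prover strategy $B^\star$ with $\max_A\inner{A}{\Omega_{\mathrm{accept}}(B^\star)}\le s$, so Theorem~\ref{thm:max-prob} supplies a non-measuring co-strategy $R$ with $\Omega_{\mathrm{accept}}(B^\star)\preceq sR$; letting the ``no'' prover play $(B^\star)^{\ot k}$ in the repeated protocol, its acceptance probability is at most $\max_A\inner{A}{\Omega_{\mathrm{accept}}(B^\star)^{\ot k}}$, and $\Omega_{\mathrm{accept}}(B^\star)^{\ot k}\preceq s^kR^{\ot k}$ by the semidefinite inequality of Ref.~\cite[Proposition~5]{CleveS+06-preprint}, with $R^{\ot k}$ a co-strategy; Theorem~\ref{thm:max-prob} then bounds this by $s^k$, so the value of the repeated game is at most $s^k$. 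The bound $c^k$ is the mirror image with the roles of acceptance/rejection and of the two provers interchanged.

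The other two bounds---completeness error $kc$ for the unanimous-accept repetition and soundness error $ks$ for the unanimous-reject repetition---come from a union bound. In the unanimous-accept case, on a yes-instance the completeness hypothesis gives a ``yes''-prover strategy $A^\star$ with $\max_B\inner{B}{\Omega'_{\mathrm{reject}}(A^\star)}\le c$, hence by Theorem~\ref{thm:max-prob} a co-strategy $R'$ with $E\defeq\Omega'_{\mathrm{reject}}(A^\star)\preceq cR'$; put $P\defeq\Omega'_{\mathrm{accept}}(A^\star)+E$, which is a non-measuring co-strategy, so $0\preceq E\preceq P$. Letting the ``yes'' prover play $(A^\star)^{\ot k}$, the ``all-accept'' operator seen by the repeated ``no'' prover is $(P-E)^{\ot k}$, and a short telescoping induction gives
\[ P^{\ot k}-(P-E)^{\ot k}\ \preceq\ \sum_{i=1}^{k} P^{\ot(i-1)}\ot E\ot P^{\ot(k-i)}. \]
Now $\inner{B^{(k)}}{P^{\ot k}}=1$ for every repeated ``no''-prover strategy $B^{(k)}$ since $P^{\ot k}$ is a non-measuring co-strategy; and for each $i$, the operator $P^{\ot(i-1)}\ot E\ot P^{\ot(k-i)}$ is one element of a measuring co-strategy (by Proposition~\ref{prop:measure-sum}, since it and $P^{\ot(i-1)}\ot\Omega'_{\mathrm{accept}}(A^\star)\ot P^{\ot(k-i)}$ sum to $P^{\ot k}$) and is $\preceq c\,(P^{\ot(i-1)}\ot R'\ot P^{\ot(k-i)})$ with the latter a co-strategy, so Theorem~\ref{thm:max-prob} bounds $\inner{B^{(k)}}{P^{\ot(i-1)}\ot E\ot P^{\ot(k-i)}}$ by $c$. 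Hence $\inner{B^{(k)}}{(P-E)^{\ot k}}\ge 1-kc$ for all $B^{(k)}$, so the repeated verifier accepts with probability at least $1-kc$, giving completeness error at most $kc$; the bound $ks$ is the mirror image.

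The hard part is organizational rather than conceptual: one must verify that hard-wiring $k$ independent copies of a prover's strategy into the $k$-fold repeated referee really produces the $k$-fold tensor of the single-copy hard-wired measuring co-strategy---which requires the repeated referee to be the tensor of the single referee up to a permutation of tensor factors, and that this permutation leaves the relevant inner products and semidefinite orderings intact---and one must establish the telescoping operator inequality above and check that every intermediate invocation of Theorem~\ref{thm:max-prob} and Proposition~\ref{prop:measure-sum} is applied to genuine (co-)strategies. None of these steps is deep, but all demand care.
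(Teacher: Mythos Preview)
Your argument is correct and follows essentially the same route as the paper: hard-wire one prover's optimal single-copy strategy into the referee and then apply the proof of Theorem~\ref{thm:qip-parallel} for the exponential ($s^k$ or $c^k$) bound, and use a union bound over the $k$ copies for the linear ($kc$ or $ks$) bound. Your completeness argument is more explicit than the paper's---the paper simply invokes the union bound in one sentence---but your telescoping inequality together with Theorem~\ref{thm:max-prob} is precisely that union bound written out at the operator level, since $P^{\ot(i-1)}\ot E\ot P^{\ot(k-i)}$ is exactly the co-strategy element for the event ``copy $i$ rejects.''
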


\begin{proof}

We prove only the first claim in the statement of the theorem, as the second claim follows by symmetry.

Suppose that the interactive proof in the statement of the theorem solves the decision problem $P$ and suppose first that the input is a yes-instance of $P$.
We must show that there is a ``yes-prover'' who convinces the verifier in the repeated protocol to accept with probability at least $1-kc$, regardless of the strategy employed by the other prover, whom we call the ``no-prover''.

The yes-prover we seek merely plays an independent copy of the optimal strategy for each of the $k$ repetitions.
Of course, no no-prover can win any one of the $k$ repetitions with probability greater than $c$.
It then follows from the union bound that the repeated game with unanimous vote has completeness error at most $kc$.

Now suppose that the input is a no-instance of $P$.
For this case, we must find a no-prover who convinces the verifier in the repeated protocol to reject with probability at least $1-s^k$, regardless of the strategy employed by the yes-prover.

Toward that end, consider an optimal no-prover for the original protocol.
Borrowing from the proof of Theorem \ref{thm:qip-parallel}, the combined actions of the verifier and no-prover in this protocol may be represented by a two-outcome $r$-round measuring strategy $\set{N_\textrm{accept},N_\textrm{reject}}$ where the yes-prover implements the corresponding $r$-round co-strategy.
That the verifier-no-prover combination for the repeated protocol specified by $\set{N_\textrm{accept}',N_\textrm{reject}'}$ with \[N_\textrm{accept}'=N_\textrm{accept}^{\ot k}\] has soundness error $s^k$ now follows exactly as in the proof of Theorem \ref{thm:qip-parallel}.

Moreover, we see from this proof that the desired optimal no-prover for the repeated protocol simply plays an independent copy of the optimal strategy for each of the $k$ repetitions.
\end{proof}

Theorem \ref{thm:qrg-parallel} tells us that we may achieve an exponential reduction in completeness (or soundness) at the cost of a linear increase in soundness (or completeness).
Such a transformation is useful, for example, for protocols for which either completeness or soundness is already small---these proofs can be transformed so that \emph{both} completeness \emph{and} soundness are small.

More specifically, suppose we have a family of protocols with the property that, for some fixed constants $\varepsilon,s\in(0,1)$ and for any desired $m$, there is a protocol in the family with completeness error $\varepsilon^m$ and soundness error $s$.
Then for any desired $k$, we may choose $m$ large enough and employ $k$-fold parallel repetition with unanimous vote to obtain a protocol in which both completeness error \emph{and} soundness error are no larger than $s^k$.
Indeed, such a transformation was employed in Ref.~\cite{GutoskiW05} to reduce error for so-called ``short quantum games''.

%======================================================================
\section{Kitaev's bound for strong coin-flipping}
\label{sec:coin-flip}
%======================================================================

In this section we provide a simplified proof of Kitaev's bound for strong quantum coin-flipping, which states that a cheating party can always force any desired outcome upon an honest party with probability at least $1/\sqrt{2}$.
We begin with a definition of the coin-flipping problem, followed by a brief survey, before ending the section with our contribution.

\subsubsection{Strong and weak coin-flipping}
%======================================================================

Suppose that two parties---Alice and Bob---wish to agree on a random bit.
(That is, they wish to flip a fair coin.)
The parties are physically separated, so the agreement must be reached by an exchange of messages via remote communication.
Moreover, the two parties do not trust each other, meaning that each party suspects that the other might attempt to force a particular result of the coin flip rather than settle for a uniformly random result.
The goal is to devise a protocol that produces the fairest possible coin flip, even in the case where one cheating party deliberately attempts to bias the result of the flip.

%\begin{example}
%\label{ex:naive-coin-flip}

For example,
Alice could simply flip a fair coin for herself and announce the result to Bob, who meekly agrees to whatever Alice dictates.
If both parties adhere to these honest strategies then it is clear that they will produce a perfectly correlated random bit as desired.
Even if Bob cheats, there's nothing he can do to prevent Alice from producing a perfectly fair coin toss.
(Indeed, Alice's honest strategy is to completely ignore Bob.)
But if Alice decides to cheat then Bob is out of luck;
his mindless strategy of taking Alice at her word allows Alice to force Bob to produce any outcome she desires.
While this simple protocol is robust against a cheating Bob, it is completely vulnerable to a cheating Alice and is therefore a poor solution to the problem at hand.

%\end{example}

%More specifically,
Let us be more specific about this problem.
A \emph{strong coin-flipping protocol with bias $\varepsilon$} consists of an honest strategy for Alice and an honest strategy for Bob such that:
\begin{enumerate}

\item
If both parties follow their honest strategies then both parties produce the same outcome, and that outcome is chosen uniformly at random from $\set{0,1}$.

\item
If only one party follows his or her honest strategy then the maximum probability with which the cheating party can force the honest party to produce a given outcome
%from $\set{0,1}$
is at most $1/2+\varepsilon$.

\end{enumerate}
The adjective \emph{strong} refers to the fact that a cheater cannot bias an honest party's outcome toward either result 0 or 1.
By contrast, \emph{weak} protocols assume that one player desires outcome 0 and the other desires outcome 1.
The only requirement for weak protocols is that a cheater cannot bias the result toward his or her desired outcome.
In a \emph{quantum} coin-flipping protocol, the parties may process and exchange quantum information.

\subsubsection{Background}
%======================================================================

This problem was introduced in the classical setting by Blum \cite{Blum81}, who called it \emph{coin-flipping by telephone}.
The problem is of interest to cryptographers because it is a primitive---a basic tool---for secure two-party computations.
%In modified form, coin-flipping by telephone can be used to implement secret key distribution.
In this setting there is little need to distinguish between strong and weak coin-flipping.
Indeed, Blum showed that even strong coin-flipping with zero bias is possible under certain computational assumptions.
Conversely, without any such assumptions it is not difficult to see that even weak coin-flipping
is impossible.
%with bias less than $1/2$ is impossible.
In particular, for each outcome $b\in\set{0,1}$ either (i) a computationally unrestricted cheating Alice can force outcome $b$ upon honest-Bob with certainty, or (ii) a computationally unrestricted cheating Bob can force the opposite outcome upon honest-Alice with certainty.

The quantum version of this problem admits some surprising contrasts to its classical counterpart.
Whereas unconditional classical coin-flipping is impossible, there is a weak quantum coin-flipping protocol that achieves \emph{arbitrarily small bias} in the limit of the number of messages exchanged between the parties \cite{Mochon07}.
Moreover, the existence of such a protocol for weak quantum coin-flipping implies the existence of a \emph{strong} quantum coin-flipping protocol that achieves bias arbitrarily close to $1/\sqrt{2}-1/2\approx 0.207$ \cite{ChaillouxK09}.
This protocol for strong quantum coin-flipping is the best possible, as it matches a lower bound due to Kitaev \cite{Kitaev02}.
%: in any strong quantum coin-flipping protocol one cheating party can always force a given outcome on an honest party with probability at least $1/\sqrt{2}$ \cite{Kitaev02}.
(Kitaev did not publish this proof, but it appears in Refs.~\cite{AmbainisB+04,Roehrig04}.)
More comprehensive histories of quantum coin-flipping can be found in the breakthrough works of Mochon \cite{Mochon07}, and Chailloux and Kerenidis \cite{ChaillouxK09}.

\subsubsection{Our contribution}
%======================================================================

We now provide an alternate and simplified proof of Kitaev's lower bound for strong quantum coin-flipping.

\begin{theorem}[Kitaev's bound for strong quantum coin-flipping]
\label{thm:Kitaev:coin-flip}

In any strong quantum coin-flipping protocol, a computationally unrestricted cheating party can always force a given outcome on an honest party with probability at least $1/\sqrt{2}$.

\end{theorem}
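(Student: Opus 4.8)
The plan is to cast the protocol in the strategy formalism and reduce Kitaev's bound to a two-line semidefinite computation, with Theorem~\ref{thm:max-prob} (\thmmaxprob) doing all of the heavy lifting. First I would fix an $r$-round protocol and represent honest Alice by a measuring strategy $\set{A_0,A_1}$ and honest Bob by a compatible measuring co-strategy $\set{B_0,B_1}$, where in each case the index records the bit output by that party at the end of the interaction. (By Proposition~\ref{prop:measure-sum}, $A\defeq A_0+A_1$ is then a non-measuring strategy and $B\defeq B_0+B_1$ a non-measuring co-strategy.) The correctness requirement---that two honest parties always agree on a uniformly random bit---says precisely that the interaction between $\set{A_a}$ and $\set{B_b}$ yields the pair $(a,b)$ with probability $\tfrac12$ when $a=b$ and with probability $0$ when $a\neq b$. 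By Theorem~\ref{theorem:inner-product} (\theoreminnerproduct) this is the statement that $\inner{A_b}{B_b}=\tfrac12$ for each $b\in\set{0,1}$; the off-diagonal conditions will not even be needed.

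Next I would invoke Theorem~\ref{thm:max-prob} (\thmmaxprob) twice. Since $\set{A_a}$ is a measuring strategy, the maximum probability $p_b$ with which a cheating Bob can force honest Alice to output a fixed bit $b$ is the minimum $p\in[0,1]$ for which $A_b\preceq pR$ for some strategy $R$, and this minimum is attained by some strategy $R_b$. Dually, the maximum probability $q_b$ with which a cheating Alice can force honest Bob to output $b$ is the minimum $q\in[0,1]$ for which $B_b\preceq qS$ for some co-strategy $S$, attained by some co-strategy $S_b$. Because $A_b$ and $B_b$ live in the same operator space and honest Alice and honest Bob are compatible, $R_b$ and $S_b$ form a compatible strategy/co-strategy pair.

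The key inequality is then immediate from positive semidefiniteness: using $p_bR_b-A_b\succeq 0$, $B_b\succeq 0$, $q_bS_b-B_b\succeq 0$, $R_b\succeq 0$ in turn,
\[ \tfrac12 = \inner{A_b}{B_b} \leq \inner{p_bR_b}{B_b} \leq \inner{p_bR_b}{q_bS_b} = p_bq_b\inner{R_b}{S_b} \leq p_bq_b, \]
where the last step uses that $\inner{R_b}{S_b}$ is the probability of the (unique) combined outcome of the interaction between the non-measuring strategy $R_b$ and the non-measuring co-strategy $S_b$---again by Theorem~\ref{theorem:inner-product}---and hence is at most $1$. Therefore $p_bq_b\geq\tfrac12$, so $\max\set{p_b,q_b}\geq 1/\sqrt2$ for each $b$, which is exactly the claim: for any fixed outcome $b$, one of the two parties can cheat towards $b$ with probability at least $1/\sqrt2$.

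The step I expect to require the most care is the bookkeeping around \emph{which} party is a strategy and which is a co-strategy, together with the verification that the dominating operators $R_b,S_b$ supplied by Theorem~\ref{thm:max-prob} are compatible in the sense of Theorem~\ref{theorem:inner-product}, so that the bound $\inner{R_b}{S_b}\leq 1$ may legitimately be asserted. All of the genuinely analytic content---semidefinite optimization duality---has already been absorbed into Theorem~\ref{thm:max-prob}, so beyond these structural checks no further hard estimate is needed, and the argument fits comfortably in half a page.
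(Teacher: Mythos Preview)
Your proof is correct and follows essentially the same route as the paper. The only cosmetic difference is that the paper invokes Theorem~\ref{thm:max-prob} just once---obtaining a strategy $Q$ with $A_b\preceq pQ$ and then observing directly that a cheating Alice who plays $Q$ forces honest Bob to output $b$ with probability $\inner{Q}{B_b}\geq\tfrac{1}{p}\inner{A_b}{B_b}=\tfrac{1}{2p}$, whence $\max\{p,\tfrac{1}{2p}\}\geq 1/\sqrt2$---whereas your symmetric double application of Theorem~\ref{thm:max-prob} arrives at the equivalent product bound $p_bq_b\geq\tfrac12$.
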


\begin{proof}[Our new proof]
%[Our new proof of Theorem \ref{thm:Kitaev:coin-flip}]

In any quantum coin-flipping protocol the actions of honest-Alice are represented by a two-outcome measuring strategy $\set{A_0,A_1}$ and the actions of honest-Bob are represented by a two-outcome measuring co-strategy $\set{B_0,B_1}$.
By Theorem \ref{theorem:inner-product} (\theoreminnerproduct), the definition of a quantum coin-flipping protocol requires
\[ 1/2 = \inner{A_0}{B_0} = \inner{A_1}{B_1}. \]
Choose any outcome $b\in\set{0,1}$ and let $p$ denote the maximum probability with which a cheating Bob could force honest-Alice to output $b$.
Obviously we have $p\geq 1/2$.
Theorem \ref{thm:max-prob} (\thmmaxprob) implies that there must exist a strategy $Q$ for Alice such that $A_b\preceq pQ$.
If a cheating Alice plays this strategy $Q$ then honest-Bob outputs $b$ with probability
\[ \inner{Q}{B_b} \geq \frac{1}{p}\inner{A_b}{B_b} = \frac{1}{2p}. \]
As $\max\set{p,\frac{1}{2p}}\geq 1\sqrt{2}$ for all $p>0$, it follows that either honest-Alice or honest-Bob can be convinced to output $b$ with probability at least $1/\sqrt{2}$.
\end{proof}

This proof makes clear the limitations of strong coin-flipping protocols: the inability of Bob to force Alice to output $b$ directly implies that Alice can herself bias the outcome toward $b$.
By definition, weak coin-flipping protocols are not subject to this limitation.

\chapter{Distance Measures} \label{ch:norms}
%======================================================================

In this chapter we introduce a new norm for super-operators that generalizes the diamond norm and we argue that this norm quantifies the observable difference between quantum strategies.
Indeed, we establish an extension of a result from Ref.~\cite{GutoskiW05} stating that each pair of convex sets $\bS_0,\bS_1$ of $r$-round strategies has a fixed $r$-round measuring co-strategy $\set{T_a}$ with the property that any pair $S_0\in\bS_0,S_1\in\bS_1$ can be distinguished by $\set{T_a}$ with probability determined by the minimal distance between $\bS_0$ and $\bS_1$ as measured by our new norm.

In order to prove this distinguishability result we also establish several properties of the new norm, including characterizations of its unit ball and dual norm.
We end the chapter with a discussion of the dual of the diamond norm.

When not explicitly stated otherwise, the results of this chapter are the sole work of the author and were not published prior to the present thesis.

%======================================================================
\section{A new norm for strategies}
\label{sec:dist:snorm}
%======================================================================

In order to best argue that our new norm captures the distinguishability of quantum strategies, we begin with a discussion of the trace norm and explain its use in distinguishing quantum states.
We then proceed with a discussion of the diamond norm for super-operators and explain its use in distinguishing quantum operations.
Only then do we introduce our new norm and note the similarities with the diamond norm and trace norm for the purpose of distinguishing quantum strategies.
We conclude the section with several immediate observations concerning this new norm, including the fact that it agrees with the diamond norm wherever it is defined.

\subsubsection{The trace norm as a distance measure for quantum states}
%======================================================================

Recall from Section \ref{sec:intro:linalg} that the \emph{trace norm} $\tnorm{X}$ of an arbitrary operator $X$ is defined as the sum of the singular values of $X$.
If $X$ is Hermitian then it is a simple exercise to verify that its trace norm is given by
%\[ \tnorm{X} = \max_{ \textrm{measurement $\set{P_0,P_1}$} } \inner{P_0-P_1}{X} \]
%where the maximum is taken over all two-outcome measurements $\set{P_0,P_1}$---that is, over all pairs of positive semidefinite $P_0,P_1$ with $P_0+P_1=I$.
\begin{align*}
\tnorm{X}
&= \max \Set{ \inner{P_0-P_1}{X} \::\: P_0,P_1\succeq 0,\ P_0+P_1=I } \\
&= \max \Set{ \inner{P_0-P_1}{X} \::\: \set{P_0,P_1} \textrm{ is a two-outcome measurement} }.
\end{align*}

The trace norm provides a physically meaningful distance measure for quantum states in the sense that it captures the maximum likelihood with which two states can be correctly distinguished.
Let us illustrate this fact with a simple example involving two parties called Alice and Bob and a fixed pair of quantum states $\rho_0,\rho_1$.
Suppose Bob selects a state $\rho\in\set{\rho_0,\rho_1}$ uniformly at random and gives Alice a quantum system prepared in state $\rho$.
Alice has a complete description of both $\rho_0$ and $\rho_1$, but she does not know which of the two was selected by Bob.
Her goal is to correctly guess which of $\set{\rho_0,\rho_1}$ was selected based upon the outcome of a measurement she conducts on $\rho$.

Since Alice's guess is binary-valued and completely determined by her measurement, that measurement can be assumed to be a two-outcome measurement $\set{P_0,P_1}$ wherein outcome $a\in\set{0,1}$ indicates a guess that Bob prepared $\rho=\rho_a$.
Letting $C$ denote the event that Alice's guess is correct, basic quantum formalism tells us that
\begin{align*}
\Pr[C] &= \textstyle \frac{1}{2}\inner{P_0}{\rho_0} + \frac{1}{2}\inner{P_1}{\rho_1}\\
\Pr[\lnot C] &= \textstyle \frac{1}{2}\inner{P_1}{\rho_0} + \frac{1}{2}\inner{P_0}{\rho_1},
\end{align*}
implying that
\[ \Pr[C] - \Pr[\lnot C] = \textstyle \frac{1}{2} \inner{P_0-P_1}{\rho_0-\rho_1}. \]
As $\Pr[C] + \Pr[\lnot C] = 1$, we obtain the following alternate expression for the probability with which Alice's guess is correct:
\[
  \Pr[C]  = \textstyle \frac{1}{2} + \frac{1}{4} \inner{P_0-P_1}{\rho_0-\rho_1} \leq
  \frac{1}{2} + \frac{1}{4} \tnorm{\rho_0-\rho_1}
\]
with equality achieved at the optimal measurement $\set{P_0,P_1}$ for Alice.
%The details of this example are immaterial.
This fundamental observation was originally made by Helstrom \cite{Helstrom69}.

\subsubsection{The diamond norm as a distance measure for quantum operations}
%======================================================================

The trace norm extends naturally to super-operators, but this extension does not lead to an overly useful distance measure for quantum operations.
To achieve such a measure, the trace norm must be modified as follows.
%The diamond norm is more useful than the super-operator trace norm in the study of quantum information---we shall see why shortly.

\begin{definition}[Super-operator trace norm, diamond norm]

  For an arbitrary super-operator $\Phi$ the \emph{super-operator trace norm} $\tnorm{\Phi}$ of $\Phi$ is defined as
  \[ \tnorm{\Phi} \defeq \max_{\tnorm{X}=1} \tnorm{\Phi(X)}. \]
  The \emph{diamond norm} $\dnorm{\Phi}$ of $\Phi$ is defined as
  \[ \dnorm{\Phi} \defeq \sup_\cW \Tnorm{\Phi\ot\idsup{\cW}} \]
  where the supremum is taken over all finite-dimensional complex Euclidean spaces $\cW$.
\end{definition}

Much is known of the diamond norm.
For example, if $\Phi$ has the form $\Phi:\lin{\cX}\to\lin{\cY}$ then the supremum in the definition of $\dnorm{\Phi}$ is always achieved by some space $\cW$ whose dimension does not exceed that of the input space $\cX$ \cite{Kitaev97,AharonovK+98}.
As a consequence, the supremum in the definition of the diamond norm can be replaced by a maximum.
Moreover, if $\Phi$ is Hermitian-preserving and $\dim(\cW)\geq\dim(\cX)$ then the maximum in the definition of $\tnorm{\Phi\ot\idsup{\cW}}$ is always achieved by some positive semidefinite operator $X$ \cite{RosgenW05}.
Thus, if $\Phi$ is Hermitian-preserving then its diamond norm is given by
%\begin{align*}
%  \dnorm{\Phi} &=
%  \max_{ \substack{ \textrm{space $\cW$, $\dim(\cW)\leq\dim(\cX)$}\\ \textrm{state $\rho\in\pos{\cX\ot\cW}$}}} \Tnorm{\Pa{\Phi\ot\idsup{\cW}}(\rho)}\\
%  &= \max_{
%    \substack{
%      \textrm{state $\rho$} \\
%      \textrm{measurement $\set{P_0,P_1}$}
%    }
%  }
%  \inner{P_0-P_1}{\Pa{\Phi\ot\identity}(\rho)}.
%\end{align*}
\begin{align*}
  \dnorm{\Phi}
  &= \max \Tnorm{\Pa{\Phi\ot\idsup{\cW}}(\rho)} \\
  &= \max \Inner{P_0-P_1}{\Pa{\Phi\ot\idsup{\cW}}(\rho)}
\end{align*}
where the maxima in these two expressions are taken over all spaces $\cW$ with dimension at most $\cX$, all states $\rho\in\pos{\cX\ot\cW}$, and all two-outcome measurements $\set{P_0,P_1}\subset\pos{\cY\ot\cW}$.

The diamond norm is to quantum operations as the trace norm is to quantum states:
it provides a physically meaningful distance measure for quantum operations in the sense that the value $\dnorm{\Phi_0-\Phi_1}$ quantifies the observable difference between two quantum operations $\Phi_0,\Phi_1$.
As before, this fact may be illustrated with a simple example.
Suppose Bob selects an operation from $\Phi\in\set{\Phi_0,\Phi_1}$ uniformly at random.
Alice is granted ``one-shot, black-box'' access to $\Phi$ and her goal is to correctly guess which of $\Phi_0,\Phi_1$ was applied.
Specifically, Alice may prepare a quantum system in state $\rho$ and send a portion of that system to Bob, who applies $\Phi$ to that portion and then returns it to Alice.
Finally, Alice performs a two-outcome measurement $\set{P_0,P_1}$ on the resulting system $\Pa{\Phi\ot\identity}(\rho)$ where outcome $a\in\set{0,1}$ indicates a guess that $\Phi=\Phi_a$.

Letting $C$ denote the event that Alice's guess is correct, we may repeat our previous derivation of $\Pr[C]$ to obtain
\begin{align*}
  \Pr[C]  &= \textstyle \frac{1}{2} + \frac{1}{4}
  \inner{P_0-P_1}{
    \Pa{\Phi_0\ot\identity}(\rho)-
    \Pa{\Phi_1\ot\identity}(\rho)
  } \\
%  &\leq \textstyle \frac{1}{2} + \frac{1}{4} \tnorm{
%    \Pa{\Phi_0\ot\identity}(\rho)-
%    \Pa{\Phi_1\ot\identity}(\rho)
%  } \\
  &\leq \textstyle \frac{1}{2} + \frac{1}{4} \dnorm{\Phi_0-\Phi_1}
\end{align*}
with equality achieved at the optimal input state $\rho$ and measurement $\set{P_0,P_1}$ for Alice.

It is interesting to note that the ability to send only \emph{part} of the input state $\rho$ to Bob and keep the rest for herself can enhance Alice's ability to distinguish some pairs of quantum operations, as compared to a simpler test that involves sending the \emph{entire} input state to Bob.
Indeed, there exist pairs $\Phi_0,\Phi_1$ of quantum operations that are perfectly distinguishable when applied to half of a maximally entangled input state---that is, $\dnorm{\Phi_0-\Phi_1}=2$---yet they appear nearly identical when an auxiliary system is not used---that is, $\tnorm{\Phi_0-\Phi_1}\approx 0$.
An example of such a pair of super-operators can be found in Watrous \cite{Watrous08}, along with much of the discussion that has occurred thus far in this section.
It is this phenomenon that renders the super-operator trace norm less useful than the diamond norm for the study of quantum information.

Additional properties of the diamond norm are established in this thesis by Lemmas \ref{lemma:cp-dnorm}, \ref{lm:dnorm}, and \ref{lemma:dnorm:1d} of Section \ref{sec:diamond-dual}.

\subsubsection{The strategy $r$-norm as a distance measure for quantum strategies}
%======================================================================

The simple guessing game between Alice and Bob extends naturally from quantum operations to quantum strategies.
Let $S_0,S_1$ be arbitrary $r$-round quantum strategies
%for input spaces $\cX_1,\dots,\cX_r$ and output spaces $\cY_1,\dots,\cY_r$
%for some fixed input and output spaces
and suppose Bob selects $S\in\set{S_0,S_1}$ uniformly at random.
Alice's task is to interact with Bob and then decide after the interaction whether Bob selected $S=S_0$ or $S=S_1$.

Theorem \ref{theorem:inner-product} (\theoreminnerproduct) establishes an inner product relationship between measuring strategies and co-strategies that is analogous to the relationship between states and measurements.
As such, much of our previous discussion concerning the task of distinguishing pairs of \emph{states} can be re-applied to the task of distinguishing pairs of \emph{strategies}.
In particular, Alice can be assumed to act according to some two-outcome $r$-round measuring co-strategy $\set{T_0,T_1}$ for Bob's input and output spaces, with outcome $a\in\set{0,1}$ indicating a guess that Bob acted according to strategy $S_a$.
Moreover, letting $C$ denote the event that Alice's guess is correct, we have
\[ \Pr[C]  = \textstyle \frac{1}{2} + \frac{1}{4} \inner{T_0-T_1}{S_0-S_1}. \]
Naturally, Alice maximizes the probability of a correct guess by maximizing this expression over all $r$-round measuring co-strategies $\set{T_0,T_1}$.

Of course, this guessing game is symmetric with respect to strategies and co-strategies.
In particular, if Bob's actions $S_0,S_1$ are \emph{co-strategies} instead of strategies then Alice's actions $\set{T_0,T_1}$ must be a measuring \emph{strategy} instead of a measuring co-strategy.
The probability with which Alice correctly guesses Bob's strategy is still given by
\[ \Pr[C]  = \textstyle \frac{1}{2} + \frac{1}{4} \inner{T_0-T_1}{S_0-S_1} \]
except that Alice now maximizes this probability over all $r$-round measuring \emph{strategies} $\set{T_0,T_1}$.

With this example in mind, we propose two new norms---one that captures the distinguishability of strategies and one that captures the distinguishability of co-strategies.

\def\defnewnorm{Strategy $r$-norm}
\begin{definition}[\defnewnorm]
\label{def:new-norm}

For any Hermitian-preserving super-operator $\Phi:\lin{\kprod{\cX}{1}{r}}\to\lin{\kprod{\cY}{1}{r}}$
we define
\begin{align*}
\Snorm{\Phi}{r} \ &\defeq \ \max \Set{ \Inner{T_0-T_1}{\jam{\Phi}}
: \textnormal{$\set{T_0,T_1}$ is a $r$-round measuring co-strategy} },\\
\Snorm{\Phi}{r}^* \ &\defeq \ \max \Set{ \Inner{S_0-S_1}{\jam{\Phi}}
: \textnormal{$\set{S_0,S_1}$ is a $r$-round measuring strategy} }.
\end{align*}
%\[
%  \snorm{\Phi}{r} \defeq
%    \max_{
%      \substack{
%        \textrm{$r$-round measuring}\\
%        \textrm{co-strategy $\set{T_0,T_1}$}
%      }
%    }
%    \inner{T_0-T_1}{\jam{\Phi}}.
%\]
Sometimes it is convenient to write $\snorm{\jam{\Phi}}{r}$ instead of $\snorm{\Phi}{r}$, particularly when $\jam{\Phi}$ is an operator derived from quantum strategies.
Similarly, we may sometimes write $\snorm{\jam{\Phi}}{r}^*$ instead of $\snorm{\Phi}{r}^*$.
\end{definition}

If $S_0,S_1$ are strategies for input spaces $\cX_1,\dots,\cX_r$ and output spaces $\cY_1,\dots,\cY_r$ then it follows immediately that the maximum probability with which Alice can correctly distinguish $S_0$ from $S_1$ is
\[ \textstyle \frac{1}{2} + \frac{1}{4} \snorm{S_0-S_1}{r} \]
%where the super-operators $\Xi_0,\Xi_1:\lin{\kprod{\cX}{1}{r}}\to\lin{\kprod{\cY}{1}{r}}$ satisfy $\jam{\Xi_a}=S_a$.
Likewise, if $S_0,S_1$ are co-strategies rather than strategies then the maximum probability with which Alice can correctly distinguish $S_0$ from $S_1$ is
\[ \textstyle \frac{1}{2} + \frac{1}{4} \snorm{S_0-S_1}{r}^* \]
%where the super-operators $\Delta_0,\Delta_1:\lin{\kprod{\cX}{1}{r}}\to\lin{\kprod{\cY}{1}{r}}$ satisfy $\jam{\Delta_a}=S_a$.

It may seem superfluous to allow both strategies and co-strategies as descriptions for Bob's actions in this simple example, as every co-strategy may be written as a strategy via suitable relabelling of input and output spaces.
But there is something to be gained by considering both the norms $\snorm{\cdot}{r}$ and $\snorm{\cdot}{r}^*$.
Later, we will show that these norms are dual to each other and we will use this duality to generalize the simple guessing game of this section.

\subsubsection{Both a trace norm and an operator norm}
%======================================================================

We just argued that the norms $\snorm{\cdot}{r}$ and $\snorm{\cdot}{r}^*$ are to strategies and co-strategies as the trace norm is to states---a distance measure.

But these new norms also admit a different interpretation.
It is easy to see that the maximum probability over all states $\rho$ with which a standard quantum measurement $\set{P_a}$ produces a given outcome $a$ is given by $\norm{P_a}$.
(This maximum is achieved for $\rho=vv^*$ where $v$ is an eigenvector associated with the largest eigenvalue of $P_a$.)
This observation extends unhindered to measuring strategies and co-strategies.

\begin{proposition}[Maximum output probability from the strategy $r$-norm]
\label{prop:max-prob-norm}

Let $\set{S_a}$ be an $r$-round measuring strategy.
The maximum probability with which a compatible $r$-round co-strategy could force $\set{S_a}$ to produce a given measurement outcome $a$ is given by $\snorm{S_a}{r}$.

Similarly, if $\set{T_b}$ is an $r$-round measuring co-strategy then the maximum probability with which a compatible $r$-round strategy could force $\set{T_b}$ to produce a given measurement outcome $b$ is given by $\snorm{T_b}{r}^*$.

\end{proposition}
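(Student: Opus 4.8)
The plan is to recognize the quantity in Proposition~\ref{prop:max-prob-norm} as exactly the quantity already computed in Theorem~\ref{thm:max-prob} (\thmmaxprob), and then to verify that this latter quantity coincides with $\snorm{S_a}{r}$. First I would recall from Definition~\ref{def:new-norm} (\defnewnorm) that
\[ \snorm{S_a}{r} = \max \Set{ \inner{T_0 - T_1}{S_a} : \set{T_0,T_1} \textrm{ is an $r$-round measuring co-strategy} }. \]
Since $S_a$ is positive semidefinite, the term $-\inner{T_1}{S_a}$ can only decrease the objective, so the maximum is attained with $T_1 = 0$ and $T_0$ equal to an element of an arbitrary measuring co-strategy. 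By Proposition~\ref{prop:measure-sum}, the operators that arise as a single element $T_0$ of some measuring co-strategy $\set{T_0,T_1}$ are precisely those positive semidefinite $T_0$ for which $T_0 \preceq R$ for some non-measuring co-strategy $R$ (take $T_1 = R - T_0$, which is positive semidefinite, and note $T_0 + T_1 = R$ is a non-measuring co-strategy). Hence
\[ \snorm{S_a}{r} = \max\Set{ \inner{T_0}{S_a} : 0 \preceq T_0 \preceq R \textrm{ for some non-measuring co-strategy } R } = \sup\Set{ \inner{R}{S_a} : R \in \cst_r }, \]
where the last equality uses $S_a \succeq 0$ so that enlarging $T_0$ up to $R$ only helps. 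This is exactly $s(S_a \mid \cst_r)$ in the notation of Section~\ref{subsec:convex-polarity}.

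Next I would invoke Theorem~\ref{theorem:inner-product} (\theoreminnerproduct): for a compatible measuring co-strategy $\set{R_b}$, the probability that the interaction yields outcome $a$ (from $\set{S_a}$) together with outcome $b$ is $\inner{S_a}{R_b}$, so the probability that $\set{S_a}$ produces outcome $a$ against the co-strategy $\sum_b R_b = R \in \cst_r$ is $\sum_b \inner{S_a}{R_b} = \inner{S_a}{R}$. Maximizing over all compatible co-strategies therefore gives exactly $\sup\Set{ \inner{S_a}{R} : R \in \cst_r } = s(S_a \mid \cst_r)$, which we just identified with $\snorm{S_a}{r}$. This proves the first claim. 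The second claim, for a measuring co-strategy $\set{T_b}$, follows by the identical argument with the roles of strategies and co-strategies exchanged: the relevant maximum is $\sup\Set{\inner{T_b}{Q} : Q \in \st_r} = s(T_b \mid \st_r)$, and Definition~\ref{def:new-norm} gives $\snorm{T_b}{r}^* = s(T_b \mid \st_r)$ by the same reduction via Proposition~\ref{prop:measure-sum} applied to strategies.

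I do not expect any serious obstacle here; the statement is essentially a repackaging of Theorem~\ref{thm:max-prob} and its proof via convex polarity. The only point requiring a little care is the reduction showing that the ``one-sided'' maximum defining $\snorm{S_a}{r}$ (which a priori ranges over differences $T_0 - T_1$) collapses to a maximization of $\inner{T_0}{S_a}$ over elements of measuring co-strategies, and then that the latter equals the support function $s(S_a \mid \cst_r)$; both steps hinge only on $S_a \succeq 0$ and Proposition~\ref{prop:measure-sum}. Alternatively, one could cite Theorem~\ref{thm:max-prob} directly together with the observation (from the first proof of that theorem) that the maximum output probability equals $s(S_a \mid \cst_r) = g(S_a \mid \subst_r)$, and simply note that this same quantity is what Definition~\ref{def:new-norm} computes; I would present whichever route is shortest in context.
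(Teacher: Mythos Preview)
Your proposal is correct and follows essentially the same approach as the paper's proof: use the positive semidefiniteness of $S_a$ to reduce the maximum over differences $T_0-T_1$ to a maximum over single co-strategy elements (the paper sets $T_1=0$ directly, you route through Proposition~\ref{prop:measure-sum} and then enlarge $T_0$ to $R$), and then invoke Theorem~\ref{theorem:inner-product} to identify this with the maximum output probability. The detour through Theorem~\ref{thm:max-prob} and the support function $s(S_a\mid\cst_r)$ is unnecessary here---the paper's proof is a two-line argument using only Theorem~\ref{theorem:inner-product}---but your version is equally valid.
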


\begin{proof}

The proofs of the two statements are completely symmetric, so we prove only the first.
As each $S_a$ is positive semidefinite, the maximum in the definition of $\snorm{S_a}{r}$ is achieved at an $r$-round measuring co-strategy $\set{T_0,T_1}$ with $T_1=0$.
That is,
\begin{align*}
\snorm{S_a}{r}
&= \max \Set{\inner{T_0-T_1}{S_a}: \set{T_0,T_1} \textrm{ is an $r$-round measuring co-strategy}}\\
&= \max \Set{\inner{T_0}{S_a}: T_0 \textrm{ is an $r$-round non-measuring co-strategy}}
\end{align*}
The proposition follows from Theorem \ref{theorem:inner-product} (\theoreminnerproduct).
\end{proof}

%Indeed, a one-round measuring strategy whose output space has dimension one is just a standard quantum measurement;
%the strategy $r$-norm reduces to the operator norm in this context, and so Proposition \ref{prop:max-prob-norm} agrees with our previous observation concerning maximum output probabilities of quantum measurements.
%In this sense,
Thus, the strategy $r$-norm and its dual are to measuring strategies and co-strategies as the operator norm is to measurements---a measure of maximum output probability.
Juxtaposing these two interpretations, we see that both the norms $\snorm{\cdot}{r}$ and $\snorm{\cdot}{r}^*$ act as \emph{both} a trace norm \emph{and} an operator norm in the appropriate contexts.

\subsubsection{The strategy $1$-norm agrees with the diamond norm}
%======================================================================

As suggested by the notation $\snorm{\cdot}{r}$, this new norm agrees with the diamond norm on Hermitian-preserving super-operators for the case $r=1$.
The proof of this fact is a simple exercise that serves as a convenient introduction to some of the techniques that will be employed later in this chapter.

\def\propdnormgen{Agreement with the diamond norm}
\begin{proposition}[\propdnormgen]
\label{prop:dnorm-gen}

For every Hermitian-preserving super-operator $\Phi$ it holds that \( \dnorm{\Phi}=\snorm{\Phi}{1} \).

\end{proposition}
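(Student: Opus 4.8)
The plan is to show that $\dnorm{\Phi}$ and $\snorm{\Phi}{1}$ are both equal to the common value $\max \inner{P_0-P_1}{(\Phi\ot\idsup{\cW})(\rho)}$, the maximum ranging over complex Euclidean spaces $\cW$, density operators $\rho\in\pos{\cX\ot\cW}$, and two-outcome measurements $\set{P_0,P_1}\subset\pos{\cY\ot\cW}$ (here I write $\cX=\cX_1$, $\cY=\cY_1$, so $\Phi:\lin{\cX}\to\lin{\cY}$). For the diamond norm this is essentially the characterization of $\dnorm{\cdot}$ on Hermitian-preserving super-operators recalled earlier in this section. For the strategy $1$-norm it will follow from the operational picture of $1$-round measuring co-strategies combined with Theorem \ref{theorem:inner-product} (\theoreminnerproduct).

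First I would observe that a $1$-round measuring co-strategy for input space $\cX$ and output space $\cY$ is exactly the following data: a memory space $\cW$, a density operator $\rho\in\pos{\cX\ot\cW}$, and a two-outcome measurement $\set{P_0,P_1}$ on $\cY\ot\cW$ — the co-strategy prepares $\rho$, sends the $\cX$-register, receives a reply on $\cY$, and measures $\cY\ot\cW$. (Definition \ref{def:naive-co-strategy} with $r=1$ permits an extra channel $\Psi_1$ before the measurement, but this may be absorbed into the measurement, so without loss of generality $\Psi_1$ is the identity channel.) Let $\set{T_0,T_1}$ denote the operators of this co-strategy under the new formalism. Running it against a quantum operation $\Phi:\lin{\cX}\to\lin{\cY}$, regarded as a $1$-round non-measuring strategy with operator $\jam{\Phi}$, yields outcome $a$ with probability $\inner{P_a}{(\Phi\ot\idsup{\cW})(\rho)}$ directly, and with probability $\inner{T_a}{\jam{\Phi}}$ by Theorem \ref{theorem:inner-product}; hence $\inner{T_a}{\jam{\Phi}}=\inner{P_a}{(\Phi\ot\idsup{\cW})(\rho)}$ for every quantum operation $\Phi$.

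The hard part is extending this last identity from quantum operations to arbitrary Hermitian-preserving $\Phi$, which is needed since $\jam{\Phi}$ is not itself a strategy in general. The obstruction is that the trace-preserving completely positive maps do not span the Hermitian-preserving super-operators (differences of their Choi operators have vanishing partial trace over $\cY$). The resolution: both sides of the identity are real-linear in the Hermitian operator $\jam{\Phi}$, and the identity survives verbatim for every completely positive $\Phi$, since for $r=1$ the proof of Theorem \ref{theorem:inner-product} is a purely algebraic manipulation of a positive semidefinite decomposition $\jam{\Phi}=\sum_i\col{C_i}\row{C_i}$ that never invokes trace-preservation (alternatively, one verifies directly by ``index gymnastics'' that $T_a=\jam{\Xi_a^*}$ is the Choi operator of the linear functional $\Phi\mapsto\inner{P_a}{(\Phi\ot\idsup{\cW})(\rho)}$). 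Since the Choi operators of completely positive maps do span $\her{\cY\ot\cX}$ over $\Real$, the identity then holds for every Hermitian-preserving $\Phi$.

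Granting the identity, $\snorm{\Phi}{1}=\max_{\set{T_0,T_1}}\inner{T_0-T_1}{\jam{\Phi}}=\max_{\cW,\rho,\set{P_0,P_1}}\inner{P_0-P_1}{(\Phi\ot\idsup{\cW})(\rho)}$, the first maximum being attained because the set of $1$-round measuring co-strategies is compact by Propositions \ref{prop:measure-sum} and \ref{prop:convexity}. To finish I would sandwich the right-hand side: for each $\cW$, since $(\Phi\ot\idsup{\cW})(\rho)$ is Hermitian we have $\inner{P_0-P_1}{(\Phi\ot\idsup{\cW})(\rho)}\le\tnorm{(\Phi\ot\idsup{\cW})(\rho)}\le\tnorm{\Phi\ot\idsup{\cW}}\le\dnorm{\Phi}$, using $\tnorm{\rho}=1$ and the definition of the diamond norm; on the other hand, restricting to $\dim(\cW)\le\dim(\cX)$ the maximum already equals $\dnorm{\Phi}$ by the quoted characterization of the diamond norm on Hermitian-preserving super-operators. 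Hence $\snorm{\Phi}{1}=\dnorm{\Phi}$, as claimed. I expect the extension in the third paragraph to be the only non-routine point; everything else is bookkeeping with the new formalism and with Theorem \ref{theorem:inner-product}.
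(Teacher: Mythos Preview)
Your proof is correct and follows essentially the same route as the paper: both establish the key identity
\[
\inner{T_0-T_1}{\jam{\Phi}}=\inner{P_0-P_1}{(\Phi\ot\idsup{\cW})(\rho)}
\]
for every Hermitian-preserving $\Phi$ by reducing to Theorem~\ref{theorem:inner-product} on positive semidefinite pieces, and then read off the equality of the two norms. The only difference is in how the extension beyond quantum operations is carried out. You argue that the proof of Theorem~\ref{theorem:inner-product} is purely algebraic and therefore applies verbatim to any completely positive $\Phi$, after which linearity and the fact that $\pos{\cY\ot\cX}$ spans $\her{\cY\ot\cX}$ finish the job. The paper instead avoids re-opening Theorem~\ref{theorem:inner-product}: it takes the Jordan decomposition $\jam{\Phi}=R^+-R^-$, scales by $\varepsilon=\bigl(\dim(\cY)\max\{\norm{R^+},\norm{R^-}\}\bigr)^{-1}$, and completes $\varepsilon R^+,\varepsilon R^-$ to a genuine three-outcome measuring strategy $\{S_0,S_1,S_2\}$ with $S_2=\frac{1}{\dim(\cY)}I-\varepsilon R^+-\varepsilon R^-$. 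Theorem~\ref{theorem:inner-product} then applies as a black box to $\{S_a\}$ against $\{T_b\}$, and the $\varepsilon$'s cancel. The paper's device is slightly cleaner in that it never leaves the formalism of strategies; your device is slightly more conceptual in that it isolates exactly which hypothesis (trace-preservation) is inessential. Either way the content is the same.
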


\begin{proof}

The proposition is trivially true for $\Phi=0$, so we assume $\Phi\neq 0$ throughout.
We also assume that $\Phi$ has the form $\Phi:\lin{\cX}\to\lin{\cY}$.

The proposition is a simple consequence of Theorem \ref{theorem:inner-product} that is somewhat complicated by the fact that it must be proven for \emph{all} Hermitian-preserving super-operators---not just those which represent strategies.
Strictly speaking, Theorem \ref{theorem:inner-product} applies only to measuring strategies and co-strategies and therefore cannot be directly applied to an arbitrary Hermitian-preserving super-operator $\Phi$.
In order to use Theorem \ref{theorem:inner-product}, we first decompose $\Phi$ into a linear combination of measuring strategy elements.

%As $\Phi$ is Hermitian-preserving, it holds that $\jam{\Phi}$ is a Hermitian operator.
Toward that end, we note that the Hermitian-preserving property of $\Phi$ implies that $\jam{\Phi}$ is a Hermitian operator.
In particular, it has a Jordan decomposition \[ \jam{\Phi}=R^+-R^- \] where the operators
$R^+,R^-\in\pos{\cY\ot\cX}$ are positive semidefinite and act on orthogonal subspaces.
Let $\varepsilon$ be a positive real number with the property that $\varepsilon R^+$ and $\varepsilon R^-$ are both elements of some three-outcome one-round measuring strategy $\set{S_0,S_1,S_2}$ for the input space $\cX$ and output space $\cY$.
An easy way to accomplish this is to take 
\begin{align*}
\varepsilon &= \frac{1}{\dim(\cY)\max\set{\norm{R^+},\norm{R^-}}},\\
(S_0,S_1,S_2) &= \textstyle \Pa{\varepsilon R^+,\,\varepsilon R^-,\, \frac{1}{\dim(\cY)}I_{\cY\ot\cX}-\varepsilon R^+-\varepsilon R^-}.
%S_0 &= \varepsilon R^+ \\
%S_1 &= \varepsilon R^- \\
%S_2 &= \frac{1}{\dim(\cY)}I_{\cY\ot\cX}-S_0-S_1
\end{align*}
For any one-round measuring co-strategy $\set{T_0,T_1}$ for the input space $\cX$ and output space $\cY$, Theorem \ref{theorem:inner-product} tells us that the probability with which an interaction between $\set{S_a}$ and $\set{T_b}$ yields measurement outcomes $(a,b)$ is given by
\[
  %\Pr[\textrm{outcomes $(a,b)$}]
  \Pr[(a,b)]
  = \inner{T_b}{S_a}.
\]
In particular, \[ \Pr[(0,b)]-\Pr[(1,b)] = \inner{T_b}{S_0-S_1} = \varepsilon \inner{T_b}{\jam{\Phi}}. \]

%Similarly, let $(\Psi,\set{Q_a})$ be an operational description of the strategy $\set{S_a}$, so that 
%\begin{eqnarray*}
%\Psi &:& \lin{\cX}\to\lin{\cY\ot\cZ},\\
%\set{Q_a} &\subset& \pos{\cZ}
%\end{eqnarray*}
%for some space $\cZ$.
Similarly,
let $(\Psi,\set{Q_a})$ be an operational description of the strategy $\set{S_a}$ and
let $(\rho,\set{P_b})$ be an operational description of the co-strategy $\set{T_b}$.
%, so that
These objects take the form
\begin{align*}
\Psi &: \lin{\cX}\to\lin{\cY\ot\cZ} & \rho &\in \pos{\cX\ot\cW} \\
\set{Q_a} &\subset \pos{\cZ} & \set{P_b} &\subset \pos{\cW}
\end{align*}
%\begin{eqnarray*}
%\Psi &:& \lin{\cX}\to\lin{\cY\ot\cZ}\\
%\set{Q_a} &\subset& \pos{\cZ}\\
%\rho &\in& \pos{\cX\ot\cW}\\
%\set{P_b} &\subset& \pos{\cW}
%\end{eqnarray*}
for some choice of spaces $\cW$ and $\cZ$.
%For any one-round measuring co-strategy whose operational description is $(\rho,\set{P_b})$,
Basic quantum formalism tells us that the probability with which an interaction between $(\Psi,\set{Q_a})$ and $(\rho,\set{P_b})$ yields measurement outcomes $(a,b)$ is given by
\[
  %\Pr[\textrm{outcomes $(a,b)$}]
  \Pr[(a,b)]
  = \inner{Q_a\ot P_b}{\Pa{\Psi\ot\idsup{\cW}}(\rho)}.
\]

Let $\Phi^\pm$ be the super-operators with $\jam{\Phi^\pm}=R^\pm$, so that \( \Phi = \Phi^+ - \Phi^- \).
%Moreover, for every $X\in\lin{\cX}$ we have
%and note that
According to
%Definition \ref{def:naive-strategy} (\defnaivestrategy),
Definition \ref{def:m-strategy} (\defmstrategy),
the actions of $\Phi^\pm$ on any $X\in\lin{\cX}$ are given by
\begin{align*}
%\varepsilon \Phi^+(X) &= \Ptr{\cZ}{\Pa{Q_0\ot I_\cY} \Psi(X)},\\
%\varepsilon \Phi^-(X) &= \Ptr{\cZ}{\Pa{Q_1\ot I_\cY} \Psi(X)}
\Phi^+ &: \textstyle X \mapsto \frac{1}{\varepsilon} \Ptr{\cZ}{\Pa{Q_0\ot I_\cY} \Psi(X)},\\
\Phi^- &: \textstyle X \mapsto \frac{1}{\varepsilon} \Ptr{\cZ}{\Pa{Q_1\ot I_\cY} \Psi(X)}.
\end{align*}
%for all $X\in\lin{\cX}$.
In particular,
\begin{align*}
%\Pr[\textrm{outcomes $(0,b)$}]
\Pr[(0,b)]
%= \inner{Q_0\ot P_b}{\Pa{\Psi\ot\idsup{\cW}}(\rho)}
&= \varepsilon \inner{P_b}{\Pa{\Phi^+ \ot \idsup{\cW}}(\rho)},\\
%\Pr[\textrm{outcomes $(1,b)$}]
\Pr[(1,b)]
%= \inner{Q_1\ot P_b}{\Pa{\Psi\ot\idsup{\cW}}(\rho)}
&= \varepsilon \inner{P_b}{\Pa{\Phi^- \ot \idsup{\cW}}(\rho)}.
\end{align*}
and hence
\[
  %\Pr[\textrm{outcomes $(0,b)$}] - \Pr[\textrm{outcomes $(1,b)$}]
  \Pr[(0,b)] - \Pr[(1,b)]
  = \varepsilon \inner{P_b}{\Pa{\Phi \ot \idsup{\cW}}(\rho)}.
\]
Then by linearity we obtain
\begin{align*}
& \inner{P_0-P_1}{\Pa{\Phi\ot\idsup{\cW}}(\rho)}\\
%={}& \inner{P_0-P_1}{\Pa{\Phi^+\ot\idsup{\cW}}(\rho)} - \inner{P_0-P_1}{\Pa{\Phi^-\ot\idsup{\cW}}(\rho)}\\
={}& \textstyle \frac{1}{\varepsilon} \Pa{ \Pa{\Pr[(0,0)] - \Pr[(1,0)]} - \Pa{\Pr[(0,1)] - \Pr[(1,1)] } }\\
%={}& \frac{1}{\varepsilon} \inner{T_0-T_1}{S_0-S_1} \\
%={}&\inner{T_0-T_1}{R^+-R^-} \\
={}&\inner{T_0-T_1}{\jam{\Phi}}.
\end{align*}

%To see that $\dnorm{\Phi}\leq\snorm{\Phi}{1}$,
To prove the proposition,
first select the co-strategy $\set{T_0,T_1}$ so that its operational description $(\rho,\set{P_a})$ achieves the maximum in the definition of the diamond norm.
In this case, the above expression implies $\dnorm{\Phi}\leq\snorm{\Phi}{1}$.
For the reverse inequality, select a co-strategy $\set{T_0,T_1}$ that achieves the maximum in the definition of $\snorm{\Phi}{1}$.
In this case, the above expression implies $\dnorm{\Phi}\geq\snorm{\Phi}{1}$.
\end{proof}

\subsubsection{Extension to non-Hermitian-preserving super-operators?}
%======================================================================

According to Definition \ref{def:new-norm} (\defnewnorm), the norms $\snorm{\Phi}{r}$ and $\snorm{\Phi}{r}^*$ are defined only when $\Phi$ is a Hermitian-preserving super-operator.
Is there a natural extension of these norms to \emph{all} super-operators?
Ideally, such an extension would agree with Definition \ref{def:new-norm} on Hermitian-preserving super-operators and with the diamond norm when $r=1$.

What would such an extension look like?
Presumably, it would require a generalization of quantum strategies as they appear in Definitions \ref{def:naive-strategy} and \ref{def:strategy}.
For example, an $r$-tuple $(\Phi_1,\dots,\Phi_r)$ might denote an operational representation of a generalized strategy if and only if each of the super-operators $\Phi_i$ preserves trace norm.

Indeed, a generalization of Theorem \ref{theorem:inner-product} (\theoreminnerproduct) would probably be required in order to reason about these generalized strategies and define a generalized strategy $r$-norm.
Moreover, in the next section we will see that several basic properties of the strategy $r$-norm are established via Proposition \ref{prop:polar} (\proppolar) and ultimately Theorem \ref{thm:char} (\thmchar).
Presumably, a proper generalization of the strategy $r$-norm would also require corresponding generalizations of these two results.
Essentially, much of the formalism developed so far in this thesis for quantum strategies would need to be re-derived in a more general setting.

What use could such a generalization find in quantum information theory?
In looking to the diamond norm for inspiration, we find few examples of its use on non-Hermitian-preserving super-operators in a quantum information context.
Perhaps the most notable such use is the ``maximum output fidelity'' characterization of the diamond norm:

\begin{fact}[Maximum output fidelity characterization of the diamond norm]
\label{fact:dnorm-fidelity-char}

Let $\Phi:\lin{\cX}\to\lin{\cY}$ be an arbitrary super-operator.
Let $\cZ$ be a space and let $A,B:\cX\to\cY\ot\cZ$ be operators with $\Phi:X\mapsto\ptr{\cZ}{AXB^*}$.
Then
\[
  \dnorm{\Phi}
%  = \fid_\max(\Psi_A,\Psi_B)
  = \max_{\rho,\sigma}\fid\Pa{\Psi_A(\rho),\Psi_B(\sigma)}
\]
where the super-operators $\Psi_A,\Psi_B:\lin{\cX}\to\lin{\cZ}$ are given by
\begin{align*}
\Psi_A &: X\mapsto\ptr{\cY}{AXA^*},\\
\Psi_B &: X\mapsto\ptr{\cY}{BXB^*}
\end{align*}
and $\fid(\rho,\sigma)$ denotes the fidelity function on pairs of quantum states
and the maximum is taken over all quantum states $\rho,\sigma\in\pos{\cX}$.
\end{fact}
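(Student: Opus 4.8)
The plan is to reduce both sides of the asserted identity to the same maximization and then invoke Uhlmann's theorem.

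\textbf{Step 1: reduce the diamond norm to rank-one inputs.} Using the fact recalled above that the supremum defining $\dnorm{\Phi}$ is attained by a space $\cW$ with $\dim(\cW)=\dim(\cX)$ \cite{Kitaev97,AharonovK+98}, together with the triangle inequality applied to a singular value decomposition of a norm-one optimal input, one obtains
\[ \dnorm{\Phi} = \max\Set{ \Tnorm{\Pa{\Phi\ot\idsup{\cW}}(uv^*)} : u,v\in\cX\ot\cW,\ \norm{u}=\norm{v}=1 } \]
with $\dim(\cW)=\dim(\cX)$. Substituting $\Phi:X\mapsto\ptr{\cZ}{AXB^*}$ gives $\Pa{\Phi\ot\idsup{\cW}}(uv^*)=\ptr{\cZ}{pq^*}$ for the vectors $p=(A\ot I_\cW)u$ and $q=(B\ot I_\cW)v$ in $\cY\ot\cZ\ot\cW$. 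Applying trace-norm/operator-norm duality, $\tnorm{L}=\max\set{\abs{\inner{K}{L}}:\norm{K}\leq 1}$ with the maximum attained at a unitary $K\in\lin{\cY\ot\cW}$, and rewriting $\inner{K}{\ptr{\cZ}{pq^*}}=\inner{q}{(K^*\ot I_\cZ)p}$ (with $K^*$ acting on $\cY\ot\cW$ and trivially on the middle factor $\cZ$), exhibits $\dnorm{\Phi}$ as a joint maximum over $u$, $v$, and the unitary $K$.

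\textbf{Step 2: absorb the auxiliary unitaries.} Each unit vector $u\in\cX\ot\cW$ equals $\col{M}$ for some $M:\cW\to\cX$ with $\fnorm{M}=1$; its polar decomposition $M=\sqrt{\rho}\,U_0$ with $\rho=MM^*$ a density operator on $\cX$ and $U_0:\cW\to\cX$ unitary (identifying $\cW$ with $\cX$) gives $u=(I_\cX\ot U_0^\trans)\col{\sqrt{\rho}}$ by the properties of the operator--vector isomorphism, and $(\rho,U_0)$ ranges over all state/unitary pairs as $M$ ranges over Frobenius-unit operators. Doing likewise for $v$ with $(\sigma,V_0)$, the identity $(A\ot I_\cW)(I_\cX\ot U_0^\trans)\col{\sqrt{\rho}}=(I_{\cY\ot\cZ}\ot U_0^\trans)\col{A\sqrt{\rho}}$ shows that $U_0^\trans$ and $V_0^\trans$ act only on the $\cW$ factor and can be absorbed into $K$. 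Collapsing the maximization in this way produces
\[ \dnorm{\Phi} = \max\Set{ \abs{\Inner{\col{A\sqrt{\rho}}}{\widetilde{W}\,\col{B\sqrt{\sigma}}}} : \rho,\sigma\in\pos{\cX} \text{ states},\ W \text{ unitary on } \cY\ot\cX }, \]
where $\widetilde{W}$ denotes $W$ acting on $\cY\ot\cX$ and the identity on $\cZ$, and $\col{A\sqrt{\rho}},\col{B\sqrt{\sigma}}\in\cY\ot\cZ\ot\cX$.

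\textbf{Step 3: recognize Uhlmann's theorem.} The identity $\ptr{\cX}{\col{C}\row{D}}=CD^*$ gives $\ptr{\cY\ot\cX}{\col{A\sqrt{\rho}}\row{A\sqrt{\rho}}}=\ptr{\cY}{A\rho A^*}=\Psi_A(\rho)$, so $\col{A\sqrt{\rho}}$ is a purification of $\Psi_A(\rho)$ with purifying system $\cY\ot\cX$; likewise $\col{B\sqrt{\sigma}}$ purifies $\Psi_B(\sigma)$ with the same purifying system, of dimension $\dim(\cY)\dim(\cX)$, which dominates $\rank(\Psi_A(\rho))$ and $\rank(\Psi_B(\sigma))$ and so is large enough. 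Uhlmann's theorem then identifies the inner maximum over $W$ with $\fid\Pa{\Psi_A(\rho),\Psi_B(\sigma)}$, and taking the maximum over $\rho,\sigma$ completes the proof. The main obstacle I anticipate is purely bookkeeping: the vectors live naturally in $\cY\ot\cZ\ot\cW$ while the partial trace over $\cZ$ and the unitary $K$ act on $\cY\ot\cW$, so $\cZ$ sits in the middle of every tensor product and the placements must be tracked carefully, as must the transposes and conjugates introduced by the identity $(P\ot Q^\trans)\col{X}=\col{PXQ}$. The one genuinely substantive input beyond standard facts about the diamond norm is the correct form of Uhlmann's theorem, together with the structural observation that the diamond norm's auxiliary space $\cW\cong\cX$ \emph{together with} the output space $\cY$ is exactly the purifying system appearing in Uhlmann's formula --- which is precisely where the complementary roles of $\Phi$ (tracing out $\cZ$) and $\Psi_A,\Psi_B$ (tracing out $\cY$) are used.
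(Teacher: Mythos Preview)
The paper does not actually prove this statement. It is stated as a \emph{Fact}, and immediately afterwards the paper writes: ``This characterization has been employed in the study of quantum interactive proofs \cite{RosgenW05} and a proof may be found in Ref.~\cite{KitaevS+02}.'' So there is no ``paper's own proof'' to compare against; the result is imported from Kitaev--Shen--Vyalyi.

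That said, your argument is correct and is essentially the standard proof of this identity (the one you would find in \cite{KitaevS+02} or in Watrous's lecture notes). The three steps---reduction to rank-one inputs via the singular value decomposition, reparametrization of unit vectors $u,v\in\cX\ot\cW$ by state/unitary pairs via polar decomposition and the identity $(P\ot Q^\trans)\col{X}=\col{PXQ}$, and recognition of the resulting expression as the Uhlmann maximum---are exactly the right moves, and you have tracked the tensor-factor bookkeeping correctly. The observation that $\col{A\sqrt{\rho}}$ purifies $\Psi_A(\rho)$ with purifying system $\cY\ot\cX$ follows cleanly from the paper's identity $\ptr{\cX}{\col{C}\row{D}}=CD^*$, and since both purifications already live in the common space $\cY\ot\cZ\ot\cX$, Uhlmann's theorem applies without any dimension hypothesis on $\cZ$ versus $\cY\ot\cX$. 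Your anticipated ``obstacle'' about the middle placement of $\cZ$ is, as you suspected, purely notational; the unitaries $U_0^\trans$, $V_0^\trans$, and $K$ all act trivially on $\cZ$, so they combine into a single unitary $W$ on $\cY\ot\cX$ without obstruction.
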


This characterization has been employed in the study of quantum interactive proofs \cite{RosgenW05} and a proof may be found in Ref.~\cite{KitaevS+02}.
Could this characterization be generalized so as to yield an interesting quantum information theoretic application of the strategy $r$-norm to non-Hermitian-preserving super-operators?
%Perhaps Fact \ref{fact:dnorm-fidelity-char} could be generalized to yield a characterization of a suitably defined generalization of the strategy $r$-norm.
%Would such a generalization have any use in quantum information theory?

While
%such an endeavour
an endeavor to generalize the strategy $r$-norm
would no doubt be an interesting mathematical exercise, its applicability to quantum information is not readily apparent.
In this thesis, 
%the extent of our studies in this direction is reached by our
%we content ourselves with a
our foray into the realm of mathematical curiosity of this flavor is limited to the
study of the dual of the diamond norm in Section \ref{sec:diamond-dual}.

%======================================================================
\section{Distinguishing convex sets of strategies}
%======================================================================

In this section we generalize the guessing game example from the previous section from a problem of distinguishing \emph{individual} states, operations, or strategies to distinguishing \emph{convex sets} of states, operations, or strategies.

Specifically, suppose two convex sets $\bA_0,\bA_1$ of quantum states are fixed.
Suppose that Bob arbitrarily selects $\rho_0\in\bA_0$ and $\rho_1\in\bA_1$ and then selects $\rho\in\set{\rho_0,\rho_1}$ uniformly at random and gives Alice a quantum system prepared in state $\rho$.
Alice's goal is to correctly guess whether $\rho\in\bA_0$ or $\rho\in\bA_1$ based upon the outcome of a measurement she conducts on $\rho$.
It is clear that this problem is a generalization of that from the previous section, as the original problem is recovered by considering the singleton sets $\bA_0=\set{\rho_0}$ and $\bA_1=\set{\rho_1}$.

This problem of distinguishing convex sets of states was solved in Ref.~\cite{GutoskiW05}, wherein it was shown that there exists a \emph{single} measurement $\set{P_0,P_1}$ that depends only upon the sets $\bA_0,\bA_1$ with the property that \emph{any} pair $\rho_0\in\bA_0$, $\rho_1\in\bA_1$ may be correctly distinguished with probability at least
\[ \frac{1}{2} + \frac{1}{4} \min_{\sigma_a\in\bA_a} \Tnorm{\sigma_0-\sigma_1}. \]
In particular, even if two distinct pairs $\rho_0,\rho_1$ and $\rho_0',\rho_1'$ both minimize the trace distance between $\bA_0$ and $\bA_1$ then \emph{both} pairs may be optimally distinguished by the \emph{same} measurement $\set{P_0,P_1}$.

What about distinguishing convex sets of quantum operations or strategies?
Nothing was known of either problem prior to the work of the present thesis.
In this section, we prove that the distinguishability result for convex sets of states extends unhindered to operations and strategies.
In particular, we prove that two convex sets $\bS_0,\bS_1$ of $r$-round strategies can be correctly distinguished with probability at least
\[ \frac{1}{2} + \frac{1}{4} \min_{S_a\in\bS_a} \Snorm{S_0-S_1}{r}.\]
It then follows trivially that two convex sets $\bT_0,\bT_1$ of $r$-round co-strategies can be correctly distinguished with probability at least
\[ \frac{1}{2} + \frac{1}{4} \min_{T_a\in\bT_a} \Snorm{T_0-T_1}{r}^*.\]
As a special case, it holds that two convex sets $\mathbf{\Phi}_0,\mathbf{\Phi}_1$ of quantum operations can be distinguished with probability at least
\[ \frac{1}{2} + \frac{1}{4}\min_{\Phi_a\in\mathbf{\Phi}_a} \Dnorm{\Phi_0-\Phi_1}. \]

\subsubsection{Properties of the strategy $r$-norm}
%======================================================================

Our proof of the distinguishability of convex sets of strategies is essentially a copy of the proof appearing in Ref.~\cite{GutoskiW05}
%of the distinguishability of convex sets of states
with states and measurements replaced by strategies and co-strategies and the trace and operator norms replaced with the strategy $r$-norm and its dual.

To ensure correctness of the new proof, we must identify the relevant properties of the trace and operator norms employed in Ref.~\cite{GutoskiW05} and then establish suitable analogues of those properties for the new norms $\snorm{\cdot}{r}$ and $\snorm{\cdot}{r}^*$.
The relevant properties of the trace and operator norms are:
\begin{enumerate}

\item \label{item:op-tr-dual}

The trace norm and operator norm are \emph{dual} to each other, meaning that
\begin{align*}
\tnorm{X} &= \max_{\norm{Y}\leq 1} \abs{\inner{Y}{X}},\\
\norm{X} &= \max_{\tnorm{Y}\leq 1} \abs{\inner{Y}{X}}
\end{align*}
for all operators $X$.

\item \label{item:unit-ball}

If $X$ is Hermitian then
\[ \norm{X}\leq 1 \iff \abs{X}\preceq I. \]
%$\norm{X}\leq 1$ if and only if $\abs{X}\preceq I$.
Here $\abs{X} \defeq X^+ + X^-$ where $X=X^+-X^-$ is a Jordan decomposition of $X$.

\end{enumerate}
Property \ref{item:unit-ball} is really just a circuitous way of saying that a Hermitian operator has operator norm at most $1$ if and only if all its eigenvalues are no larger than $1$ in absolute value.
Compared to the duality of the trace and operator norms (property \ref{item:op-tr-dual}), this observation is not deep or significant in and of itself.
It is only phrased as such so as to highlight the forthcoming generalization to the strategy $r$-norm and its dual.

Our generalization of property \ref{item:op-tr-dual} states that the norms $\snorm{\cdot}{r}$ and $\snorm{\cdot}{r}^*$ are dual to each other.
Because our proof of this fact relies upon the generalization of property \ref{item:unit-ball}, we first establish the latter.

\def\propunitball{Unit ball of the strategy $r$-norms}
\subsubsection{\propunitball}
%======================================================================

\begin{proposition}[\propunitball]
\label{prop:unit-ball}

For any Hermitian-preserving super-operator $\Phi:\lin{\kprod{\cX}{1}{r}}\to\lin{\kprod{\cY}{1}{r}}$ it holds that
%$\snorm{\Phi}{r}\leq 1$ if and only if $\abs{\jam{\Phi}}\preceq S$
\[ \snorm{\Phi}{r}\leq 1 \iff \abs{\jam{\Phi}}\preceq S \]
for some $r$-round non-measuring strategy $S$ for input spaces $\cX_1,\dots,\cX_r$ and output spaces $\cY_1,\dots,\cY_r$.
Similarly,
%$\snorm{\Phi}{r}^*\leq 1$ if and only if $\abs{\jam{\Phi}}\preceq T$
\[ \snorm{\Phi}{r}^*\leq 1 \iff \abs{\jam{\Phi}}\preceq T \]
for some $r$-round non-measuring co-strategy $T$ for input spaces $\cX_1,\dots,\cX_r$ and output spaces $\cY_1,\dots,\cY_r$.

\end{proposition}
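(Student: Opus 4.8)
The two statements are symmetric — the co-strategy version follows from the strategy version by relabelling input and output spaces (as in Lemma~\ref{lm:max-prob-switch} and the surrounding discussion) — so I would prove only the first and remark that the second follows. The key observation is that $\snorm{\Phi}{r}$ is, by Definition~\ref{def:new-norm}, a support-function of $\jam{\Phi}$ over differences $T_0-T_1$ of measuring co-strategy elements, and by Proposition~\ref{prop:measure-sum} the set of all such differences is exactly $\set{T_0 - T_1 : T_0 + T_1 \in \cst_r} = \subst[co] \cap (-\subst[co])$-type object; more precisely, $\set{T_0 - T_1}$ ranges over all Hermitian operators $X$ with $\abs{X} \preceq T$ for some $T \in \cst_r$. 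So the plan is: first, show that $\snorm{\Phi}{r} = s(\jam{\Phi} \mid \bD)$ where $\bD = \set{X \in \her{\cdots} : \abs{X} \preceq T \text{ for some } T \in \cst_r}$, and then invoke convex polarity (Proposition~\ref{prop:polar-items}) to convert the bound $s(\jam{\Phi} \mid \bD) \leq 1$ into the membership statement $\jam{\Phi} \in \bD^\circ$.

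First I would pin down the set of achievable $T_0 - T_1$. Given a measuring co-strategy $\set{T_0, T_1}$, Proposition~\ref{prop:measure-sum} says $T := T_0 + T_1$ is a non-measuring co-strategy, and both $T_0, T_1$ are positive semidefinite with $T_0, T_1 \preceq T$. Conversely, given any non-measuring co-strategy $T$ and any decomposition $X = X^+ - X^-$ (Jordan) with $X^+ + X^- = \abs{X} \preceq T$, setting $T_0 = X^+$, $T_1 = T - X^+$ (note $T_1 \succeq T - \abs{X} \succeq 0$ and $T_1 = X^- + (T - \abs{X}) \succeq 0$) gives a measuring co-strategy with $T_0 - T_1 = X^+ - (T - X^+)$... here I need to be slightly careful: $T_0 - T_1 = 2X^+ - T \neq X$ in general. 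The cleaner route: I want $\set{T_0 - T_1 : \set{T_0,T_1} \text{ measuring co-strategy}}$, and since $\jam{\Phi}$ is Hermitian, $\inner{T_0 - T_1}{\jam{\Phi}}$ is maximized by pushing as much weight as possible onto the positive and negative eigenspaces of $\jam{\Phi}$. So I would instead argue directly: for Hermitian $\jam{\Phi}$ with Jordan decomposition $\jam{\Phi} = P - Q$,
\[
\snorm{\Phi}{r} = \max\Set{\inner{T_0}{P} + \inner{T_1}{Q} : \set{T_0,T_1} \text{ measuring co-strategy}} - \text{(cross terms vanish at optimum)},
\]
which by Proposition~\ref{prop:measure-sum} equals $\max\Set{\inner{T}{\abs{\jam{\Phi}}} : T \in \cst_r}$ after noting that the optimal $T_0, T_1$ are the restrictions of a co-strategy $T$ to the supports of $P$ and $Q$ respectively — this uses that $\subst[co]$ is downward closed, i.e. $0 \preceq T' \preceq T \in \cst_r$ implies $T'$ is a co-strategy element (true by Proposition~\ref{prop:measure-sum}). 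Hence $\snorm{\Phi}{r} = s(\abs{\jam{\Phi}} \mid \cst_r) = s(\abs{\jam{\Phi}} \mid \csubst_r)$.

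Now by Proposition~\ref{prop:polar} (\proppolar), $(\csubst_r)^\circ = \set{X : X \preceq Q \text{ for some } Q \in \st_r}$. Since $\abs{\jam{\Phi}} \succeq 0$ and the polar computation preserves the closed-convex-contains-origin hypotheses, the identity $s(\cdot \mid \csubst_r) = g(\cdot \mid (\csubst_r)^\circ)$ from Proposition~\ref{prop:polar-items} gives $\snorm{\Phi}{r} \leq 1 \iff \abs{\jam{\Phi}} \in (\csubst_r)^\circ \iff \abs{\jam{\Phi}} \preceq S$ for some $r$-round non-measuring strategy $S$ — using also that $(\csubst_r)^\circ$ is closed so the gauge $\leq 1$ condition is exactly membership. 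That is the claimed equivalence. The main obstacle I anticipate is the first step — rigorously justifying that the maximum over measuring co-strategies $\set{T_0, T_1}$ of $\inner{T_0 - T_1}{\jam{\Phi}}$ collapses to $\max_{T \in \cst_r}\inner{T}{\abs{\jam{\Phi}}}$; this requires the downward-closure property $\sub{\cst_r} \cap \pos{} \subseteq \set{\text{co-strategy elements}}$ and a clean argument that at the optimum the positive part of $\jam{\Phi}$ is "covered" by $T_0$ and the negative part by $T_1$ with no overlap wasted. Everything after that is a mechanical application of the polarity machinery already developed in this chapter.
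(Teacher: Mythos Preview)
Your approach is essentially the paper's: it factors the proof into Lemma~\ref{lemma:unit-ball-technical} (your step~1, phrased there as the equivalence $\snorm{\Phi}{r}\leq 1 \iff \inner{\abs{\jam{\Phi}}}{T}\leq 1$ for all $T\in\cst_r$) and Lemma~\ref{lemma:polar-restate} (your step~2, which is just a restatement of Proposition~\ref{prop:polar} without the polar notation). The obstacle you flag in step~1 is resolved by the \emph{asymmetric} construction $T_0 = \Pi^+ T \Pi^+$, $T_1 = T - T_0$ (with $\Pi^+$ the projector onto the support of $P$), which---unlike the symmetric ``restrictions'' $\Pi^\pm T\Pi^\pm$---guarantees $T_0+T_1=T\in\cst_r$ exactly while still killing the cross terms $\inner{T_0}{Q}=\inner{T_1}{P}=0$.
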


Proposition \ref{prop:unit-ball} follows immediately from the following two lemmas.

\begin{lemma}
\label{lemma:unit-ball-technical}

For any Hermitian-preserving super-operator $\Phi:\lin{\kprod{\cX}{1}{r}}\to\lin{\kprod{\cY}{1}{r}}$ it holds that
\begin{align*}
\snorm{\Phi}{r}\leq 1 &\iff \inner{\abs{\jam{\Phi}}}{T}\leq 1 \textrm{ for every $r$-round non-measuring co-strategy $T$},\\
\snorm{\Phi}{r}^*\leq 1 &\iff \inner{\abs{\jam{\Phi}}}{S}\leq 1 \textrm{ for every $r$-round non-measuring strategy $S$}.
\end{align*}

\end{lemma}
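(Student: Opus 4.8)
The plan is to prove both displayed equivalences; the second is established by the same argument with the roles of strategies and co-strategies interchanged (appealing to Theorem~\ref{theorem:inner-product} and Proposition~\ref{prop:measure-sum}, which are symmetric in strategies and co-strategies), so I concentrate on the statement for $\snorm{\Phi}{r}$. Throughout I use the Jordan decomposition $\jam{\Phi}=R^{+}-R^{-}$ with $R^{+},R^{-}\succeq 0$ and $R^{+}R^{-}=0$, so that $\abs{\jam{\Phi}}=R^{+}+R^{-}$, and I record that by Proposition~\ref{prop:measure-sum} a pair $\set{T_0,T_1}$ of positive semidefinite operators is a measuring co-strategy exactly when $T_0+T_1$ is a non-measuring co-strategy. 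For the ``$\Leftarrow$'' implication, assume $\inner{\abs{\jam{\Phi}}}{T}\leq 1$ for every $r$-round non-measuring co-strategy $T$. Given a measuring co-strategy $\set{T_0,T_1}$, set $T=T_0+T_1$, which lies in $\cst_r$. Expanding $\inner{T_0-T_1}{\jam{\Phi}}$ and discarding the nonnegative cross terms $\inner{T_0}{R^{-}}$ and $\inner{T_1}{R^{+}}$ (nonnegative since all the operators involved are positive semidefinite) gives $\inner{T_0-T_1}{\jam{\Phi}}\leq\inner{T_0}{R^{+}}+\inner{T_1}{R^{-}}\leq\inner{T}{\abs{\jam{\Phi}}}\leq 1$; maximizing over $\set{T_0,T_1}$ yields $\snorm{\Phi}{r}\leq 1$.

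The ``$\Rightarrow$'' implication is the substantive direction. Because $\snorm{\Phi}{r}$ and $T\mapsto\inner{\abs{\jam{\Phi}}}{T}$ are positively homogeneous in $\Phi$, it suffices to prove $\snorm{\Phi}{r}\geq\inner{\abs{\jam{\Phi}}}{T}$ for every non-measuring co-strategy $T$; combined with the bound of the previous paragraph this gives $\snorm{\Phi}{r}=\max_{T\in\cst_r}\inner{\abs{\jam{\Phi}}}{T}$, from which both equivalences follow. Concretely, I would try to produce, from a given $T\in\cst_r$, a measuring co-strategy whose signed difference pairs with $\jam{\Phi}$ to at least $\inner{T}{\abs{\jam{\Phi}}}$. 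The natural construction splits $T$ along the three mutually orthogonal subspaces $\supp(R^{+})$, $\supp(R^{-})$, and their common complement: if $T$ is block diagonal with respect to this decomposition, then a spectral refinement of its blocks is a multi-outcome measuring co-strategy by Proposition~\ref{prop:measure-sum}, and regrouping its outcomes according to which of $\supp(R^{+})$, $\supp(R^{-})$ they live in yields a two-outcome measuring co-strategy $\set{T_0,T_1}$ for which $\inner{T_0-T_1}{\jam{\Phi}}=\inner{T}{\abs{\jam{\Phi}}}$ exactly.

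The difficulty — and the step I expect to be the main obstacle — is that for a general $T$ the off-diagonal blocks relative to this decomposition obstruct the equality above. The natural fix is to replace $T$ by its pinching $\cD(T)$ with respect to the three subspaces: since $\abs{\jam{\Phi}}=R^{+}+R^{-}$ is already block diagonal there and $\cD$ is self-adjoint, $\inner{\cD(T)}{\abs{\jam{\Phi}}}=\inner{T}{\abs{\jam{\Phi}}}$, so the objective is unchanged, and $\cD(T)$ \emph{is} block diagonal. What then has to be checked is that $\cD(T)$ (or the resulting signed difference) can still be realized inside the measuring-co-strategy formalism, and this is delicate because pinching by the Jordan projections of $\jam{\Phi}$ need not preserve the co-strategy cone $\cst_r$ outright. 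I anticipate resolving this by induction on the number of rounds $r$, using the recursive characterization of Theorem~\ref{thm:char} to assemble the witnessing measuring co-strategy round by round (the base case $r=0$ is immediate, since $\cst_0=\set{1}$ and $\snorm{\Phi}{0}$ is just the absolute value of the scalar $\jam{\Phi}$). An alternative route is via convex polarity: by Proposition~\ref{prop:polar} the target ``$\inner{\abs{\jam{\Phi}}}{T}\leq 1$ for all $T\in\cst_r$'' is equivalent to ``$\abs{\jam{\Phi}}\preceq S$ for some non-measuring strategy $S$'', so one may instead try to deduce this domination directly from $\snorm{\Phi}{r}\leq 1$ together with Theorem~\ref{thm:max-prob}.
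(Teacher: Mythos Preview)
Your $\Leftarrow$ direction is correct and essentially identical to the paper's argument.

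For $\Rightarrow$ you never actually close the gap you yourself identify: your three-block pinching $\cD(T)$ need not be a co-strategy, and both proposed remedies (induction on $r$ via Theorem~\ref{thm:char}, and polarity plus Theorem~\ref{thm:max-prob}) are left as plans rather than carried out. So as written the proposal is incomplete on this half.

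The paper's route is much shorter and sidesteps the co-strategy-preservation worry entirely. Given a co-strategy $T$, set $T_0=\Pi^+ T\Pi^+$ with $\Pi^+$ the projection onto $\supp R^+$, and put $T_1=T-T_0$. Then $T_0+T_1=T$ is trivially the original co-strategy, so there is nothing to preserve; one checks $\inner{R^+}{T_1}=\inner{R^-}{T_0}=0$ directly from the orthogonality of $\supp R^+$ and $\supp R^-$, which yields $\inner{T_0-T_1}{\jam{\Phi}}=\inner{T}{\abs{\jam{\Phi}}}$ exactly. No three-block pinching, no induction, no polarity.

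That said, your caution is not entirely misplaced. The paper then asserts that $T_1=T-\Pi^+T\Pi^+$ is positive semidefinite (``As $T$ is positive semidefinite, so too are $T_0,T_1$''), but for a general $T\succeq 0$ and projection $\Pi^+$ this fails---take any rank-one $T=uu^*$ with $\Pi^+u\notin\{0,u\}$. So the off-diagonal-block obstacle you flagged reappears in the paper's argument in a different guise, and its one-line dismissal does not actually dispatch it.
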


%\begin{numberedlemma}{\ref{prop:polar}}[\proppolar---restated]
\begin{lemma}
\label{lemma:polar-restate}

For each positive semidefinite operator $X\in\pos{\kprod{\cY}{1}{r}\ot\kprod{\cX}{1}{r}}$ it holds that
\begin{align*}
& \inner{X}{T}\leq 1 \textrm{ for every $r$-round non-measuring co-strategy $T$} \\
\iff{}& X\preceq S \textrm{ for some $r$-round non-measuring strategy $S$},\\[2mm]
& \inner{X}{S}\leq 1 \textrm{ for every $r$-round non-measuring strategy $S$} \\
\iff{}& X\preceq T \textrm{ for some $r$-round non-measuring co-strategy $T$}.
\end{align*}

%\end{numberedlemma}
\end{lemma}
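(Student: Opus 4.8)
The plan is to obtain Lemma \ref{lemma:polar-restate} as a near-immediate corollary of Proposition \ref{prop:polar} (\proppolar). I would observe first that an $r$-round non-measuring co-strategy is, by definition, nothing but an element of $\cst_r$, so the left-hand side of the first asserted equivalence simply says that $X$ lies in the polar set $(\cst_r)^\circ$. Likewise the right-hand side says that $X$ lies in $\Set{ Y : Y\preceq S \textrm{ for some } S\in\st_r }$, which by Proposition \ref{prop:polar} is precisely $(\csubst_r)^\circ$. Thus the first equivalence will follow once I show that, for positive semidefinite $X$, membership in $(\cst_r)^\circ$ is equivalent to membership in $(\csubst_r)^\circ$.

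One direction of this auxiliary equivalence requires no hypothesis on $X$: since $\cst_r\subseteq\csubst_r$, item 1 of Proposition \ref{prop:polar-items} gives $(\csubst_r)^\circ\subseteq(\cst_r)^\circ$. For the reverse inclusion I would invoke positivity of $X$. Suppose $X\succeq 0$ satisfies $\inner{X}{Q}\leq 1$ for every $Q\in\cst_r$, and let $Z\in\csubst_r$ be arbitrary, so that $0\preceq Z\preceq Q$ for some $Q\in\cst_r$. Then $Q-Z$ is positive semidefinite, hence $\inner{X}{Q-Z}=\tr{X(Q-Z)}\geq 0$, and therefore $\inner{X}{Z}\leq\inner{X}{Q}\leq 1$. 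As $Z$ was arbitrary, $X\in(\csubst_r)^\circ$. Chaining this equivalence with Proposition \ref{prop:polar} yields the first equivalence of the lemma, and the second follows by the identical argument with the words ``strategy'' and ``co-strategy'' interchanged, using instead the relation $(\subst_r)^\circ=\Set{ Y : Y\preceq Q \textrm{ for some } Q\in\cst_r }$ from that proposition.

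I do not anticipate a genuine obstacle here: essentially all of the substance is packaged inside Proposition \ref{prop:polar}, which may be assumed, and the only new step is the elementary observation that passing from $\cst_r$ to its downward closure $\csubst_r$ leaves the polar unchanged once one restricts attention to positive semidefinite operators. The single point deserving care is the justification of that step, namely the inequality $\tr{X(Q-Z)}\geq 0$ valid whenever $X$ and $Q-Z$ are both positive semidefinite; everything else is bookkeeping.
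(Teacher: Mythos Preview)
Your proposal is correct and matches the paper's approach exactly: the paper states that Lemma \ref{lemma:polar-restate} ``is an immediate corollary of Proposition \ref{prop:polar}'' and does not write out a proof, and what you have supplied is precisely the short argument that unpacks this immediacy. The one extra step you spell out---that for $X\succeq 0$ the polar of $\cst_r$ coincides with that of $\csubst_r$---is the natural bridge the paper leaves implicit.
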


Lemma \ref{lemma:polar-restate} is an immediate corollary of Proposition \ref{prop:polar} (\proppolar), which was proven in Section \ref{subsec:convex-polarity}.
Hence, we do not prove Lemma \ref{lemma:polar-restate} here---it is restated only for ease of reference without the polarity notation.
It remains only to prove Lemma \ref{lemma:unit-ball-technical}.

\begin{proof}[Proof of Lemma \ref{lemma:unit-ball-technical} and hence also of Proposition \ref{prop:unit-ball}]

The proofs of the two desired equivalences are completely symmetric, so we prove only the first.
As $\Phi$ is Hermitian-preserving, it holds that $\jam{\Phi}$ is a Hermitian operator.
As such, it has a Jordan decomposition \( \jam{\Phi}=R^+-R^- \)
%where the operators $R^+,R^-\in\pos{\cY\ot\cX}$ are positive semidefinite and act on orthogonal subspaces.
with $\abs{\jam{\Phi}}=R^++R^-$.

We first prove the easier implication.
Suppose $\inner{\abs{\jam{\Phi}}}{T}\leq 1$ for all co-strategies $T$ and
choose any measuring co-strategy $\set{T_0,T_1}$.
We must show that $\inner{\jam{\Phi}}{T_0-T_1}\leq 1$.
As $\set{T_0,T_1}$ is a measuring co-strategy, it holds that $T_0+T_1$ is a nonmeasuring co-strategy.
Taking $T=T_0+T_1$, we get
\[ 1 \geq \inner{R^++R^-}{T_0+T_1} \geq \inner{R^+-R^-}{T_0-T_1} \]
as desired.

For the other direction, suppose $\snorm{\Phi}{r}\leq 1$ and
choose any co-strategy $T$.
We must show that $\inner{\abs{\jam{\Phi}}}{T}\leq 1$.
Our goal is to decompose $T=T_0 + T_1$ in such a way that $\set{T_0,T_1}$ is a measuring co-strategy with $\inner{R^+}{T_1}=\inner{R^-}{T_0}=0$.
Such a choice of $T_0,T_1$ ensures that
\[ \inner{R^+-R^-}{T_0-T_1} = \inner{R^++R^-}{T_0+T_1}. \]
%\begin{align*}
%& \inner{X_+-X_-}{Z_0-Z_1} \\
%={} & \inner{X_++X_-}{Z_0+Z_1}.
%\end{align*}
The lemma is then established by noting that the left side of this expression is at most 1 and the right side of this expression equals $\inner{\abs{\jam{\Phi}}}{T}$.

It remains to choose the appropriate $T_0,T_1$.
Letting $\Pi^+$ denote the projection onto the support of $R^+$, we choose
\begin{align*}
T_0 &= \Pi^+ T \Pi^+, \\
T_1 &= T - T_0.
\end{align*}
That $\inner{R^+}{T_1}=\inner{R^-}{T_0}=0$ follows from the fact that $R^+$ and $R^-$ have orthogonal support.
Finally, it is easy to verify that $\set{T_0,T_1}$ denotes a valid measuring co-strategy:
as $T$ is a co-strategy, so too is the sum $T_0+T_1=T$.
As $T$ is positive semidefinite, so too are $T_0,T_1$.
\end{proof}

\def\propduality{Duality of the strategy $r$-norms}
\subsubsection{\propduality}
%======================================================================

We are now ready to prove that the norms $\snorm{\cdot}{r}$ and $\snorm{\cdot}{r}^*$ are dual to each other.

\begin{proposition}[\propduality]
\label{prop:duality}

The norms $\snorm{\cdot}{r}$ and $\snorm{\cdot}{r}^*$ are dual to each other.
In other words, for any Hermitian-preserving super-operator $\Phi:\lin{\kprod{\cX}{1}{r}}\to\lin{\kprod{\cY}{1}{r}}$ it holds that
\begin{align*}
\snorm{\Phi}{r} &= \max_{\snorm{\Psi}{r}^*\leq 1} \inner{\Psi}{\Phi}, \\
\snorm{\Phi}{r}^* &= \max_{\snorm{\Psi}{r}\leq 1} \inner{\Psi}{\Phi}.
\end{align*}

\end{proposition}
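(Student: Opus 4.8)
The plan is to prove the first identity only; the second follows from the symmetric argument in which the roles of strategies and co-strategies (and of $\snorm{\cdot}{r}$ and $\snorm{\cdot}{r}^{*}$, and of $\st_r$ and $\cst_r$) are interchanged throughout. Fix a Hermitian-preserving super-operator $\Phi$ and write $H=\jam{\Phi}$, a Hermitian operator. Since both norms depend on $\Phi$ only through $\jam{\Phi}$, and since $\inner{\Psi}{\Phi}=\inner{\jam{\Psi}}{\jam{\Phi}}$ for all Hermitian-preserving $\Psi$, the claim reduces to
\[ \Snorm{\Phi}{r} = \max\Set{ \inner{\jam{\Psi}}{H} : \textnormal{$\Psi$ Hermitian-preserving and } \snorm{\Psi}{r}^{*}\leq 1 }, \]
and I would establish the two inequalities separately.

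For ``$\leq$'', let $\Psi$ be Hermitian-preserving with $\snorm{\Psi}{r}^{*}\leq 1$. Proposition \ref{prop:unit-ball} supplies an $r$-round non-measuring co-strategy $T$ with $\abs{\jam{\Psi}}\preceq T$. Writing the Jordan decomposition $\jam{\Psi}=K^{+}-K^{-}$ and setting $T_{0}=K^{+}+\tfrac{1}{2}(T-\abs{\jam{\Psi}})$ and $T_{1}=K^{-}+\tfrac{1}{2}(T-\abs{\jam{\Psi}})$, one checks $T_{0},T_{1}\succeq 0$, $T_{0}-T_{1}=\jam{\Psi}$, and $T_{0}+T_{1}=T$. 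By Proposition \ref{prop:measure-sum}, $\set{T_{0},T_{1}}$ is then an $r$-round measuring co-strategy, so $\inner{\jam{\Psi}}{H}=\inner{T_{0}-T_{1}}{H}\leq\snorm{\Phi}{r}$ directly from Definition \ref{def:new-norm}. Maximizing over $\Psi$ gives ``$\leq$''.

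For ``$\geq$'', choose an $r$-round measuring co-strategy $\set{T_{0},T_{1}}$ attaining the maximum in the definition of $\snorm{\Phi}{r}$ (attained by compactness of $\cst_r$, Proposition \ref{prop:convexity}), and let $\Psi$ be the Hermitian-preserving super-operator with $\jam{\Psi}=T_{0}-T_{1}$. The one substantive step is to verify $\snorm{\Psi}{r}^{*}\leq 1$: for any $r$-round measuring strategy $\set{S_{0},S_{1}}$ with $S=S_{0}+S_{1}\in\st_{r}$, expanding the inner product and using $\inner{S_{0}}{T_{1}}\geq 0$ and $\inner{S_{1}}{T_{0}}\geq 0$ gives
\[ \inner{S_{0}-S_{1}}{\jam{\Psi}} = \inner{S_{0}-S_{1}}{T_{0}-T_{1}} \leq \inner{S_{0}+S_{1}}{T_{0}+T_{1}} = \inner{S}{T_{0}+T_{1}} \leq 1, \]
the final bound holding because $\inner{S}{T_{0}+T_{1}}$ is the probability of an outcome in an interaction between the $r$-round strategy $S$ and the $r$-round co-strategy $T_{0}+T_{1}$ (Theorem \ref{theorem:inner-product}). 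Hence $\snorm{\Psi}{r}^{*}\leq 1$ while $\inner{\jam{\Psi}}{H}=\inner{T_{0}-T_{1}}{H}=\snorm{\Phi}{r}$, which proves ``$\geq$'' and finishes the argument.

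The main obstacle is exactly this last verification. The naive approach would try to show that the difference $T_{0}-T_{1}$ of the elements of a measuring co-strategy satisfies the operator inequality $\abs{T_{0}-T_{1}}\preceq T_{0}+T_{1}$, which together with Proposition \ref{prop:unit-ball} would place it in the unit ball of $\snorm{\cdot}{r}^{*}$; but that inequality is false in general. The resolution is that it is not needed: only the scalar inequality $\inner{S_{0}-S_{1}}{T_{0}-T_{1}}\leq\inner{S_{0}+S_{1}}{T_{0}+T_{1}}$ is required, and that is immediate from nonnegativity of the cross terms. Everything else is routine bookkeeping with Propositions \ref{prop:measure-sum}, \ref{prop:unit-ball}, \ref{prop:convexity} and Theorem \ref{theorem:inner-product}.
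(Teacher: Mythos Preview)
Your proof is correct and follows essentially the same route as the paper's: one direction uses Proposition~\ref{prop:unit-ball} to turn any $\Psi$ in the dual unit ball into a measuring co-strategy $\{T_0,T_1\}$ with $T_0-T_1=\jam{\Psi}$, and the other direction takes an optimal measuring co-strategy $\{T_0,T_1\}$ and bounds $\snorm{T_0-T_1}{r}^*$ via the cross-term inequality $\inner{S_0-S_1}{T_0-T_1}\leq\inner{S_0+S_1}{T_0+T_1}=1$. One cosmetic slip: your labels ``$\leq$'' and ``$\geq$'' are swapped relative to the displayed equality $\snorm{\Phi}{r}=\max\{\cdots\}$ (your ``$\leq$'' paragraph actually proves $\max\leq\snorm{\Phi}{r}$ and vice versa), but the content is right.
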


\begin{proof}

The proofs of the two desired equalities are completely symmetric, so we prove only the first.
We begin by proving
\[ \snorm{\Phi}{r} \leq \max_{\snorm{\Psi}{r}^*\leq 1} \inner{\Psi}{\Phi}. \]
Let $\set{T_0,T_1}$ be an $r$-round measuring co-strategy attaining the maximum in the definition of $\snorm{\Phi}{r}$, so that \[\snorm{\Phi}{r} = \inner{T_0-T_1}{\jam{\Phi}}.\]
The desired inequality follows from the claim that $\snorm{T_0-T_1}{r}^*\leq 1$.
To verify this claim, we note that any $r$-round measuring strategy $\set{S_0,S_1}$ attaining the maximum in the definition of $\snorm{T_0-T_1}{r}^*$ has
\[ \snorm{T_0-T_1}{r}^* = \inner{S_0-S_1}{T_0-T_1} \leq \inner{S_0+S_1}{T_0+T_1} = 1. \]
That the final inner product equals one follows immediately from Theorem \ref{theorem:inner-product} (\theoreminnerproduct) and the observation that $S_0+S_1$ may be viewed as a one-outcome measuring strategy and $T_0+T_1$ may be viewed as a one-outcome measuring co-strategy.

For the reverse inequality, choose a Hermitian-preserving super-operator $\Psi$ with $\snorm{\Psi}{r}^*\leq 1$ that maximizes the inner product $\inner{\Psi}{\Phi}$.
Proposition \ref{prop:unit-ball} tells us that $\abs{\jam{\Psi}}\preceq T$ for some $r$-round non-measuring co-strategy $T$.
Let $\jam{\Psi}=R^+-R^-$ be the Jordan decomposition of $\jam{\Psi}$ and let $\set{T_0,T_1}$ be the $r$-round measuring co-strategy given by
\begin{align*}
T_0 &= R^+ + \frac{1}{2}\Pa{T - \abs{\jam{\Psi}}}, \\
T_1 &= R^- + \frac{1}{2}\Pa{T - \abs{\jam{\Psi}}}.
\end{align*}
Then
\[ \inner{\Psi}{\Phi} = \inner{T_0-T_1}{\jam{\Phi}} \leq \snorm{\Phi}{r} \]
as desired.
\end{proof}

\def\thmsep{Distinguishability of convex sets of strategies}
\subsubsection{\thmsep}
%======================================================================

As mentioned earlier, our proof of the distinguishability of convex sets of strategies closely resembles the proof of the distinguishability of convex sets of states appearing in Ref.~\cite{GutoskiW05}.
Now that we have identified and established the relevant properties of the norm $\snorm{\cdot}{r}$ and its dual $\snorm{\cdot}{r}^*$, we are ready to translate the proof of Ref.~\cite{GutoskiW05} onto the domain of quantum strategies.

\begin{theorem}[\thmsep]
\label{thm:sep}

Let $\bS_0,\bS_1\subset\pos{\kprod{\cY}{1}{r}\ot\kprod{\cX}{1}{r}}$ be nonempty convex sets of $r$-round strategies.
There exists an $r$-round measuring co-strategy $\set{T_0,T_1}$ with the property that
%\[ \Inner{X_0-X_1}{Y_0-Y_1} \geq \min\Set{\Snorm{X_0-X_1}{r} : X_0\in\cA_0, X_1\in\cA_1 } \]
\[ \Inner{T_0-T_1}{S_0-S_1} \geq \min_{R_a\in\bS_a} \Snorm{R_0-R_1}{r} \]
for all choices of $S_0\in\bS_0$ and $S_1\in\bS_1$.
A similar statement holds in terms of the dual norm $\snorm{\cdot}{r}^*$ for convex sets of co-strategies.

\end{theorem}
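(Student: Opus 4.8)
\textbf{Proof plan for Theorem \ref{thm:sep}.}
The plan is to mimic the minimax argument from Ref.~\cite{GutoskiW05}, substituting quantum strategies for quantum states and co-strategies for measurements, and relying on the three ingredients we have just assembled: the inner product relation (Theorem \ref{theorem:inner-product}), the duality of $\snorm{\cdot}{r}$ and $\snorm{\cdot}{r}^*$ (Proposition \ref{prop:duality}), and the unit-ball characterization (Proposition \ref{prop:unit-ball}). First I would define the quantity
\[
  \delta \defeq \min_{R_0\in\bS_0,\ R_1\in\bS_1} \Snorm{R_0-R_1}{r},
\]
which exists because $\bS_0,\bS_1$ are compact (Proposition \ref{prop:convexity}) and $\snorm{\cdot}{r}$ is continuous. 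The goal is to produce a single measuring co-strategy $\set{T_0,T_1}$ that simultaneously ``works'' against every pair; by Proposition \ref{prop:duality} and the definition of the strategy norm, it is equivalent to find a Hermitian-preserving $\Psi$ with $\snorm{\Psi}{r}^*\leq 1$ such that $\inner{\jam{\Psi}}{S_0-S_1}\geq\delta$ for all $S_0\in\bS_0$, $S_1\in\bS_1$, since such a $\Psi$ can then be converted into the desired $\set{T_0,T_1}$ exactly as in the second half of the proof of Proposition \ref{prop:duality}.

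The key step is a minimax exchange. Consider the function
\[
  f(W,\,(S_0,S_1)) \defeq \Inner{W}{S_0-S_1}
\]
on $\bW\times(\bS_0\times\bS_1)$, where $\bW$ is the set of Hermitian operators $W$ with $\snorm{W}{r}^*\leq 1$ (i.e.\ $W=\jam{\Psi}$ for some Hermitian-preserving $\Psi$ with $\snorm{\Psi}{r}^*\leq 1$). By Proposition \ref{prop:unit-ball}, $\bW=\Set{W\in\her{\cdot} : \abs{W}\preceq T \textrm{ for some }r\textrm{-round co-strategy }T}$, which is a closed, bounded, convex set (boundedness follows since co-strategies form a bounded set by Proposition \ref{prop:convexity} and Proposition \ref{prop:explicit-bound}); the set $\bS_0\times\bS_1$ is compact and convex; and $f$ is bilinear. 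Thus Fact \ref{fact:min-max} (\factminmax) applies and gives
\[
  \max_{W\in\bW}\ \min_{(S_0,S_1)}\ \Inner{W}{S_0-S_1}
  \ =\ \min_{(S_0,S_1)}\ \max_{W\in\bW}\ \Inner{W}{S_0-S_1}.
\]
On the right-hand side the inner maximum is, by the duality Proposition \ref{prop:duality}, exactly $\snorm{S_0-S_1}{r}$, so the right-hand side equals $\delta$. Hence the left-hand side equals $\delta$, meaning there is a single $W^\star\in\bW$ with $\inner{W^\star}{S_0-S_1}\geq\delta$ for every pair. Writing $W^\star=\jam{\Psi^\star}$ and applying Proposition \ref{prop:unit-ball} to split a co-strategy $T$ with $\abs{W^\star}\preceq T$ into $T_0=R^++\tfrac12(T-\abs{W^\star})$, $T_1=R^-+\tfrac12(T-\abs{W^\star})$ (where $W^\star=R^+-R^-$ is the Jordan decomposition) yields a valid $r$-round measuring co-strategy with $T_0-T_1=W^\star$, which is the desired $\set{T_0,T_1}$.

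The statement for co-strategies follows by the usual symmetry: if $\bT_0,\bT_1$ are convex sets of $r$-round co-strategies, view each as an $(r{+}1)$-round strategy as in Definition \ref{def:co-strategy}, apply the result just proved, and translate the resulting $(r{+}1)$-round measuring co-strategy back into an $r$-round measuring strategy, using the identity $\snorm{\cdot}{r}^*$ plays for co-strategies the role $\snorm{\cdot}{r}$ plays for strategies; concretely one invokes Proposition \ref{prop:duality} in its second form and Proposition \ref{prop:unit-ball} in its second form throughout.

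The main obstacle I anticipate is verifying the hypotheses of Fact \ref{fact:min-max} cleanly --- in particular confirming that $\bW$, as characterized by Proposition \ref{prop:unit-ball}, is genuinely compact and convex (convexity of $\Set{W:\abs{W}\preceq T \textrm{ for some co-strategy }T}$ is not entirely obvious because $\abs{\cdot}$ is not linear, so one must instead argue convexity directly from $\bW=\Set{T_0-T_1:\set{T_0,T_1}\textrm{ a measuring co-strategy}}$ together with Proposition \ref{prop:measure-sum} and Proposition \ref{prop:distributive}), and ensuring that the identification of $\max_{W\in\bW}\inner{W}{X}$ with $\snorm{X}{r}$ is exactly the content of Proposition \ref{prop:duality} rather than something slightly weaker. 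Everything else is a routine transcription of the argument of Ref.~\cite{GutoskiW05}.
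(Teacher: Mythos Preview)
Your proposal is correct but takes a genuinely different route from the paper. The paper does not use min-max; instead it forms the difference set $\bS=\bS_0-\bS_1$ and the open $\snorm{\cdot}{r}$-ball $\bB$ of radius $d$, applies the Separation Theorem (Fact~\ref{fact:herm-sep}) to obtain a Hermitian $H$ with $\inner{H}{S}\geq\alpha>\inner{H}{B}$ for all $S\in\bS$, $B\in\bB$, deduces $\snorm{H}{r}^*\leq\alpha/d$ via Proposition~\ref{prop:duality}, normalizes $\hat H=\tfrac{d}{\alpha}H$, and then performs exactly the same Jordan-decomposition-plus-Proposition~\ref{prop:unit-ball} conversion you describe to obtain $\set{T_0,T_1}$. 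Your approach replaces the separation step by a direct application of Fact~\ref{fact:min-max} to the bilinear pairing $\inner{W}{S_0-S_1}$ on $\bW\times(\bS_0\times\bS_1)$; this is arguably cleaner because it avoids the auxiliary scalar $\alpha$ and the open-ball construction, and it makes the role of duality (Proposition~\ref{prop:duality}) completely transparent as the identification of the inner maximum with $\snorm{S_0-S_1}{r}$. The paper's route has the virtue of literally transcribing the argument of Ref.~\cite{GutoskiW05} (which uses separation, not min-max, so your opening sentence slightly mischaracterizes that reference). One small gap you should patch: Fact~\ref{fact:min-max} requires $\bS_0\times\bS_1$ to be compact, but the theorem only assumes $\bS_0,\bS_1$ are convex; pass to closures first (the closures are compact since all strategies lie in a compact set, and the desired inequality, being continuous in $S_0,S_1$, is unaffected). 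Your handling of the convexity and compactness of $\bW$ via Proposition~\ref{prop:measure-sum} and the convexity of co-strategies is fine.
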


%In the statement of Theorem \ref{thm:sep} we have used the notation
%\begin{align*}
%\dist_r(\cA_0,\cA_1) &\defeq \min\Set{\Snorm{X_0-X_1}{r} : X_0\in\cA_0, X_1\in\cA_1 } \textrm{ for strategies},\\
%\dist_r(\cA_0,\cA_1) &\defeq \min\Set{\Snorm{X_0-X_1}{r}^* : X_0\in\cA_0, X_1\in\cA_1 } \textrm{ for c-strategies}.
%\end{align*}

%\begin{proposition}[Separation Theorem]

%Let $\bS$ be any vector space of Hermitian operators.
%Let $\cA,\cB\subset\bS$ be nonempty, disjoint, convex sets with $\cB$ open.
%There exists an operator $H\in\bS$ and a real number $\alpha\in\mathbb{R}$ such that
%\[ \inner{H}{A} \geq \alpha > \inner{H}{B} \]
%for all $A\in\cA$ and $B\in\cB$.

%\end{proposition}

\begin{proof}%[Proof of Theorem \ref{thm:sep}]

The proof for co-strategies is completely symmetric to the proof for strategies, so we only address strategies here.
Let $d$ denote the minimum distance between $\bS_0$ and $\bS_1$ as stated in the theorem.
If $d=0$ then the theorem is satisfied by the trivial $r$-round measuring co-strategy corresponding to a random coin flip.
(For this trivial co-strategy, both $T_0$ and $T_1$ are equal to the identity divided by $2\dim(\kprod{\cX}{1}{r})$.)
For the remainder of this proof, we shall restrict our attention to the case $d>0$.

Define
\[ \bS \defeq \bS_0-\bS_1 = \Set{S_0-S_1 : S_0\in\bS_0, S_1\in\bS_1 } \]
and let
\[ \bB \defeq \Set{B\in\her{\kprod{\cY}{1}{r}\ot\kprod{\cX}{1}{r}} : \snorm{B}{r}<d } \]
denote the open ball of radius $d$ with respect to the $\snorm{\cdot}{r}$ norm.
The sets $\bS$ and $\bB$ are nonempty disjoint sets of Hermitian operators, both are convex, and $\bB$ is open.
By the Separation Theorem (Fact \ref{fact:herm-sep}), there exists a Hermitian operator $H$ and a scalar $\alpha$ such that
\[ \inner{H}{S} \geq \alpha > \inner{H}{B} \]
for all $S\in\bS$ and $B\in\bB$.

For every choice of $B\in\bB$ we have $-B\in\bB$ as well, from which it follows that
$\abs{\inner{H}{B}}<\alpha$ for all $B\in\bB$ and hence $\alpha>0$.
Moreover, as $\bB$ is the open ball of radius $d$ in the norm $\snorm{\cdot}{r}$, it follows from Proposition \ref{prop:duality} (\propduality) that
\[\snorm{H}{r}^* \leq \alpha/d. \]
Now let $\hat{H}=\frac{d}{\alpha}H$ be the normalization of $H$ with $\snorm{\hat{H}}{r}^*\leq 1$ and let $\hat{H}=\hat{H}^+-\hat{H}^-$ be the Jordan decomposition of $\hat{H}$.
By Proposition \ref{prop:unit-ball} (\propunitball) we have $\abs{\hat{H}}\preceq T$ for some co-strategy $T$.
Let $\set{T_0,T_1}$ be the measuring co-strategy given by
\begin{align*}
T_0 &= \hat{H}^+ + \frac{1}{2}\Pa{T - \abs{\hat{H}}}, \\
T_1 &= \hat{H}^- + \frac{1}{2}\Pa{T - \abs{\hat{H}}}.
\end{align*}
It remains only to verify that $\set{T_0,T_1}$ has the desired properties:
for every choice of $S_0\in\bS_0$ and $S_1\in\bS_1$ we have
\[ \inner{T_0-T_1}{S_0-S_1} = \inner{\hat{H}}{S_0-S_1} = \frac{d}{\alpha} \inner{H}{S_0-S_1} \geq d \]
as desired.
\end{proof}

The claimed result regarding the distinguishability of convex sets of strategies now follows immediately.
To recap, let $\bS_0,\bS_1$ be convex sets of strategies and let $\set{T_0,T_1}$ denote the measuring co-strategy from Theorem \ref{thm:sep} that distinguishes elements in $\bS_0$ from elements in $\bS_1$.
Suppose Bob selects $S_0\in\bS_0$ and $S_1\in\bS_1$ arbitrarily and then selects $S\in\set{S_0,S_1}$ uniformly at random.
As derived in Section \ref{sec:dist:snorm}, if Alice acts according to $\set{T_0,T_1}$ then the probability with which she correctly guesses whether $S\in\bS_0$ or $S\in\bS_1$ is given by
\[ \frac{1}{2} + \frac{1}{4}\inner{T_0-T_1}{S_0-S_1} \geq \frac{1}{2} + \frac{1}{4}\min_{R_a\in\bS_a}\Snorm{R_0-R_1}{r}\]
as desired.

%======================================================================
\section{The dual of the diamond norm}
\label{sec:diamond-dual}
%======================================================================

In Proposition \ref{prop:dnorm-gen} we established that the strategy $1$-norm $\snorm{\Phi}{1}$ agrees with the diamond norm $\dnorm{\Phi}$ on Hermitian-preserving super-operators $\Phi$.
Can the same be said of the duals of these norms?
In this section we answer that question in the affirmative.
In order to do so, we also establish several basic facts concerning the dual of the diamond norm.
While none of these facts are surprising, the only proofs we can offer are nontrivial and possibly interesting in their own right.
Thus, in this section we take a ``detour'' from the strategy $r$-norm in order to study the dual of the diamond norm, returning only at the very end to answer this section's opening question.

\def\defdiamonddual{Dual of the diamond norm}
\begin{definition}[\defdiamonddual]

  As with any norm, the \emph{dual} $\dnorm{\cdot}^*$ of the diamond norm $\dnorm{\cdot}$ is defined for every super-operator $\Phi$ as
  \[ \Dnorm{\Phi}^* \defeq \max_{\dnorm{\Psi}=1} \Abs{\Inner{\Psi}{\Phi}}. \]
\end{definition}

That the dual of the dual of the diamond norm equals the diamond norm follows from the Duality Theorem, a proof of which can be found in Horn and Johnson \cite{HornJ85}.
In other words, it holds that
\[ \Dnorm{\Phi}^{**} =\Dnorm{\Phi} = \max_{\dnorm{\Psi}^*=1} \Abs{\Inner{\Psi}{\Phi}}. \]
While much is known of the diamond norm, its dual has never been studied.
In this section, we establish the following basic facts about this norm:
\begin{enumerate}

%\item
%The dual of the dual of the diamond norm equals the diamond norm:
%for each super-operator $\Phi$ it holds that
%\[ \dnorm{\Phi} = \max_{\dnorm{\Psi}^*=1} \Abs{\Inner{\Psi}{\Phi}}. \]

\item
If $\Phi$ is completely positive then the maximum in the definition of $\dnorm{\Phi}^*$ is achieved by a completely positive and trace-preserving super-operator.

\item
If $\Phi$ is Hermitian-preserving then the maximum in the definition of $\dnorm{\Phi}^*$ is achieved by a Hermitian-preserving super-operator.

\item %\label{item:counter-pos}
For each of the two maxima
\[
  \dnorm{\Phi}^* = \max_{\dnorm{\Psi}=1} \Abs{\Inner{\Psi}{\Phi}},
  \quad 
  \dnorm{\Phi} = \max_{\dnorm{\Psi}^*=1} \Abs{\Inner{\Psi}{\Phi}}
\]
there exist super-operators $\Phi$ such that the maximum is not attained by any Hermitian-preserving super-operator $\Psi$.
Moreover, there exist Hermitian-preserving super-operators $\Phi$ such that the maximum is not attained by any completely positive super-operator $\Psi$.

\item
If $\Phi$ is Hermitian-preserving then $\dnorm{\Phi}^*=\snorm{\Phi}{1}^*$.

\end{enumerate}

\subsubsection{Useful lemmas involving the diamond norm}
%======================================================================

%We now proceed
We begin with three technical lemmas, two of which establish facts about the diamond norm.
The first lemma follows immediately from Refs.~\cite{RosgenW05,AmosovH+00a}.
We provide an alternate proof for completeness.
%[Lecture notes, Dec 7, 2004, University of Calgary]

\begin{lemma}
%[\lemmacpdnorm]
\label{lemma:cp-dnorm}

For any completely positive super-operator $\Phi$ it holds that $\tnorm{\Phi}=\dnorm{\Phi}$.
Moreover, the maximum in the definition of $\tnorm{\Phi}$ is achieved by a positive semidefinite operator.

\end{lemma}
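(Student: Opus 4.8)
The plan is to prove the two inequalities $\tnorm{\Phi}\le\dnorm{\Phi}$ and $\dnorm{\Phi}\le\tnorm{\Phi}$ and, simultaneously, to locate the maximizer. The first inequality is immediate: taking the auxiliary space to be $\Complex$ in the definition of $\dnorm{\cdot}$ gives $\tnorm{\Phi}=\Tnorm{\Phi\ot\idsup{\Complex}}\le\dnorm{\Phi}$. For everything else I would first isolate a clean formula for the trace norm of an \emph{arbitrary} completely positive super-operator $\Psi:\lin{\cX}\to\lin{\cY}$, namely
\[ \tnorm{\Psi} = \Norm{\Psi^*(I_\cY)}, \]
and I would show that the maximum defining $\tnorm{\Psi}$ is attained at a rank-one positive semidefinite operator. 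Note that $\Psi^*$ is again completely positive (take adjoints in a symmetric operator-sum decomposition of $\Psi$), so $\Psi^*(I_\cY)\in\pos{\cX}$ and this formula makes sense.

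For the lower bound $\tnorm{\Psi}\ge\norm{\Psi^*(I_\cY)}$, I would take a unit eigenvector $v\in\cX$ for the largest eigenvalue of $\Psi^*(I_\cY)$ and evaluate at $X=vv^*$: since $\Psi(vv^*)\succeq 0$ we get $\tnorm{\Psi(vv^*)}=\tr{\Psi(vv^*)}=\inner{I_\cY}{\Psi(vv^*)}=\inner{\Psi^*(I_\cY)}{vv^*}=\norm{\Psi^*(I_\cY)}$, which also exhibits the promised positive semidefinite (indeed density) maximizer. For the upper bound I would use the dual description $\tnorm{Y}=\max\Set{\abs{\inner{U}{Y}} : \norm{U}\le 1}$: for any $X$ with $\tnorm{X}=1$ and any $U\in\lin{\cY}$ with $\norm{U}\le1$ we have $\abs{\inner{U}{\Psi(X)}}=\abs{\inner{\Psi^*(U)}{X}}\le\norm{\Psi^*(U)}$, so it remains to bound $\norm{\Psi^*(U)}\le\norm{\Psi^*(I_\cY)}$. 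This last bound is where the completely positive hypothesis does its real work: the block operator with both diagonal blocks equal to $I_\cY$ and off-diagonal blocks $U$ and $U^*$ is positive semidefinite whenever $\norm{U}\le1$, and applying $\Psi^*$ blockwise (legitimate since $\Psi^*$ is $2$-positive, and using $\Psi^*(U^*)=\Psi^*(U)^*$ from the symmetric operator-sum form) yields a positive semidefinite block operator with diagonal blocks $\Psi^*(I_\cY)$ and off-diagonal blocks $\Psi^*(U)$ and $\Psi^*(U)^*$; the standard estimate for positive block matrices then gives $\norm{\Psi^*(U)}\le\norm{\Psi^*(I_\cY)}$, and taking the maximum over $U$ finishes the upper bound.

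Finally I would apply the displayed formula to $\Psi=\Phi\ot\idsup{\cW}$, which is completely positive for every $\cW$. Since $(\Phi\ot\idsup{\cW})^*=\Phi^*\ot\idsup{\cW}$ and $(\Phi^*\ot\idsup{\cW})(I_{\cY\ot\cW})=\Phi^*(I_\cY)\ot I_\cW$, we obtain $\Tnorm{\Phi\ot\idsup{\cW}}=\Norm{\Phi^*(I_\cY)\ot I_\cW}=\norm{\Phi^*(I_\cY)}=\tnorm{\Phi}$, independently of $\cW$. Taking the supremum over $\cW$ then gives $\dnorm{\Phi}=\tnorm{\Phi}$, while the statement about the positive semidefinite maximizer is exactly the last assertion of the formula for $\tnorm{\Phi}$ (the case of trivial auxiliary space). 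I expect the one genuinely non-routine step to be the operator-norm bound $\norm{\Psi^*(U)}\le\norm{\Psi^*(I_\cY)}$ for contractions $U$; everything else is bookkeeping with adjoints, the operator-sum machinery recalled in the preliminaries, and the elementary dual description of the trace norm.
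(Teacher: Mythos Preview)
Your proof is correct but takes a genuinely different route from the paper's. The paper's argument starts from the other end: it uses the previously cited fact (from \cite{RosgenW05}) that for Hermitian-preserving $\Phi$ the maximum defining $\Tnorm{\Phi\ot\idsup{\cW}}$ is attained at a density operator $\rho$, then writes $\rho=uu^*$ via convexity, Schmidt-decomposes $u=\sum_i\sqrt{q_i}\,x_i\ot e_i$, and computes directly that $\dnorm{\Phi}=\tr{(\Phi\ot\idsup{\cW})(uu^*)}=\sum_i q_i\tr{\Phi(x_ix_i^*)}=\tr{\Phi(\sigma)}$ for the reduced density operator $\sigma=\sum_i q_i x_ix_i^*$. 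This is shorter and uses nothing beyond the Schmidt decomposition, but it leans on the externally cited existence of a positive semidefinite maximizer for the stabilized norm.

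Your approach instead establishes the closed-form identity $\tnorm{\Psi}=\norm{\Psi^*(I_\cY)}$ for every completely positive $\Psi$ via a Russo--Dye-type $2$-positivity argument, and then observes that this quantity is invariant under tensoring with $\idsup{\cW}$. This is self-contained (you never invoke the external fact about positive semidefinite maximizers---you prove it as a byproduct), and it yields the explicit formula $\dnorm{\Phi}=\tnorm{\Phi}=\norm{\Phi^*(I_\cY)}$, which is a bit more informative. The cost is the block-matrix step $\norm{\Psi^*(U)}\le\norm{\Psi^*(I_\cY)}$ for contractions $U$, which is standard but less elementary than the paper's Schmidt-decomposition computation.
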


\begin{proof}

It is clear that $\dnorm{\Phi}\geq\tnorm{\Phi}$, so let us concentrate only on the reverse inequality.
Suppose $\Phi$ has the form $\Phi:\lin{\cX}\to\lin{\cY}$.
As $\Phi$ is completely positive, there exists a space $\cW$ and a density operator $\rho\in\pos{\cX\ot\cW}$ with the property that \[ \dnorm{\Phi}=\Tnorm{\Pa{\Phi\ot\idsup{\cW}}(\rho)}=\Ptr{}{\Pa{\Phi\ot\idsup{\cW}}(\rho)}. \]
By convexity, we may assume that $\rho$ is a pure state---that is, $\rho=uu^*$ for some unit vector $u\in\cX\ot\cW$.
For any orthonormal basis $\set{e_1,\dots,e_{\dim(\cW)}}$ of $\cW$ we may write
\[ u=\sum_{i=1}^{\dim(\cW)} \sqrt{q_i} x_i\ot e_i \]
for some choice of unit vectors $x_1,\dots,x_{\dim(\cW)}\in\cX$ (not necessarily orthogonal) and nonnegative real numbers $q_1,\dots,q_{\dim(\cW)}$ that sum to one.
Then
\begin{align*}
\dnorm{\Phi}
%= \Ptr{}{\Pa{\Phi\ot\idsup{\cW}}(uu^*)}
&= \sum_{i,j=1}^{\dim(\cW)} \sqrt{q_iq_j}\Ptr{}{\Phi(x_ix_j^*)\ot e_ie_j^*}
= \sum_{i=1}^{\dim(\cW)} q_i \Ptr{}{\Phi(x_ix_i^*)}
= \Ptr{}{\Phi(\sigma)}
%&= \Ptr{}{\Phi\Pa{\sum_{i=1}^{\dim(\cW)} q_i u_iu_i^*}}\\
%&\leq \max \Set{ \Ptr{}{\Phi(xx^*)} : \textrm{ $x\in\cX$ is a unit vector}}.
\end{align*}
for $\sigma=\sum_{i=1}^{\dim(\cW)} q_i x_ix_i^*$.
\end{proof}

For any operator $A:\cX\to\cY$ of rank $r$ it follows from the Singular Value Theorem (Section \ref{sec:intro:linalg}) that it is possible to choose vectors $u_1,\dots,u_r\in\cY$, $v_1,\dots,v_r\in\cX$ such that \[ A = \sum_{i=1}^r u_iv_i^*. \]
Given such a choice of vectors, we define the positive semidefinite operators $A_\rL\in\pos{\cY}$, $A_\rR\in\pos{\cX}$ by
\[ A_\rL = \sum_{i=1}^r u_iu_i^*, \quad A_\rR = \sum_{i=1}^r v_iv_i^*. \]
If $A$ is Hermitian then it is clear that we may take $A_\rL=A_\rR=\abs{A}$.
As a consequence, if $A$ is positive semidefinite then we may take $A_\rL=A_\rR=A$.
The following simple lemma is a special case of Ref.~\cite[Lemma 2]{Watrous05}.

\begin{lemma}
%[Special case of Ref.~{\cite[Lemma 2]{Watrous05}}]
\label{lemma:cauchy}

For any operators $A,B:\cX\to\cY$ and any decompositions $A_\rL,A_\rR$ and $B_\rL,B_\rR$ of those operators it holds that
\[ \Abs{\Inner{A}{B}}^2 \leq \Inner{A_\rL}{B_\rL} \cdot \Inner{A_\rR}{B_\rR}. \]

\end{lemma}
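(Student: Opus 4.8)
\textbf{Proof plan for Lemma \ref{lemma:cauchy}.}

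The plan is to reduce the claimed inequality to an application of the Cauchy--Schwarz inequality after a suitable rewriting of the inner products. First I would fix singular value decompositions $A = \sum_{i=1}^{r} u_i v_i^*$ and $B = \sum_{j=1}^{s} w_j z_j^*$ realizing the given decompositions, so that $A_\rL = \sum_i u_i u_i^*$, $A_\rR = \sum_i v_i v_i^*$, and similarly for $B$. Using the identity $\inner{yx^*}{y'x'^*} = \inner{y}{y'}\inner{x'}{x}$ (which follows from the properties of the $\vectorize$ isomorphism listed in the excerpt), I would expand
\[ \Inner{A}{B} = \sum_{i,j} \overline{\inner{u_i}{w_j}}\,\inner{v_i}{z_j}\cdot(\text{conjugation bookkeeping}), \]
and write this as $\Inner{A}{B} = \Inner{M}{N}$ where $M$ and $N$ are the $r\times s$ matrices with entries $M_{ij}=\inner{u_i}{w_j}$ and $N_{ij}=\inner{z_j}{v_i}$ (exact placement of conjugates to be checked in the writeup). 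The key point is that $\Inner{A_\rL}{B_\rL} = \sum_{i,j}\abs{\inner{u_i}{w_j}}^2 = \fnorm{M}^2$ and $\Inner{A_\rR}{B_\rR} = \sum_{i,j}\abs{\inner{v_i}{z_j}}^2 = \fnorm{N}^2$, since each of $A_\rL,A_\rR,B_\rL,B_\rR$ is a sum of rank-one projectors built from the respective vectors.

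With this in hand, the inequality becomes $\abs{\Inner{M}{N}}^2 \le \fnorm{M}^2\,\fnorm{N}^2$, which is exactly Cauchy--Schwarz for the Frobenius inner product on matrices. So the second and final step would just be to invoke Cauchy--Schwarz and unwind the identifications. An alternative, cleaner route I would also consider is to observe that $A_\rL$ and $A_\rR$ can be taken of the form $A_\rL = EE^*$, $A_\rR = F F^*$ with $A = E G F^*$ for an isometry-like $G$ extracted from the SVD, and likewise $B = E' G' F'^*$; then $\Inner{A}{B} = \ptr{}{F G^* E^* E' G' F'^*}$ and one applies the Cauchy--Schwarz inequality $\abs{\ptr{}{X^*Y}}^2 \le \ptr{}{X^*X}\ptr{}{Y^*Y}$ directly. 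Either way, the mechanism is a single application of Cauchy--Schwarz; the paper even cites this as a special case of Ref.~\cite{Watrous05}, so the argument should be short.

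The main obstacle I anticipate is purely bookkeeping: getting the conjugations and the order of the factors right so that the two ``cross'' Gram matrices $M$ and $N$ really do have Frobenius norms $\sqrt{\Inner{A_\rL}{B_\rL}}$ and $\sqrt{\Inner{A_\rR}{B_\rR}}$ respectively, rather than some mismatched pairing. I would handle this by carrying out the expansion carefully once for rank-one $A = uv^*$, $B = wz^*$ (where the claim reads $\abs{\inner{u}{w}}^2\abs{\inner{v}{z}}^2 \le \inner{u}{u}\inner{w}{w}\cdot\inner{v}{v}\inner{z}{z}$, immediate from $\abs{\inner{u}{w}}\le\norm{u}\norm{w}$), and then noting that in the general case the bilinearity of all four inner-product expressions in the summation indices lets the rank-one computation assemble into the matrix Cauchy--Schwarz statement above.
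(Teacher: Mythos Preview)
Your proposal is correct and follows essentially the same approach as the paper's proof: expand $\inner{A}{B}$ as a double sum $\sum_{i,j}\inner{u_i}{w_j}\inner{x_j}{v_i}$ using the given rank-one decompositions, apply Cauchy--Schwarz to this double-indexed family (your packaging as a Frobenius inner product $\inner{M}{N}$ is just a relabeling of the same step), and then identify the resulting squared sums with $\inner{A_\rL}{B_\rL}$ and $\inner{A_\rR}{B_\rR}$. One small wording correction: the decompositions in the lemma are \emph{given}, not chosen via SVD, so you should write ``fix decompositions $A=\sum_i u_iv_i^*$, $B=\sum_j w_jz_j^*$ realizing the given $A_\rL,A_\rR,B_\rL,B_\rR$'' rather than ``fix singular value decompositions.''
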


\begin{proof}

Let \[ A=\sum_{i=1}^r u_i v_i^*, \quad B=\sum_{j=1}^q w_j x_j^* \]
be decompositions of $A,B$ that yield $A_\rL,A_\rR$ and $B_\rL,B_\rR$.
Then
\begin{align*}
\Abs{\Inner{A}{B}}
&= \Abs{ \sum_{i=1}^r \sum_{j=1}^q \Inner{u_iv_i^*}{w_jx_j^*} }
= \Abs{ \sum_{i=1}^r \sum_{j=1}^q
  \Inner{u_i}{w_j} \cdot \Inner{x_j}{v_i} } \\
&\leq \sqrt{ \sum_{i=1}^r \sum_{j=1}^q \Abs{\Inner{u_i}{w_j}}^2 } \cdot
  \sqrt{ \sum_{i=1}^r \sum_{j=1}^q \Abs{\Inner{x_j}{v_i}}^2 } \\
&= \sqrt{ \sum_{i=1}^r \sum_{j=1}^q \Inner{u_i u_i^*}{w_j w_j^*} } \cdot
  \sqrt{ \sum_{i=1}^r \sum_{j=1}^q \Inner{v_i v_i^*}{x_j x_j^*} } \\
&= \sqrt{\Inner{A_\rL}{B_\rL}} \cdot \sqrt{\Inner{A_\rR}{B_\rR}}
\end{align*}
where the inequality follows from Cauchy-Schwarz.
\end{proof}

For any super-operator $\Phi$ and any decomposition $\jam{\Phi}_\rL,\jam{\Phi}_\rR$ of $\jam{\Phi}$ we let $\Phi_\rL,\Phi_\rR$ denote the super-operators with $\jam{\Phi_\rL}=\jam{\Phi}_\rL$ and $\jam{\Phi_\rR}=\jam{\Phi}_\rR$.
As $\jam{\Phi_\rL},\jam{\Phi_\rR}$ are positive semidefinite, it holds that $\Phi_\rL,\Phi_\rR$ are completely positive.

\begin{lemma}
%[\lmdnorm]
\label{lm:dnorm}

For any super-operator $\Phi:\lin{\cX}\to\lin{\cY}$ there exists a decomposition $\Phi_\rL$, $\Phi_\rR$ of $\Phi$ with
\[ \Dnorm{\Phi}=\Dnorm{\Phi_\rL}=\Dnorm{\Phi_\rR}. \]

\end{lemma}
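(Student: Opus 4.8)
The plan is to reduce the lemma to an optimisation over the ``gauge freedom'' available when choosing a decomposition of $\jam{\Phi}$, and to recognise that optimisation as (a form of the dual of) the semidefinite program computing $\dnorm{\Phi}$.

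First dispose of the trivial case $\Phi=0$; assume $\Phi\neq0$ and $\Phi:\lin{\cX}\to\lin{\cY}$. I would record three elementary facts. \emph{(a)} For completely positive $\Psi:\lin{\cX}\to\lin{\cY}$ one has $\dnorm{\Psi}=\Norm{\ptr{\cY}{\jam{\Psi}}}$: by Lemma~\ref{lemma:cp-dnorm} the maximum in $\dnorm{\Psi}=\tnorm{\Psi}$ is attained at a density operator $\rho$, and $\tnorm{\Psi(\rho)}=\Ptr{}{\Psi(\rho)}=\Ptr{}{\Pa{I_{\cY}\ot\rho^{\trans}}\jam{\Psi}}=\Ptr{\cX}{\rho^{\trans}\ptr{\cY}{\jam{\Psi}}}$ by item~\ref{item:prop:jam:act}, whose maximum over $\rho$ is the largest eigenvalue of the positive semidefinite operator $\ptr{\cY}{\jam{\Psi}}$. \emph{(b)} The two completely positive maps $X\mapsto\ptr{\cW}{CXC^{*}}$ and $X\mapsto\ptr{\cY}{CXC^{*}}$ obtained from $C:\cX\to\cY\ot\cW$ have the \emph{same} diamond norm $\norm{C}^{2}$; this follows from \emph{(a)} and $\ptr{\cY\ot\cW}{\col{C}\row{C}}=(C^{*}C)^{\trans}$. \emph{(c)} Any Stinespring representation $\Phi(X)=\ptr{\cZ}{AXB^{*}}$ yields a decomposition, with $\jam{\Phi}_{\rL}=\ptr{\cZ}{\col{A}\row{A}}$, $\jam{\Phi}_{\rR}=\ptr{\cZ}{\col{B}\row{B}}$, $\Phi_{\rL}:X\mapsto\ptr{\cZ}{AXA^{*}}$, $\Phi_{\rR}:X\mapsto\ptr{\cZ}{BXB^{*}}$, and by \emph{(a)}--\emph{(b)} $\dnorm{\Phi_{\rL}}=\norm{A}^{2}$, $\dnorm{\Phi_{\rR}}=\norm{B}^{2}$. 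Two consequences: rescaling $(A,B)\mapsto(tA,t^{-1}B)$ sends $(\dnorm{\Phi_{\rL}},\dnorm{\Phi_{\rR}})$ to $(t^{2}\dnorm{\Phi_{\rL}},t^{-2}\dnorm{\Phi_{\rR}})$, so any representation can be \emph{balanced} to have $\dnorm{\Phi_{\rL}}=\dnorm{\Phi_{\rR}}$; and applying Fact~\ref{fact:dnorm-fidelity-char} together with $\fid(\tau,\omega)\le\sqrt{(\ptr{}{\tau})(\ptr{}{\omega})}$, Lemma~\ref{lemma:cp-dnorm}, and \emph{(b)} gives $\dnorm{\Phi}\le\sqrt{\dnorm{\Phi_{\rL}}\dnorm{\Phi_{\rR}}}$ for \emph{every} decomposition. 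Hence every balanced decomposition satisfies $\dnorm{\Phi_{\rL}}=\dnorm{\Phi_{\rR}}\ge\dnorm{\Phi}$, and it remains only to exhibit one attaining equality.

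To do this, fix a Stinespring representation $\Phi(X)=\ptr{\cZ}{A_{0}XB_{0}^{*}}$ and pass to the family $\bigl((I_{\cY}\ot V)A_{0},\,(I_{\cY}\ot V^{-*})B_{0}\bigr)$ indexed by invertible $V$ on $\cZ$; each represents $\Phi$ (the $\cZ$-factors cancel under $\ptr{\cZ}$), and with $G=V^{*}V$ one computes $\dnorm{\Phi_{\rL}}=\Norm{A_{0}^{*}(I_{\cY}\ot G)A_{0}}=\max_{\rho}\Ptr{\cZ}{G\,\Psi_{A_{0}}(\rho)}$ and $\dnorm{\Phi_{\rR}}=\Norm{B_{0}^{*}(I_{\cY}\ot G^{-1})B_{0}}=\max_{\sigma}\Ptr{\cZ}{G^{-1}\Psi_{B_{0}}(\sigma)}$, where $\Psi_{A_{0}},\Psi_{B_{0}}:\lin{\cX}\to\lin{\cZ}$ are as in Fact~\ref{fact:dnorm-fidelity-char} and $G$ ranges over the positive definite operators on $\cZ$. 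Combining Fact~\ref{fact:dnorm-fidelity-char} with the standard variational identity $\fid(\tau,\omega)=\tfrac12\inf_{G\succ0}\bigl(\Ptr{}{G\tau}+\Ptr{}{G^{-1}\omega}\bigr)$ gives
\[ \dnorm{\Phi}=\max_{\rho,\sigma}\ \tfrac12\inf_{G\succ0}\Pa{\Ptr{\cZ}{G\,\Psi_{A_{0}}(\rho)}+\Ptr{\cZ}{G^{-1}\Psi_{B_{0}}(\sigma)}}. \]
The bracketed function is affine in $(\rho,\sigma)$ and convex in $G$ (operator convexity of $G\mapsto G^{-1}$), and the pairs of density operators form a compact convex set, so a minimax theorem (Sion's, applied on a compact truncation $\epsilon I\preceq G\preceq\epsilon^{-1}I$ and then letting $\epsilon\to0$) lets me interchange $\max$ and $\inf$, yielding $\dnorm{\Phi}=\inf_{G\succ0}\tfrac12\bigl(\dnorm{\Phi_{\rL}}+\dnorm{\Phi_{\rR}}\bigr)$ and, after $G\mapsto tG$, $\dnorm{\Phi}=\inf_{G\succ0}\sqrt{\dnorm{\Phi_{\rL}}\dnorm{\Phi_{\rR}}}$. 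A compactness argument then shows the infimum is attained at some $G^{\star}\succ0$; for the associated decomposition one has $\dnorm{\Phi_{\rL}}\dnorm{\Phi_{\rR}}=\dnorm{\Phi}^{2}$, and balancing as above gives a decomposition with $\dnorm{\Phi_{\rL}}=\dnorm{\Phi_{\rR}}=\dnorm{\Phi}$, as required.

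The main obstacle is this last step: the interchange of $\max$ and $\inf$ and the attainment of the infimum over $G$. These amount to strong duality for the semidefinite program computing $\dnorm{\Phi}$, and the usual subtleties (non-compactness of the $G$-domain, and the possibility that the minimising gauge is only approached as $G$ degenerates, which corresponds to passing to a smaller dilation space $\cZ$) must be handled with some care. Everything preceding it is routine bookkeeping with the identities for $\vectorize$, $\jam{\cdot}$, and the fidelity characterisation already in hand.
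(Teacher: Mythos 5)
Your first two paragraphs are fine and in fact re-derive, from Fact~\ref{fact:dnorm-fidelity-char} and Lemma~\ref{lemma:cp-dnorm}, the inequality $\dnorm{\Phi}\leq\sqrt{\dnorm{\Phi_\rL}\,\dnorm{\Phi_\rR}}$ for every decomposition, which the paper does not prove but simply imports (the proof in the text consists of citing Ref.~\cite{Watrous05} for the identity $\dnorm{\Phi}^2=\inf\tnorm{\Phi_\rL}\tnorm{\Phi_\rR}$, citing Ref.~\cite{KitaevS+02} for attainment of the infimum, rescaling so the two trace norms agree, and then applying Lemma~\ref{lemma:cp-dnorm}). So your easy direction is correct and more self-contained than the paper's.

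The gap is exactly where you say the ``main obstacle'' lies, and it is not dispatched by what you wrote. The lemma's entire content is the \emph{existence} of a decomposition achieving equality, and your argument for it rests on two unproven claims: the interchange of $\max$ and $\inf$ (plausible via Sion on a truncation, but the limit $\epsilon\to 0$ is not carried out), and, more seriously, the assertion that ``a compactness argument then shows the infimum is attained at some $G^{\star}\succ 0$.'' That assertion is false as stated: the domain $\{G\succ 0\}$ is neither closed nor bounded, and the optimal gauge can be singular or can run off to infinity on a subspace of $\cZ$ that the images of $\Psi_{A_0}$ or $\Psi_{B_0}$ do not fully support, so the optimum within your gauge orbit of a fixed Stinespring pair $(A_0,B_0)$ need only be approached in the limit. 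To close the gap you must either work with a set of decompositions that is genuinely closed --- e.g.\ observe that $(P,Q)=(\jam{\Phi}_\rL,\jam{\Phi}_\rR)$ is a decomposition of $\jam{\Phi}$ if and only if the block operator with diagonal blocks $P,Q$ and off-diagonal blocks $\jam{\Phi},\jam{\Phi}^*$ is positive semidefinite, a closed condition, after which boundedness of the normalized decompositions plus continuity of $\tnorm{\Phi_\rL}\tnorm{\Phi_\rR}$ yields an optimizer by compactness --- or invoke semidefinite-programming strong duality with attainment (as in Refs.~\cite{Watrous05,KitaevS+02}, which is what the paper does). As written, the proposal proves the upper bound for all decompositions but not the existence of an optimal one, which is the statement being claimed.
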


\begin{proof}

It was noted in the conclusion of Ref.~\cite{Watrous05} that
\[ \dnorm{\Phi}^2=\inf\Set{\tnorm{\Phi_\rL}\cdot\tnorm{\Phi_\rR}} \]
where the infimum is taken over all decompositions $\Phi_\rL,\Phi_\rR$ of $\Phi$.
The existence of a fixed pair $(\Phi_\rL,\Phi_\rR)$ that achieve this infimum may be argued as in Ref.~\cite{KitaevS+02}.
By rescaling, we may assume that $\tnorm{\Phi_\rL}=\tnorm{\Phi_\rR}$, from which it follows that
\[ \dnorm{\Phi} = \tnorm{\Phi_\rL} = \tnorm{\Phi_\rR}. \]
The lemma then follows from Lemma \ref{lemma:cp-dnorm}.
\end{proof}

\subsubsection{Achieving the maximum}
%======================================================================

We are now ready to exhibit two theorems that establish two of the claims from the beginning of this section.

For any Hermitian-preserving super-operator $\Phi$, we let $\abs{\Phi}$ denote the completely positive super-operator with $\jam{\abs{\Phi}}=\abs{\jam{\Phi}}$.
Just as with operators, there is a decomposition $\Phi_\rL,\Phi_\rR$ of $\Phi$ with $\Phi_\rL=\Phi_\rR=\abs{\Phi}$.
Moreover, if $\Phi$ is completely positive then we may take $\Phi_\rL=\Phi_\rR=\Phi$.

\begin{theorem}[Achieving the maximum for completely positive super-operators]
\label{theorem:cp-dnorm}

For any Hermitian-preserving super-operator $\Phi$ it holds that
\[ \dnorm{\Phi}^*\leq\dnorm{\abs{\Phi}}^*. \]
Moreover, the maximum in the definition of $\dnorm{\abs{\Phi}}^*$ is achieved by a completely positive and trace-preserving super-operator.

As a consequence, if $\Phi$ is completely positive then
%For any completely positive super-operator $\Phi$
the maximum in the definition of $\dnorm{\Phi}^*$ is achieved by a completely positive and trace-preserving super-operator.

\end{theorem}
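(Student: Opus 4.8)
The plan is to reduce everything to the Choi--Jamio\l kowski inner-product identity $\inner{\Psi}{\Phi}=\inner{\jam{\Psi}}{\jam{\Phi}}$ together with Lemmas \ref{lemma:cp-dnorm}, \ref{lemma:cauchy}, and \ref{lm:dnorm}, identifying super-operators with their Choi operators throughout. For the stated \emph{inequality}, I would first pick $\Psi$ attaining the maximum defining $\dnorm{\Phi}^*$, so $\dnorm{\Psi}=1$ and $\dnorm{\Phi}^*=\abs{\inner{\jam{\Psi}}{\jam{\Phi}}}$. Since $\Phi$ is Hermitian-preserving, $\jam{\Phi}$ is Hermitian and one may take $\jam{\Phi}_{\rL}=\jam{\Phi}_{\rR}=\abs{\jam{\Phi}}=\jam{\abs{\Phi}}$. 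Apply Lemma \ref{lm:dnorm} to $\Psi$ to obtain a decomposition $\Psi_{\rL},\Psi_{\rR}$ of $\Psi$ --- both completely positive --- with $\dnorm{\Psi_{\rL}}=\dnorm{\Psi_{\rR}}=\dnorm{\Psi}=1$. Feeding these into Lemma \ref{lemma:cauchy} with $A=\jam{\Psi}$, $B=\jam{\Phi}$ gives
\[ \abs{\inner{\jam{\Psi}}{\jam{\Phi}}}^2 \ \leq\ \inner{\jam{\Psi_{\rL}}}{\jam{\abs{\Phi}}}\cdot\inner{\jam{\Psi_{\rR}}}{\jam{\abs{\Phi}}}. \]
Each factor on the right equals $\inner{\Psi_{\rL}}{\abs{\Phi}}$, resp.\ $\inner{\Psi_{\rR}}{\abs{\Phi}}$, is nonnegative (a pairing of positive semidefinite operators), and is bounded above by $\dnorm{\abs{\Phi}}^*$ because $\dnorm{\Psi_{\rL}}=\dnorm{\Psi_{\rR}}=1$. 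Taking square roots yields $\dnorm{\Phi}^*\leq\dnorm{\abs{\Phi}}^*$.

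Next I would show the maximizer for $\dnorm{\abs{\Phi}}^*$ may be taken completely positive. Write $\Lambda=\abs{\Phi}$, which is completely positive, so $\jam{\Lambda}\succeq 0$ and $\jam{\Lambda}_{\rL}=\jam{\Lambda}_{\rR}=\jam{\Lambda}$ is a valid decomposition. Let $\Psi$ attain the maximum for $\dnorm{\Lambda}^*$ and repeat the previous computation with $\Phi$ replaced by $\Lambda$:
\[ \Pa{\dnorm{\Lambda}^*}^2 = \abs{\inner{\jam{\Psi}}{\jam{\Lambda}}}^2 \leq \inner{\Psi_{\rL}}{\Lambda}\inner{\Psi_{\rR}}{\Lambda} \leq \Pa{\dnorm{\Lambda}^*}^2, \]
so equality holds throughout; in particular $\inner{\Psi_{\rL}}{\Lambda}=\dnorm{\Lambda}^*$, while $\Psi_{\rL}$ is completely positive with $\dnorm{\Psi_{\rL}}=1$. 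Hence we may assume the maximizer $\Psi$ is completely positive, and then $\inner{\Psi}{\Lambda}=\inner{\jam{\Psi}}{\jam{\Lambda}}\geq 0$ equals $\dnorm{\Lambda}^*$.

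For the final upgrade to trace-preserving I would use the formula $\dnorm{\Psi}=\norm{\ptr{\cY}{\jam{\Psi}}}$ valid for completely positive $\Psi:\lin{\cX}\to\lin{\cY}$: by Lemma \ref{lemma:cp-dnorm} the maximum defining $\tnorm{\Psi}=\dnorm{\Psi}$ is attained at a density operator $\rho$ with $\tnorm{\Psi(\rho)}=\tr{\Psi(\rho)}$, and item \ref{item:prop:jam:act} gives $\tr{\Psi(\rho)}=\tr{\Pa{I_{\cY}\ot\rho^{\trans}}\jam{\Psi}}=\tr{\rho^{\trans}\,\ptr{\cY}{\jam{\Psi}}}$, whose maximum over density operators is $\norm{\ptr{\cY}{\jam{\Psi}}}$. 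Thus $\dnorm{\Psi}=1$ forces $\ptr{\cY}{\jam{\Psi}}\preceq I_{\cX}$. Set $\Delta=I_{\cX}-\ptr{\cY}{\jam{\Psi}}\succeq 0$ and let $\Psi'$ be the super-operator with $\jam{\Psi'}=\jam{\Psi}+\tfrac{1}{\dim(\cY)}I_{\cY}\ot\Delta$; then $\jam{\Psi'}\succeq 0$ and $\ptr{\cY}{\jam{\Psi'}}=I_{\cX}$, so $\Psi'$ is completely positive and trace-preserving, and $\dnorm{\Psi'}=\norm{I_{\cX}}=1$, hence feasible. Since $\jam{\Psi}$, $\jam{\Lambda}$, $I_{\cY}\ot\Delta$ are all positive semidefinite, $\inner{\Psi'}{\Lambda}\geq\inner{\Psi}{\Lambda}=\dnorm{\Lambda}^*$, forcing equality; so $\Psi'$ is a completely positive, trace-preserving maximizer, proving the ``moreover'' claim. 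Finally, if $\Phi$ is itself completely positive then $\jam{\Phi}\succeq 0$, so $\abs{\jam{\Phi}}=\jam{\Phi}$ and $\abs{\Phi}=\Phi$, and the consequence is immediate.

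The two delicate points are the passage to a completely positive maximizer and the padding step. For the former, one must notice that the Cauchy--Schwarz estimate of Lemma \ref{lemma:cauchy}, combined with the norm-preserving splitting of Lemma \ref{lm:dnorm}, is exactly tight enough to make the completely positive ``halves'' of an optimal $\Psi$ themselves optimal, which is what lets us discard any non-positive part. For the latter, the crux is the identity $\dnorm{\Psi}=\norm{\ptr{\cY}{\jam{\Psi}}}$ for completely positive maps: it shows that enlarging $\jam{\Psi}$ by a positive semidefinite operator completing its partial trace to $I_{\cX}$ costs nothing in the constraint $\dnorm{\Psi'}=1$ while only increasing $\inner{\Psi'}{\Lambda}$ --- and here positivity of $\jam{\Lambda}$, i.e.\ complete positivity of $\abs{\Phi}$, is exactly what is being exploited.
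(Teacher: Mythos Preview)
Your proof is correct and follows essentially the same route as the paper: split the optimal $\Psi$ via Lemma~\ref{lm:dnorm}, apply the Cauchy--Schwarz estimate of Lemma~\ref{lemma:cauchy} against the decomposition $\jam{\Phi}_{\rL}=\jam{\Phi}_{\rR}=\jam{\abs{\Phi}}$, and then pad the resulting completely positive maximizer up to a trace-preserving one. The only notable difference is in that last step: the paper invokes Proposition~\ref{prop:unit-ball} (hence the polar machinery of Section~\ref{subsec:convex-polarity}) to obtain $\jam{\Xi}\preceq\jam{\Xi'}$ for a quantum operation $\Xi'$, whereas you derive $\dnorm{\Psi}=\norm{\ptr{\cY}{\jam{\Psi}}}$ directly from Lemma~\ref{lemma:cp-dnorm} and build $\Psi'$ explicitly---a more self-contained argument that avoids the earlier strategy-norm theory.
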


\begin{proof}

First we show that the maximum is attained by a completely positive super-operator---the trace-preserving property will be established later.

Let $\Psi$ be any super-operator with $\dnorm{\Psi}=1$ achieving the maximum in the definition of $\dnorm{\Phi}^*$, so that $\dnorm{\Phi}^*=\abs{\inner{\Psi}{\Phi}}$.
By Lemma \ref{lm:dnorm} there exists a decomposition $\Psi_\rL,\Psi_\rR$ of $\Psi$ with
\[ \dnorm{\Psi}=\dnorm{\Psi_\rL}=\dnorm{\Psi_\rR}=1. \]
Moreover, $\Psi_\rL$ and $\Psi_\rR$ are completely positive.

As $\Phi$ is Hermitian-preserving, there is a decomposition $\Phi_\rL,\Phi_\rR$ of $\Phi$ with $\Phi_\rL=\Phi_\rR=\abs{\Phi}$.
By Lemma \ref{lemma:cauchy} we have
\begin{align*}
\dnorm{\Phi}^*
&= \abs{\inner{\Psi}{\Phi}}
\leq \sqrt{\inner{\Psi_\rL}{\abs{\Phi}}}\cdot\sqrt{\inner{\Psi_\rR}{\abs{\Phi}}} \\
&\leq \max \Set{\inner{\Xi}{\abs{\Phi}} : \textrm{ $\dnorm{\Xi}=1$ and $\Xi$ is completely positive}}.
\end{align*}
Thus, $\dnorm{\Phi}^*\leq\dnorm{\abs{\Phi}}^*$ and the maximum in the definition of $\dnorm{\abs{\Phi}}^*$ is attained by a completely positive super-operator $\Xi$.

For the trace-preserving property, we note that $1=\dnorm{\Xi}=\snorm{\Xi}{1}$.
%by Proposition \ref{prop:dnorm-gen}.
Then by Proposition \ref{prop:unit-ball} and the complete positivity of $\Xi$ it holds that $\jam{\Xi}\preceq\jam{\Xi'}$ for some completely positive and trace-preserving super-operator $\Xi'$ with $\dnorm{\Xi'}=1$.
The desired result follows from the observation that $\inner{\Xi'}{\abs{\Phi}}\geq\inner{\Xi}{\abs{\Phi}}$.
\end{proof}

\begin{theorem}[Achieving the maximum for Hermitian-preserving super-operators]
\label{thm:dnorm-dual-herm}

For any Hermitian-preserving super-operator $\Phi$
the maximum in the definition of $\dnorm{\Phi}^*$ is achieved by a Hermitian-preserving super-operator.

\end{theorem}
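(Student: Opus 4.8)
The plan is to reduce the Hermitian-preserving case to the completely positive case already handled in Theorem~\ref{theorem:cp-dnorm}. First I would take an arbitrary super-operator $\Psi$ with $\dnorm{\Psi}=1$ attaining $\dnorm{\Phi}^* = \abs{\inner{\Psi}{\Phi}}$, and argue that $\abs{\inner{\Psi}{\Phi}}$ is unchanged if we multiply $\Psi$ by a global phase; so we may assume $\inner{\Psi}{\Phi}$ is a nonnegative real number, hence $\inner{\Psi}{\Phi} = \inner{\Psi}{\Phi}^* = \inner{\Phi}{\Psi}$ (conjugate), which forces $\inner{\Psi}{\Phi}$ to equal its own conjugate and in particular equals the real part $\inner{(\Psi+\Psi')/2}{\Phi}$ where $\Psi'$ is the Hermitian adjoint-type super-operator defined by $\jam{\Psi'} = \jam{\Psi}^*$ (the super-operator whose Choi-Jamio\l kowski operator is the conjugate-transpose of that of $\Psi$). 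Since $\Phi$ is Hermitian-preserving, $\jam{\Phi}$ is Hermitian, so $\inner{\Psi}{\Phi} = \inner{\jam{\Psi}}{\jam{\Phi}}$ and $\inner{\Psi'}{\Phi} = \inner{\jam{\Psi}^*}{\jam{\Phi}} = \overline{\inner{\jam{\Psi}}{\jam{\Phi}^*}} = \overline{\inner{\jam{\Psi}}{\jam{\Phi}}}$; combining, the Hermitian-preserving super-operator $\hat\Psi \defeq (\Psi + \Psi')/2$ satisfies $\inner{\hat\Psi}{\Phi} = \inner{\Psi}{\Phi} = \dnorm{\Phi}^*$.

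The remaining point is that $\dnorm{\hat\Psi} \leq 1$, so that $\hat\Psi$ is a legitimate competitor in the maximization defining $\dnorm{\Phi}^*$ and hence attains it. Here I would invoke Proposition~\ref{prop:dnorm-gen} (\propdnormgen), which identifies $\dnorm{\cdot}$ with $\snorm{\cdot}{1}$ on Hermitian-preserving super-operators, together with the observation that the Choi-Jamio\l kowski isomorphism preserves inner products and is compatible with taking adjoints of operators. Concretely, $\dnorm{\Psi'}$ equals $\dnorm{\Psi}$ because $\Psi'$ is obtained from $\Psi$ by a conjugate-transpose operation on the Choi matrix, and the diamond norm is invariant under this operation (this follows from the definition of $\dnorm{\cdot}$ via $\tnorm{\cdot}$ applied to $\Psi \ot \idsup{\cW}$, whose output operators are sent to their conjugate transposes, and the trace norm is invariant under conjugate-transpose). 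Then convexity of $\dnorm{\cdot}$ gives $\dnorm{\hat\Psi} \leq \tfrac12\dnorm{\Psi} + \tfrac12\dnorm{\Psi'} = 1$.

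The main obstacle I anticipate is bookkeeping around the operation $\Psi \mapsto \Psi'$: one must be careful that $\jam{\Psi'} = \jam{\Psi}^*$ really corresponds to a well-defined super-operator on the correct spaces and that the identity $\jam{\Psi^*_{\mathrm{sop}}} = $ (something) is not confused with it --- there is a genuine risk of conflating the super-operator adjoint $\Psi^*$ (which acts $\lin{\cY}\to\lin{\cX}$) with the map whose Choi operator is the conjugate-transpose of $\jam{\Psi}$ (which still acts $\lin{\cX}\to\lin{\cY}$). I would therefore spell out explicitly in operator-sum form: if $\Psi(X) = \sum_i A_i X B_i^*$ then define $\Psi'(X) = \sum_i B_i X A_i^*$, verify directly from the properties of $\vectorize$ that $\jam{\Psi'} = \jam{\Psi}^*$, note $\Psi'$ is Hermitian-preserving iff $\Psi$ is when averaged, and check $\dnorm{\Psi'} = \dnorm{\Psi}$ from this representation. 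Once these identifications are nailed down, the rest is the short averaging argument above, and we conclude that $\hat\Psi$ is a Hermitian-preserving super-operator with $\dnorm{\hat\Psi} = 1$ (we may rescale if the inequality is strict, using that $\dnorm{\Phi}^* > 0$ unless $\Phi = 0$, in which case the claim is trivial) achieving the maximum.
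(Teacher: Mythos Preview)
Your symmetrization argument is correct and takes a genuinely different route from the paper. The paper proceeds through Theorem~\ref{theorem:cp-dnorm}: it passes to $\abs{\Phi}$, invokes the completely positive case to obtain a CPTP maximizer $\Psi$ for $\dnorm{\abs{\Phi}}^*$, and then constructs a Hermitian-preserving $\Xi$ by setting $\jam{\Xi}=\Pi^+\jam{\Psi}\Pi^+ - \Pi^-\jam{\Psi}\Pi^-$ where $\Pi^\pm$ are the Jordan projectors of $\jam{\Phi}$; showing $\dnorm{\Xi}\leq 1$ then requires the unit-ball characterization (Proposition~\ref{prop:unit-ball}) and the identification $\dnorm{\cdot}=\snorm{\cdot}{1}$ (Proposition~\ref{prop:dnorm-gen}). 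Your approach sidesteps all of this: you take an arbitrary maximizer $\Psi$, form $\Psi'$ with $\jam{\Psi'}=\jam{\Psi}^*$, and average. The only nontrivial ingredient is $\dnorm{\Psi'}=\dnorm{\Psi}$, which you correctly derive directly from the definition via $(\Psi'\ot\identity)(W)=((\Psi\ot\identity)(W^*))^*$ and the adjoint-invariance of the trace norm. Your argument is more elementary and does not depend on the strategy-norm machinery at all; the paper's argument, in exchange, exhibits a maximizer with more explicit structure tied to the Jordan decomposition of $\jam{\Phi}$.

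One small cleanup: your reference to Proposition~\ref{prop:dnorm-gen} in the middle of the argument is a red herring, since that proposition applies only to Hermitian-preserving super-operators and $\Psi,\Psi'$ individually need not be. You immediately give the correct direct argument anyway, so just drop that sentence.
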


\begin{proof}

Let $\jam{\Phi}=T^+-T^-$ be a Jordan decomposition of $\jam{\Phi}$, so that $\jam{\abs{\Phi}}=T^++T^-$.
By Theorem \ref{theorem:cp-dnorm} it holds that $\dnorm{\Phi}^*\leq\dnorm{\abs{\Phi}}^*$ and the maximum in the definition of $\dnorm{\abs{\Phi}}^*$ is achieved by a completely positive and trace-preserving super-operator $\Psi$.

Let $\Pi^\pm$ denote the projections onto the support of $T^\pm$ and let $\Xi$ denote the Hermitian-preserving super-operator with
\[ \jam{\Xi} = \Pi^+\jam{\Psi}\Pi^+ - \Pi^-\jam{\Psi}\Pi^-. \]
It is easily verified that
\[\inner{\Psi}{\abs{\Phi}} = \inner{\Xi}{\Phi}.\]
The left side of this equality is $\dnorm{\abs{\Phi}}^*$, which we know
%from Theorem \ref{theorem:cp-dnorm}
to be at least as large as $\dnorm{\Phi}^*$.
Hence, the desired result will follow once we establish that the Hermitian-preserving super-operator $\Xi$ has $\dnorm{\Xi}\leq 1$.

Toward that end, we note that
\[ \abs{\jam{\Xi}} = \Pi^+\jam{\Psi}\Pi^+ + \Pi^-\jam{\Psi}\Pi^- \preceq \jam{\Psi}. \]
As $\Psi$ is completely positive and trace-preserving, it holds that $\jam{\Psi}$ denotes a one-round non-measuring strategy.
Thus, by Proposition \ref{prop:unit-ball} it holds that $\snorm{\Xi}{1}\leq 1$.
The theorem follows from Proposition \ref{prop:dnorm-gen}, which tells us that $\dnorm{\Xi}=\snorm{\Xi}{1}$.
\end{proof}

\subsubsection{Not achieving the maximum}
%======================================================================

We now show that Theorems \ref{theorem:cp-dnorm} and \ref{thm:dnorm-dual-herm} cannot be extended beyond completely positive and Hermitian-preserving super-operators, respectively.
Our counterexamples rely upon the following lemma.

\begin{lemma}[Diamond norm for 1-dimensional spaces]
\label{lemma:dnorm:1d}

The following hold for any super-operator $\Phi:\lin{\cX}\to\lin{\cY}$:
\begin{enumerate}
\item \label{item:lemma:dim1:tnorm}
Suppose $\dim(\cX)=1$ and let $A\in\lin{\cY}$ be the operator with $\Phi:\alpha\mapsto\alpha A$.
It holds that $\dnorm{\Phi} = \tnorm{\Phi} = \tnorm{A}$.

\item \label{item:lemma:dim1:norm}
Suppose $\dim(\cY)=1$ and let $A\in\lin{\cX}$ be the operator with $\Phi:X\mapsto\inner{A}{X}$.
It holds that $\dnorm{\Phi} = \tnorm{\Phi} = \norm{A}$.

\end{enumerate}

\end{lemma}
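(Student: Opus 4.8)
The plan is to treat the two items separately, each time reducing the diamond norm computation to a one-dimensional auxiliary space and then invoking the relevant characterization of the diamond norm for Hermitian-preserving maps together with the structure of a one-dimensional input or output.

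\emph{Item \ref{item:lemma:dim1:tnorm}.} Here $\dim(\cX)=1$, so $\Phi$ is determined by a single operator $A\in\lin{\cY}$ via $\Phi:\alpha\mapsto\alpha A$. First I would observe that the super-operator trace norm $\tnorm{\Phi}=\max_{\tnorm{X}=1}\tnorm{\Phi(X)}$ collapses: the only operators $X\in\lin{\cX}=\lin{\Complex}$ are scalars, and $\tnorm{\alpha}=\abs{\alpha}$, so $\tnorm{\Phi}=\tnorm{A}$ immediately. For the diamond norm, the inequality $\dnorm{\Phi}\geq\tnorm{\Phi}=\tnorm{A}$ is automatic. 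For the reverse inequality, I would note that for any space $\cW$, the tensor extension $\Phi\ot\idsup{\cW}$ acts on $\lin{\cX\ot\cW}=\lin{\cW}$ by $X\mapsto A\ot X$, so $\tnorm{(\Phi\ot\idsup{\cW})(X)}=\tnorm{A\ot X}=\tnorm{A}\tnorm{X}$ by multiplicativity of the trace norm under tensor products. Taking the supremum over $\tnorm{X}=1$ gives $\tnorm{\Phi\ot\idsup{\cW}}=\tnorm{A}$ for every $\cW$, hence $\dnorm{\Phi}=\tnorm{A}$. This item is essentially a direct computation with no real obstacle.

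\emph{Item \ref{item:lemma:dim1:norm}.} Now $\dim(\cY)=1$, so $\Phi:\lin{\cX}\to\lin{\Complex}$ has the form $\Phi:X\mapsto\inner{A}{X}$ for some $A\in\lin{\cX}$. Again $\dnorm{\Phi}\geq\tnorm{\Phi}$ holds trivially, and $\tnorm{\Phi}=\max_{\tnorm{X}=1}\abs{\inner{A}{X}}=\norm{A}$ by the duality of the trace norm and the operator norm (property \ref{item:op-tr-dual} of Section 5.2 as stated in the excerpt, or directly from the singular value decomposition). The work is in showing $\dnorm{\Phi}\leq\norm{A}$. For any space $\cW$, the extension $\Phi\ot\idsup{\cW}:\lin{\cX\ot\cW}\to\lin{\cW}$ acts by $X\mapsto\Ptr{\cX}{(A\ot I_\cW)X}$ — i.e., it contracts the $\cX$ factor against $A$. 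I would bound $\tnorm{(\Phi\ot\idsup{\cW})(X)}$ for $\tnorm{X}=1$: writing $(\Phi\ot\idsup{\cW})(X)=\Ptr{\cX}{(A\ot I_\cW)X}$ and using $\tnorm{\Ptr{\cX}{Y}}\leq\tnorm{Y}$ together with the Hölder-type inequality $\tnorm{(A\ot I_\cW)X}\leq\norm{A\ot I_\cW}\tnorm{X}=\norm{A}\tnorm{X}$ gives $\tnorm{(\Phi\ot\idsup{\cW})(X)}\leq\norm{A}$. Hence $\dnorm{\Phi}\leq\norm{A}$ and equality follows.

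\emph{Main obstacle.} I expect the only genuinely delicate point to be the operator-norm bound in item \ref{item:lemma:dim1:norm}: one must correctly identify the action of $\Phi\ot\idsup{\cW}$ as a partial-trace contraction against $A\ot I_\cW$ and then chain the contractivity of the partial trace under the trace norm with the multiplicativity of the operator norm under tensoring with the identity. Everything else is routine, and in particular the identity $\dnorm{\Phi}=\tnorm{\Phi}$ in each case is exactly the statement that no auxiliary system helps when one of the two spaces is trivial, which is intuitively clear and which the above estimates make precise. If desired, one could alternatively deduce item \ref{item:lemma:dim1:tnorm} from Lemma \ref{lemma:cp-dnorm} in the special case where $A\succeq 0$ and then extend by a polar-decomposition argument, but the direct computation above is cleaner and avoids any positivity hypothesis.
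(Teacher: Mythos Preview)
Your proof is correct. For item~\ref{item:lemma:dim1:tnorm} you and the paper proceed essentially the same way: the paper invokes the general fact that the auxiliary space $\cW$ may be taken with $\dim(\cW)\leq\dim(\cX)=1$, while you compute $\tnorm{\Phi\ot\idsup{\cW}}$ explicitly via multiplicativity of the trace norm under tensor products---these are interchangeable. For item~\ref{item:lemma:dim1:norm} your route genuinely differs. You identify $(\Phi\ot\idsup{\cW})(X)=\ptr{\cX}{(A^*\ot I_\cW)X}$ (your formula has $A$ where it should read $A^*$, but this is harmless since $\norm{A^*}=\norm{A}$) and then chain trace-norm contractivity of the partial trace with the H\"older bound $\tnorm{BX}\leq\norm{B}\,\tnorm{X}$. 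The paper instead restricts to a rank-one $X=uv^*$ achieving the diamond norm, converts the trace norm to a trace via a unitary, expands both vectors in an orthonormal basis of $\cW$, and applies Cauchy--Schwarz to the resulting sum $\sum_i\sqrt{p_iq_i}\,\inner{A}{w_iv_i^*}$. Your argument is shorter and conceptually cleaner, relying only on two standard norm inequalities; the paper's computation is more hands-on but has the side benefit of explicitly exhibiting an optimizing rank-one input, so that $\tnorm{\Phi}=\norm{A}$ is read off from the same calculation rather than obtained separately from Schatten duality.
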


\begin{proof}

Item \ref{item:lemma:dim1:tnorm} is a simple consequence of the fact that the auxiliary space $\cW$ in the definition of the diamond norm can be assumed to have dimension no larger than $\dim(\cX)$.
This fact immediately implies $\dnorm{\Phi}=\tnorm{\Phi}$.
It follows immediately from the definition of the super-operator trace norm that $\tnorm{\Phi}=\tnorm{A}$.

For item \ref{item:lemma:dim1:norm}, choose a space $\cW$ and vectors $u,v\in\cX\ot\cW$ with
\[ \dnorm{\Phi} = \Tnorm{\Pa{\Phi\ot\idsup{\cW}}(uv^*)}. \]
Let $U\in\lin{\cW}$ be a unitary such that
\[ \Tnorm{\Pa{\Phi\ot\idsup{\cW}}(uv^*)} = \Ptr{}{U\Pa{\Phi\ot\idsup{\cW}}(uv^*)} \]
and let $w\in\cX\ot\cW$ be the vector given by $w=(I_\cX\ot U)u$, so that
\[ \Tnorm{\Pa{\Phi\ot\idsup{\cW}}(uv^*)} = \Ptr{}{\Pa{\Phi\ot\idsup{\cW}}(wv^*)}. \]
As in the proof of Lemma \ref{lemma:cp-dnorm}, choose an orthonormal basis $\set{e_i}$ of $\cW$ and write
\[
  w=\sum_{i=1}^{\dim(\cW)} \sqrt{q_i} w_i\ot e_i, \quad
  v=\sum_{i=1}^{\dim(\cW)} \sqrt{p_i} v_i\ot e_i
\]
for some choice of unit vectors $\set{w_i},\set{v_i}\subset\cX$ and probability distributions $\set{q_i},\set{p_i}$.
Then
\[
  \Ptr{}{\Pa{\Phi\ot\idsup{\cW}}(wv^*)}
%  = \sum_{i,j=1}^{\dim(\cW)} \sqrt{p_i}\sqrt{q_j} \Inner{A}{w_iv_j} \Ptr{}{e_ie_j^*}
  = \sum_{i=1}^{\dim(\cW)} \sqrt{p_i}\sqrt{q_i} \Inner{A}{w_iv_i}
  \leq \norm{A} \sum_{i=1}^{\dim(\cW)} \sqrt{p_i}\sqrt{q_i}
  \leq \norm{A}
\]
where the final inequality is Cauchy-Schwarz.
Equality is obtained for real numbers $p_1=q_1=1$ and unit vectors $w_1,v_1$ maximizing the inner product $\inner{A}{w_1v_1}$.
We have thus established $\dnorm{\Phi}=\norm{A}$.
As $\tnorm{\Phi(w_1v_1^*)}=\norm{A}$, we also have $\dnorm{\Phi}=\tnorm{\Phi}$.
\end{proof}

%\begin{proof}[Second proof of item \ref{item:lemma:dim1:norm}]
%
%For any super-operator $\Psi:\lin{\cX}\to\lin{\cY}$ the \emph{completely bounded norm} $\norm{\Psi}_\mathrm{cb}$ of $\Psi$ is defined by
%\[ \norm{\Psi}_\mathrm{cb} \defeq \sup_{\textrm{space $\cW$},\ \norm{X}=1} \Norm{\Pa{\Psi\ot\idsup{\cW}}(X)}.\]
%As noted in Ref.~\cite{Watrous09}, it holds that $\norm{\Psi}_\mathrm{cb} = \dnorm{\Psi^*}$ for all $\Psi$ where $\Psi^*$ denotes the adjoint of $\Psi$.
%Moreover, the auxiliary space $\cW$ can be assumed to have dimension no larger than that of the output space $\cY$.
%
%The adjoint $\Phi^*$ satisfies $\Phi^*:\alpha\mapsto\alpha A$.
%\comment{Shit. I still need to deal with the auxiliary space.}
%\end{proof}

\begin{proposition}[Theorems \ref{theorem:cp-dnorm} and \ref{thm:dnorm-dual-herm} do not extend]

For each of the two maxima
\[
  \dnorm{\Phi}^* = \max_{\dnorm{\Psi}=1} \Abs{\Inner{\Psi}{\Phi}},
  \quad 
  \dnorm{\Phi} = \max_{\dnorm{\Psi}^*=1} \Abs{\Inner{\Psi}{\Phi}}
\]
there exist super-operators $\Phi$ such that the maximum is not attained by any Hermitian-preserving super-operator $\Psi$.

Moreover, there exist Hermitian-preserving super-operators $\Phi$ such that the maximum is not attained by any completely positive super-operator $\Psi$.

\end{proposition}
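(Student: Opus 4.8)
The plan is to construct all the required examples on one-dimensional input or output spaces, where Lemma~\ref{lemma:dnorm:1d} collapses the diamond norm to an ordinary Schatten norm. Recall that if $\dim(\cX)=1$ then every super-operator $\Phi:\lin{\cX}\to\lin{\cY}$ has the form $\Phi:\alpha\mapsto\alpha A$ with $\jam{\Phi}=A\in\lin{\cY}$ and $\dnorm{\Phi}=\tnorm{A}$, whereas if $\dim(\cY)=1$ then every such $\Phi$ has the form $\Phi:X\mapsto\inner{A}{X}$ with $\jam{\Phi}=\overline{A}$ and $\dnorm{\Phi}=\norm{A}$. Using the inner-product-preserving property of the Choi--Jamio\l kowski isomorphism to rewrite $\inner{\Psi}{\Phi}$, and computing $\dnorm{\cdot}$ and $\dnorm{\cdot}^*$ of the competing super-operators $\Psi$ from Lemma~\ref{lemma:dnorm:1d} together with the standard duality of the trace and operator norms, I would reduce each of the two maxima in the proposition to one of the two elementary optimisations $\max\Set{\abs{\inner{B}{A}}:\tnorm{B}=1}=\norm{A}$ or $\max\Set{\abs{\inner{B}{A}}:\norm{B}=1}=\tnorm{A}$. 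It then remains only to identify the optimal $B$ and to check a spectral property of the associated $\Psi$.

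For the two statements about Hermitian-preserving $\Psi$, take $A=N$ to be the rank-one nilpotent $e_1e_2^*$ on $\Complex^2$. For the first maximum, set $\dim(\cX)=1$ and $\Phi:\alpha\mapsto\alpha N$: the equality condition in $\abs{\inner{B}{N}}\le\tnorm{B}\,\norm{N}$ forces $B$ to be rank one with singular vectors matching those of $N$, so every maximiser of $\max_{\dnorm{\Psi}=1}\abs{\inner{\Psi}{\Phi}}=\norm{N}=1$ is of the form $\Psi:\alpha\mapsto\alpha cN$ with $\abs{c}=1$. For the second maximum, set $\dim(\cY)=1$ and $\Phi:X\mapsto\inner{N}{X}$: an entirely analogous computation again forces every maximiser to be a unimodular multiple of $N$. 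In both cases $cN$ is not Hermitian, so $\Psi$ is not Hermitian-preserving; since $cN$ is also nilpotent, hence not positive semidefinite, $\Psi$ is not completely positive either.

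For the two statements about completely positive $\Psi$, take instead $A=H:=e_1e_1^*-e_2e_2^*$ on $\Complex^2$. Put $\dim(\cY)=1$ with $\Phi:X\mapsto\inner{H}{X}$ for the first maximum and $\dim(\cX)=1$ with $\Phi:\alpha\mapsto\alpha H$ for the second; both $\Phi$ are Hermitian-preserving because $\jam{\Phi}$ equals $\overline{H}=H$, respectively $H$, which is Hermitian. Each relevant optimisation becomes $\max\Set{\abs{\inner{B}{H}}:\norm{B}=1}=\tnorm{H}=2$, and from $\abs{b_{1,1}-b_{2,2}}\le\abs{b_{1,1}}+\abs{b_{2,2}}\le 2\norm{B}$ the equality case forces $\abs{b_{1,1}}=\abs{b_{2,2}}=\norm{B}$ and $b_{2,2}=-b_{1,1}$; the Cauchy--Schwarz equality condition then pins $B$ down to $B=cH$ with $\abs{c}=1$. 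Consequently $\jam{\Psi}$ is a unimodular multiple of $H$, whose eigenvalues are $c$ and $-c$ and are never both nonnegative, so no maximiser $\Psi$ is completely positive. Since $\pm H$ are themselves Hermitian-preserving, these examples are consistent with Theorem~\ref{thm:dnorm-dual-herm} and isolate precisely the failure of Theorem~\ref{theorem:cp-dnorm} once $\Phi$ is merely Hermitian-preserving.

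The step I expect to be the main obstacle is the last one. Because the two maxima involve $\abs{\inner{\Psi}{\Phi}}$ rather than $\inner{\Psi}{\Phi}$, one may always replace a candidate $\Psi$ by $-\Psi$, and when the operator $A$ attached to $\Phi$ is semidefinite this sign freedom lets one exhibit a positive-semidefinite, hence completely positive, maximiser. Ruling out completely positive maximisers therefore genuinely requires $A$ to have eigenvalues of both signs and trivial kernel, so that the equality analysis leaves no optimiser other than exact unimodular multiples of $A$; confirming that $H=e_1e_1^*-e_2e_2^*$ has these properties, and that no other optimiser slips through, is where the (modest) real content of the argument lies, the rest being bookkeeping with Lemma~\ref{lemma:dnorm:1d}.
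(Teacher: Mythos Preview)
Your proposal is correct and follows essentially the same route as the paper: both reduce via Lemma~\ref{lemma:dnorm:1d} to the trace-norm/operator-norm duality on $2\times2$ operators and then exhibit explicit counterexamples. The only cosmetic differences are that the paper's Hermitian-case counterexample is $\left(\begin{smallmatrix}1&1\\0&1\end{smallmatrix}\right)$ verified by computing its numerical radius, whereas you use the cleaner $e_1e_2^*$ and instead characterise all maximisers directly; for the completely positive case both arguments use a difference of orthogonal projections, your $H=e_1e_1^*-e_2e_2^*$ being exactly the paper's $\Pi_0-\Pi_1$.
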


\begin{proof}

The counterexamples presented here are all achieved via reduction from the diamond norm and its dual to the trace norm and its dual (the operator norm).

Let $\cX$ be a space of dimension one, so that for each super-operator $\Phi:\lin{\cX}\to\lin{\cY}$ there is an operator $A\in\lin{\cY}$ with $\Phi:\alpha\mapsto\alpha A$.
For any $\Psi:\alpha\mapsto\alpha B$ it is easily verified that
\[\Abs{\Inner{\Psi}{\Phi}}=\Abs{\Inner{B}{A}}.\]
Moreover, it is clear that $\Psi$ is Hermitian-preserving if and only if $B$ is Hermitian and that $\Psi$ is completely positive if and only if $B$ is positive semidefinite.
By Lemma \ref{lemma:dnorm:1d} it holds that
\begin{equation}
  \dnorm{\Phi}^*
  = \max_{\dnorm{\Psi}=1} \Abs{\Inner{\Psi}{\Phi}}
  = \max_{\tnorm{B}=1} \Abs{\Inner{B}{A}}
  = \norm{A}.
\end{equation}
%\begin{align*}
%  &\max \Set{ \abs{\inner{\Psi}{\Phi}} : \dnorm{\Psi}=1, \textrm{ $\Psi$ is Hermitian-preserving} }\\
%  ={}&\max \Set{ \abs{\inner{B}{A}} : \tnorm{B}=1, \textrm{ $B$ is Hermitian} }
%\end{align*}
We now exhibit an operator $A$ such that the maximum is not achieved by any Hermitian operator $B$.
Consider the operator
\[ A = \left( \begin{array}{cc} 1&1\\0&1 \end{array} \right). \]
The quantity \[ \max_{\substack{\tnorm{B}=1,\\\textrm{$B$ Hermitian}}} \abs{\inner{B}{A}} = \max_{\norm{v}=1} \abs{v^*Av} \] is known as the \emph{numerical radius} $\varrho(A)$ of $A$.
It is easy to compute $\varrho(A)=\frac{3}{2}$, which is strictly smaller than $\norm{A}=\frac{1+\sqrt{5}}{2}$.

Next, let $\cY$ be a space of dimension one, so that for each super-operator $\Phi:\lin{\cX}\to\lin{\cY}$ there is an operator $A\in\lin{\cY}$ with $\Phi:X\mapsto\inner{A}{X}$.
For any $\Psi:X\mapsto\inner{B}{X}$ it is easily verified that
\[\Abs{\Inner{\Psi}{\Phi}}=\Abs{\Inner{B}{A}}.\]
Moreover, it is clear that $\Psi$ is Hermitian-preserving if and only if $B$ is Hermitian and that $\Psi$ is completely positive if and only if $B$ is positive semidefinite.
By Lemma \ref{lemma:dnorm:1d} it holds that
\begin{equation}
  \dnorm{\Phi}^*
  = \max_{\dnorm{\Psi}=1} \Abs{\Inner{\Psi}{\Phi}}
  = \max_{\norm{B}=1} \Abs{\Inner{B}{A}}
  = \tnorm{A}.
\end{equation}
%We now exhibit a Hermitian operator $A$ such that the maximum is not achieved by any positive semidefinite $B$.
Consider the operator \[ A=\Pi_0-\Pi_1 \] where $\Pi_0,\Pi_1\in\pos{\cY}$ are nonzero orthogonal projections.
It is easy to verify that the maximum is not achieved by any positive semidefinite $B$.

The proposition is now proved for the maximum
\( \dnorm{\Phi}^* = \max_{\dnorm{\Psi}=1} \Abs{\Inner{\Psi}{\Phi}}. \)
The proof for the other maximum follows along similar lines.
\end{proof}

\subsubsection{The dual of the strategy $1$-norm agrees with the dual of the diamond norm}
%======================================================================

Finally, we have what we need to establish agreement between $\snorm{\cdot}{1}^*$ and $\dnorm{\cdot}^*$ for Hermitian-preserving super-operators.

\def\propdnormgendual{Agreement with the dual of the diamond norm}
\begin{proposition}[\propdnormgendual]
\label{prop:dnorm-gen-dual}

For every Hermitian-preserving super-operator $\Phi$ it holds that \( \dnorm{\Phi}^*=\snorm{\Phi}{1}^* \).

\end{proposition}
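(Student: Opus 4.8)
The plan is to prove $\dnorm{\Phi}^* = \snorm{\Phi}{1}^*$ for Hermitian-preserving $\Phi$ by exhibiting mutual inequalities, exploiting the agreement $\dnorm{\Psi} = \snorm{\Psi}{1}$ from Proposition \ref{prop:dnorm-gen} together with the duality of $\snorm{\cdot}{1}$ and $\snorm{\cdot}{1}^*$ from Proposition \ref{prop:duality} and the various ``achieving the maximum'' results established earlier in this section.

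First I would unwind the definitions. By Proposition \ref{prop:duality} (\propduality), for Hermitian-preserving $\Phi$ we have $\snorm{\Phi}{1}^* = \max\Set{\inner{\Psi}{\Phi} : \snorm{\Psi}{1} \leq 1,\ \Psi \text{ Hermitian-preserving}}$, where the maximum ranges over Hermitian-preserving $\Psi$. On the other side, $\dnorm{\Phi}^* = \max\Set{\abs{\inner{\Psi}{\Phi}} : \dnorm{\Psi} = 1}$, and by Theorem \ref{thm:dnorm-dual-herm} this maximum is attained by a Hermitian-preserving super-operator $\Psi$; moreover, since $\Phi$ is Hermitian-preserving and $\Psi$ may be taken Hermitian-preserving, $\inner{\Psi}{\Phi}$ is real, so we may drop the absolute value by replacing $\Psi$ with $-\Psi$ if necessary. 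Thus $\dnorm{\Phi}^* = \max\Set{\inner{\Psi}{\Phi} : \dnorm{\Psi} \leq 1,\ \Psi \text{ Hermitian-preserving}}$, where I have also used that enlarging the constraint from $\dnorm{\Psi} = 1$ to $\dnorm{\Psi} \leq 1$ cannot decrease the maximum (and cannot increase it either, by scaling, since the objective is linear and $\inner{\Psi}{\Phi}$ can be taken nonnegative). The key point is now that on Hermitian-preserving super-operators the two constraint sets $\Set{\Psi : \dnorm{\Psi} \leq 1}$ and $\Set{\Psi : \snorm{\Psi}{1} \leq 1}$ coincide: this is exactly Proposition \ref{prop:dnorm-gen} (\propdnormgen), which asserts $\dnorm{\Psi} = \snorm{\Psi}{1}$ for every Hermitian-preserving $\Psi$. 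Since the feasible sets agree and the objective functional $\Psi \mapsto \inner{\Psi}{\Phi}$ is identical in both optimization problems, the two maxima are equal, giving $\dnorm{\Phi}^* = \snorm{\Phi}{1}^*$.

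The one point requiring care — and the main obstacle — is justifying the reduction of each maximum to a maximum over \emph{Hermitian-preserving} super-operators, and confirming that this restriction is harmless. For $\snorm{\Phi}{1}^*$ this is immediate from Proposition \ref{prop:duality}, whose statement is phrased for Hermitian-preserving super-operators throughout. For $\dnorm{\Phi}^*$ it rests on Theorem \ref{thm:dnorm-dual-herm}, which guarantees the dual diamond norm of a Hermitian-preserving super-operator is attained by a Hermitian-preserving optimizer. I would also need to check the small technical step that the absolute value $\abs{\inner{\Psi}{\Phi}}$ in the definition of $\dnorm{\cdot}^*$ can be removed when both operators are Hermitian-preserving, using the fact that $\inner{\jam{\Psi}}{\jam{\Phi}} = \inner{\Psi}{\Phi}$ is a real inner product of two Hermitian operators and that $\Psi \mapsto -\Psi$ preserves both the Hermitian-preserving property and the diamond norm. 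With these bookkeeping points dispatched, the proof reduces to observing that two optimization problems with the same feasible region and the same linear objective have the same optimal value, which is essentially a one-line argument once Propositions \ref{prop:dnorm-gen} and \ref{prop:duality} and Theorem \ref{thm:dnorm-dual-herm} are in hand.

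In short: the plan is (i) rewrite $\dnorm{\Phi}^*$ as a maximum over Hermitian-preserving $\Psi$ with $\dnorm{\Psi} \leq 1$ using Theorem \ref{thm:dnorm-dual-herm} and the reality of the pairing; (ii) rewrite $\snorm{\Phi}{1}^*$ as a maximum over Hermitian-preserving $\Psi$ with $\snorm{\Psi}{1} \leq 1$ using Proposition \ref{prop:duality}; (iii) invoke Proposition \ref{prop:dnorm-gen} to identify the two feasible sets; (iv) conclude the two values are equal. I expect step (i) — specifically marshalling Theorem \ref{thm:dnorm-dual-herm} and handling the absolute value — to be the only place demanding more than a sentence.
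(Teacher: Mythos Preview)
Your proposal is correct and follows essentially the same approach as the paper: invoke Theorem~\ref{thm:dnorm-dual-herm} to restrict the maximizer in $\dnorm{\Phi}^*$ to Hermitian-preserving $\Psi$, use Proposition~\ref{prop:dnorm-gen} to swap the constraint $\dnorm{\Psi}\le 1$ for $\snorm{\Psi}{1}\le 1$, and then recognize the result as $\snorm{\Phi}{1}^*$ via Proposition~\ref{prop:duality}. The paper compresses this into three lines without explicitly massaging away the absolute value or the $=1$ versus $\le 1$ distinction, but your more careful bookkeeping is harmless and arguably cleaner.
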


\begin{proof}

We have
\begin{align*}
\dnorm{\Phi}^*
&= \max \Set{ \abs{\inner{\Psi}{\Phi}} : \textrm{ $\dnorm{\Psi}=1$ and $\Psi$ is Hermitian-preserving}} \\
&= \max \Set{ \abs{\inner{\Psi}{\Phi}} : \textrm{ $\snorm{\Psi}{1}=1$ and $\Psi$ is Hermitian-preserving}} \\
&= \snorm{\Phi}{1}^*.
\end{align*}
The first equality is Theorem \ref{thm:dnorm-dual-herm}, the second Proposition \ref{prop:dnorm-gen}, and the third Proposition \ref{prop:duality}.
\end{proof}

\part{Local Operations with Shared Entanglement} \label{part:LOSE}

\chapter{Introduction to Local Operations} \label{ch:intro-LO}

%\subsubsection{Motivation?}
%At an intuitive level, separable states contain only classical correlations among the different components of the partition.
%Entangled states, on the other hand, admit correlations that cannot be achieved by any local realistic model of information such as classical information theory.
%Naturally, entanglement is the subject of intense study in the quantum information community 

In Part \ref{part:LOSE} of this thesis we are interested in various classes of ``local'' quantum operations, which are operations that can be jointly implemented
%without communication
by two or more parties who act on distinct portions of the input system and who do not communicate once they receive their portions of the input.
%The parties are not permitted to communicate once they receive their portions of the input, but they can arrange ahead of time to share resources---such as randomness or entanglement---that might allow them to correlate their separate operations.

In this introductory chapter for Part \ref{part:LOSE} we provide formal definitions and immediate observations for three such classes: local operations, local operations with shared randomness, and local operations with shared entanglement.
We also introduce convenient shorthand notations for separable operators and for the vector space spanned by the local operations---objects which are of fundamental importance to the work in this part of the thesis.

\subsubsection{Local operations and the space that they span}
%=============================================================================%

The simplest examples of quantum operations that can be implemented jointly by multiple parties without communication are the product operations,
which consist of several completely independent quantum operations juxtaposed and viewed as a larger single operation.
%With no prior resources shared among the parties, these operations consist simply of several completely independent quantum operations juxtaposed and viewed as a larger single operation.

\begin{definition}[Local operation]
  A quantum operation $\Lambda:\lin{\kprod{\cX}{1}{m}}\to\lin{\kprod{\cY}{1}{m}}$ is \emph{$m$-party local} with respect to the input partition $\cX_1,\dots,\cX_m$ and output partition $\cY_1,\dots,\cY_m$ if it can be written as a product operation of the form
  \[ \Lambda = \Psi_1\ot\cdots\ot\Psi_m \]
%  \( \Lambda = \kprod{\Psi}{1}{m} \)
  where each $\Psi_i:\lin{\cX_i}\to\lin{\cY_i}$ is a quantum operation.
  Typically, the partition of the input and output spaces is implicit and clear from the context.
\end{definition}

The vector space of super-operators spanned by the local operations is of paramount interest in Part \ref{part:LOSE} of this thesis.
As such, we introduce a convenient shorthand notation for this space.

Toward that end, recall that a super-operator $\Psi_i:\lin{\cX_i}\to\lin{\cY_i}$ denotes a quantum operation if and only if $\Psi_i$ is completely positive and trace-preserving.
In terms of Choi-Jamio\l kowski representations, the trace-preserving property of $\Psi_i$ is characterized by the condition $\ptr{\cY_i}{\jam{\Psi_i}}=I_{\cX_i}$.
The set of all operators $X$ obeying the inhomogeneous linear condition $\ptr{\cY_i}{X}=I_{\cX_i}$ is not a vector space,
but this set is easily extended to a unique smallest vector space by including its closure under multiplication by real scalars.
%, and it shall soon become apparent that doing so poses us no additional difficulty.
%With this idea in mind, we introduce a convenient shorthand notation.

\begin{definition}[Shorthand notation for the space spanned by local operations]
\label{def:tpspace}
  For complex Euclidean spaces $\cX_i,\cY_i$, let
  \[
    \bQ_i \defeq
    \Set{
      X\in\her{\cY_i\otimes\cX_i} : \ptr{\cY_i}{X}=\lambda I_{\cX_i}
      \textrm{ for some } \lambda\in\mathbb{R}
    }
  \]
  denote the subspace of the real vector space $\her{\cY_i\ot\cX_i}$ of Hermitian operators $X$ of the form $X=\jam{\Psi}$ for which $\Psi:\lin{\cX_i}\to\lin{\cY_i}$ is a trace-preserving super-operator, or a scalar multiple thereof.
  (Throughout this thesis, the spaces $\cX_i,\cY_i$ are implicit whenever the notation $\bQ_i$ is used.)
  
  Let $\cX_1,\dots,\cX_m$ and $\cY_1,\dots,\cY_m$ be complex Euclidean spaces.
  Employing our shorthand notation for Kronecker products, the subspace $\kprod{\bQ}{1}{m}$ of $\her{\kprod{\cY}{1}{m}\ot\kprod{\cX}{1}{m}}$ spanned by the $m$-party local operations is given by
  \[ \kprod{\bQ}{1}{m} = \spn\Set{X_1\ot\cdots\ot X_m : X_i\in\bQ_i}. \]
\end{definition}

By the end of Part \ref{part:LOSE}, we will see that the space $\kprod{\bQ}{1}{m}$ contains not only all the local operations, but also the local operations with shared randomness and with shared entanglement, as well as an even broader class of local quantum operations called ``no-signaling'' operations.
Moreover, in Chapter \ref{ch:no-sig} we will also see a converse result---that any quantum operation whose Choi-Jamio\l kowski representation lies inside $\kprod{\bQ}{1}{m}$ is necessarily a no-signaling operation.

\subsubsection{Local operations with shared randomness, separable operators}
%=============================================================================%

Local (product) operations are not the only quantum operations admitted by our model.
In particular, there is nothing
%in our model
to stop the parties from meeting ahead of time so as to prepare shared resources that might allow them to correlate their separate quantum operations.

Consider, for example, a convex combination $\sum_j p_j \Lambda_j$ of $m$-party local operations $\Lambda_j$.
Such a quantum operation is legal in our model because it can be implemented by $m$ parties who share prior knowledge of an integer $j$ sampled according to the probability distribution $p_j$.
In this case, the shared resource is randomness.

%Consider, for example, a convex combination \[ \Lambda = \sum_{j=1}^n p_j \Psi_{1,j}\ot\Psi_{2,j} \] of product operations.
%Such an operation is local because it can be implemented by two parties who share an integer $j$ sampled according to the probability distribution $p_j$.
%Specifically, when Alice and Bob meet prior to receiving their portions of the input state, they sample $j$.
%Then, when Alice receives her portion of the input, she applies $\Psi_{1,j}$.
%Similarly, when Bob receives his portion of the input, he applies $\Psi_{2,j}$.

\begin{definition}[Local operation with shared randomness (LOSR)]
\label{def:LOSR}
  A quantum operation $\Lambda:\lin{\kprod{\cX}{1}{m}}\to\lin{\kprod{\cY}{1}{m}}$ is a \emph{$m$-party LOSR operation} if it can be written as a convex combination
%  \[ \Lambda = \sum_{j=1}^n p_j \Lambda_j \]
  of $m$-party local operations.
%  $\Lambda_1,\dots,\Lambda_n$.
\end{definition}

As a convex combination of products of positive semidefinite operators, the Choi-Jamio\l kowski representation of a LOSR operation belongs to a broader class of operators called ``separable'' operators.

\begin{definition}[Separable operator] \label{def:sep}

  Let $\bS_1,\dots,\bS_m$ be arbitrary spaces of Hermitian operators.
  An element $X$ of the product space
%  \[
%    \kprod{\bS}{1}{m} =
%    %\bS_1\otimes\cdots\otimes\bS_m = 
%    \spn\Set{S_1\ot\cdots\ot S_m: S_i\in\bS_i \textrm{ for each $i$} }
%  \]
  $\kprod{\bS}{1}{m}$
  is said to be \emph{$(\bS_1;\dots;\bS_m)$-separable} if $X$ can be written as a convex combination of product operators of the form $P_1\otimes\cdots\otimes P_m$ where each $P_i\in\bS_i^+$ is positive semidefinite.
  The set of $(\bS_1;\dots;\bS_m)$-separable operators forms a cone inside $\Pa{\kprod{\bS}{1}{m}}^+$.
\end{definition}

Using this terminology, Definition \ref{def:LOSR} states that a quantum operation $\Lambda$ is a LOSR operation if and only if $\jam{\Lambda}$ is a $\Pa{\bQ_1;\dots;\bQ_m}$-separable operator.

Another important example of separable operators are the separable quantum states.
A state $\rho\in\pos{\kprod{\cX}{1}{m}}$ is called \emph{separable} if $\rho$ can be written as a convex combination of product states of the form $\sigma_1\ot\cdots\ot\sigma_m$ where each $\sigma_i\in\pos{\cX_i}$ is a density operator.
In other words, $\rho$ denotes a separable state if and only if $\rho$ is a $\Pa{\her{\cX_1};\dots;\her{\cX_m}}$-separable operator with trace equal to one.
(Quantum states that are not separable are called \emph{entangled}.)

\subsubsection{Local operations with shared entanglement}
%=============================================================================%

LOSR operations are not the only quantum operations that can be implemented locally without communication.
In the most general case, the parties could each hold a portion of some distinguished quantum state $\sigma$;
each party produces his output by applying some quantum operation to his portions of the input system and $\sigma$.
%Upon receiving their portions of the input state $\rho$, each party could produce his output by applying some quantum operation to his portions of $\rho$ and $\sigma$.
In this case, the shared resource is $\sigma$.

\begin{definition}[Local operation with shared entanglement (LOSE)]
  A quantum operation $\Lambda:\lin{\kprod{\cX}{1}{m}}\to\lin{\kprod{\cY}{1}{m}}$ is a
  \emph{$m$-party LOSE operation with finite entanglement}
  if there exist
  spaces $\cE_1,\dots,\cE_m$,
  a quantum state $\sigma\in\pos{\kprod{\cE}{1}{m}}$,
  and quantum operations $\Psi_i:\lin{\cX_i\ot\cE_i}\to\lin{\cY_i}$ for each $i=1,\dots,m$
  such that
  \[ \Lambda : X \mapsto (\kprod{\Psi}{1}{m})(X\ot\sigma). \]
  This arrangement is depicted for the two-party case in Figure \ref{fig:LO}.
  
  The operation $\Lambda$ is a \emph{finitely approximable $m$-party LOSE operation} if it lies in the closure of the set of $m$-party LOSE operations with finite entanglement.
  The term ``LOSE operation'' is used to refer to any finitely approximable LOSE operation; the restriction to finite entanglement is made explicit whenever it is required.
\end{definition}

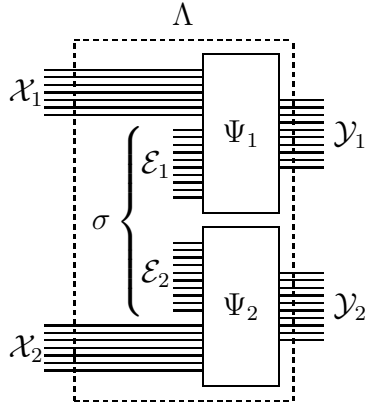
\begin{figure}[h]
  \hrulefill \vspace{2mm}
  \begin{center}
\setlength{\unitlength}{2072sp}%
\begin{picture}(4140,4617)(846,-4123)
\thinlines
%\thicklines
\put(2881,-1051){\line( 1, 0){360}}
\put(2881,-1142){\line( 1, 0){360}}
\put(2881,-1231){\line( 1, 0){360}}
\put(2881,-1322){\line( 1, 0){360}}
\put(2881,-1411){\line( 1, 0){360}}
\put(2881,-961){\line( 1, 0){360}}
\put(2881,-871){\line( 1, 0){360}}
\put(2881,-1501){\line( 1, 0){360}}
\put(2881,-1591){\line( 1, 0){360}}
\put(2881,-1681){\line( 1, 0){360}}
\put(2881,-2401){\line( 1, 0){360}}
\put(2881,-2492){\line( 1, 0){360}}
\put(2881,-2581){\line( 1, 0){360}}
\put(2881,-2672){\line( 1, 0){360}}
\put(2881,-2761){\line( 1, 0){360}}
\put(2881,-2311){\line( 1, 0){360}}
\put(2881,-2221){\line( 1, 0){360}}
\put(2881,-2851){\line( 1, 0){360}}
\put(2881,-2941){\line( 1, 0){360}}
\put(2881,-3031){\line( 1, 0){360}}
\put(1341,-3391){\line( 1, 0){1900}}
\put(1341,-3481){\line( 1, 0){1900}}
\put(1341,-3571){\line( 1, 0){1900}}
\put(1341,-3661){\line( 1, 0){1900}}
\put(1341,-3751){\line( 1, 0){1900}}
\put(1341,-3301){\line( 1, 0){1900}}
\put(1341,-3211){\line( 1, 0){1900}}
\put(4141,-2941){\line( 1, 0){540}}
\put(4141,-3031){\line( 1, 0){540}}
\put(4141,-3121){\line( 1, 0){540}}
\put(4141,-3211){\line( 1, 0){540}}
\put(4141,-3301){\line( 1, 0){540}}
\put(4141,-2851){\line( 1, 0){540}}
\put(4141,-2761){\line( 1, 0){540}}
\put(4141,-2671){\line( 1, 0){540}}
\put(4141,-2581){\line( 1, 0){540}}
\put(4141,-3391){\line( 1, 0){540}}
\put(4141,-871){\line( 1, 0){540}}
\put(4141,-961){\line( 1, 0){540}}
\put(4141,-1051){\line( 1, 0){540}}
\put(4141,-1141){\line( 1, 0){540}}
\put(4141,-1231){\line( 1, 0){540}}
\put(4141,-781){\line( 1, 0){540}}
\put(4141,-691){\line( 1, 0){540}}
\put(4141,-601){\line( 1, 0){540}}
\put(4141,-511){\line( 1, 0){540}}
\put(4141,-1321){\line( 1, 0){540}}
\put(1341,-331){\line( 1, 0){1900}}
\put(1341,-421){\line( 1, 0){1900}}
\put(1341,-511){\line( 1, 0){1900}}
\put(1341,-601){\line( 1, 0){1900}}
\put(1341,-691){\line( 1, 0){1900}}
\put(1341,-241){\line( 1, 0){1900}}
\put(1341,-151){\line( 1, 0){1900}}
\put(3241,-3931){\framebox(900,1890){$\Psi_2$}}
\put(1701,-4111){\dashbox{57}(2620,4320){}}
\put(3241,-1861){\framebox(900,1890){$\Psi_1$}}
\put(2501,-1446){\makebox(0,0)[lb]{$\cE_1$}}
\put(2501,-2746){\makebox(0,0)[lb]{$\cE_2$}}
\put(4771,-1086){\makebox(0,0)[lb]{}}
\put(4771,-3156){\makebox(0,0)[lb]{}}
\put(911,-586){\makebox(0,0)[lb]{$\cX_1$}}
\put(911,-3646){\makebox(0,0)[lb]{$\cX_2$}}
\put(4800,-1086){\makebox(0,0)[lb]{$\cY_1$}}
\put(4800,-3146){\makebox(0,0)[lb]{$\cY_2$}}
%\put(2071,-2041){\makebox(0,0)[lb]{7}}
\put(2871,389){\makebox(0,0)[lb]{$\Lambda$}}
\put(1911,-2060){$\sigma\left\{\rule[-11.25mm]{0mm}{0mm}\right.$}
%\put(6200,-2040){$\left.\rule[-2mm]{0mm}{0mm}\right\}\Phi(\rho)$}
\end{picture}%
  \end{center}
    \caption{A two-party local quantum operation $\Lambda$
    with shared entanglement.
    The local operations are represented by $\Psi_1,\Psi_2$;
    the shared entanglement by $\sigma$.}
    \label{fig:LO}
  \hrulefill
\end{figure}

\subsubsection{Finite and infinite shared entanglement}
%=============================================================================%

The need to distinguish between LOSE operations with finite entanglement and finitely approximable LOSE operations arises from the fascinating fact that there exist LOSE operations that cannot be implemented with any finite amount of shared entanglement, yet can be approximated to arbitrary precision by LOSE operations with finite entanglement \cite{LeungT+08}.

Specifically, there exists an infinite sequence $\Lambda^{(1)},\Lambda^{(2)},\dots$ of LOSE operations with the property that the dimension of the space $\kprod{\cE^{(i)}}{1}{m}$ associated with the shared state of $\Lambda^{(i)}$ is ever increasing with $i$.
Moreover, this sequence is known to converge to a quantum operation that is not a LOSE operation with finite entanglement.

An analytic consequence of this fact is that the set of LOSE operations with finite entanglement is not a closed set.
Fortunately, the work of this thesis is unhindered by a more encompassing notion of LOSE operation that includes the closure of that set.

\subsubsection{Equivalence of shared randomness and shared separable states}
%=============================================================================%

The following straightforward but tedious proposition asserts that a LOSR operation is merely a LOSE operation for which the shared state is $m$-partite separable.
It establishes that, as shared resources, a quantum state differs from randomness only when the state in question is entangled among the different parties.
%As a consequence, the only shared resource of any interest above and beyond randomness is entanglement.

Also established by this proposition is an explicit bound on the dimension of the shared state as a function of the dimensions of the input and output spaces.
Thus, while there exist LOSE operations that cannot be implemented with a finite shared state, every LOSR operation can be implemented with a finite shared state whose size scales favorably in the size of the input and output states.

\begin{proposition}[Equivalence of shared randomness and shared separable states]
\label{prop:shared-randomness}

  A quantum operation $\Lambda:\lin{\kprod{\cX}{1}{m}}\to\lin{\kprod{\cY}{1}{m}}$ is an $m$-party LOSR operation if and only if there exist
  \begin{enumerate}
  \item[(i)] spaces $\cE_1,\dots,\cE_m$ of dimension $d=\dim\Pa{\kprod{\bQ}{1}{m}}$,
  \item[(ii)] a $\Pa{\her{\cE_1};\dots;\her{\cE_m}}$-separable state $\sigma\in\pos{\kprod{\cE}{1}{m}}$, and
  \item[(iii)] quantum operations $\Psi_i:\lin{\cX_i\ot\cE_i}\to\lin{\cY_i}$
  \end{enumerate}
  such that \( \Lambda : X \mapsto (\kprod{\Psi}{1}{m})(X\ot\sigma). \)
\end{proposition}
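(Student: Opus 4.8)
The plan is to prove both directions of the equivalence, with the ``if'' direction being essentially immediate and the ``only if'' direction requiring the quantitative argument that the shared randomness can be encapsulated in a separable state whose local dimensions are bounded by $d = \dim(\kprod{\bQ}{1}{m})$.

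For the ``if'' direction, suppose we are given spaces $\cE_1,\dots,\cE_m$, a $\Pa{\her{\cE_1};\dots;\her{\cE_m}}$-separable state $\sigma$, and quantum operations $\Psi_i:\lin{\cX_i\ot\cE_i}\to\lin{\cY_i}$ with $\Lambda : X\mapsto(\kprod{\Psi}{1}{m})(X\ot\sigma)$. Writing $\sigma = \sum_j p_j\, \sigma_1^{(j)}\ot\cdots\ot\sigma_m^{(j)}$ as a convex combination of product density operators, linearity of the $\Psi_i$ and of the tensor product gives $\Lambda = \sum_j p_j\, \Lambda_j$ where $\Lambda_j = \Psi_1^{(j)}\ot\cdots\ot\Psi_m^{(j)}$ and each $\Psi_i^{(j)}:X_i\mapsto\Psi_i(X_i\ot\sigma_i^{(j)})$ is a quantum operation (completely positive and trace-preserving, since $\sigma_i^{(j)}$ is a density operator and $\Psi_i$ is a quantum operation). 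Hence each $\Lambda_j$ is $m$-party local, so $\Lambda$ is a convex combination of local operations, i.e.\ a LOSR operation by Definition~\ref{def:LOSR}.

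For the ``only if'' direction, suppose $\Lambda$ is a LOSR operation, so by Definition~\ref{def:LOSR} it is a convex combination of $m$-party local operations, and hence $\jam{\Lambda}$ is a $\Pa{\bQ_1;\dots;\bQ_m}$-separable operator. The naive decomposition $\Lambda = \sum_{j} p_j\, \Psi_1^{(j)}\ot\cdots\ot\Psi_m^{(j)}$ could a priori involve arbitrarily many terms, giving an unbounded-dimension shared randomness register. The key step is to apply Carath\'eodory's Theorem (Fact~\ref{fact:Carateodory}), adapted to Hermitian operators: the Choi-Jamio\l kowski representations $\jam{\Psi_1^{(j)}}\ot\cdots\ot\jam{\Psi_m^{(j)}}$ are elements of the real vector space $\her{\kprod{\cY}{1}{m}\ot\kprod{\cX}{1}{m}}$, but in fact each lies in the subspace $\kprod{\bQ}{1}{m}$ of dimension $d$. (This is because each $\jam{\Psi_i^{(j)}}\in\bQ_i$, and the product space $\kprod{\bQ}{1}{m}$ is by definition the span of such products; since all the operators appearing are actual operations rather than scalar multiples, they satisfy the homogeneous-plus-normalization constraints defining $\kprod{\bQ}{1}{m}$.) Since $\jam{\Lambda}$ lies in the convex hull of these points within a $d$-dimensional space, Carath\'eodory gives a representation $\Lambda = \sum_{j=1}^{d} p_j\, \Psi_1^{(j)}\ot\cdots\ot\Psi_m^{(j)}$ with at most $d$ terms (one may pad with zero-weight terms to get exactly $d$). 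Now set $\cE_i = \Complex^{d}$ with standard basis $e_1,\dots,e_d$, let $\sigma = \sum_{j=1}^{d} p_j\, (e_j e_j^*)^{\ot m}$ (manifestly $\Pa{\her{\cE_1};\dots;\her{\cE_m}}$-separable, indeed a classical-classical-$\cdots$ state), and define $\Psi_i:\lin{\cX_i\ot\cE_i}\to\lin{\cY_i}$ by $\Psi_i : X_i\ot e_j e_k^* \mapsto \delta_{jk}\,\Psi_i^{(j)}(X_i)$ extended linearly, together with $\Psi_i: X_i\ot e_j e_k^* \mapsto 0$ for $j\neq k$ but corrected by composing with the completely dephasing channel on $\cE_i$ so as to remain completely positive and trace-preserving. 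A short check confirms $\Psi_i$ is a legitimate quantum operation (it is a measure-then-prepare channel: measure the $\cE_i$ register in the standard basis to obtain $j$, then apply $\Psi_i^{(j)}$) and that $(\kprod{\Psi}{1}{m})(X\ot\sigma) = \sum_j p_j\, \Psi_1^{(j)}\ot\cdots\ot\Psi_m^{(j)}(X) = \Lambda(X)$.

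\textbf{Main obstacle.} The only genuinely delicate point is the dimension bookkeeping: one must verify that the product Choi operators appearing in a LOSR decomposition really do lie in the $d$-dimensional space $\kprod{\bQ}{1}{m}$ (not merely in the larger $\her{\kprod{\cY}{1}{m}\ot\kprod{\cX}{1}{m}}$), so that Carath\'eodory yields the claimed bound $d$ rather than $(\dim \cY_1\cdots\dim\cX_m)^2 + 1$. This follows from Definition~\ref{def:tpspace} once one observes that each factor $\jam{\Psi_i^{(j)}}$ satisfies $\ptr{\cY_i}{\jam{\Psi_i^{(j)}}} = I_{\cX_i}$ and hence lies in $\bQ_i$; the product then lies in $\spn\{X_1\ot\cdots\ot X_m : X_i\in\bQ_i\} = \kprod{\bQ}{1}{m}$. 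Everything else is routine verification of the complete-positivity and trace-preservation of the constructed measure-then-prepare channels $\Psi_i$.
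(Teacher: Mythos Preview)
Your proposal is correct and follows essentially the same approach as the paper: the ``if'' direction decomposes the separable state and absorbs each tensor factor into the corresponding local operation, while the ``only if'' direction applies Carath\'eodory's Theorem inside $\kprod{\bQ}{1}{m}$ to bound the number of terms by $d$ and then encodes the randomness via a measure-then-apply channel on a $d$-dimensional ancilla, exactly as the paper does with its orthogonal pure states $\rho_{i,j}$ and the formula $\Psi_i(X\ot E)=\sum_j\inner{E}{\rho_{i,j}}\Phi_{i,j}(X)$. Your identification of the dimension bookkeeping as the only delicate point is apt, and your verification that each $\jam{\Psi_i^{(j)}}\in\bQ_i$ (hence the products lie in $\kprod{\bQ}{1}{m}$) matches the paper's implicit reasoning.
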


\begin{proof}

Let $\sigma$ be a $\Pa{\her{\cE_1};\dots;\her{\cE_m}}$-separable state and let
$\Psi_i:\lin{\cX_i\ot\cE_i}\to\lin{\cY_i}$ be quantum operations such that
$\Lambda:X\mapsto(\kprod{\Psi}{1}{m})(X\ot\sigma)$.
%$\Lambda\pa{X}=\Pa{\kprod{\Psi}{1}{m}}\pa{X\ot\sigma}$ for all $X$.
Let
\[ \sigma=\sum_{j} p_j \sigma_{1,j}\ot\cdots\ot\sigma_{m,j} \]
be a decomposition of $\sigma$ into a convex combination of product states, where each $\sigma_{i,j}\in\pos{\cE_i}$.
For each $i$ and $j$ define a quantum operation
\[ \Phi_{i,j} : \lin{\cX_i}\to\lin{\cY_i} : X\mapsto\Psi_i\pa{X\ot\sigma_{i,j}} \]
and observe that
\[ \Lambda = \sum_{j} p_j \Phi_{1,j}\ot\cdots\ot\Phi_{m,j} \]
as desired.

Conversely, suppose that $\Lambda$ may be decomposed into a convex combination of product quantum operations as above.
By Carath\'eodory's Theorem (Fact \ref{fact:Carateodory}), this sum may be assumed to have no more than $d$ terms.
Let $\cE_1,\dots,\cE_m$ be complex Euclidean spaces of dimension $d$ and let
$\rho_{i,1},\dots,\rho_{i,d}\in\pos{\cE_i}$ be mutually orthogonal pure states for each $i$.
Let
\[ \sigma=\sum_{j=1}^d p_j \rho_{1,j} \ot\cdots\ot \rho_{m,j} \]
be a $\Pa{\her{\cE_1};\dots;\her{\cE_m}}$-separable state, let
\[
  \Psi_i:\lin{\cX_i\ot\cE_i}\to\lin{\cA_i} :
  X\ot E\mapsto\sum_{j=1}^d \inner{E}{\rho_{i,j}} \cdot \Phi_{i,j}\pa{X}
\]
be quantum operations, and observe that
\( \Lambda : X \mapsto \Pa{\kprod{\Psi}{1}{m}}\pa{X\ot\sigma} \).
\end{proof}

\subsubsection{Convexity of local operations with shared entanglement}
%=============================================================================%

It is not difficult to see that the set of local operations is not a convex set.
By definition, the set of LOSR operations is convex.
That the set of LOSE operations with finite entanglement is also convex follows from a simple argument that is similar in principle to that of Proposition \ref{prop:shared-randomness}.

\begin{proposition}[Convexity of LOSE operations]

  The set of $m$-party LOSE operations with finite entanglement is convex.
  As a consequence, the set of finitely approximable $m$-party LOSE operations is also convex.

\end{proposition}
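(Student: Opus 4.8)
The plan is to show that the set of $m$-party LOSE operations with finite entanglement is closed under taking convex combinations, and then to note that convexity of the closure of a convex set is automatic.

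First I would fix two LOSE operations $\Lambda_0,\Lambda_1 : \lin{\kprod{\cX}{1}{m}}\to\lin{\kprod{\cY}{1}{m}}$ with finite entanglement, together with a probability $p\in[0,1]$, and seek to realize $\Lambda \defeq p\Lambda_0 + (1-p)\Lambda_1$ as a LOSE operation with finite entanglement. By definition there exist spaces $\cE_1^{(b)},\dots,\cE_m^{(b)}$, shared states $\sigma^{(b)}\in\pos{\kprod{\cE^{(b)}}{1}{m}}$, and quantum operations $\Psi_i^{(b)}:\lin{\cX_i\ot\cE_i^{(b)}}\to\lin{\cY_i}$ for $b\in\set{0,1}$ and $i=1,\dots,m$ witnessing that $\Lambda_0$ and $\Lambda_1$ are LOSE. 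The key idea, exactly as in the proof of Proposition \ref{prop:shared-randomness}, is to append a single shared classical ``coin'' register that selects which of the two arrangements is used. Concretely I would let $\cF_1,\dots,\cF_m$ be qubit spaces and define the shared state
\[
  \sigma \defeq p\, (f_0 f_0^*)^{\ot m} \ot \sigma^{(0)} \;+\; (1-p)\,(f_1 f_1^*)^{\ot m} \ot \sigma^{(1)}
\]
living on $\kprod{\cF}{1}{m}\ot\kprod{\cE^{(0)}}{1}{m}\ot\kprod{\cE^{(1)}}{1}{m}$, where $\set{f_0,f_1}$ is the standard basis of a qubit; after a harmless reordering of tensor factors this is a state on $\kprod{\cE}{1}{m}$ for $\cE_i \defeq \cF_i\ot\cE_i^{(0)}\ot\cE_i^{(1)}$, which is finite-dimensional. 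Then I would define the local operation $\Psi_i:\lin{\cX_i\ot\cE_i}\to\lin{\cY_i}$ to measure its copy of $\cF_i$ in the standard basis and, conditioned on outcome $b$, apply $\Psi_i^{(b)}$ to $\cX_i\ot\cE_i^{(b)}$ while tracing out $\cE_i^{(1-b)}$. Each such $\Psi_i$ is completely positive and trace-preserving, hence a quantum operation, and a short computation checking the action on an arbitrary input $X$ shows $(\kprod{\Psi}{1}{m})(X\ot\sigma) = p\Lambda_0(X) + (1-p)\Lambda_1(X) = \Lambda(X)$, because the perfectly correlated coin registers force all $m$ parties to the same branch $b$ with probability $p$ or $1-p$.

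This establishes convexity of the set of LOSE operations with finite entanglement. For the consequence, I would invoke the elementary fact that the closure of a convex subset of a finite-dimensional real vector space is convex: if $\Lambda = \lim_k \Lambda_0^{(k)}$ and $\Lambda' = \lim_k \Lambda_1^{(k)}$ are finitely approximable LOSE operations with each $\Lambda_b^{(k)}$ a LOSE operation with finite entanglement, then for $p\in[0,1]$ the convex combinations $p\Lambda_0^{(k)} + (1-p)\Lambda_1^{(k)}$ are LOSE operations with finite entanglement by the previous paragraph, and they converge to $p\Lambda + (1-p)\Lambda'$, which is therefore a finitely approximable LOSE operation.

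I do not anticipate a serious obstacle here; the only mildly delicate point is the bookkeeping of tensor-factor orderings when one insists, as the definition does, that the shared state live on a product $\kprod{\cE}{1}{m}$ rather than on a permuted tensor product — but this is purely cosmetic, handled by absorbing the coin register $\cF_i$ into $\cE_i$ on each party's side. The ``hard part,'' such as it is, is simply writing down the conditional local operations $\Psi_i$ and verifying complete positivity and trace preservation, both of which are immediate since measurement-and-conditional-channel is a standard construction. The argument is genuinely parallel to the converse direction of Proposition \ref{prop:shared-randomness}, with the separable shared state there replaced by a state that is a classical mixture of arbitrary (possibly entangled) shared states.
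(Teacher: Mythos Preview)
Your proposal is correct and follows essentially the same approach as the paper: append a perfectly correlated classical coin register to the shared state and have each party condition their local operation on it. The only cosmetic difference is that the paper first embeds both shared states into a common space $\kprod{\cE}{1}{m}$ before adjoining the coin, whereas you carry both $\cE_i^{(0)}$ and $\cE_i^{(1)}$ alongside the coin and trace out the unused one; both variants work.
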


\begin{proof}

Let $\Lambda,\Lambda':\lin{\kprod{\cX}{1}{m}}\to\lin{\kprod{\cY}{1}{m}}$ be LOSE operations with finite entanglement.
Choose spaces $\cE_1,\dots,\cE_m$ of large enough dimension so that there exist states $\sigma,\sigma'\in\pos{\kprod{\cE}{1}{m}}$ and quantum operations
$\Psi_i,\Psi_i':\lin{\cX_i\ot\cE_i}\to\lin{\cY_i}$
such that
\begin{align*}
  \Lambda  &: X\mapsto(\kprod{\Psi}{1}{m})(X\ot\sigma), \\
  \Lambda' &: X\mapsto(\kprod{\Psi'}{1}{m})(X\ot\sigma').
\end{align*}
Let $\cF_1,\dots,\cF_m$ be two-dimensional spaces,
let $\rho_i,\rho_i'\in\cF_i$ be orthogonal pure states for each $i$, and
let
\begin{align*}
  \Phi_i
  &: \lin{\cX_i\ot\cE_i\ot\cF_i}\to\lin{\cY_i} \\
  &: X\ot E\ot F\mapsto \inner{F}{\rho_i} \cdot \Psi_i(X\ot E) + \inner{F}{\rho_i'} \cdot \Psi_i'(X\ot E)
\end{align*}
be quantum operations.
For positive real numbers $\alpha,\alpha'$ that sum to one, let
\[ \xi=\alpha\sigma\ot\kprod{\rho}{1}{m} + \alpha'\sigma'\ot\kprod{\rho'}{1}{m} \]
be a state in $\pos{\kprod{\cE}{1}{m}\ot\kprod{\cF}{1}{m}}$.
It is easy to verify that
\[ \alpha\Lambda + \alpha'\Lambda' : X \mapsto (\kprod{\Phi}{1}{m})(X\ot\xi), \]
implying that the convex combination $\alpha\Lambda + \alpha'\Lambda'$ is also a LOSE operation with finite entanglement.

That the set of finitely approximable LOSE operations is convex follows immediately from the fact that this set is the closure of the set of LOSE operations with finite entanglement, which we just showed to be convex.
\end{proof}

%\comment{It might be possible to show that for any $d$ (or for infinitely many $d$) it holds that the set of LOSE operations whose auxiliary space has dimension at most $d$ is NOT a convex set.  I think the way to do this is by the \cite{LeungT+08} protocol.  The only snag is that you need a tighter bound on the max prob of success for shared dimension $d$ than that given in the paper.  The reason is that any convex combination has shared dimension at most $d+c$ for constant $c$.  But $d$ can always be chosen large enough so that the bound $32\log^2(3d)$ is never violated by $2\log_3(d+c)$.  (the latter being the winning probability for the embezzlement strategy)}

%\section{Garbage}

%As we shall soon see, shared quantum states are equivalent to shared randomness when the quantum state in question is separable.
%But if this state is entangled among the different parties then those parties might be able to achieve correlations among their oprations that cannot be achieved with shared randomness alone.
%The prototypical example of such an operation is the optimal strategy for the so-called ``CHSH game'' credited to Clauser, Horne, Shimony, and Holt \cite{ClauserH+69} and described in Ref.~\cite{CleveH+04}.

\chapter{Ball Around the Completely Noisy Channel} \label{ch:ball}

In this chapter we show that any quantum operation $\Lambda:\lin{\kprod{\cX}{1}{m}}\to\lin{\kprod{\cY}{1}{m}}$ for which $\jam{\Lambda}$ lies in the product space $\kprod{\bQ}{1}{m}$ and close enough to a distinguished ``completely noisy'' quantum operation $\tilde\noisy$ must necessarily be a LOSR operation.
In other words, there is a ball of LOSR operations surrounding $\tilde\noisy$.
%In addition to holding interest in and of itself, the existence of this ball implies that the set of LOSR operations is eligible for use 

For each input space $\cX$ and output space $\cY$, the \emph{completely noisy channel} $\tilde\noisy:\lin{\cX}\to\lin{\cY}$ is the unique quantum operation defined by \[ \tilde\noisy:X\mapsto\frac{\ptr{}{X}}{\dim(\cY)}I_\cY. \]
The quantum state $\frac{1}{\dim(\cY)}I_\cY$ is known as the \emph{completely mixed state} of the system associated with $\cY$.
Intuitively, this state denotes complete noise of the underlying system---it represents a uniform classical distribution over each of the $\dim(\cY)$ levels in the system.
Thus, the completely noisy channel is the quantum operation that always produces complete noise as output, regardless of the input.

The reason for interest in the completely noisy channel is that it is the unique super-operator whose Choi-Jamio\l kowski representation equals the identity.
More specifically, the completely noisy channel $\tilde\noisy:\lin{\cX}\to\lin{\cY}$ has
\[ \jam{\tilde\noisy} = \frac{1}{\dim(\cY)} I_{\cY\ot\cX}. \]
It is convenient for us to temporarily ignore the scalar multiple and deal directly with the unnormalized version $\noisy$ of $\tilde\noisy$ with \[ \jam{\noisy} = I_{\cY\ot\cX}.\]

The existence of a ball of LOSR operations around the completely noisy channel is established by proving that every operator in $\kprod{\bQ}{1}{m}$ and close enough to the identity must be a $\Pa{\bQ_1;\dots;\bQ_m}$-separable operator.
%and therefore satisfies $X=\jam{\Lambda}$ for some unnormalized LOSR operation $\Lambda$.
Indeed, this fact is shown to hold not only for the specific choice $\bQ_1,\dots,\bQ_m$ of subspaces, but for \emph{every} choice $\bS_1,\dots,\bS_m$ of subspaces that contain the identity.

Choosing $\bS_1,\dots,\bS_m$ to be full spaces of Hermitian operators yields an alternate (and simpler) proof of the existence of a ball of separable quantum states surrounding the completely mixed state and, consequently, of the NP-completeness of separability testing for quantum states.
%\cite{Gurvits02,Gharibian08}.
However, the ball of separable states implied by the present work is not as large as that exhibited by Gurvits and Barnum \cite{GurvitsB02, GurvitsB03}.

The technical results we require are proven in Section \ref{sec:gen:sep}, as is a new bound for norms of super-operators.
The application of these results to establish the existence of the ball of LOSR operations is provided in Section \ref{sec:balls} along with some discussion of the details of this ball.

%=============================================================================%
%\section{Hermitian subspaces generated by separable cones}
\section{General results on separable operators}
\label{sec:gen:sep}
%=============================================================================%

Due to the general nature of the results in this chapter, discussion in the present section is abstract---applications to quantum information are deferred until Section \ref{sec:balls}.
The results presented herein were inspired by Chapter 2 of Bhatia~\cite{Bhatia07}.

\subsubsection{Hermitian subspaces generated by separable cones}
%=============================================================================%

Let $\bS$ be any subspace of Hermitian operators that contains the identity.
The cone $\bS^+$ always \emph{generates} $\bS$, meaning that each element of
$\bS$ may be written as a difference of two elements of $\bS^+$.
As proof, choose any $X\in\bS$ and let
\[ X^\pm = \frac{ \norm{X} I \pm X }{2}. \]
It is clear that $X=X^+ - X^-$ and that $X^\pm\in\bS^+$
(using the fact that $I\in\bS$).
Moreover, it holds that $\norm{X^\pm}\leq\norm{X}$ for this particular choice of
$X^\pm$.

In light of this observation, one might wonder whether it could be extended in
product spaces to separable operators.
In particular, do the
$(\bS_1;\dots;\bS_m)$-separable operators generate the product space
$\kprod{\bS}{1}{m}$?
If so, can elements of $\kprod{\bS}{1}{m}$ be generated by
$(\bS_1;\dots;\bS_m)$-separable operators with bounded norm?
The following theorem answers these two questions in the affirmative.

\begin{theorem}[Generation via bounded separable operators] \label{thm:sep-bound}

  Let $\bS_1,\dots,\bS_m$ be subspaces of
  Hermitian operators---all of which contain the identity---and let
  $n=\dim\pa{\kprod{\bS}{1}{m}}$.
  Then every element $X\in\kprod{\bS}{1}{m}$ may be written
  $X = X^+ - X^-$ where $X^\pm$ are $(\bS_1;\dots;\bS_m)$-separable
  with \[\norm{X^\pm}\leq 2^{m-1}\sqrt{n}\fnorm{X}.\]

\end{theorem}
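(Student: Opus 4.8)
The plan is to proceed by induction on the number of parties $m$, reducing the $m$-party case to the $(m-1)$-party case by peeling off one tensor factor at a time. The base case $m=1$ is exactly the single-space observation made just before the theorem: given $X\in\bS_1$, set $X^\pm=\tfrac{1}{2}\Pa{\norm{X}I\pm X}$, which are both in $\bS_1^+$ and satisfy $\norm{X^\pm}\leq\norm{X}\leq\sqrt{n}\fnorm{X}$, matching the bound $2^{m-1}\sqrt{n}\fnorm{X}$ with $m=1$.

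\textbf{Inductive step.} Suppose the result holds for $m-1$ factors. Take $X\in\kprod{\bS}{1}{m}$ and expand it in a basis adapted to the last factor: write $X=\sum_{k} Y_k\ot B_k$ where $\set{B_k}$ is a fixed orthogonal Hermitian basis of $\bS_m$ (say an orthonormal basis with respect to the Frobenius inner product, with $B_1$ proportional to $I$) and each $Y_k\in\kprod{\bS}{1}{m-1}$. The idea is to first split each $Y_k$ using the inductive hypothesis into a difference of $(\bS_1;\dots;\bS_{m-1})$-separable operators with controlled norm, and then to absorb the $B_k$ factors. The subtlety is that $B_k$ for $k\geq 2$ need not be positive semidefinite, so I first rewrite $X$ in terms of the positive operators $B_k^{\pm}=\tfrac12(\norm{B_k}I\pm B_k)\in\bS_m^+$, collecting terms so that $X=\sum_k Z_k\ot C_k$ with each $C_k\in\bS_m^+$ and each $Z_k$ a bounded combination of the $Y_j$'s; tracking the Frobenius/operator-norm bookkeeping through this rewriting is where the factor of $2$ per additional party enters, giving the claimed $2^{m-1}$. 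Applying the induction hypothesis to each $Z_k$, taking tensor products with the positive $C_k$ (which preserves separability and multiplies operator norms), and regrouping the positive and negative parts yields the desired decomposition $X=X^+-X^-$ into $(\bS_1;\dots;\bS_m)$-separable operators.

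\textbf{The norm estimate.} The heart of the argument is the quantitative bound $\norm{X^\pm}\leq 2^{m-1}\sqrt{n}\fnorm{X}$. The clean way to control this is to pass through the Frobenius norm at each stage: the operator norm of a sum is bounded by the sum of operator norms, each of which is in turn bounded by $\sqrt{(\text{local dimension})}$ times its Frobenius norm; and Frobenius norms behave multiplicatively under tensor products and are controlled by $\fnorm{X}$ via the orthogonality of the chosen bases (Parseval). The binary splitting $B_k=B_k^+-B_k^-$ at the top factor roughly doubles the number of terms one must sum over — hence the geometric factor $2^{m-1}$ after $m-1$ such splittings in the induction. I expect the main obstacle to be organizing this bookkeeping cleanly: one must choose the intermediate bases and the order of the "add $\norm{B}I$, then split" operations carefully so that the constants genuinely collapse to $2^{m-1}\sqrt{n}$ rather than something larger, and so that $\dim(\kprod{\bS}{1}{m-1})$ in the inductive hypothesis is correctly related to $n=\dim(\kprod{\bS}{1}{m})$ (it suffices to note $\dim(\kprod{\bS}{1}{m-1})\leq n$, since enlarging $\bS_m$ to contain $I$ only increases the total dimension). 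Once the invariant "$X^\pm$ separable with $\norm{X^\pm}\leq 2^{m-1}\sqrt{n}\,\fnorm{X}$" is set up correctly, each inductive step is a routine application of the triangle inequality for the operator norm together with the elementary single-space splitting from the base case.
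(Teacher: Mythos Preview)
Your approach is correct and uses the same ingredients as the paper---induction on $m$, the single-space splitting $B\mapsto\tfrac12(\norm{B}I\pm B)$, an orthonormal basis in Frobenius norm, and a Cauchy--Schwarz/$\ell_1$--$\ell_2$ estimate---but the paper organizes them more cleanly. Rather than carry the full operator $X$ through the induction, the paper first uses induction to build an orthonormal basis $\{E_j\}$ of $\kprod{\bS}{1}{m}$ in which \emph{every basis element} splits as $E_j=E_j^+-E_j^-$ with $E_j^\pm$ separable and $\norm{E_j^\pm}\leq 2^{m-1}$; the inductive step is just $K^\pm=(E^\pm\otimes F^\pm)+(E^\mp\otimes F^\mp)$. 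Only after the basis is built does one expand $X=\sum_j x_jE_j$ and apply $\sum_j|x_j|\leq\sqrt{n}\,\fnorm{X}$ once. This separation avoids exactly the bookkeeping you flagged as the ``main obstacle.'' One small correction to your sketch: the relevant dimension relation is $n=\dim(\kprod{\bS}{1}{m-1})\cdot\dim(\bS_m)$, not merely $\dim(\kprod{\bS}{1}{m-1})\leq n$; you need the product identity so that the $\sqrt{n'}$ from the inductive hypothesis combines with the $\sqrt{\dim(\bS_m)}$ from Cauchy--Schwarz over the last-factor basis to give exactly $\sqrt{n}$, not $\sqrt{n}\cdot\sqrt{\dim(\bS_m)}$.
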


\begin{proof}

First, it is proven that there is an orthonormal basis $\mathbf{B}$ of
$\kprod{\bS}{1}{m}$ with the property that every element $E\in \mathbf{B}$
may be written $E = E^+ - E^-$ where $E^\pm$ are
$(\bS_1;\dots;\bS_m)$-separable
with $\norm{E^\pm}\leq 2^{m-1}$.
The proof is by straightforward induction on $m$.
The base case $m=1$ follows immediately from the earlier observation that
every element $X\in\bS_1$ is generated by some $X^\pm\in\bS_1^+$ with
$\norm{X^\pm}\leq \norm{X}$.
In particular, any element $E=E^+-E^-$ of any orthonormal basis of $\bS_1$ has
\[\norm{E^\pm}\leq\norm{E}\leq\fnorm{E}=1=2^0.\]

In the general case, the induction hypothesis states that there is
an orthonormal basis $\mathbf{B}'$ of $\kprod{\bS}{1}{m}$ with the desired
property.
Let $\mathbf{B}_{m+1}$ be any orthonormal basis of $\bS_{m+1}$.
As in the base case, each $F\in \mathbf{B}_{m+1}$ is generated by some
$F^\pm\in\bS_{m+1}^+$ with $\norm{F^\pm}\leq \norm{F} \leq 1$.
Define the orthonormal basis
$\mathbf{B}$ of $\kprod{\bS}{1}{m+1}$ to consist of all product operators
of the form $E\otimes F$ for $E\in \mathbf{B}'$ and $F\in \mathbf{B}_{m+1}$.
Define
\begin{align*}
  K^+ &\defeq \Br{E^+\otimes F^+} + \Br{E^-\otimes F^-},\\
  K^- &\defeq \Br{E^+\otimes F^-} + \Br{E^-\otimes F^+}.
\end{align*}
It is clear that $E\otimes F=K^+-K^-$ and $K^\pm$ are
$(\bS_1;\dots;\bS_{m+1})$-separable.
Moreover,
\[
  \Norm{K^+} \leq \Norm{E^+\otimes F^+} + \Norm{E^-\otimes F^-} \leq
  \Br{2^{m-1}\times 1} + \Br{2^{m-1}\times 1} = 2^m.
\]
A similar computation yields $\Norm{K^-}\leq 2^m$, which establishes the
induction.

Now, let $X\in\kprod{\bS}{1}{m}$ and let $x_j\in\mathbb{R}$ be the unique
coefficients of $X$ in the aforementioned orthonormal basis
$\mathbf{B}=\set{E_1,\dots,E_n}$.
Define
\begin{align*}
  X^+ &\defeq \sum_{j\::\:x_j>0} x_j E_j^+ - \sum_{j\::\:x_j<0} x_j E_j^-,\\
  X^- &\defeq \sum_{j\::\:x_j>0} x_j E_j^- - \sum_{j\::\:x_j<0} x_j E_j^+.
\end{align*}
It is clear that $X=X^+-X^-$ and $X^\pm$ are
$(\bS_1;\dots;\bS_m)$-separable.
Employing the triangle inequality and the vector norm inequality
$\norm{x}_1\leq\sqrt{n}\norm{x}_2$, it follows that
\[
  \Norm{X^+} \leq 2^{m-1} \sum_{j=1}^n \Abs{x_j} \leq
  2^{m-1} \sqrt{n} \sqrt{ \sum_{j=1}^n \Abs{x_j}^2 } =
  2^{m-1} \sqrt{n} \Fnorm{X}.
\]
A similar computation yields $\Norm{X^-}\leq 2^{m-1}\sqrt{n}\fnorm{X}$,
which completes the proof.
\end{proof}

%=============================================================================%
%\section{Separable operators as a subtraction from the identity}
%=============================================================================%
\subsubsection{Separable operators as a subtraction from the identity}
%=============================================================================%

At the beginning of this section it was observed that
$\norm{X}I-X$ lies in $\bS^+$ for all $X\in\bS$.
One might wonder whether more could be expected of $\norm{X}I-X$ than mere
positive semidefiniteness.
For example, under what conditions is $\norm{X}I-X$ a
$\pa{\bS_1;\dots;\bS_m}$-separable operator?
The following theorem provides three such conditions.
Moreover, the central claim of this section is established by this theorem.

\begin{theorem}[Ball of separable operators surrounding the identity] \label{thm:identity-sep}

Let $\bS_1,\dots,\bS_m$ be subspaces of Hermitian operators---all of which
contain the identity---and let $n=\dim\pa{\kprod{\bS}{1}{m}}$.
The following hold:
\begin{enumerate}

\item \label{item:identity-sep:1}
  \( \Norm{P}I - P \)
  is $\pa{\bS_1;\dots;\bS_m}$-separable whenever $P$ is a product operator
  of the form $P=P_1\otimes\cdots\otimes P_m$ where each $P_i\in\bS_i^+$.

\item \label{item:identity-sep:2}
  \( \Br{n+1}\Norm{Q}I - Q \) is
  $\pa{\bS_1;\dots;\bS_m}$-separable whenever $Q$ is
  $\pa{\bS_1;\dots;\bS_m}$-separable.

\item \label{item:identity-sep:3}
  \( 2^{m-1}\sqrt{n}\Br{n+1}\Fnorm{X} I - X \) is
  $\pa{\bS_1;\dots;\bS_m}$-separable for every
  $X\in\kprod{\bS}{1}{m}$.

\end{enumerate}

\end{theorem}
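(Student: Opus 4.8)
The plan is to prove the three items in the stated order, each bootstrapping the next, and to use repeatedly the trivial but handy \emph{enlargement fact}: if $\lambda I - Q$ is $(\bS_1;\dots;\bS_m)$-separable and $\mu\geq\lambda$, then $\mu I - Q$ is $(\bS_1;\dots;\bS_m)$-separable as well, since $\mu I - Q = (\lambda I - Q) + (\mu-\lambda)I$ and the identity $I$ is itself a product operator with each factor the (positive semidefinite) identity in the corresponding $\bS_i^+$, so $(\mu-\lambda)I$ is separable and the separable operators form a cone.

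For item \ref{item:identity-sep:1} the key is a telescoping identity. Writing $R_i \defeq \Norm{P_i}I - P_i$, one checks $R_i\succeq 0$ (the top eigenvalue of the positive semidefinite $P_i$ equals $\Norm{P_i}$) and $R_i\in\bS_i$ (both $P_i$ and $I$ lie in the subspace $\bS_i$), hence $R_i\in\bS_i^+$. Using $\Norm{P} = \Norm{P_1}\cdots\Norm{P_m}$, the standard telescoping expansion
\[
  \Norm{P}I - P \;=\; \sum_{k=1}^m \Br{P_1\ot\cdots\ot P_{k-1}} \ot \Pa{\Norm{P_k}I - P_k} \ot \Br{\Norm{P_{k+1}}I\ot\cdots\ot\Norm{P_m}I}
\]
exhibits $\Norm{P}I - P$ as a nonnegative combination of product operators whose factors are positive semidefinite and lie in the appropriate $\bS_i$; since the separable operators form a cone, this is exactly what is needed.

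For item \ref{item:identity-sep:2}, let $Q$ be $(\bS_1;\dots;\bS_m)$-separable. First I would note that the conic hull of the scale-closed set of product operators $\Set{P_1\ot\cdots\ot P_m : P_i\in\bS_i^+}$ (which contains $0$) coincides with its convex hull, so Carath\'eodory's Theorem (Fact \ref{fact:Carateodory}), applied in the $n$-dimensional space $\kprod{\bS}{1}{m}$, lets us write $Q = \sum_{j=1}^{n+1} P^{(j)}$ with each $P^{(j)} = P_1^{(j)}\ot\cdots\ot P_m^{(j)}$ and $P_i^{(j)}\in\bS_i^+$. Since $Q - P^{(j)} = \sum_{k\neq j}P^{(k)}\succeq 0$ while $P^{(j)},Q\succeq 0$, we get $\Norm{P^{(j)}}\leq\Norm{Q}$. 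Applying item \ref{item:identity-sep:1} to each $P^{(j)}$ and summing shows $\Pa{\sum_j\Norm{P^{(j)}}}I - Q$ is separable, and $\sum_j\Norm{P^{(j)}}\leq (n+1)\Norm{Q}$; the enlargement fact then raises the coefficient to exactly $(n+1)\Norm{Q}$.

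For item \ref{item:identity-sep:3}, invoke Theorem \ref{thm:sep-bound} to write $X = X^+ - X^-$ with $X^\pm$ separable and $\Norm{X^\pm}\leq 2^{m-1}\sqrt{n}\,\Fnorm{X}$. Apply item \ref{item:identity-sep:2} to $X^+$ and use the enlargement fact together with $\Norm{X^+}\leq 2^{m-1}\sqrt{n}\,\Fnorm{X}$ to conclude that $2^{m-1}\sqrt{n}(n+1)\Fnorm{X}\,I - X^+$ is separable; adding the separable operator $X^-$ gives $2^{m-1}\sqrt{n}(n+1)\Fnorm{X}\,I - X$. The only genuinely substantive ingredient is the telescoping identity of item \ref{item:identity-sep:1}; the main place to be careful is item \ref{item:identity-sep:2}, where one must (i) reduce to $n+1$ summands using only the convex-hull form of Carath\'eodory (hence the conic-hull-equals-convex-hull remark) and (ii) bound $\Norm{P^{(j)}}$ by $\Norm{Q}$ via positive semidefiniteness — everything else is routine assembly.
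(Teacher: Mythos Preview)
Your proof is correct and follows the same overall architecture as the paper's: item~\ref{item:identity-sep:1} feeds item~\ref{item:identity-sep:2} (via Carath\'eodory and the norm bound $\Norm{P^{(j)}}\leq\Norm{Q}$), which feeds item~\ref{item:identity-sep:3} (via Theorem~\ref{thm:sep-bound}). The only noteworthy difference is cosmetic: for item~\ref{item:identity-sep:1} the paper uses induction on $m$, obtaining the three-term expansion $\Norm{P}I-P=(\kprod{P}{1}{m}\ot S')+(S\ot P_{m+1})+(S\ot S')$, whereas your $m$-term telescoping identity gives a more compact direct decomposition---both rest on the same observation that $\Norm{P_i}I-P_i\in\bS_i^+$.
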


\begin{proof}
The proof of item \ref{item:identity-sep:1} is an easy but notationally
cumbersome induction on $m$.
The base case $m=1$ was noted at the beginning of Section \ref{sec:gen:sep}.
For the general case,
it is convenient to let $\kprod{I}{1}{m}$, $I_{m+1}$, and $\kprod{I}{1}{m+1}$
denote the identity elements of
$\kprod{\bS}{1}{m}$, $\bS_{m+1}$, and $\kprod{\bS}{1}{m+1}$, respectively.
The induction hypothesis states that the operator
\( S \defeq \norm{\kprod{P}{1}{m}} \kprod{I}{1}{m} - \kprod{P}{1}{m} \)
is $\pa{\bS_1;\dots;\bS_m}$-separable.
Just as in the base case, we know
\( S' \defeq \norm{P_{m+1}}I_{m+1}-P_{m+1} \)
lies in $\bS_{m+1}^+$.
Isolating the identity elements, these expressions may be rewritten
\begin{align*}
  \kprod{I}{1}{m} &=
    \frac{1}{ \Norm{\kprod{P}{1}{m}} } \Br{\kprod{P}{1}{m} + S} \\
  I_{m+1} &=
    \frac{1}{ \Norm{P_{m+1}} } \Br{ P_{m+1} + S' }.
\end{align*}
Taking the Kronecker product of these two equalities and rearranging the terms
yields
\[
  \Norm{\kprod{P}{1}{m+1}} \kprod{I}{1}{m+1} - \kprod{P}{1}{m+1} =
  \Br{ \kprod{P}{1}{m} \otimes S' } +
  \Br{ S \otimes P_{m+1} } +
  \Br{ S \otimes S' }.
\]
The right side of this expression is clearly a
$\pa{\bS_1;\dots;\bS_{m+1}}$-separable operator;
the proof by induction is complete.

Item \ref{item:identity-sep:2} is proved as follows.
By Carath\'eodory's Theorem (Fact \ref{fact:Carateodory}), every $(\bS_1;\dots;\bS_m)$-separable operator $Q$ may be written as a sum of no more than $n+1$ product operators.
In particular,
\[ Q = \sum_{j=1}^{n+1} P_{1,j}\otimes\cdots\otimes P_{m,j} \]
%where $P_{i,j}\in\bS_i^+$ for all indices $i=1,\dots,m$ and $j=1,\dots,n+1$.
%where each $P_{i,j}$ lies in $\bS_i^+$.
where each $P_{i,j}$ is an element of $\bS_i^+$.
As each term in this sum is positive semidefinite, it holds that
\( \norm{Q} \geq \norm{P_{1,j}\otimes\cdots\otimes P_{m,j}} \)
for each $j$.
Item \ref{item:identity-sep:1} implies that the sum
\[
  \sum_{j=1}^{n+1}
    \Norm{P_{1,j}\otimes\cdots\otimes P_{m,j}}I -
          P_{1,j}\otimes\cdots\otimes P_{m,j}
\]
is also $(\bS_1;\dots;\bS_m)$-separable.
Naturally, each of the identity terms
\[\Norm{P_{1,j}\otimes\cdots\otimes P_{m,j}}I\]
in this sum may be replaced by
$\norm{Q}I$ without compromising
$(\bS_1;\dots;\bS_m)$-separability,
from which it follows that
\( \br{n+1}\norm{Q}I-Q \) is also $(\bS_1;\dots;\bS_m)$-separable.

To prove item \ref{item:identity-sep:3}, apply Theorem \ref{thm:sep-bound} to obtain
$X=X^+-X^-$ where $X^\pm$ are $(\bS_1;\dots;\bS_m)$-separable with
\( \norm{X^\pm}\leq 2^{m-1}\sqrt{n}\fnorm{X}. \)
By item \ref{item:identity-sep:2}, it holds that
\( \br{n+1}\norm{X^+} I - X^+ \)
is $\pa{\bS_1;\dots;\bS_m}$-separable, implying that
\[ 2^{m-1}\sqrt{n}\Br{n+1}\Fnorm{X} I - X^+ \]
is also $\pa{\bS_1;\dots;\bS_m}$-separable.
To complete the proof, it suffices to note that adding $X^-$ to this operator
yields another $\pa{\bS_1;\dots;\bS_m}$-separable operator.
\end{proof}

\subsubsection{Norms of positive super-operators}
%=============================================================================%

We are now in a position to prove new bounds on the capacity of a positive super-operator to increase the norm of its output relative to that of its input.
While these new bounds are not used later in this thesis, it is worthwhile to note them and compare them to previously known bounds of this nature.

The seminal such bound is due to Russo and Dye and states that \[ \norm{\Phi(X)}\leq \norm{\Phi(I)}\ \norm{X} \] for all $X$ whenever $\Phi$ is positive \cite{RussoD66}.
In particular, if $\Phi$ does not increase the operator norm of the identity then $\Phi$ does not increase the operator norm of \emph{any} operator.
When discussing bounds of this form, it is convenient to make the assumption that $\norm{\Phi(I)}\leq 1$, which allows us to state the bound more succinctly as \[ \norm{\Phi(X)}\leq\norm{X}. \]
Of course, the original bound, which applies to \emph{all} positive $\Phi$, can be recovered from this succinct version via suitable rescaling of $\Phi$ by a factor of $\norm{\Phi(I)}$.

Consider a subspace $\bS\subseteq\lin{\cX}$ that contains the identity.
In Bhatia \cite{Bhatia07} it is shown that \[ \norm{\Phi(X)}\leq \sqrt{2} \norm{X} \] for all $X\in\bS$ whenever $\Phi:\lin{\cX}\to\lin{\cY}$ is positive on $\bS^+$.
Essentially, the Russo-Dye bound weakens by a factor of $\sqrt{2}$ when we relax the positivity requirement to a mere subspace of the input space.
%\comment{check that}
Gurvits and Barnum showed that \[ \norm{\Phi(X)}\leq \sqrt{2} \fnorm{X} \] for all $X$ whenever $\Phi:\lin{\cX}\to\lin{\cY}$ is positive on the unit ball \[ \set{ X : \fnorm{I_\cX-X}\leq 1 } \subset \lin{\cX} \]  surrounding the identity \cite{GurvitsB03}.
This bound is considerably weaker than the previous two because the operator norm $\norm{X}$ of $X$ is replaced with the Frobenius norm $\fnorm{X}$.
On the other hand, it has the advantage that $\Phi$ need not be positive on an entire subspace of $\lin{\cX}$.
Gurvits and Barnum used this additional flexibility to establish the existence of a ball of multipartite separable quantum states surrounding the completely mixed state.

To the above list of bounds we add one of our own.
While the Gurvits-Barnum bound applies only to super-operators $\Phi:\lin{\cX}\to\lin{\cY}$ that are positive on a set that spans all of $\lin{\cX}$ (namely, the unit ball around the identity), the super-operators to which our bound applies might only be positive on a set that does not span all of $\lin{\cX}$.
Alas, this flexibility comes at the cost of a multiple of the dimension of the input space.
For context, our bound is stated as an extension of
Theorem 2.6.3 of Bhatia~\cite{Bhatia07}.

\begin{theorem}[Norms of positive super-operators] \label{thm:map-bound}

  Let $\bS_1,\dots,\bS_m$ be subspaces of Hermitian operators---all of which
  contain the identity---and let $n=\dim\pa{\kprod{\bS}{1}{m}}$.
  Let $\Phi$ be a super-operator acting on $\kprod{\bS}{1}{m}$
  with the property that $\Phi$ is positive on
  $(\bS_1;\dots;\bS_m)$-separable operators.
  The following hold:
  \begin{enumerate}

  \item
    $\Norm{\Phi\pa{P}}\leq \Norm{P}\Norm{\Phi\pa{I}}$
    whenever $P$ is a product operator of the form
    $P=P_1\otimes\cdots\otimes P_m$ where each $P_i\in\bS_i^+$.

  \item
    $\Norm{\Phi\pa{Q}}\leq \Br{n+1}\Norm{Q}\Norm{\Phi\pa{I}}$
    whenever $Q$ is $\pa{\bS_1;\dots;\bS_m}$-separable.

  \item
    $\Norm{\Phi\pa{X}}\leq 2^{m-1}\sqrt{n}\Br{n+1}\Fnorm{X}\Norm{\Phi\pa{I}}$
    for every $X\in\kprod{\bS}{1}{m}$.

  \end{enumerate}

\end{theorem}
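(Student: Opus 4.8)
The plan is to derive each of the three bounds directly from the corresponding item of Theorem~\ref{thm:identity-sep}, combined with two elementary observations: first, that $\Phi$ carries $(\bS_1;\dots;\bS_m)$-separable operators to positive semidefinite operators by hypothesis; second, that the operator norm is monotone on the positive semidefinite cone, i.e.\ $0\preceq A\preceq B$ implies $\Norm{A}\leq\Norm{B}$. Before starting I would record that the identity $I$ and every product operator $P=P_1\otimes\cdots\otimes P_m$ with $P_i\in\bS_i^+$ are themselves $(\bS_1;\dots;\bS_m)$-separable (a single product is a trivial convex combination), so in particular $\Phi(I)\succeq 0$ and $\Phi(P)\succeq 0$.

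For item~1, item~\ref{item:identity-sep:1} of Theorem~\ref{thm:identity-sep} gives that $\Norm{P}I-P$ is $(\bS_1;\dots;\bS_m)$-separable, hence $\Phi\Pa{\Norm{P}I-P}\succeq 0$, which by linearity of $\Phi$ means $\Phi(P)\preceq\Norm{P}\Phi(I)$; together with $\Phi(P)\succeq 0$ this gives $0\preceq\Phi(P)\preceq\Norm{P}\Phi(I)$, and norm monotonicity finishes the argument. Item~2 is the same argument with $\Norm{P}I-P$ replaced by $\Br{n+1}\Norm{Q}I-Q$, which is separable by item~\ref{item:identity-sep:2} of Theorem~\ref{thm:identity-sep}, and with $\Phi(Q)\succeq 0$ supplied by separability of $Q$.

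For item~3, I would invoke Theorem~\ref{thm:sep-bound} to write $X=X^+-X^-$ with $X^\pm$ $(\bS_1;\dots;\bS_m)$-separable and $\Norm{X^\pm}\leq 2^{m-1}\sqrt{n}\Fnorm{X}$. Then $\Phi(X)=\Phi(X^+)-\Phi(X^-)$ is Hermitian, being a difference of positive semidefinite operators (note that this does not require $\Phi$ to be Hermitian-preserving), and it satisfies $-\Phi(X^-)\preceq\Phi(X)\preceq\Phi(X^+)$, so every eigenvalue of $\Phi(X)$ lies between $-\Norm{\Phi(X^-)}$ and $\Norm{\Phi(X^+)}$, whence $\Norm{\Phi(X)}\leq\max\Set{\Norm{\Phi(X^+)},\Norm{\Phi(X^-)}}$. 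Applying item~2 to the separable operators $X^\pm$ bounds each of $\Norm{\Phi(X^\pm)}$ by $\Br{n+1}\Norm{X^\pm}\Norm{\Phi(I)}\leq 2^{m-1}\sqrt{n}\Br{n+1}\Fnorm{X}\Norm{\Phi(I)}$, which is precisely the claimed estimate.

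There is no real obstacle here: the theorem is a clean transfer of the structural results of Section~\ref{sec:gen:sep} through the positivity hypothesis on $\Phi$. The only points needing a moment's care are verifying that $I$ and the product operators $P$ are separable (so that $\Phi(I),\Phi(P)\succeq 0$), observing that $\Phi(X)$ is automatically Hermitian in item~3, and recalling the norm-monotonicity fact $0\preceq A\preceq B\Rightarrow\Norm{A}\leq\Norm{B}$. If desired one could close with a remark comparing the constants obtained here to the Russo--Dye, Bhatia, and Gurvits--Barnum bounds discussed above.
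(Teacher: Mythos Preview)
Your proposal is correct and follows essentially the same approach as the paper's proof: each item is obtained by applying $\Phi$ to the corresponding separability statement from Theorem~\ref{thm:identity-sep} (or Theorem~\ref{thm:sep-bound} for item~3) and then using norm monotonicity on the positive semidefinite cone. You are simply more explicit than the paper about the auxiliary facts that $\Phi(I),\Phi(P)\succeq 0$ and that $\Phi(X)$ is Hermitian in item~3, which the paper leaves implicit.
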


\begin{proof}

To prove item 1, we observe that
\[ \Phi\Pa{\norm{P}I-P} = \norm{P}\Phi\pa{I} - \Phi\pa{P} \]
is positive semidefinite.
As $\Phi(I)$ and $\Phi(P)$ are also positive semidefinite, it follows that
\( \Norm{\Phi\pa{P}} \leq \norm{P}\norm{\Phi\pa{I}}. \)
Item 2 is proved by the same argument with
$\br{n+1}\norm{Q}I-Q$ in place of $\norm{P}I-P$.

To prove item 3, apply Theorem \ref{thm:sep-bound} to obtain
$X=X^+-X^-$ where $X^\pm$ are
$(\bS_1;\dots;\bS_m)$-separable
with $\norm{X^\pm}\leq 2^{m-1}\sqrt{n}\fnorm{X}$.
As $\Phi\pa{X^\pm}$ are positive semidefinite, we have
\begin{align*}
  \Norm{\Phi\pa{X}} &= \Norm{\Phi\pa{X^+}-\Phi\pa{X^-}} \\
  &\leq \max\Pa{\Norm{\Phi\pa{X^+}},\Norm{\Phi\pa{X^-}}} \\
  &\leq \max\Pa{\Norm{X^+},\Norm{X^-}} \Br{n+1} \Norm{\Phi\pa{I}} \\
  &\leq 2^{m-1} \sqrt{n} \Br{n+1} \Fnorm{X} \Norm{\Phi\pa{I}}.
\end{align*}
\end{proof}

%=============================================================================%
\section{Ball around the completely noisy channel} \label{sec:balls}
%=============================================================================%

In Chapter \ref{ch:intro-LO} we noted that a quantum operation $\Lambda$ is a LOSR operation if and only if $\jam{\Lambda}$ is $\Pa{\bQ_1;\dots;\bQ_m}$-separable.
We then established via Theorem \ref{thm:identity-sep} of the previous section that any operator in the product space $\kprod{\bQ}{1}{m}$ and close enough to the identity is necessarily a $\Pa{\bQ_1;\dots;\bQ_m}$-separable operator, from which the existence of a ball of LOSR operations around the completely noisy channel follows.

\begin{theorem}[Ball around the completely noisy channel]
\label{thm:ball}

  Let $n=\dim(\kprod{\bQ}{1}{m})$ and let \(k = 2^{m-1}\sqrt{n}\Br{n+1}.\)
  For each operator $A\in\kprod{\bQ}{1}{m}$ with $\fnorm{A} \leq \frac{1}{k}$ there exists an unnormalized LOSR operation $\Lambda:\lin{\kprod{\cX}{1}{m}}\to\lin{\kprod{\cY}{1}{m}}$ for which
  \[\jam{\Lambda}=I-A.\]
  As a consequence, any quantum operation $\Xi:\lin{\kprod{\cX}{1}{m}}\to\lin{\kprod{\cY}{1}{m}}$ with \[ \fnorm{\jam{\Xi}-\jam{\tilde\noisy}}\leq\frac{1}{kd} \] is a LOSR operation.
  Here $\tilde\noisy$ denotes the completely noisy channel and $d=\dim\pa{\kprod{\cY}{1}{m}}$.
  
\end{theorem}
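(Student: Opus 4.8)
The plan is to read the result straight off Theorem~\ref{thm:identity-sep}, specialized to the subspaces $\bS_i=\bQ_i$. First note that the identity $I=I_{\kprod{\cY}{1}{m}\ot\kprod{\cX}{1}{m}}$ lies in $\kprod{\bQ}{1}{m}$ and is even $(\bQ_1;\dots;\bQ_m)$-separable: since $\ptr{\cY_i}{I_{\cY_i\ot\cX_i}}=\dim(\cY_i)\,I_{\cX_i}$, each factor $I_{\cY_i\ot\cX_i}$ belongs to $\bQ_i^+$, and $I$ is their Kronecker product. Now take $A\in\kprod{\bQ}{1}{m}$ with $\fnorm{A}\le 1/k$, where $n=\dim(\kprod{\bQ}{1}{m})$ and $k=2^{m-1}\sqrt n\,(n+1)$. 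Item~\ref{item:identity-sep:3} of Theorem~\ref{thm:identity-sep}, applied with $X=A$, says that $k\fnorm{A}\,I-A$ is $(\bQ_1;\dots;\bQ_m)$-separable; and since $\fnorm{A}\le 1/k$ gives $1-k\fnorm{A}\ge 0$, the decomposition
\[ I-A=\bigl(k\fnorm{A}\,I-A\bigr)+\bigl(1-k\fnorm{A}\bigr)I \]
exhibits $I-A$ as a sum of two $(\bQ_1;\dots;\bQ_m)$-separable operators. As the $(\bQ_1;\dots;\bQ_m)$-separable operators form a convex cone (Definition~\ref{def:sep}), $I-A$ is itself $(\bQ_1;\dots;\bQ_m)$-separable.

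The next step is to convert separability into the asserted LOSR structure. Absorbing convex weights into the factors, write $I-A=\sum_j Q_{1,j}\ot\cdots\ot Q_{m,j}$ with each $Q_{i,j}\in\bQ_i^+$. Each $Q_{i,j}$ is positive semidefinite with $\ptr{\cY_i}{Q_{i,j}}=\lambda_{i,j}I_{\cX_i}$ for some $\lambda_{i,j}\ge 0$; when $\lambda_{i,j}>0$ the operator $\lambda_{i,j}^{-1}Q_{i,j}$ is the Choi--Jamio\l kowski representation of a quantum operation $\Psi_{i,j}:\lin{\cX_i}\to\lin{\cY_i}$, and when $\lambda_{i,j}=0$ we have $Q_{i,j}=0$ and the whole $j$th term drops out. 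Hence $I-A=\jam{\Lambda}$ for $\Lambda=\sum_j\bigl(\textstyle\prod_i\lambda_{i,j}\bigr)(\Psi_{1,j}\ot\cdots\ot\Psi_{m,j})$, a nonnegative combination of $m$-party local operations, which is what we mean by an unnormalized LOSR operation. This settles the first claim.

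For the consequence, fix a quantum operation $\Xi$ with $\fnorm{\jam{\Xi}-\jam{\tilde\noisy}}\le 1/(kd)$, where $d=\dim(\kprod{\cY}{1}{m})$, and recall $\jam{\tilde\noisy}=\frac1d I$. Put $A\defeq I-d\,\jam{\Xi}=d\bigl(\jam{\tilde\noisy}-\jam{\Xi}\bigr)$, so $\fnorm{A}=d\,\fnorm{\jam{\Xi}-\jam{\tilde\noisy}}\le 1/k$; one also needs $A\in\kprod{\bQ}{1}{m}$, which, since $\jam{\tilde\noisy}=\frac1d I$ already lies in $\kprod{\bQ}{1}{m}$, reduces to $\jam{\Xi}\in\kprod{\bQ}{1}{m}$ --- precisely the no-signaling membership discussed in the concluding remarks of this chapter. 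Granting this, the first claim gives $d\,\jam{\Xi}=I-A=\jam{\Lambda}$ for an unnormalized LOSR operation $\Lambda=\sum_j d_j\Lambda_j$ with $d_j\ge 0$ and each $\Lambda_j$ local, so by linearity of $\jam{\cdot}$ we get $\Xi=\sum_j(d_j/d)\Lambda_j$. Applying $\ptr{\kprod{\cY}{1}{m}}{\cdot}$ to $\jam{\Xi}=\sum_j(d_j/d)\jam{\Lambda_j}$ and using that $\Xi$ and every $\Lambda_j$ are trace-preserving forces $\sum_j d_j/d=1$, so $\Xi$ is a genuine convex combination of local operations, i.e.\ a LOSR operation.

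Because Theorem~\ref{thm:identity-sep} already carries the analytic weight, what remains above is essentially bookkeeping; the one point that is not purely formal is keeping the perturbed operator inside the subspace $\kprod{\bQ}{1}{m}$ while simultaneously certifying its separability --- the separability being where the Frobenius-norm smallness of $A$ is consumed, through the constant $2^{m-1}\sqrt n$ of Theorem~\ref{thm:sep-bound} and the factor $n+1$ from Carath\'eodory's Theorem. In that sense the real obstacle lies upstream, in Theorems~\ref{thm:sep-bound} and~\ref{thm:identity-sep}, not in the present deduction.
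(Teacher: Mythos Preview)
Your proof is correct and follows the same approach as the paper, which treats the theorem as an immediate corollary of Theorem~\ref{thm:identity-sep}(3) applied to the subspaces $\bS_i=\bQ_i$; indeed, the paper offers no formal proof beyond the sentence preceding the theorem statement. You fill in more detail than the paper does---the explicit decomposition of separable elements into scalar multiples of Choi--Jamio\l kowski representations of quantum operations, and the normalization argument for the consequence---and you correctly flag the implicit hypothesis $\jam{\Xi}\in\kprod{\bQ}{1}{m}$, which the paper addresses only in the discussion following the theorem.
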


%\subsubsection{Normalized ball around the completely noisy channel}
%=============================================================================%

%As mentioned at the beginning of this chapter, the unnormalized completely noisy channel $\noisy$ satisfies $\jam{\noisy}=I$.
%It follows from Theorem \ref{thm:ball} that there is a ball of LOSR operations centred at the normalized completely noisy channel $\tilde\noisy$.
%In particular, letting $d=\dim\pa{\kprod{\cY}{1}{m}}$ denote the dimension of the output space, this ball has radius $\frac{1}{kd}$ in Frobenius norm.

%\subsubsection{Larger ball of LOSE operations?}
%=============================================================================%

Theorem \ref{thm:ball} establishes a ball of LOSR operations (and hence also of LOSE operations) surrounding the completely noisy channel.
However, there seems to be no obvious way to obtain a bigger ball if
such a ball is allowed to contain operations that are LOSE but not LOSR.
Perhaps a more careful future investigation will uncover such a ball.

\subsubsection{The subspace containing the ball of LOSR operations}
%=============================================================================%

The ball of Theorem \ref{thm:ball} is contained within the product space
$\kprod{\bQ}{1}{m}$,
%in $\her{\kprod{\cA}{1}{m}\otimes\kprod{\cX}{1}{m}}$
which is a strict subspace of the space
spanned by all quantum operations
$\Phi:\lin{\kprod{\cX}{1}{m}} \to \lin{\kprod{\cY}{1}{m}}$.
Why was attention restricted to this subspace?
The answer is that there are no LOSE or LOSR operations $\Lambda$ for which
$\jam{\Lambda}$ lies outside $\kprod{\bQ}{1}{m}$.
In other words, $\kprod{\bQ}{1}{m}$ is the \emph{largest} possible space in which
to find a ball of LOSR operations.
We shall return to this topic in Chapter \ref{ch:no-sig} wherein it is shown that the space $\kprod{\bQ}{1}{m}$ is generated by the so-called \emph{no-signaling} quantum operations.

Of course, there exist quantum operations arbitrarily close to the
completely noisy channel that are not no-signaling operations,
much less LOSE or LOSR operations.
This fact might seem to confuse the study of, say, the effects of noise on such
operations because a completely general model of noise would allow for
extremely tiny perturbations that nonetheless turn no-signaling operations into
signaling operations.
This confusion might even be exacerbated by the fact that separable quantum
states, by contrast, are resilient to \emph{arbitrary} noise: any conceivable
physical perturbation of the completely mixed state is separable, so long as the
perturbation has small enough magnitude.

There is, of course, nothing unsettling about this picture.
In any reasonable model of noise, perturbations to a LOSE or LOSR operation
occur only on the local operations performed by the parties involved,
or perhaps on the state they share.
It is easy to see that realistic perturbations such as these always maintain the
no-signaling property of these operations.
Moreover, any noise \emph{not} of this form could, for example, bestow
faster-than-light communication upon spatially separated parties.

\chapter{Recognizing LOSE Operations is NP-hard} \label{ch:NPhard}

In this chapter the existence of the ball of LOSR operations around the completely noisy channel is employed to prove that the weak membership problem for LOSE operations is strongly $\cls{NP}$-hard.
Informally, the weak membership problem asks,
\begin{quote}
  ``Given a description of a quantum operation $\Lambda$
  and an accuracy parameter $\varepsilon$,
  is $\Lambda$ within distance $\varepsilon$ of a LOSE operation?''
\end{quote}

This result is achieved in several stages.
Section \ref{sec:app:games} reviews a relevant recent result of Kempe \emph{et al.}~pertaining to quantum games.
%\cite{KempeK+07}
In Section \ref{sec:app:validity} this result is exploited in order to prove that the weak validity problem---a relative of the weak membership problem---is strongly $\cls{NP}$-hard for LOSE operations.
Finally, Section \ref{sec:app:membership} illustrates how the strong $\cls{NP}$-hardness of the weak membership problem for LOSE operations follows from a Gurvits-Gharibian-style application
%\cite{Gurvits02,Gharibian08}
of Liu's version
%\cite{Liu07}
of the Yudin-Nemirovski\u\i{} Theorem.
%\cite{YudinN76, GrotschelL+88}.
It is also noted that similar $\cls{NP}$-hardness results hold trivially for LOSR
operations, due to the fact that separable quantum states arise as a special
case of LOSR operations in which the input space is empty.

%=============================================================================%
\section{Co-operative quantum games with shared entanglement}
\label{sec:app:games}
%=============================================================================%

Local operations with shared entanglement have been previously studied in the context of
two-player co-operative games.
In these games, a referee prepares a question for each player and the players each respond to the referee with an answer.
The referee evaluates these answers and declares that the players have jointly won or lost the game according to this evaluation.
The goal of the players, then, is to coordinate their answers so as to maximize the probability with which the referee declares them to be winners.
In a \emph{quantum} game the questions and answers are quantum states.

In order to differentiate this model from a one-player game, the players are not permitted to communicate with each other after the referee has sent his questions.
The players can, however, meet prior to the commencement of the game in order to agree on a strategy.
In a quantum game the players might also prepare a shared entangled quantum state so as to enhance the coordination of their answers to the referee.

More formally, a \emph{quantum game} $G=(q,\pi,\mathbf{R},\mathbf{V})$ is specified by:
\begin{itemize}
  \item
  A positive integer $q$ denoting the number of distinct questions.
  \item
  A probability distribution $\pi$ on the question indices $\{1,\dots,q\}$, according to which 
  the referee selects his questions.
  \item
  Complex Euclidean spaces $\cV,\cX_1,\cX_2,\cA_1,\cA_2$ corresponding to the different quantum systems used by the referee and players.
  \item
  A set $\mathbf{R}$ of quantum states
  $\mathbf{R} = \{ \rho_i \}_{i=1}^q \subset \pos{\cV\otimes\cX_1\otimes\cX_2}$.
  These states correspond to questions and are selected by the referee according to $\pi$.
  \item
  A set $\mathbf{V}$ of unitary operators $\mathbf{V} = \{V_i \}_{i=1}^q \subset
  \lin{\cV\otimes\cA_1\otimes\cA_2}$.
  These unitaries are used by the referee to evaluate the players' answers.
\end{itemize}
For convenience, the two players are called \emph{Alice} and \emph{Bob}.
The game is played as follows.
The referee samples $i$ according to $\pi$ and prepares the state
$\rho_i\in\mathbf{R}$,
which is placed in the three quantum registers corresponding to
$\cV\otimes\cX_1\otimes\cX_2$.
This state contains the questions to be sent to the players:
the portion of $\rho_i$ corresponding to $\cX_1$ is sent to Alice,
the portion of $\rho_i$ corresponding to $\cX_2$ is sent to Bob, and
the portion of $\rho_i$ corresponding to $\cV$ is kept by the referee as a
private workspace.
In reply, Alice sends a quantum register corresponding to $\cA_1$ to the referee,
as does Bob to $\cA_2$.
The referee then applies the unitary operation $V_i\in \mathbf{V}$
to the three quantum registers corresponding to $\cV\otimes\cA_1\otimes\cA_2$,
followed by a standard measurement
$\{\Pi_\mathrm{accept},\Pi_\mathrm{reject}\}$
that dictates the result of the game.

As mentioned at the beginning of this subsection,
Alice and Bob may not communicate once the game commences.
But they may meet prior to the commencement of the game in order prepare a
shared entangled quantum state $\sigma$.
Upon receiving the question register corresponding to $\cX_1$ from the referee,
Alice may perform any physically realizable quantum operation upon that register
and upon her portion of $\sigma$.
The result of this operation shall be contained in the quantum register
corresponding to $\cA_1$---this is the answer that Alice sends to the referee.
Bob follows a similar procedure to obtain his own answer register corresponding
to $\cA_2$.

For any game $G$, the \emph{value} $\omega(G)$ of $G$ is the supremum of the
probability with which the referee can be made to accept taken over all
strategies of Alice and Bob.

\begin{theorem}[Kempe \emph{et al.}~\cite{KempeK+07}]
  There is a fixed polynomial $p$ such that the following promise problem is
  $\cls{NP}$-hard under mapping (Karp) reductions:
  \begin{description}
  \item[Input.]
    A quantum game $G=(q,\pi,\mathbf{R},\mathbf{V})$.
    The distribution $\pi$ and the sets $\mathbf{R},\mathbf{V}$ are each given
    explicitly:
    for each $i=1,\dots,q$, the probability $\pi(i)$ is given in binary,
    as are the real and complex parts of each entry of the matrices
    $\rho_i$ and $V_i$.
  \item[Yes.] The value $\omega(G)$ of the game $G$ is 1.
  \item[No.]  The value $\omega(G)$ of the game $G$ is less than
    $1-\frac{1}{p(q)}$.
  \end{description}
\end{theorem}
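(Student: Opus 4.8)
The plan is to obtain this hardness result as a quantum (entanglement-robust) strengthening of the classical $\cls{NP}$-hardness of approximating the value of a two-prover one-round game, which itself follows from the PCP Theorem. I would start from a gap version of an $\cls{NP}$-complete constraint satisfaction problem --- say, gap-$3$-colouring of graphs, or equivalently a gap version of \emph{Label Cover} --- in which yes-instances are fully satisfiable while no-instances admit an assignment satisfying at most a $(1-\delta)$ fraction of the constraints, for some universal constant $\delta>0$. From such an instance one builds the usual two-prover one-round game: the referee samples a random constraint (or a random variable occurring in it), sends one part to Alice and one to Bob, receives purported values, and accepts iff the answers are mutually consistent and satisfy the constraint. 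Casting this into the formalism $G=(q,\pi,\mathbf{R},\mathbf{V})$ of the theorem is routine: the questions $\rho_i$ are computational-basis (hence ``classical'') states encoding the referee's messages, the unitaries $V_i$ are permutation matrices that compute the accept/reject predicate on computational-basis answers, and $q$ together with $\pi$ are read directly off the constraint structure. All of this is computable from the instance in deterministic polynomial time, so the reduction is a mapping (Karp) reduction, and perfect completeness is immediate: a satisfying assignment yields an unentangled deterministic strategy with $\omega(G)=1$.

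The real content, and the step I expect to be the main obstacle, is \emph{soundness against entangled provers}: for a no-instance one must show $\omega(G)\le 1-1/p(q)$ even when Alice and Bob share an arbitrary state $\sigma$ and apply arbitrary local quantum operations. The naive classical game fails here, since entanglement lets the provers pass consistency checks using correlations unavailable to shared randomness; so I would instead invoke the entanglement-robust game construction of \cite{KempeK+07}, whose predicate is engineered (via oracularization together with a dedicated consistency test) so that a high entangled value can be ``rounded'' to a good classical assignment. Concretely, given measurement operators $\{A^{x}_{a}\}$, $\{B^{y}_{b}\}$ acting on $\sigma$ and achieving value $>1-1/p(q)$, the consistency part of the test forces these operators, applied to $\sigma$, to be approximately projective and approximately compatible; a matrix-analytic perturbation argument --- operator-norm and Cauchy--Schwarz estimates of the flavour used in the analysis of XOR games and of the soundness of low-degree tests --- then produces a single distribution over assignments under which the expected fraction of satisfied constraints is $1-o(\delta)$, contradicting the gap. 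Carrying out these perturbation bounds, and tracking how the $1/p(q)$ loss propagates through the oracularization and the consistency test, is the technically delicate core of the argument.

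Two auxiliary points complete the proof. First, the polynomial $p$ is fixed and depends only on the construction, not on the instance: the PCP gap $\delta$ is a universal constant, the consistency-test machinery degrades it by at most a polynomial factor in the number of questions, and starting from a sufficiently strong PCP (or applying repetition to the robust game) pushes the final soundness below $1-1/p(q)$ for whatever fixed polynomial $p$ one fixes in advance. Second, one checks that the quantum game model in the theorem statement --- in which the referee's question state may entangle the players' registers with a private workspace $\cV$, and the verdict is a single projective measurement $\{\Pi_{\mathrm{accept}},\Pi_{\mathrm{reject}}\}$ applied after a question-dependent unitary --- is at least as expressive as the classical message model used in the reduction; this is immediate, as classical messages and classical predicates are exactly the special case in which all $\rho_i$ and $\Pi_{\mathrm{accept}}$ are diagonal in the computational basis. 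With perfect completeness, entangled soundness, and polynomial-time computability of $(q,\pi,\mathbf{R},\mathbf{V})$ established, the stated promise problem is $\cls{NP}$-hard under mapping reductions.
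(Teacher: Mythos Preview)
The paper does not prove this theorem at all: it is stated as a result of Kempe, Kobayashi, Matsumoto, Toner, and Vidick and simply imported as a black box, with the citation \cite{KempeK+07} standing in for the proof. The thesis then \emph{uses} this hardness result to derive Theorems~\ref{thm:wval} and~\ref{thm:wmem}, but never reproves it. So there is no ``paper's own proof'' to compare against.

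Your sketch is a reasonable high-level summary of what the cited paper actually does: start from a PCP-based gap-CSP, cast it as a two-prover one-round game, and then---this is the nontrivial part---modify the game (via oracularization and an added consistency check) so that a high \emph{entangled} value can be rounded to a good classical assignment. You have correctly identified where the work lies (soundness against entangled provers) and the flavour of the tools (operator perturbation bounds extracting a near-classical assignment from near-perfect strategies). If you were to flesh this out, the main thing you have left vague is precisely the rounding step: the quantitative statement that near-passing the consistency test forces the provers' measurements to be close to a shared projective measurement, and the explicit polynomial loss incurred. That is genuinely the heart of \cite{KempeK+07} and cannot be waved through; but as a plan it is the right one.
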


%=============================================================================%
\section{Strategies and weak validity}
\label{sec:app:validity}
%=============================================================================%

Viewing the two players as a single entity, a quantum game may be seen as a two-message quantum interaction between the referee and the players---a message from the referee to the players, followed by a reply from the players to the referee.
The actions of the referee during such an interaction are completely specified by the parameters of the game.

In the language of Part \ref{part:strategies}, the game specifies a one-round measuring co-strategy for the referee represented by some positive semidefinite operators
\[
  R_\mathrm{accept},R_\mathrm{reject} \in
%  \pos{\cA_1\otimes\cA_2\otimes\cX_1\otimes\cX_2},
  \pos{\kprod{\cA}{1}{2}\ot\kprod{\cX}{1}{2}},
\]
which are easily computed given the parameters of the game.

In these games, the players implement a one-round non-measuring strategy compatible with $\{R_\mathrm{accept},R_\mathrm{reject}\}$ whose representation is given by a positive semidefinite operator
\[
  P \in
%  \pos{\cA_1\otimes\cA_2\otimes\cX_1\otimes\cX_2}.
  \pos{\kprod{\cA}{1}{2}\ot\kprod{\cX}{1}{2}}.
\]
For any fixed strategy $P$ for the players, Theorem \ref{theorem:inner-product} tells us that the probability with which the players cause the referee to accept is given by the inner product
\[
  \Pr[\textrm{Players win with strategy $P$}] = \inner{R_\mathrm{accept}}{P}.
\]

In any game, the players combine to implement some physical operation
%\( \Lambda:\lin{\cX_1\otimes\cX_2}\to\lin{\cA_1\otimes\cA_2}. \)
\( \Lambda:\lin{\kprod{\cX}{1}{2}}\to\lin{\kprod{\cA}{1}{2}}. \)
It is clear that a given super-operator $\Lambda$ denotes a legal strategy for the
players if and only if $\Lambda$ is a LOSE operation.
As the players implement a one-round non-measuring strategy, the representation $P$ of their strategy is given by $P=\jam{\Lambda}.$

Thus, the problem studied by Kempe \emph{et al.}~\cite{KempeK+07} of deciding
whether $\omega\pa{G}=1$ can be reduced via the formalism of strategies to an
optimization problem over the set of LOSE operations:
\[
  \omega\pa{G} = \sup_{\textrm{$\Lambda\in$ LOSE}}
%  \left\{
    \inner{R_\mathrm{accept}}{\jam{\Lambda}}
  % : \Lambda \textrm{ is a trace-preserving LOSE super-operator}
%  \right\}
  .
\]
The following theorem is thus proved.

\begin{theorem}
\label{thm:wval}

  %There is a fixed polynomial $p$ such that 
  The weak validity problem for the set of LOSE [LOSR] operations is
  strongly $\cls{NP}$-hard [$\cls{NP}$-complete] under mapping (Karp) reductions:
  \begin{description}
  \item[Input.]
    A Hermitian matrix $R$, a real number $\gamma$, and a positive
    real number $\varepsilon>0$.
    The number $\gamma$ is given explicitly in binary, as are the
    real and complex parts of each entry of $R$.
    The number $\varepsilon$ is given in unary, where $1^s$ denotes $\varepsilon=1/s$.
  \item[Yes.]
    There exists a LOSE [LOSR] operation $\Lambda$ such that
    $\inner{R}{\jam{\Lambda}} \geq \gamma + \varepsilon$.
  \item[No.]
    For every LOSE [LOSR] operation $\Lambda$ we have
    $\inner{R}{\jam{\Lambda}} \leq \gamma - \varepsilon$.
  \end{description}

\end{theorem}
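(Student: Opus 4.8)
The plan is to reduce the Kempe--Kitaev--Regev result (the preceding unnamed theorem) to the claimed weak validity problem, essentially by reading off the strategy formalism of Part~\ref{part:strategies}. First I would observe that a two-player co-operative quantum game $G=(q,\pi,\mathbf{R},\mathbf{V})$, viewed with the two players bundled into one party, is exactly a two-message interaction, and hence the referee's actions are captured by a one-round measuring co-strategy $\set{R_\mathrm{accept},R_\mathrm{reject}}$ acting on $\kprod{\cA}{1}{2}\ot\kprod{\cX}{1}{2}$. I would write down this co-strategy explicitly in terms of $\pi$, the $\rho_i$, the $V_i$, and the projection $\Pi_\mathrm{accept}$: concretely, $R_\mathrm{accept}=\Ptr{\cV}{(V^*\Pi_\mathrm{accept}V)(\cdot)}$ composed appropriately with the state-preparation stage, a computation that is routine given Definitions~\ref{def:co-strategy} and~\ref{def:m-strategy}, and note that its entries are computable in polynomial time from the (explicitly given) game data. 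The key conceptual point is that a super-operator $\Lambda:\lin{\kprod{\cX}{1}{2}}\to\lin{\kprod{\cA}{1}{2}}$ is a legal strategy for the two non-communicating players with shared entanglement precisely when $\Lambda$ is a LOSE operation, and that under the new formalism the players' strategy operator is simply $P=\jam{\Lambda}$.

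Next I would invoke Theorem~\ref{theorem:inner-product} (\theoreminnerproduct) to get $\Pr[\text{players win with }\Lambda]=\inner{R_\mathrm{accept}}{\jam{\Lambda}}$, so that $\omega(G)=\sup_{\Lambda\in\mathrm{LOSE}}\inner{R_\mathrm{accept}}{\jam{\Lambda}}$. Setting $R=R_\mathrm{accept}$, the Yes case $\omega(G)=1$ gives a $\Lambda$ with $\inner{R}{\jam{\Lambda}}$ arbitrarily close to $1$, and the No case $\omega(G)<1-\tfrac{1}{p(q)}$ gives $\inner{R}{\jam{\Lambda}}\le 1-\tfrac{1}{p(q)}$ for all LOSE $\Lambda$. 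To match the promise-problem format of the theorem I would take $\gamma=1-\tfrac{1}{2p(q)}$ and $\varepsilon=\tfrac{1}{2p(q)}$; since $\varepsilon=1/s$ with $s=2p(q)$ an integer bounded by a polynomial in the instance size, $\varepsilon$ can indeed be presented in unary, which is exactly what ``strongly'' $\cls{NP}$-hard requires. One technical wrinkle is that in the Yes case the supremum need not be attained by a finite-entanglement LOSE operation; but since the supremum equals $1$ one can choose a LOSE operation with $\inner{R}{\jam{\Lambda}}\ge 1-\varepsilon=\gamma+\varepsilon-2\varepsilon$... here I must be careful: I want $\inner{R}{\jam{\Lambda}}\ge\gamma+\varepsilon=1$, which is only approached, not reached. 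The clean fix is instead to take $\gamma=1-\tfrac{1}{2p(q)}-\tfrac{1}{2p(q)}\cdot\tfrac12$ type margins, or more simply to note that it suffices to use the finitely-approximable notion of LOSE operation, for which the set is closed (as proven in Chapter~\ref{ch:intro-LO}) and the supremum over the compact-ish set is attained; then $\omega(G)=1$ really does yield $\Lambda$ with $\inner{R}{\jam{\Lambda}}=1\ge\gamma+\varepsilon$. I expect this bookkeeping around attainment and the exact choice of $\gamma,\varepsilon$ to be the only real subtlety, and it is minor.

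For the LOSR variant, the corresponding claim is $\cls{NP}$-completeness, so beyond hardness I would add containment in $\cls{NP}$. Hardness follows the same reduction, except now I would instead appeal to the fact (noted in the introduction and in Chapter~\ref{ch:NPhard}) that separable states are the special case of LOSR operations with one-dimensional input spaces $\cX_i$, so Gharibian's strong $\cls{NP}$-completeness of weak validity for separable states transfers directly by taking the $\cX_i$ trivial; alternatively the game reduction also works since one can check that restricting the players to shared randomness corresponds to restricting $\jam{\Lambda}$ to $(\bQ_1;\dots;\bQ_m)$-separable operators. For the $\cls{NP}$ upper bound I would use the standard argument: a $(\bQ_1;\dots;\bQ_m)$-separable witness can, by Carath\'eodory's Theorem (Fact~\ref{fact:Carateodory}), be taken to be a convex combination of polynomially many product operators, each of polynomially-bounded description length after rounding, and the inequality $\inner{R}{\jam{\Lambda}}\ge\gamma+\varepsilon$ can be verified in polynomial time to sufficient precision; the rounding error is absorbed by the gap $2\varepsilon$, which is the point of the unary encoding of $\varepsilon$. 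Putting these pieces together proves the theorem; the main obstacle, such as it is, is purely the precision/attainment accounting just described rather than anything conceptually deep, since all the heavy lifting is done by the Kempe--Kitaev--Regev theorem and by Theorem~\ref{theorem:inner-product}.
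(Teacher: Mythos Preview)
Your proposal is correct and follows essentially the same route as the paper: reduce from the Kempe \emph{et al.}\ game-value problem by identifying the referee with a one-round measuring co-strategy $\{R_{\mathrm{accept}},R_{\mathrm{reject}}\}$, identifying legal player strategies with LOSE operations $\Lambda$, and invoking Theorem~\ref{theorem:inner-product} to write $\omega(G)=\sup_{\Lambda\in\mathrm{LOSE}}\inner{R_{\mathrm{accept}}}{\jam{\Lambda}}$; for LOSR, the paper likewise uses the separable-states-as-trivial-input-LOSR reduction for hardness and the polynomially-bounded shared randomness of Proposition~\ref{prop:shared-randomness} (which is your Carath\'eodory argument) for the $\cls{NP}$ upper bound. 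The paper is in fact terser than you are---it does not spell out $\gamma,\varepsilon$ or the attainment issue---so your extra bookkeeping (resolving attainment via closedness of the finitely-approximable LOSE set) only makes the argument more complete; one small slip is the attribution ``Kempe--Kitaev--Regev,'' which should be Kempe, Kobayashi, Matsumoto, Toner, and Vidick.
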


\begin{remark}
\label{rem:sep-reduction}

The hardness result for LOSR operations follows from a simple reduction
from separable quantum states to LOSR operations:
every separable state may be written as a LOSR operation in which the input space has dimension one.
%simply take $\cX_1=\cdots=\cX_m=\mathbb{C}$.
%(Recall that weak validity for separable states is
%strongly $\cls{NP}$-complete \cite{Gharibian08}.)
That weak validity for LOSR operations is in $\cls{NP}$
(and is therefore $\cls{NP}$-complete)
follows from the fact that all LOSR operations may be implemented with
polynomially-bounded shared randomness (Proposition \ref{prop:shared-randomness}).

\end{remark}

%=============================================================================%
\section{The Yudin-Nemirovski\u\i{} Theorem and weak membership}
\label{sec:app:membership}
%=============================================================================%

Having established that the weak \emph{validity} problem for LOSE operations
is strongly $\cls{NP}$-hard, the next step is to follow the leads of
Gurvits and Gharibian~\cite{Gurvits02,Gharibian08}
and apply Liu's version~\cite{Liu07} of the
Yudin-Nemirovski\u\i{} Theorem~\cite{YudinN76, GrotschelL+88}
in order to prove that the weak \emph{membership} problem for LOSE operations
is also strongly $\cls{NP}$-hard.

The Yudin-Nemirovski\u\i{} Theorem establishes an oracle-polynomial-time
reduction from the weak validity problem to the weak membership problem for any
convex set $C$ that satisfies certain basic conditions.
One consequence of this theorem is that if the weak validity problem for $C$ is
$\cls{NP}$-hard then the associated weak membership problem for $C$ is also
$\cls{NP}$-hard.
Although hardness under mapping reductions is preferred, any hardness result
derived from the Yudin-Nemirovski\u\i{} Theorem in this way is only guaranteed
to hold under more inclusive oracle reductions.

The basic conditions that must be met by the set $C$ in order for the
Yudin-Nemirovski\u\i{} Theorem to apply are
\begin{enumerateroman}
\item $C$ is bounded,
\item $C$ contains a ball, and
\item the size of the bound and the ball are polynomially related to the
dimension of the vector space containing $C$.
\end{enumerateroman}
It is simple to check these criteria against the set of LOSE operations.
For condition (i), an explicit bound is established by Proposition \ref{prop:explicit-bound}.
The remaining conditions then follow from Theorem \ref{thm:ball}.

The following theorem is now proved.
(As in Remark \ref{rem:sep-reduction},
the analogous result for LOSR operations follows from a straightforward
reduction from separable quantum states.)

\begin{theorem} \label{thm:wmem}

  The weak membership problem for the set of LOSE [LOSR] operations is
  strongly $\cls{NP}$-hard [$\cls{NP}$-complete] under oracle (Cook) reductions:
  \begin{description}

  \item[Input.]
    A Hermitian matrix $X\in\kprod{\bQ}{1}{m}$
    and a positive real number $\varepsilon>0$.
    The real and complex parts of each entry of $X$ are given explicitly in binary.
    The number $\varepsilon$ is given in unary, where $1^s$ denotes $\varepsilon=1/s$.

  \item[Yes.]
    $X=\jam{\Lambda}$ for some LOSE [LOSR] operation $\Lambda$.

  \item[No.]
    \( \norm{X-\jam{\Lambda}}\geq \varepsilon \)
    for every LOSE [LOSR] operation $\Lambda$.
    
  \end{description}

\end{theorem}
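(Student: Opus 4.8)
The plan is to apply the Yudin--Nemirovski\u\i{} Theorem in the ``approximate'' form due to Liu (Theorem 2.3 of \cite{Liu07}), which provides an oracle-polynomial-time reduction from the weak validity problem to the weak membership problem for a convex set, provided the set meets three conditions: it is bounded, it contains a ball of some radius, and the magnitudes of the bound and the ball radius are polynomially related to the ambient dimension. Since Theorem \ref{thm:wval} already establishes that the weak validity problem for the set of LOSE operations is strongly $\cls{NP}$-hard under mapping reductions, chaining this reduction with Liu's theorem yields strong $\cls{NP}$-hardness of the weak membership problem under oracle (Cook) reductions. The LOSR case follows separately from the reduction from separable states together with the fact that LOSR operations require only polynomially-bounded shared randomness (Proposition \ref{prop:shared-randomness}), so membership can be certified in $\cls{NP}$, making the problem $\cls{NP}$-complete.

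The key steps, in order, are as follows. First I would fix the ambient real vector space to be $\kprod{\bQ}{1}{m}$, since by the discussion in Chapter \ref{ch:ball} (and to be confirmed in Chapter \ref{ch:no-sig}) every LOSE operation $\Lambda$ has $\jam{\Lambda}\in\kprod{\bQ}{1}{m}$, and this is the largest subspace in which such a ball can be found; its dimension $n=\dim(\kprod{\bQ}{1}{m})$ is polynomial in the input size. Second, I would verify the boundedness condition: by Proposition \ref{prop:explicit-bound}, any completely positive trace-preserving super-operator $\Lambda:\lin{\kprod{\cX}{1}{m}}\to\lin{\kprod{\cY}{1}{m}}$ has $\tnorm{\jam{\Lambda}}=\dim(\kprod{\cX}{1}{m})$, hence $\fnorm{\jam{\Lambda}}\leq\dim(\kprod{\cX}{1}{m})$, giving an explicit polynomial bound on the set of LOSE operations. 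Third, I would invoke Theorem \ref{thm:ball} to obtain the required inscribed ball: there is a ball of LOSR operations (hence of LOSE operations), centred at the Choi--Jamio\l kowski representation of the completely noisy channel, of Frobenius-norm radius $\Omega(2^{-m}n^{-3/2}d^{-1})$, which is the reciprocal of a polynomial in the input size. Fourth, I would check that the separation oracle promised implicitly by Liu's theorem is legitimate here---the convex set of LOSE operations is closed by construction (we included the closure of the finite-entanglement LOSE operations), and convex by the Convexity of LOSE Operations proposition---so all hypotheses of the approximate Yudin--Nemirovski\u\i{} reduction are satisfied. Fifth, I would assemble the reduction: given an instance of the weak membership problem, the reduction from weak validity (which is strongly $\cls{NP}$-hard) makes oracle calls to a purported weak membership solver, and correctness and the polynomial-time bound follow from Liu's theorem together with the polynomial relationships just established, including the preservation of the ``strong'' property since $\varepsilon$ given in unary stays polynomial throughout.

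I expect the main obstacle to be a bookkeeping one rather than a conceptual one: making sure that the specific form of the approximate Yudin--Nemirovski\u\i{} reduction of \cite{Liu07} is being applied with the right parameters---in particular that the radius $\delta$ of the inscribed ball from Theorem \ref{thm:ball}, the outer bound $\Delta$ from Proposition \ref{prop:explicit-bound}, and the accuracy parameter $\varepsilon$ all interact so that the reduction remains polynomial-time and the hardness survives even when $\varepsilon=1/s$ is presented in unary. One must also be careful that the ball in Theorem \ref{thm:ball} is stated for the \emph{normalized} completely noisy channel with radius scaled by the output dimension $d$, so the inscribed-ball parameter fed to Liu's theorem must use that scaled radius. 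Finally, for the LOSR claim one should note explicitly, as in Remark \ref{rem:sep-reduction}, that the $\cls{NP}$-hardness already follows trivially by taking the input space to have dimension one so that LOSR operations specialize to separable states (invoking Gharibian \cite{Gharibian08}), and that containment in $\cls{NP}$ follows from Proposition \ref{prop:shared-randomness}; hence no new argument beyond the present excerpt is needed for that half.
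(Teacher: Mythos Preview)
Your proposal is correct and follows essentially the same approach as the paper: invoke Theorem~\ref{thm:wval} for strong $\cls{NP}$-hardness of weak validity, then apply Liu's approximate Yudin--Nemirovski\u\i{} reduction after verifying the three preconditions via Proposition~\ref{prop:explicit-bound} (boundedness) and Theorem~\ref{thm:ball} (inscribed ball with polynomially-related radius), and handle the LOSR case by the separable-states reduction of Remark~\ref{rem:sep-reduction}. The paper's treatment is in fact terser than yours---it simply states the three conditions and points to the same two results---so your additional bookkeeping remarks about convexity, closedness, and the normalization factor $d$ in the ball radius are more than what the paper makes explicit.
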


\chapter{Characterizations of Local Operations} \label{ch:char}

Characterizations of LOSE and LOSR operations are presented in this chapter.
These characterizations are reminiscent of the well-known characterizations of
bipartite and multipartite separable quantum states due to
Horodecki \emph{et al.}~\cite{HorodeckiH+96,HorodeckiH+01}.

Specifically, it is proven in Section \ref{sec:char:LOSR} that $\Lambda$ is a LOSR
operation if and only if $\varphi\pa{\jam{\Lambda}}\geq 0$ for every linear functional $\varphi$ that is positive on the cone of $\Pa{\bQ_1;\dots;\bQ_m}$-separable operators.
(This cone is defined in Definitions \ref{def:sep} and \ref{def:tpspace}.)
This characterization of LOSR operations is proved by a straightforward application of the
fundamental Separation Theorems of convex analysis (Facts \ref{fact:separation} and \ref{fact:herm-sep}).

More interesting is the characterization of LOSE operations presented in Section
\ref{sec:char:LOSE}:
$\Lambda$ is a LOSE operation if and only if
$\varphi\pa{\jam{\Lambda}}\geq 0$ for every linear functional $\varphi$ that is \emph{completely} positive on that \emph{same cone} of $\Pa{\bQ_1;\dots;\bQ_m}$-separable operators.
Prior to the present work, the notion of complete positivity was only ever
considered in the context wherein the underlying cone is the positive
semidefinite cone.
%(at least to the best knowledge of the author).
Indeed, what it even \emph{means} for a super-operator or functional to be
``completely'' positive on some cone other than the positive semidefinite cone
must be clarified before any nontrivial discussion can occur.

%The results of this chapter do not rely upon the results in previous sections

\section{Characterization of local operations with shared randomness}
\label{sec:char:LOSR}
%=============================================================================%

The characterization of LOSR operations presented herein is an immediate
corollary of the following simple proposition.

\begin{proposition} \label{prop:LOSR}

Let $K\subset\mathbb{R}^n$ be any closed convex cone.
A vector $x\in\mathbb{R}^n$ is an element of $K$ if and only if
$\varphi\pa{x}\geq 0$ for every linear functional
$\varphi : \mathbb{R}^n \to \mathbb{R}$
that is positive on $K$.

\end{proposition}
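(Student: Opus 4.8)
This is a classic separating-hyperplane argument, and the plan is to prove both directions directly. The ``only if'' direction is immediate: if $x\in K$ then for any $\varphi$ positive on $K$ we have $\varphi(x)\geq 0$ by the very definition of positivity on $K$. The substance is entirely in the ``if'' direction, which I would prove by contraposition: assuming $x\notin K$, I would exhibit a linear functional $\varphi$ that is positive on $K$ but satisfies $\varphi(x)<0$.

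\textbf{Key steps.} First I would invoke the Separation Theorem (Fact \ref{fact:separation}) with the two sets $C=K$, which is a nonempty closed convex cone, and $D=\{x\}$; strictly speaking I would take $D$ to be a small open ball around $x$ disjoint from $K$, which exists because $K$ is closed and $x\notin K$. The theorem then yields a vector $h\in\Real^n$ and a real number $\alpha$ with $\inner{h}{y}\geq\alpha$ for all $y\in K$ and $\inner{h}{z}<\alpha$ for all $z\in D$; since $K$ is a cone, the ``moreover'' clause lets me take $\alpha=0$. Setting $\varphi(v)\defeq\inner{h}{v}$, the first inequality says $\varphi$ is positive on $K$, and since $x\in D$ the second says $\varphi(x)<0$. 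This contradicts the hypothesis that $\varphi(x)\geq 0$ for every functional positive on $K$, completing the contrapositive.

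\textbf{Main obstacle.} Honestly there is no deep obstacle here — the statement is essentially a restatement of the cone version of the Separation Theorem, and the only care needed is (i) to ensure the separating set $D$ is chosen open so that Fact \ref{fact:separation} applies cleanly (a single point is not open), and (ii) to note that closedness of $K$ is exactly what guarantees disjointness of $K$ from a sufficiently small ball about $x$. A minor bookkeeping point is that the functional obtained is automatically nonzero (so it is genuinely ``a'' functional), but this plays no role in the argument. The one genuinely substantive downstream use — applying this with $K$ taken to be the cone of $\Pa{\bQ_1;\dots;\bQ_m}$-separable operators inside the Euclidean space $\her{\kprod{\cY}{1}{m}\ot\kprod{\cX}{1}{m}}\cong\Real^n$, after verifying that cone is closed and convex — is deferred to the corollary that follows, and I would not carry it out within the proof of this proposition itself.
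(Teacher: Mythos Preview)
Your proposal is correct and matches the paper's proof essentially line for line: the paper handles the ``only if'' direction in one sentence, and for the ``if'' direction it applies Fact~\ref{fact:separation} to $K$ and a small open ball about $x$ (exactly your fix for the openness hypothesis), uses the cone clause to take $\alpha=0$, and sets $\varphi(z)=\inner{h}{z}$. Your remarks about closedness guaranteeing the disjoint ball and about deferring the application to the separable cone to the corollary also mirror the paper exactly.
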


\begin{proof}

The ``only if'' part of the proposition is immediate:
as $x$ is in $K$, any linear functional positive on $K$ must also be positive on
$x$.
For the ``if'' part of the proposition, suppose that
$x$ is not an element of $K$.
The Separation Theorem (Fact \ref{fact:separation}) implies that
there exists a vector $h\in\mathbb{R}^n$
such that
$\inner{h}{y} \geq 0$ for all $y\in K$, yet
$\inner{h}{x} < 0$.
(To apply Fact \ref{fact:separation}, consider a small open ball that is disjoint from $K$ and contains $x$.)
Let $\varphi : \mathbb{R}^n \to \mathbb{R}$
be the linear functional given by
%$\varphi\pa{z}=\inner{h}{z}$ for all $z\in\mathbb{R}^n$.
$\varphi : z \mapsto \inner{h}{z}$.
It is clear that $\varphi$ is positive on $K$, yet $\varphi\pa{x}<0$.
\end{proof}

\begin{corollary}[Characterization of LOSR operations] \label{thm:LOSR}

  A quantum operation
  $\Lambda : \lin{\kprod{\cX}{1}{m}} \to \lin{\kprod{\cA}{1}{m}}$
  is an $m$-party LOSR operation if and only if
  $\varphi\pa{\jam{\Lambda}} \geq 0$ for every linear functional
  $\varphi : \lin{\kprod{\cA}{1}{m}\otimes\kprod{\cX}{1}{m}} \to \mathbb{C}$
  that is positive on the cone of $\Pa{\bQ_1;\dots;\bQ_m}$-separable operators.

\end{corollary}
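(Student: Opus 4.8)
The statement is precisely the specialization of Proposition \ref{prop:LOSR} to the case where the closed convex cone $K$ is the cone of $\Pa{\bQ_1;\dots;\bQ_m}$-separable operators, sitting inside the real vector space $\her{\kprod{\cA}{1}{m}\ot\kprod{\cX}{1}{m}}$. So the plan is essentially to reduce the corollary to Proposition \ref{prop:LOSR} together with the basic fact, recorded just before Definition \ref{def:LOSR}, that a quantum operation $\Lambda$ is an $m$-party LOSR operation if and only if $\jam{\Lambda}$ is a $\Pa{\bQ_1;\dots;\bQ_m}$-separable operator. The only real work is checking the hypotheses of Proposition \ref{prop:LOSR}, namely that the set of $\Pa{\bQ_1;\dots;\bQ_m}$-separable operators is a \emph{closed} convex cone inside a finite-dimensional real vector space. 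Convexity and the cone property are immediate from Definition \ref{def:sep} (a convex combination of products $P_1\ot\cdots\ot P_m$ with $P_i\in\bQ_i^+$, closed under nonnegative scaling). The finite-dimensionality is clear since $\her{\cdot}$ of any complex Euclidean space is finite-dimensional.

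First I would address closedness. The cone is the conic hull of the set $\Set{P_1\ot\cdots\ot P_m : P_i\in\bQ_i^+}$. One clean way to see closedness is via Carath\'eodory's Theorem (Fact \ref{fact:Carateodory}): every separable operator is a sum of at most $n+1$ product terms, where $n=\dim\Pa{\kprod{\bQ}{1}{m}}$, and one may normalize so that each $P_i$ in each product term lies in a compact slice of $\bQ_i^+$ (e.g. bounded trace). Thus the cone is the image of a compact set under a continuous map, intersected appropriately, hence closed; alternatively, it is the continuous image of a compact set times a bounded simplex of coefficients after rescaling. I expect this to be a short routine argument rather than the crux.

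Then the main argument proceeds in two directions. For the ``only if'' direction: if $\Lambda$ is an $m$-party LOSR operation then $\jam{\Lambda}$ is $\Pa{\bQ_1;\dots;\bQ_m}$-separable, hence lies in the cone $K$, so by definition any linear functional $\varphi$ positive on $K$ satisfies $\varphi\pa{\jam{\Lambda}}\geq 0$. (A minor point: the corollary states $\varphi$ as a $\mathbb{C}$-valued functional on the complex vector space $\lin{\cdots}$; since $\jam{\Lambda}$ and every element of $K$ is Hermitian, it suffices to restrict $\varphi$ to $\her{\cdots}$ and note that positivity on $K$ forces $\varphi$ to take real values there, so one is genuinely in the real-vector-space setting of Proposition \ref{prop:LOSR}.) For the ``if'' direction, the contrapositive: if $\Lambda$ is a quantum operation that is \emph{not} LOSR, then $\jam{\Lambda}$ is not $\Pa{\bQ_1;\dots;\bQ_m}$-separable, i.e. not in $K$; apply Proposition \ref{prop:LOSR} (whose hypotheses we have just verified) to produce a linear functional $\varphi$ positive on $K$ with $\varphi\pa{\jam{\Lambda}}<0$, contradicting the assumption that $\varphi\pa{\jam{\Lambda}}\geq 0$ for all such $\varphi$.

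\textbf{Main obstacle.} The only step requiring genuine care is the verification of closedness of the separable cone, since Proposition \ref{prop:LOSR} is stated for \emph{closed} convex cones and the Separation Theorem it invokes needs this; everything else is bookkeeping. I would handle it by the Carath\'eodory-plus-compactness argument sketched above, and I would also remark explicitly on the passage between the complex functional $\varphi$ in the corollary's statement and the real functional in Proposition \ref{prop:LOSR}, so that the reduction is airtight.
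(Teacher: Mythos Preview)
Your proposal is correct and matches the paper's approach exactly: reduce to Proposition~\ref{prop:LOSR} applied to the closed convex cone of $\Pa{\bQ_1;\dots;\bQ_m}$-separable operators inside the real space $\her{\kprod{\cA}{1}{m}\ot\kprod{\cX}{1}{m}}$, invoke the equivalence between LOSR and separability of $\jam{\Lambda}$, and handle the passage between real and complex linear functionals. The paper is terser than you are---it simply asserts that the separable cone is closed and convex without argument, and it phrases the real-versus-complex issue as extending a real functional to a complex one rather than restricting---but there is no substantive difference.
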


\begin{proof}

In order to apply Proposition \ref{prop:LOSR}, it suffices to note the following:
\begin{itemize}
\item
%(i)
The space
$\her{\kprod{\cA}{1}{m}\ot\kprod{\cX}{1}{m}}$
%of Hermitian operators
is isomorphic to $\mathbb{R}^n$ for
$n=\dim\pa{\kprod{\cA}{1}{m}\ot\kprod{\cX}{1}{m}}^2$.
\item
%(ii)
The $\Pa{\bQ_1;\dots;\bQ_m}$-separable operators form a closed convex cone within the space $\her{\kprod{\cA}{1}{m}\ot\kprod{\cX}{1}{m}}$.
\end{itemize}

While Proposition \ref{prop:LOSR} only gives the desired result for real
linear functionals
$\varphi: \her{\kprod{\cA}{1}{m}\ot\kprod{\cX}{1}{m}} \to \mathbb{R}$,
it is trivial to construct a complex extension functional
$\varphi':\lin{\kprod{\cA}{1}{m}\otimes\kprod{\cX}{1}{m}} \to \mathbb{C}$
that agrees with $\varphi$ on $\her{\kprod{\cA}{1}{m}\ot\kprod{\cX}{1}{m}}$.
%Indeed, an arbitrary operator $Z$ may be written $Z=A+iB$ where
%$A,B$ are Hermitian and $i$ denotes the imaginary unit.
%For these operators, it suffices to take
%$\varphi'(Z) = \varphi(A) + i\varphi(B)$.
\end{proof}

\section{Characterization of local operations with shared entanglement}
\label{sec:char:LOSE}
%=============================================================================%

In order to state our characterization of LOSE operations, it is convenient to parameterize the shorthand notation $\bQ_i$ of Definition \ref{def:tpspace}, which denotes the subspace of Hermitian operators $X$ with $X=\jam{\Psi}$ for some trace-preserving super-operator $\Psi$, or a scalar multiple thereof.

\begin{definition}[Parameterization $\bQ_i(\cE_i)$ of the shorthand notation $\bQ_i$]
  For each $i=1,\dots,m$ and each complex Euclidean space $\cE_i$,
  let \[\bQ_i\pa{\cE_i}\subset\her{\cA_i\otimes\cX_i\otimes\cE_i}\]
  denote the subspace of operators $\jam{\Psi}$ for which
  \[\Psi:\lin{\cX_i\otimes\cE_i}\to\lin{\cA_i}\]
  is a trace-preserving super-operator, or a scalar multiple thereof.
  In particular, the parameter space $\cE_i$ is always tensored with the \emph{input} space of $\Psi$, as opposed to the output space.
\end{definition}

\begin{theorem}[Characterization of LOSE operations]
\label{thm:LOSE}

  A quantum operation
  \[\Lambda : \lin{\kprod{\cX}{1}{m}} \to \lin{\kprod{\cA}{1}{m}}\]
  is an $m$-party LOSE operation if and only if
  $\varphi\pa{\jam{\Lambda}} \geq 0$ for every linear functional
  \[\varphi : \lin{\kprod{\cA}{1}{m}\otimes\kprod{\cX}{1}{m}} \to \mathbb{C}\]
  with the property that the super-operator
  $\Pa{\varphi \otimes \idsup{\kprod{\cE}{1}{m}}}$
  is positive on the cone of
  $\Pa{\bQ_1\pa{\cE_1};\dots;\bQ_m\pa{\cE_m}}$-separable operators
  for all choices of complex Euclidean spaces $\cE_1,\dots,\cE_m$.

\end{theorem}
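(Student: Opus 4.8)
The plan is to characterize LOSE operations by duality, mirroring the structure of the proof of Corollary \ref{thm:LOSR} but lifting the ambient cone by an auxiliary ``environment'' space. First I would observe that, by the definition of a finitely approximable LOSE operation, the set of operators $\jam{\Lambda}$ for LOSE $\Lambda$ is the closure of the convex cone generated by operators of the form $\ptr{\kprod{\cE}{1}{m}}{(\jam{\Psi_1}\ot\cdots\ot\jam{\Psi_m})(I_{\kprod{\cA}{1}{m}\ot\kprod{\cX}{1}{m}}\ot\sigma)}$ where $\sigma$ ranges over separable states on $\kprod{\cE}{1}{m}$ and each $\Psi_i$ is trace-preserving; equivalently, unrolling the partial trace, it is the closure of the cone of operators of the form $\ptr{\kprod{\cE}{1}{m}}{W}$ where $W$ is a $\Pa{\bQ_1(\cE_1);\dots;\bQ_m(\cE_m)}$-separable operator (after absorbing $\sigma$ into the $\Psi_i$'s, which is where Proposition \ref{prop:shared-randomness}-style bookkeeping is needed only for LOSR, not here). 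Call this closed convex cone $\bK$, living in $\her{\kprod{\cA}{1}{m}\ot\kprod{\cX}{1}{m}}$. Then I would apply Proposition \ref{prop:LOSR}: $\jam{\Lambda}\in\bK$ if and only if $\varphi(\jam{\Lambda})\geq 0$ for every linear functional $\varphi$ that is positive on $\bK$.

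The crux is then to translate ``$\varphi$ is positive on $\bK$'' into ``$\varphi\ot\idsup{\kprod{\cE}{1}{m}}$ is positive on $\Pa{\bQ_1(\cE_1);\dots;\bQ_m(\cE_m)}$-separable operators for all $\cE_1,\dots,\cE_m$''. For one direction, suppose the latter holds and take any generator $\ptr{\kprod{\cE}{1}{m}}{W}$ of $\bK$ with $W$ separable over the $\bQ_i(\cE_i)$. The key identity I would record first (it is essentially a generalization of item \ref{item:prop:jam:act} of the Choi--Jamio\l kowski properties, presumably the forthcoming Proposition \ref{prop:identities}) relates $\varphi(\ptr{\cE}{W})$ to $(\varphi\ot\idsup{\cE})(W)$ paired against the identity on $\cE$, or more precisely shows $\varphi(\ptr{\kprod{\cE}{1}{m}}{W}) = \Inner{I_{\kprod{\cE}{1}{m}}}{(\varphi\ot\idsup{\kprod{\cE}{1}{m}})(W)}$; since $(\varphi\ot\idsup{\kprod{\cE}{1}{m}})(W)$ is positive semidefinite by hypothesis and $I\succeq 0$, this inner product is nonnegative, so $\varphi$ is positive on $\bK$. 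The reverse direction is the delicate one: given $\varphi$ positive on $\bK$, I must show $(\varphi\ot\idsup{\kprod{\cE}{1}{m}})(P_1\ot\cdots\ot P_m)\succeq 0$ for every product of positive semidefinite $P_i\in\bQ_i(\cE_i)^+$. Here I would test the positive semidefinite operator $(\varphi\ot\idsup{\kprod{\cE}{1}{m}})(P_1\ot\cdots\ot P_m)$ against an arbitrary rank-one operator $uu^*$ on $\kprod{\cE}{1}{m}$, write $u$ in a product-respecting way, and show that $\Inner{uu^*}{(\varphi\ot\idsup{\kprod{\cE}{1}{m}})(P_1\ot\cdots\ot P_m)} = \varphi\Pa{\ptr{\kprod{\cE}{1}{m}}{(I\ot uu^*)(P_1\ot\cdots\ot P_m)}}$, and that the argument of $\varphi$ here is a product of operators each lying in $\bQ_i^+$ (the contraction of $P_i$ against $u_i u_i^*$ on $\cE_i$ is, up to scaling, the Choi representation of a trace-preserving map precomposed with the state $u_iu_i^*$), hence lies in $\bK$; positivity of $\varphi$ on $\bK$ then finishes it. By linearity and continuity this extends from rank-one to all positive semidefinite operators on $\kprod{\cE}{1}{m}$ and then to all $\Pa{\bQ_1(\cE_1);\dots;\bQ_m(\cE_m)}$-separable operators.

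The main obstacle I anticipate is precisely this reverse direction, specifically verifying that $\ptr{\cE_i}{(I_{\cA_i\ot\cX_i}\ot u_iu_i^*)P_i}$ genuinely lands in $\bQ_i^+$ (i.e.\ is a nonnegative multiple of the Choi representation of a trace-preserving super-operator $\lin{\cX_i}\to\lin{\cA_i}$) whenever $P_i\in\bQ_i(\cE_i)^+$. This requires carefully tracking which tensor factor the parameter space $\cE_i$ attaches to and confirming that contracting with a pure state on $\cE_i$ preserves the defining condition $\ptr{\cA_i}{(\cdot)}=\lambda I_{\cX_i}$ up to the scalar $\lambda=\|u_i\|^2$; this is a straightforward but notation-heavy partial-trace computation, exactly the kind of ``index gymnastics'' the paper elsewhere defers. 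A secondary technical point is justifying the closure/continuity step so that the characterization captures finitely approximable LOSE operations and not merely those with finite entanglement; since $\bK$ is defined as a closure and Proposition \ref{prop:LOSR} is stated for closed convex cones, this should be automatic, but I would state it explicitly. Once the identity $\varphi(\ptr{\cE}{W})=\Inner{I}{(\varphi\ot\idsup{})(W)}$ and the $\bQ_i$-membership of contracted operators are in hand, the theorem follows by assembling the two directions exactly as in Corollary \ref{thm:LOSR}, together with the same trivial real-to-complex extension of functionals.
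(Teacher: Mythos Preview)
Your overall architecture---apply Proposition~\ref{prop:LOSR} to the closed convex LOSE cone and then translate ``$\varphi$ positive on that cone'' into the complete-positivity condition via a partial-trace identity---is exactly the paper's approach, and the identity you anticipate is Lemma~\ref{lm:LOSE:ip}. But your setup contains a genuine error that breaks both directions.

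You take $\sigma$ to range over \emph{separable} states and then ``absorb $\sigma$ into the $\Psi_i$'s'' to get $\bK=\{\ptr{\kprod{\cE}{1}{m}}{W}:W\ \Pa{\bQ_1(\cE_1);\dots;\bQ_m(\cE_m)}\textrm{-separable}\}$. That cone is the LOSR cone, not the LOSE cone: tracing out a separable $W$ yields a product of operators in $\bQ_i^+$, and absorbing a separable $\sigma$ into the local maps is precisely Proposition~\ref{prop:shared-randomness}. For LOSE the shared state $\sigma$ must range over \emph{all} of $\pos{\kprod{\cE}{1}{m}}$, and the correct generators are $\ptr{\kprod{\cE}{1}{m}}{(I\ot\sigma^\trans)(\jam{\Psi_1}\ot\cdots\ot\jam{\Psi_m})}$ with $\sigma$ arbitrary. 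The same error reappears in your reverse direction: testing $(\varphi\ot\idsup{})(P_1\ot\cdots\ot P_m)$ only against product pure states $u_1u_1^*\ot\cdots\ot u_mu_m^*$ does \emph{not} establish positive semidefiniteness on $\kprod{\cE}{1}{m}$---it only checks certain diagonal entries. You must test against an arbitrary density operator $\sigma$, and then Lemma~\ref{lm:LOSE:ip} gives $\Inner{\overline{\sigma}}{(\varphi\ot\idsup{})(\jam{\kprod{\Psi}{1}{m}})}=\varphi(\jam{\Lambda})$ where $\Lambda:X\mapsto(\kprod{\Psi}{1}{m})(X\ot\sigma)$ is a LOSE operation with possibly \emph{entangled} shared state. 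The contracted operator is not a product over $i$ when $\sigma$ is entangled; it lies in the LOSE cone, not the LOSR cone. Once $\bK$ is correctly taken to be the LOSE cone and you test against arbitrary $\sigma$, your argument goes through and coincides with the paper's proof. As a sanity check: your version as written would yield ``$\Lambda$ is LOSR iff $\varphi(\jam{\Lambda})\geq 0$ for all completely positive $\varphi$,'' which together with Theorem~\ref{thm:LOSE} would force LOSR $=$ LOSE, contradicting the existence of entanglement-assisted nonlocal correlations.
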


\begin{remark}
\label{rem:LOSE:cp}

The positivity condition of Theorem \ref{thm:LOSE} bears striking resemblance to
the familiar notion of complete positivity of a super-operator.
With this resemblance in mind, a linear functional $\varphi$ obeying
the positivity condition of Theorem \ref{thm:LOSE} is said to be
\emph{completely positive on the
$\Pa{\bQ_1\pa{\cE_1};\dots;\bQ_m\pa{\cE_m}}$-separable family of cones}.
In this sense, Theorem \ref{thm:LOSE} represents what seems,
to the knowledge of the author,
to be the first application of the notion of complete positivity to a family of
cones other than the positive semidefinite family.

Moreover, for any fixed choice of complex Euclidean spaces $\cE_1,\dots,\cE_m$ there exists a linear functional $\varphi$ for which the super-operator $\Pa{\varphi \otimes \idsup{\kprod{\cE}{1}{m}}}$ is positive on the cone of $\Pa{\bQ_1\pa{\cE_1};\dots;\bQ_m\pa{\cE_m}}$-separable operators,
and yet $\varphi$ is \emph{not} completely positive on this family of
cones.
This curious property is a consequence of the fact the set of LOSE
operations with finite entanglement is not a closed set.
By contrast, complete positivity (in the traditional sense) of a super-operator
$\Phi:\lin{\cX}\to\lin{\cA}$ is assured whenever $\Pa{\Phi\otimes\idsup{\cZ}}$ is
positive for a space $\cZ$ with dimension at least that of $\cX$.
(See, for example, Bhatia \cite{Bhatia07} for a proof of this fact.)

\end{remark}

The proof of Theorem \ref{thm:LOSE} employs the following helpful identity
involving the Choi-Jamio\l kowski representation for super-operators.
This identity is proven by straightforward calculation.

\begin{proposition}
\label{prop:identities}

  Let $\Psi:\lin{\cX\otimes\cE} \to \lin{\cA}$
  %be a super-operator
  and let $Z\in\lin{\cE}$.
  %be an operator.
  Then the super-operator
  \(\Lambda : \lin{\cX} \to \lin{\cA} \)
  defined by
  \( \Lambda\pa{X} = \Psi \pa{X\otimes Z} \)
  for all $X$ satisfies
  \(
  \jam{\Lambda} =
  \Ptr{\cE}{ \Br{ I_{\cA\otimes\cX} \otimes Z^\trans } \jam{\Psi} }.
  \)

\end{proposition}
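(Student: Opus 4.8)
The statement to prove is Proposition~\ref{prop:identities}: given $\Psi:\lin{\cX\otimes\cE}\to\lin{\cA}$ and $Z\in\lin{\cE}$, the super-operator $\Lambda:X\mapsto\Psi(X\otimes Z)$ satisfies $\jam{\Lambda}=\Ptr{\cE}{\Pa{I_{\cA\otimes\cX}\otimes Z^\trans}\jam{\Psi}}$. This is purely a matter of unwinding the definitions of the Choi-Jamio\l kowski isomorphism and comparing the two sides, so the plan is a direct computation rather than anything conceptual. I would present it as a short verification, being careful about the ordering of tensor factors, since the excerpt fixes the convention $\jam{\Phi}=\sum_{i,j}\Phi(e_ie_j^*)\otimes e_ie_j^*$ with output space first and input space second.

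First I would write out $\jam{\Psi}$ using a product basis for $\cX\otimes\cE$: letting $\{e_i\}\subset\cX$ and $\{f_k\}\subset\cE$ be standard bases, $\jam{\Psi}=\sum_{i,j}\sum_{k,l}\Psi\Pa{(e_i\otimes f_k)(e_j\otimes f_l)^*}\otimes(e_i e_j^*)\otimes(f_k f_l^*)$, where I organize the tensor factors as $\cA\otimes\cX\otimes\cE$ (this matches the stated form of the answer). Next I would apply $I_{\cA\otimes\cX}\otimes Z^\trans$ on the left and take the partial trace over $\cE$. Using $\ptr{\cE}{(f_k f_l^*)Z^\trans}=\langle f_l\mid Z^\trans f_k\rangle=(Z^\trans)_{l,k}=Z_{k,l}$ (with the standard-basis matrix entries), the $\cE$-part collapses and I am left with $\sum_{i,j}\Pa{\sum_{k,l}Z_{k,l}\,\Psi\Pa{(e_i\otimes f_k)(e_j\otimes f_l)^*}}\otimes e_i e_j^*$. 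Meanwhile, by definition $\jam{\Lambda}=\sum_{i,j}\Lambda(e_i e_j^*)\otimes e_i e_j^*=\sum_{i,j}\Psi\Pa{(e_i e_j^*)\otimes Z}\otimes e_i e_j^*$, and expanding $Z=\sum_{k,l}Z_{k,l}\,f_k f_l^*$ and using bilinearity of the tensor product and linearity of $\Psi$ gives exactly the same expression. Comparing the two establishes the identity.

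I should double-check one subtle point: the convention for $\vectorize$/$\Jamiolkowski$ used here has the $\vectorize$-map act as $\col{f_j e_i^*}=f_j\otimes e_i$, so when I identify $Z^\trans$ with its matrix I must confirm that $\ptr{\cE}{(f_kf_l^*)Z^\trans}$ produces the coefficient $Z_{k,l}$ rather than $Z_{l,k}$; tracking this carefully is really the only place an error could creep in, and it is precisely the reason the transpose (not $Z$ or $\overline{Z}$ or $Z^*$) appears in the formula. Since there is no genuine obstacle here, I would keep the proof to a few lines: state the two standard bases, write $\jam{\Psi}$ and $\jam{\Lambda}$ in coordinates, apply $(I_{\cA\otimes\cX}\otimes Z^\trans)$ and $\ptr{\cE}{\cdot}$ to the former, and observe term-by-term equality. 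The only things to be vigilant about are the tensor-factor ordering in the codomain $\her{\cA\otimes\cX\otimes\cE}$ and the $Z$-versus-$Z^\trans$ bookkeeping, both of which are routine once the conventions from Section~\ref{sec:intro:linalg} are pinned down.
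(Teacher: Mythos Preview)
Your proposal is correct and follows essentially the same approach as the paper: both expand $\jam{\Psi}$ in the product standard basis $\{e_ie_j^*\otimes f_kf_l^*\}$, apply $I_{\cA\otimes\cX}\otimes Z^\trans$ and $\trace_\cE$ to extract the coefficient $z_{k,l}$ via $\tr{Z^\trans f_kf_l^*}=z_{k,l}$, and then recognize the resulting sum as $\jam{\Lambda}$. The paper's proof is even a touch more compressed than yours, but the computation is line-for-line the same.
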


\begin{proof}

Let
$\set{e_1,\dots,e_{\dim\pa{\cX}}}$ and
$\set{f_1,\dots,f_{\dim\pa{\cE}}}$
denote the standard bases of $\cX$ and $\cE$, respectively,
and let $z_{k,l}\in\mathbb{C}$ be the coefficients of $Z$ in the basis
$\set{f_k f_l^* : k,l=1,\dots,\dim\pa{\cE}}$
of $\lin{\cE}$.
We have
\begin{align*}
  \Ptr{\cE}{ \Br{ I_{\cA\otimes\cX} \otimes Z^\trans } \jam{\Psi} } &=
  \sum_{i,j=1}^{\dim\pa{\cX}} \sum_{k,l=1}^{\dim\pa{\cE}}
  \Psi \Pa{e_i e_j^* \otimes f_k f_l^* } \otimes e_i e_j^* \cdot
  \underbrace{ \Tr{ Z^\trans f_k f_l^* } }_{\textrm{\normalsize $z_{k,l}$}} \\
  &=
  \sum_{i,j=1}^{\dim\pa{\cX}} \Psi
  \Pa{ e_i e_j^* \otimes \sum_{k,l=1}^{\dim\pa{\cE}} z_{k,l} f_k f_l^* }
  \otimes e_i e_j^* = \jam{\Lambda}.
\end{align*}
\end{proof}

The following technical lemma is also employed in the proof of
Theorem \ref{thm:LOSE}.

\begin{lemma} \label{lm:LOSE:ip}

  Let \( \Psi : \lin{\cX\otimes\cE} \to  \lin{\cA} \),
  %be any super-operator,
  let \( Z \in \lin{\cE}$,
  %be any operator,
  and
  let \( \varphi : \lin{ \cA\otimes\cX } \to \mathbb{C} \).
  %be any linear functional.
  Then the super-operator  \( \Lambda : \lin{\cX} \to \lin{\cA} \) defined by
  \( \Lambda\pa{X} = \Psi \Pa{ X\otimes Z } \) for all $X$ satisfies
  \[
    \varphi\pa{\jam{\Lambda}} = \Inner{ \overline{Z} }
    { \Pa{ \varphi \otimes \idsup{\cE} } \Pa{ \jam{ \Psi } } }.
  \]

\end{lemma}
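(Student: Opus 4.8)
\textbf{Proof strategy for Lemma \ref{lm:LOSE:ip}.}
The plan is to reduce the claimed identity to Proposition \ref{prop:identities} together with the basic fact that the $\vectorize$ isomorphism turns a partial trace against a fixed operator into an inner product. First I would apply Proposition \ref{prop:identities} directly, which gives the closed form
\[
  \jam{\Lambda} = \Ptr{\cE}{ \Br{ I_{\cA\otimes\cX} \otimes Z^\trans } \jam{\Psi} }.
\]
Applying the (complex-linear) functional $\varphi$ to both sides and using the fact that $\varphi$ commutes with the partial trace over $\cE$ in the sense that $\varphi\Pa{\ptr{\cE}{Y}} = \Tr{\Pa{\varphi\otimes\idsup{\cE}}(Y)}$ for any $Y\in\lin{\cA\otimes\cX\otimes\cE}$ — this is just linearity of $\varphi$ applied blockwise in a basis of $\lin{\cE}$ — one obtains
\[
  \varphi\pa{\jam{\Lambda}} = \Tr{ \Pa{\varphi\otimes\idsup{\cE}}\Pa{ \Br{I_{\cA\otimes\cX}\otimes Z^\trans} \jam{\Psi} } }.
\]

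Next I would pull the factor $I_{\cA\otimes\cX}\otimes Z^\trans$ out through $\varphi\otimes\idsup{\cE}$. Since $\varphi$ acts only on the $\cA\otimes\cX$ registers and $Z^\trans$ acts only on $\cE$, and since $\varphi$ is linear, we have
\[
  \Pa{\varphi\otimes\idsup{\cE}}\Pa{ \Br{I_{\cA\otimes\cX}\otimes Z^\trans} \jam{\Psi} }
  = \Pa{I_{\cE}\,\textrm{-multiplication by }Z^\trans}\circ\Pa{\varphi\otimes\idsup{\cE}}(\jam{\Psi}),
\]
or more plainly, it equals $Z^\trans \cdot \Pa{\varphi\otimes\idsup{\cE}}(\jam{\Psi})$ where the product is taken in $\lin{\cE}$ (recall $\Pa{\varphi\otimes\idsup{\cE}}(\jam{\Psi})\in\lin{\cE}$). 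Taking the trace and using $\Tr{Z^\trans W} = \Tr{W Z^\trans} = \inner{\overline{Z}}{W}$ — which follows from $\inner{A}{B}=\Tr{A^*B}$ and $(\overline{Z})^* = Z^\trans$ — yields
\[
  \varphi\pa{\jam{\Lambda}} = \Tr{ Z^\trans \cdot \Pa{\varphi\otimes\idsup{\cE}}(\jam{\Psi}) } = \Inner{ \overline{Z} }{ \Pa{\varphi\otimes\idsup{\cE}}\Pa{\jam{\Psi}} },
\]
which is exactly the claim.

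The only genuinely delicate point is the bookkeeping for how $\varphi$ interacts with the tensor factor $\cE$: one must be careful that $\Pa{\varphi\otimes\idsup{\cE}}$ is well defined on $\lin{\cA\otimes\cX\otimes\cE}$ (which it is, since $\varphi$ is a linear functional on $\lin{\cA\otimes\cX}$, so $\varphi\otimes\idsup{\cE}$ maps into $\Complex\otimes\lin{\cE}\cong\lin{\cE}$) and that the ordering of the registers in $\jam{\Psi}\in\lin{\cA\otimes\cX\otimes\cE}$ matches the ordering implicit in $\varphi$ and in $Z^\trans$. I expect this register-matching to be the main obstacle — not a conceptual one, but one requiring a careful choice of standard bases, as in the proof of Proposition \ref{prop:identities}. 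Everything else is routine linearity and the two identities $\varphi\circ\trace_{\cE} = \Tr{}\circ(\varphi\otimes\idsup{\cE})$ and $\Tr{Z^\trans W} = \inner{\overline{Z}}{W}$.
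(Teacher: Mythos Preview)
Your argument is correct and follows essentially the same route as the paper: both proofs hinge on Proposition \ref{prop:identities} and elementary trace/inner-product identities. The only cosmetic difference is that the paper introduces the Riesz representative $H$ of $\varphi$ (so that $\varphi(X)=\inner{H}{X}$ and $\varphi^*(1)=H$) and runs the chain $\inner{\overline{Z}}{(\varphi\otimes\idsup{\cE})(\jam{\Psi})} = \inner{H\otimes\overline{Z}}{\jam{\Psi}} = \inner{H}{\ptr{\cE}{(I\otimes Z^\trans)\jam{\Psi}}} = \varphi(\jam{\Lambda})$, whereas you work directly with $\varphi$ via the commutation $\varphi\circ\trace_{\cE} = \trace\circ(\varphi\otimes\idsup{\cE})$; the content is the same.
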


\begin{proof}

Let $H$ be the unique operator satisfying
$\varphi\pa{X}=\inner{H}{X}$ for all $X$
and note that the adjoint
$\varphi^*:\mathbb{C}\to\lin{ \cA\otimes\cX }$ satisfies
$\varphi^*(1)=H$.
Then
\begin{align*}
  \Inner{ \overline{Z} }{
    \Pa{ \varphi \otimes \idsup{\cE} }
    \Pa{ \jam{ \Psi } }
  }
  &= \Inner{ \varphi^*\pa{1} \otimes \overline{Z} }{ \jam{ \Psi } }
  = \Inner{ H \otimes \overline{Z} }{ \jam{ \Psi } } \\
  &= \Inner{ H }{
      \Ptr{\cE}
        { \Br{ I_{ \cA\otimes\cX } \otimes Z^\trans } \jam{\Psi} }
    }
  = \varphi\pa{\jam{\Lambda}}.
\end{align*}
\end{proof}

%We are now ready to prove Theorem \ref{thm:LOSE}.

\begin{proof}[Proof of Theorem \ref{thm:LOSE}]

For the ``only if'' part of the theorem,
let $\Lambda$ be any LOSE operation with finite entanglement and let
$\Psi_1,\dots,\Psi_m,\sigma$ be such that
%$\Lambda\pa{X} = \Pa{\kprod{\Psi}{1}{m}}\pa{X\otimes \sigma}$ for all $X$.
\[\Lambda : X \mapsto \Pa{\kprod{\Psi}{1}{m}}\pa{X\otimes \sigma}.\]
Let $\varphi$ be any linear functional on
$\lin{\kprod{\cA}{1}{m}\otimes\kprod{\cX}{1}{m}}$
that satisfies the stated positivity condition.
Lemma \ref{lm:LOSE:ip} implies
\[
  \varphi\pa{\jam{\Lambda}} = \Inner{ \overline{\sigma} }{
    \Pa{\varphi \otimes \idsup{\kprod{\cE}{1}{m}} }
    \Pa{ \jam{\kprod{\Psi}{1}{m}} }
  } \geq 0.
\]
A standard continuity argument establishes the desired implication when
$\Lambda$ is a finitely approximable LOSE operation.

For the ``if'' part of the theorem, suppose that
$\Xi : \lin{\kprod{\cX}{1}{m}} \to \lin{\kprod{\cA}{1}{m}}$ is a
quantum operation that is \emph{not} a LOSE operation.
The Separation Theorem (Fact \ref{fact:herm-sep}) implies that there
is a Hermitian operator
$H\in\her{\kprod{\cA}{1}{m}\otimes\kprod{\cX}{1}{m}}$ such that
%\begin{enumerate}
%\item
$\inner{H}{\jam{\Lambda}} \geq 0$ for all LOSE operations $\Lambda$, yet
%\item
$\inner{H}{\jam{\Xi}} < 0$.
%\end{enumerate}
Let \[\varphi : \lin{\kprod{\cA}{1}{m}\otimes\kprod{\cX}{1}{m}}\to\mathbb{C}\]
be the linear functional given by
%$\varphi\pa{X}=\inner{H}{X}$ for all $X$.
$\varphi : X \mapsto \inner{H}{X}$.

It remains to verify that $\varphi$ satisfies the desired positivity condition.
Toward that end, let $\cE_1,\dots,\cE_m$ be arbitrary complex Euclidean spaces.
By convexity, it suffices to consider only those
$\Pa{\bQ_1\pa{\cE_1};\dots;\bQ_m\pa{\cE_m}}$-separable operators that are product
operators.
Choose any such operator and note that, up to a scalar multiple,
it has the form
$\jam{ \kprod{\Psi}{1}{m} }$ where each
$\Psi_i : \lin{\cX_i\otimes\cE_i} \to \lin{\cA_i}$ is a quantum operation.
The operator
\[
  \Pa{ \varphi \otimes \idsup{\kprod{\cE}{1}{m}} }
  \Pa{ \jam{ \kprod{\Psi}{1}{m} } }
\]
is positive semidefinite if and only if it has a nonnegative inner product with
every density operator in $\pos{\kprod{\cE}{1}{m}}$.
As $\sigma$ ranges over all such operators, so does $\overline{\sigma}$.
Moreover, every such $\sigma$---together with $\Psi_1,\dots,\Psi_m$---induces a
LOSE operation $\Lambda$ defined by
%$\Lambda\pa{X} = \Pa{\kprod{\Psi}{1}{m}}\pa{X\otimes \sigma}$ for all $X$.
$\Lambda : X \mapsto \Pa{\kprod{\Psi}{1}{m}}\pa{X\otimes \sigma}$.
Lemma \ref{lm:LOSE:ip} and the choice of $\varphi$ imply
\[
0 \leq \varphi \pa{\jam{\Lambda}} =
\Inner{ \overline{\sigma} }{
  \Pa{\varphi \otimes \idsup{\kprod{\cE}{1}{m}} }
  \Pa{ \jam{\kprod{\Psi}{1}{m}} }
},
\]
and so $\varphi$ satisfies the desired positivity condition.
\end{proof}

\chapter{No-Signaling Operations} \label{ch:no-sig}

At the end of Chapter \ref{ch:ball} it was claimed that the product space $\kprod{\bQ}{1}{m}$ is spanned by Choi-Jamio\l kowski representations of no-signaling operations.
It appears as though this fact has yet to be noted explicitly in the literature,
so a proof is offered in this chapter.

More accurately, two characterizations of no-signaling operations are presented
in Section \ref{sec:no-sig:char}, each of which is expressed as a condition on
Choi-Jamio\l kowski representations of super-operators.
It then follows immediately that $\kprod{\bQ}{1}{m}$ is spanned by Choi-Jamio\l kowski representations of no-signaling operations.

Finally, Section \ref{sec:no-sig:counter-example} provides
an example of a so-called \emph{separable} no-signaling operation that is
not a LOSE operation, thus ruling out an easy ``short cut'' to the ball of LOSR
operations revealed in Theorem \ref{thm:ball}.

%=============================================================================%
\section{Formal definition of a no-signaling operation}
%=============================================================================%

Intuitively, a quantum operation $\Lambda$ is no-signaling if it cannot be used by
spatially separated parties to violate relativistic causality.
Put another way, an operation $\Lambda$ jointly implemented by several parties is
no-signaling if those parties cannot use $\Lambda$ as a ``black box'' to
communicate with one another.

In order to facilitate a formal definition for no-signaling operations,
the shorthand notation for Kronecker products from Chapter \ref{ch:intro} must be
extended:
if $K\subseteq\set{1,\dots,m}$ is an arbitrary index set with
$K=\set{k_1,\dots,k_n}$ then we write
\[ \cX_K \defeq \cX_{k_1} \otimes \cdots \otimes \cX_{k_n} \]
with the convention that $\cX_\emptyset = \mathbb{C}$.
%\footnote{
%  Technically, an ordering should be specified on the elements of $J$.
%}
As with the original shorthand, a similar notation also applies to operators, sets of operators, and
super-operators.
The notation $\overline{K}$ refers to the set of indices \emph{not} in $K$, so that $K$,$\overline{K}$ is a partition of $\set{1,\dots,m}$.

\begin{definition}[No-signaling operation] \label{def:no-sig}

A quantum operation
$\Lambda:\lin{\kprod{\cX}{1}{m}}\to\lin{\kprod{\cA}{1}{m}}$ is an
\emph{$m$-party no-signaling operation} if for each index set
$K\subseteq\set{1,\dots,m}$ we have
\[ \ptr{\cA_K}{\Lambda\pa{\rho}} = \ptr{\cA_K}{\Lambda\pa{\sigma}} \]
whenever
%$\rho,\sigma\in\her{\kprod{\cX}{1}{m}}$ are quantum states with
\[ \ptr{\cX_K}{\rho} = \ptr{\cX_K}{\sigma}. \]
\end{definition}

What follows is a brief argument that Definition \ref{def:no-sig} captures the meaning of a
no-signaling operation---a more detailed discussion of this condition can be
found in Beckman \emph{et al.}~\cite{BeckmanG+01}.
If $\Lambda$ is no-signaling and $\rho,\sigma$ are locally indistinguishable to a
coalition $K$ of parties
(for example, when \( \ptr{\cX_{\overline{K}}}{\rho} = \ptr{\cX_{\overline{K}}}{\sigma} \))
then clearly the members of $K$ cannot perform a measurement on their portion of
the output that might allow them to distinguish $\Lambda\pa{\rho}$ from
$\Lambda\pa{\sigma}$
(that is, \( \ptr{\cA_{\overline{K}}}{\Lambda\pa{\rho}} = \ptr{\cA_{\overline{K}}}{\Lambda\pa{\sigma}} \)).
For otherwise, the coalition $K$ would have extracted
information---a signal---from the other parties that would allow it to distinguish
$\rho$ from $\sigma$.

Conversely, if there exist input states $\rho,\sigma$ such that
\( \ptr{\cX_{\overline{K}}}{\rho} = \ptr{\cX_{\overline{K}}}{\sigma} \) and yet
\( \ptr{\cA_{\overline{K}}}{\Lambda\pa{\rho}} \neq \ptr{\cA_{\overline{K}}}{\Lambda\pa{\sigma}} \)
then there exists a measurement that allows the coalition $K$ to distinguish
these two output states with nonzero bias, which implies that signaling must
have occurred.
%Intuitively, this condition captures the idea that, if several parties share a
%quantum state $\rho$, then no coalition of parties can
%use $\Lambda$ to extract information about any portion of $\rho$ not already
%owned by the coalition.
%Equivalently, $\Lambda$ cannot be used to send information from one party
%(or coalition of parties) to another.

It is not hard to see that every LOSE operation is also a no-signaling
operation.
Conversely, much has been made of the fact that there exist no-signaling
operations that are not LOSE operations---this is so-called ``super-strong''
nonlocality, exemplified by the popular ``nonlocal box'' discussed in Section
\ref{sec:no-sig:counter-example}.

%=============================================================================%
\section{Two characterizations of no-signaling operations}
\label{sec:no-sig:char}
%=============================================================================%

In this section it is shown that the product space $\kprod{\bQ}{1}{m}$ is spanned
by Choi-Jamio\l kowski representations of no-signaling operations.
(Recall from Definition \ref{def:tpspace} that each $\bQ_i\subset\her{\cA_i\otimes\cX_i}$ denotes the subspace of Hermitian operators $\jam{\Phi}$ for which $\Phi:\lin{\cX_i}\to\lin{\cA_i}$ is a trace-preserving super-operator, or a scalar multiple thereof.)
Indeed, that fact is a corollary of the following characterizations of
no-signaling operations.

\begin{theorem}[Two characterizations of no-signaling operations]
\label{thm:char:no-signal}

Let
$\Lambda:\lin{\kprod{\cX}{1}{m}}\to\lin{\kprod{\cA}{1}{m}}$
be a quantum operation.
The following are equivalent:
\begin{enumerate}

\item \label{item:no-signal}
$\Lambda$ is a no-signaling operation.

\item \label{item:prod-space}
$\jam{\Lambda}$ is an element of $\kprod{\bQ}{1}{m}$.

\item \label{item:constraints}
For each index set $K\subseteq\set{1,\dots,m}$
there exists
%a positive semidefinite operator $Q$ with
an operator
$Q\in\pos{\cA_{\overline{K}}\ot\cX_{\overline{K}}}$
with
\( \ptr{\cA_K}{\jam{\Lambda}} = Q\otimes I_{\cX_K}. \)

\end{enumerate}

\end{theorem}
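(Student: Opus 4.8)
\textbf{Proof plan for Theorem \ref{thm:char:no-signal}.} The strategy is to establish a cycle of implications: \ref{item:no-signal} $\Rightarrow$ \ref{item:constraints} $\Rightarrow$ \ref{item:prod-space} $\Rightarrow$ \ref{item:no-signal}. The implication \ref{item:prod-space} $\Rightarrow$ \ref{item:constraints} is essentially immediate once one unpacks the definition of $\kprod{\bQ}{1}{m}$: an operator in this product space is a linear combination of products $X_1\ot\cdots\ot X_m$ with $\ptr{\cA_i}{X_i}=\lambda_i I_{\cX_i}$, so tracing out $\cA_K$ leaves a linear combination of products of the form (operator on $\cA_{\overline K}\ot\cX_{\overline K}$) $\otimes$ $I_{\cX_K}$; since $\ptr{\cA_K}{\jam{\Lambda}}$ is moreover positive semidefinite (being obtained from a completely positive $\Lambda$), it has the stated form with $Q$ positive semidefinite. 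The implication \ref{item:constraints} $\Rightarrow$ \ref{item:prod-space} for $K=\emptyset$ is trivial (it just says $\jam{\Lambda}\in\her{\kprod{\cA}{1}{m}\ot\kprod{\cX}{1}{m}}$), but the key content is that the full collection of constraints for \emph{all} $K$ forces membership in the subspace $\kprod{\bQ}{1}{m}$; I would verify this by a dimension/duality argument, showing that $\kprod{\bQ}{1}{m}$ is exactly the subspace of operators satisfying ``$\ptr{\cA_K}{X}$ factors as something tensor $I_{\cX_K}$'' for every $K$ — equivalently, using the orthogonal complement characterization of a tensor product of subspaces.

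For the analytic core, \ref{item:no-signal} $\Leftrightarrow$ \ref{item:constraints}, the plan is to translate the state-level condition of Definition \ref{def:no-sig} into an operator condition via the Choi-Jamio\l kowski isomorphism. First I would observe that ``$\ptr{\cX_K}{\rho}=\ptr{\cX_K}{\sigma}$ implies $\ptr{\cA_K}{\Lambda(\rho)}=\ptr{\cA_K}{\Lambda(\sigma)}$'' is equivalent to saying that the super-operator $\ptr{\cA_K}{\cdot}\circ\Lambda$ depends only on $\ptr{\cX_K}{\rho}$, i.e.\ it factors through the partial trace $\ptr{\cX_K}{\cdot}$. Using item \ref{item:prop:jam:act} of the Proposition on properties of the Choi-Jamio\l kowski isomorphism — namely $\Lambda(X)=\ptr{\kprod{\cX}{1}{m}}{(I\ot X^\trans)\jam{\Lambda}}$ — together with the behaviour of the partial trace, this factoring-through condition becomes exactly the statement that $\ptr{\cA_K}{\jam{\Lambda}}$ has no ``dependence'' on the $\cX_K$ register beyond the identity, which is precisely $\ptr{\cA_K}{\jam{\Lambda}}=Q\ot I_{\cX_K}$ for some $Q$. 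Concretely, a super-operator $\Gamma:\lin{\cX_K\ot\cX_{\overline K}}\to\lin{\cA_{\overline K}}$ factors through $\ptr{\cX_K}{\cdot}$ if and only if $\jam{\Gamma}$, viewed appropriately, is of the form $\jam{\Gamma'}\ot I_{\cX_K}$; I would prove this elementary linear-algebraic equivalence as a short lemma (it is the same bookkeeping that appears in the proof of Theorem \ref{thm:char}, relating partial-trace conditions to one-directional no-signaling), and positivity of $Q$ follows since $\Gamma$ is completely positive whenever $\Lambda$ is.

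Finally, the corollary ``$\kprod{\bQ}{1}{m}$ is spanned by Choi-Jamio\l kowski representations of no-signaling operations'' follows by noting that the completely noisy channel $\tilde\noisy$ is no-signaling with $\jam{\tilde\noisy}$ a positive multiple of the identity, so by Theorem \ref{thm:ball} a whole Frobenius-norm ball inside $\kprod{\bQ}{1}{m}$ around the identity consists of (unnormalized) LOSR — hence no-signaling — operations; a ball of full dimension in a subspace spans that subspace.

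The step I expect to be the main obstacle is the precise linear-algebra lemma powering \ref{item:no-signal} $\Leftrightarrow$ \ref{item:constraints}: getting the transpose conventions, the ordering of tensor factors (recall $\jam{\Lambda}\in\pos{\kprod{\cA}{1}{m}\ot\kprod{\cX}{1}{m}}$ with outputs first), and the ``factors through the partial trace iff Choi operator contains $I_{\cX_K}$'' equivalence all correct and clean. Everything else is bookkeeping, but this equivalence is where a sign or index slip would invalidate the argument, so I would state and prove it carefully as a standalone proposition before assembling the cycle of implications.
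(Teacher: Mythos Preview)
Your plan is correct and the logical skeleton matches the paper's, but your execution of the implication \ref{item:constraints} $\Rightarrow$ \ref{item:prod-space} is genuinely different from the paper's. The paper isolates this implication as a standalone Theorem \ref{thm:TP-constraints} and proves it by induction on $m$: one expands $X$ in a basis $\set{E_1,\dots,E_s}$ of the last factor $\her{\cA_{m+1}\ot\cX_{m+1}}$, then uses dual functionals $\varphi_j$ to ``peel off'' each coefficient $X_j$ and check, via the partial-trace constraints, that $X_j\in\kprod{\bQ}{1}{m}$. Your proposed route---characterize $\kprod{\bQ}{1}{m}$ via orthogonal complements, using that $\bQ_i^\perp=\set{I_{\cA_i}\ot H:\tr{H}=0}$ and that $\bigcap_i\bigl(\her{\cdot}\ot\cdots\ot\bQ_i\ot\cdots\ot\her{\cdot}\bigr)=\kprod{\bQ}{1}{m}$---is more conceptual and has the pleasant side-effect of showing that the singleton constraints $K=\set{i}$ already suffice, whereas the paper's induction consumes the full family of $K$'s. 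What the paper's approach buys is explicitness: it never appeals to the projection/intersection identity for tensor products of subspaces, so nothing needs to be cited or proved as a lemma.

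For \ref{item:no-signal} $\Leftrightarrow$ \ref{item:constraints} your ``factors through $\ptr{\cX_K}{\cdot}$ iff Choi operator carries an $I_{\cX_K}$ tensor factor'' lemma is exactly what the paper does, only the paper writes it out in standard-basis coordinates (crediting Beckman \emph{et al.}) rather than stating it abstractly. One small point you glossed over and the paper makes explicit: Definition \ref{def:no-sig} is stated for density operators $\rho,\sigma$, so you must first extend the implication to all of $\lin{\kprod{\cX}{1}{m}}$ by linearity (density operators span) before invoking the factoring lemma. That is a one-line observation, but worth including since your lemma needs the hypothesis on the full operator space.
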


\begin{remark}

Membership in the set of no-signaling operations may be verified in
polynomial time by checking the linear constraints in
Item \ref{item:constraints} of Theorem \ref{thm:char:no-signal}.
While the number of such constraints grows exponentially with $m$, 
this exponential growth is not a problem because the number of parties
$m$ is always $O\pa{\log n}$ for $n=\dim\pa{\kprod{\bQ}{1}{m}}$.
(This logarithmic bound follows from the fact that each space $\bQ_i$ has dimension at
least two and the total dimension $n$ is the product of the dimensions of each
of the $m$ different spaces.)

\end{remark}

The partial trace condition of Item \ref{item:constraints} of
Theorem \ref{thm:char:no-signal} is quite plainly
suggested by Theorem \ref{thm:char} (\thmchar).
Moreover, essential components of the proofs presented for two of the three
implications claimed in Theorem \ref{thm:char:no-signal}
appear in a 2001 paper of
Beckman \emph{et al.}~\cite{BeckmanG+01}.
The following theorem, however, establishes the third implication and appears to
be new.

\begin{theorem} \label{thm:TP-constraints}

A Hermitian operator $X\in\her{\kprod{\cA}{1}{m}\otimes\kprod{\cX}{1}{m}}$ is in
$\kprod{\bQ}{1}{m}$ if and only if for each index set
$K\subseteq\set{1,\dots,m}$ there exists a Hermitian operator
$Q\in\her{\cA_{\overline{K}}\ot\cX_{\overline{K}}}$
with
\( \ptr{\cA_K}{X} = Q \otimes I_{\cX_K}. \)

\end{theorem}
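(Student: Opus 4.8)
The plan is to prove Theorem~\ref{thm:TP-constraints} by establishing both implications separately, with the forward direction being routine and the reverse direction carrying the real content. For the forward direction, suppose $X\in\kprod{\bQ}{1}{m}$. By definition $X$ is a linear combination of product operators $X_1\otimes\cdots\otimes X_m$ with each $X_i\in\bQ_i$, meaning $\ptr{\cA_i}{X_i}=\lambda_i I_{\cX_i}$ for some scalar $\lambda_i$. For a fixed index set $K$, taking the partial trace $\ptr{\cA_K}{\cdot}$ of such a product factors through the tensor components indexed by $K$, producing $\bigl(\prod_{i\in K}\lambda_i\bigr) \bigl(\bigotimes_{i\in\overline{K}}X_i\bigr)\otimes I_{\cX_K}$. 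Summing over the terms of the linear combination and collecting the $\overline{K}$-factors into a single Hermitian operator $Q\in\her{\cA_{\overline{K}}\otimes\cX_{\overline{K}}}$ (after suitably permuting tensor factors so that $\cX_K$ appears last) gives exactly $\ptr{\cA_K}{X}=Q\otimes I_{\cX_K}$. This direction is just bookkeeping with tensor factors.

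For the reverse direction, I would argue by induction on $m$. The base case $m=1$ is immediate: the hypothesis with $K=\{1\}$ says $\ptr{\cA_1}{X}=\lambda I_{\cX_1}$ for a scalar $\lambda$ (here $Q\in\her{\mathbb{C}}=\mathbb{R}$), which is precisely the defining condition of $\bQ_1$. For the inductive step, assume the result for $m-1$ parties and let $X\in\her{\kprod{\cA}{1}{m}\otimes\kprod{\cX}{1}{m}}$ satisfy all the partial-trace conditions. The strategy is to use the condition for the singleton set $K=\{m\}$, which gives $\ptr{\cA_m}{X}=R\otimes I_{\cX_m}$ for some Hermitian $R\in\her{\cA_{\overline{\{m\}}}\otimes\cX_{\overline{\{m\}}}}$. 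I would then check that $R$ satisfies the hypotheses of the theorem for the first $m-1$ parties: for any $K'\subseteq\{1,\dots,m-1\}$, the condition for $K'\cup\{m\}$ applied to $X$ reads $\ptr{\cA_{K'\cup\{m\}}}{X}=Q\otimes I_{\cX_{K'\cup\{m\}}}$, and since $\ptr{\cA_{K'\cup\{m\}}}{X}=\ptr{\cA_{K'}}{\ptr{\cA_m}{X}}=\ptr{\cA_{K'}}{R}\otimes I_{\cX_m}$, comparing these (using that the $I_{\cX_m}$ factors cancel) yields $\ptr{\cA_{K'}}{R}=Q\otimes I_{\cX_{K'}}$ as required. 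By the inductive hypothesis $R\in\kprod{\bQ}{1}{m-1}$.

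The final step is to reconstruct $X$ itself from $R$ and show it lies in $\kprod{\bQ}{1}{m}$. Write $R=\sum_j R_{1,j}\otimes\cdots\otimes R_{m-1,j}$ with each $R_{i,j}\in\bQ_i$. Consider the difference $Y\defeq X-\frac{1}{\dim(\cA_m)}(R\otimes I_{\cA_m\otimes\cX_m})$ after an appropriate permutation of factors; by construction $\ptr{\cA_m}{Y}=0$, and moreover $Y$ still satisfies the remaining partial-trace conditions since the subtracted term does. A operator with vanishing partial trace over $\cA_m$ that additionally satisfies the $K=\{m\}$-type conditions in the other slots must decompose compatibly with $\kprod{\bQ}{1}{m-1}$ in the first $m-1$ slots while having $\ptr{\cA_m}{\cdot}=0$ in the last slot, which places it in $\kprod{\bQ}{1}{m-1}\otimes\bQ_m^0$ where $\bQ_m^0=\{Z\in\her{\cA_m\otimes\cX_m}:\ptr{\cA_m}{Z}=0\}\subset\bQ_m$ is the traceless subspace; hence $Y\in\kprod{\bQ}{1}{m}$. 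I expect the main obstacle to be exactly this last decomposition argument: one must carefully verify that the tensor-factor structure of $Y$ (knowing its partial traces in various slots behave well) genuinely forces membership in the span $\kprod{\bQ}{1}{m}$ rather than merely a weaker linear-constraint space. A clean way to sidestep subtlety is to note that $\kprod{\bQ}{1}{m}$ is itself defined by exactly the collection of linear equations $\{\ptr{\cA_i}{\cdot}=\lambda_i I_{\cX_i}\text{ on slot }i\}$ tensored appropriately, so one reduces to checking that the full system of conditions over all $K$ is equivalent to the system over all singletons together with the induction, which is a dimension-counting or direct-sum-decomposition exercise on $\her{\cA_i\otimes\cX_i}=\bQ_i\oplus\bQ_i^{\perp}$ in each slot. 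Combining this with Theorem~\ref{thm:char:no-signal}'s use of this result then completes the claim that $\kprod{\bQ}{1}{m}$ is spanned by no-signaling operations.
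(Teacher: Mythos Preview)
Your forward direction and base case are fine, and your instinct to induct on $m$ is the right one. The gap is exactly where you flag it: the step with $Y$. After subtracting $\frac{1}{\dim(\cA_m)}R\otimes I_{\cA_m\otimes\cX_m}$ you are left with an operator $Y$ that still satisfies the full collection of partial-trace hypotheses (plus the extra constraint $\ptr{\cA_m}{Y}=0$), and you then need to show $Y\in\kprod{\bQ}{1}{m}$. But that is precisely the theorem statement applied to $Y$; the subtraction has not reduced the problem. Your assertion that $Y$ ``must decompose compatibly'' into $\kprod{\bQ}{1}{m-1}\otimes\bQ_m^0$ is the whole content of the theorem, not a consequence of what you have established so far. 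The fallback you sketch---using $\her{\cA_i\otimes\cX_i}=\bQ_i\oplus\bQ_i^{\perp}$---is a genuinely different and valid route (the singleton condition $\ptr{\cA_i}{X}=Q\otimes I_{\cX_i}$ is equivalent to $X\in\her{\text{rest}}\otimes\bQ_i$, and the intersection of these subspaces over all $i$ is exactly $\kprod{\bQ}{1}{m}$), but you have not carried it out, and once you do, the $R,Y$ detour becomes unnecessary.

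The paper's induction avoids this circularity by a different mechanism. Instead of tracing out the last slot, it fixes an arbitrary orthonormal basis $\{E_1,\dots,E_s\}$ of the full space $\her{\cA_{m}\otimes\cX_{m}}$ and expands $X=\sum_j X_j\otimes E_j$ uniquely. For each $j$ it applies the dual linear functional $\varphi_j:E\mapsto\inner{H_j}{E}$ (picking out the $j$th coefficient) on the last tensor factor; since $\varphi_j$ commutes with $\ptr{\cA_K}$ for $K\subseteq\{1,\dots,m-1\}$, the hypotheses on $X$ descend to show that each $X_j$ satisfies the $(m{-}1)$-party hypotheses, so $X_j\in\kprod{\bQ}{1}{m-1}$ by induction. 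One then takes a maximal linearly independent subset $\{X_1,\dots,X_t\}$, rewrites $X=\sum_i X_i\otimes Y_i$, and runs the symmetric argument on the first $m-1$ factors to force $Y_i\in\bQ_m$. The key difference from your attempt is that expanding in a \emph{full basis} of the last slot lets you isolate each coefficient and feed it directly into the inductive hypothesis, rather than trying to peel off a single ``trace part'' and being stuck with a residual of the same complexity.
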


\begin{proof}
The ``only if'' portion of the theorem is straightforward---only the
``if'' portion is proven here.
The proof proceeds by induction on $m$.
The base case $m=1$ is trivial.
Proceeding directly to the general case,
let $s=\dim\pa{\her{\cA_{m+1}\otimes\cX_{m+1}}}$ and
let $\set{E_1,\dots,E_s}$ be a basis of
$\her{\cA_{m+1}\otimes\cX_{m+1}}$.
Let $X_1,\dots,X_s\in\her{\kprod{\cA}{1}{m}\otimes\kprod{\cX}{1}{m}}$
be the unique operators satisfying
\[ X = \sum_{j=1}^s X_j \otimes E_j. \]
It shall be proven that $X_1,\dots,X_s\in\kprod{\bQ}{1}{m}$.
The intuitive idea is to exploit the linear independence of $E_1,\dots,E_s$ in
order to ``peel off'' individual product terms in the decomposition of $X$.

Toward that end, for each fixed index $j\in\set{1,\dots,s}$ let
%$H_j\in\her{\cA_{m+1}\otimes\cX_{m+1}}$ be an operator
$H_j$ be a Hermitian operator
for which the real number $\inner{H_j}{E_i}$ is nonzero only when
$i=j$.
Define a linear functional
%$\varphi_j\Pa{E} = \inner{H_j}{E}$ for all $E$
$\varphi_j:E\mapsto\inner{H_j}{E}$
%$\varphi_j:\lin{\cA_{m+1}\ot\cX_{m+1}}\to\mathbb{C}:E\mapsto\inner{H_j}{E}$
and note that
\[
\Pa{ \idsup{\kprod{\cA}{1}{m}\otimes\kprod{\cX}{1}{m} } \ot \varphi_j}
  \pa{X} =
%\sum_{i=1}^s \inner{H_j}{E_i} X_i = \inner{H_j}{E_j} X_j.
\sum_{i=1}^s \varphi_j\pa{E_i} X_i = \varphi_j\pa{E_j} X_j.
\]
Fix an arbitrary partition $K,\overline{K}$ of the index set $\set{1,\dots,m}$.
By assumption,
\( \ptr{\cA_K}{X} = Q\otimes I_{\cX_K} \)
for some Hermitian operator $Q$.
Apply
$\trace_{\cA_K}$
to both sides of the above identity, then use the fact that
$\trace_{\cA_K}$
%``commutes'' with
%$\Pa{\idsup{ \kprod{\cA}{1}{m}\otimes\kprod{\cX}{1}{m} } \ot \varphi_j}$
and $\varphi_j$ act upon different spaces
to obtain
\begin{align*}
%& \Inner{H_j}{E_j} \Ptr{\cA_K}{X_j} \\
& \varphi_j\pa{E_j} \ptr{\cA_K}{X_j} \\
={}& \Ptr{\cA_K}{
  \Pa{\idsup{ \kprod{\cA}{1}{m}\otimes\kprod{\cX}{1}{m} } \ot \varphi_j}
  \Pa{X}
} \\
={}& \Pa{ 
  \idsup{\cA_{\overline{K}} \otimes \kprod{\cX}{1}{n}}
  \ot \varphi_j
} \Pa{ \Ptr{\cA_K}{X} } \\
={}& \Pa{
  \idsup{\cA_{\overline{K}}\otimes \cX_{\overline{K}} }
  \ot \varphi_j
} \pa{ Q }
\otimes I_{\cX_K},
\end{align*}
from which it follows that $\ptr{\cA_K}{X_j}$ is a product operator of the form
$R \otimes I_{\cX_K}$ for some Hermitian operator $R$.
As this identity holds for all index sets $K$,
it follows from the induction hypothesis that
$X_j\in\kprod{\bQ}{1}{m}$ as desired.

Now, choose a maximal linearly independent subset
$\set{X_1,\dots,X_t}$ of $\set{X_1,\dots,X_s}$ and let
$Y_1,\dots,Y_t$ be the unique Hermitian
operators satisfying
\[ X = \sum_{i=1}^t X_i \otimes Y_i. \]
A similar argument shows $Y_1,\dots,Y_t\in\bQ_{m+1}$,
which completes the induction.
\end{proof}

%We now present a proof of the rest of Theorem \ref{thm:char:no-signal}.

\begin{proof}[Proof of Theorem \ref{thm:char:no-signal}]

\begin{description}

\item[Item \ref{item:constraints} implies item \ref{item:prod-space}.]

This implication follows immediately from Theorem \ref{thm:TP-constraints}.

\item[Item \ref{item:prod-space} implies item \ref{item:no-signal}.]

The proof of this implication borrows heavily from the proof of Theorem 2 in
Beckman \emph{et al.}~\cite{BeckmanG+01}.

Fix any partition $K,\overline{K}$ of the index set $\set{1,\dots,m}$.
Let $s=\dim\pa{\lin{\cX_{\overline{K}}}}$ and $t=\dim\pa{\lin{\cX_K}}$
and let $\set{\rho_1,\dots,\rho_s}$ and $\set{\sigma_1,\dots,\sigma_t}$
be bases of $\lin{\cX_{\overline{K}}}$ and $\lin{\cX_K}$, respectively,
that consist entirely of density operators.
Given any two operators $X,Y\in\lin{\kprod{\cX}{1}{m}}$ let
$x_{j,k},y_{j,k}\in\mathbb{C}$ be the unique coefficients of $X$ and $Y$
respectively in the product basis $\set{\rho_j\otimes \sigma_k}$.
Then
\( \ptr{\cX_K}{X} = \ptr{\cX_K}{Y} \)
implies
\[ \sum_{k=1}^t x_{j,k} = \sum_{k=1}^t y_{j,k} \]
for each fixed index $j=1,\dots,s$.

As $\jam{\Lambda}\in\kprod{\bQ}{1}{m}$, it is possible to write
\[
\jam{\Lambda} = \sum_{l=1}^n \jam{\Phi_{1,l}}\ot\cdots\ot\jam{\Phi_{m,l}}
\]
where $n$ is a positive integer and
$\Phi_{i,l}:\lin{\cX_i}\to\lin{\cA_i}$ satisfies
$\jam{\Phi_{i,l}}\in\bQ_i$ for each of the indices
$i=1,\dots,m$ and $l=1,\dots,n$.
In particular, as each $\Phi_{i,l}$ is (a scalar multiple of) a trace-preserving
super-operator, it holds that for each index $l=1,\dots,n$ there exists
$a_l\in\mathbb{R}$ with
$\tr{\Phi_{K,l}\pa{\sigma}}=a_l$ for all density operators $\sigma$.
Then
\begin{align*}
  \ptr{\cA_K}{\Lambda\pa{X}}
  &= \sum_{l=1}^n a_l \cdot \sum_{j=1}^s \Br{\sum_{k=1}^t x_{j,k}} \cdot
      \Phi_{\overline{K},l}\pa{\rho_j} \\
  &= \sum_{l=1}^n a_l \cdot \sum_{j=1}^s \Br{\sum_{k=1}^t y_{j,k}} \cdot
      \Phi_{\overline{K},l}\pa{\rho_j}
  = \ptr{\cA_K}{\Lambda\pa{Y}}
\end{align*}
as desired.

\item[Item \ref{item:no-signal} implies item \ref{item:constraints}.]

This implication is essentially a multi-party generalization of Theorem 8 in
Beckman \emph{et al.}~\cite{BeckmanG+01}.
The proof presented here differs from theirs in some interesting but
non-critical details.

Fix any partition $K,\overline{K}$ of the index set $\set{1,\dots,m}$.
To begin,  observe that
\[
\ptr{\cX_K}{X} = \ptr{\cX_K}{Y}
\implies
\ptr{\cA_K}{\Lambda\pa{X}} = \ptr{\cA_K}{\Lambda\pa{Y}}
\]
for \emph{all} operators
$X,Y\in\lin{\kprod{\cX}{1}{m}}$---not just density operators.
(This observation follows from the fact that $\lin{\kprod{\cX}{1}{m}}$ is
spanned by the density operators---a fact used in the above proof that
item \ref{item:prod-space} implies item \ref{item:no-signal}.)

Now, let $s=\dim\pa{\cX_{\overline{K}}}$ and $t=\dim\pa{\cX_K}$
and let
$\set{e_1,\dots,e_s}$ and $\set{f_1,\dots,f_t}$
be the standard bases of $\cX_{\overline{K}}$ and $\cX_K$ respectively.
If $c$ and $d$ are distinct indices in $\set{1,\dots,t}$
and $Z\in\lin{\cX_{\overline{K}}}$ is any operator then
\[
  \ptr{ \cX_K }{ Z \otimes f_cf_d^* } =
  Z \otimes \tr{ f_cf_d^* } =
  0_{\cX_{\overline{K}}} = \ptr{ \cX_K }{ 0_{\kprod{\cX}{1}{m}} }
\]
and hence
\[
  \Ptr{ \cA_K }{ \Lambda\Pa{ Z \otimes f_cf_d^*   } } =
  \Ptr{ \cA_K }{ \Lambda\Pa{ 0_{\kprod{\cX}{1}{m}} } } =
  \Ptr{ \cA_K }{ 0_{\kprod{\cA}{1}{m}} } = 0_{\cA_{\overline{K}}}.
\]
(Here a natural notation for the zero operator was used implicitly.)
Similarly, if $\rho\in\lin{\cX_K}$ is any density operator then
\[
\Ptr{ \cA_K }{ \Lambda\Pa{ Z \otimes f_cf_c^* } } =
\Ptr{ \cA_K }{ \Lambda\Pa{ Z \otimes \rho } }
\]
for each fixed index $c=1,\dots,t$.
Employing these two identities, one obtains
\begin{align*}
  \Ptr{ \cA_K }{ \jam{\Lambda} }
% &= \sum_{a,b=1}^s \sum_{c,d=1}^t \Ptr{ \cA_K }{
%   \Lambda\Pa{e_ae_b^* \otimes f_cf_d^*}} \otimes \Br{e_ae_b^* \otimes f_cf_d^*} \\
  &= \sum_{a,b=1}^s \sum_{c=1}^t \Ptr{ \cA_K }{
    \Lambda\Pa{e_ae_b^* \otimes f_cf_c^*}} \otimes \Br{e_ae_b^* \otimes f_cf_c^*} \\
  &= \sum_{a,b=1}^s  \Ptr{ \cA_K }{
    \Lambda\Pa{e_ae_b^* \otimes \rho}} \otimes
    e_ae_b^* \otimes \Br{\sum_{c=1}^tf_cf_c^*}
   = \jam{\Psi} \otimes I_{ \cX_K }
\end{align*}
where the quantum operation $\Psi$ is defined by
%$\Psi\pa{X} = \ptr{ \cA_K }{\Lambda\Pa{X\otimes\rho}}$
%for all $X$.
$\Psi:X\mapsto\ptr{ \cA_K }{\Lambda\Pa{X\otimes\rho}}$.
As $\jam{\Psi} \otimes I_{ \cX_K }$ is a product operator of the desired form,
the proof that item \ref{item:no-signal} implies item
\ref{item:constraints} is complete.

\end{description}
\end{proof}

%=============================================================================%
\section{A separable no-signaling operation that is not a LOSE operation}
\label{sec:no-sig:counter-example}
%=============================================================================%

Theorem \ref{thm:ball} establishes a ball of LOSR operations around the completely noisy channel.
Was the work of Chapter \ref{ch:ball} necessary, or might there be a simpler way to establish the same thing?
For example, suppose $\Lambda:\lin{\cX_1\otimes\cX_2}\to\lin{\cA_1\otimes\cA_2}$
is a quantum operation for which the operator $\jam{\Lambda}$
is $\Pa{\her{\cA_1\otimes\cX_1};\her{\cA_2\otimes\cX_2}}$-separable.
Quantum operations with separable Choi-Jamio\l kowski representations such as
this are called \emph{separable operations} \cite{Rains97}.
If an operation is both separable and no-signaling then must it always be a
LOSE operation, or even a LOSR operation?
%If this were the case then it would be a relatively simple matter to achieve
An affirmative answer to this question would yield a trivial proof of Theorem \ref{thm:ball} that leverages existing knowledge of separable balls around the identity operator.

Alas, such a short cut is not to be had:
there exist no-signaling operations $\Lambda$ that are not LOSE operations, yet
$\jam{\Lambda}$ is separable.
One example of such an operation is the so-called ``nonlocal box'' discovered in
1994 by Popescu and Rohrlich~\cite{PopescuR94}.
This nonlocal box is easily formalized as a two-party no-signaling quantum
operation $\Lambda$, as in Ref.~\cite[Section V.B]{BeckmanG+01}.
That formalization is reproduced here.

Let \( \cX_1 = \cX_2 = \cA_1 = \cA_2 = \mathbb{C}^2 \),
let $\set{e_0,e_1}$ denote the standard bases of both $\cX_1$ and $\cA_1$, and
let $\set{f_0,f_1}$ denote the standard bases of both $\cX_2$ and $\cA_2$.
Write
%\begin{align*}
%   \rho_{00} &\defeq e_0e_0^*\otimes f_0f_0^*
%  &\rho_{01} &\defeq e_0e_0^*\otimes f_1f_1^* \\
%   \rho_{10} &\defeq e_1e_1^*\otimes f_0f_0^*
%  &\rho_{11} &\defeq e_1e_1^*\otimes f_1f_1^*.
%\end{align*}
\[ \rho_{ab} \defeq e_ae_a^*\otimes f_bf_b^* \]
for $a,b\in\set{0,1}$.
The nonlocal box $\Lambda:\lin{\kprod{\cX}{1}{2}}\to\lin{\kprod{\cA}{1}{2}}$
is defined by
\begin{align*}
\Set{ \rho_{00},\rho_{01},\rho_{10} }
&\stackrel{\Lambda}{\longmapsto}
\frac{1}{2} \Br{ \rho_{00} + \rho_{11} } \\
\rho_{11}
&\stackrel{\Lambda}{\longmapsto}
\frac{1}{2} \Br{ \rho_{01} + \rho_{10} }.
\end{align*}
Operators not in $\spn\set{\rho_{00},\rho_{01},\rho_{10},\rho_{11}}$ are
annihilated by $\Lambda$.
It is routine to verify that $\Lambda$ is a no-signaling operation, and
this operation $\Lambda$ is known not to be a LOSE operation~\cite{PopescuR94}.
To see that $\jam{\Lambda}$ is separable, write
\begin{align*}
  E_{a\to b} &\defeq e_be_a^* \\
  F_{a\to b} &\defeq f_bf_a^*
%   E_{0\to 0} &\defeq e_0e_0^*
%  &E_{0\to 1} &\defeq e_1e_0^*
%  &E_{1\to 0} &\defeq e_0e_1^*
%  &E_{1\to 1} &\defeq e_1e_1^* \\
%   F_{0\to 0} &\defeq f_0f_0^*
%  &F_{0\to 1} &\defeq f_1f_0^*
%  &F_{1\to 0} &\defeq f_0f_1^*
%  &F_{1\to 1} &\defeq f_1f_1^*
\end{align*}
for $a,b\in\set{0,1}$.
Then for all $X\in\lin{\kprod{\cX}{1}{2}}$ it holds that
\begin{align*}
\Lambda\pa{X}
&= \frac{1}{2}
\Big[ E_{0\to 0} \otimes F_{0\to 0} \Big] X
\Big[ E_{0\to 0} \otimes F_{0\to 0} \Big]^*
+  \frac{1}{2}
\Big[ E_{0\to 1} \otimes F_{0\to 1} \Big] X
\Big[ E_{0\to 1} \otimes F_{0\to 1} \Big]^* \\
&+ \frac{1}{2}
\Big[ E_{0\to 0} \otimes F_{1\to 0} \Big] X
\Big[ E_{0\to 0} \otimes F_{1\to 0} \Big]^*
+ \frac{1}{2}
\Big[ E_{0\to 1} \otimes F_{1\to 1} \Big] X
\Big[ E_{0\to 1} \otimes F_{1\to 1} \Big]^* \\
&+ \frac{1}{2}
\Big[ E_{1\to 0} \otimes F_{0\to 0} \Big] X
\Big[ E_{1\to 0} \otimes F_{0\to 0} \Big]^*
+ \frac{1}{2}
\Big[ E_{1\to 1} \otimes F_{0\to 1} \Big] X
\Big[ E_{1\to 1} \otimes F_{0\to 1} \Big]^* \\
&+ \frac{1}{2}
\Big[ E_{1\to 0} \otimes F_{1\to 1} \Big] X
\Big[ E_{1\to 0} \otimes F_{1\to 1} \Big]^*
+ \frac{1}{2}
\Big[ E_{1\to 1} \otimes F_{1\to 0} \Big] X
\Big[ E_{1\to 1} \otimes F_{1\to 0} \Big]^*,
\end{align*}
from which the
$\Pa{\her{\cA_1\otimes\cX_1};\her{\cA_2\otimes\cX_2}}$-separability of $\jam{\Lambda}$
follows.
It is interesting to note that the nonlocal box is
the smallest possible nontrivial example of such an operation---the number of
parties $m=2$ and the input spaces
$\cX_1,\cX_2$ and output spaces $\cA_1,\cA_2$ all have dimension two.

\chapter{Conclusion} \label{ch:conclusion}

In this thesis we discussed two distinct topics.
In Part \ref{part:strategies} we initiated the study of quantum strategies, which are complete descriptions of one party's actions in an interaction involving the exchange of multiple quantum messages among multiple parties.
We saw proofs of three important properties of strategies, and we saw applications of these properties to zero-sum quantum games, complexity theory, and quantum coin-flipping.
We also introduced a new norm for super-operators and argued that this norm, which generalizes the familiar diamond norm, captures the operational distinguishability of two quantum strategies in the same sense that the trace norm captures the distinguishability of two quantum states, or the diamond norm captures the distinguishability of two quantum operations.

In Part \ref{part:LOSE} we established several properties of local quantum operations, the implementation of which might be assisted by shared entanglement.
Specifically, we showed that every quantum operation sufficiently close to the completely noisy operation can be implemented locally using only shared randomness.
This fact was used to prove strong $\cls{NP}$-hardness of the weak membership problem for local operations with shared entanglement.
We then provided algebraic characterizations of the sets of local operations with shared randomness and entanglement in terms of linear functionals that are positive and ``completely'' positive, respectively, on a certain cone of separable Hermitian operators.
Finally, we made explicit for the first time two fundamental characterizations of no-signaling operations, establishing that the spaces spanned by the local operations and by the larger class of no-signaling operations are in fact equal.

We conclude the thesis with some pointers for future research and open problems pertaining to the topics covered.

\section{Quantum strategies}
%======================================================================

\begin{description}

\item[New properties.]

Three important properties of quantum strategies were established in Theorems
\ref{theorem:inner-product} (\theoreminnerproduct),
\ref{thm:char} (\thmchar), and
\ref{thm:max-prob} (\thmmaxprob).
Other simpler and more basic properties were noted in Propositions \ref{prop:measure-sum}, \ref{prop:convexity}, \ref{prop:uniqueness}, and \ref{prop:distributive}.
We also saw several properties of a new distance measure for quantum strategies in Chapter \ref{ch:norms}.
What other properties are held by our representation for quantum strategies?

\item[Simplifying proofs.]

In Section \ref{sec:coin-flip} we employed the properties of quantum strategies to provide a simplified proof of Kitaev's bound for strong quantum coin-flipping.
We also noted in Section \ref{subsec:intro:strategies} that Chiribella \emph{et al.}~have provided a short proof by quantum strategies of the impossibility of quantum bit commitment \cite{ChiribellaD+09b}.

Given these examples, one is tempted to believe that the properties of quantum strategies encapsulate many of the critical elements of various proofs involving the exchange of quantum information.
It is reasonable to expect, for example, that our formalism could lead to new or alternate security proofs for various quantum cryptographic protocols, or possibly even to proofs that certain classical protocols are secure against quantum attacks.

For an unsolved example, consider the protocol for weak quantum coin-flipping with arbitrarily small bias given in Ref.~\cite{Mochon07}.
The proof that the exhibited protocol has arbitrarily small bias is very complicated.
Could this proof be simplified by quantum strategies?

\item[New applications.]

Of course, the formalism of quantum strategies is by no means limited to simplifications of existing proofs, as illustrated by the new results established in Sections \ref{sec:game-theory}, \ref{sec:interactive-proofs}, and \ref{sec:app:validity}, and by Chiribella, D'Ariano, Perinotti, and other authors as discussed in Section \ref{subsec:intro:strategies}.
Surely, there is more to add to this list of new applications.

\end{description}

\section{Local operations with shared entanglement}
%======================================================================

\begin{description}

\item[Bigger ball of LOSE or LOSR operations.]

The size of the ball of (unnormalized) LOSR operations established in Theorem
\ref{thm:ball} scales as $\Omega\Pa{2^{-m} n^{-3/2}}$.
Given that this quantity already includes a factor of the dimension $n$, is it possible to eliminate the explicit dependence on the number of parties $m$?
(By contrast, for the case of multipartite separable quantum states the dependence on $m$ seems unavoidable \cite{GurvitsB03}.)
Can the exponent on the dimension $n$ be improved?

As mentioned in Section \ref{sec:balls}, it is not clear that there is a
ball of LOSE operations that strictly contains any ball of LOSR operations.
Does such a larger ball exist?

\item[Completely positive super-operators.]

As noted in Remark \ref{rem:LOSE:cp},
%of Section \ref{sec:char},
the characterization of LOSE operations is interesting because it involves
linear functionals that are not just positive, but
``completely'' positive on the family of
$\Pa{\bQ_1\pa{\cE_1};\dots;\bQ_m\pa{\cE_m}}$-separable cones.

Apparently, the study of completely positive super-operators has until now been
strictly limited to the context of positive semidefinite input cones.
Any question that may be asked of conventional completely positive
super-operators might also be asked of this new class of completely positive
super-operators.

\item[Entanglement required for approximating LOSE operations.]

It was mentioned in Chapters \ref{ch:intro} and \ref{ch:intro-LO} that there exist LOSE operations that
cannot be implemented with any finite amount of shared entanglement
\cite{LeungT+08}.
The natural question, then, is how much entanglement is necessary to achieve an
arbitrarily close approximation to such an operation?

The present author conjectures that for every two-party LOSE operation $\Lambda$ there exists an
$\varepsilon$-approximation $\Lambda'$ of $\Lambda$
in which the dimension of the shared entangled state scales as
$O\pa{2^{\varepsilon^{-a}} n^b}$ for some positive constants $a$ and $b$
and some appropriate notion of $\varepsilon$-approximation.
Here $n=\dim\Pa{\kprod{\bQ}{1}{m}}$ is the dimension of the space in which
$\jam{\Lambda}$ lies.

Evidence pertaining to this conjecture can be found in Refs.~\cite{CleveH+04,KempeR+07,LeungT+08}.
Moreover, the example in Ref.~\cite{LeungT+08} strongly suggests that the exponential dependence on $1/\varepsilon$ is unavoidable; the pressing open question pertains to the dependence upon $n$.
At the moment, \emph{no upper bound at all} is known for this general class of two-party LOSE operations.
%either in terms of $\varepsilon$ or $n$.

\end{description}

%----------------------------------------------------------------------
% END MATERIAL
%----------------------------------------------------------------------

% B I B L I O G R A P H Y
% -----------------------
% The following statement causes the title "References" to be used for the biliography section:
\renewcommand{\bibname}{References}
\addcontentsline{toc}{chapter}{\textbf{References}}

%\bibliographystyle{alpha}
%\bibliography{../../Bibliography}

\begin{thebibliography}{BCD{\etalchar{+}}09b}

\bibitem[ABDR04]{AmbainisB+04}
Andris Ambainis, Harry Buhrman, Yevgeniy Dodis, and Hein R{\"o}hrig.
\newblock Multiparty quantum coin flipping.
\newblock In {\em Proceedings of the 19th Conference on Computational
  Complexity}, pages 250--259, 2004.
\newblock arXiv:quant-ph/0304112v2.

\bibitem[AHW00]{AmosovH+00a}
Grigori Amosov, Alexander Holevo, and Reinhard Werner.
\newblock On some additivity problems in quantum information theory.
\newblock {\em Problems of Information Transmission}, 36(4):25--34, 2000.

\bibitem[AKN98]{AharonovK+98}
Dorit Aharonov, Alexei Kitaev, and Noam Nisan.
\newblock Quantum circuits with mixed states.
\newblock In {\em Proceedings of the Thirtieth ACM Symposium on Theory of
  Computing}, pages 20--30, 1998.
\newblock arXiv:quant-ph/9806029v1.

\bibitem[ALM{\etalchar{+}}98]{AroraL+98}
Sanjeev Arora, Carsten Lund, Rajeev Motwani, Madhu Sudan, and Mario Szegedy.
\newblock Proof verification and the hardness of approximation problems.
\newblock {\em Journal of the ACM}, 45(3):501--555, 1998.

\bibitem[AS98]{AroraS98}
Sanjeev Arora and Muli Safra.
\newblock Probabilistic checking of proofs: a new characterization of {NP}.
\newblock {\em Journal of the ACM}, 45(1):70--122, 1998.

\bibitem[Bab85]{Babai85}
L\'aszl\'o Babai.
\newblock Trading group theory for randomness.
\newblock In {\em Proceedings of the 17th ACM Symposium on Theory of Computing
  (STOC 1985)}, pages 421--429, 1985.

\bibitem[Bar02]{Barvinok02}
Alexander Barvinok.
\newblock {\em A Course in Convexity}, volume~54 of {\em Graduate Studies in
  Mathematics}.
\newblock American Mathematical Society, 2002.

\bibitem[BCD{\etalchar{+}}09a]{BisioC+09a}
Alessandro Bisio, Giulio Chiribella, Giacomo~Mauro D'Ariano, Stefano Facchini,
  and Paolo Perinotti.
\newblock Optimal quantum learning of a unitary transformation.
\newblock arXiv:0903.0543v1 [quant-ph], 2009.

\bibitem[BCD{\etalchar{+}}09b]{BisioC+09b}
Alessandro Bisio, Giulio Chiribella, Giacomo~Mauro D'Ariano, Stefano Facchini,
  and Paolo Perinotti.
\newblock Optimal quantum tomography for states, measurements, and
  transformations.
\newblock {\em Physical Review Letters}, 102:010404, 2009.
\newblock arXiv:0806.1172v1 [quant-ph].

\bibitem[BGNP01]{BeckmanG+01}
David Beckman, Daniel Gottesman, Michael Nielsen, and John Preskill.
\newblock Causal and localizable quantum operations.
\newblock {\em Physical Review A}, 64(5):052309, 2001.
\newblock arXiv:quant-ph/0102043v2.

\bibitem[Bha07]{Bhatia07}
Rajendra Bhatia.
\newblock {\em Positive Definite Matrices}.
\newblock Princeton University Press, 2007.

\bibitem[Blu81]{Blum81}
Manuel Blum.
\newblock Coin flipping by telephone.
\newblock In {\em Advances in Cryptology---CRYPTO '81}, pages 11--15, 1981.

\bibitem[BV04]{BoydV04}
Stephen Boyd and Lieven Vandenberghe.
\newblock {\em Convex Optimization}.
\newblock Cambridge University Press, 2004.

\bibitem[CDP08a]{ChiribellaD+08b}
Giulio Chiribella, Giacomo~Mauro D'Ariano, and Paolo Perinotti.
\newblock Memory effects in quantum channel discrimination.
\newblock {\em Physical Review Letters}, 101:180501, 2008.
\newblock arXiv:0803.3237v3 [quant-ph].

\bibitem[CDP08b]{ChiribellaD+08c}
Giulio Chiribella, Giacomo~Mauro D'Ariano, and Paolo Perinotti.
\newblock Optimal cloning of unitary transformation.
\newblock {\em Physical Review Letters}, 101:180504, 2008.
\newblock arXiv:0804.0129v2 [quant-ph].

\bibitem[CDP08c]{ChiribellaD+08e}
Giulio Chiribella, Giacomo~Mauro D'Ariano, and Paolo Perinotti.
\newblock Optimal cloning of unitary transformation.
\newblock {\em Europhysics Letters}, 83:30004, 2008.
\newblock arXiv:0804.0180v2 [quant-ph].

\bibitem[CDP08d]{ChiribellaD+08a}
Giulio Chiribella, Giacomo~Mauro D'Ariano, and Paolo Perinotti.
\newblock Quantum circuits architecture.
\newblock {\em Physical Review Letters}, 101:060401, 2008.
\newblock arXiv:0712.1325v1 [quant-ph].

\bibitem[CDP09a]{ChiribellaD+08d}
Giulio Chiribella, Giacomo~Mauro D'Ariano, and Paolo Perinotti.
\newblock Optimal covariant quantum networks.
\newblock In Alexander Lvovsky, editor, {\em Proceedings of the 9th
  International Conference on Quantum Communication, Measurement and Computing
  (QCMC '08)}, volume 1110, pages 47--56. American Institute of Physics, 2009.
\newblock arXiv:0812.3922v1 [quant-ph].

\bibitem[CDP09b]{ChiribellaD+09a}
Giulio Chiribella, Giacomo~Mauro D'Ariano, and Paolo Perinotti.
\newblock Theoretical framework for quantum networks.
\newblock {\em Physical Review A}, 80(2):022339, 2009.
\newblock arXiv:0904.4483v2 [quant-ph].

\bibitem[CDP{\etalchar{+}}09c]{ChiribellaD+09b}
Giulio Chiribella, Giacomo~Mauro D'Ariano, Paolo Perinotti, Dirk Schlingemann,
  and Reinhard~F. Werner.
\newblock A short impossibility proof of quantum bit commitment.
\newblock arXiv:0905.3801v1 [quant-ph], 2009.

\bibitem[CGJ09]{CleveGJ07}
Richard Cleve, Dmitry Gavinsky, and Rahul Jain.
\newblock Entanglement-resistant two-prover interactive proof systems and
  non-adaptive {PIRs}.
\newblock {\em Quantum Information and Computation}, 9:648--656, 2009.
\newblock arXiv:0707.1729v1 [quant-ph].

\bibitem[CHTW04]{CleveH+04}
Richard Cleve, Peter H{\o}yer, Ben Toner, and John Watrous.
\newblock Consequences and limits of nonlocal strategies.
\newblock In {\em Proceedings of the 19th Conference on Computational
  Complexity}, pages 236--249, 2004.
\newblock arXiv:quant-ph/0404076v1.

\bibitem[CK09]{ChaillouxK09}
Andr\'e Chailloux and Iordanis Kerenidis.
\newblock Optimal quantum strong coin flipping.
\newblock arXiv:0904.1511v1 [quant-ph], 2009.

\bibitem[CSUU06]{CleveS+06-preprint}
Richard Cleve, William Slofstra, Falk Unger, and Sarvagya Upadhyay.
\newblock Strong parallel repetition theorem for quantum {XOR} proof systems.
\newblock arXiv:quant-ph/0608146v1, 2006.

\bibitem[CSUU08]{CleveS+08}
Richard Cleve, William Slofstra, Falk Unger, and Sarvagya Upadhyay.
\newblock Perfect parallel repetition theorem for quantum {XOR} proof systems.
\newblock {\em Computational Complexity}, pages 282--299, 2008.
\newblock arXiv:quant-ph/0608146v2.

\bibitem[dK02]{deKlerk02}
Etienne de~Klerk.
\newblock {\em Aspects of Semidefinite Programming: Interior Point Algorithms
  and Selected Applications}, volume~65 of {\em Applied Optimization}.
\newblock Kluwer, 2002.

\bibitem[DLTW08]{DohertyL+08}
Andrew Doherty, Yeong-Cherng Liang, Ben Toner, and Stephanie Wehner.
\newblock The quantum moment problem and bounds on entangled multi-prover
  games.
\newblock In {\em Proceedings of the 23rd IEEE Conference on Computational
  Complexity (CCC 2008)}, pages 199--210, 2008.
\newblock arXiv:0803.4373v1 [quant-ph].

\bibitem[ESW02]{EggelingSW02}
Tilo Eggeling, Dirk Schlingemann, and Reinhard Werner.
\newblock Semicausal operations are semilocalizable.
\newblock {\em Europhysics Letters}, 57(6):782--788, 2002.
\newblock arXiv:quant-ph/0104027v1.

\bibitem[FK97]{FeigeK97}
Uriel Feige and Joe Kilian.
\newblock Making games short.
\newblock In {\em Proceedings of the 29th ACM Symposium on Theory of Computing
  (STOC 1997)}, pages 506--516, 1997.

\bibitem[GB02]{GurvitsB02}
Leonid Gurvits and Howard Barnum.
\newblock Largest separable balls around the maximally mixed bipartite quantum
  state.
\newblock {\em Physical Review A}, 66(6):062311, 2002.
\newblock arXiv:quant-ph/0204159v2.

\bibitem[GB03]{GurvitsB03}
Leonid Gurvits and Howard Barnum.
\newblock Separable balls around the maximally mixed multipartite quantum
  states.
\newblock {\em Physical Review A}, 68(4):042312, 2003.
\newblock arXiv:quant-ph/0302102v1.

\bibitem[GB05]{GurvitsB05}
Leonid Gurvits and Howard Barnum.
\newblock Better bound on the exponent of the radius of the multipartite
  separable ball.
\newblock {\em Physical Review A}, 72:032322, 2005.
\newblock arXiv:quant-ph/0409095v4.

\bibitem[Gha08]{Gharibian08}
Sevag Gharibian.
\newblock Strong {NP}-hardness of the quantum separability problem, 2008.
\newblock arXiv:0810.4507v2 [quant-ph].

\bibitem[GLS88]{GrotschelL+88}
Martin Gr\"otschel, L\'aszl\'o Lov\'asz, and Alexander Schrijver.
\newblock {\em Geometric Algorithms and Combinatorial Optimization}.
\newblock Springer--Verlag, 1988.

\bibitem[GMR89]{GoldwasserM+89}
Shafi Goldwasser, Silvio Micali, and Charles Rackoff.
\newblock The knowledge complexity of interactive proof systems.
\newblock {\em SIAM Journal on Computing}, 18(1):186--208, 1989.

\bibitem[Gur02]{Gurvits02}
Leonid Gurvits.
\newblock Quantum matching theory (with new complexity theoretic, combinatorial
  and topological insights on the nature of quantum entanglement), 2002.
\newblock arXiv:quant-ph/0201022v2.

\bibitem[Gut09]{Gutoski09}
Gus Gutoski.
\newblock Properties of local quantum operations with shared entanglement.
\newblock {\em Quantum Information and Computation}, 9(9):739--764, 2009.
\newblock arXiv:0805.2209v3 [quant-ph].

\bibitem[Gut12]{Gutoski12}
Gus Gutoski.
\newblock On a measure of distance for quantum strategies.
\newblock {\em Journal of Mathematical Physics}, 53(3):032202, 2012.
\newblock arXiv:1008.4636v4 [quant-ph].

\bibitem[GW05]{GutoskiW05}
Gus Gutoski and John Watrous.
\newblock Quantum interactive proofs with competing provers.
\newblock In {\em Proceedings of the 22nd Symposium on Theoretical Aspects of
  Computer Science (STACS'05)}, volume 3404 of {\em Lecture Notes in Computer
  Science}, pages 605--616. Springer, 2005.
\newblock arXiv:cs/0412102v1 [cs.CC].

\bibitem[GW07]{GutoskiW07}
Gus Gutoski and John Watrous.
\newblock Toward a general theory of quantum games.
\newblock In {\em Proceedings of the 39th ACM Symposium on Theory of Computing
  (STOC 2007)}, pages 565--574, 2007.
\newblock arXiv:quant-ph/0611234v2.

\bibitem[Hel69]{Helstrom69}
Carl Helstrom.
\newblock Quantum detection and estimation theory.
\newblock {\em Journal of Statistical Physics}, 1(2):231--252, 1969.

\bibitem[HHH96]{HorodeckiH+96}
Michal Horodecki, Pawel Horodecki, and Ryszard Horodecki.
\newblock Separability of mixed states: necessary and sufficient conditions.
\newblock {\em Physical Letters A}, 223:1--8, 1996.
\newblock arXiv:quant-ph/9605038v2.

\bibitem[HHH01]{HorodeckiH+01}
Michal Horodecki, Pawel Horodecki, and Ryszard Horodecki.
\newblock Separability of $n$-particle mixed states: necessary and sufficient
  conditions in terms of linear maps.
\newblock {\em Physical Letters A}, 283:1--7, 2001.
\newblock arXiv:quant-ph/0006071v1.

\bibitem[Hil05]{Hildebrand05}
Roland Hildebrand.
\newblock Cones of ball-ball separable elements, 2005.
\newblock arXiv:quant-ph/0503194v1.

\bibitem[HJ85]{HornJ85}
Roger Horn and Charles Johnson.
\newblock {\em Matrix Analysis}.
\newblock Cambridge University Press, 1985.

\bibitem[IKM09]{ItoK+09}
Tsuyoshi Ito, Hirotada Kobayashi, and Keiji Matsumoto.
\newblock Oracularization and two-prover one-round interactive proofs against
  nonlocal strategies.
\newblock In {\em Proceedings of the 24th IEEE Conference on Computational
  Complexity (CCC 2009)}, pages 217--228, 2009.
\newblock arXiv:0810.0693v1 [quant-ph].

\bibitem[JJUW10]{JainJ+09}
Rahul Jain, Zhengfeng Ji, Sarvagya Upadhyay, and John Watrous.
\newblock {QIP}$=${PSPACE}.
\newblock In {\em Proceedings of the 42nd ACM Symposium on Theory of Computing
  (STOC 2010)}, pages 573--582, 2010.
\newblock arXiv:0907.4737v2 [quant-ph].

\bibitem[Kha79]{Khachiyan79}
Leonid Khachiyan.
\newblock A polynomial time algorithm in linear programming.
\newblock {\em Soviet Mathematics Doklady}, 20:191--194, 1979.

\bibitem[Kit97]{Kitaev97}
Alexei Kitaev.
\newblock Quantum computations: algorithms and error correction.
\newblock {\em Russian Mathematical Surveys}, 52(6):1191--1249, 1997.

\bibitem[Kit02]{Kitaev02}
Alexei Kitaev.
\newblock Quantum coin-flipping.
\newblock Presentation at the 6th Workshop on {\it Quantum Information
  Processing} (QIP 2003), 2002.

\bibitem[KKM{\etalchar{+}}08]{KempeK+07}
Julia Kempe, Hirotada Kobayashi, Keiji Matsumoto, Ben Toner, and Thomas Vidick.
\newblock Entangled games are hard to approximate, 2008.
\newblock arXiv:0704.2903v2 [quant-ph].

\bibitem[KKMV08]{KempeK+08}
Julia Kempe, Hirotada Kobayashi, Keiji Matsumoto, and Thomas Vidick.
\newblock Using entanglement in quantum multi-prover interactive proofs.
\newblock In {\em Proceedings of the 23rd Conference on Computational
  Complexity}, pages 211--222, 2008.
\newblock arXiv:0711.3715v1 [quant-ph].

\bibitem[KM92]{KollerM92}
Daphne Koller and Nimrod Megiddo.
\newblock The complexity of two-person zero-sum games in extensive form.
\newblock {\em Games and Economic Behavior}, 4:528--552, 1992.

\bibitem[KM03]{KobayashiM03}
Hirotada Kobayashi and Keiji Matsumoto.
\newblock Quantum multi-prover interactive proof systems with limited prior
  entanglement.
\newblock {\em Journal of Computer and System Sciences}, 66(3):429--450, 2003.
\newblock arXiv:cs/0102013v5 [cs.CC].

\bibitem[KRT08]{KempeR+07}
Julia Kempe, Oded Regev, and Ben Toner.
\newblock Unique games with entangled provers are easy.
\newblock In {\em Proceedings of the 49th Symposium on Foundations of Computer
  Science (FOCS 2008)}, pages 457--466, 2008.
\newblock arXiv:0710.0655v2 [quant-ph].

\bibitem[KSV02]{KitaevS+02}
Alexei Kitaev, Alexander Shen, and Mikhail Vyalyi.
\newblock {\em Classical and Quantum Computation}, volume~47 of {\em Graduate
  Studies in Mathematics}.
\newblock American Mathematical Society, 2002.

\bibitem[KW00]{KitaevW00}
Alexei Kitaev and John Watrous.
\newblock Parallelization, amplification, and exponential time simulation of
  quantum interactive proof system.
\newblock In {\em Proceedings of the 32nd ACM Symposium on Theory of
  Computing}, pages 608--617, 2000.

\bibitem[LFKN92]{LundF+92}
Carsten Lund, Lance Fortnow, Howard Karloff, and Noam Nisan.
\newblock Algebraic methods for interactive proof systems.
\newblock {\em Journal of the ACM}, 39(4):859--868, 1992.

\bibitem[Liu07]{Liu07}
Yi-Kai Liu.
\newblock {\em The Complexity of the Consistency and {$N$}-representability
  Problems for Quantum States}.
\newblock PhD thesis, University of California, San Diego, 2007.
\newblock arXiv:0712.3041v1 [quant-ph].

\bibitem[LJ03]{LeeJ03}
Chiu~Fan Lee and Neil Johnson.
\newblock Efficiency and formalism of quantum games.
\newblock {\em Physical Review A}, 67:article 022311, 2003.
\newblock arXiv:quant-ph/0207012v4.

\bibitem[LTW08]{LeungT+08}
Debbie Leung, Ben Toner, and John Watrous.
\newblock Coherent state exchange in multi-prover quantum interactive proof
  systems.
\newblock arXiv:0804.4118v1 [quant-ph], 2008.

\bibitem[Moc07]{Mochon07}
Carlos Mochon.
\newblock Quantum weak coin-flipping with arbitrarily small bias, 2007.
\newblock arXiv:0711.4114v1 [quant-ph].

\bibitem[MV99]{MiltersenV99}
Peter~Bro Miltersen and N.~Variyam Vinodchandran.
\newblock Derandomizing {A}rthur-{M}erlin games using hitting sets.
\newblock In {\em Proceedings of the 40th Symposium on Foundations of Computer
  Science}, pages 71--80, 1999.

\bibitem[NPA08]{NavascuesP+08}
Miguel Navascu\'es, Stefano Pironio, and Antonio Ac\'in.
\newblock A convergent hierarchy of semidefinite programs characterizing the
  set of quantum correlations.
\newblock {\em New Journal of Physics}, 10:073013, 2008.
\newblock arXiv:0803.4290v1 [quant-ph].

\bibitem[PHHH06]{PianiH+06}
Marco Piani, Michal Horodecki, Pawel Horodecki, and Ryszard Horodecki.
\newblock Properties of quantum non-signalling boxes.
\newblock {\em Physical Review A}, 74:012305, 2006.
\newblock arXiv:quant-ph/0505110v1.

\bibitem[PR94]{PopescuR94}
Sandu Popescu and Daniel Rohrlich.
\newblock Quantum nonlocality as an axiom.
\newblock {\em Foundations of Physics}, 24(3):379--385, 1994.

\bibitem[Rai97]{Rains97}
Eric Rains.
\newblock Entanglement purification via separable superoperators, 1997.
\newblock arXiv:quant-ph/9707002v3.

\bibitem[RD66]{RussoD66}
B.~Russo and H.~A. Dye.
\newblock A note on unitary operators in ${C}^*$-algebras.
\newblock {\em Duke Mathematical Journal}, 33:413--416, 1966.

\bibitem[Roc70]{Rockafellar70}
R.~Tyrrell Rockafellar.
\newblock {\em Convex Analysis}.
\newblock Princeton University Press, 1970.

\bibitem[R{\"o}h04]{Roehrig04}
Hein R{\"o}hrig.
\newblock {\em Quantum Query Complexity and Distributed Computing}.
\newblock PhD thesis, Centrum voor Wiskunde en Informatica, 2004.

\bibitem[RW05]{RosgenW05}
Bill Rosgen and John Watrous.
\newblock On the hardness of distinguishing mixed-state quantum computations.
\newblock In {\em Proceedings of the 20th Conference on Computational
  Complexity}, pages 344--354, 2005.
\newblock arXiv:cs/0407056v1 [cs.CC].

\bibitem[Sha92]{Shamir92}
Adi Shamir.
\newblock {IP} $=$ {PSPACE}.
\newblock {\em Journal of the ACM}, 39(4):869--877, 1992.

\bibitem[Sza05]{Szarek05}
Stanislaw Szarek.
\newblock The volume of separable states is super-doubly-exponentially small.
\newblock {\em Physical Review A}, 72:032304, 2005.
\newblock arXiv:quant-ph/0310061v2.

\bibitem[Vil38]{Ville38}
Jean Ville.
\newblock Sur la th{\'e}orie g{\'e}n{\'e}rale des jeux o{\'u} intervient
  l'habilet{\'e} des joueurs.
\newblock {\em Trait{\'e} du calcul des probabilit{\'e}s et des applications},
  IV(2):105--113, 1938.
\newblock In French.

\bibitem[vN28]{vonNeumann28}
John von Neumann.
\newblock Zur theorie der gesellschaftspiele.
\newblock {\em Mathematische Annalen}, 100(1):295--320, 1928.
\newblock In German.

\bibitem[Wat05]{Watrous05}
John Watrous.
\newblock Notes on super-operator norms induced by {Schatten} norms.
\newblock {\em Quantum Information and Computation}, 5(1):58--68, 2005.
\newblock arXiv:quant-ph/0411077v1.

\bibitem[Wat08]{Watrous08}
John Watrous.
\newblock Distinguishing quantum operations having few {K}raus operators.
\newblock {\em Quantum Information and Computation}, 8(9):819--833, 2008.
\newblock arXiv:0710.0902v3 [quant-ph].

\bibitem[Wat09]{Watrous09}
John Watrous.
\newblock Semidefinite programs for completely bounded norms.
\newblock {\em Theory of Computing}, 5:217--238, 2009.
\newblock arXiv:0901.4709v2 [quant-ph].

\bibitem[Weh06]{Wehner06}
Stephanie Wehner.
\newblock Entanglement in interactive proof systems with binary answers.
\newblock In {\em Proceedings of the 23rd Symposium on Theoretical Aspects of
  Computer Science}, volume 3884 of {\em Lecture Notes in Computer Science},
  pages 162--171. Springer, 2006.
\newblock arXiv:quant-ph/0508201v2.

\bibitem[YN76]{YudinN76}
David Yudin and Arkadi Nemirovski\u\i.
\newblock Informational complexity and efficient methods for the solution of
  convex extremal problems.
\newblock {\em \`Ekonomika i Matematicheskie Metody}, 12:357--369, 1976.

\end{thebibliography}

\newcommand{\etalchar}[1]{$^{#1}$}

\end{document}